\newcommand\draft[1]{}
\newcommand\release[1]{#1}
\newcommand\bigone[1]{#1}
\newcommand\smallone[1]{}
\newtheorem{lem}[thm]{Lemma}
\newtheorem{prp}[thm]{Proposition}
\newtheorem{cor}[thm]{Corollary}
\newtheorem{clm}[thm]{Claim}
\theoremstyle{definition}
\newtheorem{rem}[thm]{Remark}
\newtheorem{obs}[thm]{Observation}
\newtheorem{defn}[thm]{Definition}
\newtheorem{exm}[thm]{Example}
\newcommand{\refthm}[1]{Theorem~\ref{thm:#1}}
\newcommand{\reflem}[1]{Lemma~\ref{lem:#1}}
\newcommand{\refprp}[1]{Proposition~\ref{prp:#1}}
\newcommand{\refclm}[1]{Claim~\ref{clm:#1}}
\newcommand{\refrem}[1]{Remark~\ref{rem:#1}}
\newcommand{\refdefn}[1]{Definition~\ref{defn:#1}}
\newcommand{\refsec}[1]{Section~\ref{sec:#1}}
\newcommand{\reffig}[1]{Figure~\ref{fig:#1}}
\newcommand{\refeqn}[1]{(\ref{eqn:#1})}
\newcommand{\reftbl}[1]{Table~\ref{tbl:#1}}
\newcommand{\refalg}[1]{Algorithm~\ref{alg:#1}}
\newcommand{\refline}[1]{Line~\ref{line:#1}}
\def \rf(#1:#2){\csname ref#1\endcsname{#2}}
\newcommand{\pfstart}{\begin{proof}} 
\newcommand{\pfsketch}{\begin{proof}[Proof sketch]}
\newcommand{\pfend}{\end{proof}} 
\newcommand{\itemstart}{\begin{itemize}\itemsep0pt}
\newcommand{\itemend}{\end{itemize}}
\newcommand{\descrstart}{\begin{description}\itemsep0pt}
\newcommand{\descrend}{\end{description}}
\newcommand{\enumstart}{\begin{enumerate}\itemsep0pt}
\newcommand{\enumend}{\end{enumerate}}
\newcommand{\C}{{\mathbb{C}}}
\newcommand{\N}{{\mathbb{N}}}
\newcommand{\R}{{\mathbb{R}}}
\newcommand{\Z}{{\mathbb{Z}}}
\renewcommand{\S}{{\mathbb{S}}}
\newcommand{\ii}{\mathsf{i}}
\newcommand{\ee}{\mathsf{e}}
\newcommand{\eps}{\varepsilon}
\newcommand{\etal}{{\em et al.}\xspace}
\newcommand{\ignore}[1]{}
\def \s[#1]{\left(#1\right)}
\def \sA[#1]{\bigl(#1\bigr)}
\def \sB[#1]{\Bigl(#1\Bigr)}
\def \sC[#1]{\biggl(#1\biggr)}
\def \sD[#1]{\Biggl(#1\Biggr)}
\def \sk[#1]{\left[#1\right]}
\def \skA[#1]{\bigl[#1\bigr]}
\def \skB[#1]{\Bigl[#1\Bigr]}
\def \skC[#1]{\biggl[#1\biggr]}
\def \skD[#1]{\Biggl[#1\Biggr]}
\def \abs|#1|{\left| #1\right|}
\def \absA|#1|{\bigl|#1\bigr|}
\def \absB|#1|{\Bigl|#1\Bigr|}
\def \absC|#1|{\biggl|#1\biggr|}
\def \absD|#1|{\Biggl|#1\Biggr|}
\def \norm|#1|{\left\| #1\right\|}
\def \normA|#1|{\bigl\| #1\bigr\|}
\def \normB|#1|{\Bigl\| #1\Bigr\|}
\def \normC|#1|{\biggl\| #1\biggr\|}
\def \normD|#1|{\Biggl\| #1\Biggr\|}
\def \normS|#1|{\| #1\|}
\def \normFrob|#1|{\norm|#1|_{\mathrm{F}}}
\def \normtr|#1|{\norm|#1|_{\mathrm{tr}}}
\def \sfig#1{\left\{#1\right\}}
\def \sfigA#1{\bigl\{#1\bigr\}}
\def \sfigB#1{\Bigl\{#1\Bigr\}}
\def \floor[#1]{\lfloor #1 \rfloor}
\def \ceil[#1]{\lceil #1 \rceil}
\def \elem[#1]{[\![#1]\!]}
\def \ket#1|#2>%
\def \ketA#1|#2>%
\def \ip<#1,#2>{\langle #1, #2\rangle}
\def \ipA<#1,#2>{\bigl\langle #1, #2\bigr\rangle}
\def \ipB<#1,#2>{\Bigl\langle #1, #2\Bigr\rangle}
\def \ipC<#1,#2>{\biggl\langle #1, #2\biggr\rangle}
\def \ipD<#1,#2>{\Biggl\langle #1, #2\Biggr\rangle}
\newcommand{\polylog}{\mathop{\mathrm{polylog}}}
\newcommand{\tr}{\mathop{\mathrm{tr}}}
\def\mycommand#1#2{%
\draft{\marginpar{\fbox{\sf{#1:} $#2$}}}%
\expandafter\newcommand \csname#1\endcsname {#2}%
}
\def\remycommand#1#2{%
\draft{\marginpar{\fbox{\sf{#1:} $#2$}}}%
\expandafter\renewcommand \csname#1\endcsname {#2}%
}
\newcommand{\myincludegraphics}[2]{\release{\includegraphics[width=#1]{#2}}}
\newcounter{asdf}
\newcounter{nomer}
\newlength{\otstup}
\newlength{\ostatok}
\newlength{\nomerwidth}
\newcommand{\algbegin}{\setcounter{asdf}{0}\setcounter{nomer}{0}}
\newcommand{\algend}{\vspace{-2ex }}
\newcommand{\state}[1]{%
\refstepcounter{nomer}%
\makebox[\nomerwidth][r]{\footnotesize\arabic{nomer}:}\hspace{\value{asdf}\otstup}\hspace{1mm}%
\ostatok = \textwidth%
\addtolength{\ostatok}{-\value{asdf}\otstup}%
\addtolength{\ostatok}{-2\nomerwidth}%
\parbox[t]{\ostatok}{#1\hspace{\stretch{1}}}%
\\%
}
\newcommand{\tab}{\addtocounter{asdf}{1}}
\newcommand{\untab}{\addtocounter{asdf}{-1}}
\newcommand{\cA}{{\cal A}}
\newcommand{\cB}{{\cal B}}
\newcommand{\cC}{{\cal C}}
\newcommand{\cD}{{\cal D}}
\newcommand{\cG}{{\cal G}}
\newcommand{\cH}{{\cal H}}
\newcommand{\cI}{{\cal I}}
\newcommand{\cO}{{\cal O}}
\newcommand{\cP}{{\cal P}}
\newcommand{\cZ}{{\cal Z}}
\newcommand{\Adv}{\mathop{\mathrm{ADV}^\pm}}
\newcommand{\pAdv}{\mathop{\mathrm{ADV}}}
\newcommand{\im}{\mathop{\mathrm{im}}}
\newcommand{\bra}[1]{{e^*_{#1}}}
\newcommand{\reg}[1]{{\mathsf{#1}}}
\newcommand{\qreg}[1]{\mathsf{\underline{#1}}}
\newcommand{\hilbert}[1]{H_{\reg{#1}}}
\newcommand{\proj}[2]{\Pi_{\reg{#1}}^{(#2)}}
\newcommand{\circuit}[1]{\EuScript{#1}}
\newcommand{\qgets}{\stackrel{+}{\longleftarrow}}
\newcommand{\qungets}{\stackrel{-}{\longleftarrow}}
\DeclareMathOperator{\diag}{diag}
\DeclareMathOperator{\spn}{span}
\def \kett|#1>{|#1\rangle}
\begin{document}

\pagestyle{plain}
\thispagestyle{empty}

\begin{center}
\strut
\vskip 1cm

{\large University of Latvia}
\vskip 0.1cm
{\large Faculty of Computing}
\vskip 2.5cm
{\Large Aleksandrs Belovs}
\vskip 2cm
{\LARGE \bf Applications of the Adversary Method in}
\vskip 0.1cm
{\LARGE \bf Quantum Query Algorithms}
\vskip 2cm
{\large Doctoral Thesis}
\vskip 1cm
Area: Computer Science \\
Sub-Area: Mathematical Foundations of Computer Science
\end{center}
\vskip 2cm
\begin{flushright}
Scientific Advisor: \\
Dr. Comp. Sci., Prof. Andris Ambainis
\end{flushright}
\vskip 4cm
\centerline{\large Riga, 2013}

\pagebreak


\strut

\begin{quotation}
\begin{figure}[h]
\release{\includegraphics[width=10cm]{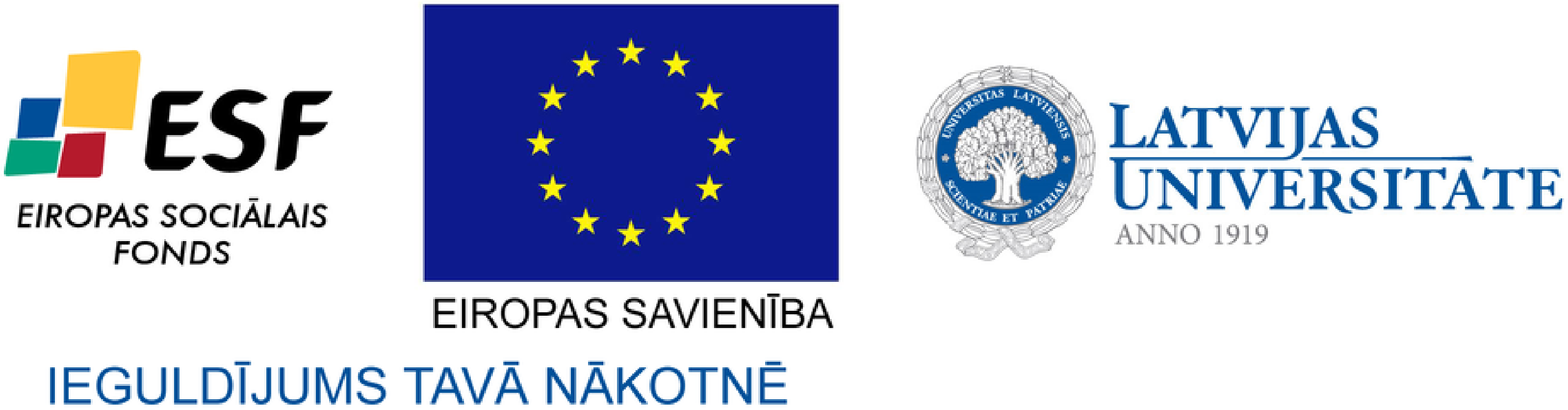}}
\end{figure}

\noindent
This work has been supported by the European Social Fund\\
within the project ``Support for Doctoral Studies at University of Latvia",\\

\noindent
and the FET-Open project ``Quantum Computer Science'' (QCS).
\end{quotation}

\clearpage
\strut
\vspace{2cm}

\section*{\centering \begin{normalsize}Abstract\end{normalsize}}
\begin{quotation}
In the thesis, we use a recently developed tight characterisation of quantum query complexity, the adversary bound, to develop new quantum algorithms and lower bounds.  Our results are as follows:
\begin{itemize}
\item
We develop a new technique for the construction of quantum algorithms: learning graphs.
\item
We use learning graphs to improve quantum query complexity of the triangle detection and the $k$-distinctness problems.
\item
We prove tight lower bounds for the $k$-sum and the triangle sum problems.
\item
We construct quantum algorithms for some subgraph-finding problems that are optimal in terms of query, time and space complexities.
\item
We develop a generalisation of quantum walks that connects electrical properties of a graph and its quantum hitting time.  We use it to construct a time-efficient quantum algorithm for 3-distinctness.
\end{itemize}
\end{quotation}

\clearpage
\strut
\vspace{2cm}

\section*{\centering \begin{normalsize} Acknowledgements \end{normalsize}}
\begin{quotation}
First of all, I want to thank my thesis advisor, Andris Ambainis, for being a source of many interesting and important problems, and for continuous support during my years of PhD.

I am thankful to J\'er\'emie Roland, Juris V\=\i ksna, and Ronald de Wolf for being my thesis referees and their consequent suggestions on the improvements in the text of the dissertation.

I am grateful to the co-authors of the papers and preprints constituting this thesis: 
Andrew Childs, 
Stacey Jeffery, 
Robin Kothari, 
Troy Lee,
Fr\'ed\'eric Magniez, 
Ben Reichardt, 
Ansis Rosmanis, and
Robert \v Spalek.
Additionally, I would like to thank
Dmitry Gavinsky,
Tsuyoshi Ito,
Rajat Mittal, 
Martin R\"otteler,
Miklos Santha, and
Ronald de Wolf
for many fruitful discussions on the subject.

During these years, I made academic visits to CWI, Amsterdam; IQC, Waterloo; NEC Laboratories, Princeton; and NUS, Singapore.  I would like to thank Ronald de Wolf, Harry Buhrman, Andrew Childs, Martin R\"otteler, Miklos Santha, and Troy Lee for hospitality.

I want to thank Abuzer Yakaryilmaz, Laura Man\v cinska, and Alexander Rivosh for closely reading parts of the thesis and suggesting many improvements.

Finally, I would like to thank my parents and friends for many things not directly related to the thesis.
\end{quotation}

\clearpage

\tableofcontents

\chapter*{Introduction}
\pagestyle{fancy}
\addcontentsline{toc}{chapter}{Introduction}
\markboth{\bf Introduction}{\bf Introduction}

\paragraph{Quantum computing}
The Church-Turing thesis asserts that any physically admissible computational device can be simulated by the Turing machine.  
This is a kind of statement that is hard to prove, because the notion of a physically admissible computational device is not even well-defined.  Nonetheless, the Church-Turing thesis is widely believed to be true.
In practice, this means that one device suffices to solve all computational problems: a general-purpose computer.

A stronger form of the Church-Turing thesis asserts that this simulation is efficient: If the computation requires $n$ elementary operations on some hypothetical physically admissible device, then it can be performed by the Turing machine in time polynomial in $n$.  The stronger form of the thesis seems plausible for computational devices based on classical laws of physics, but it seems to fail for quantum mechanics.

In 1980, Feynman~\cite{feynman:quantumComputer} proposed a general-purpose quantum computer as a tool for simulating quantum physics.  With such a computer at hand, it would be possible to efficiently simulate all processes in quantum mechanics regardless their nature.  Clearly, this device would have a vast range of applications.

But this situation can be observed from a different perspective.  If a general-purpose quantum computer outperforms classical computers in the task of simulating quantum physics, can it be more efficient for other computational problems as well?  The field of quantum computation deals with this question.  Initially, this was a very narrow area of research, featuring speed-ups for some esoteric problems like the Deutsch-Jozsa problem~\cite{deutsch:jozsa} and the Simon's problem~\cite{simon:simon}.  The situation changed dramatically after the discovery of polynomial (in the number of bits) quantum algorithms~\cite{shor:factoring} for integer factorisation and discrete logarithm by Shor in 1994.  For comparison, the best known classical algorithm is the general number field sieve~\cite{lenstra:nfs} that has complexity $\ee^{\tilde O(n^{1/3})}$, where $n$ is the number of bits of the number being factorised.  One year later, in 1995, Grover discovered a quantum algorithm for the OR function~\cite{grover:search}.  Although Grover's algorithm gives a mere quadratic speed-up, the scope of possible applications is much broader.

Understanding the power and limitations of quantum computation is a task of great practical importance.  Quantum computing seems to be at the very boundary of what Nature, as we understand it now, allows us to compute efficiently.  Discovery of new algorithms may result in practical tasks performed more efficiently.  What is even more important, computational problems that are infeasible for quantum computers can serve as a solid cornerstone of future cryptography.  
Most of modern cryptography is based on the RSA and elliptic curve algorithms and is vulnerable to quantum computers.

\begin{wrapfigure}{R}{0pt}
\draft{asdf}
\release{\includegraphics[width=5cm]{}}
\caption{Anti-reflective coating. (Figure courtesy of Wikipedia.)}
\label{fig:coating}
\end{wrapfigure}

\paragraph{On quantum speed-ups}
Quantum computation is similar to randomised computation.  The difference is that the probabilities are replaced by {\em amplitudes}.  The amplitude is a complex number, and the square of its absolute value gives the probability.  While the randomised computation is linear in the probabilities, the quantum computation is linear in the amplitudes.  This gives two main sources of speed-ups.

Firstly, amplitudes may be negative.  Combined with a positive amplitude of the same absolute value, they annihilate, resulting in the zero probability of the corresponding outcome.  This kind of effect is widely used in practice.  For example, consider a glass lens in an optical system.  When a ray of light enters the lens, part of the light is reflected back.  This reflection, described by Fresnel equations, is inevitable.  However, by putting a thin film on the top of the lens, it is possible to achieve that the light reflected from the surface of the film has the amplitude opposite to the light reflected from the surface of the glass (see \rf(fig:coating)).  Then, both rays of light cancel out, and, as a result, more light passes through the lens.  This technology is known as anti-reflective coating, and is used in every camera nowadays.

Quantum algorithms aim to use similar cancelling techniques:  The amplitudes of outcomes that are not interesting are mutually cancelled, that boosts the probability of observing the outcome of interest.  At one step, the success probability may be boosted by a constant factor, that gives an exponential speed-up.  Main examples of such speed-ups are given by the quantum Fourier transform, and the algorithms based on it.  This includes the aforementioned Simon's and Shor's algorithms.  However, in order to ensure proper cancelling, the problem must have a lot of structure.

The second reason for quantum speed-ups is that quantum computation is linear in the square roots of probabilities.  For instance, suppose we have $n$ elements, one of which is marked.  Our goal is to find the marked element.  We assign equal amplitudes of $1/\sqrt{n}$ to all of them.  When observed, this still gives the probability $1/n$ for each element.  But assume that we can add the average amplitude of the states to the marked element.  This is a linear transformation.  After $\sqrt{n}$ additions, we obtain the amplitude approximately equal to 
\[
\overbrace{\frac{1}{\sqrt{n}} + \frac{1}{\sqrt{n}} + \cdots + \frac{1}{\sqrt{n}}}^{\mbox{$\sqrt{n}$ times}} = 1.
\]
This gives a very vague explanation of why Grover's algorithm attains quadratic speed-up.  We will cover it in more detail in \rf(sec:amplitudeAmplification).

In the thesis, we mostly consider the second type of quantum speed-ups, analysing how the linearity in the square roots of probabilities may be translated into a computational speed-up.  Compared to the speed-ups of the first type, this research is mostly theoretical. If modestly-sized quantum computers are constructed in the near future, it is unlikely they will be able to outperform classical computers using only a quadratic speed-up.
However, the field of quantum computation in general may benefit from this type of research.  By studying quantum computation for general, unstructured problems, we understand its fundamental properties.  And this may help in the construction of quantum algorithms for structured problems.  Another reason is that we can prove strong lower bounds in these settings.

\paragraph{Query algorithms}
Query complexity is the main point of interest in the thesis.  That is, we assume that all computational operations besides accessing the input string are free of charge.  A large part of quantum algorithms are developed in these settings.  This may seem as an arbitrary assumption at first, but there are a number of reasons to study query complexity.

Firstly, proving strong unconditional lower bounds on computational problems is very hard even in the deterministic settings.  For instance, there is still no super-linear circuit lower bound known for any problem in $\mathsf{NP}$.  Considering only accesses to the input string makes the problem more accessible, and we are able to prove some tight lower bounds.  As is common in science, there is a hope that intuition gathered in the study of this simplified problem will be of help for the general problem.

Secondly, query problems that are feasible for quantum computers but are not for the classical ones, indicate the potential source of quantum speed-ups.  The corresponding problems may be either implemented time-efficiently, as we do it in Part III of the thesis, or serve as a subroutine for other computational problems.  For instance, the query problem of period finding serves as the main building block in the time-efficient Shor's factoring algorithm; Grover's algorithm for the OR function provides a generic quadratic speed-up for any algorithm based on exhaustive search.

Quantum query complexity is a popular area of research.  As we will see in \rf(chp:lower), in some aspects, it is studied better than randomised query complexity.  In particular, Reichardt~\cite{reichardt:spanPrograms, reichardt:advTight} has shown that a relatively simple optimisation problem, the (general) adversary bound~\cite{hoyer:advNegative}, gives a tight (up to a constant factor) characterisation of quantum query complexity.  The adversary bound is a semi-definite optimisation problem that comes in two forms: the primal and the dual.  Any feasible solution to the primal problem yields a lower bound on the quantum query complexity, whereas any feasible solution to the dual problem can be converted into a quantum query algorithm.  Strong semi-definite duality implies that the optimal values of the two problems coincide.  
For instance, using this technique, it is possible to obtain tight quantum query algorithms for iterated functions (see \rf(sec:advExamples)).  No such result is known for randomised query complexity.

\paragraph{Brief description of the results}
The main problem of the thesis can be formulated as follows:
Is it possible to construct new quantum query algorithms using the dual adversary bound?
By the results of Reichardt, we know that any quantum query algorithm admits a description in terms of the dual adversary bound.  But few explicit examples were known at the time we started our work on this problem.


Most quantum algorithm utilising the second sort of quantum speed-up (as described above) are based on quantum walks, and, in particular, on quantum walks on the Johnson graph that we describe in \rf(chp:walk).  Based on the adversary bound, we developed a new computational framework of learning graphs.  Learning graphs are more flexible than quantum walks on the Johnson graph.  They require simple combinatorial reasoning, in contrast to general quantum walks that require spectral analysis of the underlying graph.  Using learning graphs, we improved quantum query complexity of various problems such as triangle detection and $k$-distinctness.  Using similar techniques, we developed optimal quantum query algorithms for detecting whether the input graph contains a path or a subdivision of the claw of fixed size.

In process of this research, two additional lines of research emerged.  The first one is to convert the dual adversary bounds we have constructed into the corresponding primal adversary bounds, and, hence, lower bounds on quantum query complexity.  This task can be approached using semi-definite duality, but it is not trivial, because the dual solutions we have obtained are not optimal: They are off by a constant factor.  In this way, we managed to obtain tight lower bounds for the $k$-sum and the triangle-sum problems.

Another task is to obtain time-efficient implementations of the constructed algorithms.  The dual adversary bound only gives a query-efficient algorithm.  But, if the solution is sufficiently uniform, it is sometimes possible to implement the algorithm time-efficiently.
In order to do this, we had to drop the adversary SDP and use related techniques such as span programs and the effective spectral gap lemma.  We obtained time-efficient implementations for the path- and claw-detection problems, as well as for 3-distinctness.

\paragraph{Publications}
The results of the research within the thesis are reflected in the following publications:
\begin{enumerate}
\item 
Aleksandrs Belovs.\\
Span Programs for functions with constant-sized 1-certificates.\\
In {\em Proceedings of the 44th ACM Symposium on Theory of Computing (STOC 2012)},
pages 77---84, 2012.

\item
Aleksandrs Belovs and Ben W. Reichardt.\\
Span programs and quantum algorithms for $st$-connectivity and claw detection.\\
In {\em Proceedings of the 20th Annual European Symposium on Algorithms (ESA 2012)},
volume 7501 of {\em Lecture Notes in Computer Science},
pages 193---204, Springer, 2012.

\item
Aleksandrs Belovs.\\
Learning-graph-based quantum algorithm for $k$-distinctness.\\
In {\em Proceedings of the 53rd Annual Symposium on Foundations of Computer Science (FOCS 2012)},
pages 207---216, 2012.

\item
Aleksandrs Belovs and Robert \v Spalek.\\
Adversary lower bound for the $k$-sum problem.\\
In {\em Proceedings of the 4th Innovations in Theoretical Computer Science conference (ITCS 2013)},
pages 323---328, 2013.

\item
Aleksandrs Belovs and Ansis Rosmanis.\\
On the power of non-adaptive learning graphs.\\
In {\em Proceedings of the 28th IEEE Conference on Computational Complexity (CCC 2013)},
pages 44---55, 2013.\\
Best student paper award.

\item
Aleksandrs Belovs, Andrew M. Childs, Stacey Jeffery, Robin Kothari and Fr\'ed\'eric Magniez.\\
Time-efficient quantum walks for 3-distinctness.\\
In {\em Proceedings of the 40th International Colloquium on Automata, Languages and Programming (ICALP 2013), Part I},
volume 7965 of {\em Lecture Notes in Computer Science},
pages 105---122, Springer, 2013.
\end{enumerate}

\noindent
The results were presented by the author at the following international conferences and workshops:
\begin{enumerate}
\item Quantum Computer Science (QCS) Project Workshop, Riga, Latvia, May 2011.\\
Presentation: {\em Span programs for functions with constant-sized 1-cer\-ti\-fi\-ca\-tes.}

\item The 15th Quantum Information Processing workshop (QIP 2012), Montreal, Canada, December 2011.\\
Plenary Lecture: {\em Span programs for functions with constant-sized 1-cer\-ti\-fi\-ca\-tes.}

\item Recent Progress in Quantum Algorithms Workshop, Waterloo, Canada, April 2012.\\
Presentation: {\em Quantum algorithms for the $k$-distinctness problem.}

\item The 44th ACM Symposium on Theory of Computing (STOC 2012), New York, USA, May 2012.\\
Presentation: {\em Span programs for functions with constant-sized 1-cer\-ti\-fi\-ca\-tes.}

\item Device-Independent Quantum Information Processing (DIQIP) \& Quantum Computer Science (QCS) Joint Meeting, Castelldefels, Spain, June 2012.\\
Presentation: {\em Adversary lower bound for the k-sum problem.}

\item The 20th Annual European Symposium on Algorithms (ESA 2012), Ljubljana, Slovenia, September 2012.\\
Presentation: {\em Span programs and quantum algorithms for $st$-con\-nec\-ti\-vity and claw detection.}

\item The 2nd Joint Estonian-Latvian theory days, Medz\=abaki, Latvia, October 2012.\\
Presentation: {\em Learning graphs and quantum query algorithms.}

\item The 53rd Annual Symposium on Foundations of Computer Science (FOCS 2012), New Brunswick, USA, October 2012.\\
Presentation: {\em Learning-graph-based quantum algorithm for $k$-dis\-tinct\-ness.}

\item The 16th Quantum Information Processing workshop (QIP 2013), Beijing, China, January 2013.\\
Presentation: {\em Adversary lower bound for the $k$-Sum problem.}

\item The 16th Quantum Information Processing workshop (QIP 2013), Beijing, China, January 2013.\\
Presentation: {\em Learning-graph-based quantum algorithm for $k$-dis\-tinct\-ness.}

\item Quantum and Crypto Day 2013, Riga, Latvia, April 2013.\\
Presentation: {\em Proving lower bounds for quantum algorithms.}

\item Quantum Computer Science (QCS), Device-Independent Quantum Information Processing (DIQIP) \& Quantum Algorithmics (QAlgo) Joint Meeting, Paris, France, May 2013.\\
Presentation: {\em Negative-weight adversaries for element distinctness and beyond.}

\item The 40th International Colloquium on Automata, Languages and Programming (ICALP 2013), Riga, Latvia, July 2013.\\
Presentation: {\em Time-efficient quantum walks for 3-dis\-tinct\-ness.}

\end{enumerate}

\paragraph{Organisation of the thesis}
The thesis is divided into three parts.  In Part I, we describe the previous results our thesis is built on.  In \rf(chp:model), we define the model of quantum computation, develop some basic tools, and define the important notion of query complexity.  In \rf(chp:walk), we overview quantum algorithms based on quantum walks.  The chapter gives some tools used later the thesis, like the quantum phase detection subroutine, but mainly describes previous algorithms we improve on in the next parts of the thesis.  \rf(chp:lower) is the main chapter of the first part.  In this chapter, we overview main techniques for proving lower bounds on quantum query complexity: the polynomial and the adversary methods.
The adversary method is the main technical tool used in the second part of the thesis, and many ideas from \rf(chp:lower) will be used in Part III.

Parts II and III of the thesis contain original research.  
In Part II, we grouped algorithms that attain improvement only in the query complexity settings.
Part III features algorithms for which we are able to develop time-efficient implementations.

Part II contains Chapters~\ref{chp:cert} and~\ref{chp:kdist}.  In \rf(chp:cert), we consider algorithms based only on the certificate structure of the problem.  We define the computational model of a learning graph that can be converted into a quantum query algorithm and apply it to problems like triangle detection and associativity testing.  Also, we prove tight lower bounds for the $k$-sum and the triangle-sum problems.  In \rf(chp:kdist), we describe algorithms beyond the certificate structure framework.  We obtain new quantum query algorithms for the $k$-distinctness problem and a special case of the graph collision problem.

Part III contains Chapters~\ref{chp:claws} and~\ref{chp:electric}.  In \rf(chp:claws), we use span programs to develop optimal quantum algorithms for the $st$-connectivity problem, as well as path and claw detection.  In \rf(chp:electric), we develop a more flexible variant of quantum walks and use it to construct a time-efficient quantum algorithm for the 3-distinctness problem.

Finally, Appendix lists some basic technical results from linear algebra and semi-definite optimisation we use in the thesis.

\newcounter{outsidefigures}
\newcounter{tempfigures}
\setcounter{outsidefigures}{\value{figure}}

\part{Preliminaries}

\chapter{The Model}
\label{chp:model}
\mycommand{phase}{\mathsf{phase}}

In this chapter, we describe our quantum computational model used throughout the thesis.  It is obtained by ``quantisation'' of the corresponding deterministic model, similarly to the randomised model that is obtained by randomisation.  Throughout the chapter, we consider all these three models, stressing similarities and differences between them.  We assume familiarity with deterministic and randomised computation.

In \rf(sec:computationalDevices), we describe the mathematical model of the state of a computational device, and, in \rf(sec:evolution), we describe what kind of operations can be applied to these states.  This is enough for the large part of the thesis dealing with query complexity.  However, for some tasks, we will be interested in operations that can be performed efficiently.  Sections~\ref{sec:circuits} and~\ref{sec:RAM} deal with this question: The first one describes a popular model of quantum circuits, and the second one deals with a more sophisticated model of quantum RAM.  Also, in \rf(sec:RAM), we introduce the pseudocode notation we use for the description of quantum algorithms.  \rf(sec:specification) deals with the computational process from another perspective: the task being solved.  We describe what it means for a function to be computed by a quantum computer.  Also, we list different variants of quantum subroutines, their specifications, and mutual relationships.  

\rf(sec:query) introduces the notion of query complexity that is of fundamental importance for the whole thesis.  This is a simplified notion of computational complexity where only the number of accesses to the input string is counted, whereas all other operations are considered free.  Also, the section defines the important notion of certificate complexity and defines a list of functions we will use later in the thesis.

\section{Computational Devices}
\label{sec:computationalDevices}

A computational device is modelled by a {\em register}.  The register stores one of the possible {\em states} of the device.  The description of the state depends on the model of computation.  In the {\em deterministic} settings, the register stores one of a {\em finite} number of states.  For simplicity, we label them by consecutive integers from the set
\begin{equation}
\label{eqn:stateDet}
[n] = \{1,2,\dots,n\}.
\end{equation}
This model can be used to describe usual computational devices like abacus or computers.
We denote registers by Latin letters in sans serif font like $\reg A$ or $\reg i$.

Let $\reg A$ be a register storing one of the $n$ elements from~\refeqn{stateDet}.  In the corresponding {\em randomised} or {\em quantum} register, the state is described by a vector in the $n$-dimensional inner product space having the elements in~\refeqn{stateDet} as its orthonormal basis.  We use $e_i$ to denote the element of the basis corresponding to the element $i$.  The vector space is called the {\em state space} and denoted by $\hilbert A$.  The set of elements in~\refeqn{stateDet} is called the {\em computational basis}.  Sometimes, we also call them {\em classical states}.

If $\psi$ is a vector in $\hilbert A$, we use $\ket A|\psi>$ to denote that $\reg A$ stores the state $\psi$.  
In the current and the next sections, we use this notation for both randomised and quantum states, to emphasise the similarity between the two models.  Everywhere else, this notation is strictly reserved for quantum states.
If the subindex is clear from the context, we omit it.
As we will see in \rf(sec:evolution), many available transformations are quantisations of the corresponding deterministic operations performed to the classical states.  Thus, in contrary to customary linear algebra, elements of the standard (computational) basis are of special importance.  Because of that, we use a short-hand of $\ket|i>$, instead of, say, $\ket|e_i>$, to denote the $i$-th element of the computational basis.  By linearity, a state of $\reg A$ can be written as a {\em superposition} of the classical states:
\begin{equation}
\label{eqn:state}
\ket|\psi> = \alpha_1\ket|1> + \alpha_2\ket|2> + \cdots + \alpha_n\ket|n>.
\end{equation}

The main difference between the randomised and quantum models is the requirements on $\alpha_i$s.  In the randomised case, $\alpha_i$ is the {\em probability} of $\reg A$ being in state $i$. Thus, they are real and satisfy the following condition:
\begin{equation}
\label{eqn:requirementProb}
\alpha_1 + \alpha_2 + \cdots + \alpha_n = 1,\qquad\text{and}\qquad \alpha_i\ge 0\quad\mbox{for all $i\in [n]$}.
\end{equation}
In the quantum case, $\alpha_i$ is the {\em amplitude} of the state $\ket|i>$.
Each $\alpha_i$ is a complex number, and they satisfy the following condition:
\begin{equation}
\label{eqn:requirementQuant}
|\alpha_1|^2 + |\alpha_2|^2 + \cdots + |\alpha_n|^2 = 1.
\end{equation}

A quantum register having $[n]$ as its computational basis is called an {\em $n$-qudit}.  As a special case, $2$-qudits are called {\em qubits}.

Many computational devices, like the Turing machine, assume potential infinity in the number of states, thus
assuring the ability to solve problems of arbitrary unbounded size.  
We, however, stick to the point of view adopted in the circuit model: For each size of the problem, its own computational device of finite size is constructed.
Some notion of {\em uniformity} is then required: There must exist an algorithm that, given the size of the problem, produces the description of the computational device.  For most of the upper bounds in the thesis, this is the case.  Lower bounds, however, will be proven without any uniformity assumptions.

\paragraph{Composed system}
Until now, we focused on a single computational system.  However, a system may have more complicated structure.
The state may be composed from the states of some number of subsystems.  In this part of the section, we study the relation between the state of a composite system, and the states of its subsystems.  

Without loss of generality, it is enough to consider the case of two subsystems.
Let $\reg A$ and $\reg B$ be two deterministic registers with the set of states $[n]$ and $[m]$, respectively.
The system composed by $\reg A$ and $\reg B$ is denoted by $\reg{AB}$.  It is easy to see that the states of $\reg{AB}$ are
\begin{equation}
\label{eqn:stateComposed}
\sfig{ (i,j) \mid i\in [n],\; j\in [m] \strut}.
\end{equation}

The randomised and the quantum cases may be obtained using the general transformation rule.
Applying it here, we get that the state space $\reg{AB}$ has the elements in~\refeqn{stateComposed} as its standard basis.  In other words, $\hilbert {AB} = \hilbert A\otimes \hilbert B$.  If $\ket A|\psi>$ and $\ket B|\phi>$ are the states of the individual systems, we write $\ket A|\psi>\ket B|\phi>$ instead of $\ket{{AB}}|\psi\otimes \phi>$.

Randomised and quantum composed systems exhibit a new property compared to the deterministic ones.  While a state of the composed deterministic system~\refeqn{stateComposed} can be always represented as a composition of the states of the individual subsystems, this is not always the case in the randomised or quantum settings.  For example, the following quantum state
\begin{equation}
\label{eqn:entangle}
\frac{1}{\sqrt{2}} \ket A|1>\ket B|1> + \frac{1}{\sqrt{2}} \ket A|2>\ket B|2>.
\end{equation}
cannot be decomposed into $\ket A|\psi>\ket B|\phi>$ for any $\psi\in\hilbert A$ and $\phi\in\hilbert B$.  Such quantum states are called {\em entangled}, whereas states of the form $\ket A|\psi>\ket B|\phi>$ are called {\em separable}.	 This is akin to random variables being dependent or independent.

\section{Evolution}
\label{sec:evolution}
In the previous section, we considered a static point of view, and described how a state of a computational device looks like.  In this section, we adopt a dynamic point of view, and describe how a state can change in time. There are two possibilities:  Either the system undergoes a transformation, or the state of the system is observed.

\paragraph{Transformations}
Let the system be like in \rf(sec:computationalDevices).  In the deterministic case, any function from $[n]$ to $[n]$ is a valid transformation.  In the randomised or quantum case, when a computational system evolves {\em without interaction with other systems}, the state of the system~\refeqn{state} undergoes a linear transformation $\psi\mapsto A\psi$ that preserves the corresponding requirements~\refeqn{requirementProb} or~\refeqn{requirementQuant}.  

The classes of transformations preserving these properties are well-known.  In the randomised case, the requirements are preserved if and only if $A$ is a {\em stochastic} matrix, i.e., a matrix with non-negative entries with each column summing up to 1.  In the quantum case, this happens if and only if $A$ is {\em unitary}, i.e., if $AA^* = I$, where $A^*$ is the complex-conjugated transposed matrix of $A$ and $I$ is the identity matrix.  
Since all unitary operations are invertible, we get the following
\begin{obs}
\label{obs:reversibility}
Any quantum transformation can be reversed.
\end{obs}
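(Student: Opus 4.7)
The plan is to read off the observation directly from the definition of unitarity given just before its statement. A quantum transformation is, by definition, a unitary operator $A$ on the state space $\hilbert A$, characterised by the identity $AA^* = I$. My goal is to exhibit an explicit quantum transformation $B$ such that $BA$ equals the identity map on $\hilbert A$; then $B$, applied after $A$, undoes $A$ on every state $\ket|\psi>$.

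The natural candidate is $B = A^*$. The defining relation $AA^* = I$ already says that $A^*$ is a right inverse of $A$. Since $\hilbert A$ is finite-dimensional (its dimension is the size of the computational basis), a right inverse of a square matrix is automatically a two-sided inverse, so $A^* A = I$ as well. This is the identity that we actually need, because it expresses that applying $A$ and then $A^*$ returns any state $\ket|\psi>$ to itself: $A^* A \ket|\psi> = \ket|\psi>$.

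It remains to verify that $B = A^*$ is itself an admissible quantum transformation, i.e., that $A^*$ is unitary. This is immediate from what we have just shown: $(A^*)(A^*)^* = A^* A = I$, which is precisely the unitarity condition applied to $A^*$. Hence the transformation $A^*$ is legal in the quantum model and serves as the desired reverse of $A$.

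The only step that is not completely formal is the passage from a right inverse to a two-sided inverse, and it is the one place I would be careful. It relies on finite-dimensionality of $\hilbert A$, which holds throughout the thesis since each register has a finite computational basis; in an infinite-dimensional setting one would need to impose $A^* A = I$ as part of the definition of unitarity. Given finite-dimensionality, there really is no obstacle, and the observation follows in a couple of lines.
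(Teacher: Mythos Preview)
Your proof is correct and follows exactly the reasoning the paper intends: the paper justifies the observation in a single clause, ``Since all unitary operations are invertible,'' immediately before stating it, and your argument is simply a careful unpacking of that sentence (including the finite-dimensionality point and the check that $A^*$ is itself unitary). There is nothing to add.
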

This means that quantum analogues of deterministic operations must be revised to include the reversibility property.

The condition on the system not to interact with other systems is crucial in \rf(obs:reversibility).  For instance, consider the following transformation of a composed quantum system $\reg {AB}$:
\[
\ket A|1>\ket B|1> \mapsto \ket A|1>\ket B|1>,\qquad
\ket A|1>\ket B|2> \mapsto \ket A|2>\ket B|1>,\qquad
\ket A|2>\ket B|1> \mapsto \ket A|1>\ket B|2>,\qquad
\ket A|2>\ket B|2> \mapsto \ket A|2>\ket B|2>.
\]
This is a perfectly valid quantum operation that swaps the states of registers $\reg A$ and $\reg B$.  However, both $\ket B|1>$ and $\ket B|2>$ are mapped to $\ket B|1>$ in the first two cases.  This cannot be achieved by a unitary operator if the the register $\reg A$ is not affected.

\begin{obs}
\label{obs:copy}
In general, a randomised or quantum state cannot be copied.  This means, the transformation $\ket A|\psi>\ket B|0> \mapsto \ket A|\psi>\ket B|\psi>$ that works for any $\psi\in\hilbert A$ cannot be implemented.  Here, $\ket B|0>$ is some fixed state.
\end{obs}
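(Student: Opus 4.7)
The plan is to derive a contradiction from linearity, which is the one property shared by both the stochastic and the unitary models introduced in this section. Suppose, for the sake of contradiction, that a transformation $U$ implementing $\ket A|\psi>\ket B|0>\mapsto \ket A|\psi>\ket B|\psi>$ for every $\psi\in\hilbert A$ exists; here $U$ acts on the composed system $\reg{AB}$ and must be linear (stochastic or unitary).

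In the quantum case I would use the inner-product argument, which is the cleanest. Pick two unit vectors $\psi,\phi\in\hilbert A$ with $0<|\langle \psi\mid\phi\rangle|<1$; such a pair exists as soon as $n\ge 2$. Since $U$ is unitary and therefore preserves inner products, and since $\langle 0\mid 0\rangle=1$,
\[
\langle\psi\mid\phi\rangle \;=\; \langle\psi\otimes 0\mid\phi\otimes 0\rangle \;=\; \langle U(\psi\otimes 0)\mid U(\phi\otimes 0)\rangle \;=\; \langle\psi\otimes\psi\mid\phi\otimes\phi\rangle \;=\; \langle\psi\mid\phi\rangle^{2}.
\]
The equation $z=z^{2}$ forces $\langle\psi\mid\phi\rangle\in\{0,1\}$, contradicting the choice of $\psi$ and $\phi$. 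This settles the quantum part.

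For the randomised case the inner-product trick is unavailable, so I would appeal directly to linearity of the stochastic matrix $U$. Taking two distinct classical states $\ket A|1>$ and $\ket A|2>$ and the equal mixture $\psi = \tfrac12 \ket|1>+\tfrac12\ket|2>$, the specification demands
\[
U(\psi\otimes\ket B|0>) \;=\; \psi\otimes\psi \;=\; \tfrac14\ket|1,1>+\tfrac14\ket|1,2>+\tfrac14\ket|2,1>+\tfrac14\ket|2,2>,
\]
while linearity, applied to the two cloning requirements at $\ket|1>$ and $\ket|2>$, forces
\[
U(\psi\otimes\ket B|0>) \;=\; \tfrac12 U(\ket|1>\ket|0>)+\tfrac12 U(\ket|2>\ket|0>) \;=\; \tfrac12\ket|1,1>+\tfrac12\ket|2,2>,
\]
and the two expressions disagree on the off-diagonal terms. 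The main conceptual point I want to stress in the write-up is that there is no genuine obstacle beyond linearity itself: once we accept that evolution without interaction is a linear map on the state vector, the identity $\psi\mapsto\psi\otimes\psi$ is quadratic in $\psi$ and cannot coincide with a linear map unless the domain of valid $\psi$ is essentially trivial (classical basis states or orthogonal families). No uniformity or dimensionality assumptions are needed as long as $n\ge 2$.
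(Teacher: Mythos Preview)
Your proof is correct. The randomised case you give is essentially identical to the paper's argument. For the quantum case, however, you take a different route: you use preservation of inner products under unitaries (the classic Wootters--Zurek argument, deriving $\langle\psi\mid\phi\rangle=\langle\psi\mid\phi\rangle^{2}$), whereas the paper simply says ``the quantum case is similar'' and reuses the linearity argument verbatim, with $\psi=\tfrac{1}{\sqrt{2}}(e_1+e_2)$ in place of $\tfrac12(e_1+e_2)$. The paper's choice has the virtue of uniformity: one argument covers both models and makes explicit that linearity alone is the obstruction, a point you also make in your closing remarks but only after invoking the stronger unitarity hypothesis for the quantum half. Your inner-product argument, on the other hand, is slightly more robust in that it rules out cloning even of a single pair of non-orthogonal states, not just of a fixed superposition; but for the statement as phrased either proof suffices.
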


\pfstart
Assume we have such a transformation that copies a randomised state $\psi$, and assume $\reg A$ is at least 2-dimensional with 1 and 2 being two possible deterministic states.  The transformation must satisfy $\ket A|1>\ket B|0>\mapsto \ket A|1>\ket B|1>$ and $\ket A|2>\ket B|0>\mapsto \ket A|2>\ket B|2>$.  Let $\psi = \frac12(e_1+e_2)$.  Then, by linearity, the state $\ket A|\psi>\ket B|0>$ gets mapped to $\frac12 (\ket A|1>\ket B|1> + \ket A|2>\ket B|2>)$ that is different from $\ket A|\psi>\ket B|\psi>$.  This is a contradiction.  The quantum case is similar.
\pfend

In the case of composed systems, we sometimes use subindices to denote the register to which the transformation is applied.  For instance, for a composed register $\reg{AB}$, we write $A_{\reg A}$ for $A\otimes I = A_{\reg A}\otimes I_{\reg B}$, where $I_{\reg B}$ is the identity operator on $\reg B$.  Similarly, $A_{\reg B} = I\otimes A = I_{\reg A}\otimes A_{\reg B}$.

\begin{rem}
\label{rem:realEntries}
As any complex number can be interpreted as a vector in a real linear space of dimension 2, we may assume, if needed, that the amplitudes in \rf(eqn:requirementQuant) are real, and that all transformation are real unitary matrices (also known as orthogonal matrices).
\end{rem}

\paragraph{Measurement}
In the deterministic case, the measurement of the state does not affect the system.  In the randomised and quantum cases, the situation is different.
Measurement falls under the scope of interactions of a system with other systems, and, hence, the requirements from the previous paragraph need not to be obeyed.  In particular, measurements are {\em not} reversible.

We consider a general case of measuring a {\em part} of the system.  Let the device be represented as a composition of two registers $\reg A$ and $\reg B$ as in \rf(sec:computationalDevices), 
where the second one is measured, and the first one is not.
Thus, prior to the measurement, the state of the system has the following form:
\[
\ket |\psi> = \sum_{i=1}^n \sum_{j=1}^m \alpha_{i,j} \ket A|i>\ket B|j>.
\]

In the randomised case, the second register is observed in the state $\ket B|j>$ with probability $p_j = \sum_{i=1}^n \alpha_{i,j}$.  Then, according to the law of conditioned probability, the system {\em collapses} to the state
\[
\frac{1}{p_j} \sum_{i=1}^n \alpha_{i,j} \ket A|i>\ket B|j>.
\]
Informally, the part of the system inconsistent with the output is removed, and the remaining part is scaled to satisfy the requirement~\refeqn{requirementProb}.

The quantum case is similar, and Condition~\refeqn{requirementQuant} gives a strong clue what the probability is:  The probability of observing the second register in the state $\ket B|j>$ is $p_j = \sum_{i=1}^n |\alpha_{i,j}|^2$.  If the outcome is $j$, the system collapses to the state
\begin{equation}
\label{eqn:afterMeasurement}
\frac{1}{\sqrt{p_j}} \sum_{i=1}^n \alpha_{i,j} \ket A|i> \ket B|j>.
\end{equation}

For the situation as above, let $\proj Bi$ denote the orthogonal projector $I_{\reg A}\otimes (e_ie_i^*)$ where $I_{\reg A}$ is the identity operator on $\reg A$.  In these notations, the probability of observing $i$ in register $\reg B$ is $p_i = \|\proj Bi \psi\|^2$, and the state~\rf(eqn:afterMeasurement) becomes $\proj Bi \psi / \|\proj Bi \psi\|$ after the collapse.

However, there is a crucial difference between the measurement of the state in the quantum and randomised settings.  In the randomised case, the state prior to the measurement can be easily reconstructed by merely ignoring the output.  This is no longer true in the quantum case:  If one performs a measurement and ignores the outcome, he will get a probabilistic mixture of the quantum states, but not the original state.  Such probabilistic mixtures are called {\em mixed states}.  We will not develop the corresponding formalism here.

Note that the same thing happens if the system is observed by anyone, in particular, by the environment.  This effect, known as {\em decoherence}, is the main source of noise in quantum computation and the main obstacle in the construction of large scale quantum computers.

\section{Circuits}
\label{sec:circuits}
In the previous section, we described which operations can be performed on a computational device {\em in principle}.  This is sufficient for the most part of the thesis, which deals with query complexity.  But, we will occasionally consider time complexity as well.  Therefore, we need to clarify the operations that can be performed on a quantum computer {\em efficiently}.  This section and the next one are devoted to this issue.  The {\em circuit model} we consider here is one of the most popular.  

A {\em gate} $U$ is a unitary transformation applied to a register $\reg X$ of small size.  The register $\reg X$ usually is a composition of several qudits.  Assume we have fixed a set of {\em elementary gates} that is independent of the problem being solved.   

Let $R$ be the set of the registers of a computational device, and let $\reg R$ be the composition of all the register in $R$ as described in \rf(sec:computationalDevices).
An {\em application} of the gate $U$ is defined as the unitary $U_{\reg A}$, where $\reg A$ is the composition of a number of registers in $R$ equal to $\reg X$ (consisting of qudits of the same sizes and in the same order).

Circuits will be usually denoted by calligraphic capital Latin letters.
The description of a {\em circuit} $\circuit C$ consists of a set of registers $R$, and a sequence of applications of gates $P$.  The circuit $\circuit C$ defines a unitary transformation on $\reg R$ as the product of all the gate applications in $P$.  We use the same letter $\circuit C$ to denote this transformation.  The {\em size} of the circuit is the length of the sequence $P$.  We say that a unitary $U$ on $\reg R$ can be {\em implemented} by a circuit of size $s$ if there exists a circuit $\circuit C$ of size $s$ such that $\circuit C = U\otimes I$ where $I$ is the identity operation on some additional temporary register.

Let $(U_n)$ be a family of unitaries  that depend on some parameter $n=1,2,\dots$, and let $(\circuit C_n)$ be a family of circuits such that $\circuit C_n$ implements $U_n$ for all $n$.  We say that the family $(\circuit C_n)$ is {\em uniform} if there exists a deterministic Turing machine that, given $n$, outputs the description of $\circuit C_n$ in time polynomial in $n$.

We will describe two circuit models: the {\em low level} and the {\em high level}.
In the low level model, all the registers in $R$ are qubits.
In the high level model, the registers are qudits, and elementary gates are the ones that can be decomposed into a small number of gates acting on qubits.
Any high level model can be simulated by the low level model with some expenses.  High level models help to think algorithmically.

\paragraph{Low Level}
At the low level, a deterministic device is considered as consisting of {\em bits}---registers having only two possible states: 0 and 1.  The bits can be combined to store an arbitrary large finite number of states as described in \rf(sec:computationalDevices).  Elementary gates are the operations that act on a bounded number of bits.  It is known that any transformation can be decomposed into a sequence of gates acting on two bits only.  In fact, the NAND gate alone suffices.

The randomised case is similar to the deterministic one with an additional operation of {\em tossing a coin} that produces a random bit.  The coin may be {\em unbiased}, and produce both 0 and 1 with probability 1/2.  Or, it may have a {\em bias}, and produce 1 with probability $p$, and 0 with probability $1-p$ for some real $p$ between 0 and 1.  A coin of any bias may be {\em approximated} to arbitrary precision using unbiased coins.

\begin{table}[bth]
\[\begin{tabular}{|ccc|}
\hline &&\\
Controlled NOT & 
{$\begin{pmatrix}
1 & 0 & 0 & 0 \\
0 & 1 & 0 & 0 \\
0 & 0 & 0 & 1 \\
0 & 0 & 1 & 0 
\end{pmatrix}$} & 
{
$\begin{aligned}
\mbox{$\kett|0>\kett|0>\mapsto \kett|0>\kett|0>,\quad \kett|0>\kett|1>\mapsto \kett|0>\kett|1>$}\\
\mbox{$\kett|1>\kett|0>\mapsto \kett|1>\kett|1>,\quad \kett|1>\kett|1>\mapsto \kett|1>\kett|0>$}
\end{aligned}$
} \\
 && \\ 
Hadamard & 
{$\displaystyle \frac1{\sqrt2}\begin{pmatrix}
1 & -1 \\ 1 & 1
\end{pmatrix}$} &
{
$\begin{aligned}
\kett|0>&\mapsto \mbox{$\frac{1}{\sqrt{2}} \kett|0> + \frac{1}{\sqrt{2}} \kett|1>$}\\
\kett|1>&\mapsto \mbox{$-\frac{1}{\sqrt{2}} \kett|0> + \frac{1}{\sqrt{2}}\kett|1>$}
\end{aligned}$
} \\
 && \\ 
$\pi/8$-gate & 
{$\begin{pmatrix}
1 & 0 \\ 0 & \ee^{\pi\ii/4}
\end{pmatrix}$} &
{
$\begin{aligned}
\kett|0> &\mapsto \mbox{$\kett|0>$}\\
\kett|1> &\mapsto \mbox{$\ee^{\pi\ii/4}\kett|1>$}
\end{aligned}$
}\\
\hline
\end{tabular}\]
\caption{Elementary quantum gates that can be used to approximate any unitary transformation to arbitrary precision.}\label{tbl:elementary}
\end{table}

The situation in the quantum case is similar to the randomised one.

\begin{thm}[Universality~\cite{reck:2qubitUniversal, boykin:universal}]
\label{thm:universality}
Any unitary operator can be represented as a circuit with gates acting on one or two qubits only.  Also, it can be approximated to arbitrary precision using the gates from \reftbl{elementary}.
\end{thm}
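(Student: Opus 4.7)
The plan is to prove the two assertions in turn: first the exact decomposition into $1$- and $2$-qubit gates, then the approximate decomposition into the specific gate set of \reftbl{elementary}.

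For the exact part, I would follow the standard three-stage reduction. First, I would show that an arbitrary unitary $U$ on $n$ qubits can be written as a product of \emph{two-level unitaries}, i.e.\ unitaries that act as the identity on all but two vectors of the computational basis. This is a quantum analogue of Gaussian elimination: pick a basis vector, multiply $U$ on the left by a two-level unitary that zeroes out an entry of the first column, iterate to clear the whole column except for one entry, then proceed inductively on the remaining $(2^n-1)$-dimensional subspace. This produces at most $O(4^n)$ two-level unitaries. Second, I would show that each two-level unitary, acting non-trivially only on basis states $\ket|x>$ and $\ket|y>$, can be implemented with $O(n)$ single-qubit gates and CNOTs. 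The trick here is to connect $x$ and $y$ by a \emph{Gray code} path, use CNOT-controlled bit flips to route $x$ to a neighbour of $y$ that differs from it in a single coordinate, apply a controlled single-qubit operation on that coordinate (conditioned on all other qubits being in a fixed pattern, which itself reduces to a cascade of Toffoli-like gates, and Toffoli further decomposes into CNOTs and single-qubit gates), and finally uncompute the routing. Third, any single-qubit unitary is a product of three Euler rotations around two orthogonal axes together with a global phase, which is standard linear algebra. Combined, these three steps give the first assertion.

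For the approximation part, I would argue in two layers. The first layer reduces everything to approximating arbitrary single-qubit unitaries, since by the exact part above CNOT plus all single-qubit gates is already universal, and approximating each single-qubit gate in the (polynomial-length) exact circuit to precision $\epsilon/\mathrm{poly}(n)$ yields an overall approximation to precision $\epsilon$ by subadditivity of error in operator norm. The second layer then needs to show that the Hadamard $H$ and the $\pi/8$-gate $T$ together generate a dense subgroup of the projective unitary group $PU(2)$. The standard route is to compute the axis-angle form of $HTH\cdot T$ and of $T\cdot HTH$, observe that they are rotations by an angle $\theta$ with $\cos\theta = 1-\frac{1}{2}\cos^2(\pi/8)\cdot(\text{explicit stuff})$, verify via an elementary number-theoretic argument that this angle is an irrational multiple of $\pi$ (so that the powers of each such rotation are dense on its axis), and check that the two resulting axes are not parallel (so the two dense one-parameter sets generate a dense subgroup of $SO(3)\cong PU(2)$). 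Once density is established, the Solovay--Kitaev theorem upgrades density to efficient approximation: any target single-qubit gate can be approximated to precision $\epsilon$ using $O(\log^c(1/\epsilon))$ gates from $\{H,T\}$ for some constant $c$.

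The main obstacle I expect is the second layer: verifying that $H$ and $T$ really generate a dense subgroup of $PU(2)$ and, in particular, checking the irrationality of the relevant rotation angle. The two-level decomposition and the Gray-code routing are essentially bookkeeping, and Solovay--Kitaev can be invoked as a black box, but pinpointing the irrational angle and the non-parallel axes is the one place where an actual calculation with the specific matrices of \reftbl{elementary} is unavoidable. I would handle it by writing $HT$ and $THT$ explicitly, extracting the Bloch-vector representation, and reducing the irrationality of $\theta/\pi$ to the fact that $\cos\theta$ is a specific algebraic number of degree greater than $2$, which rules out $\theta$ being a rational multiple of $\pi$ by Niven's theorem.
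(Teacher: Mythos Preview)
The paper does not actually prove this theorem: it is stated with citations to \cite{reck:2qubitUniversal, boykin:universal} and used as a black box, with the reader referred to \cite{chuang:quantum} for details. So there is no proof in the paper to compare against.

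Your sketch is correct and follows the standard textbook route (essentially the argument in Nielsen and Chuang, Chapter~4): two-level decomposition via Gaussian elimination, Gray-code routing plus multiply-controlled gate decomposition to reach CNOT and single-qubit gates, then density of the subgroup generated by $H$ and $T$ in $PU(2)$ combined with Solovay--Kitaev. One small caution on the irrationality step: invoking ``Niven's theorem'' is not quite enough as stated, since Niven only classifies rational $\theta/\pi$ with \emph{rational} $\cos\theta$, whereas here $\cos\theta$ is a non-rational algebraic number. The cleaner argument is that if $\theta/\pi$ were rational then $2\cos\theta$ would be an algebraic integer of degree dividing $\varphi(n)/2$ for the relevant denominator $n$, and one checks the specific value $\cos\theta = \cos^2(\pi/8)$ (or whichever product you compute) is not of this form. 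This is a minor patch; the overall plan is sound.
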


We give some well-known quantum gates in \reftbl{oftenGates}.  The biased coin gate is defined for any real $p$ between 0 and 1.  For $p=1/2$, this is the Hadamard transformation from \rf(tbl:elementary).  The SWAP gate can be used to exchange the content of any two registers of equal sizes.

\begin{table}[bth]
\[\begin{tabular}{|ccc|}
\hline &&\\
NOT & 
{$\begin{pmatrix}
0 & 1 \\
1 & 0 \\
\end{pmatrix}$} & 
{
$
\kett|0>\mapsto \kett|1>,\; \kett|1>\mapsto \kett|0>
$
} \\
&& \\
Swap & 
{$\begin{pmatrix}
1 & 0 & 0 & 0 \\
0 & 0 & 1 & 0 \\
0 & 1 & 0 & 0 \\
0 & 0 & 0 & 1 
\end{pmatrix}$} & 
{
$\begin{aligned}
\kett|0>\kett|0>\mapsto \kett|0>\kett|0>,\quad \kett|0>\kett|1>\mapsto \kett|1>\kett|0>\\
\kett|1>\kett|0>\mapsto \kett|0>\kett|1>,\quad \kett|1>\kett|1>\mapsto \kett|1>\kett|1>\\
\end{aligned}$
} \\
&&\\
$p$-BiasedCoin & 
{$\begin{pmatrix}
\sqrt{p} & -\sqrt{1-p}\\
\sqrt{1-p} & \sqrt{p}
\end{pmatrix}$} &
{
$
\begin{aligned}
\kett|0> &\mapsto \mbox{$\sqrt{p}\; \kett|0> + \sqrt{1-p}\; \kett|1>$}\\
\kett|1> &\mapsto \mbox{$-\sqrt{1-p}\; \kett|0> + \sqrt p\;\kett|1>$}
\end{aligned}
$
} \\
&& \\
Conditional $\ee^{\ii\varphi}$-phase & 
{$\begin{pmatrix}
1 & 0\\
0 & \ee^{\ii\varphi}
\end{pmatrix}$} &
{
$
\begin{aligned}
\kett|0> &\mapsto \mbox{$\kett|0>$}\\
\kett|1> &\mapsto \mbox{$\ee^{\ii\varphi}\kett|1>$}
\end{aligned}
$
} \\
\hline
\end{tabular}\]
\caption{Some other quantum gates}\label{tbl:oftenGates}
\end{table}

\paragraph{High Level}
We assume the registers are composed of qudits.  Each qudit can be simulated by a number of qubits:
$\lceil\log_2 q\rceil$ qubits can cimulate a $q$-qudit.  We say that a gate acting on qudits is efficient if it can be implemented as a circuit acting on qubits and having size polynomial in the number of qubits, i.e., polylogarithmic in $q$.


We use the following general principles of constructing quantum circuits.
We refer the reader to~\cite{chuang:quantum}, for a more detailed exposition of these results.

\begin{lem}[Reversibility]
\label{lem:reversibility}
If a unitary operator $U$ can be implemented by a quantum circuit, the inverse operator $U^{-1}$ can be implemented by a quantum circuit of the same size.
\end{lem}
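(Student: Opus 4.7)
The plan is to exploit the multiplicative structure of a circuit. Recall from the definition in Section~\ref{sec:circuits} that a circuit of size $s$ implementing $U$ is a register set $R$ together with a sequence $P = (G_1, G_2, \ldots, G_s)$ of elementary gate applications whose composition in the listed order equals $U \otimes I$ on $\reg{R}$, where $I$ acts on any ancillary registers. Since $(G_s \cdots G_1)^{-1} = G_1^{-1} \cdots G_s^{-1}$, reading the sequence backwards and inverting each gate produces a candidate circuit $\circuit{C}'$ implementing $U^{-1} \otimes I$ on the same registers and consisting of exactly $s$ gate applications.

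First I would unfold the definitions and verify, by a one-line induction on $s$, that the reversed-and-inverted sequence indeed induces the unitary $U^{-1} \otimes I$; this is nothing more than iterating the identity $(AB)^{-1} = B^{-1} A^{-1}$. Second, I would check that for each $i$, the inverse $G_i^{-1}$ is again a valid application of an elementary gate acting on the same sub-collection of registers as $G_i$, so that $\circuit{C}'$ fits the formal definition of a circuit. This reduces the lemma to the single combinatorial fact that every elementary gate has an inverse which is itself elementary.

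This last point is the only real obstacle, and it is resolved by the standing assumption, implicit in the notion of an elementary gate set, that the set is closed under inversion. The assumption is harmless in practice: for the universal set in Table~\ref{tbl:elementary}, the Controlled NOT and Hadamard gates are self-inverse, while the inverse of the $\pi/8$-gate can either be declared elementary (enlarging the set by a single gate does not affect any of the later results) or approximated to arbitrary precision by Theorem~\ref{thm:universality}, at the cost of degrading the size bound from $s$ to $O(s)$. Adopting closure under inversion, the construction above yields a circuit $\circuit{C}'$ for $U^{-1}$ of size exactly $s$, as claimed.
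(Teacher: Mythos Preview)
Your argument is correct and is the standard one. The paper does not actually supply a proof of this lemma; it merely states it and refers the reader to the textbook of Nielsen and Chuang for details, so there is nothing to compare against beyond noting that your construction (reverse the gate sequence and invert each gate) is exactly the textbook proof.
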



\begin{lem}[Quantum Simulation of Deterministic Calculation]
\label{lem:quantumOfDet}
Suppose a transformation $f\colon\{0,1\}^n\to\{0,1\}^m$ can be implemented by a deterministic circuit of size $s$ acting on classical bits.  
Let $\reg A$ and $\reg B$ be quantum registers composed of $n$ and $m$ qubits, respectively.  Then, the transformation that maps $\ket A|x>\ket B|0>$ to $\ket A|x>\ket B|f(x)>$ can be implemented by a quantum circuit of size $O(s)$ acting on qubits.
\end{lem}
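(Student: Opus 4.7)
My plan is to prove \rf(lem:quantumOfDet) in three stages: make the classical computation reversible, lift the resulting reversible circuit to a quantum circuit, and then uncompute the intermediate workspace so that only $f(x)$ remains in $\reg B$. For the first stage, using a Bennett-style construction I would convert the given size-$s$ classical circuit into a reversible classical circuit of size $O(s)$ operating on $\{0,1\}^n \times \{0,1\}^t$ with $t = O(s)$ ancillary bits initialised to $0$. Each NAND gate is replaced by a constant number of Toffoli gates acting on a constant number of fresh ancillas, so the resulting reversible circuit computes $(x, 0^t) \mapsto (x, g(x))$, where $g(x)$ stores both the desired output $f(x)$ and the intermediate workspace (``garbage'').

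For the second stage, each Toffoli gate acts as a permutation of the three-qubit computational basis and is therefore a valid unitary. By \rf(thm:universality) each Toffoli can be implemented by an $O(1)$-size circuit of one- and two-qubit gates. Concatenating these implementations realises on computational-basis inputs the transformation $\ket|x>\ket|0^t> \mapsto \ket|x>\ket|g(x)>$, and by linearity the same identity holds on arbitrary superpositions, with total gate count $O(s)$.

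For the third stage, I would apply the standard Bennett uncomputation pattern. Introduce the output register $\reg B$ of $m$ qubits in state $\ket|0>$ and apply in order: (a) the reversible quantum circuit from the previous stage, yielding $\ket|x>\ket|g(x)>\ket|0>$; (b) $m$ controlled-NOT gates copying the $m$ bits of $g(x)$ that encode $f(x)$ into $\reg B$, giving $\ket|x>\ket|g(x)>\ket|f(x)>$, a legal operation since copying classical basis values is a permutation of basis states and hence unitary; (c) the inverse of the circuit from (a), which by \rf(lem:reversibility) also has size $O(s)$, restoring the workspace to $\ket|0^t>$ and producing $\ket|x>\ket|0^t>\ket|f(x)>$. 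Reordering registers yields the target $\ket A|x>\ket B|f(x)>$, and the total gate count is $O(s) + O(m) + O(s) = O(s)$, using that $m \le s$.

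The main conceptual point requiring care is step (c): without uncomputing the workspace, $\reg B$ would generally remain entangled with the garbage, so composing the subroutine with further quantum operations (or measuring the output) would decohere $\reg A$ and destroy the clean specification promised by the lemma. \rf(obs:copy) forbids cloning arbitrary quantum states, which is precisely why one cannot simply write $f(x)$ into $\reg B$ and discard the computation; however, step (b) only copies \emph{computational-basis} values via CNOTs and so does not violate \rf(obs:copy), while the uncomputation in (c) disentangles the workspace and yields the separable output required.
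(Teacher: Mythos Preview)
The paper does not actually prove this lemma; it is stated as a well-known fact with a reference to~\cite{chuang:quantum} for details. Your proposal is precisely the standard Bennett compute--copy--uncompute argument that such a reference would give, and it is correct.

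One small quibble: the claim ``$m \le s$'' at the end is not justified by anything you have said. A deterministic circuit of size $s$ can have output bits that are simply input wires (e.g.\ the identity function has a size-$0$ circuit but $m=n$), so $m$ need not be bounded by $s$. The honest bound from your argument is $O(s+m)$, or $O(s+n+m)$ if you also count setting up the ancillas. This is the bound one usually states; the $O(s)$ in the lemma implicitly assumes $s \ge m$ (or absorbs $n,m$ into the notion of circuit size), which is harmless in every application in the thesis but is worth being aware of.
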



\rf(lem:quantumOfDet) is very important since it allows us to perform arithmetical operations, comparisons, and other elementary operations on quantum data.  We will use this lemma very often and without an explicit reference.  

Let $\reg B$ be a register, and $\reg A$ stores a qubit.  
The conditional operation is defined as the following transformation:
\[
\ket A|0>\ket B|\psi> \mapsto \ket A|0>\ket B|\psi>,\qquad\text{and}\qquad 
\ket A|1>\ket B|\psi> \mapsto \ket A|1>\ket B|U\psi>,
\]
where $\psi\in \hilbert B$.  The corresponding matrix is
\[
\begin{pmatrix}
I & 0\\
0 & U
\end{pmatrix}.
\]

\begin{lem}[Conditional Operations]
\label{lem:conditional}
Assume a unitary transformation $U$ on $\reg B$ can be implemented by a quantum circuit of size $s$.  
Then, the conditional operation on the composed register $\reg{AB}$ can be implemented by a quantum circuit of size $O(s)$.  
\end{lem}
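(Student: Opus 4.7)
The plan is to reduce the task to implementing controlled versions of the elementary gates from \rf(tbl:elementary), which act on at most two qubits each.

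First, I would invoke the hypothesis to write the circuit implementing $U$ as a sequence $U = U_s U_{s-1} \cdots U_1$ where each $U_i$ is an elementary gate on at most two qubits. Let $C(V)$ denote the conditional operation on $\reg{AB}$ associated with a unitary $V$ on $\reg B$; in matrix form $C(V) = \diag(I, V)$. The key algebraic observation is that conditioning is multiplicative:
\[
C(U_s U_{s-1}\cdots U_1) = C(U_s)\, C(U_{s-1})\cdots C(U_1),
\]
which follows immediately from the block-diagonal identity $\diag(I, AB) = \diag(I, A)\diag(I, B)$. Consequently, it suffices to implement each $C(U_i)$ using $O(1)$ elementary gates, and concatenate these circuits, yielding total size $O(s)$.

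Next I would verify that controlled versions of the elementary two-qubit gates can themselves be implemented in $O(1)$ gates. This reduces to two cases. For a single-qubit elementary gate (Hadamard or $\pi/8$-gate), the controlled version is a two-qubit unitary; by \rf(thm:universality), any two-qubit unitary decomposes into a constant number of elementary one- and two-qubit gates, so $O(1)$ gates suffice. For the controlled-NOT gate, the controlled version is a Toffoli (doubly-controlled NOT) acting on three qubits; the standard Toffoli decomposition uses a constant number of Hadamard, CNOT, and $\pi/8$-type gates. In either case, $C(U_i)$ is realised by $O(1)$ elementary gates acting on $\reg A$ together with the one or two qubits of $\reg B$ on which $U_i$ acts.

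The main subtlety, and the only place some care is needed, is that the circuit for $U$ is permitted to use an auxiliary temporary register (per the definition of ``implements'' in \rf(sec:circuits)); one must check that it is safe not to condition the gates acting on the temporary qubits. This is fine: since the circuit returns the temporary register to its initial state $\ket|0>$ in both branches of the control, the conditioned and unconditioned evolutions agree on the temporary subspace, so we may apply each elementary gate on the temporary qubits without wrapping it into a $C(\cdot)$. Thus we only need to control the gates that touch the ``output'' qubits of $\reg B$, at a cost of $O(1)$ per gate, giving the claimed $O(s)$ bound. Aside from bookkeeping about the ancilla, the argument is essentially just the multiplicativity of $C(\cdot)$ combined with universality.
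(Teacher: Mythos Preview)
The paper does not actually prove this lemma; it is stated without proof, with a reference to the textbook \cite{chuang:quantum} for details. Your argument via multiplicativity of $C(\cdot)$ together with the fact that the controlled version of any one- or two-qubit elementary gate acts on at most three qubits and hence decomposes into $O(1)$ elementary gates is the standard one, and it is correct.

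However, your final paragraph about the temporary register is both unnecessary and, as written, wrong. It is unnecessary because conditioning \emph{every} gate of $\circuit C$ already gives $C(\circuit C)=C(U\otimes I_{\reg T})=C(U)\otimes I_{\reg T}$, which by the paper's definition of ``implements'' is an implementation of $C(U)$ with the same temporary register; each controlled elementary gate still costs $O(1)$, so the total is $O(s)$ without any special treatment of ancilla gates. It is wrong because your claim that ``the circuit returns the temporary register to its initial state $\ket|0>$ in both branches'' is unsupported: if in the $\reg A=0$ branch you apply only the $\reg T$-only gates and skip everything else, the composition of those gates need not fix $\ket|0>$. For instance, decompose SWAP between $\reg B$ and $\reg T$ into three CNOTs and consider the circuit $\mathrm{SWAP}\cdot X_{\reg T}\cdot\mathrm{SWAP}$, which equals $X_{\reg B}\otimes I_{\reg T}$; the only $\reg T$-only gate is the middle $X$, and applying it alone leaves $\reg T$ in $\ket|1>$. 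So drop the ``optimisation'' and simply condition every gate.
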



\section{Quantum RAM}
\label{sec:RAM}
This section describes the main model for time-efficient quantum computation we use in this thesis: a classical random access machine (RAM) with the ability of manipulating quantum data.  The set of elementary quantum gates is the same as in \rf(sec:circuits).  
The classical program determines which gates are applied to which registers.
The classical program can get feedback from the quantum data by performing measurements as described in \rf(sec:evolution).  Also, in contrast to the circuit model, the quantum RAM can time-efficiently access elements of arrays (either classical, or quantum).

In this section, we also define the pseudo-code used in the description of quantum algorithms in this thesis.  In many cases, this is just a way of presenting an algorithm, and the same transformation can be performed by a quantum circuit in the same cost.  However, in some cases, we require additional operations provided by the quantum RAM, e.g., quantum arrays.  Then, the complexity of the quantum RAM is smaller than the complexity of the circuit.

We consider three types of resources: time, space and query complexity.
The time complexity is the number of elementary gates applied by the program.
The space complexity is the number of bits and qubits used by the program.  
The query complexity will be discussed in \rf(sec:query).

\paragraph{Registers}
The memory of a quantum RAM is composed of a number of registers and arrays.  Registers were described in \rf(sec:computationalDevices) and arrays will be described later.  The number of registers depends on the problem, but not on the size of the input.  The sizes of the registers, however, may depend on the size of the input.  Similarly, the number of arrays does not depend on the size of the input, but their sizes and the sizes of their elements may depend on it.

A register can be either classical or quantum.  Moreover, for any register, its status (being classical or quantum) may change with time.  The gates applied by the quantum RAM may only depend on the content of the classical registers.  But the content of the quantum registers may influence the quantum RAM indirectly via measurements.  The quantum RAM may perform quantum gates on registers as described in \rf(sec:circuits), measure quantum registers, and perform random memory accesses as described later.

If a quantum register $\reg A$ is measured, the outcome is obtained and the state of the quantum RAM collapses as described in \rf(sec:evolution).  We assume that register $\reg A$ becomes classical, and stores the outcome of the measurement.  And reversely, any classical register $\reg A$ can be made quantum by applying a quantum operation on it.  If the content of the classical register is $a$, the initial state of the quantum register is $\ket A|a>$.  

We adopt the following font usage to denote registers of different types in the pseudo-code.  
Variables typed in italics, e.g. $a$ or $i$, denote registers whose content depends only on the size of the input string, but not on content of the input string.
These registers may be interpreted as a part of the deterministic program proving uniformity of a family of circuits, as in \rf(sec:circuits).  Registers typed in sans serif, e.g. $\reg A$ or $\reg b$, store information that depends on the input.  If a register is in the quantum state, we underline it in the pseudo-code like this: $\qreg A$.


\paragraph{Arrays}
Recall that the classical random access machine got its name due to the ability to access elements of arrays in one computational step.  In the quantum RAM, we have four types of arrays.  

\descrstart
\item[Classical Random Access Classical Memory (CRACM)]  These are conventional classical arrays.

\item[Quantum Random Access Classical Memory (QRACM)]  This is a CRACM with the additional quantum random read-only access as follows.  Assume $\reg A$ is a QRACM of size $m$, $\reg i$ is an $m$-qudit, and $\reg O$ is the output register of the same type as the elements of $\reg A$.  Both $\reg i$ and $\reg O$ are quantum.  Then, the random access operation $\reg O\qgets \reg A\elem[\reg i]$ transforms $\ket i|i>\ket O|x>$ into $\ket i|i>\ket O|x+A\elem[i]>$ where $A\elem[i]$ is the content of $\reg A\elem[i]$.  The addition, as usually, is performed modulo the size of $\reg O$.

\item[Classical Random Access Quantum Memory (CRAQM)]  Given an array $\reg A$ consisting of quantum registers, a quantum register $\reg O$ of the same size, and a classical register $\reg i$, the random access operation applies the swap gate to $\reg A\elem[\reg i]$ and $\reg O$.

\item[Quantum Random Access Quantum Memory (QRAQM)]  This is the same as CRAQM, but the register $\reg i$ may be quantum.  The action of the random access operation is extended by linearity.
\descrend

If $\reg A$ is an array, we use $\reg A\elem[i]$ to denote the $i$th register in the array.  We often use notation $\ket A|\psi_1,\psi_2,\dots,\psi_m>$ instead of $\ket A\elem[1] |\psi_1>\cdots \ket A\elem[m] |\psi_m>$.

We use different kind of arrays because they have different difficulties of implementing in hardware.  Classical arrays are already available.  QRACM is easier to implement than CRAQM because it does not require to store quantum data.  CRAQM are similar to quantum registers.  Finally, QRAQM is the most complicated type of resource.  Luckily, we will seldom require it.

\paragraph{Pseudo-code}
We will use a python-type pseudo-code to describe programs for quantum RAM.  
We use usual classical directives such as {\bf if}, {\bf for}, {\bf while}, {\bf repeat}, {\bf return} and execution of functions and procedures.  
We assume that the reader is familiar with all these directives.
We use $\gets$ to denote assignment and {\bf uses} to list classical registers used by a classical subroutine.
Additionally, we use some quantum directives to manipulate quantum data.  
Here, we give a list of the directives.  Some of them are described here.  For others, we refer to \rf(sec:specification) where we describe quantum subroutines.

\descrstart
\item[quprocedure {\rm $<$name$>$($<$list of arguments$>$) {\bf with} $<$modifiers$>$}\quad ] 
Starts the description of a quantum procedure.  See \rf(sec:procedures).  The optional modifier part includes some additional information like the precision of the quantum procedure.

\item[qufunction {\rm $<$name$>$($<$list of arguments$>$) {\bf with} $<$modifiers$>$}\quad ]  Similar as {\bf quprocedure}, but for quantum functions.  Refer to \rf(sec:Functions) for more details.

\item[attach {\rm $<$type$>$ $<$quantum register$>$ }\quad ] Adds a new quantum register of the specified type in the {\em initial state}.  The initial state is denoted by 0.  It is an easily distinguishable deterministic state of the register.  The initial state of a composed register is the tensor product of the initial states of its subregisters: $\ket {{AB}} |0> = \ket A|0>\ket B|0>$.  The new register is initially separable from all other quantum registers.

\item[detach {\rm $<$quantum register$>$ }\quad ] 
Removes the specified quantum register.  It assumes that the register is separable from the remaining quantum registers.  A quantum procedure is required to detach all quantum registers it has attached.

\item[conditioned on {\rm $<$condition$>$} : {\rm commands}\quad ] 
Applies the commands if the condition is true as described in \rf(lem:conditional).

\item[measure {\rm $<$quantum register$>$ }\quad ]
Measures the specified quantum register as described in \rf(sec:evolution).  The register becomes classical and stores the result of the measurement.  The state of the quantum RAM collapses accordingly.

\item[{\rm $<$gate or procedure$>$($<$parameters$>$) {\bf with} $<$modifiers$>$ }\quad ] 
Applies a quantum gate or a quantum procedure.  The parameters is a list of registers with types matching the definition of the procedure or the gate.  Optional modifiers contain some additional information like the required precision of the subroutine, see \rf(sec:specification).

\item[{\rm $<$gate or procedure$>^{-1}$($<$parameters$>$) }\quad ] 
Applies the reverse of the given quantum gate or a quantum procedure, see \rf(lem:reversibility).  

\item[{\rm $<$quantum register$>\qgets <$expression$>$ {\bf with} $<$modifiers$>$ }\quad ] 
Adds the value of the expression to the content of the quantum register.  Denote the quantum register by $\reg A$.  Assume $\reg A$ is an $\ell$-qudit for some $\ell$, and its state is an element of the computational basis, say $\ket |a>$.  If the value of the expression is $b$, the state of $\reg A$ changes to $\ket |a+b>$, where the addition is performed modulo $\ell$.  For all other states of $\reg A$, the action is defined by linearity.
The expression is not allowed to contain $\reg A$.  In this case, this operation is reversible (cf.~\rf(obs:reversibility)).  If the state of the register is not an element of the computational basis, then the action of this operation is defined by linearity.

The expression may stand for a number of things.  It can be an arithmetical expression or a simple classical function involving quantum or classical registers, in which case \rf(lem:quantumOfDet) is applied.  The expression may be an element of a QRACM array as will be described later.  Finally, it may stand for a quantum function as described in \rf(sec:Functions).

\item[{\rm $<$quantum register$>\qungets <$expression$>$ }\quad ] The reverse of the $\qgets$ operation.

\item[{\rm $\phase\qgets <$expression$>$ }\quad ] Evaluation of the quantum function into the phase.  See \rf(lem:circuitToPhase).

\descrend

In addition to these commands, we use obvious shorthands.  For instance, the condition in the {\bf conditioned on} directive may be an arbitrary expression involving quantum or classical registers distinct from the registers used in commands.  In this case, the value of the condition is calculated into a temporary qubit, the conditioned commands are applied, and the calculation of the condition is reversed.

%

\section{Subroutines}
\label{sec:specification}
A {\em subroutine} is a cornerstone of a programming language.  They allow one to divide a complex computational task into a number of easier ones that can be solved independently.  The behaviour of each subroutine is then described by a relatively simple {\em specification}.

We allow classical functions that are specified by the {\bf function} directive.  The classical function can have classical and quantum arguments.  Classical arguments are given by value: Their value is copied for the subroutine, and a change of the argument in the function does not affect the value of the variable outside it.  Quantum arguments are given by reference: Any change to the argument affects the quantum register outside the subroutine.

A quantum procedure is a limited version of a classical procedure.  The reason for the limitations is the ability to reverse the procedure.  In particular, it is not allowed to measure quantum registers, or to modify classical registers defined outside the subroutine.  

In this section, we continue the description of the pseudo-code we initiated in the previous section to include the execution of quantum subroutines.  We consider general quantum subroutines, and quantum subroutines evaluating functions coherently and non-coherently.  Additionally, we give a number of simple lemmas dealing with various kinds of subroutines.

\subsection{Procedures}
\label{sec:procedures}
A {\em quantum procedure} is a general quantum subroutine.  
The procedure may have quantum and classical arguments.  It implements some quantum transformations on its quantum arguments.  The transformation may depend on the values of the classical arguments.

The {\em specification} of a quantum procedure $U$ consists of a register $\reg A$ (the composition of the quantum arguments), a system of orthonormal vectors $\Psi = \{\psi_1,\dots,\psi_m\} \subset \hilbert A$, and the action of the procedure on the elements of $\Psi$: $\psi_j\mapsto U\psi_j\in\hilbert A$.  Clearly, the vectors $U\psi_1,\dots,U\psi_m$ must be orthonormal.  Usually, the vectors $\psi_j$ are the elements of the standard basis.  
The action of the subroutine in the span of $\Psi$ is then uniquely determined by linearity.  We call the span of $\Psi$ the {\em input subspace} of the subroutine.  It may be a proper subspace of $\hilbert A$.
In this case, we interpret this as a promise that the initial state of the subroutine belongs to the span of $\Psi$.

Quantum procedures are defined in the pseudo-code using the {\bf quprocedure} directive.  The quantum procedure is allowed to attach quantum registers, but, at the end, it has to detach all the registers it has attached.  This means that the state at the end of the procedure is of the form $\ket A|\omega_\psi>\ket B|\upsilon>$ where $\reg B$ is the composition of all the attached registers and $\upsilon$ does not depend on the initial state $\psi\in\Psi$.  We will make this explicit by the use of the {\bf detach} command at the end of the procedure.

\begin{exm}[Preparation of Uniform Superposition]
\label{exm:uniform}
As an example, consider preparation of the uniform superposition.  This is a very common quantum operation.  Assume we have a register $\reg A$ that stores an integer $i$ between 0 and $n-1$.   At the low level, the register is represented as a CRAQM of $\ell \ge \lceil\log_2 n\rceil$ qubits.  The integer $i$ is stored in binary.  The qubit $\reg A\elem[\ell]$ stores the highest bit of $i$, and the qubit ${\reg A}\elem[1]$ stores the lowest bit of $i$.

The task is to transform $\ket A|0>$ into the uniform superposition of all states $\ket A|i>$ for $i\in \{0,\dots,n-1\}$.  Thus, the input subspace is one-dimensional.  For example, assume $n=5$, and $\ell = 4$.  Using the low level representation, the task is to  transform $\ket A|0,0,0,0>$ into
\[
\frac1{\sqrt5} \ket A|0,0,0,0> +
\frac1{\sqrt5} \ket A|1,0,0,0> +
\frac1{\sqrt5} \ket A|0,1,0,0> +
\frac1{\sqrt5} \ket A|1,1,0,0> +
\frac1{\sqrt5} \ket A|0,0,1,0>.
\]

The pseudo-code of the corresponding procedure is given in \rf(alg:uniform).  
The algorithm works recursively by applying the BiasedCoin gate (\rf(tbl:oftenGates)) to distribute the amplitude between the binary strings that begin with 0 and 1.  The qubit $\reg{flag}$ stores a flag indicating that the binary string must be processed by the subroutine.  On each level of recursion, we attach a new register, and detach it before finishing the procedure.  

The procedure uses classical registers $\ell$ and $n$.
Although the subroutine is quite complex, we call it a quantum procedure because it is quite easy to construct a procedure for the reverse operation.

The number of elementary gates in \rf(alg:uniform) is polynomial in the number of qubits.  We denote the execution of the procedure by UniformSuperposition($\qreg A$).  In this case, the argument $n$ is determined by the size of $\reg A$.
\end{exm}

\begin{algorithm}[tbh]
\caption{Preparation of a Uniform Superposition}
\label{alg:uniform}
\algbegin
\state{{\bf quprocedure} UniformSuperposition({\bf integer} $n$, {\bf CRAQM array} $\qreg A\elem[\ell]$ of {\bf qubits}) {\bf :}}
\tab
	\state{{\bf attach qubit} $\qreg{flag}$}
	\state{NOT($\qreg{flag}$)}
	\state{PreparationSubroutine($\ell$, $n$, $\qreg{flag}$)}
	\state{{\bf detach} $\qreg{flag}$}
\state{~ }
\state{{\bf quprocedure} PreparationSubroutine({\bf integers} $\ell$, $n$, {\bf qubit} $\qreg{flag}$) {\bf :}}
\tab
\state{{\bf if} $\ell \ne 0$ {\bf and} $n\ne0$ {\bf :} }
\tab
\state{{\bf attach qubit} $\qreg{f}$ }
\state{{\bf if} $n < 2^{\ell-1}$ {\bf :}}
\tab 
	\state{{\bf conditioned on} $\qreg{flag}=1$ {\bf :} $\qreg f \qgets 1 - \qreg A\elem[\ell]$}
	\state{PreparationSubroutine($\ell-1$, $n$, $\qreg f$)}
	\state{{\bf conditioned on} $\qreg {flag}=1$ {\bf :} $\qreg f \qungets 1 - \qreg A\elem[\ell]$}
\untab
\state{{\bf elseif} $2^{\ell-1}\le n < 2^{\ell}$ {\bf :}}
\tab
	\state{{\bf conditioned on} $\qreg {flag}=1$ {\bf :}}
	\tab
		\state{$(2^{\ell-1}/n)$-BiasedCoin($\qreg A\elem[\ell]$) }
		\state{$\qreg f\qgets 1 - \qreg A\elem[\ell]$ }
	\untab
	\state{PreparationSubroutine($\ell-1$, $2^{\ell-1}$, $\qreg f$) }
	\state{{\bf conditioned on} $\qreg{flag}=1$ {\bf :} NOT($\qreg f$)}
	\state{PreparationSubroutine($\ell-1$, $n-2^{\ell-1}$, $\qreg f$) }
	\state{{\bf conditioned on} $\qreg{flag}=1$ {\bf :} $\qreg {f}\stackrel{-}{\longleftarrow}  \qreg A\elem[\ell]$}
\untab
\state{{\bf elseif} $n = 2^{\ell}$ {\bf :}}
\tab
	\state{{\bf for} $i = 1,\dots, \ell$ {\bf :}}
	\tab
		\state{{\bf conditioned on} $\qreg{flag}=1$ {\bf :} Hadamard($\qreg A\elem[i]$)}
	\untab\untab
	\state{{\bf detach} $\qreg{f}$ }
\untab\untab\untab
\algend
\end{algorithm}

Quantum procedures that perform their task with small imprecision are also acceptable.  In this case, we consider the description of the operator $U$ in the previous paragraphs as a {\em specification}.  We say that a quantum procedure $\circuit U$ {\em $\delta$-approximates} $U$ (or performs $U$ with {\em precision} $\delta$), if $\|\circuit U\psi - U\psi \|\le \delta$ for all $\psi$ in the input subspace of $U$.  The following lemma describes how imprecision accumulates during the execution of a program.

\begin{lem}
\label{lem:circuitRobust}
Assume we have a quantum procedure $\circuit A$ that applies a number subroutines $U_1,U_2,\dots,U_m$ in this order.  Moreover, for all $k$, the state of $\circuit A$ before applying $U_k$ belongs to the input subspace of $U_k$.
Let $\circuit A'$ denote the quantum procedure $\circuit A$ with each subroutine $U_i$ replaced by a quantum procedure $\circuit U_i$ that $\delta_i$-approximates $U_i$.  Then, $\circuit A'$ $(\sum_i \delta_i)$-approximates $\circuit A$.
\end{lem}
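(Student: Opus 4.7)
The plan is a straightforward induction on $m$, the number of subroutines, using the triangle inequality together with unitarity of each $\circuit U_k$. Fix an initial state $\psi$ in the input subspace of $\circuit A$. For $k=0,1,\ldots,m$, let $\psi_k = U_k U_{k-1}\cdots U_1 \psi$ be the state after $k$ exact steps and $\psi'_k = \circuit U_k \circuit U_{k-1}\cdots \circuit U_1 \psi$ the state after $k$ approximate steps. Let $\varepsilon_k = \|\psi_k - \psi'_k\|$. Our goal is $\varepsilon_m \le \sum_{i=1}^m \delta_i$.

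The base case $\varepsilon_0 = 0$ is immediate since $\psi'_0 = \psi_0 = \psi$. For the inductive step, I would add and subtract $\circuit U_k \psi_{k-1}$, i.e., apply the approximate operator to the exact intermediate state:
\[
\psi_k - \psi'_k = \bigl(U_k - \circuit U_k\bigr)\psi_{k-1} + \circuit U_k\bigl(\psi_{k-1} - \psi'_{k-1}\bigr).
\]
The first term is bounded by $\delta_k$: the hypothesis guarantees that $\psi_{k-1}$ lies in the input subspace of $U_k$, which is precisely where the approximation bound $\|\circuit U_k \varphi - U_k \varphi\| \le \delta_k$ applies. The second term has norm exactly $\varepsilon_{k-1}$, since $\circuit U_k$, viewed as a transformation on the full register including any ancillas it attaches, is unitary and therefore an isometry. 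The triangle inequality then yields $\varepsilon_k \le \delta_k + \varepsilon_{k-1}$, and iterating gives $\varepsilon_m \le \sum_{i=1}^m \delta_i$.

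There is no real obstacle here; the only point requiring care is that the $\delta_k$-approximation of $U_k$ is only promised on its input subspace, which is why we must apply $\circuit U_k$ to the \emph{exact} state $\psi_{k-1}$ (guaranteed in the subspace by hypothesis) rather than to the approximate state $\psi'_{k-1}$ (which in general is not). The add-and-subtract trick sidesteps this by absorbing the discrepancy between $\psi_{k-1}$ and $\psi'_{k-1}$ into the unitary term, which is controlled isometrically by the previous error.
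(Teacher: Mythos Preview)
Your proof is correct and essentially identical to the paper's own argument: both define the exact and approximate intermediate states, perform the same add-and-subtract with $\circuit U_k \psi_{k-1}$, bound the first term by $\delta_k$ via the input-subspace hypothesis and the second by unitarity of $\circuit U_k$, and conclude by induction. The paper adds one preliminary remark you omit (that any gates interleaved between the $U_i$ can be absorbed as additional exact subroutines with $\delta=0$), but this is cosmetic.
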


\pfstart
We may assume $\circuit A$ is the composition of the subroutines, i.e., $\circuit A = U_mU_{m-1}\cdots U_1$ by treating all gates between the subroutines as additional subroutines that are evaluated exactly.
Let $\psi$ be a state in the input subspace of $\circuit A$.  Denote by $\phi_k = U_kU_{i-1}\cdots U_1\psi$ and $\psi_k = \circuit U_k\circuit U_{i-1}\cdots \circuit U_1\psi$ the states of $\circuit A$ and $\circuit A'$ after $k$ subroutines are applied.
We have $\phi_0 = \psi_0 = \psi$, and
\[
\|\phi_{k+1} - \psi_{k+1} \| =
\|U_{k+1}\phi_{k} - \circuit U_{k+1}\psi_{k} \| \le 
\|(U_{k+1} - \circuit U_{k+1})\phi_{k}\| + \| \circuit U_{k+1}(\phi_k-\psi_{k}) \| \le
\delta_{k+1} + \|\phi_k - \psi_k\|.
\]
Here, the first inequality follows from the triangle inequality, and the second one holds because $\phi_k$ belongs to the input subspace of $U_{k+1}$.  The result follows by induction on $k$.
\pfend

Usually, we will use capital Latin letters in italic font for specifications of quantum procedures, or circuits that follow the specification exactly.  Calligraphic letters will be used for actual circuits that follow the specification approximately.

\subsection{Functions}
\label{sec:Functions}
\mycommand{vyhod}{\ell}
We are mostly interested in subroutines that calculate functions.  Let $\reg A$ and $\Psi$ be as in the previous section, and let $f\colon \Psi\to [\vyhod]$ be a function, where $\vyhod$ is some integer.  

\paragraph{Coherent Evaluation}
We say that a subroutine $U$ {\em evaluates $f$ coherently} if it performs the transformation 
\(
\ket A |\psi> \ket O|i> \mapsto \ket A |\psi>\ket O|i+f(\psi)>
\)
for all $\psi\in\Psi$ and $i\in[\vyhod]$, where $\reg O$ is an $\vyhod$-qudit.  Recall that addition in $\reg O$ is performed modulo $\vyhod$.  We denote the execution of this subroutine by $\reg O\qgets U(\reg A)$.
As for any quantum procedure, we may consider approximate implementations of coherent function evaluations.

\paragraph{Evaluation in the Phase}
If $f$ has a Boolean output, there is another important variant of coherent evaluation.  We say that a quantum procedure $U$ evaluates $f\colon \Psi\to\{0,1\}$ {\em in the phase} if $U\ket A|\psi> = (-1)^{f(\psi)}\ket A|\psi>$ for all $\psi\in\Psi$.  
We denote the execution of this subroutine by $\phase\qgets U(\reg A)$.  
In order to distinguish this notion, we sometimes say that the standard coherent function evaluation, as in the previous paragraph, evaluates $f$ {\em in the register}.  The following lemma shows that these notions are interchangeable.

\begin{lem}
\label{lem:circuitToPhase}
Assume $U$ is a quantum procedure that evaluates a function $f\colon \Psi\to\{0,1\}$ in the register.  
Then, there exists a quantum procedure that, given $U$, evaluates $f$ in the phase.  The procedure calls $U$ once and uses $O(1)$ elementary gates.
\end{lem}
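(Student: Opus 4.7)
The plan is to use the standard phase-kickback trick. I will construct the phase-evaluation procedure by sandwiching one call to $U$ between a preparation of a single ancilla qubit in the ``minus'' state and its uncomputation.

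More precisely, I attach a fresh qubit $\qreg O$ (initially in $\ket|0>$) and apply the NOT gate followed by the Hadamard gate to prepare the state $\tfrac{1}{\sqrt{2}}(\ket|0> - \ket|1>)$. Since $f$ has Boolean output, $\reg O$ is a valid $\ell$-qudit with $\ell = 2$, so that applying $U$ transforms
\[
\ket A|\psi>\cdot \tfrac{1}{\sqrt{2}}\bigl(\ket O|0> - \ket O|1>\bigr)
\;\longmapsto\;
\ket A|\psi>\cdot \tfrac{1}{\sqrt{2}}\bigl(\ket O|f(\psi)> - \ket O|1 \oplus f(\psi)>\bigr),
\]
where the addition modulo $2$ in $\reg O$ is the XOR. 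The right-hand side equals $(-1)^{f(\psi)}\ket A|\psi>\cdot \tfrac{1}{\sqrt{2}}(\ket O|0> - \ket O|1>)$, so the factor $(-1)^{f(\psi)}$ is transferred to $\reg A$ while $\reg O$ is returned, up to that global factor, to its prepared state.

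After the call to $U$, I uncompute the preparation by applying Hadamard and then NOT to $\reg O$, which brings $\reg O$ back exactly to $\ket O|0>$ and leaves the joint state in the form $(-1)^{f(\psi)}\ket A|\psi>\ket O|0>$. At this point $\reg O$ is separable from $\reg A$, so I can \textbf{detach} it, as required by the definition of a quantum procedure in Section~\ref{sec:procedures}. Extending by linearity to arbitrary superpositions over $\Psi$ gives the desired phase-evaluation behaviour on $\reg A$.

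The count of resources is immediate: $U$ is invoked exactly once, and the extra work consists of two NOT gates and two Hadamard gates on a single qubit, i.e., $O(1)$ elementary gates. There is no real obstacle in the argument; the only subtle point to be careful about is that the preparation and its inverse must be exact so that $\reg O$ is genuinely separable at the end and can be safely detached, and that $\reg O$ does indeed have the correct size ($\ell=2$) matching the range of $f$, which is guaranteed by the Boolean assumption.
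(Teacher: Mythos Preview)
Your proof is correct and follows essentially the same phase-kickback approach as the paper. The only minor difference is that the paper's version (Algorithm~\ref{alg:circuitToPhase}) does not bother to uncompute the Hadamard and NOT before detaching the ancilla, since the \textbf{detach} directive only requires separability, not that the register be returned to $\ket|0>$; your extra two gates are harmless but unnecessary.
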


\pfstart
The description of the subroutine can be found in \rf(alg:circuitToPhase).  The operations in \rf(line:circuitToPhase:hadamard) transforms the initial state of the register $\reg b$ into $\s[\ket b |0>-\ket b|1>\strut ]/\sqrt{2}$.  Assume the register $\reg A$ contains a vector $\psi\in\Psi$.  If $f(\psi)=0$, \rf(line:circuitToPhase:application) does not change the state of $\reg b$, otherwise, it flips its sign.  The sign can be transferred to the content of $\reg A$, thus allowing one to detach $\reg{b}$ afterwards.
\pfend

\begin{algorithm}
\caption{Evaluation of a function in the phase}
\label{alg:circuitToPhase}
\algbegin
\state{{\bf quprocedure} EvaluateInThePhase({\bf quprocedure} $U$, {\bf register} $\qreg A$) {\bf :}}
\tab
	\state{{\bf attach qubit} $\qreg{b}$}
	\state{Hadamard(NOT($\qreg b$))} \label{line:circuitToPhase:hadamard}
	\state{$\qreg b\qgets U(\qreg A)$} \label{line:circuitToPhase:application}
	\state{{\bf detach} $\qreg{b}$}
\untab
\algend
\end{algorithm}

\paragraph{Non-Coherent Evaluation}
Although coherent function evaluation is convenient for the executing subroutine, it is not convenient for the executed subroutine.  
In order to simplify exposition of algorithms, we describe an alternative specification of function evaluation.  Moreover, this specification allows one to increase the precision of a subroutine exponentially.  We already start with the approximate version of this specification.  Note that it is different from that of a general subroutine.

%
%
%

Again, let $f\colon \Psi\to [\vyhod]$ be a function, where $\Psi\subset \hilbert A$ is an orthonormal subset, and let $\reg O$ be the output $\vyhod$-qudit.  The subroutine attaches the {\em working} register $\reg W$.  We say the quantum subroutine $\circuit A$ {\em $\eps$-evaluates} function $f$, or evaluates $f$ {\em with error $\eps$}, if, for all $\psi\in\Psi$,
\begin{equation}
\label{eqn:subroutineEvaluates}
\circuit A\s[\ket A |\psi> \ket W|0>\ket O|0>\strut] = \ket A|\psi> \ket {{WO}} |\omega_\psi>
\qquad\mbox{for some $\omega_\psi$ such that}\qquad
\|\proj O{f(\psi)} \omega_\psi\|^2 \ge 1-\eps,
\end{equation}
where the projector $\proj O{f(\psi)}$ is defined in \rf(sec:evolution).
In particular, if the register $\reg O$ is measured after the application of $\circuit A$, the probability of observing $f(\psi)$ is at least $1-\eps$.
We say that $\circuit A$ evaluates $f$ {\em exactly}, if $\eps = 0$.  
Note that the separability condition in~\rf(eqn:subroutineEvaluates) is only required for the elements of $\Psi$.  In general, the state $\circuit A \s[\ket A |\psi> \ket W|0>\ket O|0>\strut]$ will be entangled over all the three registers.

For functions with Boolean output, we can also define one-sided error.  We say that $\circuit A$ evaluates $f$ with one-sided error $\eps$ if, in addition to~\refeqn{subroutineEvaluates}, it holds that $\proj O1 \omega_\psi = 0$ for all $\psi\in f^{-1}(0)$.

We use the keyword {\bf qufunction} to describe non-coherent function evaluation in the pseudo-code.  The output register $\reg O$ will be listed as the last register in the arguments of the function.  We use the keyword {\bf with} to specify the size of the error and whether it is one-sided.  If nothing is specified, we assume that $\eps=0$.

\paragraph{Classical Functions}
Similar definitions can be made for classical functions.  We say that a classical function evaluates a function $f$ with error $\eps$, if the probability the function outputs $f(\psi)$, given $\ket A|\psi>$, is at least $1-\eps$.  Unlike quantum functions, the classical functions need not be reversible.

\mycommand{vars}{N}
\paragraph{Specification of a Program}
Let us define the specification of the whole program.  Computational tasks in the thesis will be stated as function evaluations.  Let $f$ be a function with domain $\cD\subseteq [q]^\vars$ and range $[\vyhod]$ for some integers $\vars,q$ and $\vyhod$.  In this case, we write $f\colon [q]^\vars\supseteq \cD\to[\vyhod]$.

Assume that the input string is $z \in\cD$.  For the $j$th element of $z$, we use notation $z_j$, or $z\elem[j]$. We call $z_j$ {\em input variable}, or {\em input element}.

Given access to the quantum procedure InputOracle, the program has to evaluate $f(z)$ with error at most $1/3$.
The InputOracle procedure has an $\vars$-qubit $\reg A$ as its argument and coherently evaluates the function $\ket A|j>\mapsto z_j$ for all $j\in [\vars]$.
For example, it could be a read-only QRACM array of length $\vars$ containing $q$-qudits.

We call the function {\em total} if $\cD = [q]^n$.  Otherwise, the function is {\em partial}.  
In most cases, the output of the function $f$ is Boolean.  In this case, we assume the range of $f$ is $\{0,1\}$.  We stick to the notational convention that $x$ denotes a {\em positive input}, and $y$ denotes a {\em negative input}, i.e., $x$ and $y$ are such that $f(x)=1$ and $f(y)=0$.
If $q=2$, we also assume that each $z_j$ takes values from $\{0,1\}$.  If $q=\ell=2$, we call the function Boolean.

\paragraph{Making Function Coherent}
Non-coherent evaluation is convenient for describing a subroutine, but it is usually not sufficient for use in other subroutines, because the elements may fail to interfere due to the working register.  However, there is a general way of converting a non-coherent evaluation of a function into a coherent one.

\begin{lem}
\label{lem:coherent}
Assume a quantum procedure $\circuit A$ $\eps$-evaluates a function $f\colon \Psi\to[\vyhod]$ for some $\eps\ge0$.  Then, there exists a quantum procedure $\circuit B$ that evaluates $f$ coherently with precision $\sqrt{2\eps}$.
The procedure $\circuit B$ uses 2 executions of $\circuit A$ and $O(\log \vyhod)$ 2-qubit gates.
\end{lem}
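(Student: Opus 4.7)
The plan is the familiar compute--copy--uncompute trick. I would define $\circuit B$ to first attach a working register $\reg W$ and an internal output $\vyhod$-qudit $\reg R$ (both in state $\ket|0>$), then run $\circuit A$ on $(\reg A, \reg W, \reg R)$, then coherently add the content of $\reg R$ into the external output register $\reg O$ via a qubit-wise modular addition (costing $O(\log \vyhod)$ two-qubit gates), then run $\circuit A^{-1}$ (of the same size as $\circuit A$ by \rf(lem:reversibility)), and finally detach $\reg W$ and $\reg R$. This uses two invocations of $\circuit A$ plus $O(\log \vyhod)$ additional gates, matching the claimed resources.

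For the analysis, fix $\psi\in\Psi$ and the initial value $i$ of $\reg O$. I would decompose the state $\omega_\psi$ produced on $\reg W \reg R$ by $\circuit A$ as $\omega_\psi = \alpha\phi'_\psi + \phi''_\psi$, where $\phi'_\psi$ is a unit vector in the image of $\proj R{f(\psi)}$, the scalar $\alpha\ge 0$ is real (absorbing a phase into $\phi'_\psi$), and $\phi''_\psi$ lies in the orthogonal complement, hence is supported on $\reg R$-values different from $f(\psi)$. From \refeqn{subroutineEvaluates} I would read off $\alpha^2\ge 1-\eps$ and $\|\phi''_\psi\|^2 = 1-\alpha^2\le\eps$. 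The addition step then sends the state to an orthogonal sum of two pieces according to the value of $\reg O$: on the block where $\reg O$ holds $i+f(\psi)$ sits $\alpha\ket A|\psi>\ket {{WR}} |\phi'_\psi>\ket O|i+f(\psi)>$, while the complementary blocks, coming from $\phi''_\psi$, carry $\reg O$-values distinct from $i+f(\psi)$.

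The critical step is the final uncomputation. On the target block $\circuit A^{-1}$ acts by linearity as $\circuit A^{-1}\bigl(\alpha\ket A|\psi>\phi'_\psi\bigr) = \circuit A^{-1}\bigl(\ket A|\psi>\omega_\psi\bigr) - \circuit A^{-1}\bigl(\ket A|\psi>\phi''_\psi\bigr) = \ket A|\psi>\ket {{WR}} |0> - \circuit A^{-1}\bigl(\ket A|\psi>\phi''_\psi\bigr)$, so this block sits within Euclidean distance $\|\phi''_\psi\|\le\sqrt{\eps}$ of the ideal $\ket A|\psi>\ket {{WR}} |0>\ket O|i+f(\psi)>$. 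The complementary blocks, on which the ideal vanishes, together have norm $\|\phi''_\psi\|\le\sqrt{\eps}$. Summing the two squared contributions yields total squared distance $\le 2\eps$, i.e., precision $\sqrt{2\eps}$.

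The step I expect to be the main obstacle is exactly this last error accounting. One cannot simply invoke \rf(lem:circuitRobust), since there is no ``exact'' unitary that $\circuit A$ approximates coherently with error $\eps$ in the sense required there, so a direct bookkeeping is unavoidable. What rescues the direct bookkeeping is that copying $\reg R$ into the \emph{fresh} external register $\reg O$ segregates the erroneous part $\phi''_\psi$ into $\reg O$-blocks orthogonal to the target block, so the analysis of $\circuit A^{-1}$ collapses to the single identity $\circuit A^{-1}\bigl(\ket A|\psi>\omega_\psi\bigr)=\ket A|\psi>\ket {{WR}} |0>$ and never needs any information about $\circuit A^{-1}$ on arbitrary inputs.
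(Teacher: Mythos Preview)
Your approach is correct and is the same compute--copy--uncompute construction as the paper's; your $\reg O$-blockwise accounting after $\circuit A^{-1}$ is equivalent to the paper's device of pushing $\circuit A^{-1}$ across via unitarity and bounding $\|C\circuit A\,\chi - \circuit A U\,\chi\|$ instead. The one omission is that you verify the $\sqrt{2\eps}$ bound only on individual basis states $\ket A|\psi>\ket O|i>$, whereas precision requires the bound on all unit vectors in the input subspace; the paper closes this (over $\psi$) by observing that the error vectors carry the tensor factor $\ket A|\psi>$ and are therefore orthogonal across $\psi\in\Psi$, so the squared errors simply add and the bound survives.
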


\begin{algorithm}
\caption{Converting a non-coherent evaluation into a coherent one}
\label{alg:coherent}
\algbegin
\state{{\bf quprocedure} $\qreg O\qgets$ MakeCoherent({\bf qufunction} $\circuit A$, {\bf register} $\qreg A$) :}
\tab
	\state{{\bf attach} the working register $\qreg{W}$ and $\vyhod$-{\bf qudit} $\qreg O'$}
	\state{$\circuit A(\qreg A, \qreg W, \qreg O')$}
	\state{$\qreg O\qgets \qreg O'$ \label{line:coherent:copy} }
	\state{$\circuit A^{-1}(\qreg A, \qreg W, \qreg {O}')$}
	\state{{\bf detach} $\qreg{W}, \qreg{O}'$}
\untab
\algend
\end{algorithm}

\pfstart
The description of $\circuit B$ is given in \rf(alg:coherent).  
Recall that $\Psi$ is a set of orthonormal vectors.
Let $U$ be the specification of evaluating $f$, i.e., a unitary performing the transformation $\ket A|\psi>\ket O|a>\mapsto \ket A|\psi>\ket O|a+f(\psi)>$ for all $\psi\in\Psi$ and $a\in[\ell]$.
Denote by $C$ be the copying circuit in \rf(line:coherent:copy).  Clearly, $C$ requires $O(\log\vyhod)$ 2-qubit gates.

For any $\phi\in \spn\Psi$, we have 
\begin{equation}
\label{eqn:coheration}
\| {\circuit A}^{-1} C\circuit A \s[\strut \ket A|\phi> \ket{{WO'O}}|0>] -
U\s[\strut \ket A|\phi> \ket{{WO'O}}|0> ] \| 
= \| C\circuit A \s[\strut \ket A|\phi> \ket{{WO'O}}|0> ]
 - {\circuit A} U\s[\strut \ket A|\phi> \ket{{WO'O}}|0> ]\|
\end{equation}
because $\circuit A$ is unitary.  Let, at first, $\psi\in\Psi$.  Then,
\[
C\circuit{A}\s[\strut \ket A|\psi> \ket{{WO'O}}|0>] = 
\ket A|\psi>\otimes \sum_{a\in[\vyhod]} \beta_a \ket W|\omega_a>\ket O' |a>\ket O|a>
\]
for some complex numbers $\beta_a$ satisfying $\sum_{a} |\beta_a|^2 = 1$ and some unit vectors $\omega_a$.  Both $\beta_a$ and $\omega_a$ depend on $\psi$.  Moreover, $|\beta_{f(\psi)}|^2\ge 1-\eps$, and,
\[
{\circuit A} U\s[\strut \ket A|\psi> \ket{{WO'O}}|0> ] = 
\ket A|\psi>\otimes \sum_{a\in[\vyhod]} \beta_a \ket W|\omega_a>\ket O' |a>\ket O|f(\psi)>.
\]
Thus,
\[
C\circuit A \s[\strut \ket A|\psi> \ket{{WO'O}}|0> ]
 - {\circuit A} U\s[\strut \ket A|\psi> \ket{{WO'O}}|0> ]
= \ket A|\psi>\otimes \sum_{a\in[\vyhod]\setminus\{f(\psi)\}} \beta_a \ket W|\omega_a>\ket O' |a>
\sA[ \ket O|a> - \ket O|f(\psi)>].
\]
Denote the sum in the last equation by $\upsilon_\psi$.  Since all $\ket O'|a>$ are orthogonal, we get 
\[
\|\upsilon_\psi \|^2 = 2\sum_{a\in[\vyhod]\setminus\{f(\psi)\}} |\beta_a|^2\le 2\eps .
\]
Now let $\phi = \sum_{\psi\in\Psi} \alpha_\psi\psi$ be an arbitrary unit vector in $\spn \Psi$.  Then, $\sum_\psi |\alpha_\psi|^2 = 1$, and,
\[
C\circuit A \s[\strut \ket A|\phi> \ket {{WO'O}}|0> ]
 - {\circuit A} U\s[\strut \ket A|\phi> \ket{{WO'O}}|0> ] 
 = \sum_{\psi\in\Psi} \alpha_\psi \ket A |\psi> \ket{{WO'O}}|\upsilon_{\psi}>. 
\]
The norm of the last vector squared is at most $2\eps$, hence, due to~\rf(eqn:coheration), $\circuit B$ evaluates $f$ coherently with precision $\sqrt{2\eps}$.
\pfend

\paragraph{Precision Amplification}
An important feature of non-coherent function evaluation is that the precision of the subroutine can be amplified.  This is not always true for coherent function evaluation.

\begin{lem}
\label{lem:circuitAmplify}
Assume we have a quantum procedure $\circuit A$ performing one of the following tasks non-coherently:
\itemstart
\item it $c$-evaluates a function $f\colon \Psi\to[\vyhod]$ for some constant $c$ strictly less than $1/2$;
\item it evaluates a function $f\colon \Psi\to\{0,1\}$ with one-sided error $c$, where $c$ is a constant strictly less than 1.
\itemend
Then, for each $\eps>0$, there exists a quantum procedure $\circuit B$ that $\eps$-evaluates $f$.
The procedure $\circuit B$ executes $\circuit A$ $O(\log(1/\eps))$ times.  If $\circuit A$ has a one-sided error, then $\circuit B$ has a one-sided error as well.
\end{lem}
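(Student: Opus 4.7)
The plan is to construct $\circuit B$ by executing $k = \Theta(\log(1/\eps))$ independent copies of $\circuit A$ on the input register $\reg A$, each with its own freshly attached working register $\reg W_j$ and output register $\reg O_j$ for $j=1,\dots,k$, and then computing into a final output register $\reg O$ the majority of $\reg O_1,\dots,\reg O_k$ in the two-sided case, or their OR in the one-sided case.  By \emph{majority} I mean the value appearing strictly more than $k/2$ times (arbitrary default if none does).  The registers $\reg W_1,\dots,\reg W_k,\reg O_1,\dots,\reg O_k$ are never detached; collectively they play the role of the working register of $\circuit B$ demanded by~\refeqn{subroutineEvaluates}.

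The key enabler is the separability built into~\refeqn{subroutineEvaluates}: for every $\psi\in\Psi$, after one execution of $\circuit A$ the register $\reg A$ is disentangled from everything else.  Applied to a superposition $\sum_\psi \alpha_\psi\,|\psi\rangle|0\rangle|0\rangle$ on $\reg A\reg W_j\reg O_j$, each copy of $\circuit A$ therefore produces a state of the shape $\sum_\psi \alpha_\psi\,|\psi\rangle \otimes \zeta_\psi^{(j)}$, with $\zeta_\psi^{(j)}$ living on $\reg W_j\reg O_j$.  After all $k$ copies, the $\reg A$-factor is still $\sum_\psi \alpha_\psi\,|\psi\rangle$, and conditioned on each branch $\psi$ the joint state of $\reg O_1,\dots,\reg O_k$ is a product of $k$ independent samples from the output distribution of $\circuit A$ on $\psi$.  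The classical majority or OR from $\reg O_1,\dots,\reg O_k$ into $\reg O$ is then implemented coherently by \rf(lem:quantumOfDet) without ever touching $\reg A$, so the separability demanded by~\refeqn{subroutineEvaluates} is preserved.

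It then suffices, branch by branch, to bound the probability that this classical function outputs $f(\psi)$, which is precisely the probability that majority (respectively OR) of $k$ i.i.d.\ draws from the output distribution of $\circuit A$ on $\psi$ equals $f(\psi)$.  In the two-sided case each draw matches $f(\psi)$ with probability $\ge 1-c > 1/2$, so a standard Chernoff bound bounds the failure probability of majority by $2^{-\Omega(k)}$; choosing $k = \Theta(\log(1/\eps))$ with a constant depending on $c$ makes this at most $\eps$.  In the one-sided case, when $f(\psi)=0$ every draw equals $0$ and the OR is correctly $0$, whereas when $f(\psi)=1$ the OR fails only if all $k$ draws equal $0$, with probability at most $c^k$; again $k = \Theta(\log(1/\eps))$ suffices, and the one-sided structure is inherited from $\circuit A$.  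The total number of executions of $\circuit A$ is thus $O(\log(1/\eps))$ as claimed.

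The main subtlety I foresee is not the probabilistic analysis but verifying that running $\circuit A$ on a superposition really does deliver genuinely independent samples inside each $\psi$-branch.  This is exactly what the tensor-product form of~\refeqn{subroutineEvaluates} buys: different $\psi$-branches sit on orthogonal states of $\reg A$, so there is no cross-branch interference when the classical majority or OR is read off from the $\reg O_j$'s.  It is worth remarking that the same reduction would fail for a purely coherent evaluation, because the intermediate junk generally depends on $\psi$ and cannot be uncomputed between copies without already knowing $f(\psi)$; this is precisely why the lemma is stated for non-coherent evaluation only.
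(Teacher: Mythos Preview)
Your proposal is correct and follows essentially the same approach as the paper: run $k=\Theta(\log(1/\eps))$ copies of $\circuit A$ into fresh output registers, then take a coherent majority (respectively OR) into the final output, and bound the per-$\psi$ failure probability by Chernoff (respectively $c^k$). Your explicit emphasis on the separability in~\refeqn{subroutineEvaluates} is exactly the point the paper uses when it writes the intermediate state as $\ket A|\psi>\otimes\bigotimes_j\ket{}|\omega_\psi>$; note that since the definition of $\eps$-evaluation is stated only for basis elements $\psi\in\Psi$, your discussion of superpositions is more than strictly needed, but it is not wrong.
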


\begin{algorithm}
\caption{Precision Amplification for Non-Coherent Function Evaluation}
\label{alg:circuitAmplify}
\algbegin
\state{{\bf qufunction} PrecisionAmplification(qufunction $\circuit A$, registers $\qreg A$, $\qreg O$) :}
\tab
	\state{{\bf attach} CRAQM array $\qreg O'$ of $\vyhod$-qudits of length $k$}
	\state{{\bf for} $i=1,\dots,k$ {\bf :}}
	\tab
		\state{$\circuit A(\qreg A, \qreg O'\elem[i])$ }
	\untab
	\state{$\qreg O\qgets$ Majority($\qreg O'$)} \label{line:circuitAmplify:majority}
\untab
\algend
\end{algorithm}

\pfstart
Let us consider the two-sided error case first.  The description of $\circuit B$ is given in \rf(alg:circuitAmplify).  Here, $k$ is an integer to be specified later.  In \rf(line:circuitAmplify:majority), the majority of an input $z \in [\vyhod]^k$ is defined as the entry in $[\vyhod]$ that appears among $z_1,\dots,z_k$ most frequently.  If there are several such entries, we take the smallest one.

Let $\psi\in\Psi$.  Before the execution of the majority gate in \rf(line:circuitAmplify:majority), the state of the subroutine is
\[
\ket A|\psi> \ket W\elem[1]\reg O\elem[1] |\omega_\psi>\cdots \ket W\elem[k]\reg O\elem[k] |\omega_\psi>\ket O|0> .
\]
Let $M_i$, for $i\in [k]$, be independent random variables, each of them equal to 1 with probability 
$\proj O{f(\psi)} \omega_\psi\|^2\ge 1-c$, and to 0 otherwise.  (See \rf(sec:evolution) for the definition of $\Pi$.)  After the majority gate, the probability of measuring $\reg O$ in state $f(\psi)$ equals the probability of more than $k/2$ of $M_i$s being equal to 1.  By the Chernoff bound, this probability is at least 
$1 - \ee^{-\Omega(k)}$.  Hence, it is sufficient to take $k=O(\log(1/\eps))$ to assure the error is at most $\eps$.

For the one-sided error, apply the same procedure with the majority gate replaced by the OR gate.  The analysis is similar.
\pfend

We will usually describe a quantum procedure that evaluates a function non-coherently with some constant error.  While executing, we will use the corresponding coherent version with a better precision.  In such a case, we assume that Algorithms~\ref{alg:circuitAmplify} and~\ref{alg:coherent} are used to improve the precision and make the procedure coherent.

\section{Query Complexity}
\label{sec:query}
In this section, we introduce the notion of query complexity that is used throughout the thesis.  The motivation for introducing this notion is as follows.  Assume that we have a function $f$ and we want to estimate the amount of resources required to calculate it.  To prove an upper bound, it is sufficient to come up with an algorithm that calculates $f$.  But, in general, it is very hard to prove strong lower bounds on the time complexity of $f$, since the algorithm can use very sophisticated internal data representations that are difficult to reason about.  Instead of this, we can consider only the accesses of the algorithm to the input string.  Since we understand how the input data is represented, it is much easier to prove lower bounds.  A lower bound on the number of accesses is simultaneously a lower bound on the time complexity.  In general, such a bound can be very loose, but for many functions, it is good enough.

\subsection{Definitions}
We start by defining the notion of query complexity for all three models of computations.

\begin{wrapfigure}{R}{0pt}
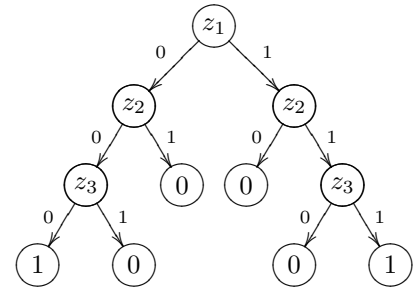

\vspace{-.3in}
\xygraph{!~*{\cir<8pt>{}} !~-{@{->}_{0}} !~:{@{->}^{1}} !{0;<2.5pc,0pc>:}
*{z_1}[]{}
(
	-[dl]{}[]*{z_2}{}
	(
		-[l(.6)d]{}[]*{z_3}{}
		(
			-[l(.6)d]{}[]*{1},
			:[r(.6)d]{}[]*{0}
		),
		:[r(.6)d]{}[]*{0}
	),
	:[dr]{}[]*{z_2}{}
	(
		:[r(.6)d]{}[]*{z_3}{}
		(
			:[r(.6)d]{}[]*{1},
			-[l(.6)d]{}[]*{0}
		),
		-[l(.6)d]{}[]*{0}
	)
)}
\caption{A deterministic decision tree calculating the total Boolean function on 3 variables that evaluates to 1 iff $z_1=z_2=z_3$.}
\label{fig:decisionTree}
\end{wrapfigure}
\paragraph{Deterministic}
Any deterministic algorithm can be described as follows.  It starts its computation.  At some point, it requires the value of an input variable.  Since the algorithm is deterministic, the index of the variable will always be the same.  The algorithm {\em queries} the value of the variable, and, after being told the value, it proceeds with its computation, until it requires the value of another variable.  Again, the index of the requested variable depends solely on the value of the first variable.  And so on.  At the end of the computation, the algorithm returns the value of the function.

By ``contracting'' (in graph-theoretical terms) all intermediate calculations, we get the computational model of a deterministic {\em decision tree}.  It is a rooted $q$-ary tree $T$.  Each internal node is labelled by an index of an input variable, $j\in[\vars]$.  Each leaf of the tree is labelled by an output value in $[\vyhod]$.  

The value $T(z)$ of the decision tree $T$ on input $z \in[q]^\vars$ is defined by induction on the depth of $T$.  If the depth is zero, then $T$ consists of one leaf.  Define $T(z)$ as the value of this leaf.  Now assume that the depth of $T$ is non-zero.  Then, the root is labelled by some $j\in[\vars]$.  Define $T(z)$ as the value of the $z_j$th subtree of the root.  Since the depth of the subtree is less than the depth of $T$, this is a valid inductive definition.  

We say that $T$ evaluates a function $f\colon [q]^\vars\supseteq \cD\to[\vyhod]$ iff $f(z)=T(z)$ for all $z\in\cD$.  The {\em complexity} of the decision tree is defined as its depth, i.e., the number of variables queried for the worst input.  The {\em deterministic query complexity}, $D(f)$, of $f$ is defined as the minimal complexity of a decision tree evaluating $f$.

\rf(fig:decisionTree) shows an example of a decision tree that evaluates the total function $f\colon \{0,1\}^3\to\{0,1\}$ defined by $f(z_1,z_2,z_3)=1$ if and only if $z_1=z_2=z_3$.  The complexity of the tree is 3, and it is tight.

\paragraph{Randomised}
In the randomised case, the index of the variable being queried at some point in the algorithm is given by a probability distribution that depends only on the values of the variables queried previously.  All these probability distributions may be combined into one huge probability distribution from which we sample at the very beginning.
Thus, we define a {\em randomised decision tree} as a probability distribution $\mu$ over deterministic decision trees as defined above.  We say $\mu$ evaluates $f$ if, for each $z\in\cD$, $T(z) = f(z)$ with probability at least $2/3$ when $T$ is sampled from $\mu$.

The complexity of $\mu$ is defined as the largest complexity of a deterministic decision tree $T$ having non-zero probability in $\mu$.  The {\em randomised query complexity} $R(f)$ of a function $f$ is defined as the minimal complexity of a randomised decision tree evaluating $f$.  The constant $2/3$ can be replaced by any constant strictly between $1/2$ and $1$.  This changes the randomised query complexity by at most a constant factor.

\begin{exm}
\label{exm:balancedClassical}
Let $n$ be an integer, $\vars=2n$, and $f$ be a partial function from $\{0,1\}^\vars$ to $\{0,1\}$ defined as follows.  The function evaluates to 0 if all input variables have the same value.  The function evaluates to 1 if exactly $n$ of the input variables are equal to 0, and the remaining $n$ input variables equal 1.  For all other input strings, the function is not defined.

The randomised query complexity of this function is 3.  Indeed, define the randomised decision tree $\mu$ as the probability distribution over decision trees as in \rf(fig:decisionTree) with $\{z_1,z_2,z_3\}$ replaced by $\{z_a,z_b,z_c\}$ where $\{a,b,c\}$ is a 3-subset of $[\vars]$ chosen uniformly at random, i.e., with probability ${\vars\choose 3}^{-1}$.  If $f(z)=0$, $\mu$ does not err.  If $x$ is a positive input and $n$ is large, we may approximately assume that each $x_a,x_b,x_c$, chosen by $\mu$, takes values in $\{0,1\}$ independently and uniformly at random.  
The algorithm $\mu$ errs when $x_a=x_b=x_c$, and this happens with probability approximately $2/8 = 1/4<1/3$.

On the other hand, the deterministic query complexity of $f$ is $n+1 = \vars/2+1$.  
For the upper bound, consider the decision tree that queries the first $n+1$ variables, and returns 1 iff all of them are equal.
For the lower bound, we reason as follows.
Assume $T$ is a deterministic decision tree of depth at most $n$ that evaluates $f$.  Consider the path $P$ that starts at the root of $T$ and follows the arcs labelled by 0.  Let $j_1,j_2,\dots,j_t$ be the indices of the variables in the vertices along $P$.  Due to our assumption on the depth of $T$, $t\le n$.  Let $A$ be any set of size $n$ such that $\{j_1,\dots,j_t\}\subseteq A\subset [\vars]$.  Let $y$ be the all-0 input, and $x$ be the positive input defined by $x_j = 0$ if $j\in A$, and $x_j=1$ otherwise.  We have $f(x)=1$, and $f(y)=0$, but $T$ outputs the same value (contained in the end-vertex of $P$) on both of them.  It is a contradiction, hence, $D(f)=n+1$.
\end{exm}

\paragraph{Quantum}
The {\em query complexity of a quantum algorithm} $\circuit C$ evaluating a function $f\colon [q]^\vars\supseteq \cD\to[\vyhod]$ {\em on an input $z\in\cD$} is defined as the largest possible number of times the algorithm invokes the InputOracle subroutine when executed on the input $z$.  The {\em query complexity of a quantum algorithm} is the maximum of the query complexity over all inputs in $\cD$.  The {\em quantum query complexity} $Q(f)$ of the function $f$ is the smallest query complexity of a quantum algorithm evaluating $f$.

The definition above is nice for upper bounds, but it is not well-suited for lower bounds.  For the latter,
we transform the program into a more restricted form.
At first, we use \rf(lem:quantumOfDet) to transform all classical computations performed by the quantum algorithm $\circuit C$ into the quantum form.  It is possible to check that the transformation preserves the number of times the input oracle is invoked.
This gives the following definition.
A {\em quantum query algorithm} uses 3 registers: index $\reg j$, value $\reg v$, and workspace $\reg W$.
The index register has basis elements $0,1,\dots,\vars$.
The value register is a $q$-qudit.
The workspace register can be arbitrary, but it contains the output $\vyhod$-qudit $\reg O$ as its component.
The initial state is $\ket j|0> \ket v|0> \ket W|0>$.  The computation is modelled as a sequence of unitary transformations the device performs on its own, interchanged with some number of queries to the input oracle:
\begin{equation}
\label{eqn:sequence}
U_0\to O_z\to U_1\to O_z \to \cdots \to U_{T-1} \to O_z\to U_T.
\end{equation}
Here $U_i$s are arbitrary transformations, as described in \refsec{evolution}.  The index $i$ indicates that they may be different at different positions, but they are independent of the input.  The transformation $O_z$, on contrary, is the same in all places, but it depends on the input $z\in\cD$.  
Assume for notational convenience that each input strings $z$ is extended with an additional input variable $z_0 = 0$.  Then, $O_z$ can be decomposed as $O_z = \bigoplus_{j=0}^{\vars} O_{z_j}$, where $O_a$, for $a\in[q]$, is a unitary in $\hilbert v\otimes \hilbert W$.  
We assume, as in \rf(sec:RAM), that $O_a$ is given by $\ket v|i>\mapsto \ket v|i+a>$.  
The part of the state with $j=0$, thus, does not change during the input oracle execution.
We say that the algorithm in~\rf(eqn:sequence) makes $T$ quantum queries.

After all the transformations in~\refeqn{sequence} are performed, the output register $\reg O$ of the program is measured.  We say that the quantum query algorithm {\em evaluates} $f$ (with bounded error) if, for any $z\in\cD$, the register $\reg O$ contains $f(z)$ with probability at least $2/3$.  The {\em quantum query complexity} $Q(f)$ of the function $f$ is the smallest possible number of queries made by a quantum algorithm that evaluates $f$.

\begin{exm}[Deutsch-Jozsa~\cite{deutsch:jozsa}]
\label{exm:balancedQuantum}
Consider the same function as in \rf(exm:balancedClassical).  As we have seen, an exact (deterministic) classical algorithm requires $\vars/2+1$ queries.  Now we show that it can be evaluated by a quantum \rf(alg:deutsch) in one query without error.

\begin{algorithm}
\caption{Quantum Algorithm for Deutsch-Jozsa Problem}
\label{alg:deutsch}
\algbegin
\state {{\bf function} DeutschJozsa({\bf quprocedure} InputOracle) {\bf:}}
\tab
  \state {{\bf attach} $\vars$-qudit $\qreg j$ }
  \state {UniformSuperposition($\qreg j$) \label{line:deutsch:uniform} }
  \state {$\phase \qgets$ InputOracle$(\qreg j)$ \label{line:deutsch:oracle}}
  \state {UniformSuperposition$^{-1}(\qreg j)$ \label{line:deutsch:reverse}}
  \state {{\bf measure} $\qreg j$ \label{line:deutsch:measure}}
  \state {{\bf if} $\reg j=0$ {\bf : return} 0 }
  \state {{\bf else : return} 1}
\untab
\algend
\end{algorithm}

The algorithm uses the UniformSuperposition procedure from \rf(exm:uniform).  
Let $\ket j|\psi>$ be the state generated in \rf(line:deutsch:uniform).  Consider the transformation in \rf(line:deutsch:oracle).  If all $z_j$ equal 0, it does not change the state.  If all $z_j$ equal 1, the state changes to $-\ket j|\psi>$.  Otherwise, exactly half of the amplitudes of $\ket j|\psi>$ change sign, hence, the state becomes orthogonal to $\psi$.  Thus, after \rf(line:deutsch:reverse), the state of the algorithm is $\pm\ket j|0>$ if the input is negative, and it is orthogonal to $\ket j|0>$ otherwise.  In the latter case, the probability of obtaining outcome 0 during the measurement in \rf(line:deutsch:measure) is zero.  Hence, the algorithm never errs.  And, it uses only one quantum query.
\end{exm}

\subsection{Related Notions}
\label{sec:queryRelated}
In this section, we briefly consider relations between the various query complexity notions defined in the previous section.  Firstly, we have
\[
1\le Q(f) \le R(f) \le D(f) \le \vars
\]
for any non-constant function $f\colon [q]^\vars\supseteq \cD\to[\vyhod]$, because a deterministic decision tree is a special case of a randomised one, and any randomised computation can be simulated by a quantum computation.  Finally, a deterministic algorithm can query all $\vars$ variables and thus detect the value of the function.

\paragraph{Certificate Complexity}
An {\em assignment} on $\vars$ variables is a function $\alpha\colon S\to [q]$ with $S\subseteq [\vars]$.  The {\em size} of $\alpha$ is $|S|$.  We say an input $z\in [q]^\vars$ {\em satisfies} assignment $\alpha$ iff $\alpha(j) = z_j$ for all $j\in S$.  For each subset $S\subseteq[\vars]$, there is unique assignment $z_S\colon S\to[q]$ that is satisfied by $z$.  We say inputs $x$ and $y$ {\em agree} on $S$ if $x_S = y_S$.

An assignment $\alpha$ is called a {\em $b$-certificate} for $f$, with $b\in[\vyhod]$, iff $f(z)=b$ for any $z\in \cD$ satisfying $\alpha$.  For a fixed $z\in\cD$, we call a subset $S\subseteq[\vars]$ a certificate for $z$ if $z_S$ is a certificate for $f$.

The {\em certificate complexity} $C_z(f)$ of $f$ on $z\in\cD$ is defined as the minimal size of a certificate for $f$ that $z$ satisfies.  The certificate complexity $C(f)$ of the function $f$ is defined as the maximum of $C_z(f)$ over all $z\in\cD$.  For $b\in[\vyhod]$, we define the $b$-certificate complexity $C^{(b)}(f)$ of $f$ as $\max_{z\in f^{-1}(b)} C_z(f)$.  

Almost all functions considered in the thesis are with Boolean output and with bounded 1-certificate complexity.  For the convenience, we list them here.

\begin{defn}[$k$-threshold]
\label{defn:threshold}
The input to the {\em $k$-threshold function} is a binary string $z \in\{0,1\}^\vars$.  The value of the function is 1 iff the Hamming weight of $z$ is at least $k$, i.e., there exist
$1\le a_1<a_2<\dots<a_k\le \vars$ such that $z\elem[a_1]=\cdots=z\elem[a_k]=1$.  The 1-certificate complexity of this function is $k$.  The 1-threshold function is the {\em OR function}.
\end{defn}

\begin{defn}[$k$-distinctness]
\label{defn:kdist}
Let $k\ge 2$ be a fixed integer.  The {\em $k$-distinctness function}, given a string $z\in[q]^\vars$ as its argument, evaluates to 1 iff there is a $k$-tuple of equal elements in the input, i.e., there exist $1\le a_1<a_2<\dots<a_k\le \vars$ such that $z\elem[a_1] = z\elem[a_2] = \cdots = z\elem[a_k]$.  The 1-certificate complexity of this function is $k$.  For the 2-distinctness function, we use the name {\em element distinctness}.
\end{defn}

\begin{defn}[$k$-sum]
\label{defn:ksum}
The {\em $k$-sum function}, given a string $z\in [q]^\vars$, evaluates to 1 iff there exist indices $1\le a_1<a_2<\dots<a_k\le \vars$ such that $z\elem[a_1] + \cdots + z\elem[a_k]$ is divisible by $q$.  The $k$-sum problem has 1-certificate complexity $k$.
\end{defn}

\begin{defn}[Graph collision]
\label{defn:graphCollision}
Let $G$ be a fixed graph with vertices labelled by integers in $[\vars]$.  The {\em graph collision} function, given a string $z\in\{0,1\}^\vars$, evaluates to 1 iff there exist an edge $ab$ of the graph such that $z_a = z_b = 1$.  This function has 1-certificate complexity 2.
\end{defn}

\mycommand{str}{z}
\begin{defn}[Collision]
\label{defn:collision}
Let $n$ be an integer, and $\vars = 2n$.  
Given an input string $\str\in[q]^\vars$, the task is to distinguish whether $\str$ is 1-to-1 or 2-to-1.  That is, in the negative case, all the elements of $\str$ are distinct.  In the positive case, there exists a decomposition of the input variables
\begin{equation}
\label{eqn:collisionDecomposition}
[\vars]=\{a_1,b_1\}\sqcup\{a_2,b_2\}\sqcup\cdots\sqcup\{a_n,b_n\}
\end{equation}  into $n$ disjoint pairs such that $\str\elem[a_i]=\str\elem[b_i]$ for all $i\in[n]$, but $\str\elem[a_i]\ne \str\elem[a_j]$ for all $i\ne j$.
\end{defn}

\begin{defn}[Set Equality and Hidden Shift]
\label{defn:setEquality}
Both problems are defined as the collision problem with additional promises in the positive case.  In the set equality problem, we are promised that $a_i$ and $b_i$ from~\rf(eqn:collisionDecomposition) satisfy $1\le a_i\le n$ and $n+1\le b_i\le n$.  
In the hidden shift problem, besides that, we are promised that there exists $d\in[n]$ such that $a_i\equiv b_i+d\pmod{n}$ for all $i\in[n]$.
\end{defn}

The 1-certificate complexity of collision, set equality and hidden shift is 2.  However, the problems are much easier to compute than the element distinctness problem because there are much more certificates.

\begin{defn}[Triangle]
\label{defn:triangle}
In the triangle problem on $n$ vertices, the input is a binary string $\str$ of length $\vars={n\choose 2}$.  We index the input variables by $\str_{ij}$ where $1\le i<j\le n$ are integers.  The task is to detect whether there exist indices $1\le a<b<c\le n$ such that $\str_{ab} = \str_{ac} = \str_{bc} = 1$.
Graph-theoretically, given a graph on $n$ vertices encoded by its adjacency matrix, the task is to detect whether it contains a triangle, i.e., a complete subgraph on 3 vertices.  This function has 1-certificate complexity 3.
\end{defn}

\paragraph{Block Sensitivity}
For proving lower bounds on deterministic and randomised query complexities, the following result is quite useful.  Recall that the {\em Hamming weight} of a binary string is the number of occurrences of symbol 1 in it.

\begin{thm}
\label{thm:randomORlower}
Let $\cD\subseteq \{0,1\}^\vars$ consist of all Boolean strings of Hamming weight at most 1.  The function OR on domain $\cD$ has deterministic query complexity $\vars$ and randomised query complexity at least $\vars/3$.
\end{thm}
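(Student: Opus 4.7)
The plan is to prove the two bounds separately by standard adversary-style reasoning.

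For the deterministic bound, I would argue by contradiction. Suppose $T$ is a decision tree of depth strictly less than $\vars$ that evaluates OR on $\cD$. Run $T$ on the all-zero input $0^\vars\in\cD$; since every queried variable returns $0$, the traversed path is some root-to-leaf path with labels in some set $S\subseteq[\vars]$ of size at most $\vars-1$, producing some output $b$. Pick any $j\in[\vars]\setminus S$ and let $e_j$ be the weight-$1$ input with $1$ in coordinate $j$. Both $0^\vars$ and $e_j$ lie in $\cD$, but $\mathrm{OR}(0^\vars)=0\neq 1=\mathrm{OR}(e_j)$. Since $e_j$ agrees with $0^\vars$ on every coordinate in $S$, the execution of $T$ on $e_j$ traces the same path and returns $b$; so $T$ errs on at least one of the two inputs, contradicting correctness. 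Hence $D(f)\ge \vars$, and the trivial upper bound $D(f)\le\vars$ matches.

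For the randomized bound, I would invoke Yao's minimax principle: it suffices to exhibit an input distribution $\nu$ on $\cD$ such that every deterministic decision tree of depth $q<\vars/3$ errs with probability strictly greater than $1/3$ under $\nu$. Take $\nu$ to place mass $1/2$ on $0^\vars$ and mass $1/(2\vars)$ on each of the $\vars$ weight-$1$ inputs $e_1,\dots,e_\vars$. For any deterministic tree $T$ of depth $q$, trace its execution on $0^\vars$: it queries some set $S$ with $|S|\le q$ and outputs some value $b$. Just as in the deterministic argument, on each $e_j$ with $j\notin S$ the algorithm follows the same path and outputs $b$.

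If $b=0$, the algorithm errs on every $e_j$ with $j\notin S$, giving error at least $(\vars-q)/(2\vars)$; if $b=1$, it errs on $0^\vars$, giving error at least $1/2$. In either case the error under $\nu$ is at least $(\vars-q)/(2\vars)$. Whenever $q<\vars/3$, we have $(\vars-q)/(2\vars)>1/3$, which exceeds the allowed error, and I conclude $R(f)\ge \vars/3$.

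The reasoning is standard and I do not anticipate a real obstacle; the only subtle step is the choice of $\nu$, where placing weight $1/2$ on $0^\vars$ is what forces the algorithm to return $0$ on the all-zero path and thereby exposes it to the $\vars-q$ uncovered weight-$1$ inputs. Using a uniform distribution on $\cD$ would not work because the all-zero input would have too little mass to punish the $b=1$ branch.
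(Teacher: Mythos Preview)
Your proof is correct. The deterministic argument matches the paper's. For the randomized bound, you take a different route: you invoke Yao's minimax principle with an explicit hard distribution (half the mass on $0^\vars$, the rest spread uniformly over the weight-one strings) and bound the error of any depth-$q$ deterministic tree under it. The paper instead argues directly on the randomized algorithm: on the all-zero input each tree in the support of $\mu$ leaves more than $2\vars/3$ positions unqueried, so by averaging some index $k$ is unqueried with probability strictly greater than $2/3$; conditioned on that event, $\mu$ cannot distinguish $0^\vars$ from $e_k$, and whichever output is the conditional majority yields error greater than $1/3$ on one of the two inputs. Your argument is the more standard textbook approach and is arguably cleaner; the paper's avoids appealing to Yao's principle and directly exhibits a single hard input rather than a hard distribution.
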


\mycommand{OR}{\mathop{\mathrm{OR}}}

\pfstart
In the deterministic case, it is possible to define the input so that the answer to the first $\vars-1$ queries is 0.  And after that, the decision tree still does not know what the value of the function is.

The randomised case is similar.
Assume that there exists a randomised query algorithm $\mu$ that evaluates OR on $\cD$ in less than $\vars/3$ queries.  Given the algorithm, we come up with a positive input $x$ such that $\mu$ fails to distinguish it from the all-0 string $y$.

Recall that $\mu$ is a probability distribution on deterministic decision trees $T$ of depth less than $\vars/3$.  On input $y$, the tree $T$ queries less than $\vars/3$ variables.  Thus, the set $S(T)$ of variables not queried by $T$ satisfies $|S(T)|\ge 2n/3$.  For any $j\in S(T)$, we have $T(x^{(j)})=T(y)$, where $x^{(j)}$ is the input with the $j$th bit is set to 1 and all other bits equal to 0.

By the linearity of expectation, there exists $k\in[\vars]$ such that the $k$th input variable is not queried by $\mu$ on the input $y$ with probability greater than $2/3$.
Conditioned on not querying the $k$th input variable, $\mu$ outputs on $y$ one of the values, 0 or 1, with probability at least $1/2$.  This probability is the same for the input $x^{(k)}$ (as the $k$th input variable is not queried).
If $\mu$ outputs 0 with larger probability, then $\mu$ outputs 0 on the input $x^{(k)}$ with probability greater than $1/3$.
Otherwise, $\mu$ outputs 1 on $y$ with probability greater than $1/3$.  Both cases contradict the assumption that $\mu$ calculates the OR function.
\pfend

\mycommand{bs}{\mathop{\mathrm{bs}}}
\mycommand{sen}{\mathop{\mathrm{s}}}
This motivates the following definition.
The {\em block sensitivity} $\bs(f)$ of $f$ is defined as the maximal possible $b$ over all sequences $y,x^{(1)},\dots, x^{(b)}\in\cD$ satisfying the following two properties.  Firstly, for all $i\in[b]$, $f(x^{(i)})\ne f(y)$.  Let $B_i = \{j\in[\vars]\mid x^{(i)}\elem[j] \ne y\elem[j]\}$.  (Recall that $x\elem[j]$ stands for the $j$th symbol of $x$.)  The second property is that the subsets $B_1,\dots,B_b$ are pairwise disjoint.  The {\em sensitivity} $\sen(f)$ is defined similarly with the additional requirement that $|B_i|=1$ for all $i\in[b]$.

\begin{thm}
\label{thm:R2bs}
For any function $f$, $D(f)\ge\bs(f)$ and $R(f)\ge\bs(f)/3$.
\end{thm}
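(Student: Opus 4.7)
The plan is to extract from any block sensitivity witness $y, x^{(1)}, \dots, x^{(b)}$ (with $b = \bs(f)$) a lower bound on the number of variables any correct algorithm must query on input $y$. The key structural observation is that if a deterministic decision tree $T$, run on input $y$, queries no variable in $B_i$, then $T$ follows exactly the same root-to-leaf path on $y$ and on $x^{(i)}$ (these two inputs agree on every queried variable), and hence outputs the same value on both; but $f(y) \neq f(x^{(i)})$, so $T$ cannot be correct on both. Since $B_1, \dots, B_b$ are pairwise disjoint, hitting all of them on input $y$ requires at least $b$ distinct queries.

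For the deterministic bound $D(f) \geq \bs(f)$, this is immediate. Let $T$ be a decision tree computing $f$ and let $Q$ be the set of indices it queries on input $y$. Correctness forces $Q \cap B_i \neq \emptyset$ for every $i \in [b]$; since the $B_i$ are disjoint, $|Q| \geq b$, and $|Q|$ is bounded by the depth of $T$, giving $D(f) \geq b = \bs(f)$.

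For the randomised bound $R(f) \geq \bs(f)/3$, the plan is to mimic the averaging argument in \rf(thm:randomORlower). Let $\mu$ be a randomised decision tree computing $f$. Sample a deterministic tree $T$ from $\mu$, and let $Q(T)$ denote the set of variables it queries on input $y$. For each $i$, whenever $Q(T) \cap B_i = \emptyset$ one has $T(y) = T(x^{(i)})$ by the path-agreement argument above. Since each of the events $\{T(y) = f(y)\}$ and $\{T(x^{(i)}) = f(x^{(i)})\}$ has probability at least $2/3$, their conjunction has probability at least $1/3$; on that conjunction $T(y) \neq T(x^{(i)})$ (because $f(y) \neq f(x^{(i)})$), so $\Pr_T[Q(T) \cap B_i \neq \emptyset] \geq 1/3$ for every $i$. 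By disjointness, $|Q(T)|$ is at least the number of indices $i$ for which $Q(T)$ meets $B_i$, so summing over $i$ and taking expectations yields $\mathop{\mathrm{E}}[|Q(T)|] \geq b/3$. Since the worst-case depth of $\mu$ (which equals its complexity, and hence is at least $R(f)$'s defining quantity) dominates this expectation, $R(f) \geq b/3 = \bs(f)/3$.

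Both steps are short. The only mildly delicate point is the inequality $\Pr[T(y) \neq T(x^{(i)})] \geq 1/3$, which follows directly from a union bound applied to the complements of the two correctness events; this is exactly the trick already used in the proof of \rf(thm:randomORlower). I do not anticipate any real obstacle.
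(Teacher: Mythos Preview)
Your proof is correct. The deterministic part is exactly as one would expect; the randomised part uses the union bound on the two correctness events to get $\Pr[Q(T)\cap B_i\neq\emptyset]\ge 1/3$, then linearity of expectation over the disjoint blocks. All steps are valid.

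The paper takes a different, reduction-based route: rather than arguing directly about $f$, it shows that any query algorithm for $f$ can be used to compute the promise-OR function on $b=\bs(f)$ bits (Hamming weight $\le 1$) with the same number of queries, by simulating an oracle that returns $y$-values outside the blocks and, inside block $B_i$, queries the $i$th OR-input bit to decide whether to return $y$- or $x^{(i)}$-values. It then invokes the already-proved \rf(thm:randomORlower). Your argument essentially inlines this reduction together with the OR lower bound into a single direct step. The advantage of your approach is self-containment. The advantage of the paper's reduction is modularity: the same oracle simulation applies verbatim to quantum algorithms, which is how the paper later obtains $Q(f)=\Omega(\sqrt{\bs(f)})$ from the quantum OR lower bound without redoing any of the combinatorics.
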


\pfstart
Let $b=\bs(f)$.  We reduce the calculation of OR over $b$ bits to the calculation of $f$ on inputs $y,x^{(1)},\dots,x^{(b)}$.  Assume we are given the input oracle $\cO$ to a bit-string $z\in \{0,1\}^b$ of Hamming weight at most 1.  We use it to simulate an oracle $\cO'$ encoding one of $x^{(i)}$ or $y$.  More precisely, if $z\in \{0,1\}^b$ is the all-0 string, then $\cO'$ encodes $y$, otherwise, it encodes $x^{(i)}$, where $i$ is the index of the non-zero entry in $z$.

The simulation is as follows.  Let $j$ be the index queried to $\cO'$.  If $j$ lies outside of all $B_i$, we return $y\elem[j]$.  If $j\in B_i$, then we query $\cO$ for the value of $z_i$.  If $z_i=0$, we return $y\elem[j]$, otherwise, we return $x^{(i)}\elem[j]$.  It is easy to see that $\cO'$ works as intended, and each oracle access to $\cO'$ costs at most one oracle access to $\cO$.  Thus, we can use a query algorithm for $f$ to calculate the OR function on $b$ bits in at most the same number of queries.  Together with \rf(thm:randomORlower), this implies the statement of the theorem.
\pfend

From \rf(exm:balancedClassical), we can see that there can be arbitrarily large gap between the deterministic and the randomised query complexities of a function.  For total functions, we have the following result.
\begin{thm}[\cite{beals:pol}]
\label{thm:D2bs}
For a total Boolean function $f$, its deterministic query complexity satisfies $D(f)\le \bs(f)^3$.
\end{thm}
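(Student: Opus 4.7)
The plan is to prove the bound by going through certificate complexity, establishing the two inequalities $C(f) \le \bs(f)^2$ and $D(f) \le \bs(f)\cdot C^{(1)}(f)$, then multiplying.

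First, I would show $C(f) \le \bs(f)^2$. Fix any input $z$, and greedily pick a maximal collection of pairwise disjoint blocks $B_1,\ldots,B_k \subseteq [\vars]$ such that flipping $z$ on $B_i$ changes the value of $f$. By the very definition of block sensitivity, $k\le\bs(f)$. I would insist that each $B_i$ be minimal (no proper subset is sensitive). This minimality implies that in the flipped input $z^{(i)}$, every coordinate in $B_i$ is individually sensitive: unflipping any single bit of $B_i$ in $z^{(i)}$ restores the value $f(z)$. Therefore $|B_i|\le \sen(z^{(i)},f)\le \bs(f)$. I then claim that $S=\bigcup_i B_i$ is a certificate for $z$: if some $z'$ agreed with $z$ on $S$ but had $f(z')\neq f(z)$, then $B=\{j:z'_j\neq z_j\}$ would be disjoint from $S$ and sensitive, contradicting maximality of $k$. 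This gives $|S|\le k\cdot\max_i |B_i|\le \bs(f)^2$.

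Second, I would show $D(f)\le \bs(f)\cdot C^{(1)}(f)$ by analyzing the following adaptive algorithm: while the value is not yet determined, pick a $1$-input $x$ consistent with the current partial assignment $\alpha$, pick a minimum $1$-certificate $C_x$ for $x$, and query all unqueried variables in $C_x$. If every queried value matches $x|_{C_x}$, output $1$; otherwise $\alpha$ is extended and we continue. Each round uses at most $C^{(1)}(f)$ queries, so it suffices to bound the number of rounds by $\bs(f)$. Suppose rounds $1,\ldots,t$ have taken place with $1$-certificates $C_1,\ldots,C_t$ for $1$-inputs $x^{(1)},\ldots,x^{(t)}$, all of which failed. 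Let $y$ be any $0$-input consistent with the accumulated $\alpha$ (which must exist, else we would have halted), and set
\[
B_i = \s[\strut j\in C_i\mid x^{(i)}_j\ne y_j].
\]
Each $B_i$ is nonempty because $C_i$ failed. The $B_i$'s are pairwise disjoint: any $j\in B_i\cap B_{i'}$ with $i<i'$ would have been queried in round $i$, forcing $x^{(i')}_j=y_j$ (since $x^{(i')}$ is consistent with $\alpha$), contradicting $j\in B_{i'}$. Finally, flipping $y$ on $B_i$ produces an input that matches $x^{(i)}$ on all of $C_i$, hence has value $1\ne f(y)$. Thus $B_1,\ldots,B_t$ are disjoint sensitive blocks for $y$, so $t\le \bs(y,f)\le \bs(f)$.

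Combining the two bounds and using $C^{(1)}(f)\le C(f)\le \bs(f)^2$ gives $D(f)\le \bs(f)\cdot\bs(f)^2=\bs(f)^3$, as desired.

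The main obstacle in this plan is the step establishing $C(f)\le\bs(f)^2$: the size bound $|B_i|\le\bs(f)$ on each block depends on totality of $f$ (without it, minimal sensitive blocks need not correspond to sensitive bits in the flipped input, since the flipped inputs might lie outside the domain). The rest is essentially a careful adaptive decision-tree construction together with a disjointness argument that packages block sensitivity directly into the analysis.
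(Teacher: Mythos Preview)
The paper does not prove this theorem; it is stated with a citation to Beals et al.\ and then used as a black box. Your argument is precisely the standard proof from that source (Nisan's inequality $C(f)\le s(f)\,\bs(f)\le\bs(f)^2$, combined with the repeatedly-query-a-$1$-certificate decision tree), so there is no alternative approach in the paper to compare against.

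One small point worth tightening: your round-counting shows that whenever $t$ failed rounds have occurred and a consistent $0$-input $y$ still exists, then $t\le\bs(f)$. As written this bounds the total number of rounds only by $\bs(f)+1$ (the final round may succeed, or may fail while eliminating the last consistent $0$-input), which yields $D(f)\le(\bs(f)+1)\,\bs(f)^2$ rather than the exact $\bs(f)^3$. The standard fix is to throw in one extra block: if after $\bs(f)$ failed rounds the value were still undetermined, then both a $0$-input $y$ and a $1$-input $x$ would be consistent with $\alpha$, and the block $\{j:x_j\ne y_j\}$ is nonempty, sensitive for $y$, and disjoint from every $B_i$ (since $x$ and $y$ agree on the already-queried set $\bigcup_i C_i\supseteq\bigcup_i B_i$). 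That gives $\bs(f)+1$ disjoint sensitive blocks at $y$, a contradiction, so the algorithm must halt within $\bs(f)$ rounds and $D(f)\le\bs(f)\cdot C^{(1)}(f)\le\bs(f)^3$ exactly.
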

By combining this with \rf(thm:R2bs), we get $D(f)\le R(f)^3$ for any total Boolean function $f$.

\section{Chapter Notes}
The material in Sections~\ref{sec:computationalDevices}---\ref{sec:circuits} is rather standard.  The reader is advised to consult any textbook on quantum computation, e.g. the book by Nielsen and Chuang~\cite{chuang:quantum} or the book by Kitaev \etal~\cite{kitaev:quantum}, for more details.  Our notation is slightly different from the notation in these books and is partly inspired by the lecture notes by Watrous~\cite{watrous:lectures}.  For a detailed history of development of quantum mechanics and quantum computing, we refer the reader to~\cite{chuang:quantum}.

The first model of quantum computation was the quantum Turing machine introduced by Benioff~\cite{benioff:quantumTuring} in 1980.  
A more modern version is due to Bernstein and Vazirani~\cite{bernstein:quantumComplexity}.  See also~\cite{melkebeek:newQuantumTuring}. Subsequently, it was replaced by a more natural notion of quantum circuits developed by Deutsch~\cite{deutsch:quantumCircuits} in 1989 and proven to be equivalent to quantum Turing machine by Yao~\cite{yao:quantumCircuits}.  A reason that quantum circuits are more popular than quantum Turing machines is that their uniformity can be shown by a {\em classical} Turing machine.  In the classical case, one is not able to fully replace Turing machines by circuits because the uniformity condition still requires an alternative model of computation.

Quantum RAM machines from \rf(sec:RAM) seem to be less popular than quantum circuits, although their implicit use, as we will see in the next chapters, is quite wide-spread.  For more details on our model and the related topics we refer the reader to the PhD thesis by \"Omer~\cite{omer:quantumRAM}.  The distinction between different variants of arrays is adapted from~\cite{kuperberg:anotherDihedral}.  Low level realisation of quantum random access memory is studied in~\cite{giovannetti:quantumRAM, hong:robustRAM}.  Our pseudo-code notation is mostly based on a technical report by Knill~\cite{knill:pseudocode}.  For a survey on various models of quantum computation see, e.g., \cite{miszczak:models}.

The results in \rf(sec:specification) are well-known.  An analogue of \rf(lem:circuitRobust) can be found in~\cite{bernstein:quantumComplexity}.  

For a more detailed exposition of the topics in \rf(sec:query), we refer the reader to the survey by Buhrman and de Wolf~\cite{buhrman:querySurvey}.  The notion of block sensitivity and the proof of \rf(thm:R2bs) is due to Nisan~\cite{nisan:bs}.

\chapter{Quantum-Walk-Based Algorithms}
\label{chp:walk}
A random walk is a randomised algorithm that works in the following way.  The algorithm only keeps track of its current {\em state}.  Given the state, the algorithm checks whether it satisfies some specified properties.  If it does, the algorithm stops and outputs the state (we say that the state is {\em marked} in this case).  Otherwise, the algorithm applies a small random transformation to the state and repeats the same procedure.  It is common to represent the possible states as the vertices of a graph with two vertices adjacent iff one is reachable from the other in one step.

Random walks have been successfully applied to a variety of computational problems.  A random walk can be a suitable choice if some restrictions are present.  One restriction could be that the complete set of possible states is not given in advance: Imagine a robot trying to get out of a labyrinth.  Another possibility is that the set of the states has a very complicated structure.  For instance, many exact algorithms for constraint satisfiability~\cite{schoning:ksat, minton:minimizingConflicts} were constructed using these ideas.  Finally, it is possible that the set of the states is available and simple, but the algorithm does not have enough memory to store all of them.  A famous example is given by the $st$-connectivity algorithm running in logarithmic space~\cite{aleliunas:connectivity}.  (Later this algorithm was successfully derandomised~\cite{reingold:connectivity}.)

In the quantum settings, quantum walks have much greater importance.  As first realised by Grover~\cite{grover:search}, and then more explicitly by Ambainis~\cite{ambainis:distinctness}, quantum walks are effective even if there are no restrictions in accessing, processing or storing the input.

The purpose of this chapter is mostly illustrative.  We describe techniques that were used to obtain quantum algorithms that we improve on in the second part of the thesis.  The chapter is organised as follows.
In \rf(sec:walkClassical), we describe three different models of classical random walks.  In Sections~\ref{sec:amplitudeAmplification} and~\ref{sec:walkQuantum}, we describe the quantum counterparts of two of these random walks, and give examples of their applications.

\mycommand{setup}{T_S}
\mycommand{update}{T_U}
\mycommand{checking}{T_C}
\mycommand{lucky}{\eps}
\mycommand{spectralgap}{\delta}

\section{Classical Random Walks}
\label{sec:walkClassical}
Before tackling quantum walks, it is worth getting acquainted with the random ones.  
The settings we consider in this section are not typical for random walks, but they are the closest analogue of the quantum walks we will consider in the next sections.
We solve the task of finding (alternatively, detecting the presence of) a {\em marked} element in some set.  Formally, the settings are as follows:
\begin{defn}[Detection and search problems]
\label{defn:problem}
Let $X$ be a finite set with $n=|X|$.  We assume we have perfect knowledge of $X$.  Additionally, an unknown set $M\subseteq X$ of {\em marked} elements is fixed.  In the {\em detection} problem, the task is to distinguish whether $M$ is empty or non-empty.  In the {\em search} problem, we are promised that $M$ is non-empty, and the task is to output any element $x\in M$.
\end{defn}

Each element of $X$ has some {\em data} associated with it.  We denote it by $d(x)$, and assume it belongs to some finite set $D$.  This data is required to detect whether the element is marked.

We are given some procedures assisting us in the tasks.  For each of these procedures, some abstract {\em costs} are assigned.  We are interested in minimising the total cost of the algorithm, i.e., the sum of the costs of all the procedures executed by the algorithm.  All operations except executing the procedures are assumed to be costless.
The motivation for this convention is twofold.  Firstly, in many cases, this framework is used for query algorithms, and, since $X$ is given in advance, all operations manipulating the elements of $X$ are indeed costless.  Secondly, even in the time-efficient settings, the wrapping part of the algorithm is so simple that its cost can be neglected.

\mycommand{stationary}{\sigma}

\begin{defn}[Set-up]
\label{defn:setup}
The set-up procedure performs two operations.  At first, it samples an element $x\in X$ according to some known probability distribution $\stationary_x$.  After that, the procedure generates some {\em data} $d(x)$ {\em associated} with the element.  
There is a pre-defined threshold $\lucky>0$ such that, if $M$ is non-empty, the set-up procedure samples a marked element with probability at least $\lucky$, i.e., $\sum_{x\in M} \stationary_x\ge\lucky$.  
The cost of the set-up procedure is denoted by $\setup$.
\end{defn}

\begin{defn}[Check]
\label{defn:check}
Given $x\in X$ together with the associated data $d(x)$, the checking procedure decides whether the element is marked.  The cost of the checking procedure is denoted by $\checking$.
\end{defn}

At first, it may seem unclear why the checking procedure is separated from the set-up procedure, but it will become apparent later, after the introduction of the update operation.

Given these two procedures, one can come up with a simple \refalg{simplest}. It is easy to see that \refalg{simplest} returns a marked element with probability $\Omega(1)$ with the total cost of $O(\setup+\checking)/\lucky$.

\begin{algorithm}
\caption{A Simple Search Algorithm}
\label{alg:simplest}
\algbegin
\state{{\bf Repeat} $\Theta(1/\lucky)$ times :}
\tab
\state{Sample $x$ and construct $d(x)$ using the set-up procedure}
\state{Check if $x$ is marked, and if it is, output $x$ and stop}
\algend
\end{algorithm}

Sometimes, it can be too expensive to set-up an element $x$ from scratch in every iteration of the loop in \rf(alg:simplest).
Instead of that, we would like to transform the data associated with the element of the previous iteration into the data of the new element.  In general, it can be infeasible.  So, for $x\in X$, let $N(x)$ denote the set of $y\in X$ such that $d(y)$ can be easily obtained from $d(x)$.  In each case, we define $N(x)$ explicitly.

\begin{defn}[Update]
\label{defn:update}
Given $x\in X$, the associated data $d(x)$ and an element $y\in N(x)$, the update procedure returns the data $d(y)$ associated with $y$.  The cost of the update procedure is denoted by $\update$.
\end{defn}

Thus, given an element $x\in X$, we could potentially move to any element in $N(x)$ using the update procedure.  However, it is still unclear to which element we should move.
In order to specify this, we adopt the following convention.  First, assume that $y\in N(x)$ if and only if $x\in N(y)$.  Let $G$ be the undirected graph with the vertex set $X$ and vertices $x,y\in X$ connected iff $x\in N(y)$.  We  assume $G$ is connected.  Each edge $xy$ of the graph is assigned a positive {\em weight} $w_{xy}$.  (Since $G$ is undirected, we have $w_{xy}=w_{yx}$.) The next element we proceed to is chosen randomly with probability proportional to the weight.  More formally, if $w_y = \sum_{x\in N(y)} w_{yx}$, then the probability of going from $y$ to $x$ is $p_{xy} = w_{xy}/w_y$.  This is known as the diffusion operation.

Let $P=(p_{xy})$ be the corresponding $X\times X$ matrix.  Matrices constructed in such a way correspond to what are known as {\em reversible random walks}.  If $u$ is a probability distribution on $X$ written as a column-vector, the probability distribution after the diffusion operation is $Pu$.  Any such matrix $P$ is {\em stochastic}: it is {\em non-negative} (all its entries are non-negative) and the sum of each column is 1.  

\begin{prp}
\label{prp:stationary}
The vector $w = (w_x)_{x\in X}$ is an eigenvector of $P$ with eigenvalue 1.
\end{prp}

\pfstart
Indeed, for all $x\in X$,
\[
(Pw)\elem[x] = \sum_{y\in N(x)} p_{xy}w_y = \sum_{y\in N(x)} w_{xy} = w_x.\qedhere
\]
\pfend

Define the $X\times X$ matrix $P'$ by $P'\elem[x,y] = w_{xy}/\sqrt{w_xw_y}$, and let $W$ be the diagonal $X\times X$ matrix given by $W\elem[x,x] = \sqrt{w_x}$.  The following statement is trivial.

\begin{prp}
\label{prp:walkPprime}
We have $P' = W^{-1}PW$.  In particular, $P'$ has the same eigenvalues as $P$ does.  
The vector $(\sqrt{w_x})_{x\in X}$ is a 1-eigenvector of $P'$. 
\end{prp}

From this point on, we assume that $P$ is {\em aperiodic}, i.e., there exists a positive integer $i$ such that all entries of $P^i$ are positive.  The Perron-Frobenius theorem (cf. \rf(sec:linearAlgebra)) implies the following result:

\begin{thm}
\label{thm:perron}
Every stochastic aperiodic matrix $P$ has unique 1-eigenvector.  All other eigenvalues of $P$ are strictly less than 1 in absolute value.
\end{thm}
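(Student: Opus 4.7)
The plan is to reduce the claim to the Perron-Frobenius theorem for entrywise positive matrices (from the linear algebra appendix) applied to $P^i$, where $i$ is the integer guaranteed by aperiodicity. Three preliminary facts are needed: (i) $1$ is an eigenvalue of $P$, (ii) every eigenvalue of $P$ has modulus at most $1$, and (iii) the Perron-Frobenius conclusion for the positive matrix $P^i$, namely that its eigenvalue of maximum modulus is simple and strictly dominates every other eigenvalue in absolute value.

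For (i), since $P$ is stochastic, $\mathbf{1}^T P = \mathbf{1}^T$, so $P^T$ has $\mathbf{1}$ as a $1$-eigenvector, and therefore $1$ belongs to the spectrum of $P^T$ and hence of $P$. For (ii), I would apply the standard row-max argument to $P^T$, whose rows sum to $1$: if $P^T v = \lambda v$ with $v \neq 0$, pick an index $k$ maximising $|v_k|$ and estimate $|\lambda|\,|v_k| = |\sum_j (P^T)_{kj} v_j| \le \sum_j (P^T)_{kj}|v_j| \le |v_k|$, so $|\lambda| \le 1$. Combined with (iii), facts (i) and (ii) identify the Perron eigenvalue of $P^i$ as $1$: it is at least $1$ because $1^i = 1$ is an eigenvalue of $P^i$, and at most $1$ because every eigenvalue $\lambda$ of $P$ satisfies $|\lambda^i| \le 1$.

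It remains to pull the conclusion back from $P^i$ to $P$. If $P$ had two linearly independent $1$-eigenvectors, they would form a $2$-dimensional $1$-eigenspace of $P^i$, contradicting the simplicity asserted by Perron-Frobenius; this gives uniqueness of the $1$-eigenvector of $P$. If $P$ had an eigenvalue $\lambda \neq 1$ with $|\lambda| = 1$, then $\lambda^i$ would be an eigenvalue of $P^i$ of modulus $1$, forcing $\lambda^i = 1$ by Perron-Frobenius; the eigenvector of $P$ for $\lambda$ would then be a $1$-eigenvector of $P^i$ linearly independent from the $1$-eigenvector of $P$ for eigenvalue $1$ (since eigenvectors of $P$ for distinct eigenvalues are independent), again contradicting simplicity. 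The only delicate point in the whole argument is this last step, where simplicity of the Perron eigenvalue of $P^i$ is invoked twice---once for uniqueness of the $1$-eigenvector and once to rule out peripheral eigenvalues of $P$ other than $1$; everything else is routine bookkeeping around Perron-Frobenius.
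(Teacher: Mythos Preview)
Your argument is correct, but it takes a detour that the paper's setup makes unnecessary. In the paper, ``aperiodic'' is defined exactly as ``there exists $i$ with $P^i$ entrywise positive,'' which is precisely the definition of \emph{primitive} given in the linear algebra appendix. The appendix then states Perron--Frobenius for primitive (not just positive) matrices: the spectral radius is a simple eigenvalue and strictly dominates every other eigenvalue in modulus. So the paper's proof is a one-liner: apply Perron--Frobenius directly to $P$ itself, and identify $\rho(P)=1$ using your facts (i) and (ii). There is no need to pass to $P^i$ and pull the conclusion back.

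Your route---apply Perron--Frobenius only to the positive matrix $P^i$ and then argue about how eigenspaces of $P$ sit inside those of $P^i$---is the right thing to do if one only has Perron--Frobenius for positive matrices in hand, and the two uses of simplicity you flag are handled correctly. It just duplicates work already baked into the primitive-matrix version of the theorem that the paper quotes.
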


The 1-eigenvector, normalised so that the sum of its entries is 1, is called the {\em stationary distribution} of $P$.  By \rf(prp:stationary), it is proportional to $w$.
Let $\lambda_1, \lambda_2, \dots, \lambda_n$ be the eigenvalues of $P$ arranged by their absolute values: $|\lambda_1|\ge|\lambda_2|\ge\cdots\ge|\lambda_n|$.
  Hence, $\lambda_1=1$, and $|\lambda_2|<1$.  The value $\spectralgap = 1-|\lambda_2|$ is known as the {\em spectral gap} of $P$.

We assume that we have the following procedure.

\begin{defn}[Diffusion]
\label{defn:diffuse}
Given an element $y\in X$, the diffusion procedure samples an element $x\in N(y)$ according to the probability distribution given by a stochastic aperiodic matrix $P=p_{xy}$ such that the stationary distribution of $P$ equals the probability distribution $(\stationary_x)$ from \rf(defn:setup).  We assume that the diffusion procedure is costless.
\end{defn}

\begin{algorithm}
\caption{Random Walk Algorithm}
\label{alg:walksimple}
\algbegin
\state{Sample $x$ and construct $d(x)$ using the set-up procedure}
\state{{\bf Repeat}:}
\tab
\state{Check if $x$ is marked, and if it is, output $x$ and stop}
\state{Otherwise, diffuse to an element in $N(x)$ and update the data accordingly}
\algend
\end{algorithm}

Given all these operations, one may come up with \refalg{walksimple}.  The average number of iterations of the loop in \refalg{walksimple} performed before a marked element is reached is called the {\em hitting time} of $P$.  It can be estimated as follows.

\begin{prp}
\label{prp:hitting}
The hitting time of \refalg{walksimple} is $O\s[\frac1{\lucky\spectralgap}]$ where $\eps$ is as in \rf(defn:setup) and $\delta$ is the spectral gap of $P$.
\end{prp}

\pfsketch
Let $P'$ and $W$ be as in \rf(prp:walkPprime).  The matrix $P'$ is symmetric, hence, it has an orthonormal set of eigenvectors $v_1,\dots,v_n$ that correspond to the eigenvalues $1=\lambda_1 \ge |\lambda_2|\ge\dots \ge |\lambda_n|$ of $P$.  Also, $v_1\elem[x] = \sqrt{\stationary_x}$ by the assumption on the stationary distribution of $P$.

Assume $C$ is some large constant.  For $i>1$ and $\ell = C/\spectralgap$, we have 
\[
|\lambda_i^{\ell}| \le (1-\spectralgap)^{C/\spectralgap} \le \ee^{-C}\approx 0.
\]
Hence, $(P')^\ell \approx v_1v_1^*$.  
Thus, for $P^\ell = W (P')^{\ell} W^{-1}$, we have $P^\ell\elem[x,y] \approx \sigma_x$.  That is, $P^\ell$ transforms every probability distribution into a distribution close to $\sigma$.  Hence, performing $C/\spectralgap$ steps on the random walk is approximately equivalent to sampling an element using the set-up procedure.  Performing the loop in \refalg{walksimple} $\Theta(1/(\lucky\spectralgap))$ times is approximately as good as executing \refalg{simplest}.
\pfend

\begin{algorithm}
\caption{Forbearing Random Walk Algorithm}
\label{alg:forbearing}
\algbegin
\state{Sample $x$ and construct $d(x)$ using the set-up procedure}
\state{{\bf Repeat} $\Theta(1/\lucky)$ times :}
\tab
\state{Check if $x$ is marked, and if it is, output $x$ and stop}
\state{{\bf Repeat} $\Theta(1/\spectralgap)$ times :}
\tab
\state{Go to an element in $N(x)$ and update the data accordingly}
\algend
\end{algorithm}

If the checking cost $\checking$ is relatively high, it may be a good idea to avoid checking the element after each step of the walk.  Inspired by the proof of \refprp{hitting}, one may come up with \refalg{forbearing}.  
The results of this section are summarised in the following
\begin{thm}[Classical random walks]
\label{thm:walkClassical}
Assume the set-up and checking procedures are as in Definitions~\ref{defn:setup} and~\ref{defn:check}.  Then, \refalg{simplest} finds a marked element with $\Omega(1)$ probability.  Assume, additionally, that the update procedure as in \refdefn{update} is given, where $P$ corresponds to a reversible random walk that is aperiodic and has $(\stationary_x)$ from \refdefn{setup} as its stationary distribution.  Then, Algorithms~\ref{alg:walksimple} and~\ref{alg:forbearing} find a marked element with $\Omega(1)$ probability.  The total costs of the algorithms are given by
\[\begin{tabular}{rl}
\refalg{simplest} & $O(\setup+\checking)/\lucky$ \\
\refalg{walksimple} & $ O(\setup + H(\checking+\update)) = O\s[\setup+\frac{\checking+\update}{\lucky\spectralgap}]$ \\
\refalg{forbearing} & $ O\s[\setup + \frac1{\lucky} {\s[\checking + \frac{\update}{\spectralgap}]}]$
\end{tabular}\]
Here, $\eps$ is as in \rf(defn:setup), $H$ is the hitting time, and $\spectralgap$ is the spectral gap of $P$.
\end{thm}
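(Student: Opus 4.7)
The plan is to prove each of the three bounds in turn, using the machinery developed earlier in the section. For \rf(alg:simplest), each pass through the loop independently invokes the set-up procedure, producing a sample from the distribution $\stationary$ and checking whether it is marked. By \rf(defn:setup), the probability a single pass yields a marked element is at least $\lucky$, so after $k$ independent passes the success probability is $1-(1-\lucky)^k \geq 1 - \ee^{-k\lucky}$, which is $\Omega(1)$ for $k = \Theta(1/\lucky)$. Each pass costs $\setup+\checking$, giving the first bound $O((\setup+\checking)/\lucky)$.

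For \rf(alg:walksimple), the sequence of states $x_0,x_1,\ldots$ produced by the loop is exactly the walk analyzed in \rf(prp:hitting): the initial state $x_0$ is drawn from $\stationary$ by the set-up, and each successive $x_{i+1}$ is obtained from $x_i$ by one application of $P$. Since $\stationary$ is the stationary distribution of $P$, the marginal distribution of $x_i$ equals $\stationary$ for every $i$, so the probability of $x_i\in M$ is at least $\lucky$ at every step. The hitting time bound $H = O(1/(\lucky\spectralgap))$ from \rf(prp:hitting) then implies that $\Theta(H)$ iterations of the loop suffice to find a marked element with constant probability, by a standard concentration argument on hitting times in terms of the spectral gap. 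Each loop iteration performs one check and one diffusion/update, for cost $\checking+\update$, and the total cost absorbs a single set-up, giving $O\bigl(\setup+H(\checking+\update)\bigr)$.

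For \rf(alg:forbearing), I would exploit the proof sketch of \rf(prp:hitting) in a more direct way. Write $P' = W^{-1}PW$ with orthonormal eigenvectors $v_1,\ldots,v_n$ as in \rf(prp:walkPprime); for any probability distribution $u$ on $X$, the vector $W^{-1}u$ decomposes as $c_1 v_1 + \sum_{i\geq 2}c_i v_i$ where $c_1 = \sum_x\sqrt{\stationary_x}\,u_x$ is a constant and the tail has bounded $\ell_2$-norm. After $\ell = \Theta(1/\spectralgap)$ applications of $P'$, the tail is contracted by $(1-\spectralgap)^\ell = O(1)$ (in fact arbitrarily small for a large enough constant), so $W(P')^\ell W^{-1}u$ is within any desired constant $\ell_1$-distance of $\stationary$. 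Choosing this constant small enough, the probability that the current state lies in $M$ immediately after the inner loop is at least $\lucky/2$. The outer loop thus performs $\Theta(1/\lucky)$ essentially independent Bernoulli trials, each succeeding with probability $\Omega(\lucky)$, so the overall success probability is $\Omega(1)$. The cost of each outer iteration is $\checking+\Theta(\update/\spectralgap)$, and adding the single set-up gives the final bound $O\bigl(\setup + \tfrac{1}{\lucky}(\checking + \update/\spectralgap)\bigr)$.

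The main obstacle is the last correctness argument: \rf(prp:hitting) is only stated at the level of the \emph{expected} hitting time and uses $\approx$ throughout, whereas \rf(alg:forbearing) requires a genuine $\ell_1$-mixing statement after each inner loop. The cleanest route is to convert the $\ell_2$-contraction afforded by the spectral gap into a total variation bound via Cauchy--Schwarz (incurring a $\sqrt{|X|}$ factor that is absorbed by choosing the constant in $\Theta(1/\spectralgap)$ appropriately), and then apply a union bound across the $\Theta(1/\lucky)$ outer iterations. All the remaining steps are routine cost accounting.
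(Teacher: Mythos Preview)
The paper offers no separate proof of this theorem: it is stated as a summary of the section, with the correctness of \rf(alg:simplest) declared ``easy to see'', \rf(alg:walksimple) handled by \rf(prp:hitting) (itself only given as a proof sketch), and \rf(alg:forbearing) justified merely by pointing back to that sketch. Your proposal follows exactly this informal line and fills in the same steps at a comparable level of rigour.

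One correction to your identified obstacle, though: no $\sqrt{|X|}$ factor arises. Working with the weighted vector $W^{-1}u$ as in \rf(prp:walkPprime), Cauchy--Schwarz gives, for $r=u-\stationary$,
\[
\|P^\ell r\|_1 = \sum_x \sqrt{\stationary_x}\,\bigl|(W^{-1}P^\ell r)_x\bigr| \le \Bigl(\sum_x \stationary_x\Bigr)^{1/2}\,\|W^{-1}P^\ell r\|_2 = \|W^{-1}P^\ell r\|_2,
\]
so the $\ell_2$-contraction by $(1-\spectralgap)^\ell$ passes directly to total variation, and a constant multiple of $1/\spectralgap$ inner steps genuinely suffices. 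Had a $\sqrt{|X|}$ factor actually appeared, it could \emph{not} be ``absorbed into the constant'' as you claim---it would force $\Theta(\log|X|/\spectralgap)$ inner iterations and break the stated bound. The residual subtlety you should track instead is that $\|W^{-1}(u-\stationary)\|_2$ stays $O(1)$ across the outer iterations despite repeatedly conditioning on ``not marked''; this holds because you start from $\stationary$ itself and each such conditioning perturbs the $\chi^2$-type distance by at most a constant.
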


We will present a quantum analogue of \rf(alg:simplest) in \rf(sec:amplitudeAmplification) and a quantum analogue of \rf(alg:forbearing) in \rf(sec:walkQuantum).

\section{Amplitude Amplification}
\label{sec:amplitudeAmplification}
In this section, we develop a quantum analogue of \rf(alg:simplest).  We define a {\em step of a quantum walk} as a correspondingly chosen unitary operator, and study its behaviour on some initial quantum state.  In \rf(sec:phaseDetection), we describe a technical tool for separating eigenvectors of a unitary operator $U$ based on their eigenvalues.  In \rf(sec:amplification), we apply this tool to the detection of marked elements.  Finally, in \rf(sec:amplificationExamples), we give some applications of the algorithm from \rf(sec:amplification).

\subsection{Quantum Phase Detection}
\label{sec:phaseDetection}
As described in the proof of \rf(prp:hitting), the repeated application of the step of the random walk results in the eigenvectors with eigenvalues smaller than 1 fading away: The probability distribution converges to the stationary one.
In the quantum case, a step of a quantum walk is a unitary operation whose repeated application does not converge.  
In this section, we show how to overcome this by averaging over time.

Let $U$ be a unitary operator acting on a register $\reg A$.  We develop a subroutine that distinguishes the 1-eigenvectors of $U$ from the eigenvectors with other eigenvalues.  Recall that 
$U$ is a normal operator, hence, it has an orthonormal eigenbasis $\Psi = \{\psi_1,\dots, \psi_m\}$ (cf. \rf(sec:linearAlgebra)).  The corresponding eigenvalues are of the form $\ee^{\ii\theta_j}$ with $-\pi < \theta_j\le \pi$.  
In the notations of \rf(sec:Functions), the phase detection procedure evaluates the function $g_U\colon \Psi\to\{0,1\}$ defined by $g_U(\psi_j) = 0$ iff $\theta_j = 0$.  
Intuitively, it is clear that the complexity of this problem has to depend on the {\em phase gap} of $U$ that is the minimal non-zero value of $|\theta_j|$.

\mycommand{gap}{\delta}
\mycommand{error}{\gamma}

\begin{thm}
\label{thm:detection}
Let $\frac12\ge \error>0$ and $\gap>0$ be fixed real numbers, and let $U$, $\reg A$, and $g_U$ be as above.  Then, there exists a quantum procedure that evaluates the function $g_U$ with one-sided error $\error$ for every unitary $U$ acting on the register $\reg A$ and having phase gap at least $\delta$.  The procedure uses $O\sA[\frac{1}{\gap}\log\frac{1}{\error}]$ controlled applications of $U$.
\end{thm}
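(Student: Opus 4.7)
The plan is to reduce the problem to standard quantum phase estimation (QPE) followed by the one-sided amplification of \rf(lem:circuitAmplify). Choose $t$ such that $2\pi \cdot 2^{-t} < \gap/2$; concretely, $t = \lceil \log_2(4\pi/\gap) \rceil = O(\log(1/\gap))$ suffices. QPE on the register $\reg A$, using a fresh $t$-qubit ancilla, costs $2^t - 1 = O(1/\gap)$ controlled applications of $U$ in total (via the standard construction involving $U, U^2, U^4, \dots, U^{2^{t-1}}$).

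The core subroutine I would define is: attach a $t$-qubit ancilla initialised to $\kett|0>$, run QPE on $\reg A$ into the ancilla, measure the ancilla into a classical register $m$, and output $0$ if $m = 0$ and $1$ otherwise. The standard analysis of QPE shows that on an eigenvector $\psi_j$ with eigenvalue $\ee^{\ii\theta_j}$, $\theta_j \in (-\pi, \pi]$, the measurement outcome satisfies $\lvert m/2^t - \theta_j/(2\pi) \rvert \le 2^{-t}$ with probability at least a fixed constant $c_0 > 0$. By the choice of $t$, this forces $m \ne 0$ whenever $|\theta_j| \ge \gap$. Moreover, if $\theta_j = 0$, then the controlled-$U^{2^k}$ gates act as the identity on the ancilla and the inverse QFT maps $\kett|0>$ back to $\kett|0>$, so $m = 0$ deterministically. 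Hence the procedure evaluates $g_U$ non-coherently with one-sided error at most $1 - c_0$, using $O(1/\gap)$ controlled applications of $U$.

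To bring the one-sided error down to $\error$, I would apply the one-sided variant of \rf(lem:circuitAmplify): run the procedure $O(\log(1/\error))$ times independently and take the OR of the outcomes. This multiplies the query cost by $O(\log(1/\error))$ and preserves one-sidedness, since on a $1$-eigenvector every individual run outputs $0$ and so does the OR. The total number of controlled applications of $U$ is therefore $O\bigl(\tfrac{1}{\gap}\log\tfrac{1}{\error}\bigr)$, matching the claim.

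The only genuinely delicate point is the one-sided nature of the error; everything else is a direct invocation of QPE and of the amplification lemma. One-sidedness rests on the \emph{exact} (not merely approximate) cancellation that occurs in QPE on a true $1$-eigenvector, which is independent of the choice of $t$ and of any rounding. If one tried to replace QPE by some coarser phase-discrimination primitive, preserving this exact behaviour at $\theta = 0$ would be the principal technical obstacle.
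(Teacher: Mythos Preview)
Your argument is correct and follows the same overall scheme as the paper: first build a one-sided detector using $O(1/\gap)$ controlled applications of $U$, then boost it to error $\error$ via the one-sided branch of \rf(lem:circuitAmplify). The only difference lies in the detector itself. You invoke full QPE with the inverse QFT and appeal to its standard accuracy guarantee; the paper instead prepares a uniform superposition over $K=\lceil 8/\gap\rceil$ values, applies $U^k$ on the $k$th branch, undoes the uniform superposition, and outputs $1$ iff the counter is nonzero---a direct geometric-sum calculation then bounds the overlap with $\kett|0>$ by $1/2$ when $|\theta|\ge\gap$. Your route buys a clean black-box reduction to a textbook primitive; the paper's route avoids the QFT entirely and is self-contained. (One cosmetic point: instead of ``measure the ancilla into $m$,'' you should compute the bit $[m\neq 0]$ into $\reg O$ so that the procedure fits the paper's \textbf{qufunction} interface required by \rf(lem:circuitAmplify); this changes nothing in the analysis.)
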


\pfstart
We prove the theorem for the special case of $\error=1/2$.  The general case then follows by \reflem{circuitAmplify}.  Let $\reg O$ be the output qubit.  Define $K = \lceil 8/\gap\rceil$, and let $\reg W$ be the working register with $\{0,\dots,K-1\}$ as the computational basis.  
The description of the circuit is given in \refalg{phaseDetection}.

\begin{algorithm}
\caption{Quantum Phase Detection}
\label{alg:phaseDetection}
\algbegin
\state{{\bf qufunction} PhaseDetection({\bf quprocedure} $U$, {\bf real} $\gap$, {\bf registers} $\qreg A, \qreg O$) {\bf with} 1-sided error $1/2$ {\bf :}}
\tab
\state{$K \gets \lceil 8/\gap\rceil$ }
\state{{\bf attach} $K$-qudit $\qreg W$ }
\state{UniformSuperposition($\qreg W$)} \label{line:phaseDetection:uniform}
\state{{\bf for} $k=1,\dots,K-1$ {\bf:}} \label{line:phaseDetection:loop}
\tab
\state{{\bf conditioned on} $\qreg W \ge k$ {\bf :} $U(\qreg A)$}
\untab
\state{UniformSuperposition$^{-1}$($\qreg W$)} \label{line:phaseDetection:ununiform}
\state{{\bf conditioned on} $\qreg W \ne 0$ {\bf :} $\qreg O\qgets 1$} \label{line:phaseDetection:final}
\untab
\algend
\end{algorithm}

Assume the initial state of the procedure is $\ket A|\psi>\ket W|0> \ket O|0>$, where $\psi$ is an $\ee^{\ii\theta}$-eigenvector.  After \rf(line:phaseDetection:uniform), the state of the algorithm is
\begin{equation}
\label{eqn:detection1}
\frac{1}{\sqrt{K}}\sum_{k=0}^{K-1} \ket A|\psi>\ket W|k>\ket O|0>.
\end{equation}
The loop in \rf(line:phaseDetection:loop) transforms it into
\begin{equation}
\label{eqn:detection2}
\frac{1}{\sqrt{K}}\sum_{k=0}^{K-1} \ket A|U^k\psi>\ket W|k>\ket O|0> = 
\frac{1}{\sqrt{K}}\sum_{k=0}^{K-1} \ee^{\ii k\theta} \ket A|\psi>\ket W|k>\ket O|0> = 
\ket A|\psi> \otimes\s[\frac{1}{\sqrt{K}}\sum_{k=0}^{K-1} \ee^{\ii k\theta}\ket W|k> ]\otimes \ket O|0>.
\end{equation}
The state of the algorithm after \rf(line:phaseDetection:ununiform) is of the form $\ket A|\psi> \ket W|\upsilon_\psi> \ket O|0>$, and the final state of the procedure is of the form $\ket A|\psi>\ket{{WO}}|\omega_\psi>$ as required by the definition of non-coherent function evaluation.

If $\theta = 0$, then the state on the right hand side of~\refeqn{detection2} equals the state in~\refeqn{detection1}, hence, the state after \rf(line:phaseDetection:ununiform) equals the initial state of the procedure.  Then, \rf(line:phaseDetection:final) has no effect, and the output register $\reg O$ contains 0 with certainty.

Now assume $|\theta|\ge\gap$.  Since the unitary operator in \rf(line:phaseDetection:ununiform) does not affect the inner product, we have 
\[
\abs| \ip<\upsilon_\psi, e_0> | = 
\abs| \ipC<\frac{1}{\sqrt{K}}\sum_{k=0}^{K-1} e_k, \frac{1}{\sqrt{K}}\sum_{k=0}^{K-1} \ee^{\ii k\theta} e_k >| = 
\absC| \frac{1}{K}\sum_{k=0}^{K-1} \ee^{\ii k\theta} | \le 
\abs| \frac{\gap(1-\ee^{\ii K\theta})}{8(1-\ee^{\ii\theta})} | \le \frac12
\]
because $\abs| 1-\ee^{\ii K\theta} |\le2$, and $\abs| 1-\ee^{\ii\theta} | \ge |\theta|/2$ for all $|\theta|\le\pi$. Thus, after \rf(line:phaseDetection:final), the output register contains 1 with probability at least $1/2$.
\pfend

\subsection{Algorithm}
\label{sec:amplification}
\mycommand{start}{\varsigma}
\mycommand{qsetup}{S}
\mycommand{qcheck}{C}
\mycommand{total}{{\psi}}
\def\data#1{d_{#1}}
The quantum analogue of \rf(alg:simplest) uses two registers: the {\em index register} $\reg X$ and the {\em data register} $\reg D$.  For the first register, we have $\hilbert X = \C^X$, where $X$ is the same set as in \refsec{walkClassical}.  The second register stores a unit {\em data vector} $\data x\in\hilbert D$ associated with $x\in X$.  The quantum counterparts of the set-up and checking procedures are as follows.

\begin{defn}[Quantum Set-up]
\label{defn:quantumSetup}
A {\em quantum set-up procedure} is a unitary $\qsetup$ that maps the initial state $\ket|0>$ into a state of the form 
\begin{equation}
\label{eqn:quantumSetup}
\ket|\total> =  \sum_{x\in X} \alpha_x \ket  X|x> \ket D|\data x>.
\end{equation}
The probability distribution associated with $\qsetup$ is $\stationary_x = |\alpha_x|^2$.  We again assume that the probability of $x$ being marked is at least $\lucky$, i.e., $\sum_{x\in M} |\alpha_x|^2\ge\lucky$.  The cost of this subroutine is denoted by $\setup$.
\end{defn}

\begin{defn}[Quantum Check]
\label{defn:quantumCheck}
The {\em quantum check function} $\qcheck$, given the element and the associated data, evaluates whether the corresponding element is marked.  Thus, in the notations of \rf(sec:Functions), $\Psi = \{\ket X|x>\ket D|\data x> \mid x\in X\}$, and the function is defined by $\ket X|x>\ket D|\data x> \mapsto 1$ if $x$ is marked, and 0, otherwise.  The cost of the checking subroutine is $\checking$.
\end{defn}

By \refthm{walkClassical}, classical \refalg{simplest} finds a marked element with cost $O(\setup+\checking)/\lucky$.  
Quantumly, we can get better complexity by appling the ``linearity in the square roots of probabilities'' as mentioned in the introduction.

\begin{thm}
\label{thm:amplification}
Assume the quantum set-up and checking procedures are as in Definitions~\ref{defn:quantumSetup} and~\ref{defn:quantumCheck}.  Then, there exists a quantum procedure that detects the presence of a marked element with one-sided error $1/2$ and the total cost  $O(\setup+\checking)/\sqrt{\lucky}$.
\end{thm}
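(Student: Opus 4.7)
The plan is to reduce the detection problem to the phase-detection subroutine of \refthm{detection} applied to a Grover-type iterate. Set $p = \sum_{x \in M}|\alpha_x|^2$, so that either $p = 0$ or $p \geq \lucky$, and these are the two cases we must distinguish.

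First, I would construct two reflections. Using the quantum check function $\qcheck$ and \reflem{circuitToPhase}, I can implement in cost $O(\checking)$ a unitary $R_M$ that sends $\ket X|x> \ket D|\data x> \mapsto -\ket X|x> \ket D|\data x>$ when $x$ is marked and acts as the identity otherwise. Then, using the quantum set-up procedure $\qsetup$, I would implement $R_\total = I - 2\ket|\total><\total| = \qsetup\,(I - 2\ket|0><0|)\,\qsetup^{-1}$ in cost $2\setup + O(1)$, since reflection about $\ket|0>$ uses only a constant number of gates.

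Second, I define the Grover iterate $G = -R_\total R_M$. A standard two-dimensional invariant-subspace argument shows that the plane $V$ spanned by the marked and unmarked components of $\ket|\total>$ is $G$-invariant, and $G|_V$ is rotation by $2\theta$ with $\sin\theta = \sqrt{p}$. Consequently, if $p = 0$ then $R_M\ket|\total> = \ket|\total>$, hence $G\ket|\total> = -R_\total\ket|\total> = \ket|\total>$, so $\ket|\total>$ is a $+1$-eigenvector of $G$. If $p \geq \lucky$, then $\ket|\total>$ decomposes as a sum of two $G$-eigenvectors with eigenvalues $\ee^{\pm 2\ii\theta}$ where $|2\theta| \geq 2\arcsin\sqrt{\lucky} \geq 2\sqrt{\lucky}$.

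Third, I would prepare the state $\ket|\total>$ using $\qsetup$ and invoke \refthm{detection} on the unitary $G$ with phase-gap parameter $\gap = 2\sqrt{\lucky}$ and error $\error = 1/2$. In the no-marked case the initial state is a $+1$-eigenvector of $G$, so the subroutine outputs $0$ with certainty; in the marked case every $G$-eigenvector appearing in $\ket|\total>$ has phase at least $2\sqrt{\lucky}$, so the subroutine outputs $1$ with probability at least $1/2$. The subroutine uses $O(1/\sqrt{\lucky})$ controlled applications of $G$, each costing $O(\setup + \checking)$, giving total cost $O\bigl((\setup+\checking)/\sqrt{\lucky}\bigr)$.

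The main obstacle is verifying that \refthm{detection} is applicable even though $G$ may have small-phase eigenvectors living outside the plane $V$ (for instance $R_M$ has many $+1$-eigenvectors in the unmarked subspace, but combining these with $R_\total$ can produce arbitrary phases). The key observation is that $\ket|\total>$ has no support outside $V$, so when we inspect the proof of \refthm{detection} we see that only the eigenspaces of $G$ into which the input state has nontrivial projection affect the output. I would either appeal to this directly by revisiting the proof, or else phrase the argument by saying that for our purposes \refthm{detection} need only separate $+1$-eigenvectors from eigenvectors with phase at least $\gap$ \emph{within the support} of the input state, which is exactly what occurs here.
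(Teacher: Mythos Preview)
Your proposal is correct and follows essentially the same approach as the paper: build the Grover iterate from the check reflection and the reflection about $\ket|\total>$, observe that $\ket|\total>$ lies in a two-dimensional invariant subspace on which the iterate acts as a rotation by $2\theta$ with $\sin\theta=\sqrt{p}$, and invoke phase detection (\refthm{detection}) with gap $\Theta(\sqrt{\lucky})$. Your operator $G=-R_\total R_M$ is literally the paper's step $V$, since the paper's ``ReflectAbout~$\total$'' is $2\ket|\total><\total|-I = -R_\total$. Regarding the obstacle you flag, the paper handles it exactly as you suggest: because the two-dimensional subspace $H$ is invariant under the iterate, one simply restricts attention to $V|_H$ and applies \refthm{detection} there; the eigenstructure of $V$ outside $H$ is irrelevant since the initial state has no support there.
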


\mycommand{marked}{\psi_{\mathrm m}}
\mycommand{nonmarked}{\psi_{\mathrm n}}

\def\eket#1{e_{#1}}

\mycommand{step}{V}

\begin{algorithm}
\caption{Quantum Amplitude Amplification}
\label{alg:amplification}
\algbegin
\state {{\bf qufunction} AmplitudeAmplification({\bf quprocedures} Setup, Check, {\bf real} $\eps$, {\bf registers} $\qreg X$, $\qreg D$, $\qreg O$ ) {\bf with} 1-sided error $1/2$ {\bf:}}
\tab
	\state {Setup$(\qreg X, \qreg D)$}
	\state {PhaseDetection(StepOfWalk, $\sqrt{\lucky}$, $\qreg{XD}$, $\qreg O$) {\bf with} 1-sided error $1/2$}
\state {\ }
\state {{\bf quprocedure} StepOfWalk :}
\tab
	\state {$\phase \qgets$ Check$(\qreg X, \qreg D)$ }
	\state {ReflectAbout $\total$}
\untab
\state {\ }
\state {{\bf quprocedure} ReflectAbout $\total$ :}
\tab
	\state {Setup$^{-1}(\qreg X, \qreg D)$}
	\state {{\bf conditioned on} $\qreg{XD} \ne 0$ : apply $(-1)$-phase gate }
	\state {Setup$(\qreg X, \qreg D)$}
\untab \untab
\algend
\end{algorithm}

\pfstart
In general terms, the algorithm is as follows.  We define an input-dependent unitary transformation $\step$ (a step of the quantum walk) and an initial state $\psi$ such that
\itemstart
\item if there is no marked element, then $\psi$ is a 1-eigenvector of $\step$;
\item if there is a marked element, then $\psi$ belongs to the span of the eigenvectors of $\step$ with eigenvalues sufficiently away from 1.
\itemend
Then, we can use the phase detection subroutine to distinguish these two cases.  
This idea will appear repeatedly throughout the thesis.

The procedure is described in \refalg{amplification}.  The step of the quantum walk $\step$ is the composition of two reflections:  The first reflection is the checking subroutine $\qcheck$, and the second one is about $\psi$ as defined in \refeqn{quantumSetup}.  
One step of the quantum walk costs $2\setup+\checking$.  The initial state of the quantum walk is $\psi$.

\begin{figure}[htb]
\[
\qquad\qquad\qquad
\xygraph{ !{0;<3cm,0pc>:}
(:[u][r(.1)]*{\marked'},
:[r][r(.3)]*{\psi_n' = \qcheck \nonmarked'},
:[r(0.96)u(0.26)][r(.1)]*{\psi},
[r(.4)u(.05)]*{\theta},
:[r(0.96)d(0.26)][r(.15)]*{\qcheck\psi}
)
[rr]
(:[u][r(.1)]*{\marked'},
:[r][r(.1)]*{\nonmarked'},
:[r(0.96)u(0.26)][r(.1)]*{\psi},
[r(.4)u(.05)]*{\theta},
:[r(0.87)u(0.5)][r(.15)]*{\step \nonmarked'},
:[r(0.71)u(0.71)][r(.15)]*{\step\psi}
)
}
\]
\vspace{-2cm}
\caption{An illustration of the step of the quantum walk in \TeXBug{\rf(alg:amplification)}.  Here it is assumed that both $\marked$ and $\nonmarked$ are non-zero, and $\marked' = \marked/\|\marked\|$ and $\nonmarked' = \nonmarked /\norm|\nonmarked|$ are the respective normalised vectors.
On the left, the images of $\psi$ and $\nonmarked'$ after the application of $\qcheck$ are shown.  On the right, the images of the same vectors under $\step$ are shown.  It is easy to see that $\step$ rotates both vectors $\psi$ and $\nonmarked'$ by the angle $2\theta$.  Since they span $H$, the step $\step$ of the quantum walk acts as the rotation by $2\theta$ in $H$.}
\label{fig:amplification}
\end{figure}

The initial state belongs to the invariant subspace $H$ of $\step$ spanned by (not normalised) vectors
\[
\nonmarked = \sum_{x\in X\setminus M}  \alpha_x \ket  X|x> \ket D|\data x>
\qquad\text{and}\qquad
\marked = \sum_{x\in M}  \alpha_x \ket  X|x> \ket D|\data x>.
\]
Indeed, $\nonmarked$ and $\marked$ are eigenvectors of $\qcheck$ with eigenvalues 1 and $-1$, respectively.  Also,
$\psi\in H$, hence, $H$ is an invariant subspace of the reflection about $\psi$.  Thus, we may restrict our attention to the spectrum of $\step$ in $H$.

If there are no marked elements, then $\psi = \nonmarked$ is a 1-eigenvector of $\step$.  
If all the elements are marked, then $\psi = \marked$ is a $(-1)$-eigenvector of $\step$.
In both these cases, $\step$ and $\psi$ possess the required properties.

In all other cases, the subspace $H$ is two-dimensional.  The step $\step$, restricted to $H$, is the composition of the reflections about $\nonmarked$ and $\psi$, thus, it is the rotation by $2\theta$, where $\theta$ is the angle between the two vectors (see \rf(fig:amplification)).  For the angle, we have 
\[
\sin\theta = \frac{\ip<\psi, \marked>}{\norm|\marked|} = \sqrt{\sum_{x\in M} |\alpha_x|^2} \ge \sqrt{\lucky}.
\]
Thus, $\pi/2 \ge \theta\ge \sqrt{\lucky}$.  The eigenvalues of the rotation by $2\theta$ are $\ee^{\pm 2\ii\theta}$.  Hence, by \refthm{detection}, we can distinguish the two cases using $O(1/\sqrt{\eps})$ steps of the quantum walk.
\pfend

The quantum walk in \rf(alg:amplification) is quite simple, and it is 
possible to use it to find marked elements, not just to detect their presence.
Recall from \rf(defn:problem) that, in the search problem, we are guaranteed that the set of marked elements $M$ is non-empty, and the task is to find an $x\in M$.

\begin{prp}
\label{prp:amlificationFind}
Assume a quantum set-up procedure, that generates the state $\ket |\psi>$ from~\rf(eqn:quantumSetup) in cost $\setup$, and the checking procedure from \rf(defn:quantumCheck) are available.  Then, there exists a quantum algorithm that finds a marked element with $\Omega(1)$ success probability and the total cost $O(\setup+\checking)/\sqrt{\lucky}$, where $\lucky = \sum_{x\in M} |\alpha_x|^2$.  It is not necessary to know $\lucky$ in advance.
\end{prp}

\pfsketch
The description of the algorithm is given in \rf(alg:amplificationSearch).  Here, StepOfWalk is as defined in \rf(alg:amplification), and $M>1$ is a constant.

\begin{algorithm}
\caption{Search Using Quantum Amplitude Amplification}
\label{alg:amplificationSearch}
\algbegin
\state {{\bf function} FindMarkedElement({\bf quprocedures} Setup, Check, {\bf registers} $\qreg X$, $\qreg D$)  {\bf:}}
\tab
	\state {{\bf real} $i\gets 1$}
	\state {{\bf repeat :}}
	\tab
		\state {{\bf repeat} $\Omega(1)$ times  {\bf:}}
		\tab
			\state {$\reg {XD}\gets 0$}
			\state {Setup$(\qreg X, \qreg D)$}
			\state {{\bf repeat} $\lfloor i\rfloor $ times {\bf:} }
			\tab \state {StepOfWalk} \untab
			\state {\bf{measure} $\qreg X$}
			\state {{\bf if} Check($\reg X$, $\qreg D$) = 1 {\bf:}
			{\bf return} $\reg X$}  
		\untab
	\state {$i\gets i\cdot M$}
	\untab
\untab
\algend
\end{algorithm}

We freely use the notations from the proof of \rf(thm:amplification).  \rf(fig:amplification) suggests that if $t\approx \pi/(4\theta)$, then $\step^t\psi$ is close to $\marked'$.  Thus, measuring the register $\reg X$ gives a marked element with sufficiently large probability. 
Unfortunately, we do not know $\theta$ in advance, and so the required value of $t$.  We seek the correct number of iterations using the geometric series $1,M,M^2,\dots$.  There exists $i$ such that $M^i<t\le M^{i+1}$.  If $M$ is small enough, then $\step^{M^i}\psi$ is also close to $\marked'$, and the measurement of $\reg X$ after $M^i$ iterations of the quantum walk yields a marked element with $\Omega(1)$ probability.  On the other hand, the total number of iterations of the quantum walk is dominated by the last term in the geometric series.
\pfend

Assume that we have a quantum function $\qcheck'$ that, given an element $x\in X$ and the corresponding data vector $d_x$, $(1/3)$-evaluates whether $x$ is marked.  We would like to use it in the quantum amplification algorithm.
One solution is to reduce the error by Lemma~\ref{lem:circuitAmplify}.
The checking subroutine in \rf(alg:amplification) is executed $O(1/\sqrt{\lucky})$ times.  Thus, by \rf(lem:circuitRobust), it is enough to reduce the error to $O(1/\sqrt{\lucky})$ per execution.  This increases the total number of applications of $\qcheck'$ by the multiplicative factor of $O(\log\frac 1\lucky)$.  
However, there exists a way to perform amplitude amplification without increasing the cost.

\begin{thm}[Noisy Amplitude Amplification~\cite{hoyer:noisyGrover}]
\label{thm:noisyGrover}
Assume that we have a quantum set-up operation $\qsetup$ from \rf(defn:quantumSetup) and a subroutine $\qcheck'$ that, given element $x\in X$ and the corresponding data vector $d_x$, $(1/3)$-evaluates whether $x$ is marked.  Then, there exists a quantum circuit that detects the presence of a marked element with error $1/3$ and the total cost $O(\setup+\checking)/\sqrt{\lucky}$.
\end{thm}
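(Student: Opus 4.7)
The plan is to mimic the analysis of \rf(thm:amplification) but with the exact reflection $\qcheck$ replaced by a noisy reflection built from $\qcheck'$. First, I would invoke \rf(lem:circuitAmplify) to boost $\qcheck'$ to an error $\delta$ below some absolute constant (say $\delta=1/10$) using only $O(1)$ applications; call the resulting unitary $V$. Acting on $\ket X|x>\ket D|d_x>\ket F|0>\ket W|0>$, where $\reg F$ is a fresh flag qubit and $\reg W$ is an ancillary workspace, $V$ produces a state
\[
\ket X|x>\ket D|d_x>\otimes\Bigl(\sqrt{1-p_x}\,\ket F|c(x)>\ket W|g_x> + \sqrt{p_x}\,\ket F|\overline{c(x)}>\ket W|b_x>\Bigr),
\]
with $p_x\le\delta$ and $c(x)\in\{0,1\}$ the true marked indicator. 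The noisy checking reflection is then $\tilde\qcheck = V^{-1}Z_FV$, where $Z_F$ applies a $(-1)$ phase when $\reg F=1$, at cost $O(\checking)$, and the walk step is $\step = \mathrm{ReflectAbout}(\psi)\cdot\tilde\qcheck$.

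The substance of the proof is a spectral analysis of $\step$. In the exact case, $\step$ acts on the two-dimensional subspace $H=\mathrm{span}(\marked,\nonmarked)$ as rotation by $2\theta=2\arcsin\sqrt{\lucky}$. In the noisy case, $\tilde\qcheck$ instead leaks an amplitude of order $\sqrt{\delta}$ out of $H$ into a subspace where the flag $\reg F$ is entangled with the workspace $\reg W$. I would identify an invariant subspace $\widetilde H\supseteq H$ of $\step$ of constant dimension and show the following dichotomy: when $M=\emptyset$, the initial state $\psi$ remains a $1$-eigenvector of $\step$ (both $\tilde\qcheck$ and the reflection about $\psi$ fix it); when $M\ne\emptyset$, the overlap of $\psi$ with the eigenvectors of $\step$ having phases within $\sqrt{\lucky}$ of $0$ is $o(1)$, because inside $\widetilde H$ the dominant block of $\step$ is still conjugate to rotation by $\Theta(\theta)$, with the noise contributing only $O(\sqrt{\delta})$ in operator norm but no eigenvalue closer to $1$ than $\Omega(\sqrt\lucky)$. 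Running \rf(alg:phaseDetection) on $\step$ with gap parameter $\sqrt{\lucky}$ then distinguishes the two cases with constant probability using $O(1/\sqrt{\lucky})$ walk steps, giving total cost $O(\setup+\checking)/\sqrt{\lucky}$.

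The main obstacle is establishing the phase-gap part of this dichotomy without a $\log$ blow-up. A direct propagation of the per-step error via \rf(lem:circuitRobust) would require $\delta = O(\lucky)$, reintroducing the logarithmic overhead we are trying to avoid. The crucial observation that must replace it is that the ``bad'' branch of $V$ is coherently uncomputed by $V^{-1}$ after every phase flip, so the noise cannot accumulate coherently across iterations; formally, the off-diagonal blocks of $\step$ between $H$ and its complement in $\widetilde H$ produce only a constant-factor rescaling of the effective rotation angle and a constant-factor shrinking of the phase gap, both of which are absorbed by adjusting the internal constant in \rf(alg:phaseDetection). This non-accumulation argument — essentially the same phenomenon later captured by Reichardt's effective spectral gap lemma — is where the substance of~\cite{hoyer:noisyGrover} lies, and I expect it to be the only nontrivial step; everything else is bookkeeping parallel to the proof of \rf(thm:amplification).
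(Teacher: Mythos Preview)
The paper does not prove this theorem; it is stated as a black-box result from~\cite{hoyer:noisyGrover} and used later (e.g., in the proofs of \rf(prp:distinctnessOld) and \rf(thm:mnrs)) without any argument. So there is nothing in the paper to compare your attempt against.

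On the substance of your sketch, there is a concrete error in the dichotomy you set up. You claim that when $M=\emptyset$, the initial state $\psi$ (tensored with $\ket F|0>\ket W|0>$) is a $1$-eigenvector of $\step$ because ``both $\tilde\qcheck$ and the reflection about $\psi$ fix it''. But $\tilde\qcheck = V^{-1}Z_FV$ does \emph{not} fix this state: the error in $\qcheck'$ is two-sided, so even for an unmarked $x$ the state $V\bigl(\ket X|x>\ket D|d_x>\ket F|0>\ket W|0>\bigr)$ has, in your own notation, a nonzero $\sqrt{p_x}$ component on $\ket F|1>$, which $Z_F$ negates. Hence $\tilde\qcheck\,\psi\ne\psi$ even when nothing is marked, and the phase-detection argument as you state it collapses: both the marked and unmarked cases leak away from the $1$-eigenspace, and you have given no reason why phase detection separates them. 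The actual content of~\cite{hoyer:noisyGrover} is precisely to control this leakage in both cases and show it changes the acceptance probability only by a constant; your sketch asserts this outcome but supplies no mechanism.

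A second gap is the claim that the $\step$-invariant subspace $\widetilde H$ containing $\psi$ has \emph{constant} dimension. The garbage states $\ket W|g_x>$ and $\ket W|b_x>$ depend on $x$, so the smallest invariant subspace will in general have dimension growing with $|X|$, and a finite-dimensional perturbation argument of the kind you outline is not available. Your closing appeal to a non-accumulation / effective-spectral-gap phenomenon is the right intuition, but note that $V^{-1}$ does \emph{not} ``coherently uncompute the bad branch'' once $Z_F$ has acted (the state after $Z_F$ is no longer in the image of $V$ on $\ket F|0>\ket W|0>$), so that particular justification does not go through either.
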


\subsection{Applications}
\label{sec:amplificationExamples}
A very simple application of amplitude amplification is exhibited by the Grover search for calculating the OR function.  

\begin{prp}[Grover Search]
\label{prp:Grover}
The OR problem on $\vars$ variables can be solved in $O(\sqrt{\vars})$ quantum queries.  Moreover, an index $j\in [\vars]$, such that $z_j = 1$, can be found in $O(\sqrt{\vars/k})$ quantum queries, where $k=|z|$ is the Hamming weight of $z$.
\end{prp}

\pfstart
The Grover search is a simple special case of amplitude amplification, where there is no data register, $\qsetup$ is given by the Uniform Superposition procedure from \rf(exm:uniform), and $\qcheck$ is the input oracle itself.  The costs are $\setup=0$ and $\checking=1$, respectively.  If there are $k$ ones in the input string, then $\lucky = k/\vars$, hence, \refalg{amplificationSearch} finds a marked element in $O(\sqrt{\vars/k})$ queries.
\pfend

The OR function is conjectured to provide the greatest possible separation between the quantum and the deterministic query complexities for total Boolean functions.

\begin{cor}
\label{cor:findingAllOnes}
There exists a quantum algorithm that, given an oracle access to a string $z\in\{0,1\}^\vars$ outputs the string after $O(\sqrt{\vars k})$ queries, where $k$ is the Hamming weight of $z$.
\end{cor}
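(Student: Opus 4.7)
The plan is to apply the Grover search of \rf(prp:Grover) iteratively, discovering the ones of $z$ one at a time. I maintain a classical set $S \subseteq [\vars]$ of indices already known to contain a one. At each stage I invoke Grover search on a virtual input whose $j$-th bit equals $1$ iff $j \notin S$ and $z_j = 1$; this virtual oracle is obtained by querying $z$ once and ANDing with a classical lookup in $S$, so it costs one query per access and does not inflate the query complexity. When Grover returns a candidate index $j^\ast$, I verify $z_{j^\ast} = 1$ with one additional query and, on success, adjoin $j^\ast$ to $S$. The procedure terminates when a stage fails to produce a new verified index, at which point $S$ equals the support of $z$.

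For the cost analysis, after $i$ successful stages the modified oracle has $\vars - i$ live positions containing exactly $k - i$ ones, so the second clause of \rf(prp:Grover) bounds the cost of the $(i{+}1)$-th stage by $O\bigl(\sqrt{(\vars-i)/(k-i)}\bigr) = O\bigl(\sqrt{\vars/(k-i)}\bigr)$ queries. Summing over all $k$ successful stages gives
\[
\sum_{i=0}^{k-1} O\!\left(\sqrt{\vars/(k-i)}\right)
\;=\; O(\sqrt{\vars}) \sum_{j=1}^{k} \frac{1}{\sqrt{j}}
\;=\; O(\sqrt{\vars k}),
\]
where I used the standard estimate $\sum_{j=1}^{k} j^{-1/2} = \Theta(\sqrt{k})$. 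The terminating (unsuccessful) stage contributes only one additional Grover call, whose cost is absorbed in the same sum.

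The main obstacle is error control: each Grover stage succeeds only with constant probability and there are $k$ stages in sequence, so naive amplification via \rf(lem:circuitAmplify) would introduce an unwanted $\log k$ overhead. The cleaner approach is to use the one-query verification step above as a Las Vegas filter: if $z_{j^\ast} \neq 1$, simply re-run the Grover search for that stage. Since the per-stage success probability is bounded below by a constant, the expected number of retries per stage is $O(1)$ and the expected total query cost remains $O(\sqrt{\vars k})$; a standard geometric tail bound on the total retry count converts this into a worst-case bounded-error guarantee at the same asymptotic cost. Finally, because \rf(prp:amlificationFind) requires no prior knowledge of $\lucky$, the algorithm runs each Grover stage without needing to know $k$ or the number of remaining ones in advance.
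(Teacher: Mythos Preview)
Your proposal is correct and follows essentially the same approach as the paper: iteratively apply Grover search to find the remaining ones, summing $O(\sqrt{\vars/(k-i)})$ over $i=0,\dots,k-1$ to obtain $O(\sqrt{\vars k})$. The paper presents only a proof sketch and omits the error-control discussion you supply; your verification-and-retry argument is a reasonable way to fill that gap.
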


\pfsketch
The algorithm works as follows.  An occurrence of a symbol 1 in $z$ can be found in $O(\sqrt{\vars/k})$ quantum queries by \rf(prp:Grover).  Put it aside, and search for another occurrence in $O(\sqrt{(\vars-1)/(k-1)})$ quantum queries.  Repeat this procedure until all $k$ ones are found.  This requires
\[
\sum_{i=0}^{k-1} \sqrt{\frac{\vars-i}{k-i}} \le \sqrt{\vars}\int_{0}^k \frac{\mathrm{d}x}{\sqrt{x}} = 2\sqrt{\vars k}
\]
quantum queries altogether.
\pfend

From \rf(cor:findingAllOnes), and by negating the input string if necessary, we get the following result:

\begin{cor}
\label{cor:kThresholdUpper}
The $k$-threshold function on $\vars$ variables can be evaluated in $O\sA[\sqrt{k(\vars-k+1)}]$ quantum queries.
\end{cor}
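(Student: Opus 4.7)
The plan is to combine \rf(cor:findingAllOnes) with a complementation trick and early termination. First, by replacing $z$ by its bit-wise complement $\bar z$ (and flipping the output accordingly) I may assume without loss of generality that $k \le \lceil (\vars+1)/2\rceil$. Under this assumption $\vars - k + 1 \ge (\vars+1)/2 \ge \vars/2$, hence $\sqrt{\vars k} \le \sqrt{2}\cdot\sqrt{k(\vars-k+1)}$; so it suffices to exhibit a quantum algorithm whose query cost is $O(\sqrt{\vars k})$.

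For such an algorithm, I would run the ``find-all-ones'' procedure used in the proof of \rf(cor:findingAllOnes) --- iteratively applying the Grover search of \rf(prp:Grover) to locate a new $1$, then masking the discovered index from the oracle --- but with two modifications. First, terminate and output $1$ as soon as $k$ ones have been located. Second, whenever Grover fails to return a new one at iteration $i$, invoke the standard Grover-based OR-decision on the remaining $\vars - i + 1$ unmasked positions, which by \rf(prp:Grover) certifies with bounded one-sided error in $O(\sqrt{\vars - i + 1})$ queries that no $1$ is left; if the certifier accepts, output $0$.

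The cost analysis follows the same telescoping estimate as in \rf(cor:findingAllOnes). If the true Hamming weight $w$ of $z$ satisfies $w \ge k$, then the $k$ successive Grover searches all succeed, and their total query cost is bounded by
\[
\sum_{i=0}^{k-1}\sqrt{\frac{\vars-i}{w-i}} \le \sqrt{\vars}\int_{0}^{k}\frac{\mathrm{d}x}{\sqrt{x}} \;=\; 2\sqrt{\vars k}.
\]
If instead $w \le k-1$, the first $w$ searches cost $O(\sqrt{\vars w}) = O(\sqrt{\vars k})$ by the same estimate, and the concluding OR-decision step contributes at most $O(\sqrt{\vars - w}) = O(\sqrt{\vars})$, absorbed into $O(\sqrt{\vars k})$. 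Combined with the identity $O(\sqrt{\vars k})=O(\sqrt{k(\vars-k+1)})$ established in the first paragraph, this yields the claimed upper bound.

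The principal technical obstacle is controlling the cumulative failure probability across the up to $O(k)$ sequential amplitude-amplification sub-calls. A naive union bound would require boosting each sub-call to error $O(1/k)$ via \rf(lem:circuitAmplify), introducing an unwanted $O(\log k)$ multiplicative overhead. I would sidestep this by appealing to noisy amplitude amplification (\rf(thm:noisyGrover)): the individual Grover / OR-decision calls are treated as approximate checkers with constant error inside a single outer amplitude amplification, so that the overall query cost remains $O(\sqrt{\vars k})$ without any logarithmic blow-up.
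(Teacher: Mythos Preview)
Your approach is the same as the paper's one-line proof, which simply says ``From \rf(cor:findingAllOnes), and by negating the input string if necessary.'' You have correctly filled in the details of the complementation step and the early-termination variant of the find-all-ones routine.

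Your identification of the error-accumulation issue is apt (the paper's proof of \rf(cor:findingAllOnes) is only a sketch and does not address it), but your proposed remedy via \rf(thm:noisyGrover) does not fit. That theorem concerns a single amplitude amplification whose \emph{checking subroutine} is noisy; it does not apply to a sequence of independent Grover searches whose later inputs depend on earlier outputs. There is no natural ``outer'' amplification in which the individual searches play the role of an approximate checker. A cleaner fix is: after each Grover call, verify the returned index with one extra query, so false positives are impossible; then run the whole procedure under a fixed query budget $C\sqrt{\vars k}$ and output $0$ if the budget is exhausted before $k$ ones are confirmed. When $|z|\ge k$ the expected cost is $O(\sqrt{\vars k})$ by your telescoping estimate, so Markov's inequality bounds the probability of exceeding the budget by an arbitrarily small constant; when $|z|<k$ you can never confirm $k$ ones, so the output $0$ is always correct. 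This removes the $\log k$ overhead without appealing to \rf(thm:noisyGrover).
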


As another application of quantum amplitude amplification, we consider the collision problem.
Recall from \rf(defn:collision) that, in this problem, we have to distinguish whether the input string is 1-to-1 or 2-to-1.

\begin{prp}
\label{prp:collision}
The collision problem can be solved with bounded error in $O(\vars^{1/3})$ quantum queries and $\tilde O(\vars^{1/3})$ quantum time.  The algorithm uses a QRACM array of $q$-qudits of size $O(\vars^{1/3})$ and $\tilde O(1)$ other quantum registers.  The $\tilde O$ notation suppresses factors polylogarithmic in $\vars$ and $q$.
\end{prp}

\begin{algorithm}
\caption{Quantum Algorithm for the Collision Problem}
\label{alg:collision}
\algbegin
\state {{\bf function} CollisionProblem({\bf quprocedure} InputOracle) {\bf:}}
\tab
  \state {{\bf uses} {\bf QRACM array} $\reg D\elem[r]$ of $q$-qudits, {\bf integer} $i$}
  \state {{\bf attach} $(\vars-r)$-qudit $\qreg X$ }
  \state {{\bf for} $i = 1,2,\dots,r$ {\bf :} \label{line:collision:loop}}
  \tab
  	\state{$\reg D\elem[i] \gets$ InputOracle$(\vars-i)$}
  \untab
  \state {sort($\reg D$)}
  \state {{\bf if} there are equal elements in $\reg D$ {\bf : return } 1}
  \state {{\bf return} GroverSearch(Check, $r/\vars$, $\qreg X$) \label{line:collision:grover}}
  \state {\ }
  \state {{\bf qufunction} Check($\qreg X, \qreg O$) {\bf with} no error {\bf:}}
	\tab
		\state{{\bf attach} $q$-qudit $\qreg A$}
		\state{$\qreg A\qgets$ InputOracle$(\qreg X)$}
		\state{$\qreg O \qgets$ binarySearch$(\reg D, \qreg A)$ \label{line:collision:binary}}
	\untab
\untab
\algend
\end{algorithm}

\pfstart
The details are given in \rf(alg:collision).  In the beginning, the last $r$ elements of the input string are queried, where $r$ is a parameter to be specified later.  If there are equal elements among them, accept the input string.  Otherwise, search through the remaining elements for an element equal to the one in $\reg D$.  In the positive case, there are exactly $r$ marked elements among the remaining $\vars-r$ elements, hence, the success probability is $\lucky \ge r/\vars$.  

Note that since $\reg X$ is an $(\vars-r)$-qudit, Grover's search in \rf(line:collision:grover) indeed searches only among the elements outside $\reg D$.  The binarySearch subroutine in \rf(line:collision:binary) is a quantum analogue of the classical binary search that returns 1 if the value of $\reg A$ is in the sorted array $\reg D$.  Here, we apply \rf(lem:quantumOfDet).  It is not hard to check that the quantum analogue runs in logarithmic time and uses additional logarithmic space.

The loop in \rf(line:collision:loop) requires $r$ queries to the input string.  The Grover search calls the checking subroutine $\sqrt{1/\lucky} = \sqrt{\vars/r}$ times, and each call costs 2 queries (the oracle is executed in reverse while making the checking subroutine coherent, cf. \rf(lem:coherent)). The total query complexity of the algorithm is
\(
O(r+\sqrt{\vars/r})
\)
that attains its optimal value $O(\vars^{1/3})$ when $r = \vars^{1/3}$.  The time complexity is the same up to logarithmic factors.  The claim on the space usage is trivial.
\pfend

A slight modification of the previous algorithm can be used to solve the element distinctness problem.
In this problem, we are given a string $z\in[q]^\vars$, and the task is to detect whether there are two equal elements in it.  Note, however, that this algorithm is {\em not} the best possible: we will see a better one in \rf(sec:mnrs).  Also, due to simplicity, we restrict our analysis to query complexity.  However, it is possible to implement the algorithm time-efficiently.

\begin{prp}
\label{prp:distinctnessOld}
The element distinctness problem on $\vars$ variables can be solved in $O(\vars^{3/4})$ quantum queries.
\end{prp}

\pfstart
See \rf(alg:distinctnessSetup) for the description.  The algorithm is similar to the one in  \rf(prp:collision), but instead of choosing one fixed subset of $r$ elements, we pick one at random.  Moreover, we do this quantumly, using amplitude amplification.  

\begin{algorithm}
\caption{Set-up and Checking Procedures for the Element Distinctness Problem}
\label{alg:distinctnessSetup}
\algbegin
\state {{\bf function} ElementDistinctness({\bf quprocedure} InputOracle) {\bf:}}
\tab
  \state {{\bf attach} {\bf QRAQM arrays} $\reg X\elem[r]$ of $\vars$-qudits, $\reg D\elem[r]$ of $q$-qudits}
	\state {{\bf return} AmplitudeAmplification(Setup, Check, $r/\vars$, $\qreg X$, $\qreg D$)}
	\state {\ }

\state {{\bf quprocedure} Setup ({\bf registers} $\qreg X, \qreg D$) {\bf:}}
\tab
	\state {{\bf uses integer} $i$}
	\state {transform $\ket X|0>\ket D|0>$ into $\alpha\sum_{S\in X} \ket X|S>\ket D|0>$}
	\state {{\bf for} $i = 1,\dots,r$ :}
	\tab
		\state { $\qreg D\elem[i] \qgets $ InputOracle($\qreg X\elem[i]$) }
	\untab
\untab
\state {\ }
\state {{\bf qufunction} Check({\bf registers} $\qreg X, \qreg D$, $\qreg O$) {\bf with} 1-sided error 1/2 {\bf:}}
\tab
  \state {{\bf attach} $\vars$-qudit $\qreg J$ }
  \state { GroverSearch(InternalCheck, $\qreg J$, $\qreg O$) }
\state {\ }
\state {{\bf qufunction} InternalCheck({\bf registers} $\qreg J$, $\qreg O$) {\bf with} no error {\bf:}}
\tab
  \state { {\bf attach} $q$-qudit $\qreg E$, qubit $\qreg b$  }
  \state { $\qreg b \qgets \bigwedge_{i\in [r]} (\qreg J \ne \qreg X\elem[i] )$ \label{line:distinctnessSetup:wedge} }
  \state { {\bf conditioned on} $\qreg b = 1$ {\bf:}}
  \tab
  	\state { $\qreg E \qgets$ InputOracle($\qreg J$) }
  	\state { $\qreg O \qgets \bigvee_{i\in [r]} (\qreg E = \qreg D\elem[i])$ \label{line:distinctnessSetup:vee} }
  \untab
\untab \untab\untab
\algend
\end{algorithm}

Formally, we have two QRAQM arrays $\reg X$ and $\reg D$ of length $r$.  The elements of $\reg X$ are $\vars$-qubits, while $\reg D$ stores $q$-qubits.  The set $X$, on which amplitude amplification is performed, consists of all $r$-subsets of $[\vars]$.  For a subset $S\subseteq[\vars]$ of size $r$, let $\ket X|S>$ be some fixed representation of the indices in $S$, and $\ket D|\str_S>$ be some fixed representation of the corresponding elements in the input string $\str$.
We choose the following representation: $\ket X|S> = \ket X|s_1,s_2,\dots, s_r>$ where $s_1<s_2<\cdots<s_r$ are the elements of $S$, and $\ket D|\str_S> = \ket D|\str\elem[s_1], \dots ,\str\elem[s_r]>$.

The set-up procedure prepares the initial state $\psi$ from~\refeqn{quantumSetup}.  All amplitudes in the initial state equal $\alpha = {\vars\choose r}^{-1/2}$.  A subset $S\in X$ is marked iff it contains exactly one element from a pair of equal elements.  This is tested using the Grover search on all elements of $[\vars]$, where an element $j\in[\vars]$ is marked if $j\notin S$ but $\str_j = \str_i$ for some $i\in S$.  The latter condition is tested using the InternalCheck subroutine.  The qubit $\reg b$ on \rf(line:distinctnessSetup:wedge) is set to 1 iff the index $j$ in $\reg J$ is different from all the elements of $S$ stored in $\reg X$.  Similarly, the output bit is set to 1 in \rf(line:distinctnessSetup:vee) iff $\reg b$ is set to 1 and the content of $\str_j$ is equal to an element from $\reg D$.  Both these checks are made quantumly using \rf(lem:quantumOfDet).

The InternalCheck procedure uses 2 queries, hence, the Check subroutine uses $O(\sqrt{\vars})$ queries by \rf(prp:Grover).  The Setup procedure uses $r$ queries.  Let us calculate the fraction $\eps$ of marked elements in amplitude amplification.  Let $\{a,b\}$ be a pair of equal elements in a positive input.  Thus, $S\in X$ is marked if it contains $a$ and does not contain $b$ (there can be other possibilities as well).  It is easy to see that an $\Omega(r/\vars)$ fraction of all the $r$-subsets of $[\vars]$ satisfy this condition.  By \rf(thm:noisyGrover), the total query complexity of the algorithm is $O(r+\sqrt{\vars})\sqrt{\vars/r}$.  It is optimised to $O(\vars^{3/4})$ when $r=\sqrt{\vars}$.
\pfend

\section{Quantum Walks}
\label{sec:walkQuantum}
The purpose of this section is to develop a quantum analogue of \rf(alg:forbearing).  In \rf(sec:szegedy), we prove a theorem about composition of two reflections that will be of importance later in the thesis.  In \rf(sec:mnrs), we describe this algorithm, and in \rf(sec:mnrsApplications), we give some applications.

\subsection{Composition of two Reflections}
\label{sec:szegedy}
A step of the quantum walk in the proof of \refthm{amplification} is defined as a composition of two reflections.  In this section, we study such compositions in full generality.  Assume $A$ and $B$ are matrices with the same number of rows, and each having orthonormal columns.  Let $\Pi_A = A A^*$ and $\Pi_B = B B^*$ be the projectors onto $\im(A)$ and $\im(B)$, respectively.  $R_A = 2\Pi_A-I$ and $R_B = 2\Pi_B-I$ are the reflections about the corresponding subspaces, and let $U = R_B R_A$ be their composition.  
Finally, let $D = D(A, B) = A^* B$.  This matrix is known as the {\em discriminant matrix}.

\begin{lem}[Spectral Lemma] \label{lem:szegedy}
In the above notations, all the singular values of $D$ are at most~$1$.  Let $\cos\theta_1, \ldots, \cos\theta_\ell$ be all the singular values of $D$ lying in the open interval $(0,1)$ counted with their multiplicity.  Then, the following is a complete list of the eigenspaces and eigenvalues of $U$: 
\itemstart
\item The $1$-eigenspace is $(\im(A)\cap \im(B))\oplus(\im(A)^\perp \cap \im(B)^\perp)$.  Moreover, $\im(A)\cap \im(B)$ coincides with the image, under the action of $A$, of the set of left singular vectors of $D$ with singular value 1.  Also, $\im(A)\cap \im(B)$ coincides with the image, under the action of $B$, of the set of right singular vectors of $D$ with singular value 1.
\item The $(-1)$-eigenspace is $(\im(A)\cap \im(B)^\perp)\oplus(\im(A)^\perp \cap \im(B))$.  Moreover, $\im(A)\cap \im(B)^\perp = A(\ker(D^*))$ and $\im(A)^\perp \cap \im(B) = B(\ker(D))$.
\item The orthogonal complement of the above subspaces is decomposable into a direct sum of $\ell$ two-dimensional pairwise orthogonal invariant subspaces $\{S_j\}_{j\in[\ell]}$ of $U$.  For all $j\in[\ell]$, both $S_j\cap \im(A)$ and $S_j\cap\im(B)$ are one-dimensional with the angle $\theta_j$ between them, and the eigenvalues of $U$ in $S_j$ are $\ee^{\pm2 \ii \theta_j}$.
\itemend
\end{lem}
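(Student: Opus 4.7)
The plan is to reduce everything to the singular value decomposition of $D=A^{*}B$. First, $\|D\|\le\|A^{*}\|\cdot\|B\|\le 1$ because $A$ and $B$ have orthonormal columns, so every singular value of $D$ lies in $[0,1]$. Fix a compact SVD $D=\sum_{j}\cos\theta_{j}\,p_{j}q_{j}^{*}$, and extend $\{p_{j}\}$ by an orthonormal basis of $\ker(D^{*})$ and $\{q_{j}\}$ by an orthonormal basis of $\ker(D)$. Setting $a_{j}:=Ap_{j}$ and $b_{j}:=Bq_{j}$, one obtains orthonormal bases of $\im(A)$ and $\im(B)$ respectively, with the single crucial inner-product identity $\langle a_{j},b_{k}\rangle=p_{j}^{*}Dq_{k}=\delta_{jk}\cos\theta_{j}$.

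Next I would split by the value of $\cos\theta_{j}$. If $\cos\theta_{j}=1$, the equality case of Cauchy--Schwarz applied to two unit vectors with inner product $1$ forces $a_{j}=b_{j}\in\im(A)\cap\im(B)$; conversely, every unit vector $v\in\im(A)\cap\im(B)$ writes as $Aa$ with $\|a\|=\|D^{*}a\|=1$ (using $\Pi_{B}v=v$), exhibiting $a$ as a left $1$-singular vector of $D$. This yields both claimed characterisations of $\im(A)\cap\im(B)$, and symmetrically via $B$ on the right singular side. If $p_{j}\in\ker(D^{*})$, then $B^{*}Ap_{j}=0$ places $a_{j}\in\im(A)\cap\im(B)^{\perp}$; the reverse inclusion is immediate by writing $v=Aa$ with $a=A^{*}v$, giving $A(\ker D^{*})=\im(A)\cap\im(B)^{\perp}$, and symmetrically $B(\ker D)=\im(B)\cap\im(A)^{\perp}$. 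One then verifies directly that $R_{B}R_{A}=I$ on $(\im(A)\cap\im(B))\oplus(\im(A)^{\perp}\cap\im(B)^{\perp})$ and $R_{B}R_{A}=-I$ on $(\im(A)\cap\im(B)^{\perp})\oplus(\im(A)^{\perp}\cap\im(B))$, by noting that on each summand one reflection fixes and the other either fixes or negates, yielding the stated $(\pm 1)$-eigenspaces.

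For each $j$ with $0<\cos\theta_{j}<1$ set $S_{j}:=\spn\{a_{j},b_{j}\}$. The identity $\Pi_{A}b_{j}=AA^{*}Bq_{j}=ADq_{j}=\cos\theta_{j}\,a_{j}$ shows $R_{A}b_{j}=2\cos\theta_{j}\,a_{j}-b_{j}\in S_{j}$ and $R_{A}a_{j}=a_{j}$, so $R_{A}$ preserves $S_{j}$; symmetrically $R_{B}$ does, hence so does $U$. The inner-product table above immediately gives pairwise orthogonality of different $S_{j}$. Inside the Euclidean plane $S_{j}$, $R_{A}$ is the reflection about the line through $a_{j}$ and $R_{B}$ the reflection about the line through $b_{j}$, two lines meeting at angle $\theta_{j}$, so their composition is the rotation by $2\theta_{j}$ with eigenvalues $\ee^{\pm 2\ii\theta_{j}}$.

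The main obstacle will be the orthogonal-decomposition bookkeeping: verifying that the listed subspaces are pairwise orthogonal and together exhaust the ambient Hilbert space. This reduces to the observation that $\im(A)+\im(B)$ is spanned by the union of the $\{a_{j}\}$ and $\{b_{j}\}$, that the SVD sorts these vectors into exactly the four types of components used above (one-dimensional subspaces with $\cos\theta_{j}=1$, one-dimensional subspaces coming from $\ker(D^{*})$ or $\ker(D)$, and the two-dimensional $S_{j}$ with $0<\cos\theta_{j}<1$), and that the orthogonal complement $\im(A)^{\perp}\cap\im(B)^{\perp}$ absorbs everything else.
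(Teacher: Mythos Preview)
Your proposal is correct and follows essentially the same route as the paper: both proofs set up the SVD of $D$, push the singular vectors through $A$ and $B$ to get the key identity $\Pi_{A}Bq_{j}=\cos\theta_{j}\,Ap_{j}$ (your $\Pi_{A}b_{j}=\cos\theta_{j}a_{j}$), identify the $\pm1$-eigenspaces from the extreme cases $\sigma\in\{0,1\}$, and read off the two-dimensional rotation blocks for intermediate $\sigma$. Your presentation is slightly tidier in places---using Cauchy--Schwarz equality for the $\sigma=1$ case and the single inner-product table $\langle a_{j},b_{k}\rangle=\delta_{jk}\cos\theta_{j}$ to get pairwise orthogonality at once---but there is no substantive difference in strategy.
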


\pfstart
By definition, the set of singular values of $D$ is equal to the set of non-zero eigenvalues of $D^*D = B^*AA^*B$, that is equal to the set of non-zero eigenvalues of $AA^*BB^* = \Pi_A\Pi_B$ (cf. \rf(sec:linearAlgebra)).  It is obvious that a composition of two projectors cannot have an eigenvalue greater than 1 in absolute value.  This proves the first statement.

A vector $Bw$ is in $\im(A)^\perp$ if and only if $A^*Bw=0$, hence, $\im(A)^\perp\cap\im(B) = B(\ker(D))$.  Similarly, $\im(A)\cap\im(B)^\perp = A(\ker(D^*))$.  It is easy to see that $\im(A)^\perp\cap\im(B)$ and $\im(A)\cap\im(B)^\perp$ are $(-1)$-eigenspaces of $U$, and that $\im(A)^\perp\cap\im(B)^\perp$ is an eigenvalue~1 eigenspace of $U$.  This covers the second bullet in the statement of the lemma, and a half of the first one.  It remains to detect the eigenvalues of $U$ inside $A(\ker(D^*)^\perp)+B(\ker(D)^\perp)$, that equals to the space spanned by the images of the left and right singular vectors of $D$ under the action of $A$ and $B$, respectively.

Let $v$, $w$ be a pair of left and right singular vectors of $D$ with singular value $\sigma$.  Thus, 
$A^*Bw = \sigma v$, and $B^*Av = \sigma w$.  From this, we get
\begin{equation}
\label{eqn:szegedy1}
\Pi_A Bw = AA^*Bw = \sigma Av\qquad\text{and}\qquad \Pi_BAv = BB^* Av = \sigma Bw.
\end{equation}
Hence,
\begin{equation}
\label{eqn:szegedy2}
\Pi_B\Pi_A Bw = \sigma \Pi_B Av =  \sigma^2 Bw.
\end{equation}
If $\sigma = 1$, then $Bw = Av\in \im(A)\cap\im (B)$ is a 1-eigenvector of $U$.  This corresponds to the second half of the first bullet in the statement of the lemma.  

Now, we prove the statement in the third bullet.  Let $v_1,\dots,v_\ell$ be an orthonormal set of the left singular vectors of $D$ with $v_j$ having singular value $\sigma_j = \cos\theta_j$.  Let $w_1,\dots,w_\ell$ be the corresponding right singular vectors.  
Let $S_j$ be the subspace spanned by $Av_j$ and $Bw_j$.  Using~\refeqn{szegedy2} and that $\sigma_j<1$, we have that $Bw_j\notin \im(A)$, hence $S_j$ is two-dimensional.  Since $A$ and $B$ are isometries, $Av_i\perp Av_j$ and $Bw_i\perp Bw_j$ for all $i\ne j$.  Also,
\[
\ip<Av_i, Bw_j> = \ip<v_i, A^*Bw_j> = \ip<v_i, \sigma_j v_j> = 0.
\]
Hence, $S_j$ are pairwise orthogonal.  Finally, from~\refeqn{szegedy1}, we can see that $S_j$ is invariant for $U = (2\Pi_B-I)(2\Pi_A-I)$.

Because of the orthogonality of $S_j$, operators $\Pi_A$ and $\Pi_B$, restricted to $S_j$, coincide with the projectors on $Av_j$ and $Bw_j$, respectively.  Hence, by~\refeqn{szegedy2}, we get that the angle between $Av$ and $Bw$ is $\arccos\sqrt{\sigma^2} = \theta$.  Thus, $U$ acts in $S$ as the rotation on the double angle $2\theta$. And, this operation has eigenvalues $\ee^{\pm 2\ii\theta}$.
\pfend

\subsection{Algorithm}
\label{sec:mnrs}
\mycommand{smallstart}{\start} 
\newcommand{\oket}[1]{\bar{e}^{\phantom{*}}_{#1}}
\mycommand{totalspace}{{\widetilde{\cH}}}
The update operation and the matrix $P$ were used in \rf(alg:forbearing) to avoid execution of the set-up procedure on each iteration of the loop.  Similarly, we use quantum analogues of the update and diffusion operations to avoid execution of the quantum set-up procedure at every step of the quantum walk.  That is, we implement the reflection about $\psi$ from~\rf(eqn:quantumSetup) in a different way.

Similarly to \rf(defn:update), the reflection is defined using some stochastic matrix $P$ such that $(\stationary_x)$ from \rf(defn:quantumSetup) is its stationary distribution.
We use \refthm{detection} to detect the eigenvector and perform the reflection about it.  This theorem, however, is not directly applicable to $P$, because $P$ is not unitary.  This section is mostly devoted to developing a unitary counterpart of $P$.


\mycommand{tP}{\tilde P}

It has become more conventional to consider quantum walk on a bipartite graph $G$ with parts $X$ and $Y$.  Recall from \rf(sec:walkClassical) that $w_{xy}$ are the weights of the edges of $G$, and the probability of going from $y$ to $x$ is given by $\tilde p_{xy} = w_{xy}/w_y$, where $w_y = \sum_{x\in N(y)} w_{xy}$.  Since the graph is bipartite, the corresponding matrix $\tP= (\tilde p_{xy})$ looks like
\begin{equation}
\label{eqn:tP}
\tP = \begin{pmatrix}
0 & Q_2 \\
Q_1 & 0
\end{pmatrix},
\end{equation}
where $Q_1 = \tP\elem[Y,X]$ is the part of the matrix representing the transitions from $X$ to $Y$, and $Q_2 = \tP\elem[X,Y]$ is representing the transitions in the opposite direction.  The second iteration of the walk, $\tP^2$, breaks down into a random walk on $X$ given by $Q_2Q_1$, and a random walk on $Y$ given by $Q_1Q_2$.  {\em We identify the walk $P$ from \rf(sec:walkClassical) with $Q_2Q_1$.}

Informally, $X$ is the ``main'' set, and $Y$ is a ``supplementary'' set required to implement the walk $P$ on $X$.  The definitions of the quantum set-up and check procedures from \rf(sec:amplification) carry over to this case as applied to the elements of $X$ only.  We again require that $P$ is aperiodic, and $(\sigma_x)$ from \rf(defn:quantumSetup) is its stationary distribution.
The condition on $P$ to be represented by $Q_2Q_1$ is not restrictive.  One may always define $Y=X$ and $Q_2=Q_1=P$.  Thus, a step of the walk becomes $P^2$, and, for an aperiodic random walk, that merely reduces the number of iterations by a factor of 2.


Let us consider the diffusion operation.  For $x\in X$, define the unit vector $\phi_x\in \hilbert Y$; and, for $y\in Y$, define $\phi_y\in \hilbert X$ as follows:
\begin{equation}
\label{eqn:phixandy}
\phi_x = \sum_{y\in N(x)} \sqrt{Q_1\elem[y,x]}\; \ket Y|y>,
\qquad\text{and}\qquad
\phi_y = \sum_{x\in N(y)} \sqrt{Q_2\elem[x,y]}\; \ket X|x>.
\end{equation}
Since $\tP$ is a stochastic matrix, both $\phi_x$ and $\phi_y$ are unit vectors.

\mycommand{qdiffuse}{\circuit D}
\mycommand{qupdate}{\circuit U}

\begin{defn}[Quantum diffusion]
There are two quantum diffusion operations: from $X$ and from $Y$.  The first one, denoted $\qdiffuse_1$, transforms $\ket X|x>\ket Y|0>$ into $\ket X|x>\ket Y|\phi_x>$ for all $x\in X$.  The second one, $\qdiffuse_2$, transforms $\ket X|0>\ket Y|y>$ into $\ket X|\phi_y>\ket Y|y>$ for all $y\in Y$.
\end{defn}

For the quantum update operation, we take into account the non-cloning theorem (\rf(obs:copy)), and keep only one copy of the data register.

\begin{defn}[Quantum Update]
\label{defn:quantumUpdate}
The quantum update operation is a unitary $\qupdate$ that transforms $\ket X|x>\ket Y|y> \ket D|d_x>$ into $\ket X|x>\ket Y|y> \ket D|d_y>$ for all $x\in X$ and $y\in N(x)$.  The cost of the operation is $\update$.
\end{defn}

We only need one instance of the update operation, because the reverse transformation can be performed by $\qupdate^{-1}$.

There are two different algorithms for implementing quantum walks, that correspond to Algorithms~\ref{alg:walksimple} and~\ref{alg:forbearing}, respectively.

\begin{thm}[Szegedy quantum walk]
\label{thm:walkSzegedy}
Assume $X$, $Y$, $Q_1$, and $Q_2$ are as above, and $P=Q_2Q_1$ is an aperiodic reversible random walk.  Let the quantum set-up and checking procedures be as in \rf(sec:amplification), and the quantum diffusion and update operations be as above.
Then, there exists a quantum algorithm detecting the presence of a marked element with error $1/3$ and the total cost $O\s[\setup + \sqrt{H} (\checking + \update)]$, where $H$ is the hitting time of the random walk corresponding to $P$.
\end{thm}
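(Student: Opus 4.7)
The plan is to execute amplitude amplification as in \rf(thm:amplification), but to avoid paying $\setup$ at every iteration by implementing the reflection about $\psi$ with a quantum walk, and to fold checking directly into the walk step in the Szegedy fashion.  The entire detection then reduces to a single phase-detection call (\rf(thm:detection)) on a marked walk operator $\tilde W$ applied to the lifted initial state.

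First I would construct the walk.  Working in $\hilbert X \otimes \hilbert Y \otimes \hilbert D$, use $\qdiffuse_1, \qdiffuse_2, \qupdate$ to define, for each $x \in X$, the unit vector $\psi_x$ whose $\reg X$-register stores $x$ and whose $\reg Y\reg D$-registers coherently hold the superposition $\phi_x$ over neighbours of $x$ together with the appropriate data; analogously define $\psi_y$ for $y \in Y$.  Let $A = \spn\{\psi_x\}_{x\in X}$, $B = \spn\{\psi_y\}_{y\in Y}$.  The reflections $R_A, R_B$ can each be implemented at cost $O(\update)$: prepare the vector from the ``ground'' state of $\reg Y \reg D$ (respectively $\reg X \reg D$) using one call each to the appropriate $\qdiffuse_i$ and $\qupdate$, perform the trivial reflection about that ground state, and undo the preparation.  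Define $W = R_B R_A$ and $\tilde W = R_B \tilde R_A$, where $\tilde R_A$ differs from $R_A$ by an extra $(-1)$-phase applied to every $\psi_x$ with $x \in M$ via one call to $\qcheck$; hence a single step of $\tilde W$ costs $O(\checking + \update)$.

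Next I would analyse the spectrum of $\tilde W$ using the Spectral Lemma \rf(lem:szegedy).  A short computation identifies the discriminant matrix $D(A,B)$ with the symmetric matrix $P'$ of \rf(prp:walkPprime), whose singular values are $1, \sqrt{|\lambda_2|}, \ldots$; since $P$ is aperiodic with spectral gap $\spectralgap$, \rf(lem:szegedy) yields a unique $1$-eigenvector of $W$ inside the relevant invariant subspace, and all other eigenvalues there have phase $\Omega(\sqrt{\spectralgap})$.  That $1$-eigenvector coincides with the lift of $\psi = \qsetup \ket|0>$ into the walk register, which can be produced at cost $O(\setup + \update)$.  When $M = \emptyset$ we therefore have $\tilde W \psi = \psi$.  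When $M \ne \emptyset$, a Szegedy-style spectral-perturbation argument — using that the checking phase is the $(-1)$-reflection on the marked part, which carries $\psi$-weight $\sum_{x \in M} \stationary_x \ge \lucky$ — shows that $\psi$ lies entirely in the span of eigenvectors of $\tilde W$ whose phases are bounded away from $0$ by $\Omega(\sqrt{\lucky \spectralgap}) = \Omega(\sqrt{1/H})$.

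Finally, I would invoke \rf(thm:detection) on $\tilde W$ with gap parameter $\sqrt{1/H}$, starting from $\psi$.  This uses $O(\sqrt{H})$ applications of $\tilde W$ at $O(\checking + \update)$ each, preceded by the $O(\setup + \update)$ cost of preparing $\psi$ in the walk register, totalling $O(\setup + \sqrt{H}(\checking + \update))$.  Approximation errors accumulated across the walk steps are controlled by \rf(lem:circuitRobust), and the final error can be driven below $1/3$ by \rf(lem:circuitAmplify) without changing the asymptotics.  The hard part is the spectral-perturbation argument in the third paragraph: rigorously showing that the single $(-1)$-phase introduced by $\qcheck$ shifts $\psi$ out of the $1$-eigenspace of $\tilde W$ by a phase angle of at least $\Omega(\sqrt{1/H})$.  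This requires decomposing $\psi$ through the two-dimensional invariant subspaces supplied by \rf(lem:szegedy) and summing an appropriate series in the non-trivial singular values $\sqrt{|\lambda_i|}$ of $P$ — this is the technical heart of Szegedy's theorem.
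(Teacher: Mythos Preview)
The paper explicitly declines to prove this theorem (``We do not give the proof, since we do not use this result in the thesis''), so there is no proof to compare against; what follows is an assessment of your proposal on its own merits.

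Your high-level architecture is the right one: build the bipartite walk $W=R_BR_A$ from $\qdiffuse_1,\qdiffuse_2,\qupdate$, fold checking into a marked reflection $\tilde R_A$, and run phase detection on $\tilde W=R_B\tilde R_A$ starting from the lifted stationary state. The cost accounting is also correct. The gap is in the spectral claim of your third paragraph.

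You assert that in the marked case $\psi$ ``lies entirely in the span of eigenvectors of $\tilde W$ whose phases are bounded away from $0$ by $\Omega(\sqrt{\lucky\spectralgap})=\Omega(\sqrt{1/H})$''. Two problems. First, the displayed equality is false: \rf(prp:hitting) gives only $H=O(1/(\lucky\spectralgap))$, so $\sqrt{1/H}\ge\Omega(\sqrt{\lucky\spectralgap})$, and the slack can be polynomially large (on a path of length $n$ with one marked endpoint, $1/(\lucky\spectralgap)=\Theta(n^3)$ while $H=\Theta(n^2)$). An argument that only produces a phase gap of order $\sqrt{\lucky\spectralgap}$ yields $O(1/\sqrt{\lucky\spectralgap})$ walk steps --- an MNRS-type bound --- not the claimed $O(\sqrt{H})$.

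Second, and more fundamentally, Szegedy does \emph{not} establish a uniform lower bound on the eigenphases supporting $\psi$; no such bound holds in general. What his analysis gives is a \emph{weighted} statement. The discriminant of $\tilde W$ is that of the absorbing walk $P_M$ (the walk $P$ with marked rows deleted), and one has an identity expressing the classical hitting time $H$ as a sum of $|\langle v_k,\sqrt{\stationary}\rangle|^2/(1-\lambda_k')$ over the eigenpairs $(\lambda_k',v_k)$ of $P_M$. From this one deduces that the overlap of $\psi$ with the span of eigenvectors of $\tilde W$ of phase \emph{below} $c/\sqrt{H}$ is bounded away from $1$ by a constant --- enough for phase detection at threshold $\Theta(1/\sqrt{H})$ to accept with constant probability, but strictly weaker than a uniform phase gap. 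Your closing remark about ``summing an appropriate series in the singular values'' is pointing at the right object, but the series in question is the hitting-time identity for $P_M$, not a perturbation of the spectral gap $\spectralgap$ of $P$; tying the argument to $\spectralgap$ is precisely what loses the distinction between $\sqrt{H}$ and $1/\sqrt{\lucky\spectralgap}$.
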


We do not give the proof, since we do not use this result in the thesis.

\begin{thm}[MNRS quantum walk]
\label{thm:mnrs}
In the assumptions of \rf(thm:walkSzegedy), there exists a quantum algorithm finding marked elements with probability $\Omega(1)$ and the total cost $O\s[\setup + \frac1{\sqrt{\eps}} {\s[\checking + \frac{\update}{\sqrt{\delta}}]} ]$, where $\eps$ is the probability of measuring a marked element in the initial state (cf. \rf(defn:quantumSetup)) and $\spectralgap$ is the spectral gap of $P$.
\end{thm}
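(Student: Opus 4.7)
The plan is to construct the quantum analogue of \rf(alg:forbearing): after one invocation of the quantum set-up, we will run an amplitude-amplification-style outer loop whose ``reflection about the initial state'' is implemented not via another call to $\qsetup^{-1}$ (which would cost $\setup$ per iteration) but via a short quantum walk.  This replaces each of the classical $\Theta(1/\lucky)$ outer iterations by $\Theta(1/\sqrt{\lucky})$ quantum ones, exactly as in the proof of \rf(thm:amplification), and each inner classical block of $\Theta(1/\spectralgap)$ walk steps by $\Theta(1/\sqrt{\spectralgap})$ quantum steps.

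First I would build a step $W$ of a quantum walk from the bipartite random walk $\tP$ in~\rf(eqn:tP).  Define two isometries $A$ and $B$ whose columns are the vectors $\ket X|x>\ket D|d_x>\ket Y|\phi_x>$ for $x\in X$ and $\ket X|\phi_y>\ket D|d_y>\ket Y|y>$ for $y\in Y$, with $\phi_x,\phi_y$ as in~\rf(eqn:phixandy).  The associated reflections $R_A$ and $R_B$ can each be realised by a short circuit that performs one quantum diffusion, one quantum update $\qupdate$, a reflection about a fixed basis state of the auxiliary register, and then uncomputes; one application of $W=R_B R_A$ therefore costs $O(\update)$.  The crucial point is that the discriminant matrix $D(A,B)=A^*B$ coincides with the standard discriminant of $\tP$, whose squared singular values are the eigenvalues of $P=Q_2Q_1$.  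By the Spectral Lemma~\rf(lem:szegedy), the eigenvalues of $W$ on the relevant invariant subspace come in pairs $\ee^{\pm 2\ii\theta_j}$ with $\cos\theta_j$ equal to these singular values; so a spectral gap $\spectralgap$ of $P$ translates into a phase gap of $W$ of at least $2\arccos\sA[\sqrt{1-\spectralgap}] = \Omega(\sqrt{\spectralgap})$, and the $1$-eigenvector of $W$ singled out in that lemma is precisely the state~$\psi$ of~\rf(eqn:quantumSetup), tensored with the $Y$-register diffusion.

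Next I would feed $W$, with phase gap parameter $\Omega(\sqrt{\spectralgap})$, into the phase-detection procedure of \rf(thm:detection); this gives, at cost $O(\update/\sqrt{\spectralgap})$ per call, an approximate reflection about the $1$-eigenspace of $W$, i.e.\ an approximate $R_\psi$.  Plugging this approximate $R_\psi$ together with the checking reflection (cost $\checking$) into the amplitude-amplification template of \rf(thm:amplification), and running $O(1/\sqrt{\lucky})$ outer iterations, produces a procedure that detects a marked element with constant bias.  To actually \emph{find} a marked element I would wrap this in the geometric-search scheme of \rf(prp:amlificationFind), which costs only a constant factor more.  Summing contributions yields the announced bound
\[
O\s[ \setup \;+\; \tfrac{1}{\sqrt{\lucky}}\sA[ \checking + \tfrac{\update}{\sqrt{\spectralgap}} ] ].
\]

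The main obstacle will be the error analysis.  Each simulated $R_\psi$ is only an approximation, and naively invoking \rf(lem:circuitRobust) would force the per-iteration precision to be $O(1/\sqrt{\lucky})$, inflating phase detection by a factor $\log(1/\lucky)$.  To avoid this logarithmic overhead I would appeal to the noisy amplitude amplification theorem \rf(thm:noisyGrover), which is designed precisely to tolerate constant-error inner subroutines.  Verifying that the approximate reflection built from phase detection fits the format required by \rf(thm:noisyGrover)---i.e.\ that it acts as the identity on the branch containing $\psi$ up to a small, well-controlled leak into orthogonal directions---is, I expect, the technical heart of the argument.
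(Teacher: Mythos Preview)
Your proposal is correct and follows essentially the same route as the paper: build the walk step $W=R_BR_A$ from the bipartite data, use the Spectral \rf(lem:szegedy) to turn the spectral gap $\spectralgap$ of $P$ into a phase gap $\Omega(\sqrt{\spectralgap})$ of $W$, feed $W$ into phase detection to obtain an approximate reflection about $\psi$, and plug this into the amplitude-amplification template (with the search variant \rf(prp:amlificationFind) for finding). The paper likewise first proves the bound with an extra $\log(1/\lucky)$ factor from \rf(lem:circuitRobust) and then defers its removal to the noisy-amplification techniques of \rf(thm:noisyGrover), exactly as you anticipate; the only cosmetic difference is that the paper keeps the data register $\reg D$ out of the isometries $A,B$ and tracks its consistency by a separate invariant, whereas you fold $d_x,d_y$ into the columns.
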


\pfstart
The algorithm is the same as \refalg{amplification} (or, alternatively, \rf(alg:amplificationSearch)) with the replaced procedure for the reflection about $\psi$.  The algorithm uses the register $\reg X$ to store elements of $X$, the register $\reg Y$ for the elements of $Y$, and $\reg D$ to store data.  
The new reflection subroutine is described in \refalg{quantumWalk} and it is performed with precision $\gamma$ that is defined so that the total precision is sufficient.

\begin{algorithm}
\caption{Alternative Implementation of the Reflection About $\total$}
\label{alg:quantumWalk}
\algbegin
\state {{\bf quprocedure} ReflectionAbout $\total$ ({\bf  quprocedures} $\qdiffuse_1$, $\qdiffuse_2$, $\qupdate$, {\bf registers} $\qreg X$, $\qreg D$) {\bf with} precision $\gamma$ {\bf :}}
\tab
  \state {{\bf attach} $Y$-qudit $\qreg Y$}
	\state {$\qdiffuse_1(\qreg X, \qreg Y)$}
	\state {phase$\gets$ PhaseDetection(StepOfInternalQuantumWalk, $\sqrt{\delta}$, $\qreg{XYD}$) {\bf with} precision $\gamma$	\label{line:walk:detection} }
	\state {$\qdiffuse_1^{-1}(\qreg X, \qreg Y)$}
  \state {{\bf detach} $\qreg Y$  \label{line:walk:detach} }
\state {\ }
\state {{\bf quprocedure} StepOfInternalQuantumWalk {\bf:} \label{line:walk:internalstep}}
\tab
	\state {ReflectionAbout $\im(A)$ \label{line:walk:step1} }
	\state {$\qupdate(\qreg X, \qreg Y, \qreg D)$ \label{line:walk:step2} }
	\state {ReflectionAbout $\im(B)$ \label{line:walk:step3} }
	\state {$\qupdate^{-1}(\qreg X, \qreg Y, \qreg D)$ \label{line:walk:step4} }
\state {\ }
\state {{\bf quprocedure} ReflectionAbout $\im(A)$ {\bf:}}
\tab
	\state {$\qdiffuse_1^{-1}(\qreg X, \qreg Y)$}
	\state {{\bf conditioned on} $\qreg{Y} \ne 0$ : apply $(-1)$-phase gate}
	\state {$\qdiffuse_1(\qreg X, \qreg Y)$}
\untab
\state {\ }
\state {{\bf quprocedure} ReflectionAbout $\im(B)$ {\bf:}}
\tab
	\state {$\qdiffuse_2^{-1}(\qreg X, \qreg Y)$}
	\state {{\bf conditioned on} $\qreg{X} \ne 0$ : apply $(-1)$-phase gate}
	\state {$\qdiffuse_2(\qreg X, \qreg Y)$}
\untab
\untab
\untab
\algend
\end{algorithm}

Let us at first calculate the cost of the algorithm.  Each step of the internal quantum walk, the procedure in \rf(line:walk:internalstep), uses two update operations and all other operations are costless.  By \rf(thm:detection), the phase detection subroutine uses $O(\frac1{\sqrt{\delta}}\log\frac1\gamma)$ steps of the internal quantum walk.  \rf(alg:amplification) uses $O(1/\sqrt{\lucky})$ steps of the outer walk, each involving the checking subroutine and the new reflection subroutine.  Also, there is one call to the set-up subroutine at the very beginning of \rf(alg:amplification).  

Since the reflection in \rf(alg:amplification) is executed $O(1/\sqrt{\eps})$ times, it is sufficient, by \rf(lem:circuitRobust), if $\gamma \le c\sqrt{\eps}$ for small enough constant $c$.  Thus, the total cost of the algorithm is
\[
O\s[\setup + \frac1{\sqrt{\eps}} {\s[\checking + \frac{\update\log(1/\lucky)}{\sqrt{\delta}}]} ].
\]
This is the value of the cost we actually prove.  It has an extra logarithmic factor compared to the claimed one.  This factor can be removed with techniques similar to \rf(thm:noisyGrover).  We refer the reader to~\cite{magniez:walkSearch} for the details.

It remains to prove the correctness of the algorithm.  At first, we would like to remove the data register from our analysis.  For that, we prove some assertions on its content.  As described in \rf(sec:procedures), we may assume in our analysis that the phase detection subroutine in \refline{walk:detection} is implemented perfectly.

We claim that the state of $\reg{XD}$ before Lines \ref{line:walk:detection}, \ref{line:walk:step1}, and \ref{line:walk:step2} is a linear combination of the vectors $\sfig{\ket X|x>\ket D|d_x>}_{x\in X}$.  Similarly, the state of $\reg{YD}$ before Lines \ref{line:walk:step3} and \ref{line:walk:step4} is the linear combination of $\sfig{\ket Y|y>\ket D|d_y>}_{y\in Y}$.  
These claims can be verified by going through the algorithm line by line, and recalling that the phase detection subroutine only repeatedly applies the step of the internal quantum walk.
From now on, we ignore the register $\reg D$ since its content is uniquely determined by the content of $\reg X$ or $\reg Y$ in dependence on the place in the algorithm.


\mycommand{tv}{\tilde v}
\mycommand{tw}{\tilde w}
\mycommand{tW}{\widetilde W}

Let $\{e_x\}_{x\in X}$ and $\{h_y\}_{y\in Y}$ denote the computational bases of registers $\reg X$ and $\reg Y$, respectively.
Let $A$ and $B$ be matrices defined by the action of $\qdiffuse_1$ and $\qdiffuse_2$, respectively:
\begin{equation}
\label{eqn:mnrsAandB}
A = \sum_{x\in X} (\eket{x}\otimes \phi_x)\bra{x}
\qquad\text{and}\qquad
B = \sum_{y\in Y} (\phi_y\otimes \eket{y})\bra{y},
\end{equation}
where $\phi_x$ and $\phi_y$ are defined in~\rf(eqn:phixandy).
The matrix $A$ has its columns labelled by the elements of $X$, and the matrix $B$---by the elements of $Y$.  Both of them have columns in $\C^{X\times Y}$.  The columns of $A$ are orthonormal, as well as those of $B$.  
The procedures ReflectionAbout $\im(A)$ and ReflectionAbout $\im(B)$ implement reflections about $\im(A)$ and $\im(B)$, respectively.  Also, it is not hard to check that the state of $\reg{XY}$ during the algorithm always belongs to $\im(A)+\im(B)$.

The matrices $A$ and $B$ satisfy the promise of \reflem{szegedy}.  Let $D = A^*B$ be the corresponding discriminant matrix.  We have
\[
D\elem[x,y] = \phi_x\elem[y] \phi_y\elem[x] = \sqrt{Q_2\elem[x,y]Q_1\elem[y,x]} = \frac{w_{xy}}{\sqrt{w_xw_y}}
\]
for all $x\in X$ and $y\in Y$.  Here $w_{xy}$ are the weights of the edges of the bipartite graph $G$ and $w_x = \sum_{y\in N(x)} w_{xy}$.

By \rf(prp:stationary), the vector $\tw = (w_z)_{z\in X\cup Y}$ is a 1-eigenvector of $\tP$ from~\rf(eqn:tP).  By \rf(thm:perron), $P$ has an unique eigenvalue-1 eigenvector.  As $P =\tP^2\elem[X,X]$, this eigenvector must coincide with $w = \tw\elem[X]$.

Let $P'$ and $W$ be as in \rf(prp:walkPprime), and let $\tP'$ and $\tW$ be defined as $P'$ and $W$ from the same proposition for $P$ equal to $\tP$.  Then, $D = \tP'\elem[X,Y]$ and
\begin{equation}
\label{eqn:mnrs1}
DD^* = (\tP')^2\elem[X,X] = (\tW^{-1} \tP^2 \tW)\elem[X,X].
\end{equation}
Since $w=\tw\elem[X]$, we have that $W$ is proportional to $\tW\elem[X,X]$.  From~\rf(eqn:mnrs1), we get $P' = DD^*$.  Also, as in the proof of \rf(prp:hitting), the unique normalised 1-eigenvector $v_1$ of $P'$ satisfies $v_1\elem[x] = \alpha_x$, where $\alpha_x = \sqrt{\stationary_x}$ are as in \rf(defn:quantumSetup).

\mycommand{tpsi}{\widetilde\psi}

Let $1 = \lambda_1 \ge |\lambda_2|\ge\cdots\ge|\lambda_n|\ge 0$ be the common list of eigenvalues of $P$ and $DD^*$, and $\cos\theta_i = \sqrt{\lambda_i}$ be the corresponding singular values of $D$, where $i\in[n]$.  
Let $U$ denote the step of the internal quantum walk.
By \rf(lem:szegedy), $U$ has eigenvalues $\pm1$ and $\ee^{\pm 2\ii\theta_j}$.  The smallest non-zero phase is
\[
2\theta_2 \ge |1 - \ee^{2\ii\theta_2}| = 2\sqrt{1-\cos^2\theta_2} \ge 2\sqrt{\cos\theta_2} = 2\sqrt{\delta}.
\]
Let $R$ denote the reflection in \rf(line:walk:detection).  As the argument $\sqrt{\delta}$ in the phase detection subroutine is small enough, $R$ performs the reflection about the 1-eigenspace of $U$.  By \rf(lem:szegedy), it is given by the span of
\[
\tpsi = A v_1 = \sum_{x\in X} \alpha_x \ket X|x> \ket Y|\phi_x>
\]
and $\im(A)^\perp\cap \im(B)^\perp$.  By our claim, the state of the algorithm always resides in $\im(A)+\im(B)$.  Hence, we may assume that $R$ reflects about $\tpsi$.  Then, $\qdiffuse_1^{-1} R\qdiffuse_1$ performs the reflection about $v_1$ in the register $\reg X$, and leaves the register $\reg Y$ intact.
By putting back the data register, we get that the procedure reflects about $\psi$.
\pfend

\newcommand{\bket}[1]{|#1\rangle}
\subsection{Applications of MNRS Quantum Walk}
\label{sec:mnrsApplications}
The MNRS quantum walk is most often applied for the Johnson graph.  In this section, we describe such an application for the element distinctness problem.  After that, we provide some other examples.

Recall that we have already seen a quantum query algorithm for the element distinctness problem in \rf(sec:amplificationExamples).  As it has been mentioned, the main purpose of the MNRS quantum walk is to avoid execution of the set-up procedure on every step of the quantum walk.  The Setup procedure in \rf(alg:distinctnessSetup) prepares a superposition over subsets of $[\vars]$ of size $r$.  The cheapest possible update operation is to replace an element in $S\subseteq[\vars]$ with an element outside $S$ resulting in a subset $T$ such that $|S\cap T|=r-1$.  This graph is known as the {\em Johnson graph}.


%
%

\begin{defn}[Johnson graph]
\label{defn:johnson}
The Johnson graph $J(n,r)$ has the set of all $r$-subsets of $[n]$ as its vertex set.  Two vertices $S$ and $T$ are connected iff $|S\cap T|=r-1$.  All edges of the graph have weight 1.  We assume that $r\le n/2$.
\end{defn}

Since quantum walks are performed on bipartite graphs, we also consider the following family of bipartite graphs.  For $r\ge 1$, let $G(n,r)$ be the bipartite graph with parts $X$ and $Y$ that consist of all subsets of $[n]$ of sizes $r$ and $r-1$, respectively.  A vertex $S\in X$ is connected to $T\in Y$ iff $S\supseteq T$.  Let $G_{n,r}$ be the biadjacency matrix of $G(n,r)$ with rows in $X$ and columns in $Y$, and let $J_{n,r}$ be the adjacency matrix of the Johnson graph $J(n,r)$ (it is {\em not} the all-1 matrix).

By applying the construction of \rf(sec:mnrs), we get that $Q_1 = G_{n,r}^*/r$ and $Q_2 = G_{n,r}/(n-r+1)$.  The resulting random walk matrix $P$ on $X$ is 
\begin{equation}
\label{eqn:JohnsonStep}
Q_2Q_1 = \frac{G_{n,r}G_{n,r}^*}{r(n-r+1)} = \frac{1}{r(n-r+1)} \sA[J_{n,r} + rI_{{n\choose r}}],
\end{equation}
where $I_m$ stands for the $m\times m$ identity matrix.  
Indeed, if $S\ne S'$ satisfy $J_{n,r}\elem[S,S']=0$, there is no way to get from $S$ to $S'$ in two steps on $G(n,r)$.  If $J_{n,r}\elem[S,S']=1$, there is exactly one such path: through $S\cap S'$.  Finally, there are exactly $r$ paths of length 2 from $S$ to itself: through any vertex in $Y$ labelled by a subsets of $S$.  In the following, we estimate the spectral gap of the matrix in~\refeqn{JohnsonStep}.

\begin{lem}
\label{lem:johnson}
If $1<r<n/2$, the spectral gap of the matrix $Q_2Q_1$ from~\rf(eqn:JohnsonStep) is $\Omega(1/r)$.
\end{lem}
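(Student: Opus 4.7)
The plan is to reduce the question to the known spectrum of the Johnson graph and then push it through the identity~\rf(eqn:JohnsonStep), which expresses $Q_2Q_1$ as an affine function of $J_{n,r}$.

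First I would recall, from the Johnson scheme, that the adjacency matrix $J_{n,r}$ has eigenvalues
\[
\lambda_j = (r-j)(n-r-j) - j, \qquad j = 0,1,\ldots,r,
\]
with $\lambda_0 = r(n-r)$ attained on the all-ones vector. A one-line computation gives $\lambda_j - \lambda_{j+1} = n - 2j$, so under the hypothesis $r < n/2$ the sequence $\lambda_j$ is strictly decreasing; in particular $\lambda_1 = r(n-r) - n$ is the second-largest eigenvalue and $\lambda_r = -r$ is the smallest.

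Next, by~\rf(eqn:JohnsonStep), every eigenvector of $J_{n,r}$ is an eigenvector of $Q_2Q_1$ with eigenvalue
\[
\mu_j = \frac{\lambda_j + r}{r(n-r+1)}.
\]
Plugging in the two extremes gives $\mu_0 = 1$ (the stationary eigenvalue guaranteed by~\rf(prp:stationary)) and $\mu_r = 0$, so every $\mu_j$ lies in $[0,1]$; hence $\mu_1$ is the second-largest eigenvalue \emph{in absolute value}, and the spectral gap of $Q_2Q_1$ equals
\[
1 - \mu_1 = \frac{n}{r(n-r+1)} \ge \frac{1}{r},
\]
where the inequality uses $n - r + 1 \le n$. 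This yields the claimed $\Omega(1/r)$ bound.

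The one delicate point is the requirement that no eigenvalue below $\mu_1$ exceed $\mu_1$ in absolute value; here the $+rI$ shift in~\rf(eqn:JohnsonStep) exactly cancels the smallest Johnson-graph eigenvalue $-r$, forcing $\mu_j \ge 0$ throughout and eliminating that worry. If one preferred not to invoke the Johnson-scheme spectrum as a black box, the alternative would be to construct the eigenvectors by hand from symmetrised indicator vectors of small subsets and verify the eigenvalue formula directly; that bookkeeping is the main technical step such an approach would require.
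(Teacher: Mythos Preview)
Your proof is correct and lands on exactly the same final computation as the paper: the spectral gap equals $n/(r(n-r+1)) \ge 1/r$. The only difference is how the spectrum of $J_{n,r}$ is obtained. You quote the Johnson-scheme eigenvalues $\lambda_j=(r-j)(n-r-j)-j$ as known, whereas the paper derives them self-containedly by induction on $r$, using that $G_{n,r}G_{n,r}^*$ and $G_{n,r}^*G_{n,r}$ share nonzero eigenvalues together with the identities relating these products to $J_{n,r}+rI$ and $J_{n,r-1}+(n-r+1)I$; this is precisely the ``alternative'' you sketch in your last paragraph, and it yields the same recursion $\lambda \mapsto \lambda + (n-2r+1)$ plus the new eigenvalue $-r$.
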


\pfstart
Fix a value of $n$, and calculate the eigenvalues of $J_{n,r}$ by induction on $r$.  The induction base is $J_{n,0}$ that is the zero $1\times 1$ matrix.  For the induction step, we have the following identities
\begin{equation}
\label{eqn:johnson1}
G_{n,r}^*G_{n,r} = J_{n,r} + rI_{{n\choose r-1}}
\qquad  \mbox{and}\qquad
G_{n,r}G_{n,r}^*= J_{n, r-1}+(n-r+1)I_{{n\choose r}}
\end{equation}
that hold for all $r\ge 1$.  They can be proved similarly to~\rf(eqn:JohnsonStep).
The matrices $G_{n,r}G_{n,r}^*$ and $G_{n,r}^*G_{n,r}$ have the same set of non-zero eigenvalues.
Thus, the eigenvalues of $J_{n,r}$ are $-r$ with multiplicity ${n\choose r}-{n\choose r-1}$, and $\{\lambda_j + n-2r+1\}$, where $\{\lambda_j\}$ are the eigenvalues of $J_{n,r-1}$.

The largest eigenvalue of $J_{n,r}$ is $\sum_{i=1}^r (n-2i+1) = r(n-r)$.  The second largest (in absolute value) eigenvalue is
\(
-1 + \sum_{i=2}^r (n-2i + 1) = r(n-r) - n.
\)
Hence, the spectral gap of $Q_2Q_1 = (J_{n,r}+r I)/(r(n-r+1)$ is $n/(r(n-r+1)) = \Omega(1/r)$ under our assumption $r<n/2$.
\pfend

Now we are able to improve the algorithm for element distinctness from \rf(prp:distinctnessOld).

\begin{prp}
\label{prp:distinctness}
The element distinctness problem on $\vars$ variables can be solved in $O(\vars^{2/3})$ quantum queries.
\end{prp}

\pfstart
We apply the MNRS quantum walk on the graph $G(\vars,r)$ as described above.  
The set-up and check operations were already described in \rf(alg:distinctnessSetup).  The diffusion operation is a unitary that does not require any oracle query.  We will not describe it.

It remains to describe the update operation.  In this operation, we are given $\ket X|S>\ket Y|T>\ket D|\str_S>$ and the task is to transform it into $\ket X|S>\ket Y|T> \ket D|\str_T>$.  Let $s_1<s_2<\dots <s_r$ be the elements of $S$.  The subset $T$ has the same elements with one removed.  As in \rf(prp:distinctnessOld), the data register has the form $\ket D|\strut\str\elem[s_1], \dots  \dots \str\elem[s_r]>$, appended with zeroes if necessary.
We proceed as follows.  Attach an $r$-qubit $\reg i$, and transform the state
\begin{equation}
\label{eqn:distinctnessAlg1}
\ket X|S>\ket Y|T>\ket D|\str_S>\ket i|0> \mapsto \ket X|S>\ket Y|T>\ket D|\str_S>\ket i|i>,
\end{equation}
where $i$ is the unique element of $[r]$ satisfying $s_i\notin T$.  Then perform \mbox{$\qreg D\elem[i]\qungets$ InputOracle$(\qreg X\elem[i])$}.
This transforms the data register into
\begin{equation}
\label{eqn:distinctnessAlg2}
\ket D |\strut\str\elem[s_1], \dots, \str\elem[s_{i-1}], 0,  \str\elem[s_{i+1}],  \dots, \str\elem[s_r]> .
\end{equation}
Now, apply the unitary transformation mapping the state in \rf(eqn:distinctnessAlg2) and $\ket i|i>$ into
\[
\ket D |\strut\str\elem[s_1], \dots,  \str\elem[s_{i-1}], \str\elem[s_{i+1}],  \dots, \str\elem[s_r], 0> \ket i|i> .
\]
Finally, undo the transformation in \rf(eqn:distinctnessAlg1).  We have the required state.  This operation costs one query.

By \rf(thm:mnrs), the total cost of the algorithm is
\[
O\s[{ r+\sqrt{\frac nr}\sA[\sqrt{\vars} + \sqrt{r}] }] = O\s[r + \vars/\sqrt{r}].
\]
That is optimised to $O(\vars^{2/3})$ when $r = \vars^{2/3}$.
\pfend

Note that only the first sampling of an element in $X$ costs $r$ queries.  All others cost only $\sqrt{r}$.  This is the source of the speed-up when comparing to the algorithm given in \rf(prp:distinctnessOld).  By using more sophisticated data structures and QRAQM arrays, it is possible to implement this algorithm time-efficiently~\cite{ambainis:distinctness}.

Recall the definition of the 1-certificate complexity from \rf(sec:queryRelated): It is the maximum, over $x\in f^{-1}(1)$, of the smallest subset $S\subseteq[\vars]$ such that $f(z)=1$ for all $z$ agreeing with $x$ on $S$.  The algorithm for element distinctness can be generalised as follows.

\begin{thm}
\label{thm:walkKDistinctness}
Let $f: [q]^\vars\supseteq \cD\to\{0,1\}$ be any function with 1-certificate complexity $k=O(1)$.  The quantum query complexity of $f$ is $O(\vars^{k/(k+1)})$.
\end{thm}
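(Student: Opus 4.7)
The plan is to generalize the element distinctness algorithm of \rf(prp:distinctness) by running the MNRS quantum walk of \rf(thm:mnrs) on the bipartite graph $G(\vars, r)$ associated to the Johnson graph $J(\vars, r)$, where $r$ will be chosen later. A vertex $S \in X$ (an $r$-subset of $[\vars]$) is called \emph{marked} iff $S$ contains a 1-certificate for the input, i.e., iff there is a subset $T \subseteq S$ with $|T| \le k$ such that $z_T$ is a 1-certificate for $f$. Since $f$ and its 1-certificate structure are fixed in advance and independent of $z$, this condition can be decided from the data $z_S$ using zero queries. On a negative input, no marked vertex exists; on a positive input $x$, a 1-certificate $T_0$ of size at most $k$ is contained in a uniformly random $r$-subset with probability $\binom{\vars-|T_0|}{r-|T_0|}/\binom{\vars}{r} = \Theta((r/\vars)^k)$, giving $\lucky = \Omega((r/\vars)^k)$.

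Next, I would specify the three MNRS ingredients exactly as in \rf(prp:distinctness). The set-up procedure prepares the uniform superposition over $r$-subsets in the index register $\reg X$ and queries all $r$ corresponding input variables into the data register $\reg D$; this costs $\setup = r$ queries. The checking procedure is purely classical on the data and contributes $\checking = 0$ queries (the error is one-sided so no amplification overhead is needed). The update procedure swaps one element of $S$ for a new one in $N(S)$: uncompute the oracle value at the removed index and query the oracle at the inserted index, as in \rf(eqn:distinctnessAlg1)--\rf(eqn:distinctnessAlg2), costing $\update = O(1)$ queries. The spectral gap of the underlying walk is $\spectralgap = \Omega(1/r)$ by \rf(lem:johnson), and the diffusion is costless.

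Plugging these into \rf(thm:mnrs) gives total query cost
\[
O\!\left(\setup + \frac{1}{\sqrt{\lucky}}\Bigl(\checking + \frac{\update}{\sqrt{\spectralgap}}\Bigr)\right)
= O\!\left(r + \Bigl(\frac{\vars}{r}\Bigr)^{k/2}\sqrt{r}\right)
= O\!\left(r + \frac{\vars^{k/2}}{r^{(k-1)/2}}\right).
\]
Balancing the two terms yields $r^{(k+1)/2} = \vars^{k/2}$, i.e., $r = \vars^{k/(k+1)}$, and gives the claimed bound $O(\vars^{k/(k+1)})$. Note the constants hidden in the $O$-notation depend on $k$, which is allowed since $k=O(1)$.

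The main obstacle is the verification of the marked-vertex fraction $\lucky$: the definition of 1-certificate complexity only guarantees \emph{existence} of a certificate of size at most $k$ for each positive input, not that every positive input has exactly one certificate, so care is needed to ensure we get the lower bound $\lucky = \Omega((r/\vars)^k)$ uniformly over positive inputs (it suffices to pick any one certificate per input). Beyond this, the remaining subtlety is that the checking procedure must be implemented coherently on $(\reg X, \reg D)$; since the decision depends only on $z_S$ and on the fixed description of $f$'s 1-certificates, this is handled by \rf(lem:quantumOfDet) without any query cost, and the rest of the analysis mirrors \rf(prp:distinctness) verbatim.
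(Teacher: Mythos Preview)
Your proposal is correct and follows essentially the same approach as the paper: run the MNRS walk on the Johnson graph, declare $S$ marked when it contains a 1-certificate (checkable from $z_S$ with zero queries), use $\setup=r$, $\update=O(1)$, $\spectralgap=\Omega(1/r)$, and $\lucky=\Omega((r/\vars)^k)$, then balance to get $r=\vars^{k/(k+1)}$. The paper's proof is slightly terser but identical in substance.
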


\pfstart
We keep the same graph on parts $X$ and $Y$, the same set-up, diffusion, and update operation as in the previous proof.  We change the set of marked vertices and the checking procedure.

We say $S\in X$ is marked iff it contains a 1-certificate for $f$ on the input string $\str$.  The checking operation is a unitary transformation that does not require any query: the information in $\ket X |S>\ket D|z_S>$ is enough to detect whether $S$ is marked.

The probability an element is marked is $\eps = \Omega(r^k/\vars^k)$.  The spectral gap still is $\Omega(1/r)$.  Thus, the query complexity of the algorithm is
\(
O\s[{ r+\s[\frac \vars r]^{k/2} \sqrt{r} }]
\)
that is optimised to $\vars^{k/(k+1)}$ when $r = \vars^{k/(k+1)}$.
\pfend

\begin{cor}
\label{cor:walkKDist}
The $k$-distinctness and the $k$-sum problems from Definitions~\ref{defn:kdist} and~\ref{defn:ksum} can be solved in $O(\vars^{k/(k+1)})$ quantum queries, where $\vars$ is the number of input variables.
The graph collision problem from \rf(defn:graphCollision) can be solved in $O(\vars^{2/3})$ quantum queries.
\end{cor}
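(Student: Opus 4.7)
The plan is to verify that each of the three problems satisfies the hypothesis of \rf(thm:walkKDistinctness) and then read off the stated bound as a direct instantiation. Since \rf(thm:walkKDistinctness) delivers complexity $O(\vars^{k/(k+1)})$ for any function with constant 1-certificate complexity $k$, the whole corollary is essentially a bookkeeping exercise: identify $k$ in each case and plug it in.

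First I would handle $k$-distinctness and $k$-sum together. By \rf(defn:kdist), a positive input to $k$-distinctness is witnessed by indices $a_1<\dots<a_k$ with $z\elem[a_1]=\cdots=z\elem[a_k]$; fixing the values on these $k$ positions forces $f=1$, so the 1-certificate complexity is at most $k$ (and in fact equal to $k$, as stated in the definition). The same reasoning applies to $k$-sum via \rf(defn:ksum): specifying $k$ input variables whose values sum to $0\pmod q$ is a 1-certificate. In both cases $k$ is a constant independent of $\vars$, so \rf(thm:walkKDistinctness) applies and yields query complexity $O(\vars^{k/(k+1)})$.

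Next I would treat graph collision. By \rf(defn:graphCollision) the graph $G$ is fixed (independent of the input), and a positive input is witnessed by an edge $ab$ of $G$ with $z_a=z_b=1$. This is a 2-certificate, so the 1-certificate complexity is $2$. Applying \rf(thm:walkKDistinctness) with $k=2$ gives the bound $O(\vars^{2/3})$.

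There is no real obstacle here; the content is entirely in \rf(thm:walkKDistinctness) and the certificate-complexity entries of Definitions~\ref{defn:kdist}, \ref{defn:ksum}, \ref{defn:graphCollision}. The only point worth a brief remark is that the quantum walk on $G(\vars,r)$ in the proof of \rf(thm:walkKDistinctness) uses a data register storing $z_S$, so that the checking step for each of these problems (does $S$ contain a $k$-tuple of equal elements / a zero-sum $k$-tuple / an edge of $G$ with both endpoints in the marked set) is indeed a query-free unitary on $\reg{XD}$, as required by the theorem.
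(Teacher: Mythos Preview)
Your proposal is correct and matches the paper's approach exactly: the corollary is stated without proof because it is an immediate application of \rf(thm:walkKDistinctness) together with the 1-certificate complexities recorded in Definitions~\ref{defn:kdist}, \ref{defn:ksum}, and~\ref{defn:graphCollision}. Your final remark about the checking step being query-free is a helpful observation but already absorbed into the proof of \rf(thm:walkKDistinctness) itself.
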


Consider the triangle problem from \rf(defn:triangle).  It is possible to apply \rf(thm:walkKDistinctness) here, and get a quantum query algorithm with complexity $O(\vars^{3/4}) = O(n^{3/2})$.  However, using the structure of the problem, it is possible to do better.

\begin{thm}
\label{thm:walkTriangle}
The quantum query complexity of the triangle problem on $n$ vertices is $O(n^{13/10}) = O(\vars^{13/20})$.
\end{thm}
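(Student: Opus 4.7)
The plan is a nested quantum walk following Magniez, Santha and Szegedy. The outer walk is an MNRS walk (Theorem~\ref{thm:mnrs}) on the Johnson graph $J(n, r_1)$, whose vertices are $r_1$-element subsets $A \subseteq [n]$ of the vertex set, for a parameter $r_1$ to be tuned. The data $d(A)$ records the induced subgraph $G[A]$, i.e.\ all $\binom{r_1}{2}$ edge indicators among vertices of $A$. Outer setup therefore costs $O(r_1^2)$ queries, updating $A$ by swapping a single vertex costs $O(r_1)$, and the spectral gap is $\Omega(1/r_1)$ by Lemma~\ref{lem:johnson}. I declare $A$ marked if it contains two endpoints of some triangle of $G$; since every fixed triangle supplies $\binom{3}{2}=3$ pairs of vertices, a uniformly random $r_1$-subset is marked with probability $\varepsilon_{\mathrm{out}} = \Omega(r_1^2/n^2)$ whenever the input is positive.

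The checking subroutine, given the induced subgraph $G[A]$, searches for a vertex $w \in [n]$ that completes a triangle with some edge inside $A$. I implement this by amplitude amplification (Section~\ref{sec:amplification}) over $w \in [n]$, with the test of a candidate $w$ itself handled by the graph-collision algorithm applied to $G[A]$: the Boolean input to the collision instance is the length-$r_1$ vector $f(w)$ with $f(w)_u = [uw \in E]$, each coordinate of which costs one query to the input oracle. Since graph collision on $r_1$ vertices admits an $O(r_1^{2/3})$-query quantum algorithm by Corollary~\ref{cor:walkKDist}, testing a single $w$ costs $O(r_1^{2/3})$ queries, and amplitude amplification over $n$ candidates then yields an outer check cost of $O\bigl(\sqrt{n}\, r_1^{2/3}\bigr)$.

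Plugging these ingredients into the MNRS bound of Theorem~\ref{thm:mnrs} gives a total query cost of
\[
O\!\left(r_1^2 \;+\; \tfrac{n}{r_1}\!\left(\sqrt{n}\, r_1^{2/3} \;+\; r_1\sqrt{r_1}\right)\right)
\;=\; O\!\left(r_1^2 \;+\; n^{3/2}\, r_1^{-1/3} \;+\; n\, r_1^{1/2}\right).
\]
Choosing $r_1 = n^{3/5}$ balances all three contributions at $n^{13/10}$ (indeed $r_1^2 = n^{6/5} \le n^{13/10}$, $n^{3/2} r_1^{-1/3} = n^{13/10}$ and $n\, r_1^{1/2} = n^{13/10}$), which proves the bound. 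The main obstacle is the clean composition of three nested quantum subroutines---outer walk, outer check by amplitude amplification, and inner graph-collision call---within the error-tracking framework of Section~\ref{sec:specification}: one must make the inner graph-collision routine coherent via Lemma~\ref{lem:coherent} and apply the noisy amplitude amplification of Theorem~\ref{thm:noisyGrover} in place of ordinary amplification, so that logarithmic error-amplification factors do not enter the final bound. Once this bookkeeping is in place, the algebraic balancing $r_1 = n^{3/5}$ completes the proof.
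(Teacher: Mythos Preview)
Your proposal is correct and follows essentially the same approach as the paper: an MNRS walk on the Johnson graph over $r$-subsets storing the induced subgraph, with the check implemented by Grover search over the third vertex combined with the $O(r^{2/3})$ graph-collision algorithm on the stored edges, balanced at $r=n^{3/5}$. Your explicit invocation of noisy amplitude amplification to suppress the logarithmic overhead is exactly what the paper alludes to when it says ``the logarithmic factors can be removed.''
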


\pfstart
The quantum walk is again on the graph $G(n,r)$.
In order to avoid possible confusion with the input graph, we call the vertices of $G(n,r)$ {\em elements}.
An element is marked iff it contains exactly 2 vertices of a triangle.  Thus, the fraction of marked elements is $\lucky = \Omega(r^2/n^2)$.  Let us denote, for simplicity, $\str_{ab} = \str_{ba}$ even if $a>b$.

The data register $\reg D$ now contains the values of $\str_{ab}$ for $a,b\in S$.  The set-up cost is $O(r^2)$ queries, and the update cost is $O(r)$.  For the check subroutine, perform the Grover search for the third node of the triangle.  The check subroutine of the Grover search, in its turn, performs the graph collision algorithm on the subgraph induced by $S$ as follows.  Let $c$ be the potential third node of the triangle.  For each vertex $v\in S$, define $g(v) = 1$ iff $\str_{vc}=1$.  We can execute the graph collision algorithm here because we know all the edges connecting the vertices in $S$.  There is a graph collision, if and only if there is a triangle in $G$ with two vertices in $S$, and the third one being $c$.  The total checking cost is $\tilde O(\sqrt{n}r^{2/3})$, where the logarithmic factors are again due to the applications of \rf(lem:circuitRobust) in order to reduce error.

The total complexity of the algorithm is
\[
\tilde O\s[r^2 + \frac n r{\sA[ \sqrt{n}r^{2/3} + \sqrt{r}\cdot r ]} ] = \tilde O\s[r^2 + \frac{n^{3/2}}{r^{1/3}} + n\sqrt{r} ]  .
\]
This expression is optimised to $\tilde O(n^{13/10})$ when $r = n^{3/5}$.  As usually, the logarithmic factors can be removed.
\pfend

We will present a better quantum query algorithm for this problem in \rf(chp:cert).  Apart from the mentioned applications, the quantum walk on the Johnson graph can be used in matrix product verification~\cite{buhrman:productVerification}, restricted range associativity testing~\cite{dorn:associativity}, and other problems.

\section{Chapter Notes}
Quantum walks is a very broad area, with a variety of surveys~\cite{ambainis:walkApplications, kempe:walkOverview, santha:walkBasedAlgorithms}.  We mostly followed the latter in this chapter.  We only consider algorithmic applications of quantum walks, in among them, we only consider discrete-time quantum walks, ignoring the continuous-time quantum walks.  For other types of quantum walks, refer to the above-mentioned surveys.

The oldest quantum algorithm presented in this chapter is the Grover search~\cite{grover:search}, although at the time of discovery, its relation to random walks was not noticed.  Very soon, it was generalized to quantum amplitude amplification by Brassard and H\o yer~\cite{brassard:exactSimon}, and, independently, by Grover~\cite{grover:amplification}.  The name comes from~\cite{brassard:amplification}.  
\rf(cor:findingAllOnes) is well-known, see, e.g.,~\cite{ambainis:newLowerBoundMethod}.  Our proof is from~\cite{childs:graphProperties}.
The application to the collision problem, \rf(prp:collision), is due to Brassard, H\o yer and Tapp~\cite{brassard:collision}.

The quantum phase detection procedure from \rf(sec:phaseDetection) is usually presented in the form of quantum phase estimation~\cite{kitaev:phaseEstimation, cleve:phaseEstimation}.  The latter additionally uses quantum Fourier transform to estimate the value of the phase, not just detect whether it is non-zero.  For instance, this can be used in quantum counting~\cite{brassard:counting}.  

Most of the first applications of quantum walks followed similar restricted settings as we outlined for random walks in the preamble.  Watrous gives quantum analogues of random walks with low space complexity~\cite{watrous:walkSimulations}.  Quantum walks on various types of graphs also have been studied extensively.  This includes walks on the line~\cite{ambainis:walkOneDimensional} motivated by research on  quantum cellular automata~\cite{meyer:quantumCellular}, and $k$-dimensional torus~\cite{ambainis:coinsMakeFaster}.  In some specific cases, like for the opposite vertices of a hypercube~\cite{kempe:walkHitFaster} or the roots of two glued binary trees~\cite{childs:walkExponentialSeparation}, it is possible to obtain exponential advantage over the corresponding classical hitting times.  In~\cite{childs:walkExponentialSeparation}, the separation is demonstrated over an arbitrary classical algorithm (not necessary one based on random walks).

The results in~\rf(sec:szegedy) are due to Szegedy~\cite{szegedy:walk}.  Apparently, they can be also deduced from results by Camille Jordan from the 19th century~\cite{jordan:essai}.  The paper by Szegedy also describes the algorithm from \rf(thm:walkSzegedy).  Our main quantum walk algorithm in \rf(sec:mnrs) is due to Magniez, Nayak, Roland and Santha~\cite{magniez:walkSearch}.

The amplitude-amplification algorithm for element distinctness from \rf(sec:amplificationExamples) is due to Buhrman \etal~\cite{buhrman:distinctness}.  The optimal algorithm in \rf(sec:mnrsApplications) is due to Ambainis~\cite{ambainis:distinctness}.  Ambainis gives a more complicated proof that was later generalised to the MNRS quantum walks, yielding the current presentation.  The generalization in \rf(thm:walkKDistinctness) is due to Childs and Eisenberg~\cite{childs:subsetFinding}.

The triangle problem is interesting classically because of its connection to matrix multiplication~\cite{alon:triangle}.  Quantumly, the first algorithm was due to Buhrman \etal~\cite{buhrman:distinctness}.  The algorithm from \rf(thm:walkTriangle) is by Magniez, Santha and Szegedy~\cite{magniez:triangle}.

\chapter{Lower Bounds for Quantum Query Complexity}
\label{chp:lower}
In this chapter, we describe some known techniques for proving lower bounds on quantum query complexity.  
We consider the two main techniques: the polynomial method, and the adversary method.  Sections \ref{sec:adv}--\ref{sec:advAlgorithms} contain the majority of technical tools we will use in the later chapters.  The main result is \rf(thm:adv) that relates the quantum query complexity of a function to a relatively simple semi-definite optimisation problem.

The chapter is organised as follows.  In \rf(sec:pol), we describe the polynomial method and apply it to the collision and the element distinctness problems.  In \rf(sec:adv), we define the adversary bound, and in \rf(sec:advDuality), we define the dual adversary bound and a closely related notion of the span program.  In \rf(sec:advAlgorithms), we prove that the dual adversary bound provides an upper bound on the quantum query complexity.

\section{Polynomial Method}
\label{sec:pol}
In this section we describe the polynomial method for proving lower bounds on quantum query complexity.  This method is most notable for providing a lower bound on the collision and the element distinctness problems, the result we describe in \rf(sec:collisionLower).  Before that, in \rf(sec:polandAlgorithms), we explain the relation between polynomials and quantum query algorithms, and in \rf(sec:polLowerBounds), we show how the relation can be used to prove lower bounds.

The results of this section are illustrative.  We will not use the polynomial method further in the thesis.  However, we will reprove \rf(cor:distLower) in \rf(chp:cert) using the adversary method.

\subsection{Polynomials and Quantum Query Algorithms}
\label{sec:polandAlgorithms}
\mycommand{indexset}{I}
Let $\indexset$ be some finite set of {\em indices}.  A real (respectively, complex) {\em multilinear polynomial} in variables $(x_j)_{j\in\indexset}$ is an expression of the form
\begin{equation}
\label{eqn:polynomial}
P = \sum_{S\subseteq\indexset} a_S \prod_{j\in S} x_j,
\end{equation}
where $a_S$ are real (respectively, complex) numbers.  The {\em degree} of the polynomial, denoted $\deg P$, is the maximum of $|S|$ over all $S$ such that $a_S\ne 0$.

A real (complex) polynomial $P$ can be considered as a function $P\colon \R^\indexset\to\R$ (respectively, $P\colon \C^\indexset\to\C$).  Its value $P(x)$ on a sequence $x=(x_j)$ in $\R^\indexset$ (respectively, $\C^\indexset$) is defined by plugging the values of $x_j$ into the right hand side of~\refeqn{polynomial}.

Recall from \rf(sec:query) that a quantum query algorithm is a sequence
\[
U_0\to O_z\to U_1\to O_z \to \cdots \to U_{T-1} \to O_z\to U_T
\]
of transformations in $\hilbert j\otimes\hilbert v\otimes \hilbert W$.
Here, $U_i$ are input-independent unitary transformations.
The transformation $O_z$ depends on the input string $z\in [q]^\vars$, and is given by $\ket j|j>\ket v|a>\mapsto \ket j|j>\ket v|a+z_j>$, where the addition is performed modulo $q$.
\mycommand{xm}{\widetilde{z}}
Let us define a Boolean string $\xm = (\xm_{j,a})_{j\in[\vars],a\in[q]}$ from $z$ by
\begin{equation}
\label{eqn:xja}
\xm_{j,a} = \begin{cases}
1,& \text{if $z_j=a$;}\\
0,& \text{otherwise.}
\end{cases}
\end{equation}
\mycommand{alg}{\cA}
We call Boolean strings thus obtained {\em valid}.  The main observation binding quantum query algorithm and multilinear polynomials is as follows.
\begin{lem}
\label{lem:amplitudes}
Let $\alg$ be a quantum query algorithm with $O_z$ as the input oracle, and let $\xm_{j,a}$ be defined as in~\refeqn{xja}.  Then, the amplitude of any basis state of $\alg$ after $t$ queries is given by a complex polynomial  of degree at most $t$ in the variables $\xm_{j,a}$.
\end{lem}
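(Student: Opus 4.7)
The plan is to proceed by induction on the number of queries $t$. For the base case $t=0$, the state of the algorithm is $U_0\,|0\rangle|0\rangle|0\rangle$, which is entirely input-independent, so every amplitude is a fixed complex number, i.e.\ a polynomial of degree $0$ in the variables $\xm_{j,a}$.

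For the inductive step, I assume that after $t$ queries every amplitude is a polynomial of degree at most $t$ in the $\xm_{j,a}$'s, and show the same conclusion after one more query followed by the next input-independent unitary $U_{t+1}$. The crucial observation, and in my view essentially the whole content of the proof, is that the oracle can be expanded in terms of the indicator variables: for $j \in [\vars]$,
\[
O_z \, |j\rangle|b\rangle|w\rangle \;=\; |j\rangle|b + z_j\rangle|w\rangle \;=\; \sum_{a \in [q]} \xm_{j,a}\, |j\rangle|b + a\rangle|w\rangle,
\]
because exactly one $\xm_{j,a}$ is nonzero, namely the one with $a = z_j$. Hence, if the pre-query amplitude of $|j,b,w\rangle$ is $\alpha_{j,b,w}$, then the post-query amplitude of $|j,c,w\rangle$ equals $\sum_{a \in [q]} \xm_{j,a}\, \alpha_{j,c-a,w}$, a sum of degree-$1$ variables $\xm_{j,a}$ multiplied by polynomials of degree at most $t$, hence a polynomial of degree at most $t+1$. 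On the auxiliary $j=0$ subspace the oracle acts as the identity (by the convention $z_0 = 0$), so those amplitudes retain their degree. Finally, applying $U_{t+1}$ replaces each amplitude by a fixed $\C$-linear combination of amplitudes already shown to have degree at most $t+1$, which preserves the bound.

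I do not anticipate a genuine obstacle: the argument is a one-line induction whose only substantive ingredient is the oracle expansion above. A minor addendum worth including is that, because each $\xm_{j,a}$ is Boolean, one may further reduce the resulting polynomial to a multilinear one without increasing its degree, which is the form typically used in subsequent polynomial-method lower-bound arguments.
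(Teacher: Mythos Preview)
Your proof is correct and follows essentially the same approach as the paper: induction on $t$, with the key step being the expansion $O_z\ket|j>\ket|b>\ket|w> = \sum_{a\in[q]} \xm_{j,a}\ket|j>\ket|b+a>\ket|w>$, so that each post-oracle amplitude is a $\C$-linear combination of terms $\xm_{j,a}\cdot(\text{degree-}t\text{ polynomial})$, and the subsequent input-independent unitary preserves the degree. Your explicit treatment of the $j=0$ subspace and the remark on multilinearity are minor additions not in the paper's version, but the argument is otherwise the same.
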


\pfstart
The proof proceeds by induction on $t$.  If $t=0$, the amplitude of the state does not depend on the input, hence, it is a degree-0 polynomial.  Assume the theorem holds for a value of $t$, and prove it for $t+1$ as follows.  By the inductive assumption, the state of the algorithm before the $(t+1)$st application of $O_z$ is of the form 
\[
\ket|\psi> = \sum_{j,a,i} P_{j,a,i}\ket j|j>\ket v|a>\ket W|i>
\]
with $\deg P_{j,a,i}\le t$.  Here $j,a$ and $i$ range over the computation basis elements of $\reg j$, $\reg v$ and $\reg W$, respectively.
Since the oracle maps $\ket j|j>\ket v|a>$ into $\ket j|j>\ket v|a+z_j>$, we have
\[
O_z\ket|\psi> = \sum_{j,a,i} \skB[\;\sum_{b\in[q]} \xm_{j,b} P_{j,a-b,i}]\ket j|j>\ket v|a>\ket W|i>.
\]
Thus, we see that the amplitudes after the application of $O_z$ (the expressions in the square brackets) can be expressed as complex polynomials of degree at most $t+1$.  After the application of the unitary $U_{t+1}$, the amplitudes are linear combinations of these polynomials, hence, are polynomials of degree at most $t+1$ themselves.
\pfend

Now we turn our attention to the function that the algorithm computes.
\begin{defn}[Approximating polynomial]
Let $f\colon [q]^\vars\supseteq \cD\to\{0,1\}$ be a function, and $\eps>0$ be a real number.  We say that a real polynomial $P$ in variables $(\xm_{j,a})_{j\in[\vars], a\in[q]}$ {\em $\eps$-approximates} $f$, if
\[
\mbox{$\abs|f(z) - P(\xm)|\le\eps$\quad for all $z\in\cD$},\qquad\mbox{ and }\qquad
\mbox{$0\le P(\xm)\le 1$\quad for all $z\in[q]^\vars$},
\]
where $\xm=(\xm_{j,a})$ are as defined in~\refeqn{xja}.
\end{defn}

\begin{thm}
\label{thm:approx}
Suppose there exists a quantum query algorithm $\alg$ evaluating a function $f\colon [q]^\vars\supseteq \cD\to\{0,1\}$ in $t$ queries with error probability $\eps$. Then, there exists a polynomial $P$ of degree at most $2t$ that $\eps$-approximates $f$.
\end{thm}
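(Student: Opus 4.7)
The plan is to extract an approximating polynomial directly from the state of the algorithm at the end of computation, using \rf(lem:amplitudes) as the key input. Let $\alg$ be the given algorithm and write its final state, after all $t$ queries, as
\[
\ket|\psi_z> = \sum_{b} \alpha_b(\xm)\, \ket|b>,
\]
where $b$ ranges over the computational basis of $\reg j\otimes \reg v\otimes \reg W$ and $\xm$ is defined from $z$ by~\refeqn{xja}. By \rf(lem:amplitudes), each $\alpha_b$ is a complex polynomial in the variables $(\xm_{j,a})$ of degree at most $t$.

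The first step is to turn acceptance probabilities into real polynomials of the variables $\xm_{j,a}$. Writing $\alpha_b = R_b + \ii I_b$ with $R_b$ and $I_b$ real polynomials of degree at most $t$, I get $|\alpha_b(\xm)|^2 = R_b(\xm)^2 + I_b(\xm)^2$, which is a real polynomial of degree at most $2t$. Summing over the basis states $b$ whose output register $\reg O$ equals $1$ yields a real polynomial
\[
P(\xm) = \sum_{b\colon b_{\reg O}=1} |\alpha_b(\xm)|^2
\]
of degree at most $2t$. Since $\xm_{j,a}\in\{0,1\}$ for valid $\xm$, I may freely reduce $P$ modulo the relations $\xm_{j,a}^2=\xm_{j,a}$ to obtain a multilinear polynomial of degree at most $2t$ agreeing with the above on all Boolean inputs.

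The second step is to verify the two approximation conditions. By definition, for $z\in\cD$, the quantity $P(\xm)$ is exactly the probability that $\alg$ outputs $1$ on input $z$; by the bounded-error hypothesis this lies within $\eps$ of $f(z)$. For the boundedness condition, $P(\xm)$ is a probability for \emph{any} valid $\xm$ (whether or not $z\in\cD$), hence $0\le P(\xm)\le 1$ for all $z\in[q]^\vars$. Together these give the two bullets in the definition of $\eps$-approximation.

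There is essentially no obstacle, as the ingredients are in place: the only point requiring a little care is the transition from the complex polynomials of \rf(lem:amplitudes) to a real polynomial of twice the degree, which is handled by the real/imaginary decomposition above. I would also briefly remark that since the $\xm_{j,a}$ are Boolean, passing to the multilinear representative preserves the degree bound, so the conclusion is really a statement about multilinear polynomials in the sense of~\refeqn{polynomial}.
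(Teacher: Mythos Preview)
Your proposal is correct and follows essentially the same approach as the paper: apply \rf(lem:amplitudes) to obtain degree-$t$ complex amplitude polynomials, then observe that the acceptance probability $\sum_b |\alpha_b|^2$ is a real polynomial of degree at most $2t$ that equals the algorithm's output-$1$ probability on every input. Your handling of realness via $R_b^2+I_b^2$ is equivalent to the paper's one-line observation that $P_{j,a,i}P_{j,a,i}^*$ equals its own complex conjugate, and your explicit multilinearization remark is a small bonus the paper leaves implicit.
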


\pfstart
By \reflem{amplitudes}, the state of $\alg$ before the final measurement is of the form
\[
\ket|\psi> = \sum_{j,a,i} P_{j,a,i}\ket j|j>\ket v|a>\ket W|i>
\]
where $P_{j,a,i}$ are complex polynomials of degree at most $t$.  The acceptance probability is expressible as
\begin{equation}
\label{eqn:polAcceptProb}
\sum P_{j,a,i}P_{j,a,i}^*
\end{equation}
where the sum is over all $j$ and $a$, and those $i$ in which the output register $\reg O$ contains value 1.  (Recall, we assume that $\reg O$ is a part of the register $\reg W$.)
Each product $P_{j,a,i}P_{j,a,i}^*$ is a polynomial of degree at most $2t$.  Also, it has real coefficients, because it equals its complex conjugate.  Hence, the polynomial in~\refeqn{polAcceptProb} has real coefficients and degree at most $2t$.
\pfend

For Boolean functions this result can be stated in a bit nicer way.
\begin{cor}
Suppose there exists a quantum query algorithm $\alg$ evaluating a Boolean function $f\colon \{0,1\}^\vars\supseteq \cD\to\{0,1\}$ in $t$ queries with error probability $\eps$.  Then, there exists a polynomial $P$ in variables $(z_j)_{j\in[\vars]}$ of degree at most $2t$ such that
\[
\mbox{$\abs|f(z) - P(z)|\le\eps$\quad for all $z\in\cD$},\qquad\mbox{ and }\qquad
\mbox{$0\le P(z)\le 1$\quad for all $z\in\{0,1\}^\vars$},
\]
\end{cor}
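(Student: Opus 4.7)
The plan is to derive this corollary directly from \rf(thm:approx) by specialising to $q=2$ and changing variables. First I would apply the theorem to obtain a real multilinear polynomial $\widetilde P$ in the variables $\{\xm_{j,a}\}_{j\in[\vars],\,a\in\{0,1\}}$ of degree at most $2t$ such that $|f(z)-\widetilde P(\xm)|\le\eps$ for every $z\in\cD$ and $0\le\widetilde P(\xm)\le 1$ for every $z\in\{0,1\}^\vars$, where $\xm$ is the valid encoding of $z$ from~\rf(eqn:xja).

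The next step is the substitution. Since $q=2$, the valid encoding satisfies $\xm_{j,1}=z_j$ and $\xm_{j,0}=1-z_j$. Define
\[
P(z_1,\dots,z_\vars) \;=\; \widetilde P\bigl(\,1-z_1,\,z_1,\,1-z_2,\,z_2,\,\dots,\,1-z_\vars,\,z_\vars\,\bigr).
\]
Because each $\xm_{j,a}$ is replaced by an affine function of a single $z_j$, and $\widetilde P$ has degree at most $2t$, the resulting expression is a polynomial in $(z_j)$ of degree at most $2t$. It need not be multilinear a priori, so I would then multilinearise it by repeatedly replacing $z_j^k$ by $z_j$ for all $k\ge 1$; this is justified because $z_j\in\{0,1\}$ makes the two expressions numerically equal, and the operation cannot increase the degree.

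It remains to verify the two required properties. For $z\in\cD$, the substitution evaluates $\widetilde P$ at the valid string $\xm$ associated with $z$, so $|f(z)-P(z)|=|f(z)-\widetilde P(\xm)|\le\eps$. For arbitrary $z\in\{0,1\}^\vars$, the substituted tuple $(1-z_1,z_1,\dots)$ is also a valid encoding of some string in $\{0,1\}^\vars=[q]^\vars$, so $0\le P(z)=\widetilde P(\xm)\le 1$.

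There is no genuine obstacle; the only thing to keep an eye on is that the variable substitution and the multilinearisation step do not inflate the degree beyond $2t$, which is immediate since the substitution is affine in each $z_j$ separately and multilinearisation only removes repeated factors.
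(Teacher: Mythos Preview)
Your proposal is correct and follows essentially the same approach as the paper: apply \rf(thm:approx) and then use the substitution $\xm_{j,1}=z_j$, $\xm_{j,0}=1-z_j$ (which is affine, so degree is preserved). The multilinearisation step you include is a harmless addition but not actually needed, since the corollary as stated does not require $P$ to be multilinear.
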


\pfstart
Apply \refthm{approx}, and note that $\xm_{j,1} = z_j$ and $\xm_{j,0} = 1-z_j$.
\pfend

In this case, we also say that the polynomial $P$ $\eps$-approximates $f$.  If $\eps = 0$, we say that $P$ represents $f$ {\em exactly}.

\subsection{Lower Bounds by Polynomials}
\label{sec:polLowerBounds}
The idea behind proving lower bounds for quantum query algorithms using the polynomial method is as follows.  Assume a quantum query algorithm $\alg$ calculates a function $f$ in $t$ queries with error probability $\eps$.  By~\refthm{approx}, there exists a polynomial $P$ that $\eps$-approximates $f$ and has degree at most $2t$.  Transform $P$ into a univariate polynomial (i.e., a polynomial in one variable), and argue using lower bounds for univariate approximating polynomials.

\mycommand{sym}{\mathrm{sym}}
\mycommand{symgr}{\mathbb{S}}
We start with transforming a multivariate polynomial to a univariate one.  One of the basic ways is as follows.
Let $P$ be a polynomial in variables $z_1,\dots,z_n$.  If $\pi$ is a permutation on $n$ elements and $z\in \{0,1\}^\vars$, then denote by $\pi(z)$ the string $(z_{\pi(1)},\dots z_{\pi(n)})$ obtained from $z$ by permuting its elements according to $\pi$.  Let $\symgr_n$ denote the symmetric group of order $n$ consisting of all $n!$ permutations on $n$ elements.  Define the {\em symmetrisation} $P^\sym$ as the following polynomial in $n$ variables:
\[
P^\sym(z) = \frac{1}{n!}\sum_{\pi\in \symgr_n} P(\pi(z)).
\]

\begin{lem}
\label{lem:symmetrize}
If $P$ is a multilinear polynomial, then there exists a univariate polynomial $Q$ of degree at most $\deg P$ such that $P^\sym(z) = Q(|z|)$ for all $z\in\{0,1\}^n$.  (Here, $|z|$ denotes the Hamming weight of $z$.)
\end{lem}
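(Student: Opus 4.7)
The plan is to reduce the claim to a computation on individual monomials. Write $P$ in its multilinear expansion $P(z) = \sum_{S\subseteq[n]} a_S \prod_{j\in S} z_j$, so that by linearity
\[
P^\sym(z) = \sum_{S\subseteq[n]} a_S \cdot M_S(z), \qquad\text{where}\qquad M_S(z) = \frac{1}{n!}\sum_{\pi\in\symgr_n} \prod_{j\in S} z_{\pi(j)}.
\]
The degree of $P^\sym$ is at most $\deg P$, so it suffices to show that each $M_S$, restricted to $\{0,1\}^n$, is a polynomial in $|z|$ of degree at most $|S|$.

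First I would observe that as $\pi$ ranges over $\symgr_n$, the set $\pi(S)$ hits each $k$-subset $T\subseteq[n]$ (where $k=|S|$) exactly $k!(n-k)!$ times. Therefore
\[
M_S(z) = \frac{k!(n-k)!}{n!}\sum_{\substack{T\subseteq[n]\\|T|=k}} \prod_{j\in T} z_j = \binom{n}{k}^{-1} \sum_{|T|=k} \prod_{j\in T} z_j.
\]
For $z\in\{0,1\}^n$, the monomial $\prod_{j\in T} z_j$ equals $1$ exactly when $T$ is contained in the support of $z$, so the sum counts the number of $k$-subsets of the support, giving $\binom{|z|}{k}$. Hence $M_S(z) = \binom{n}{k}^{-1}\binom{|z|}{k}$ on $\{0,1\}^n$.

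Combining the pieces, define
\[
Q(w) = \sum_{k=0}^{\deg P} \binom{n}{k}^{-1}\binom{w}{k}\skB[\,\sum_{|S|=k} a_S].
\]
Since $\binom{w}{k} = \frac{w(w-1)\cdots(w-k+1)}{k!}$ is a univariate polynomial in $w$ of degree exactly $k$, the polynomial $Q$ has degree at most $\deg P$, and by construction $P^\sym(z) = Q(|z|)$ for every $z\in\{0,1\}^n$. There is no real obstacle here; the only point requiring care is the orbit-counting identity in the second paragraph, which relies on the multilinearity of $P$ (so that only subsets, not multisets, appear in the expansion).
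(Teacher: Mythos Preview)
Your proof is correct and follows essentially the same approach as the paper: reduce to a single monomial by linearity, then compute that the symmetrization of $\prod_{j\in S} z_j$ on $\{0,1\}^n$ equals $\binom{n}{k}^{-1}\binom{|z|}{k} = \ell^{\underline{k}}(n-k)!/n!$ where $k=|S|$ and $\ell=|z|$. The paper's version is terser and phrases the count as ``$P(\pi(z))=1$ iff $\pi^{-1}(S)$ lies in the support of $z$,'' but the content is identical.
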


\pfstart
By linearity, it is enough to consider the case of $P$ being a monomial $\prod_{j\in S} z_j$ with $S\subseteq[n]$.  Denote $k=|S|$.  For any $z$ of Hamming weight $\ell$, the value of $P$ on $\pi(z)$ is 1 if and only if
$\pi^{-1}(S)$ is contained in the value-1 variables of $z$.
Hence,
\(
P^\sym =  \ell^{\underline{k}}(n-k)!/n!
\),
where 
\[
\ell^{\underline{k}} = \ell(\ell-1)\cdots(\ell-k+1)
\]
is the {\em falling power}.  Thus, $P^\sym$ is degree-$k$ polynomial in $\ell$.
\pfend

We bound the degree of the approximating univariate polynomial using the following result:

\begin{lem}[Paturi~\cite{paturi:degree}]
\label{lem:paturi}
Let $Q$ be a real univariate polynomial of degree at most $d$, and $a<b$ be integers.  Assume the following holds
\itemstart
\item $|Q(i)|\le 1$ for all integers $i\in[a,b]$;
\item there exists a real number $\xi$ between $a$ and $b$ such that $\absA|Q(\lfloor\xi\rfloor) - Q(\xi)|=\Omega(1)$.
\itemend
Then,
\[
d = \Omega\s[\sqrt{(\xi-a+1)(b-\xi+1)}] .
\]
\end{lem}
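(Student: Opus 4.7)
The plan is to reduce Paturi's estimate to the classical Bernstein inequality for polynomials on a segment, with the bridge between ``bounded at integers'' and ``bounded on the interval'' supplied by the Coppersmith--Rivlin / Ehlich--Zeller theorem.

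First I would normalize: set $n=b-a$ and substitute $t=(2x-(a+b))/n$, so the integers of $[a,b]$ become $n+1$ equally spaced nodes in $[-1,1]$. Writing $\tilde Q(t)=Q(x(t))$ produces a polynomial of degree at most $d$ bounded by $1$ at the nodes, with a point $\tilde\xi\in[-1,1]$ satisfying $|\tilde Q(\tilde\xi)-\tilde Q(\tilde{\lfloor\xi\rfloor})|=\Omega(1)$ and $|\tilde\xi - \tilde{\lfloor\xi\rfloor}|\le 2/n$.

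Next I would split into two regimes. If $d > c_0\sqrt n$ for a small absolute constant $c_0$ (to be fixed below), the desired conclusion is automatic because $\sqrt{(\xi-a+1)(b-\xi+1)}\le n/2+1$. Otherwise, the Coppersmith--Rivlin inequality upgrades the node bound to $|\tilde Q(t)|\le C$ on all of $[-1,1]$ for an absolute constant $C$. Bernstein's inequality then gives $|\tilde Q'(t)|\le Cd/\sqrt{1-t^2}$ on $(-1,1)$, while the mean value theorem on $[\tilde{\lfloor\xi\rfloor},\tilde\xi]$ produces some $\eta$ in that sub-interval of length $\le 2/n$ with $|\tilde Q'(\eta)|\ge \Omega(n)$. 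Combining,
\[
d \;=\; \Omega\bigl(n\sqrt{1-\eta^2}\bigr).
\]
A direct calculation yields $1-\tilde\xi^2 = 4(\xi-a)(b-\xi)/n^2$, and since $|\eta-\tilde\xi|\le 2/n$ one obtains $\sqrt{1-\eta^2} = \Omega\bigl(\sqrt{(\xi-a+1)(b-\xi+1)}/n\bigr)$, finishing the bound.

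The main obstacle is the step from ``bounded at $n+1$ equally spaced nodes'' to ``bounded on the continuous interval $[-1,1]$'': without a degree restriction this is false — Chebyshev-type polynomials of degree $\gg\sqrt n$ can be $\le 1$ at the nodes yet huge elsewhere — and the sharp threshold $d = O(\sqrt n)$ at which it holds is exactly what produces the $\sqrt{\,\cdot\,}$ in Paturi's conclusion. A secondary subtlety is tracking the ``$+1$''-shifts in $(\xi-a+1)$ and $(b-\xi+1)$, needed so the statement remains informative when $\xi$ lies within unit distance of an endpoint: there the naive Bernstein weight $\sqrt{1-\tilde\xi^2}$ would be too small, and one must exploit the $2/n$-slack in the choice of $\eta$ to recover the required boundary contribution.
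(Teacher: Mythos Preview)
The paper does not prove this lemma; it is quoted from Paturi and used as a black box. So there is no ``paper's proof'' to compare against, only the question of whether your sketch actually establishes the statement.

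There is a genuine gap in your case analysis. You split on whether $d>c_0\sqrt{n}$ and claim that in this case ``the desired conclusion is automatic because $\sqrt{(\xi-a+1)(b-\xi+1)}\le n/2+1$''. But $n/2+1$ is linear in $n$, not of order $\sqrt n$; when $\xi$ sits near the middle of $[a,b]$ the target lower bound is $d=\Omega(n)$, and $d>c_0\sqrt n$ says nothing of the sort. Concretely, take $\xi=(a+b)/2$ and $d=n^{0.6}$: your ``easy case'' fires, yet the required conclusion $d=\Omega(n)$ is false for this $d$, so that branch of the argument cannot possibly be establishing it. The other branch, $d\le c_0\sqrt n$, is where Coppersmith--Rivlin applies, but that hypothesis is simply too restrictive to cover the regime $\sqrt n \ll d \ll n$, which is exactly where the lemma has content for $\xi$ away from the endpoints.

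The overall strategy (pass from node bounds to interval bounds, then invoke a Bernstein-type derivative estimate and the mean value theorem) is the right shape, but the bridge you need is stronger than a single global application of Coppersmith--Rivlin with the threshold $d=O(\sqrt n)$. One has to either use a pointwise, position-dependent discrete-to-continuous bound, or carry out the argument on a carefully chosen subinterval around $\xi$ whose length is tuned to $d$ and to the distance of $\xi$ from the nearer endpoint. As written, the two halves of your dichotomy leave a large uncovered middle.
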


Recall the $k$-threshold function from \rf(defn:threshold): The function evaluates to 1 if the Hamming weight of the input is at least $k$.

\begin{prp}
\label{prp:polThresholdLower}
The quantum query complexity of the $k$-threshold function on $\vars$ variables is $\Omega(\sqrt{k(\vars-k+1)})$.  Moreover, the same lower bound holds in the promise that the Hamming weight of the input is either $k-1$ or $k$.
\end{prp}

\pfstart
Let $\alg$ be a quantum algorithm calculating the $k$-threshold function in $t$ queries with error at most $1/3$.  By \refthm{approx}, there exists a polynomial $P$ in variables $z_1,\dots,z_n$ of degree at most $2t$ that approximates the $k$-threshold function.

Thus, we have $0\le P^\sym(y) \le 1/3$ if $|y|=k-1$, and $2/3\le P^\sym(x)\le 1$ if $|x|=k$.  Let $Q$ be the univariate polynomial that corresponds to $P$ per \reflem{symmetrize}.  Then, $0\le Q(k-1) \le 1/3$, $2/3\le Q(k) \le 1$, and $0\le Q(i)\le 1$ for all integers $i$ from 0 to $\vars$.

Consider $c = Q(k-1/2)$.  If $c\ge 1/2$, define $\xi = k-1/2$, $a=0$, and $b=\vars$.  With this choice, $|Q(\lfloor \xi\rfloor) - Q(\xi)|\ge 1/6$, hence, the conditions of \rf(lem:paturi) hold, and the degree of $Q$ is $\Omega\s[\sqrt{(k+1/2)(\vars-k+3/2)}] = \Omega(\sqrt{k(\vars-k+1)})$.

If $c < 1/2$, we obtain a similar result for the polynomial $\tilde Q(z) = Q(\vars-z)$ with $\xi = \vars-k+1/2$ and the same values of $a$ and $b$.  As the degree of $Q$ is at most the degree of $P$, we get that $t = \Omega(\sqrt{k(\vars-k+1)})$.
\pfend

This means that the algorithm in \rf(cor:kThresholdUpper) is tight.  In particular, the quantum query complexity of the OR function on $\vars$ variables is $\Omega(\sqrt{\vars})$, implying the optimality of Grover's search.  Also, by an argument similar to \rf(thm:R2bs), we get that the quantum query complexity of a function $f$ is $\Omega(\sqrt{\bs(f)})$.  By combining with \rf(thm:D2bs), we get the following result (recall that $Q(f)$ stands for the quantum query complexity of $f$):
\begin{thm}
\label{thm:DtoQ}
For any total Boolean function $f$, $D(f) =O(Q(f)^6)$.
\end{thm}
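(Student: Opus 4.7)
The plan is to derive the result by composing two earlier ingredients: the quantum lower bound for the OR function (and hence a general block-sensitivity quantum lower bound) together with Nisan's bound $D(f) \le \bs(f)^3$ from \rf(thm:D2bs). Concretely, I would first establish the intermediate inequality $Q(f) = \Omega(\sqrt{\bs(f)})$, and then chain:
\[
D(f) \;\le\; \bs(f)^3 \;=\; \bigl(O(Q(f)^2)\bigr)^3 \;=\; O(Q(f)^6).
\]
The second step is purely algebraic once $Q(f) \ge c\sqrt{\bs(f)}$ is available, so the real content lies in establishing the block-sensitivity bound for $Q$.

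For that step, I would essentially repeat the classical reduction used in the proof of \rf(thm:R2bs), but in the quantum setting. Let $b = \bs(f)$ and fix witnesses $y, x^{(1)}, \dots, x^{(b)} \in \{0,1\}^\vars$ with disjoint sensitive blocks $B_1, \dots, B_b \subseteq [\vars]$. Given an oracle $\cO$ for a hidden input $z \in \{0,1\}^b$ of Hamming weight at most 1, I build an oracle $\cO'$ encoding the input to $f$ that is $y$ if $z$ is all-zero and $x^{(i)}$ if $z_i = 1$. For each query index $j \in [\vars]$ asked of $\cO'$, if $j \notin \bigcup_i B_i$ then $\cO'$ returns the fixed bit $y_j$ without querying $\cO$; otherwise $j$ lies in a unique $B_i$, and $\cO'$ queries $\cO$ for $z_i$ and returns $y_j$ or $x^{(i)}_j$ according to whether $z_i = 0$ or $z_i = 1$. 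All of this is a classical input-independent function of $j$, so by \rf(lem:quantumOfDet) and \rf(lem:reversibility) it can be implemented as a reversible quantum circuit calling $\cO$ at most once per call to $\cO'$. Distinguishing the all-zero input from the one-hot ones in $z$ is exactly computing $\mathrm{OR}_b$ on inputs of Hamming weight $\le 1$, while the corresponding $f$-values distinguish $f(y)$ from $f(x^{(i)})$. Hence a $T$-query quantum algorithm for $f$ yields a $T$-query algorithm for OR on $b$ bits, and by \rf(prp:polThresholdLower) (with $k=1$, restricted to weight $\le 1$ inputs) we get $T = \Omega(\sqrt{b})$, i.e. $Q(f) = \Omega(\sqrt{\bs(f)})$.

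Combining this with \rf(thm:D2bs), which applies because $f$ is total, yields $D(f) \le \bs(f)^3 = O(Q(f)^6)$, proving the theorem. The only subtle point, and the step I would verify most carefully, is that the quantum simulation of $\cO'$ by $\cO$ is genuinely reversible and costs exactly one oracle call per simulated call (rather than, say, leaving garbage on an ancilla that corrupts interference); this is handled by performing the classical lookup $j \mapsto (y_j, \text{index}(j))$ reversibly, invoking $\cO$ in-place on the correct input, and then uncomputing the lookup. Once that simulation is in hand, the rest of the argument is a one-line composition of the two cited results.
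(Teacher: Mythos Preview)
Your proposal is correct and follows essentially the same approach as the paper: the paper derives $Q(f)=\Omega(\sqrt{\bs(f)})$ from \rf(prp:polThresholdLower) via the same reduction as in \rf(thm:R2bs), and then combines this with \rf(thm:D2bs). Your write-up in fact supplies more detail on the quantum simulation of the oracle than the paper, which simply says ``by an argument similar to \rf(thm:R2bs).''
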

This is the best known lower bound on $Q(f)$ in terms of deterministic complexity.  The best known separation is quadratic, given by Grover's search.

\subsection{Collision Problem}
\label{sec:collisionLower}
Recall from \rf(defn:collision) that in the collision problem on $\vars=2n$ variables one has to detect whether the input is 1-to-1 or 2-to-1.
One of the main results obtained via the polynomial method is the lower bound for the collision problem.

\begin{thm}
\label{thm:collisionLower}
The quantum query complexity of the collision function on $\vars$ variables is $\Omega(\vars^{1/3})$. 
\end{thm}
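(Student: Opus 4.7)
The plan is to follow the polynomial method, in the spirit of Aaronson--Shi. Suppose a quantum query algorithm $\alg$ solves collision on $\vars$ variables with $T$ queries and error at most $1/3$. By \rf(thm:approx), the acceptance probability of $\alg$ is a multilinear polynomial $P$ of degree at most $2T$ in the indicator variables $\xm_{j,a}$ from \refeqn{xja}, and $P$ approximates the collision function within error $1/3$ on valid inputs.

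Next, I would symmetrize over ``uniform'' hard inputs. For each $g$ dividing $\vars$ and each range size $r\ge\vars/g$, let $D_{g,r}$ be the uniform distribution over functions $z\colon[\vars]\to[r]$ whose image has exactly $\vars/g$ elements, each with exactly $g$ preimages. Set $\mu(g,r) = \E_{z\sim D_{g,r}}[P(\xm(z))]$. Because both the symmetric group on $[\vars]$ and the symmetric group on $[r]$ act on the distribution and leave $\mu$ invariant, $\mu$ only depends on $(g,r)$. Along the hyperbola $gr=\vars$, specialising $r=\vars/g$ gives a function $\nu(g)=\mu(g,\vars/g)$, with $\nu(1)\le 1/3$ (1-to-1 inputs are negative), $\nu(2)\ge 2/3$ (2-to-1 inputs are positive), and $0\le\nu(g)\le 1$ for every valid $g$ dividing $\vars$.

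The crux is to show that $\nu(g)$ is the restriction to divisors of $\vars$ of a univariate polynomial $Q$ in $g$ whose degree is $O(T)$. The expectation $\mu(g,r)$, computed by averaging each monomial of $P$ over $D_{g,r}$, becomes, after pulling out the normalising counts, a two-variable polynomial in $g$ and $r$ of total degree at most $2T$ divided by a universal denominator of the form $\vars^{\underline{k}}$. Substituting $r=\vars/g$ and clearing denominators then gives a univariate polynomial $Q(g)$ of degree $\le 2T$ that agrees with $\nu(g)$ on the integer divisors of $\vars$ in $[1,\vars]$. This symmetrization is the step I expect to be the main obstacle; it requires carefully tracking how a single monomial $\prod \xm_{j_\ell,a_\ell}$ of degree $k\le 2T$ is averaged out and verifying that the resulting rational expression in $(g,r)$ is genuinely polynomial of degree at most $2T$ in each variable (this is the content of the Aaronson--Shi two-variable symmetrization; a cleaner variant due to Ambainis or Kutin would also suffice).

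Finally, I would invoke an approximation-theoretic lower bound in the style of \rf(lem:paturi), applied to $Q$: a real polynomial of degree $d$ that is bounded by $1$ in absolute value at integers in $[1,\vars]$, is below $1/3$ at $g=1$, and above $2/3$ at $g=2$ must satisfy $d=\Omega(\vars^{1/3})$. Concretely, the classical Markov--Bernstein or Ehlich--Zeller/Rivlin--Cheney inequalities give that an $\Omega(1)$ change over the unit interval $[1,2]$ inside the integer grid $\{1,\dots,\vars\}$ forces $d\ge c\vars^{1/3}$. Since $d\le 2T$, this yields $T=\Omega(\vars^{1/3})$. Note that this is a tighter cube-root bound rather than the generic $\sqrt{\vars}$ of \rf(prp:polThresholdLower), precisely because the two ``hard'' points $g=1$ and $g=2$ are separated only by a constant while the polynomial is constrained on many integer values all the way up to $\vars$.
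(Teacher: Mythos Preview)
Your outline has the right high-level architecture (polynomial method, symmetrize, reduce to a univariate approximation bound), but the two crucial technical steps are broken, and the missing idea is precisely where the $\vars^{1/3}$ comes from.

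\textbf{The substitution step does not give a polynomial.} If $\mu(g,r)$ is a polynomial of total degree $\le 2T$ in $g$ and $r$, then $\nu(g)=\mu(g,\vars/g)$ is a \emph{Laurent} polynomial in $g$: a term $g^ir^j$ becomes $\vars^j g^{i-j}$ with $i-j$ ranging over $[-2T,2T]$. ``Clearing denominators'' by multiplying by $g^{2T}$ does give a genuine polynomial, but then its values at the sample points are bounded by $g^{2T}$, not by $1$, and every approximation-theoretic inequality you could invoke collapses. Moreover, $\nu(g)$ is constrained only at \emph{divisors} of $\vars$, which can be as few as $O(\log \vars)$ points; boundedness on such a sparse set yields no useful degree bound at all.

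\textbf{The final inequality is misquoted.} Even granting a degree-$d$ polynomial bounded by $1$ on \emph{all} integers in $[1,\vars]$ with an $\Omega(1)$ jump between $g=1$ and $g=2$, Paturi/Markov--Bernstein gives $d=\Omega(\sqrt{\vars})$, not $\Omega(\vars^{1/3})$. That would imply $T=\Omega(\sqrt{\vars})$ for collision, contradicting the $O(\vars^{1/3})$ upper bound---so no polynomial with those properties can exist, and your construction must fail somewhere upstream (it does, at the substitution step).

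\textbf{What the paper does instead.} The cube root does not come from a single univariate restriction; it comes from \emph{balancing two}. The paper (following Kutin's variant of Aaronson--Shi) symmetrizes to a three-variable polynomial $Q(m,b,b')$ evaluated on inputs that are $b$-to-$1$ on $m$ elements and $b'$-to-$1$ on the rest. This $Q$ is bounded at all ``good'' triples---a rich grid of integer points, not just divisors. One then restricts in two different directions: walking in the $b$ (or $b'$) coordinate near the jump gives $\deg Q=\Omega(\sqrt{k})$ for an auxiliary integer $k$ via Paturi, while walking in $m$ with a large multiplicity $2k{+}1$ gives $\deg Q=\Omega(\vars/k)$. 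Taking the geometric mean of these two bounds yields $\deg Q=\Omega(\vars^{1/3})$. This two-restriction trick is the idea your sketch is missing; a single univariate polynomial with a jump at consecutive integers cannot deliver $\vars^{1/3}$.
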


Before we proceed with the proof, let us make some observations.
Firstly, the lower bound is tight because of \rf(prp:collision).
Also, it has the following important consequence:

\begin{cor}
\label{cor:distLower}
The quantum query complexity of the element distinctness problem on $n$ variables is $\Omega(n^{2/3})$.
\end{cor}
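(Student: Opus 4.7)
The plan is to reduce the collision problem to the element distinctness problem via random subsampling, and then invoke \rf(thm:collisionLower). The key calibration comes from the birthday-style counting that tells us how large a random subset must be in order to witness a collision with constant probability.

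More concretely, let $T(n)$ denote the quantum query complexity of element distinctness on $n$ variables, and suppose we are given a collision instance $z \in [q]^m$ (with $\vars = m$) that is promised to be either 1-to-1 or 2-to-1. I would first pick a random subset $S \subseteq [m]$ of size $n$, where $n = c\sqrt{m}$ for a sufficiently large constant $c$, and then run an element-distinctness algorithm on the restricted input $z|_S$. Each query to $z|_S$ is simulated by a single query to $z$, so we use $T(n)$ queries in total. In the 1-to-1 case, $z|_S$ is automatically 1-to-1. In the 2-to-1 case, $z$ has $m/2$ collision pairs, so the probability that $S$ contains no pair is approximately $\prod_{i=0}^{n-1}(1 - i/m) \le \exp(-n^2/(2m)) = \exp(-c^2/2)$, which can be made strictly less than $1/3$ by choosing $c$ large enough. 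Hence element distinctness on $z|_S$ correctly distinguishes the two cases with constant bias, and with a constant number of repetitions we obtain a bounded-error collision algorithm making $O(T(c\sqrt{m}))$ queries.

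Combining with \rf(thm:collisionLower), we obtain $T(c\sqrt{m}) = \Omega(m^{1/3})$ for all $m$. Substituting $m = (n/c)^2$ yields $T(n) = \Omega(n^{2/3})$, which is the claimed lower bound.

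There is no serious obstacle in this argument; the only point requiring care is the choice of subset size. Too small an $n$ and a random $S$ misses all collisions in the 2-to-1 case, killing the reduction; too large and the substitution gives a weaker bound on $T$. The sweet spot $n = \Theta(\sqrt{m})$ is forced both by the birthday bound (to ensure the reduction succeeds with constant probability) and by the goal of extracting the strongest possible lower bound from $\Omega(m^{1/3})$. One should also note that the random choice of $S$ is made classically and for free, and that the promise in the collision problem is preserved by restriction in the 1-to-1 case, so the element-distinctness algorithm is always being invoked on a legitimate input.
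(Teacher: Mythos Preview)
Your proof is correct and takes essentially the same approach as the paper: reduce collision on $m$ variables to element distinctness on a random $\Theta(\sqrt{m})$-subset via the birthday paradox, then invoke \rf(thm:collisionLower). The paper phrases the calibration as $\vars = n^2$ rather than $n = c\sqrt{m}$, but the reduction and the arithmetic are identical.
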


\pfstart
Assume we have a quantum algorithm $\alg$ solving the element distinctness problem for inputs of size $n$.  We will show how to construct a quantum algorithm for the collision function on inputs of size $\vars = n^2$ using $\alg$ as a subroutine.

Let $z \in [q]^{\vars}$ be the input to the collision problem.
 Select an $n$-subset $S$ of $[\vars]$ uniformly at random.  If all the elements of $z$ are distinct, then such are the elements inside $S$.  If $z$ is divided into pairs of equal elements then, by the birthday paradox, we obtain that $S$ contains two equal elements with probability $\Omega(1)$.
This means that we can apply $\alg$ in order to distinguish these two cases.  Thus, the complexity of $\alg$ is at least the complexity of the collision problem on an input of size $\vars$.  By \refthm{collisionLower}, it is $\Omega(\vars^{1/3}) = \Omega(n^{2/3})$.
\pfend

Recall form \rf(defn:setEquality), that the set equality problem is a special case of the collision problem, and the hidden shift problem is a special case of the set equality problem.  Consequently, the $O(\vars^{1/3})$ upper bound for the collision problem translates to the set equality problem.  The best known lower bound, however, is only $\Omega((\vars/\log \vars)^{1/5})$ as shown by Midrij\=anis~\cite{midrijanis:setEquality}.

The situation with the hidden shift problem is more interesting.  This problem reduces to the so-called hidden subgroup problem in the dihedral group~\cite{kuperberg:dihedral}, and the latter has logarithmic query complexity~\cite{ettinger:hspQuery}.

\pfstart[Proof of \refthm{collisionLower}.]
The first step is the same as in the proof of \rf(prp:polThresholdLower).  Let $f\colon [q]^\vars\to\{0,1\}$ be the collision function where $q\ge \vars$.  Assume there exists a quantum algorithm calculating $f$ in $t$ queries with error probability $1/3$.  Then, there exists a degree-$2t$ real polynomial $P$ that $1/3$-approximates $f$.

Next, it is necessary to obtain an univariate polynomial out of $P$.  This is done in two steps.  At first, a polynomial $Q$ in three variables is obtained using symmetrisation.  After that, an univariate polynomial is obtained from $Q$ using restrictions.  The last step requires some case analysis.

\mycommand{si}{\pi_{\mathrm i}}
\mycommand{sv}{\pi_{\mathrm v}}
We start with obtaining $Q$.  Assume $z\in [q]^\vars$ is an input.  If $\pi = (\si,\sv)\in \symgr_\vars\times \symgr_q$, let $\pi(z)$ denote the string $y\in[q]^\vars$ defined by $y\elem[j] = \sv\sA[{z\elem[\si(j)]}]$.  For a real polynomial $P$ in variables $\xm_{j,a}$, let its symmetrisation be the function on $[q]^\vars$ defined by
\[
P^\sym(x) = \frac{1}{\vars!q!} \sum_{\pi\in \symgr_\vars\times \symgr_q} P(\widetilde{\pi(z)}).
\]
where $\widetilde{\pi(z)}$ is defined similarly to~\rf(eqn:xja), that is
\[
\widetilde{\pi(z)}_{j,a} =
\begin{cases}
1, & \sv(z\elem[\si(j)]) = a;\\
0, & \mbox{otherwise.}
\end{cases}
\]
We call a triple $(m,b,b')$ of non-negative integers {\em good}, if $m\le \vars$, $b$ divides $m$ and $b'$ divides $\vars-m$.  Let $z^{(m,b,b')}$ be any input that is $b$-to-1 on $m$ input elements, and $b'$-to-1 on the remaining input elements.  In particular, one can take
\[
z^{(m,b,b')}_j = 
\begin{cases}
\lceil j/b\rceil,& 1\le j\le m;\\
m/b + \lceil (j-m)/b'\rceil, & m+1\le j\le \vars.
\end{cases}
\]

\begin{lem}
\label{lem:QfromPsym}
For each polynomial $P$ in variables $\xm_{j,a}$, there exists a polynomial $Q$ in variables $m$, $b$ and $b'$ such that $Q = P^\sym (z^{(m,b,b')})$ for all good triples $(m,b,b')$.  Moreover, $\deg Q\le \deg P$.
\end{lem}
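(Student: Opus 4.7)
The plan is to reduce to the monomial case by linearity and then compute $P^{\mathrm{sym}}(z^{(m,b,b')})$ by explicit combinatorial counting. So suppose $P = \prod_{(j,a)\in S} \widetilde{z}_{j,a}$ is a single multilinear monomial of degree $k = |S|$. First I would dispose of the degenerate case in which two elements of $S$ share the same first coordinate but differ in the second: the corresponding product vanishes on every bit-string arising from some $z\in[q]^\vars$ via~\rf(eqn:xja), so $P^{\mathrm{sym}}\equiv 0$ and we take $Q=0$. Otherwise the first coordinates of $S$ are all distinct, and grouping $S$ by its second coordinate yields a partition $\lambda=(\lambda_1,\ldots,\lambda_\ell)$ of $k$.

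Next I would count directly. The monomial $P$ evaluates to $1$ on $\widetilde{(\pi_{\mathrm i},\pi_{\mathrm v})(z^{(m,b,b')})}$ iff, for each block $i\in[\ell]$, the $\lambda_i$ positions in that block are sent by $\pi_{\mathrm i}$ into the positions of a single ``value-group'' of $z^{(m,b,b')}$, with distinct blocks landing in distinct value-groups. Stratifying by the subset $T\subseteq[\ell]$ of blocks routed into the ``$b$-region'' of $z^{(m,b,b')}$ (the $m/b$ value-groups of size $b$), and letting both permutations act freely outside the $k$ affected positions and $\ell$ affected values, I expect to arrive at
\begin{equation*}
P^{\mathrm{sym}}(z^{(m,b,b')}) \;=\; \frac{(\vars-k)!\,(q-\ell)!}{\vars!\,q!}\sum_{T\subseteq[\ell]} (m/b)^{\underline{|T|}}\prod_{i\in T} b^{\underline{\lambda_i}}\cdot\bigl((\vars-m)/b'\bigr)^{\underline{\ell-|T|}}\prod_{i\notin T}(b')^{\underline{\lambda_i}} .
\end{equation*}

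The remaining, and main, step is to check that each summand is a polynomial in $(m,b,b')$ of total degree $\sum_i\lambda_i=k$. The key identity is
\begin{equation*}
(m/b)^{\underline{t}}\prod_{i=1}^{t} b^{\underline{\lambda_i}} \;=\; \prod_{s=0}^{t-1}(m-sb)\,\cdot\,\prod_{i=1}^{t}\prod_{r=1}^{\lambda_i-1}(b-r),
\end{equation*}
obtained by writing $(m/b)^{\underline{t}}=b^{-t}\prod_{s=0}^{t-1}(m-sb)$ and using the leading factor $b$ in each $b^{\underline{\lambda_i}}=b(b-1)\cdots(b-\lambda_i+1)$ to cancel the $b^t$ in the denominator. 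The right-hand side is a polynomial in $(m,b)$ of total degree $t+\sum_{i=1}^t(\lambda_i-1)=\sum_{i=1}^t\lambda_i$; a symmetric argument, with $\vars-m$ in place of $m$, handles the $b'$-factor, and the two degrees add to $k$. The hard part will be the bookkeeping of this cancellation: one must treat singleton blocks ($\lambda_i=1$) correctly (they contribute an empty inner product) and verify that exactly $|T|$ factors of $b$ are harvested, one from each $b^{\underline{\lambda_i}}$ with $i\in T$, to exactly cancel $b^{|T|}$. Once this is done, the displayed expression, viewed as a formal polynomial in $(m,b,b')$, defines the required $Q$ of degree at most $k=\deg P$, and the general case follows by linearity.
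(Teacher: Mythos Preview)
Your proposal is correct and follows essentially the same approach as the paper: reduce to monomials, discard the degenerate case of repeated first coordinates, stratify by which value-blocks land in the $b$-region, and verify that the resulting count is a polynomial of the right degree. The only cosmetic difference is that the paper counts positions directly, writing each summand as
\[
m(b-1)^{\underline{\lambda_1-1}}\,(m-b)(b-1)^{\underline{\lambda_2-1}}\cdots (m-(|T|-1)b)(b-1)^{\underline{\lambda_{|T|}-1}}\times(\text{analogous } b'\text{ factor}),
\]
which is manifestly polynomial, whereas you first count value-groups (producing the $(m/b)^{\underline{|T|}}$ factor) and then cancel the $b^{|T|}$ denominator against the leading $b$'s of the $b^{\underline{\lambda_i}}$. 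Your ``key identity'' is exactly this rewriting, so the two arguments are the same computation read in two orders.
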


\pfstart
Again, by linearity, we may assume $P$ equals a monomial $\xm_{j_1,a_1}\xm_{j_2,a_2}\cdots\xm_{j_d,a_d}$.  If there are equal elements among $j_1,\dots,j_d$, the value of the monomial is 0 for all valid strings $\tilde z$, because one of $\xm_{j,a}$ with equal values of $j$ will be equal to 0.  So, assume all $j$ are distinct.

In order to compute $P^\sym (z^{(m,b,b')})$, it suffices to count the number of permutations $\pi = (\si,\sv)\in \symgr_\vars\times \symgr_q$ such that $\xm\elem[\si(j_i),\sv(a_i)]=1$ for all $i\in[d]$.
Let $\ell$ be the number of distinct elements among $a_1,a_2,\dots,a_d$.  
Denote the values of the distinct elements by $c_1,\dots,c_\ell$, and let $k_i$ be the number of appearances of $c_i$ in the sequence $a_1,a_2,\dots,a_d$.  
Let us fix $L\subseteq [\ell]$ and consider those $\sv$ only that map $c_i$, with $i\in L$, to the $b$-to-1 part of the input (i.e., $1\le \sv(c_i)\le m/b$ for all $i\in L$ and only them).
For the ease of notation, let us assume $L = [\ell']$.
Then, the number of permutations $\pi$ mapping all $\xm\elem[j_i,a_i]$ to ones is
\begin{multline*}
m(b-1)^{\underline{k_1-1}}\; (m-b)(b-1)^{\underline{k_2-1}}\;\cdots\;(m-b(\ell'-1))(b-1)^{\underline{k_{\ell'}-1}}\\
\times (\vars-m)(b'-1)^{\underline{k_{\ell'+1}-1}}\cdots (\vars-m-b'(\ell-\ell'-1))(b'-1)^{\underline{k_{\ell}-1}}.
\end{multline*}
The first multiplier, $m$, is the number of ways to fix the value of $\si(j_i)$ for the first element with $a_i = c_1$.  The next multiplier, $(b-1)^{\underline{k_1-1}}$, is the number of ways to fix the indices of $\si(j_i)$ for the remaining $i$ with $a_i = c_1$, and so on.

This is a polynomial of degree at most $k_1+\cdots+k_\ell=d$ in variables $m$, $b$ and $b'$.  The value of $P^\sym$ is the sum over all choices of $L$ divided by $\vars!q!$.  Hence, it is a polynomial in $m$, $b$ and $b'$ of degree at most $d$ as well.
\pfend

Let $Q$ be as in \rf(lem:QfromPsym) for the degree-$2t$ polynomial $P$ that $1/3$-approximates $f$.  Then, $Q$ satisfies the following constraints:
\itemstart
\item $0\le Q(m,b,b')\le 1$ for all good triples $(m,b,b')$;
\item $0\le Q(m,1,1)\le 1/3$ for all $m$;
\item $2/3\le Q(m,2,2)\le 1$ for all even $m$.
\itemend
Let $M=2\lfloor \vars/4\rfloor$ be the closest even number to $\vars/2$.
Consider two cases:

\paragraph{Case I} Assume $Q(M, 1, 2)\ge 1/2$.  Define a univariate real polynomial $g(t) = Q(M,1,2t+1)/2$, and let $k$ be the least positive integer such that $|g(k)|>1$.  Thus, we have $|g(i)|\le 1$ for all integers $i$ in the range $[0,k-1]$.  Also,
\[
g(1/2)-g(0) = \frac{Q(M, 1, 2) - Q(M, 1, 1)}{2} \ge \frac1{12}.
\]
By \reflem{paturi}, we have
\begin{equation}
\label{eqn:polCol1}
\deg Q \ge \deg g  = \Omega\sA[\sqrt{k}].
\end{equation}

Now consider the polynomial $h(t) = Q(\vars-(2k+1)t, 1, 2k+1)$.  For any integer $0\le i\le \vars/(2k+1)$, the triple $(\vars-(2k+1)i,1,2k+1)$ is good, hence, $|h(i)|\le 1$.  But,
\[
\abs| h\sB[\frac{N-M}{2k+1}] | = \abs|Q(M, 1, 2k+1)| = 2|g(k)| \ge 2.
\]
Hence, by the same lemma,
\begin{equation}
\label{eqn:polCol2}
\deg Q\ge \deg h = \Omega(\vars/k).
\end{equation}

\paragraph{Case II} Now assume $Q(M,1,2)\le 1/2$.  In this case, the argument is similar.  Let $g(t) = Q(M,2t+1,2)/2$, and let $k$ be the least positive integer such that $|g(k)|>1$.  Again,
\[
g(1/2)-g(0) = \frac{Q(M, 2, 2) - Q(M, 1, 2)}{2} \ge \frac1{12},
\]
and by \reflem{paturi}, we obtain~\refeqn{polCol1}.  Consider the polynomial $h(t) = Q((4k+2)t, 2k+1, 2)$.  For any integer $0\le i\le \vars/(4k+2)$, the triple $((4k+2)i,2k+1,2)$ is valid, hence $|h(i)|\le 1$.  But,
\[
\abs| h\sB[\frac{M}{4k+2}] | = \abs|Q(M, 2k+1, 1)| = 2|g(k)| \ge 2,
\]
and by \reflem{paturi}, we again obtain~\refeqn{polCol2}

\paragraph{Finishing the Proof}  In both cases, we get that $\deg Q = \Omega\sA[\sqrt{k}]$ and $\deg Q = \Omega(\vars/k)$.  Hence, $\deg Q = \Omega(\vars^{1/3})$. \pfend

\section{Adversary Method}
\label{sec:adv}
In this section, we define the adversary lower bound on quantum query complexity.  We do this in a number of steps.  In \rf(sec:advBasic), we define the basic adversary method, and give a number of its applications.  In \rf(sec:advGeneral), we define its generalisation, the adversary bound.  
At first glance, they may seem different, but in \rf(sec:advPositive), we prove the generalisation relation between the two.  In \rf(sec:advProof), we prove that the adversary bound is indeed a lower bound on quantum query complexity.

\subsection{Basic Adversary Bound}
\label{sec:advBasic}
In this section, we describe a simple version of the adversary bound, as it was defined in the pioneering work by Ambainis~\cite{ambainis:adv}.  This version of the bound has been used extensively, because of its highly intuitive nature.  In order to prove an adversary lower bound for a function $f$ with Boolean output, one has to come up with a set $X$ of inputs from $f^{-1}(0)$, and a set $Y$ of inputs from $f^{-1}(1)$, that are hard to {\em distinguish by one query}.  A formal statement is as follows:

\begin{thm}
\label{thm:advBasic}
Let $f: [q]^\vars\supseteq \cD\to\{0,1\}$ be a function.  Suppose $X\subseteq f^{-1}(1)$, $Y\subseteq f^{-1}(0)$, and a relation $\sim$ between $X$ and $Y$ are such that
\itemstart
\item for each $x\in X$, there are at least $m$ different $y\in Y$ such that $x\sim y$;
\item for each $y\in Y$, there are at least $m'$ different $x\in X$ such that $x\sim y$.
\itemend
For $x\in X$ and $j\in[\vars]$, let $\ell_{x,j}$ (respectively, $\ell'_{y,j}$ for $y\in Y$) be the number of $y\in Y$ (respectively, $x\in X$) such that $x\sim y$ and $x_j\ne y_j$.  Let $\ell_{\max}$ denote the maximum of $\ell^{\phantom{*}}_{x,j}\ell'_{y,j}$ over all $x\sim y$ and $j\in[\vars]$ such that $x_j\ne y_j$.  In this case, any quantum algorithm evaluating $f$ uses $\Omega\s[ \sqrt{\frac{mm'}{\ell_{\max}}} ]$ queries.

In particular, $Q(f) = \Omega\s[\sqrt{\frac{mm'}{\ell\ell'}}]$ where $\ell = \max \ell_{x,j}$ and $\ell' = \max \ell'_{y,j}$.
\end{thm}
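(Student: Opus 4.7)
The plan is to use a quantum adversary progress argument. For each input $z\in\cD$, let $|\phi_z^t\rangle$ denote the state of a hypothetical query algorithm $\alg$ on input $z$ after $t$ applications of the oracle $O_z$, and decompose it as $|\phi_z^t\rangle = \sum_{j\in[\vars]} |j\rangle \otimes |\phi_{z,j}^t\rangle$ according to the index register. Let $R = \{(x,y)\in X\times Y : x\sim y\}$ be the relation, and define the progress measure
\[
W_t \;=\; \sum_{(x,y)\in R} \langle \phi_x^t | \phi_y^t\rangle .
\]

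For the initial value, since $|\phi_z^0\rangle$ is independent of $z$, $W_0 = |R|$; by the two degree bounds on $\sim$, $|R| \ge m|X|$ and $|R| \ge m'|Y|$, so $W_0 \ge \sqrt{mm'\,|X|\,|Y|}$. For the final value, standard facts about distinguishing quantum states imply that if $\alg$ evaluates $f$ with error at most $1/3$, then $|\langle \phi_x^T|\phi_y^T\rangle| \le 2\sqrt{1/3\cdot 2/3}<1$ for every $(x,y)\in R$, so $W_T \le c|R|$ for a constant $c<1$, giving $W_0 - W_T = \Omega(\sqrt{mm'\,|X|\,|Y|})$.

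The heart of the argument is the per-query bound. Using that the input-independent unitaries cancel in the overlap, one obtains
\[
W_{t+1}-W_t \;=\; \sum_{(x,y)\in R}\; \sum_{j:\, x_j\ne y_j}\, \langle \phi_{x,j}^t| (O_{x_j}^* O_{y_j}-I)|\phi_{y,j}^t\rangle,
\]
so that $|W_{t+1}-W_t| \le 2\sum_{j}\sum_{(x,y)\in R:\,x_j\ne y_j}\|\phi_{x,j}^t\|\,\|\phi_{y,j}^t\|$. To get the sharp $\ell_{\max}$ dependence (rather than the crude $\ell\ell'$), I will insert the factor $\sqrt{\ell_{x,j}\ell'_{y,j}}/\sqrt{\ell_{x,j}\ell'_{y,j}}$, bound the numerator by $\sqrt{\ell_{\max}}$, and apply Cauchy--Schwarz to the remaining sum:
\[
\sum_{(x,y):\,x_j\ne y_j,\, x\sim y} \frac{\|\phi_{x,j}^t\|\,\|\phi_{y,j}^t\|}{\sqrt{\ell_{x,j}\ell'_{y,j}}} \;\le\; \sqrt{\sum_{(x,y)}\frac{\|\phi_{x,j}^t\|^2}{\ell_{x,j}}}\,\sqrt{\sum_{(x,y)}\frac{\|\phi_{y,j}^t\|^2}{\ell'_{y,j}}} \;=\; \sqrt{\textstyle\sum_{x}\|\phi_{x,j}^t\|^2}\,\sqrt{\textstyle\sum_{y}\|\phi_{y,j}^t\|^2},
\]
the key cancellation being that, for fixed $x$ and $j$, the number of $y$'s with $(x,y)\in R$ and $x_j\ne y_j$ is exactly $\ell_{x,j}$, so the weight $1/\ell_{x,j}$ reduces the inner sum to $\sum_x\|\phi_{x,j}^t\|^2$. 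A second Cauchy--Schwarz over $j$, together with $\sum_j\|\phi_{z,j}^t\|^2 = 1$, then yields $|W_{t+1}-W_t| \le 2\sqrt{\ell_{\max}\,|X|\,|Y|}$.

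Combining, $T \ge (W_0-W_T)/\max_t |W_{t+1}-W_t| = \Omega(\sqrt{mm'/\ell_{\max}})$, which is the claimed bound; the second assertion follows since $\ell_{\max}\le \ell\ell'$. The main obstacle is the per-query bound: the bare triangle inequality loses the $\ell_{\max}$ refinement, and it is exactly the insertion of the weight $(\ell_{x,j}\ell'_{y,j})^{-1/2}$ before Cauchy--Schwarz, matched to the combinatorial degrees, that is doing the work.
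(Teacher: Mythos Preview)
Your argument is correct: it is the classical Ambainis progress-function proof, with the weighted Cauchy--Schwarz trick (inserting $(\ell_{x,j}\ell'_{y,j})^{-1/2}$ before applying Cauchy--Schwarz) to obtain the sharper $\ell_{\max}$ bound rather than $\ell\ell'$. The only cosmetic point is that $W_t$ is a priori complex; one either takes real parts throughout or invokes the standard reduction to real amplitudes.

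The paper takes a genuinely different route: it does not prove the basic adversary directly but derives it as a corollary of the spectral adversary bound (Theorem~\ref{thm:adv}) via Proposition~\ref{prp:advBasic2Spectral}. There one builds the $0$--$1$ adversary matrix $\Gamma$ from the relation $\sim$, lower-bounds $\|\Gamma\|\ge\sqrt{mm'}$ by testing against the uniform vectors on $X$ and $Y$, and upper-bounds each $\|\Gamma\circ\Delta_j\|\le\sqrt{\ell_{\max}}$ using the row/column-norm Lemma~\ref{lem:mathias} (with $B_j=C_j=\Gamma\circ\Delta_j$). Your proof is self-contained and historically prior; the paper's proof buys a uniform placement of the basic bound inside the general-adversary framework, making transparent that it is the positive-weight special case and that your weighted Cauchy--Schwarz step is exactly the content of Lemma~\ref{lem:mathias} applied to the $0$--$1$ matrix.
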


We will obtain this theorem as a special case of a more general result, \refthm{adv}.  But for now, let us give some examples of how this lower bound may be applied.  We start by reproving \rf(prp:polThresholdLower).

\begin{prp}
\label{prp:advThresholdLower}
The quantum query complexity of the $k$-threshold function on $\vars$ variables is $\Omega(\sqrt{k(\vars-k+1)})$.
\end{prp}

\pfstart
Let $X$ consist of all inputs of Hamming weight $k$, and let $Y$ consist of all inputs of Hamming weight $k-1$.  We say that $x\sim y$ if $x$ and $y$ differ in exactly one position.  In the notations of \rf(thm:advBasic), we get that $m = k$ and $m' = \vars-k+1$.  Also, $\ell = \ell' = 1$, hence, the quantum query complexity of the $k$-threshold function is $\Omega(\sqrt{k(\vars-k+1)})$.
\pfend

\begin{prp}[OR of ANDs, Ambainis~\cite{ambainis:adv}]
Consider the following function of $\vars = n^2$ Boolean variables $(z_{i,j})_{i,j\in [n]}$
\begin{equation}
\label{eqn:ORAND}
(z_{1,1}\wedge \cdots\wedge z_{1,n})\vee (z_{2,1}\wedge\cdots\wedge z_{2,n}) \vee \cdots \vee (z_{n,1}\wedge\cdots\wedge z_{n,n})
\end{equation}
where $\wedge$ stands for the logical AND, and $\vee$ stands for the logical OR.
The quantum query complexity of this function is $\Omega(n) = \Omega(\sqrt{\vars})$.
\end{prp}

\pfstart
Let $X$ consist of all inputs such that one block in~\refeqn{ORAND} evaluates to 1, and in all other blocks there is exactly one variable equal to 0.  Let $Y$ consist of all inputs such that in all blocks of~\refeqn{ORAND} there is exactly one variable equal to 0.  Clearly, $f(X)=\{1\}$, and $f(Y) = \{0\}$.  We say that $x\in X$ and $y\in Y$ are in the relation, $x\sim y$, if $x$ and $y$ differ in exactly one position. 

In the notations of \rf(thm:advBasic), $m=n$, because an input in $X$ may be transformed to an input in $Y$ by changing any variable in the block evaluating to 1.  Similarly, $m'=n$, because flipping any 0 to 1 changes an input in $Y$ to an input in $X$.  Again $\ell = \ell' = 1$.  By \refthm{advBasic}, the quantum query complexity of the function is $\Omega(n)$.
\pfend

The previous proof is very concise.  The proof of the same result using the polynomial method is much more complicated.  It has been an open problem for a long time, and only very recently it was proven~\cite{bun:andortree, sherstov:andortree} that the degree of the approximating polynomial is $\Omega(n)$.  
In \rf(exm:AmbainisFunction), we will see an example of a provable separation between the polynomial lower bound and the true quantum query complexity.

\begin{prp}[Graph Connectivity, D\"urr \etal~\cite{durr:quantumGraph}]
\label{prp:graphConnectivity}
Assume we are given a simple graph $G$ on $n$ vertices by its adjacency matrix $(z_{ij})_{1\le i<j\le n}$.  The task is to detect whether the graph $G$ is connected.  The quantum query complexity of this problem is $\Omega(n^{3/2}) = \Omega(N^{3/4})$.
\end{prp}

\begin{wrapfigure}{l}{0pt}
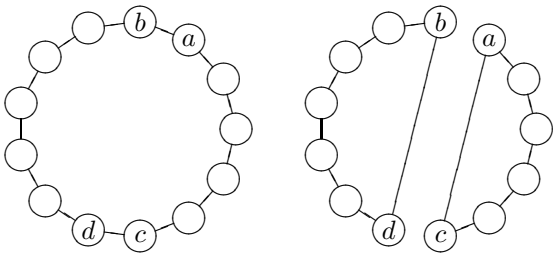

\xygraph{!~*{\cir<6pt>{}} !{0;<1.7pc,0pc>:} \\
!P13"A"{ ~><{@{-}} ~:{(2,0):} \cir<6pt>{} }
"A3"*{a} "A4"*{b} "A10"*{d} "A11"*{c}
}
\qquad
\xygraph{!~*{\cir<6pt>{}} !{0;<1.7pc,0pc>:} \\
!P13"A"{ ~><{@{}} ~:{(2,0):} \cir<6pt>{} }
"A1" - "A2" - "A3" - "A11" - "A12" - "A13" - "A1"
"A4" - "A5" - "A6" - "A7" - "A8" - "A9" - "A10" - "A4"
"A3"*{a} "A4"*{b} "A10"*{d} "A11"*{c}
}
\caption{A graph from the set $X$ on the left, and a graph from $Y$ on the right.  They are in relation with each other.\vspace{-.5cm}}
\label{fig:connectivity}
\end{wrapfigure}

\noindent \em Proof.\em\;\; 
Let $X$ be the set of all graphs on the vertex set $[n]$ consisting of one cycle going through all the vertices, and let $Y$ be the set of graphs formed by two cycles, each of length at least $n/3$, such that each vertex belongs to exactly one of the cycles.  Clearly, all graphs in $X$ are connected, while all graphs in $Y$ are not.  We say that a graph $x\in X$ is in the relation with a graph $y\in Y$, $x\sim y$, if there exist 4 distinct vertices $a,b,c,d$ such that $y$ can be obtained from $x$ by removing edges $ab$ and $cd$ and adding edges $ac$ and $bd$ (cf. \rf(fig:connectivity)).

For any $x\in X$ and any edge $ab$ of $x$, we have $\ell_{x,ab} = n/3$, because any edge of $x$ at distance at least $n/3$ from $ab$ may serve as $cd$.  If $ab$ is not an edge of $x$, then $\ell_{x,ab}$ is at most 2.  Similarly, for $y\in Y$, we have $\ell'_{y,ac}=\Theta(n)$ if $ac$ is an edge of $y$, and $\ell'_{y,ac}\le 4$, otherwise.  Since there are $n$ edges in a graph from $x$, we have $m = \frac12 \sum_{ab: x_{ab}=1} \ell_{x,ab} = \Omega(n^2)$.  Similarly, $m' = \Omega(n^2)$.  Also, we have that $\ell_{\max}=O(n)$, because, given that $x_{ab}\ne y_{ab}$, one of $\ell_{x,ab}$, $\ell'_{y,ab}$ is $O(1)$, and the second one is $O(n)$.  Thus, the quantum query complexity of the problem is $\Omega\s[ \sqrt{\frac{mm'}{\ell_{\max}}} ] = \Omega(n^{3/2})$. \hspace{\stretch{1}} \qedsymbol

\subsection{General Adversary Bound}
\label{sec:advGeneral}
The lower bound in \refsec{advBasic} proceeds by providing hard-to-distinguish input pairs that evaluate to different values of the function.  However, sometimes, in order to obtain a good lower bound, it is vital to take not-so-hard-to-distinguish pairs into consideration as well.  But since the distinguishability of the pairs is different, it is tempting to reflect this in the bound.  It is achieved by assigning different real {\em weights} to different input pairs.  This leads to the matrix formulation of the bound.  For the sake of generality, we consider functions with possibly non-Boolean output.

\begin{defn}
\label{defn:advMatrix}
Let $f\colon [q]^\vars\supseteq \cD\to [\ell]$ be a function.  An {\em adversary matrix} for the function $f$ is a non-zero $\cD\times\cD$ real symmetric matrix $\Gamma$ such that $\Gamma\elem[x,y]=0$ whenever $f(x)=f(y)$. And, for $j\in[\vars]$, let $\Delta_j$ denote the $\cD\times \cD$ matrix defined by
\[ \Delta_j\elem[x,y] = \begin{cases} 0,& x_j=y_j; \\ 1,&\text{otherwise}. \end{cases} \]
\end{defn}

\begin{defn}
\label{defn:adv}
Let $f$ be as in \refdefn{advMatrix}. The adversary bound is defined by
\begin{equation}
\label{eqn:adv}
\Adv(f) =  \max_{\Gamma} \frac{\|\Gamma\|}{\max_{j\in [\vars]} \|\Gamma\circ \Delta_j \|},
\end{equation}
where the outer maximisation is over all adversary matrices $\Gamma$ for $f$.
\end{defn}

As $\Gamma$ is real and symmetric, there exist a {\em real} unit vector $\delta$ such that $|\delta^* \Gamma \delta| = \norm|\Gamma|$.  Also, by substituting $\Gamma$ by $-\Gamma$, if necessary, we may assume that $\delta^* \Gamma\delta = \norm|\Gamma|$.  We will call a vector satisfying the last condition the {\em principal} eigenvector of $\Gamma$.  Everywhere in this chapter, we assume that entries of matrices and vectors are real.


The following theorem is the technical cornerstone of the thesis.

\begin{thm}
\label{thm:adv}
The quantum query complexity of a function $f\colon [q]^\vars\supseteq \cD\to[\ell]$ is $\Theta(\Adv(f))$.
\end{thm}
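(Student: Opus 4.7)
The plan is to establish $Q(f)=\Theta(\Adv(f))$ by proving the lower bound $Q(f)=\Omega(\Adv(f))$ directly in this section and deferring the matching upper bound $Q(f)=O(\Adv(f))$ to Sections~\ref{sec:advDuality} and~\ref{sec:advAlgorithms}, where the dual semidefinite programme and span-program constructions are developed.

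For the lower bound I would fix an adversary matrix $\Gamma$ attaining the maximum in~\rf(eqn:adv), together with its principal real unit eigenvector $\delta$, so that $\delta^{T}\Gamma\delta=\|\Gamma\|$. For any quantum query algorithm of the form~\rf(eqn:sequence), let $|\psi_z^{t}\rangle$ denote its state on input $z\in\cD$ after $t$ queries, and introduce the progress functional
\[
W^{t} \;=\; \sum_{x,y\in\cD} \Gamma_{x,y}\,\delta_{x}\delta_{y}\,\langle\psi_x^{t}|\psi_y^{t}\rangle.
\]
Since the initial state is input-independent, $\langle\psi_x^{0}|\psi_y^{0}\rangle=1$ for all $x,y$, so $W^{0}=\delta^{T}\Gamma\delta=\|\Gamma\|$. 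After a $T$-query algorithm that evaluates $f$ with bounded error, the reduced states on the output register must be nearly distinguishable whenever $f(x)\neq f(y)$; since $\Gamma_{x,y}=0$ when $f(x)=f(y)$, a standard Cauchy--Schwarz argument against the output projectors yields $|W^{T}|\le c\|\Gamma\|$ for an absolute constant $c<1$.

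The heart of the lower bound is the per-query estimate $|W^{t+1}-W^{t}|\le 2\max_{j\in[\vars]}\|\Gamma\circ\Delta_{j}\|$. The input-independent unitaries $U_{t}$ preserve every inner product $\langle\psi_x^{t}|\psi_y^{t}\rangle$, so only the oracle contributes. Writing $|\psi_z^{t}\rangle=\sum_{j=0}^{\vars}|j\rangle\otimes|\psi_z^{t}(j)\rangle$ according to the query index and using unitarity of each $O_{z_j}$, one checks that $\langle\psi_x^{t+1}|\psi_y^{t+1}\rangle-\langle\psi_x^{t}|\psi_y^{t}\rangle$ receives contributions only from indices $j$ with $x_j\neq y_j$, i.e.\ exactly those where $\Delta_{j}$ is nonzero. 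The difference then factorises as a bilinear form in $\Gamma\circ\Delta_{j}$, and the operator-norm inequality $|u^{T}Mv|\le\|M\|\,\|u\|\,\|v\|$, applied to appropriately normalised vectors built from the $\delta_{x}|\psi_x^{t}(j)\rangle$, delivers the bound. Combining the three observations gives $T\ge(1-c)\|\Gamma\|/(2\max_{j}\|\Gamma\circ\Delta_{j}\|)$, and maximising over $\Gamma$ produces $Q(f)=\Omega(\Adv(f))$.

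For the upper bound the plan is to form the semidefinite-programming dual of the maximisation~\rf(eqn:adv); strong SDP duality then furnishes a dual-feasible object of value matching $\Adv(f)$, which may be interpreted as a span program, and in turn as a pair of input-dependent subspaces whose reflections define a step operator in the spirit of \rf(lem:szegedy). The uniform-superposition starting vector lies in an invariant subspace whose phase gap is controlled by the dual objective, so applying the phase detection procedure of \rf(thm:detection) to this step distinguishes positive from negative inputs within $O(\Adv(f))$ oracle calls. The main obstacle is this upper-bound direction: extracting an efficient algorithm requires the full span-program machinery, a delicate phase-gap analysis tying the SDP value to the smallest nonzero eigenphase of the step, and a noisy-amplification argument in the spirit of \rf(thm:noisyGrover) to absorb the logarithmic overhead that naive amplification would introduce. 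This is Reichardt's theorem, whose rigorous execution is carried out in \rf(sec:advAlgorithms).
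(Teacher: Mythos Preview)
Your proposal is essentially correct and follows the paper's approach closely: the lower bound via the progress functional $W^{t}$ with the three claims (initial value $\|\Gamma\|$, final value at most $2\sqrt{\eps}\,\|\Gamma\|$, per-query change at most $2\max_j\|\Gamma\circ\Delta_j\|$) is exactly what the paper does in \rf(sec:advProof), and the upper-bound plan via SDP duality and a reflection-based phase-detection algorithm is what \rf(sec:advAlgorithms) carries out.

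One clarification on the upper bound: the paper does not need a noisy-amplification argument in the style of \rf(thm:noisyGrover) to remove logarithmic overhead. Instead, the key analytic tool is the Effective Spectral Gap Lemma (\rf(lem:effective)), which directly bounds the overlap of the initial state with small-phase eigenvectors in the negative case, so the phase-detection subroutine achieves $O(W)$ queries without any amplification. Also note that the paper only proves the upper bound for functions with Boolean output (the general non-Boolean case is stated but not proved in the thesis), so your plan should acknowledge the same restriction.
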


We will prove the first half of the theorem (the lower bound) in \rf(sec:advProof), and the second half (the upper bound) in \rf(sec:advAlgorithms).  
The upper bound will be only proven in the case of functions with Boolean output.
That is the only case for which we apply the theorem.
Examples of applications will be given in \rf(sec:advExamples) after we introduce all the related notions.
We end this section by a number of small technical results useful in applications of \rf(thm:adv).

\begin{lem}[\cite{lee:stateConversion}]
\label{lem:gamma}
Let $\Delta_j$ be as in \refdefn{advMatrix}.  Then, for any matrix $A$ of the same size, 
\[
\norm|A\circ \Delta_j| \le 2 \norm|A|.
\]
\end{lem}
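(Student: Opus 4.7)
The plan is to exploit the simple block structure of $\Delta_j$: it differs from the all-ones matrix $J$ only in the "diagonal" blocks determined by the value of the $j$-th coordinate. Writing this decomposition will let us express $A \circ \Delta_j$ as $A$ minus a sum of mutually orthogonal compressions, and both terms will be bounded by $\|A\|$ individually.

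More concretely, for each $a \in [q]$ let $v_a \in \R^\cD$ be the indicator vector of $\{x \in \cD : x_j = a\}$, and let $D_a = \diag(v_a)$ be the associated orthogonal projector. Since each $x \in \cD$ has a unique value of $x_j$, we have $\sum_a D_a = I$ and the $D_a$ have pairwise orthogonal images. The matrices $v_a v_a^*$ satisfy $(v_a v_a^*)[x,y] = 1$ iff $x_j = y_j = a$, so $\sum_a v_a v_a^*$ is the 0/1 matrix with a $1$ exactly where $x_j = y_j$. Comparing with \rf(defn:advMatrix), this gives
\[
\Delta_j \;=\; J \;-\; \sum_{a\in[q]} v_a v_a^*.
\]
Since $A \circ (v_a v_a^*) = D_a A D_a$ and $A \circ J = A$, this yields
\[
A\circ\Delta_j \;=\; A \;-\; \sum_{a\in[q]} D_a A D_a.
\]

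The main (and only) step is then to bound $\bigl\|\sum_a D_a A D_a\bigr\| \le \|A\|$. For any unit vector $\psi\in\R^\cD$, the vectors $D_a\psi$ lie in pairwise orthogonal subspaces, and so do the vectors $D_a A D_a \psi$. Hence
\[
\Bigl\|\sum_a D_a A D_a \psi\Bigr\|^2 \;=\; \sum_a \|D_a A D_a \psi\|^2 \;\le\; \sum_a \|A D_a \psi\|^2 \;\le\; \|A\|^2 \sum_a \|D_a\psi\|^2 \;=\; \|A\|^2,
\]
where the last equality uses $\sum_a D_a = I$ together with the orthogonality of the ranges of the $D_a$. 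Applying the triangle inequality to the decomposition of $A\circ\Delta_j$ gives $\|A\circ\Delta_j\| \le \|A\| + \|A\| = 2\|A\|$, completing the proof.

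No step looks like a genuine obstacle: the decomposition of $\Delta_j$ is forced by its definition, and the norm bound on $\sum_a D_a A D_a$ is a standard fact about pinchings by mutually orthogonal projectors that sum to the identity. The only thing to be careful about is keeping track of the fact that the rows/columns range over $\cD$ (not over $[q]^\vars$), so that $\sum_a D_a = I$ still holds on $\cD$ even if $f$ is partial.
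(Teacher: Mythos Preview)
Your proof is correct. The paper itself does not prove this lemma; it simply cites~\cite{lee:stateConversion} and uses the result. Your argument---writing $A\circ\Delta_j = A - \sum_a D_a A D_a$ and bounding the pinching $\sum_a D_a A D_a$ by $\|A\|$ via orthogonality of the ranges of the $D_a$---is the standard way to establish this inequality and is essentially what one finds in the cited source.
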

\mycommand{dar}{\stackrel{\Delta_j}{\longmapsto}}

We will use it to replace $\Gamma\circ \Delta_j$ in the denominator of~\refeqn{adv} with a matrix $\Gamma'$ such that $\Gamma\circ\Delta_j = \Gamma'\circ\Delta_j$.  By \reflem{gamma}, this gives the same result up to a factor of 2.  We will denote this relation between matrices by $\Gamma\dar\Gamma'$.

\begin{rem}
\label{rem:booleanOutput}
Assume the function $f$ has Boolean output (i.e., $\ell=2$).  This is the most common case in the thesis. Then, any adversary matrix can be represented in the following form
\[ \Gamma = 
\begin{pmatrix}
0 & \Gamma' \\
(\Gamma')^* & 0
\end{pmatrix},
\]
where the elements of $\cD$ are ordered so that the positive inputs precede the negative ones.
Moreover, the non-zero eigenvalues of $\Gamma$ are exactly the plus-minus singular values of $\Gamma'$. Hence, in particular, $\|\Gamma\| = \|\Gamma'\|$. The same is true for $\Gamma\circ\Delta_j$ as well. Thus, in the case of Boolean output, we usually abuse the notation and call the $f^{-1}(1)\times f^{-1}(0)$ matrix $\Gamma'$ an adversary matrix, and denote it $\Gamma$.
\end{rem}

\newcommand{\tcD}{\widetilde{\cD}}
\newcommand{\tDelta}{\widetilde{\Delta}}

In some cases, it is convenient to use the same label for different rows (and columns) in $\Gamma$.  More precisely, let the rows of a real symmetric matrix $\Gamma$ be labelled by elements of the form $(z,a)$ where $z\in\cD$, and $a$ is some additional parameter used to distinguish rows with the same value of $z$.  Still, it is required that $\Gamma\elem[(x,a),(y,b)] = 0$ if $f(x)=f(y)$.  Let $\widetilde{\cD}$ be the set of the labels of the rows of $\Gamma$.  Define the $\tcD\times\tcD$ matrix $\tDelta_j$ by
\[ \tDelta_j\elem[(x,a),(y,b)] = \begin{cases} 0,& x_j=y_j; \\ 1,&\text{otherwise}. \end{cases} \]

\begin{prp}
\label{prp:advMultipleRows}
If $\Gamma$ and $\tDelta_j$ are as above, then
\[
\Adv(f) \ge  \frac{\|\Gamma\|}{\max_{j\in [\vars]} \|\Gamma\circ \tDelta_j \|}.
\]
\end{prp}

\pfstart
Let $\delta=(\delta_{z,a})$ be the principal eigenvector of $\Gamma$.
Thus, $\delta^* \Gamma \delta = \|\Gamma\|$. We are going to construct an adversary matrix $\Gamma'$ in the sense of \refdefn{advMatrix} from $\Gamma$ and $\delta$. For all $x,y\in \cD$, let:
\[\delta'_x = \sqrt{\sum_{a:(x,a)\in\widetilde{\cD}} \delta_{x,a}^2}\qquad\mbox{and}\qquad
 \Gamma'\elem[x,y] = \frac1{\delta'_x\delta'_y}\sum_{\substack{a:(x,a)\in\widetilde{\cD}\\ b:(y,b)\in \widetilde{\cD}}} \delta_{x,a}\delta_{y,b} \Gamma\elem[(x,a),(y,b)]. \]
Then it is easy to see that $\delta' = (\delta'_{x})$ satisfies $\|\delta'\|=1$ and $(\delta')^* \Gamma'\delta' = \delta^* \Gamma \delta$, hence, $\|\Gamma'\|\ge \|\Gamma\|$.

And vice versa, if $\eps' = (\eps'_z)$ is such that $\|\eps'\|=1$ and $(\eps')^* (\Gamma'\circ \Delta_j)\eps' = \|\Gamma'\circ \Delta_j\|$, let $\eps_{z,a} = \delta_{z,a} \eps'_z/\delta'_z$. Again, $\|\eps\|=1$ and $\eps^*(\Gamma\circ\tDelta_j)\eps = (\eps')^*(\Gamma'\circ\Delta_i)\eps'$, hence, $\|\Gamma'\circ \Delta_j\| \le \|\Gamma\circ \tDelta_j\|$.
This means that $\Gamma'$ provides at least as good an adversary lower bound as $\Gamma$ does.
\pfend

\subsection{Positive-Weighted Adversary}
\label{sec:advPositive}
Although the formulation of the general adversary bound looks completely different from the basic adversary bound, there is a general way of relating the two.  It is based on the following
\begin{lem}
\label{lem:mathias}
Assume $A,B$ and $C$ are real matrices such that $A=B\circ C$.  Then,
\begin{equation}
\label{eqn:mathias}
\|A\| \le \max_{i,j\colon A\elem[i,j]\ne 0} r_i(B)c_j(C),
\end{equation}
where $r_i(B)$ is the $\ell_2$-norm of the $i$th row of $B$, and $c_j(C)$ is the $\ell_2$-norm of the $j$th column of $C$.
\end{lem}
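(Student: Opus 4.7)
The plan is to show $\|A\| \le M$, where $M \coloneqq \max_{(i,j)\in S} r_i(B)c_j(C)$ and $S = \{(i,j) : A_{i,j} \ne 0\}$, by bounding $\|Av\|$ for an arbitrary unit vector $v$. The key observation is that the Hadamard-product structure $A_{ij} = B_{ij}C_{ij}$ naturally suggests a two-step Cauchy--Schwarz argument: once in the row-direction (using $B$) and once in the column-direction (using $C$). Crucially, each summation is restricted to $S$, because summands with $(i,j)\notin S$ vanish, and it is this restriction that lets the final bound see only $\max_{(i,j) \in S} r_i(B) c_j(C)$ rather than the weaker $\max_i r_i(B)\cdot\max_j c_j(C)$.

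Concretely, first I would write
\[
(Av)_i = \sum_{j : (i,j)\in S} B_{ij}\,C_{ij}\,v_j,
\]
and apply Cauchy--Schwarz to the two factors $B_{ij}$ and $C_{ij}v_j$ summed over $j$, obtaining
\[
(Av)_i^2 \le \Big(\sum_j B_{ij}^2\Big)\Big(\sum_{j : (i,j)\in S} C_{ij}^2 v_j^2\Big) = r_i(B)^2 \sum_{j:(i,j)\in S} C_{ij}^2 v_j^2 .
\]
Summing over $i$ and interchanging the order of summation gives
\[
\|Av\|^2 \le \sum_{j} v_j^2 \sum_{i : (i,j)\in S} r_i(B)^2\, C_{ij}^2 .
\]

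Now comes the essential step. For each fixed $j$, pull $r_i(B)^2$ out of the inner sum by its maximum over the indices $i$ that actually appear, i.e.\ those with $(i,j)\in S$:
\[
\sum_{i : (i,j)\in S} r_i(B)^2\, C_{ij}^2 \;\le\; \Big(\max_{i:(i,j)\in S} r_i(B)^2\Big)\sum_{i}C_{ij}^2 = \Big(\max_{i:(i,j)\in S} r_i(B)\Big)^2 c_j(C)^2 \le M^2,
\]
where the last inequality uses that $(i,j)\in S$ forces $r_i(B)\,c_j(C) \le M$. Plugging back yields $\|Av\|^2 \le M^2 \sum_j v_j^2 = M^2$, hence $\|A\| \le M$, as claimed.

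The argument involves no real obstacle; the only subtlety worth flagging is that one must restrict the two sums to $S$ at the right moments. Dropping the restriction would still yield a valid bound of the form $\max_i r_i(B)\cdot \max_j c_j(C)$, but the stronger bound in the lemma comes precisely from noting that for each column $j$ only those rows $i$ with $(i,j)\in S$ are active, so the maximum of $r_i(B)$ taken there is controlled by $M/c_j(C)$.
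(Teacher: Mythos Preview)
Your proof is correct and in fact stronger than what the paper actually proves. The paper explicitly only establishes the weaker bound $\|A\|\le \max_{i,j} r_i(B)c_j(C)$ (maximum over \emph{all} pairs, not just those in $S$) and refers the reader to~\cite{spalek:advEquivalent} for the full statement. Its argument is a one-shot Cauchy--Schwarz on the bilinear form: writing $\|A\|=|\delta^* A\delta'|=|\langle DB, CD'\rangle|$ with $D=\diag\delta$, $D'=\diag\delta'$, and bounding by $\|DB\|_F\,\|CD'\|_F$. That route is slick but loses the restriction to $S$, since after Cauchy--Schwarz the rows of $B$ and columns of $C$ decouple completely. Your two-step argument---Cauchy--Schwarz row by row, then regrouping by columns---keeps the coupling just long enough that for each column $j$ only rows $i$ with $(i,j)\in S$ appear, which is exactly what lets you bound $\max_{i:(i,j)\in S} r_i(B)\cdot c_j(C)$ by $M$. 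So your approach is not only different but buys the full statement with no extra machinery.
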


\pfstart
We only prove that $\norm|A| \le \max_{i,j} r_i(B) c_j(C)$ that is a result from~\cite{mathias:nonnegative}.  
It is already enough to obtain the $\Omega(\sqrt{mm'/\ell\ell'})$ lower bound from \rf(thm:advBasic) using the construction of \rf(prp:advBasic2Spectral) further in the text.
For a proof of the general case, refer to~\cite{spalek:advEquivalent}.  

Let $\delta$ and $\delta'$ be real unit vectors such that $\norm|A| = |\delta^* A \delta'|$, and denote $D = \diag \delta$ and $D' = \diag \delta'$.  Then, by the Cauchy-Schwarz inequality,
\[
\norm |A| = |\ip<D B, CD'>| \le \normFrob |D B| \normFrob |CD'| = \sqrt{\sB[\sum_i \delta_i^2 r_i(B)^2]\sB[\sum_j (\delta'_j)^2 c_j(C)^2]} \le 
\max_{i,j} r_i(B) c_j(C). \qedhere
\]
\pfend

\begin{prp}
\label{prp:advBasic2Spectral}
\refthm{adv} implies \refthm{advBasic}.
\end{prp}

\pfstart
Assume $f$, $X$, $Y$ and $\sim$ are as in \refthm{advBasic}.  
Define the $f^{-1}(1)\times f^{-1}(0)$-matrix $\Gamma$ by $\Gamma\elem[x,y]=1$ if $x\sim y$, and $\Gamma\elem[x,y]=0$, otherwise.  (We use \rf(rem:booleanOutput) here.)

Let $\{e_z\}$ be the standard basis of $\C^\cD$.  Define $\delta_X = \frac1{\sqrt{|X|}}\sum_{x\in X} e_x$, and $\delta_Y = \frac1{\sqrt{|Y|}}\sum_{y\in Y} e_y$.  Then,
\[
\|\Gamma\| \ge \delta_X^*\Gamma\delta_Y = \frac{1}{\sqrt{|X||Y|}} |\{(x,y)\mid x\sim y\}| \ge \frac{\sqrt{(|X|m)(|Y|m')}}{\sqrt{|X||Y|}} = \sqrt{mm'}.
\]
Now, let $B_j = C_j = \Gamma\circ\Delta_j$.  In the notations of \refthm{advBasic}, $r_x(B_j) = \sqrt{\ell_{x,j}}$ for all $x\in X$, and $c_y(C_j) = \sqrt{\ell'_{y,j}}$ for all $y\in Y$.
By \reflem{mathias}, $\|\Gamma\circ\Delta_j\| \le \sqrt{\ell_{\max{}}}$.  Thus, by \refthm{adv}, $\Gamma$ provides an adversary lower bound of $\Omega(\sqrt{mm'/\ell_{\max}})$.
\pfend

In the preceding proof, we seek a decomposition $\Gamma\circ \Delta_j = B_j\circ C_j$ such that the right hand side of~\refeqn{mathias} is small.  Note that \reflem{mathias} only operates with the $\ell_2$-norms of the rows and the columns of $B$ or $C$, that does not depend on the signs of the entries.  Also, $\| |\Gamma| \| \ge \|\Gamma\|$, where $|\Gamma|$ is the entry-wise absolute value of $\Gamma$.
Thus, if \reflem{mathias} is used to estimate $\|\Gamma\circ\Delta_j\|$, it is without loss of generality to assume $\Gamma$, $B_j$ and $C_j$ all have non-negative entries.
(In fact, Ref.~\cite{mathias:nonnegative} proves that for any matrix $A$ with non-negative entries, the equality can be attained in~\rf(eqn:mathias).)
This justifies the following
\begin{defn}
\label{defn:advPositive}
The {\em positive-weighted adversary} $\pAdv(f)$ for the function $f$ is defined as in~\refeqn{adv} with the maximisation over all adversary matrices $\Gamma$ with non-negative entries.
\end{defn}
To distinguish the adversary bound from this special case, we sometimes call it {\em negative-weight}, or {\em general} adversary bound.
The applications of positive-weighted adversary are not limited to just \refthm{advBasic}, as we will see in \rf(exm:AmbainisFunction).  

The absence of a handy tool like \reflem{mathias} for tight estimation of $\|\Gamma\circ\Delta_j\|$ when $\Gamma$ has negative entries makes application of the latter rather complicated.  But it is important to keep in mind that the positive-weighted adversary, as intuitive it may be, is subject to some severe limitations we describe in \refsec{advLimitations}.

\subsection{First Half of the Proof of \TeXBug{Theorem \ref*{thm:adv}}}
\label{sec:advProof}
In this section, we prove the first half of \rf(thm:adv), namely, that $Q(f) = \Omega(\Adv(f))$. 
Assume we have a quantum query algorithm that evaluates $f$.  We define a quantity called the {\em progress function}.  It measures the distinguishability of the states of the quantum algorithm corresponding to pairs of inputs with different values of the function.

In the beginning of the algorithm, its state does not depend on the input, hence, the states are completely indistinguishable.  Just before the final measurement, the states of the algorithm, corresponding to inputs with different values of the function, are projected to orthogonal subspaces, hence, the distinguishability is high.  If one query to the input oracle does not change the progress function by much, this yields a lower bound.  

More formally, let $T$ denote the total number of queries performed by the algorithm.  If $t$ is an integer between 0 and $T$, and $z\in\cD$, define the state of the algorithm corresponding to $z$ after $t$ queries by
\begin{equation}
\label{eqn:advpsit}
\psi^{(t)}_z = U_tO_z U_{t-1} O_z\cdots U_1 O_z U_0 \ket |0>.
\end{equation}
Recall from \rf(sec:query) that 
\(
O_z = \bigoplus_{j=0}^{\vars} O_{z_j}
\)
 where $(O_a)_{a\in[q]}$ is a family of unitaries.  Due to \rf(rem:realEntries), we may assume that the vectors $\psi^{(t)}_z$ have real entries.

Let $\delta = (\delta_z)$ be a principal eigenvector of $\Gamma$.  The progress functions is defined by
\begin{equation}
\label{eqn:progress}
W^{(t)} = \sum_{x,y\in\cD} \Gamma\elem[x,y]\delta_x\delta_y \ip<\psi^{(t)}_x, \psi^{(t)}_y>.
\end{equation}
We split the proof into three parts: proving that $W^{(0)}$ is large, proving that $W^{(T)}$ is small, and proving that $W^{(t)}-W^{(t+1)}$ is small.

\begin{clm}
\label{clm:adv1}
We have $W^{(0)}=\norm |\Gamma|$.
\end{clm}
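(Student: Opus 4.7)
The plan is to directly unwind the definition of $W^{(0)}$ from \rf(eqn:progress) at $t=0$ and use the fact that no query has been applied yet.

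First, I would observe that at $t=0$, the state $\psi^{(0)}_z = U_0\ket|0>$ defined in \rf(eqn:advpsit) is obtained by applying only the input-independent unitary $U_0$ to the initial state $\ket|0>$. In particular, $\psi^{(0)}_z$ does not depend on $z$ at all, so for every pair $x,y\in\cD$ we have $\ip<\psi^{(0)}_x, \psi^{(0)}_y> = \ip<U_0\ket|0>, U_0\ket|0>> = 1$ because $U_0$ is unitary and $\ket|0>$ is a unit vector.

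Next, I would substitute this value into \rf(eqn:progress) to obtain
\[
W^{(0)} = \sum_{x,y\in\cD} \Gamma\elem[x,y] \delta_x\delta_y = \delta^*\Gamma\delta.
\]
Finally, the definition of the principal eigenvector given just after \rf(defn:adv) (a real unit vector $\delta$ for which $\delta^*\Gamma\delta = \|\Gamma\|$) yields $W^{(0)} = \|\Gamma\|$, completing the claim.

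There is no real obstacle here: the only thing to check is that $\psi^{(0)}_z$ is input-independent, which follows from the convention in \rf(eqn:advpsit) that a query is applied only after $U_0$. The claim is essentially just bookkeeping, and the work will come in the subsequent claims that bound $W^{(T)}$ and the one-step change $W^{(t)} - W^{(t+1)}$.
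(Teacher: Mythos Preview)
Your proposal is correct and essentially identical to the paper's proof: both observe that $\psi^{(0)}_z = U_0\ket|0>$ is independent of $z$, so all the inner products equal $1$, and then the sum collapses to $\delta^*\Gamma\delta = \|\Gamma\|$ by the definition of the principal eigenvector.
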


\pfstart
This part is trivial.  We have $\psi^{(0)}_z = U_0\ket |0>$ no matter what $z$ is.  Hence, $\ip<\psi^{(0)}_x, \psi^{(0)}_y>=1$ for all $x, y\in\cD$.  Plugging this into~\refeqn{progress} gives
\[
W^{(0)} = \sum_{x,y\in\cD} \Gamma\elem[x,y]\delta_x\delta_y = \delta^* \Gamma \delta = \norm|\Gamma|.\qedhere
\]
\pfend

Before we proceed, we need a simple result from linear algebra.
\begin{lem}
\label{lem:simple}
Let $A$ be $n\times n$ matrix, and $U$ and $V$ be $m\times n$ matrices with columns $\{u_i\}_{i\in[n]}$ and $\{v_i\}_{i\in[n]}$, respectively.  Then,
\[
\absB|\sum_{i,j\in[n]} A\elem[i,j]\ip<u_i, v_j>| \le \norm|A| \normFrob|U| \normFrob|V|.
\]
\end{lem}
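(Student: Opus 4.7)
The plan is to recognise the left-hand side as a Frobenius inner product and then combine the Cauchy--Schwarz inequality with the standard submultiplicative bound $\normFrob|XY|\le\norm|X|\normFrob|Y|$. By \rf(rem:realEntries) I may assume that all matrices are real, so $\ip<u_i,v_j>=u_i^T v_j=(U^T V)\elem[i,j]$. Therefore
\[
\sum_{i,j\in[n]} A\elem[i,j]\ip<u_i,v_j>
\;=\; \sum_{i,j} A\elem[i,j]\,(U^T V)\elem[i,j]
\;=\; \mathrm{tr}\bigl((UA)^T V\bigr),
\]
which is precisely the Frobenius inner product of the two $m\times n$ matrices $UA$ and $V$.

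Cauchy--Schwarz for this inner product then immediately gives
\[
\absB|\sum_{i,j\in[n]} A\elem[i,j]\ip<u_i,v_j>|
\;\le\; \normFrob|UA|\cdot\normFrob|V|.
\]

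It remains to show $\normFrob|UA|\le\norm|A|\cdot\normFrob|U|$. The column-by-column computation $\normFrob|XY|^2=\sum_j\norm|XY\elem[\cdot,j]|^2\le\norm|X|^2\sum_j\norm|Y\elem[\cdot,j]|^2=\norm|X|^2\normFrob|Y|^2$ yields the submultiplicative bound $\normFrob|XY|\le\norm|X|\cdot\normFrob|Y|$, and applying it to the transpose gives
\[
\normFrob|UA|\;=\;\normFrob|A^T U^T|\;\le\;\norm|A^T|\cdot\normFrob|U^T|\;=\;\norm|A|\cdot\normFrob|U|.
\]
Chaining this with the previous display produces the claimed inequality.

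There is no real obstacle here: the whole argument is a textbook use of Cauchy--Schwarz together with the submultiplicativity of the Frobenius norm against the operator norm. The only actual ``idea'' is to reorganise the double sum into a single Frobenius inner product $\langle UA, V\rangle_F$ so that the matrix $A$ is absorbed into one of the two Frobenius factors; its operator norm can then be peeled off by submultiplicativity, which is exactly the shape of the desired upper bound.
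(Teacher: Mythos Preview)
Your proof is correct and follows essentially the same approach as the paper: rewrite the double sum as the Frobenius inner product $\ip<UA,V>$, apply Cauchy--Schwarz, and then use $\normFrob|UA|\le\norm|A|\normFrob|U|$. The only cosmetic difference is that the paper works directly with $U^*V$ rather than invoking \rf(rem:realEntries) to reduce to the real case, but the argument is otherwise identical.
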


\pfstart
Indeed, using the Cauchy-Schwarz inequality, and the definition of the spectral norm:
\[
\absB|\sum_{i,j\in[n]} A\elem[i,j]\ip<u_i,v_j>| =
|\ip<A,U^*V>| = |\ip<U A,V>| \le \normFrob|U A|\normFrob|V| \le \norm|A| \normFrob|U| \normFrob|V|.\qedhere
\]
\pfend

\begin{clm}
\label{clm:adv3}
We have $W^{(T)}\le 2\sqrt{\eps} \norm|\Gamma|$.
\end{clm}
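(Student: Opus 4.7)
The plan is to exploit the bounded-error assumption, which says that for every input $z \in \cD$, measuring the output register of $\psi^{(T)}_z$ yields $f(z)$ with probability at least $1 - \eps$. For each value $b \in [\ell]$, let $\Pi_b$ denote the orthogonal projector onto the subspace where the output register $\reg O$ contains $b$. Split the final state as $\psi^{(T)}_z = \tilde\psi_z + \phi_z$, where $\tilde\psi_z = \Pi_{f(z)} \psi^{(T)}_z$ is the ``correct'' part and $\phi_z = (I - \Pi_{f(z)}) \psi^{(T)}_z$ is the ``error'' part. The bounded-error assumption gives $\|\phi_z\|^2 \le \eps$ and $\|\tilde\psi_z\| \le 1$.

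Next, observe that $\Gamma\elem[x,y]$ vanishes unless $f(x) \ne f(y)$, and that for such pairs $\Pi_{f(x)} \Pi_{f(y)} = 0$, hence $\langle \tilde\psi_x, \tilde\psi_y \rangle = 0$. Therefore, whenever $\Gamma\elem[x,y] \ne 0$,
\[
\langle \psi^{(T)}_x, \psi^{(T)}_y \rangle = \langle \tilde\psi_x, \phi_y \rangle + \langle \phi_x, \psi^{(T)}_y \rangle.
\]
Plugging this into the definition of $W^{(T)}$ expresses $W^{(T)}$ as a sum of two ``mixed'' terms, each of the form covered by \rf(lem:simple).

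For the first term, I would set $u_x = \delta_x \tilde\psi_x$ and $v_y = \delta_y \phi_y$, so that $\normFrob|U|^2 \le \sum_x \delta_x^2 = 1$ and $\normFrob|V|^2 \le \eps \sum_y \delta_y^2 = \eps$. Then \rf(lem:simple) yields $\bigl|\sum_{x,y} \Gamma\elem[x,y]\langle u_x, v_y\rangle\bigr| \le \|\Gamma\|\sqrt{\eps}$. The second term is bounded symmetrically with $u'_x = \delta_x \phi_x$ and $v'_y = \delta_y \psi^{(T)}_y$, giving the same bound $\|\Gamma\| \sqrt{\eps}$. Adding the two contributions gives $|W^{(T)}| \le 2\sqrt{\eps}\|\Gamma\|$.

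I do not expect any substantial obstacle: the only place where a choice has to be made is the decomposition of the inner product so as to produce two terms in which exactly one factor is a ``bad'' piece (whose Frobenius norm is $\le\sqrt{\eps}$) and the other is arbitrary. The symmetric decomposition above is the natural one, and it matches exactly the tight constant $2\sqrt{\eps}$ asserted by the claim; a naive split yielding three terms (with $\langle\phi_x,\phi_y\rangle$ as an extra summand) would give a weaker $2\sqrt{\eps}+\eps$ bound, so the key observation is to rewrite $\langle \psi^{(T)}_x, \psi^{(T)}_y\rangle$ as a sum of only the two asymmetric cross terms.
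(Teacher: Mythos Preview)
Your proof is correct and follows essentially the same approach as the paper: split each final state into its ``correct'' projection $\Pi_{f(z)}\psi^{(T)}_z$ and the orthogonal ``error'' part, observe that the correct-correct cross terms vanish because $\Gamma\elem[x,y]=0$ unless $f(x)\neq f(y)$, and bound the two remaining mixed terms via \reflem{simple}. The paper's regrouping is the symmetric one $\langle \psi_x,\phi_y\rangle + \langle \phi_x,\tilde\psi_y\rangle$ rather than your $\langle \tilde\psi_x,\phi_y\rangle + \langle \phi_x,\psi_y\rangle$, but this is an immaterial difference and the Frobenius-norm estimates are identical.
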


\mycommand{Pip}{\Pi^\perp}
\pfstart
Denote for brevity $\psi_z = \psi^{(T)}_z$.  By the assumption on the correctness of the algorithm, there exist orthogonal projectors $\{\Pi_a\}_{a\in[\ell]}$ such that $\norm|\Pi_{f(z)}\psi_z|^2\ge 1-\eps$ for all $z\in\cD$.  Denote $\Pip_a = I - \Pi_a$, so that $\|\Pip_{f(x)}\psi_z\|^2\le \eps$ for all $z\in\cD$.  We have
\begin{equation}
\label{eqn:Wt}
\begin{aligned}
W^T &= 
\sum_{x,y\in\cD} \Gamma\elem[x,y]\delta_x\delta_y \ip<\Pi_{f(x)}\psi_x, \Pi_{f(y)}\psi_y>
\\
&+ \sum_{x,y\in\cD} \Gamma\elem[x,y]\delta_x\delta_y \ip<\psi_x, \Pip_{f(y)}\psi_y> + 
\sum_{x,y\in\cD} \Gamma\elem[x,y]\delta_x\delta_y \ip<\Pip_{f(x)}\psi_x, \Pi_{f(y)}\psi_y>
\end{aligned}
\end{equation}
Note that the first term of~\refeqn{Wt} equals 0.  Indeed, if $f(x)=f(y)$, then $\Gamma\elem[x,y]=0$; otherwise $\Pi_{f(x)}$ and $\Pi_{f(y)}$ project to orthogonal subspaces.
For the second term, let $U$ and $V$ be the matrices having $u_x = \delta_x \psi_x$ and $v_y = \delta_y \Pip_{f(y)}\psi_y$ as their columns, respectively.  Then, by \reflem{simple}, the second term of~\refeqn{Wt} is at most $\|\Gamma\| \normFrob|U|\normFrob|V|$.  We have
\[
\normFrob|U|^2 = \sum_{x\in\cD}\delta_x^2\|\psi_x\|^2\le 1,\qquad\text{and}\qquad
\normFrob|V|^2 = \sum_{y\in\cD}\delta_y^2\|\Pip_{f(y)}\psi_y\|^2\le\eps.
\]
Thus, the first term  is at most $\sqrt{\eps}\|\Gamma\|$.  For the third term, the same estimate can be obtained.
\pfend

\begin{clm}
\label{clm:adv2}
We have $|W^{(t)}-W^{(t+1)}|\le 2\max_{j\in[\vars]} \norm|\Gamma\circ\Delta_j|$.
\end{clm}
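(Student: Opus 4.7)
\pfstart[Proof plan for Claim~\ref*{clm:adv2}]
The plan is to decompose the difference $W^{(t)} - W^{(t+1)}$ according to the index register, show that terms with $x_j = y_j$ cancel, and bound the rest using Lemma~\ref{lem:simple} summed over $j$.

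First, since $U_{t+1}$ is unitary, we have $\langle \psi^{(t+1)}_x, \psi^{(t+1)}_y\rangle = \langle O_x \psi^{(t)}_x, O_y \psi^{(t)}_y\rangle$. For each $z\in\cD$, decompose the state according to the index register: $\psi^{(t)}_z = \sum_{j=0}^{\vars} \psi^{(t)}_{z,j}$, where $\psi^{(t)}_{z,j}$ is the projection of $\psi^{(t)}_z$ onto the subspace spanned by basis vectors with $\reg j = j$. Because $O_z = \bigoplus_{j=0}^{\vars} O_{z_j}$ and the summands act on mutually orthogonal subspaces, for each $j$ we get $\langle \psi^{(t)}_{x,j},\psi^{(t)}_{y,j}\rangle - \langle O_{x_j}\psi^{(t)}_{x,j}, O_{y_j}\psi^{(t)}_{y,j}\rangle = 0$ whenever $x_j = y_j$, since then $O_{x_j} = O_{y_j}$ is unitary on that subspace. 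Thus
\[
W^{(t)} - W^{(t+1)} = \sum_{j=0}^{\vars}\sum_{x,y\in\cD} (\Gamma\circ\Delta_j)\elem[x,y]\,\delta_x\delta_y \sB[\langle \psi^{(t)}_{x,j},\psi^{(t)}_{y,j}\rangle - \langle O_{x_j}\psi^{(t)}_{x,j}, O_{y_j}\psi^{(t)}_{y,j}\rangle],
\]
where we used that $(\Gamma\circ\Delta_j)\elem[x,y] = \Gamma\elem[x,y]$ exactly when $x_j\ne y_j$.

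Next, I bound each of the two inner products separately via Lemma~\ref{lem:simple}. For the first one, fix $j$ and let $U_j$ be the matrix whose $x$-th column is $u^{(j)}_x = \delta_x \psi^{(t)}_{x,j}$. Applying Lemma~\ref{lem:simple} with $A = \Gamma\circ\Delta_j$ gives
\[
\absB|\sum_{x,y} (\Gamma\circ\Delta_j)\elem[x,y]\,\delta_x\delta_y \langle\psi^{(t)}_{x,j},\psi^{(t)}_{y,j}\rangle|
\le \|\Gamma\circ\Delta_j\|\cdot \|U_j\|_F^2.
\]
For the second term, let $U'_j$ (resp.\ $V'_j$) have columns $\delta_x O_{x_j}\psi^{(t)}_{x,j}$ (resp.\ $\delta_y O_{y_j}\psi^{(t)}_{y,j}$); since each $O_a$ is unitary, $\|U'_j\|_F = \|V'_j\|_F = \|U_j\|_F$, and Lemma~\ref{lem:simple} again yields a bound of $\|\Gamma\circ\Delta_j\|\cdot \|U_j\|_F^2$.

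Finally, summing the key identity
\[
\sum_{j=0}^{\vars} \|U_j\|_F^2 = \sum_{j=0}^{\vars}\sum_{x\in\cD} \delta_x^2 \|\psi^{(t)}_{x,j}\|^2
= \sum_{x\in\cD} \delta_x^2 \|\psi^{(t)}_x\|^2 = \|\delta\|^2 = 1,
\]
combined with the triangle inequality gives
\[
|W^{(t)} - W^{(t+1)}| \le 2 \sum_{j=0}^{\vars} \|\Gamma\circ\Delta_j\|\cdot \|U_j\|_F^2 \le 2\max_{j\in[\vars]} \|\Gamma\circ\Delta_j\|,
\]
where the last step uses that the $j=0$ term vanishes because $x_0 = y_0 = 0$ makes $\Gamma\circ\Delta_0 = 0$.

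The only subtlety I foresee is bookkeeping: making sure the decomposition by the index register is compatible with the other registers (value and workspace), and that the unitaries $O_a$ acting on $\hilbert v\otimes \hilbert W$ leave the orthogonal decomposition invariant. This is immediate from the block-diagonal structure of $O_z$, so no real obstacle remains.
\pfend
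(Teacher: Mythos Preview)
Your proof is correct and follows essentially the same approach as the paper: decompose by the index register, observe that terms with $x_j=y_j$ vanish so $\Gamma$ can be replaced by $\Gamma\circ\Delta_j$, apply Lemma~\ref{lem:simple} to each of the two inner-product sums separately, and use $\sum_j \|U_j\|_F^2 = 1$ to conclude. The paper's presentation is nearly identical, differing only in minor notational choices.
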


\pfstart
Denote $\psi_z = \psi^{(t)}_z$ and $\psi'_z = \psi^{(t+1)}_z$.  
The vector $\psi_z$ can be decomposed as $\bigoplus_{j= 0}^{\vars} \psi_{z,j}$ where the decomposition is the same as for $O_z$ after \rf(eqn:advpsit).
The idea behind the proof is that if $x_j=y_j$, then the oracle does not change the inner product between $\psi_{x,j}$ and $\psi_{y,j}$, hence, the corresponding entry of $\Gamma$ can be ignored.  More formally, for any $x,y\in\cD$, we have
\[
\ip<\psi_x, \psi_y>-\ip<\psi'_x, \psi'_y>  = 
\ip<\psi_x, \psi_y>-\ip<O_x\psi_x, O_y\psi_y> = 
\sum_{j=0}^\vars \chi_{x,y,j},
\]
where $\chi_{x,y,j} = \ip<\psi_{x,j} , \psi_{y,j}> - \ip<O_{x_j}\psi_{x,j}, O_{y_j}\psi_{y,j} >$.  Note that $\chi_{x,y,j} = 0$ if $x_j=y_j$.  In particular, $\chi_{x,y,j}=0$, if $j=0$.  Thus,
\begin{align}
|W^{(t)}&-W^{(t+1)}| = \absC|\sum_{x,y\in\cD} \Gamma\elem[x,y]\delta_x\delta_y \s[\ip<\psi_x, \psi_y> - \ip<\psi'_x, \psi'_y>]| \notag\\
&\qquad=\absC| \sum_{x,y\in\cD}\sum_{j=0}^\vars\Gamma\elem[x,y]\delta_x\delta_y \chi_{x,y,j} |
=\absC|\sum_{j=1}^\vars \sum_{x,y\in\cD}(\Gamma\circ\Delta_j)\elem[x,y]\delta_x\delta_y \chi_{x,y,j} |\notag\\
\le& \sum_{j=1}^\vars \absC| \sum_{x,y\in\cD} (\Gamma\circ\Delta_j)\elem[x,y]\delta_x\delta_y \ip<\psi_{x,j} , \psi_{y,j}>  | +
\sum_{j=1}^\vars \absC| \sum_{x,y\in\cD} (\Gamma\circ\Delta_j)\elem[x,y]\delta_x\delta_y \ip<O_{x_j}\psi_{x,j} , O_{y_j}\psi_{y,j}> |. \label{eqn:Wtchange}
\end{align}
Let us estimate the second term, the first one being similar.  For $j\in[\vars]$, let $U_j$ be the matrix with columns $u_{j,z} = \delta_z O_{z_j}\psi_{z,j}$.  Thus, by \reflem{simple}, the second term of~\refeqn{Wtchange} is at most
\[
\sum_{j=1}^\vars \norm|\Gamma\circ\Delta_j| \normFrob|U_j|^2 \le \max_{i\in [\vars]} \norm|\Gamma\circ\Delta_i|\sum_{j=1}^\vars  \normFrob|U_j|^2,
\]
and, finally,
\[
\sum_{j=1}^\vars  \normFrob|U_j|^2 = \sum_{j=1}^\vars \sum_{z\in\cD} \delta_z^2 \norm|O_{z_j}\psi_{z,j}|^2 =
\sum_{z\in\cD} \delta_z^2 \sum_{j=1}^\vars  \norm|\psi_{z,j}|^2 \le
\sum_{z\in\cD} \delta_z^2 \norm|\psi_{z}|^2 = \norm|\delta|^2 = 1.
\]
By plugging this in \rf(eqn:Wtchange), and using the same estimate for the first term, we obtain the required inequality.
\pfend

Let $\eps$ be the error of the quantum query algorithm.  Usually it is $1/3$, but by \rf(lem:circuitAmplify), we may reduce it below any positive constant by introducing a constant multiplicative overhead.
If we take $\eps=1/5$, then $2\sqrt{\eps}<1$, hence, Claims~\ref{clm:adv1}, \ref{clm:adv2} and~\ref{clm:adv3} are enough to deduce that $T = \Omega(\Adv(f))$.

\section{Duality}
\label{sec:advDuality}
The aim of this section is to give an alternative formulation of the adversary bound as an optimisation problem.  This is achieved in \rf(sec:advDual).  This formulation will become more important in \rf(sec:advAlgorithms), when we show how to convert it into a quantum query algorithm.  As a by-product, in \rf(sec:advLimitations), we obtain some limitations on the positive-weighted adversary.  In \rf(sec:advExamples), we give a number of examples of applications of the adversary bound and its dual.  In \rf(sec:spanPrograms), we define span programs, a notion closely related to the dual adversary bound.

\subsection{Dual Adversary Bound}
\label{sec:advDual}
\begin{thm}
\label{thm:advDual}
Let $f\colon [q]^\vars \supseteq \cD\to\{0,1\}$ be a function.  Then, the adversary bound $\Adv(f)$ is equal to the optimal value of the both following optimisation problems:
\begin{subequations}
\label{eqn:advPrimal}
\begin{alignat}{3}
 &{\mbox{\rm maximise }} &\quad& \|\Gamma\| \label{eqn:advPrimalObjective} \\ 
 &{\mbox{\rm subject to }} && \|\Gamma\circ\Delta_j\| \le 1&\quad&\text{\rm for all $j\in[\vars]$;} \label{eqn:advPrimalCondition}\\
 &&& \mbox{\rm $\Gamma$ is an adversary matrix.} 
\end{alignat}
\end{subequations}
and
\begin{subequations}
\label{eqn:advDual}
\begin{alignat}{3}
&\mbox{\rm minimise} &\quad& \max_{z\in \cD}\sum\nolimits_{j \in [\vars]} X_j\elem[z,z]  \label{eqn:advDualObjective}\\
& \mbox{\rm subject to}&& \sum\nolimits_{j\colon x_j \ne y_j} X_j\elem[x,y] = 1 &\quad& \text{\rm for all $x\in f^{-1}(1)$ and $y\in f^{-1}(0)$;} \label{eqn:advDualCondition}\\
&&& X_j\succeq 0 && \mbox{\rm for all $j\in [\vars]$,} \label{eqn:advDualSemidefinite}
\end{alignat}
\end{subequations}
where $(X_j)_{j\in\vars}$ are positive semi-definite matrices with rows and columns labelled by the elements of $\cD$.
\end{thm}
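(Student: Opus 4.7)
The first equivalence, that $\Adv(f)$ coincides with the optimum of \rf(eqn:advPrimal), will follow from a rescaling argument: since the ratio $\|\Gamma\|/\max_j\|\Gamma \circ \Delta_j\|$ in \rf(eqn:adv) is invariant under $\Gamma \mapsto c\Gamma$ for positive $c$, I may rescale any candidate $\Gamma$ so that $\max_j\|\Gamma \circ \Delta_j\| = 1$, which turns \rf(eqn:adv) directly into \rf(eqn:advPrimal). This rescaling is legitimate because, for any non-constant $f$ and any nonzero adversary matrix $\Gamma$, some $\Gamma\circ\Delta_j$ is necessarily nonzero.

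For the equality between \rf(eqn:advPrimal) and \rf(eqn:advDual), which is semi-definite duality, I will establish weak and strong duality separately. First I would prove \emph{weak duality} by a direct computation. Use \rf(rem:booleanOutput) to treat $\Gamma$ as an $f^{-1}(1)\times f^{-1}(0)$ rectangular matrix, and pick real unit vectors $u,v$ with $u^*\Gamma v = \|\Gamma\|$. Given any dual-feasible $(X_j)$, factor $X_j = \sum_k \phi_{j,k}\phi_{j,k}^*$ and substitute \rf(eqn:advDualCondition) to rewrite
\[
\|\Gamma\| \;=\; u^*\Gamma v \;=\; \sum_{j,k} a_{j,k}^*(\Gamma\circ\Delta_j)\,b_{j,k},
\]
where $a_{j,k}$ has entry $u_x\phi_{j,k}(x)$ at $x\in f^{-1}(1)$ and $b_{j,k}$ has entry $\phi_{j,k}(y)v_y$ at $y\in f^{-1}(0)$. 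Bounding each term by $\|\Gamma\circ\Delta_j\|\cdot\|a_{j,k}\|\cdot\|b_{j,k}\|\le\|a_{j,k}\|\cdot\|b_{j,k}\|$ via \rf(eqn:advPrimalCondition), then applying Cauchy-Schwarz over $(j,k)$, and finally invoking the identities $\sum_{j,k}\|a_{j,k}\|^2 = \sum_x u_x^2\sum_j X_j\elem[x,x] \le \max_z \sum_j X_j\elem[z,z]$ and the analogue for $b$, will yield $\|\Gamma\|\le\max_z\sum_j X_j\elem[z,z]$.

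For \emph{strong duality}, I plan to cast \rf(eqn:advPrimal) as a genuine SDP and invoke the SDP duality theorem. Each constraint $\|\Gamma\circ\Delta_j\|\le 1$ can be encoded by the Schur-complement condition that the block matrix with identity diagonal blocks and off-diagonal $\Gamma\circ\Delta_j$ is PSD, while the objective $\|\Gamma\|$ is handled via the identity $\|\Gamma\| = \max\{\langle\Gamma,W\rangle : \|W\|_*\le 1\}$ combined with the standard SDP representation of the nuclear norm, introducing auxiliary PSD blocks. Slater's condition is straightforward on both sides (small multiples of any nonzero adversary matrix are strictly primal-feasible). In the resulting dual, the matrices $X_j$ of \rf(eqn:advDual) appear as the Lagrange multipliers for the Schur constraints, and their entries on index pairs with $x_j=y_j$ decouple from the Lagrangian because $(\Gamma\circ\Delta_j)\elem[x,y]=0$ there. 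The hard part will be the bookkeeping of this dualization: verifying that stationarity in $\Gamma$ reduces to the affine constraint \rf(eqn:advDualCondition), and that after optimizing out the auxiliary variables $W$ and the nuclear-norm blocks, the dual objective collapses cleanly to $\max_z\sum_j X_j\elem[z,z]$, matching \rf(eqn:advDualObjective) exactly.
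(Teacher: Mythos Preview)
Your rescaling argument for the first equivalence is correct, and your weak-duality computation is both correct and elegant---in fact cleaner than what the paper does for that direction, since the paper obtains weak duality only as a by-product of the full Lagrangian dualisation rather than via a direct estimate.

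The gap is in your strong-duality plan. The programme \rf(eqn:advPrimal) is a \emph{maximisation} of a convex function (the spectral norm) over a convex set, so it is not a convex programme and not directly an SDP. Your proposed fix---replacing $\|\Gamma\|$ by $\max\{\langle\Gamma,W\rangle:\|W\|_*\le1\}$ and introducing $W$ as an auxiliary variable alongside $\Gamma$---does not cure this: the resulting objective $\langle\Gamma,W\rangle$ is bilinear in the decision variables $(\Gamma,W)$, so you still do not have an SDP, and the standard SDP-duality theorem does not apply. The Schur-complement encoding of the constraints $\|\Gamma\circ\Delta_j\|\le1$ is fine, but that is not where the difficulty lies; the difficulty is the objective.

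The paper resolves this with a different reparametrisation. It fixes a principal eigenvector $\delta$ of $\Gamma$ and passes to the new variables $\Lambda=\Gamma\circ(\delta\delta^*)$ and $M=\diag(\delta_z^2)$. Because $f$ has Boolean output, the spectrum of each $\Gamma\circ\Delta_j$ is symmetric about zero, so $\|\Gamma\circ\Delta_j\|\le1$ is equivalent to $\Gamma\circ\Delta_j\preceq I$; taking the Hadamard product with $\delta\delta^*\succeq0$ turns this into the constraint $\Lambda\circ\Delta_j\preceq M$, which is linear in $(\Lambda,M)$. Meanwhile the objective $\|\Gamma\|=\delta^*\Gamma\delta$ becomes the linear functional $\sum_{x,y}\lambda_{x,y}$. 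After checking that this change of variables is reversible (which requires a short argument that $\mu_z=0$ forces the corresponding row of $\Lambda$ to vanish), one has a genuine SDP, strictly feasible, with the same optimal value as \rf(eqn:advPrimal). Slater's condition then gives strong duality, and the Lagrangian dual works out to \rf(eqn:advDual).
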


\pfstart
Equation~\refeqn{advPrimal} is a mere restatement of the definition of the adversary bound.  Thus, it remains to prove that the optimisation problems~\refeqn{advPrimal} and~\refeqn{advDual} have equal optimal values.  This is done using semi-definite duality.  For a brief explanation of semi-definite optimisation, the reader may refer to \rf(sec:convexOptimisation).

First of all, we transform \refeqn{advPrimal} into a form more suitable for taking the dual. As in the proof of \refprp{advMultipleRows}, we may assume $\delta=(\delta_z)$ is a normalised $\|\Gamma\|$-eigenvector of $\Gamma$ with real entries. The objective value \refeqn{advPrimalObjective} equals
\newcommand{\xyho}{\substack{x\in f^{-1}(1)\\y\in f^{-1}(0)}}
\[
\norm|\Gamma| = \delta^* \Gamma\delta = 2\sum_{\xyho} \delta_x\delta_y \Gamma\elem[x,y].
\]
Also, from \refrem{booleanOutput} we know that since $f$ has Boolean output, the spectrum of $\Gamma\circ\Delta_j$ is symmetric with respect to 0. Thus, Eq. \refeqn{advPrimalCondition} holds if and only if 
\begin{equation}
\label{eqn:duality1}
\Gamma\circ\Delta_j \preceq I\qquad\text{for all $j\in [\vars]$.}
\end{equation}
Let us define the matrices 
\begin{equation}
\label{eqn:NandM}
\Lambda = (\lambda_{x,y})\qquad\mbox{and}\qquad M = \diag(\mu_z)
\end{equation}
by $\Lambda = \Gamma\circ(\delta\delta^*)$ and $M = I\circ (\delta\delta^*)=\diag (\delta_z^2)$, respectively. By taking the Hadamard product of both parts of \refeqn{duality1} with $\delta\delta^*\succeq 0$, we get $\Lambda\circ \Delta_j \preceq M$. Thus, we see that the optimisation problem \refeqn{advPrimal} is equivalent to the following one
\begin{subequations}
\label{eqn:advPrimalVar}
\begin{alignat}{3}
&{\mathrm{maximise}} &\quad& 2 \sum\nolimits_{\xyho} \lambda_{x,y}\\
& \mathrm{subject\  to} && \sum\nolimits_{z\in \cD} \mu_z = 1 \\ 
&&& M\succeq \Lambda\circ\Delta_j &\quad&\mbox{for all $j\in[\vars]$;}\label{eqn:advPrimalConditionM}\\
&&& \mu_z\ge 0 &&\mbox{for all $z\in\cD$;}\label{eqn:advPrimalConditionMu}\\
&&& \text{$\Lambda$ is an adversary matrix.}
\end{alignat}
\end{subequations}
Indeed, we have just shown how a feasible solution for \refeqn{advPrimal} can be transformed into a feasible solution for \refeqn{advPrimalVar}. For the reverse direction, we at first prove that $\mu_z=0$ implies that the $z$th row of $\Lambda$ contains only zeros. Assume that $\mu_z=0$, take any $z'\ne z$, choose any $j$ such that $z_j\ne z'_j$, and consider the $2\times 2$ submatrix of \refeqn{advPrimalConditionM} given by the rows and the columns labelled by $z$ and $z'$. The non-negativeness condition implies that $\lambda_{z,z'}=0$. Thus, the transformation from \refeqn{advPrimal} to \refeqn{advPrimalVar} can be reversed by assigning $\delta_z = \sqrt{\mu_z}$ and $\Gamma\elem[z,z'] = \lambda_{z,z'}/(\delta_z\delta_{z'})$ with the agreement that $0/0=0$.

Now we construct the dual of~\refeqn{advPrimalVar}. For that, we write out the Lagrangian:
\begin{equation}
\label{eqn:lagranzh1}
L(\mu, \lambda, t, X) = 
 2 \sum_{\xyho} \lambda_{x,y} + \s[1 - \sum_{z\in \cD} \mu_z] t + \sum_{j\in [\vars]} \tr ((M-\Lambda\circ \Delta_j) X_j) ,
\end{equation}
where $t\in \R$, and $\{X_j\}_{j\in[\vars]}$ are positive semi-definite $\cD\times\cD$ matrices. 
Let us transform the Lagrangian to aid with taking the supremum. Let $\{e_z\}$ be the standard basis of $\C^{\cD}$. Then the Lagrangian equals
\begin{align}
& \sum_{x,y\colon f(x)\ne f(y)} \lambda_{x,y} + \s[1 - \sum_{z\in \cD} \mu_z]t + \sum_{j\in [\vars]} \tr \sD[{\skC[\sum_{z\in\cD}\mu_z e_ze_z^* - \sum_{\substack{x,y\colon f(x)\ne f(y),\\ x_j\ne y_j}} \lambda_{x,y} e_x e_y^*]}X_j] \notag\\
& \qquad=\; t + \sum_{z\in\cD} \mu_z \sC[\sum_{j \in [\vars]} \tr(e_z e_z^* X_j) - t] + \sum_{x,y\colon f(x)\ne f(y)} \lambda_{x,y} \sC[1 - \sum_{j\colon x_j \ne y_j} \tr(e_x e_y^* X_j) ] \notag\\
&\qquad=\; t + \sum_{z\in\cD} \mu_z \sC[\sum_{j \in [\vars]} {X_j\elem[z,z]} - t] + \sum_{x,y\colon f(x)\ne f(y)} \lambda_{x,y} \sC[1 - \sum_{j\colon x_j \ne y_j} {X_j\elem[x,y]} ].\label{eqn:lagranzh2}
\end{align}
By optimising \refeqn{lagranzh2}, we get
\[
\sup_{\mu\colon \mu_z\ge 0,\; \lambda} L(\mu, \lambda, t, X) = 
\begin{cases}
t,& \parbox{10cm}{if $t\ge \sum_{j \in [\vars]} X_j\elem[z,z]$ for all $z\in \cD$,\\ and $\sum_{j\colon x_j \ne y_j} X_j\elem[x,y]=1$ for all $x\in f^{-1}(1)$ and $y\in f^{-1}(0)$;}\\
+\infty,& \text{otherwise.}
\end{cases}
\]
That gives us the optimisation problem:
\begin{subequations}
\label{eqn:advDualVar}
\begin{alignat}{3}
&\text{\rm minimise} &\quad& t\\
&\text{\rm subject to} && t\ge \sum\nolimits_{j \in [\vars]} X_j\elem[z,z] &\quad& \mbox{for all $z\in \cD$;} \\
&&& \sum\nolimits_{j: x_j \ne y_j} X_j\elem[x,y] =1 && \mbox{for all $x\in f^{-1}(1)$ and $y\in f^{-1}(0)$;}\\
&&& X_j\succeq 0 && \text{for all $j\in[\vars]$.}
\end{alignat}
\end{subequations}
Thus, the optimal value of~\refeqn{advDualVar} is at least the optimal value of~\refeqn{advPrimalVar}.  Also, it is easy to see that~\refeqn{advDualVar} is equivalent to~\refeqn{advDual}.
To prove the equality, we use Slater's condition.  By this condition, it is enough to prove that~\refeqn{advPrimalVar} is convex and strictly feasible.  The first condition is trivial, because the objective function and all the constraints are linear in $\lambda$ and $\mu$.  Strict feasibility also holds, because one may take $\lambda_{x,y}=0$ for all $x,y$, and $\mu_z = 1/|\cD|$ for all $z\in \cD$.
\pfend

\subsection{Limitations of Positive-Weighted Adversary}
\label{sec:advLimitations}
\begin{thm}
\label{thm:advPositiveDual}
Let $f\colon [q]^\vars\supseteq \cD\to\{0,1\}$ be a function.  The positive-weighted adversary $\pAdv(f)$ is equal to the optimal value of the following optimisation problem:
\begin{subequations}
\label{eqn:advPositiveDual}
\begin{alignat}{3}
&\mbox{\rm minimise} &\quad& \max_{z\in \cD}\sum\nolimits_{z \in [\vars]} X_j\elem[z,z] \label{eqn:advPositiveDual:objective}\\
& \mbox{\rm subject to}&& \sum\nolimits_{j\colon x_j \ne y_j} X_j\elem[x,y] \ge 1 &\quad& \text{\rm for all $x\in f^{-1}(1)$ and $y\in f^{-1}(0)$;} \label{eqn:advPositiveDual:condition}\\
&&& X_j\succeq 0 && \mbox{\rm for all $j\in [\vars]$,}
\end{alignat}
\end{subequations}
where $X_j$, for $j\in [\vars]$, are positive semi-definite matrices with rows and columns labelled by the elements of $\cD$.  Without any loss in the estimate, one can assume that matrices $X_j$ are rank-1, i.e., given by $X_j = \psi_j\psi_j^*$ with $\psi_j\in\R^{\cD}$.
\end{thm}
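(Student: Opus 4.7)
The plan is to mimic the derivation of Theorem \ref{thm:advDual}, with the non-negativity of $\Gamma$ added as an extra primal constraint. First I would observe that when $\Gamma$ is a symmetric matrix with non-negative entries, the Perron--Frobenius theorem ensures that its principal eigenvector $\delta$ can be chosen entrywise non-negative; consequently, in the equivalent reformulation \refeqn{advPrimalVar}, the matrix $\Lambda=\Gamma\circ(\delta\delta^*)$ has non-negative entries. Thus $\pAdv(f)$ equals the optimum of \refeqn{advPrimalVar} augmented by the constraints $\lambda_{x,y}\ge 0$. Conversely, given any feasible $(\mu_z,\lambda_{x,y})$ for the augmented program with $\lambda_{x,y}\ge 0$, setting $\delta_z=\sqrt{\mu_z}$ and $\Gamma\elem[x,y]=\lambda_{x,y}/(\delta_x\delta_y)$ (with $0/0=0$) produces a non-negative adversary matrix achieving the same value, exactly as in the proof of Theorem \ref{thm:advDual}.

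Next, I would form the Lagrangian exactly as in \refeqn{lagranzh1} and simplify it to \refeqn{lagranzh2}. The only change is that the supremum in the Lagrangian dual is now taken over $\lambda_{x,y}\ge 0$ rather than $\lambda_{x,y}\in\R$. This weakens the requirement that the coefficient $1-\sum_{j\colon x_j\ne y_j}X_j\elem[x,y]$ of $\lambda_{x,y}$ vanish to the requirement that it be non-positive, i.e., $\sum_{j\colon x_j\ne y_j}X_j\elem[x,y]\ge 1$. All other manipulations carry over verbatim, and Slater's condition is again verified by the same strictly feasible point ($\lambda\equiv 0$, $\mu_z=1/|\cD|$), giving strong duality and the optimization problem \refeqn{advPositiveDual}.

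For the rank-one claim, I would argue in two stages. Given a feasible $(X_j)$, write the spectral decomposition $X_j=\sum_k \psi_{j,k}\psi_{j,k}^*$ and replace each $\psi_{j,k}$ by its entrywise absolute value, producing $X_j''$. The diagonal is preserved, so the objective \refeqn{advPositiveDual:objective} does not change, while each off-diagonal entry can only increase, since $\sum_k|\psi_{j,k}\elem[x]|\,|\psi_{j,k}\elem[y]|\ge\sum_k\psi_{j,k}\elem[x]\psi_{j,k}\elem[y]=X_j\elem[x,y]$; hence \refeqn{advPositiveDual:condition} remains satisfied. Then define $\psi_j\in\R^{\cD}$ by $\psi_j\elem[z]=\sqrt{X_j''\elem[z,z]}$; the Cauchy--Schwarz inequality gives $X_j''\elem[x,y]\le\psi_j\elem[x]\psi_j\elem[y]$, so replacing $X_j''$ by the rank-one matrix $X_j'=\psi_j\psi_j^*$ preserves the diagonal once more while only enlarging off-diagonal entries. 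The resulting rank-one solution attains the same objective value and satisfies every constraint.

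The main conceptual obstacle is the rank-one reduction: it crucially exploits the inequality direction in \refeqn{advPositiveDual:condition}, since enlarging off-diagonals is harmless only when the constraint reads $\ge 1$. This is precisely why the analogous reduction fails for the general (negative-weight) adversary SDP \refeqn{advDual}, where the constraint is an equality and enlarging entries would break feasibility.
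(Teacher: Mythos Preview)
Your proposal is correct and follows essentially the same route as the paper: restrict \refeqn{advPrimalVar} by $\lambda_{x,y}\ge 0$, redo the Lagrangian supremum, and reduce to rank one via Cauchy--Schwarz. The paper's rank-one step is slightly more direct---it skips your absolute-value stage and goes straight from a Gram representation $X_j\elem[z,z']=\ip<\phi_{j,z},\phi_{j,z'}>$ to $\psi_j\elem[z]=\|\phi_{j,z}\|$---but your version is equally valid, and your explicit invocation of Perron--Frobenius to guarantee a non-negative principal eigenvector fills in a point the paper leaves implicit.
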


\pfstart
The proof goes along the lines of the proof of \refthm{advDual}.  By definition, the positive-weighted adversary equals the optimal value of~\refeqn{advPrimal} when $\Gamma$ ranges over matrices with non-negative entries.  Hence, it equals the optimal value of~\refeqn{advPrimalVar} with the additional condition that $\lambda_{x,y}\ge 0$ for all $x,y$.  The Lagrangian still equals~\refeqn{lagranzh2}.  By taking the supremum, we arrive at
\[
\sup_{\mu,\lambda \colon \mu_z\ge 0, \lambda_{x,y}\ge 0} L(\mu, \lambda, t, X) = 
\begin{cases}
t,& \parbox{10cm}{if $t\ge \sum_{j \in [\vars]} X_j\elem[z,z]$ for all $z\in \cD$,\\ and $\sum_{j\colon x_j \ne y_j} X_j\elem[x,y]\ge 1$ for all $x\in f^{-1}(1)$ and $y\in f^{-1}(0)$;}\\
+\infty,& \text{otherwise.}
\end{cases}
\]
This trivially yields~\refeqn{advPositiveDual}.  As $X_j\succeq 0$, there exist vectors $\phi_{j,z}\in \R^d$ where $z\in\cD$ and $d$ is some integer, such that $X_j\elem[z,z'] = \ip<\phi_{j, z}, \phi_{j,z'}>$ for all $z,z'\in\cD$.  Then, we may define $X_j = \psi_j\psi_j^*$ with $\psi_j\elem[z] = \| \phi_{j,z} \|$.  This does not change $X_j\elem[z,z]$, but may only increase $X_j\elem[z,z']$.
\pfend

Using this upper bound on the value of positive-weighted adversary, it is easy to prove some no-go results for this lower bound technique.  (The certificate complexities $C_0$ and $C_1$ are defined in \rf(sec:queryRelated).)

\begin{prp}
\label{prp:advLimitations}
Let $f\colon [q]^\vars\supseteq \cD\to\{0,1\}$ be a function.  Then,
\itemstart
\item[(a)] $\pAdv(f)\le \sqrt{\vars\min\{C_0(f), C_1(f)\}}$;
\item[(b)] if $f$ is total, i.e., $\cD = [q]^\vars$, then $\pAdv(f)\le \sqrt{C_0(f)C_1(f)}$;
\item[(c)] if $f$ is such that the Hamming distance between $f^{-1}(0)$ and $f^{-1}(1)$ is $\eps \vars$ then $\pAdv(f)\le 1/\eps$.
\itemend
\end{prp}

Points (a) and (b) are known as the {\em certificate complexity barrier}, and Point (c) is known as the {\em property testing barrier}.  In particular, Point (c) implies that the positive-weighted adversary cannot prove non-trivial lower bound for the collision or set equality problems, while Point (b) rules out an $\omega(\sqrt{\vars})$ positive-weighted adversary bound for the element distinctness problem.

\pfstart[Proof of \refprp{advLimitations}]
We start with (c).  Define $X_j$, for all $j$, as the matrix with all entries equal to $1/(\eps \vars)$.  It is a feasible solution for~\refeqn{advPositiveDual}, and its objective value is $1/\eps$.

For (a) and (b), we may assume, without loss of generality, that $C_1(f)\le C_0(f)$.  For each $z\in\cD$, choose a minimal certificate $M(z)$.  For (a), define $\psi_j$ by
\[
\psi_j\elem[z] = \begin{cases}
\sqrt{\vars/C_1(f)},& \text{$f(z)=1$, and $j\in M(z)$;}\\
\sqrt{C_1(f)/\vars},& \text{$f(z)=0$;}\\
0,&\text{otherwise.}
\end{cases}
\]
For each $x\in f^{-1}(1)$ and $y\in f^{-1}(0)$, there is at least one position $j\in M(x)$ such that $x_j\ne y_j$, hence this is a feasible solution to~\refeqn{advPositiveDual}.  Also, it is easy to see that the objective value is $\sqrt{\vars C_1(f)}$.

If $f$ is total, we use the observation that, for all $x\in f^{-1}(1)$ and $y\in f^{-1}(0)$, $M(x)\cap M(y)\ne\emptyset$.  We can define $\psi_j$ by
\[
\psi_j\elem[z] = \begin{cases}
\sqrt{C_0(f)/C_1(f)},& \text{$f(z)=1$, and $j\in M(z)$;}\\
\sqrt{C_1(f)/C_0(f)},& \text{$f(z)=0$, and $j\in M(z)$;}\\
0,&\text{otherwise,}
\end{cases}
\]
it is a feasible solution for~\refeqn{advPositiveDual}, and the objective value is $\sqrt{C_0(f)C_1(f)}$.
\pfend

\subsection{Examples}
\label{sec:advExamples}
\mycommand{thres}{\mathrm{Threshold}}
In this section, we give a number of examples of applications of the previously introduced notions.  Let $\thres_{k,\vars}$ denote the $k$-threshold function on $\vars$ variables defined in \rf(defn:threshold).

\begin{prp}
\label{prp:advThresholdExact}
$\Adv(\thres_{k,\vars}) = \sqrt{k(\vars-k+1)}$.
\end{prp}

This is the only place in the thesis where we calculate the {\em exact} value of the adversary bound.  In all other cases, we estimate the adversary bound up to a constant factor, that is sufficient in the light of \rf(thm:adv).  On the other hand, one may be also interested in the exact value, as we will see it below, in \rf(thm:advCompose).

\pfstart[Proof of \rf(prp:advThresholdExact).]
Let $f = \thres_{k,\vars}$.
The lower bound follows from the proof of \rf(prp:advThresholdLower) via the construction of \rf(prp:advBasic2Spectral).  For instance, for the $\thres_{3,4}$ function, we get the following matrix (the action of $\Delta_1$ on the matrix is also shown):
\[
\bordermatrix{
 & 0011 & 0101 & 0110 & 1001 & 1010 & 1100 \cr
0111 & 1 & 1 & 1 & 0 & 0 & 0 \cr
1011 & 1 & 0 & 0 & 1 & 1 & 0 \cr
1101 & 0 & 1 & 0 & 1 & 0 & 1 \cr
1110 & 0 & 0 & 1 & 0 & 1 & 1 \cr
}
\stackrel{\Delta_1}{\longmapsto}
\bordermatrix{
 & 0011 & 0101 & 0110 & 1001 & 1010 & 1100 \cr
& 0 & 0 & 0 & 0 & 0 & 0 \cr
& 1 & 0 & 0 & 0 & 0 & 0 \cr
& 0 & 1 & 0 & 0 & 0 & 0 \cr
& 0 & 0 & 1 & 0 & 0 & 0 \cr
}.
\]

Now let us prove the corresponding upper bound.  
In the following, we will assume that $x$ is a positive input, and $y$ is a negative one.
For an $x$, let $x'$ be the same string with all ones, beyond the first $k$, replaced by zeroes.  Similarly, for an $y$, let $y'$ be $y$ with all zeroes, beyond the first $\vars-k+1$, replaced by ones.  Thus, $x'$ and $y'$ still are positive and negative inputs, with Hamming weights of $k$ and $k-1$, respectively.  

We construct the matrices $X_j$s from \rf(eqn:advDual) so that they satisfy $X_j\elem[x,y] = 0$ unless $x'_j = 1$ and $y'_j=0$.  And if $x'_j = 1$ and $y'_j=0$, then
\begin{equation}
\label{eqn:thExact1}
X_j\elem[x,y] = \frac{1}{\absA|\{i\in[\vars]\mid x'_i = 1,\; y'_i = 0\}|}, \quad X_j\elem[x,x] = \sqrt{\frac{\vars-k+1}{k}}, \quad\mbox{and}\quad X_j\elem[y,y] = \sqrt{\frac{k}{\vars-k+1}}.
\end{equation}
In particular, an input $z$ is not used in the matrix $X_j$ if $j$ corresponds to a position where $z$ differs from $z'$.  

For example, for the $\thres_{2,3}$ function, we can take the following matrices:
\begin{equation}
\label{eqn:maj3X1andX2}
X_1 = \bordermatrix {
& 111 & 110 & 101 & 010 & 001 & 000\cr
111 & 1& 1 & 1/2 & 1 & 1/2 & 1/2\cr
110 & 1& 1 & 1/2 & 1 & 1/2 & 1/2\cr
101 & 1/2 & 1/2 & 1 & 1/2 & 1 & 1\cr 
010 & 1 & 1 & 1/2 & 1 & 1/2 & 1/2 \cr
001 & 1/2 & 1/2 & 1 & 1/2 & 1 & 1\cr
000 & 1/2 & 1/2 & 1 & 1/2 & 1 & 1\cr
}\qquad
X_2 = \bordermatrix {
& 111 & 110 & 011 & 100 & 001 & 000\cr
111 & 1& 1 & 1/2 & 1 & 1/2 & 1/2\cr
110 & 1& 1 & 1/2 & 1 & 1/2 & 1/2\cr
011 & 1/2 & 1/2 & 1 & 1/2 & 1 & 1\cr 
100 & 1 & 1 & 1/2 & 1 & 1/2 & 1/2 \cr
001 & 1/2 & 1/2 & 1 & 1/2 & 1 & 1\cr
000 & 1/2 & 1/2 & 1 & 1/2 & 1 & 1\cr
}
\end{equation}
and
\begin{equation}
\label{eqn:maj3X3}
X_3 = \bordermatrix {
& 101 & 011 & 100 & 010 \cr
101 & 1 & 1/2 & 1 & 1/2\cr
011 & 1/2 & 1 & 1/2 & 1\cr 
100 & 1 & 1/2 & 1 & 1/2\cr
010 & 1/2 & 1 & 1/2 & 1\cr
}
\end{equation}
extended with zeroes for all other input pairs.

If the $X_j$s satisfy~\rf(eqn:thExact1) and $X_j\succeq 0$, then all the conditions in~\rf(eqn:advDual) are satisfied, and the objective value is $\sqrt{k(\vars-k+1)}$.  It remains to show that there exist such matrices.  First of all, we may write $X_j = B_j\circ D$, where $D$ is a positive semi-definite rank-1 block matrix
\[
\bordermatrix{
 & f^{-1}(1) & f^{-1}(0) \cr
f^{-1}(1) & \sqrt{(\vars-k+1)/k} & 1 \cr
f^{-1}(0) & 1 & \sqrt{k/(\vars-k+1)}  \cr
}
\]
and $B_j\succeq 0$ satisfies
\begin{equation}
\label{eqn:thExact2}
B_j\elem[x,y] = \frac{1}{\absA|\{i\in[\vars]\mid x'_i = 1,\; y'_i = 0\}|} \qquad\mbox{and}\qquad B_j\elem[x,x] = B_j\elem[y,y] = 1.
\end{equation}
In order to get $B_j$, we apply the following
\begin{lem}
\label{lem:1minus1overz}
Assume that $k>0$ is a real number, and a matrix $A\succeq 0$ is such that $\absA|A\elem[i,j]|<k$ for all $i$ and $j$.
Then the matrix $B$, defined by $B\elem[i,j] = (k - A\elem[i,j])^{-1}$, is also positive semi-definite.
\end{lem}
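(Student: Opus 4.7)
The plan is to expand each entry $B[i,j]=(k-A[i,j])^{-1}$ as a geometric series and then realise $B$ as a (convergent) infinite sum of Hadamard powers of $A/k$, each of which is positive semi-definite by the Schur product theorem.

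Concretely, I would proceed as follows. Since $|A[i,j]|<k$ for every $i,j$, and the matrix is finite-dimensional, the quantity $c:=\max_{i,j}|A[i,j]/k|$ is strictly less than $1$. For each pair $(i,j)$, the geometric series expansion gives
\begin{equation*}
B[i,j] \;=\; \frac{1}{k-A[i,j]} \;=\; \frac{1}{k}\cdot\frac{1}{1-A[i,j]/k} \;=\; \frac{1}{k}\sum_{n=0}^{\infty}\left(\frac{A[i,j]}{k}\right)^{\!n}.
\end{equation*}
Equivalently, $B=\tfrac{1}{k}\sum_{n\ge 0} (A/k)^{\circ n}$, where $M^{\circ n}$ denotes the $n$-th Hadamard (entrywise) power of $M$ and $M^{\circ 0}$ is the all-ones matrix. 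Since all entries are bounded in absolute value by $c^{n}$, the series converges (entrywise, and thus in any norm on the finite-dimensional matrix space) to $B$.

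For the positivity, observe that $A/k\succeq 0$, and the all-ones matrix is positive semi-definite (as the rank-one outer product $\mathbf{1}\mathbf{1}^{*}$). The Schur product theorem states that the Hadamard product of two positive semi-definite matrices is positive semi-definite, so by induction each Hadamard power $(A/k)^{\circ n}$ is positive semi-definite. The cone of positive semi-definite matrices is closed under addition and closed under limits, so the partial sums $\tfrac{1}{k}\sum_{n=0}^{N}(A/k)^{\circ n}$ are positive semi-definite and their limit $B$ is positive semi-definite as well.

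The only subtlety is justifying the interchange of ``positive semi-definite'' with the infinite sum: this is immediate from closedness of the positive semi-definite cone together with the uniform entrywise bound $|(A[i,j]/k)^{n}|\le c^{n}$ with $c<1$, which gives a convergent geometric majorant and hence entrywise convergence. Since the domain and range indices are finite, entrywise convergence implies operator-norm convergence, and then the limit matrix inherits positive semi-definiteness. This completes the proof.
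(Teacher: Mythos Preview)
Your proof is correct and takes essentially the same approach as the paper: expand $(k-x)^{-1}$ as a geometric series, realise $B$ as $\tfrac{1}{k}\sum_{n\ge 0}(A/k)^{\circ n}$, and conclude by the Schur product theorem together with closedness of the positive semi-definite cone. The paper's version is terser but the argument is identical.
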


\pfstart
This follows from the series valid for all real $|x|<k$:
\[
\frac{1}{k-x} = \frac{1/k}{1-(x/k)} = \frac1k + \frac x{k^2} + \frac{x^2}{k^3} + \cdots,
\]
and the fact that the set of semi-definite matrices is a topologically closed convex cone, closed under Hadamard product (cf. \rf(sec:linearAlgebra)).
\pfend

Clearly, $\absA|\{i\in[\vars]\mid x'_i = 1,\; y'_i = 0\}| = k - \absA|\{i\in[\vars]\mid x'_i = y'_i = 1\}|$.  Thus, if we take a matrix $A_j\succeq 0$ satisfying
\[
A_j\elem[x,y] = \absA|\{i\in[\vars]\mid x'_i = y'_i = 1\}| \qquad\mbox{and}\qquad 
A_j\elem[x,x] = A_j\elem[y,y] = k-1,
\]
and apply \rf(lem:1minus1overz), we obtain a matrix $B_j$ satisfying~\rf(eqn:thExact2).  Finally, we can define $A_j = \sum_{i\in[\vars]\setminus\{j\}} C_i$, where $C_i\succeq 0$ is a rank-1 matrix given by 
\[
C_i\elem[x,y] =
\begin{cases}
1, & x'_i=y'_i = 1 \\
0, & \mbox{otherwise}
\end{cases}
\]
for all $x,y\in\{0,1\}^\vars$.
\pfend

One way of obtaining new functions from the existing ones is by composing them.  Assume $f\colon \{0,1\}^n\to\{0,1\}$ and $g_i\colon \{0,1\}^{m_i}\to \{0,1\}$ are Boolean functions, where $i$ ranges over $[n]$.  Then $f(g_1,\dots,g_n)$ is the composed Boolean function on $N=m_1+\cdots+m_n$ variables defined by
\[
(z_{1},\dots,z_{N}) \longmapsto f\sB[{g_1(z_{1},\dots,z_{m_1}),\;
g_2(z_{m_1+1},\dots,z_{m_1+m_2}),\;\dots\dots,\;g_n(z_{N-m_n+1},\dots,z_{N})}].
\]

The adversary bound behaves nicely under the composition operation.
\begin{thm}
\label{thm:advCompose}
Suppose $f,g_1,\dots,g_n$ are as above, and $\Adv(g_1)=\cdots=\Adv(g_n)$.  Then, $\Adv(f(g_1,\dots,g_n)) = \Adv(f)\Adv(g_1)$.
\end{thm}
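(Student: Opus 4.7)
The plan is to prove both inequalities separately using the primal--dual characterization from Theorem~\ref{thm:advDual}. For the upper bound $\Adv(f(g_1,\dots,g_n)) \le \Adv(f)\cdot \Adv(g_1)$, I would work with the dual SDP~\refeqn{advDual}. Take optimal dual solutions $\{X^{f}_i\}_{i\in[n]}$ for $f$ (with value $\Adv(f)$) and $\{X^{g_i}_j\}_{j\in[m_i]}$ for each $g_i$ (with value $\Adv(g_i)=\Adv(g_1)$). For the composed input position $(i,j)$, I would set
\[
X_{(i,j)}[x,y] \;=\; X^{f}_i\bigl[G(x),G(y)\bigr]\cdot X^{g_i}_j\bigl[x^{(i)},y^{(i)}\bigr],
\]
where $G(z)=(g_1(z^{(1)}),\ldots,g_n(z^{(n)}))$ and $x,y$ are composed inputs. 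Positive-semidefiniteness is immediate because each factor is the ``pullback'' of a PSD matrix along the coordinate maps $x\mapsto G(x)$ and $x\mapsto x^{(i)}$, so $X_{(i,j)}$ is a Hadamard product of two PSD matrices. Verifying the cross-pair constraint~\refeqn{advDualCondition} decomposes as
\[
\sum_{(i,j):x^{(i)}_j\ne y^{(i)}_j}\!\!\!X_{(i,j)}[x,y]
\;=\;\sum_{i}X^{f}_i\bigl[G(x),G(y)\bigr]\!\!\sum_{j:x^{(i)}_j\ne y^{(i)}_j}\!\!X^{g_i}_j\bigl[x^{(i)},y^{(i)}\bigr],
\]
and the inner sum is $1$ whenever $g_i(x^{(i)})\ne g_i(y^{(i)})$, so an outer application of the $f$-dual constraint closes the argument---provided one first symmetrizes each $X^{g_i}_j$ so that the inner sum vanishes on same-value pairs. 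The objective then satisfies $\max_z\sum_{(i,j)}X_{(i,j)}[z,z]\le \Adv(f)\cdot \Adv(g_1)$ by factoring the two suprema.

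For the lower bound $\Adv(f(g_1,\dots,g_n))\ge \Adv(f)\cdot \Adv(g_1)$, I would use the primal~\refeqn{advPrimal} and build an adversary matrix by an analogous tensor construction. Let $\Gamma_f$ and $\Gamma_{g_i}$ be optimal adversary matrices with principal eigenvectors $\delta_f$ and $\delta_{g_i}$, normalized so that $\|\Gamma_{g_i}\circ\Delta_j\|\le 1$. Define the composed matrix by
\[
\Gamma[x,y] \;=\; \Gamma_f\bigl[G(x),G(y)\bigr]\cdot \prod_{i=1}^{n} \Psi_i\bigl[x^{(i)},y^{(i)}\bigr],
\]
where $\Psi_i$ agrees with a suitable rescaling of $\Gamma_{g_i}$ on cross pairs and with $\delta_{g_i}\delta_{g_i}^{*}$ on same-value pairs (this extension preserves positivity and keeps the spectral properties intact). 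The candidate principal eigenvector is the tensor $\delta_{\mathrm{comp}}(x)=\delta_f(G(x))\prod_i \delta_{g_i}(x^{(i)})$. A direct computation with $\delta_{\mathrm{comp}}^{*}\Gamma\,\delta_{\mathrm{comp}}$ yields $\|\Gamma\|\ge \|\Gamma_f\|\cdot \|\Gamma_{g_1}\|=\Adv(f)\Adv(g_1)$. For every composed input index $(i,j)$, the Hadamard product $\Gamma\circ \Delta_{(i,j)}$ factors into the $i$th tensor slot only, so $\|\Gamma\circ\Delta_{(i,j)}\|\le \|\Gamma_{g_i}\circ\Delta_j\|\cdot\|\Gamma_f\|$ up to the normalization chosen, and dividing gives the claimed bound.

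The main technical obstacle in both directions is the treatment of pairs $(x^{(i)},y^{(i)})$ on which $g_i$ takes the same value. The constraints in~\refeqn{advPrimal} and~\refeqn{advDual} are silent on such pairs, while the composition naturally produces many of them. The uniform condition $\Adv(g_1)=\cdots=\Adv(g_n)$ is what makes the normalization consistent across the $n$ copies; with unequal values one would only get $\max_i \Adv(g_i)$ or a weighted analogue. Overcoming this obstacle requires a symmetrization step---replacing $X^{g_i}_j$ and $\Gamma_{g_i}$ by rank-one extensions in the same-value sector using the principal eigenvector $\delta_{g_i}$---so that the spectral calculations decouple cleanly across the $n$ components. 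Once this is set up, the objective value and the norm of $\Gamma\circ\Delta_{(i,j)}$ genuinely multiply, delivering the exact equality.
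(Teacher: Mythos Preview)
The paper does not actually prove this theorem: immediately after the statement it says ``The proof of this result is not complicated but rather bulky.  We leave it out, the interested reader may refer to the chapter notes for the corresponding references.'' The chapter notes attribute the two directions to H{\o}yer--Lee--\v{S}palek (primal) and Reichardt (dual, via span programs). So there is no in-paper proof to compare against; your outline is, in fact, the standard plan from those references, and you have correctly isolated the single genuine obstacle---the treatment of the same-$g_i$-value blocks. That said, both places where you wave your hands are exactly where the published proofs do real work, and as written your sketch does not close either gap.

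On the dual side, the sentence ``provided one first symmetrizes each $X^{g_i}_j$ so that the inner sum vanishes on same-value pairs'' hides the entire difficulty. Zeroing out the same-value blocks of a PSD matrix does not preserve PSD, and the obvious averaging $X_j\mapsto\frac12(X_j+UX_jU)$ with $U$ the sign-of-$g_i$ diagonal kills the \emph{cross} blocks instead. One route that works is to pass through span programs (compose span programs, then translate back via Theorem~\ref{thm:spanCanonical}); another is to use the $\gamma_2$/state-conversion formulation, where the constraint is imposed for all pairs from the start. Either way, this is not a one-line symmetrization.

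On the primal side, the asserted bound ``$\|\Gamma\circ\Delta_{(i,j)}\|\le \|\Gamma_{g_i}\circ\Delta_j\|\cdot\|\Gamma_f\|$ up to the normalization chosen'' is not correct as stated and does not follow from the tensor-slot intuition. In blocks $(a,b)$ with $a_i=b_i$, the slot-$i$ factor is $\Psi_i^{a_i,a_i}\circ\Delta_j=\|\Gamma_{g_i}\|\,\hat\beta_{a_i}\hat\beta_{a_i}^*\circ\Delta_j$, which has nothing to do with $\Gamma_{g_i}\circ\Delta_j$; its norm is governed by $\|\Gamma_{g_i}\|$, not by $1$. The actual argument splits $\Gamma\circ\Delta_{(i,j)}$ according to whether $a_i=b_i$ or $a_i\ne b_i$, uses $\|\Gamma_f\circ\Delta_i\|\le 1$ on the latter piece, and exploits the rank-one structure of the diagonal blocks (together with the balancedness $\|\delta_{g_i}|_{g_i^{-1}(0)}\|=\|\delta_{g_i}|_{g_i^{-1}(1)}\|$, which \emph{does} hold automatically for Boolean $g_i$) on the former. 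One also needs a block-norm lemma to recombine these pieces with the $\bigotimes_{k\ne i}\Psi_k^{a_k,b_k}$ factors; none of this ``factors into the $i$th tensor slot only'' in the naive sense, because $\Gamma_f$ couples all coordinates.

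In short: right plan, right diagnosis of the obstacle, but the two ``symmetrization'' steps you name are precisely the bulky parts the paper declined to write out, and your sketch does not yet supply them.
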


The proof of this result is not complicated but rather bulky.  We leave it out, the interested reader may refer to the chapter notes for the corresponding references.
The condition on $\Adv(g_i)$ being equal may be dropped, but then the definition of the adversary bound must be modified to include {\em weights} of the variables.

In particular, it is possible to define the $d$th iteration $f^{d}\colon \{0,1\}^{n^d}\to\{0,1\}$ of the function $f$ by $f^{1} = f$ and $f^{d+1} = f(f^{d},\dots,f^{d})$.  By \rf(thm:advCompose), $\Adv(f^{d}) = \Adv(f)^d$.  
Combining this with \rf(thm:adv), we get that
\begin{equation}
\label{eqn:advAsIterated}
\Adv(f) = \lim_{d\to\infty} \sqrt[d]{Q(f^d)}
\end{equation}
where, recall, $Q$ stands for the quantum query complexity.  This gives an alternative definition of the adversary bound purely in terms of the quantum query complexity.
Also, from \rf(thm:advCompose) and \rf(prp:advThresholdExact), we get that $Q(\thres_{k,\vars}^d) = \Theta\s[(k(\vars-k+1))^{d/2}]$.

Another interesting consequence of~\rf(thm:advCompose) is that for every total Boolean function $f$, the functions $f^d$ and $(\mathop{\mathrm{NOT}} f)^d$ have asymptotically equal quantum query complexities.  This is not true for the randomised case, the AND function on 2 arguments providing a counterexample.

\begin{exm}
\label{exm:AmbainisFunction}
Consider the function $f\colon \{0,1\}^4\to\{0,1\}$ defined as follows.  The value of $f(z_1,z_2,z_3,z_4)$ is 1 if and only if the sequence $z_1z_2z_3z_4$ is monotone.  That it, $f$ evaluates to 1 on the following 8 inputs
\[
0000, 0001, 0011, 0111, 1111, 1110, 1100, 1000,
\]
and to 0 everywhere else.

The deterministic query complexity of $f$ is 3, as can be seen from the following algorithm.  Query the first and the third bits.  If they are equal, query the second one, otherwise, query the fourth one.  This is enough to determine the value of the function, and it is not possible to do so with fewer queries (the block sensitivity of the function is 3).

The degree-2 polynomial 
\[
P = 1 - z_2 - z_3 + z_2z_3 + z_1z_2 + z_3z_4 - z_1z_4
\]
of degree 2 represents $f$ exactly.  By composing it with itself, we get a polynomial of degree $2^k$ that represents $f^{k}$ exactly.  Thus, it is not possible to obtain a lower bound better than $\Omega(2^k)$ on $Q(f^{k})$ using \rf(thm:approx).

The adversary matrix $\Gamma$ for the function $f$ looks like
\[
\bordermatrix {
 & 0010 & 0101 & 1011 & 0110 & 1101 & 1010 & 0100 & 1001 \cr
0000& a& c& d& b& d& c& a& b \cr
0001& b& a& c& d& b& d& c& a \cr
0011& a& b& a& c& d& b& d& c \cr
0111& c& a& b& a& c& d& b& d \cr
1111& d& c& a& b& a& c& d& b \cr
1110& b& d& c& a& b& a& c& d \cr
1100& d& b& d& c& a& b& a& c \cr
1000& c& d& b& d& c& a& b& a 
} . 
\]
This layout takes into account all the symmetries of the problem.  For each $j$, the matrix $\Gamma\circ\Delta_j$ consists of two blocks, each equal to the matrix
\begin{equation}
\label{eqn:AmbFunc1}
\begin{pmatrix}
d&  d& c&  b \\
a&  d& b&  c \\
c&  b& d&  a \\
b&  c& d&  d 
\end{pmatrix}
\end{equation}
up to a permutation of rows and columns.  If we allow only non-negative entries, the best choice is $a=3/4$, $b=1/2$ and $c=d=0$.  This gives $\|\Gamma\| = 2(a+b+c+d) = 5/2$, as each row contains exactly two appearances of each $a$, $b$, $c$ and $d$.  The norm of $\Gamma\circ\Delta_j$ can be estimated using the decomposition
\[
\begin{pmatrix}
0&  0& 0&  1/2 \\
3/4&  0& 1/2&  0 \\
0&  1/2& 0&  3/4 \\
1/2&  0& 0&  0 
\end{pmatrix}
=
\begin{pmatrix}
0&  0& 0&  1 \\
\sqrt{3}/2&  0& 1/2&  0 \\
0&  1/2& 0&  \sqrt{3}/2 \\
1&  0& 0&  0 
\end{pmatrix}\circ
\begin{pmatrix}
0&  0& 0&  1/2 \\
\sqrt{3}/2&  0& 1&  0 \\
0&  1& 0&  \sqrt{3}/2 \\
1/2&  0& 0&  0 
\end{pmatrix}
\]
of the matrix in~\rf(eqn:AmbFunc1).  \rf(lem:mathias) then implies that $\|\Gamma\circ\Delta_j\|\le 1$.  This gives $\pAdv(f)=5/2$.  This is the best possible value for the non-negative adversary.  We will show this using \rf(thm:advPositiveDual).  Recall that $j$ is called a {\em sensitive variable} for input $z$, if flipping the value of the $j$th value changes the value of the function, i.e., $f(z)\ne f(z')$ where $z'_i = z_i$ for all $i\ne j$, and $z'_j = 1-z_j$.  Define $\psi_j$ from \rf(thm:advPositiveDual) by
\[
\psi_j\elem[z] = \begin{cases}
1,& \mbox{$j$ is a sensitive variable for $z$;} \\
1/2,& \mbox{otherwise.}
\end{cases}
\]
It is easy to check that each $z\in\cD$ has exactly two sensitive variables, that implies the value $5/2$ for the objective value~\rf(eqn:advPositiveDual:objective).  A simple case analysis shows that~\rf(eqn:advPositiveDual:condition) is satisfied.

With negative entries allowed, we can take $a=0.5788$, $b=0.7065$, $c = 0.1834$ and $d = -0.2120$.  (The values obtained numerically.)  This gives $\Adv(f) = 2.5135$.  Thus, the family of functions $f^{k}$ simultaneously gives asymptotical separations between the polynomial degree, the non-negative adversary bound and the quantum query complexity.
\end{exm}

\subsection{Span Programs}
\label{sec:spanPrograms}
\mycommand{target}{\tau}
\mycommand{wsize}{\mathop{\mathrm{wsize}}\nolimits}
\mycommand{prostr}{\R^d}
In this section, we define a linear-algebraic model of computation having strong relation to the dual adversary SDP from \refsec{advDual}.  
A span program $\cP$ is specified by
\itemstart
\item a finite-dimensional inner product space $\prostr$;
\item a non-zero {\em target vector} $\target \in \prostr$;
\item a sequence of {\em input vectors} $(v_i)_{i\in \cI}$, where $\cI$ is a finite set of indices, and each $v_i$ is an element of $\prostr$. The set $\cI$ is split into the disjoint union: 
\(
\cI=\bigsqcup_{j\in[\vars],b\in\{0,1\}} \cI_{j,b}.
\)
\itemend

We say the input vectors in $\cI_{j,b}$ are {\em labelled} by (the tuple of) the $j$th input variable $x_j$ and its value $b$.  We often combine the input vectors into an $d\times \cI$-matrix that we denote by $V$.

For each input string $z\in\{0,1\}^\vars$, define the {\em available input vectors} as vectors $v_i$ with indices in $\cI(z)=\bigcup_{j\in[\vars]} \cI_{j,z_j}$.  Vectors $v_i$ with indices in $\cI\setminus \cI(z)$ are called {\em false input vectors}.  The program $\cP$ evaluates to 1 on input $z$ if $\target$ lies in the span of the available input vectors, and $\cP$ evaluates to 0 otherwise.  Thus, span programs define total Boolean functions.  It is also possible to define a span program for a partial Boolean function by ignoring the output on the inputs outside the domain.

Now we define a complexity measure of span programs.  Assume $\cP$ evaluates a function $f\colon \{0,1\}^\vars\supseteq \cD\to\{0,1\}$.  We consider the span program $(\cP, w)$ extended with {\em witnesses} proving that $\cP$ evaluates to the required value for every input in the domain:
\itemstart
\item
If $\cP$ evaluates to 1 on $x\in\cD$, a {\em positive witness} is a vector $w_x\in \R^\cI$ such that $\target = Vw_x$ and $w_x\elem[i]=0$ for all $i\notin \cI(x)$.  The {\em size} of the witness $w_x$ is defined as $\norm|w_x|^2$.
\item
If $\cP$ evaluates to 0 on $y\in\cD$, a {\em negative witness} is a vector $w'_y\in \prostr$ such that $\ip<\target, w'_y>=1$ and $w'_y\perp v_i$ for all $i\in \cI(y)$.  The {\em size} of the witness is defined as $\|V^* w'_y\|^2$.  This equals the sum of the squares of inner products of $w'_y$ with all the false input vectors.
\itemend
The {\em witness size} $\wsize(\cP, w)$ of the (extended) span program $(\cP, w)$ is defined as the maximal size of all its witnesses in $w$.  Usually, one is interested in the {\em minimal} size of a witness for $\cP$.  We will, however, have a {\em particular} witness in mind when designing a span program, thus, will include it in the definition of the span program.  This is analogous to the distinction between {\em optimal} and {\em feasible} solutions to an SDP like~\refeqn{advDual}.

\begin{rem}
\label{rem:SpanGeometricalMean}
In order to avoid unnecessary normalisation, we often calculate $\wsize(\cP,w)$ as $\sqrt{W_0W_1}$, where $W_0$ and $W_1$ are {\em negative} and {\em positive} witness sizes.  The witness size $W_b$, with $b\in\{0,1\}$, is defined as the maximum among all witness sizes for inputs in $f^{-1}(b)$.  This is justified because, if we replace the target vector by $\alpha\tau$, the positive witness size goes up by $\alpha^2$ while the negative witness size goes down by $\alpha^2$.  Thus, by choosing an appropriate value of $\alpha$, we can make them both equal to $\sqrt{W_0W_1}$.
\end{rem}

\begin{exm}
For the OR function on $\vars$ variables, we construct the following span program.  Take $d=1$, and $\tau = (1)$.  For each input variable $j\in[\vars]$, take the input vector $v_j = (1)$ that is available if the value of $z_j$ is 1.  If $x$ is a positive input, define $w_x$ by $w_x\elem[j]=1$ for some $j$ such that $x_j=1$, and $w_x\elem[i] = 0$ for all $i\ne j$.  In this case, we say that we {\em take} the available input vector $v_j$ with coefficient 1.  For the negative input $y$, let $w_y'=(1)$.  The positive witness size is $W_1 = 1$, and the negative witness size is $W_0 = \vars$.  By \rf(rem:SpanGeometricalMean), the witness size of the span program is $\sqrt{\vars}$ that agrees with the Grover search, \rf(prp:Grover).
\end{exm}

\mycommand{Hfree}{H_{\mathrm{free}}}
\paragraph{Free input vectors}  The following modification of span programs is often convenient.  Let $\Hfree$ be a subspace of $\prostr$.  It will be usually given as the span of a number of {\em free input vectors}.  We say that the span program evaluates to 1 iff $\target$ is in the span of the available input vectors and $\Hfree$.  A negative witness is defined as previously with the additional condition on $w'_y\perp \Hfree$.  A positive witness is a pair $(w_x, v_x)$ such that $\target = Vw_x + v_x$ and $v_x\in\Hfree$.  The size of the positive witness is still $\norm|w_x|^2$.

A span program $\cP$ with free input vectors can be converted into an ordinary span program $\cP'$ as follows.  Let $\Pi$ denote the projection onto the orthogonal complement of $\Hfree$.  Then, the target vector of $\cP'$ is $\Pi\target$, and the input vectors are given by $\Pi v_i$.  It is easy to see that a positive witness $w_x$ for $\cP$ is also a positive witness for $\cP'$ of the same size.  Also, if $w_y'$ is a negative witness, then $w_y'\perp \Hfree$ implies $\ip<w_y', v> = \ip<w_y', \Pi v>$ for any $v\in \prostr$.  Thus, $w_y'$ is also a negative witness for $\cP'$ of the same size.

\paragraph{Canonical Span Programs}
\mycommand{preimy}{f^{-1}(0)}
\mycommand{preimx}{f^{-1}(1)}
A span program $(\cP,w)$ is called {\em canonical} if it satisfies the following properties:
\itemstart
\item The vector space of $\cP$ has the set $\preimy$ as its orthonormal basis.  In the following, let $e_y$ denote the element of the standard basis corresponding to $y\in \preimy$;
\item The target vector is given by $\target = \sum_{y\in\preimy} e_y$;
\item For any negative input $y\in\preimy$, its witness is $e_y$.  In particular, $v_i\elem[y]=0$ for all $i\in \cI_{j,y_j}$.
\itemend

As the name suggests, any span program can be transformed into a canonical one.

\begin{prp}
\label{prp:toCanonical}
Let $(\cP, w)$ be a span program for a function $f\colon \{0,1\}^\vars\supseteq \cD\to\{0,1\}$.  Then, there exists a canonical span program $(\cP', w')$ that evaluates the same function and such that $\wsize(\cP',w') = \wsize(\cP, w)$.
\end{prp}

\pfstart
Consider the linear transformation $A\colon \prostr\to \R^{\preimy}$ given by $A = \sum_{y\in\preimy} e_y (w_y')^*$.  
Let $\cP'$ be the image of $\cP$ under this transformation.
Then, the target vector of $\cP'$ is given by
\[
A\tau = \sum_{y\in\preimy} e_y (w_y')^* \tau = \sum_{y\in\preimy} e_y
\] 
by the condition $\ip<w_y',\tau> = 1$.  The input vectors of $\cP'$ are given by $Av_i$.

It is easy to see that a positive witness $w_x$ for $\cP$ is also a positive witness for $\cP'$ of the same size.  Also, $\ip<e_y, Av> = \ip<w_y', v>$ for every $v\in \prostr$, hence, $e_y$ is a witness for a negative input $y$ in $\cP'$, and its size is the same as of $w'_y$ in $\cP$.
\pfend

It turns out that a canonical span program is essentially the same thing as a dual adversary SDP.

\begin{thm}
\label{thm:spanCanonical}
Let $f\colon \{0,1\}^\vars \supseteq \cD\to\{0,1\}$ be a function.  Then any feasible solution $(X_j)$ to the dual adversary SDP~\refeqn{advDual} can be transformed into a canonical span program $(\cP,w)$ evaluating $f$ such that the objective value~\refeqn{advDualObjective} of the program equals the witness size of $\cP$.  And vice versa, any canonical span program $(\cP,w)$ for $f$ can be transformed into a feasible solution to~\refeqn{advDual} with the objective value equal to the witness size of $(\cP,w)$.
\end{thm}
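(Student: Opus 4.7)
The plan is to give explicit constructions going in both directions and verify that the resulting sizes match. The basic dictionary is: a positive semi-definite matrix $X_j$ is the Gram matrix of a collection of vectors $\{a_{j,z}\}_{z\in\cD}$, and these vectors split naturally according to the value $z_j\in\{0,1\}$; this split matches the bipartition $\cI_{j,0}\sqcup\cI_{j,1}$ of the index set in a span program.

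For the direction SDP $\to$ span program, start with a feasible $(X_j)$ and factor each $X_j=A_jA_j^*$, writing $X_j\elem[z,z']=\sum_k A_j\elem[z,k]A_j\elem[z',k]$. Take the span program with vector space $\R^{f^{-1}(0)}$, target $\tau=\sum_{y}e_y$, and, for each $j$, each column index $k$, and each bit $a\in\{0,1\}$, an input vector $v_{j,k,a}\in\cI_{j,a}$ defined by $v_{j,k,a}\elem[y]=A_j\elem[y,k]$ if $y_j\neq a$ and $v_{j,k,a}\elem[y]=0$ if $y_j=a$. The canonicity condition $v_i\elem[y]=0$ for $i\in\cI_{j,y_j}$ is then automatic. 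For a positive input $x$, take the positive witness $w_x$ with coefficient $A_j\elem[x,k]$ on $v_{j,k,x_j}$: evaluating $Vw_x$ at a negative input $y$ gives $\sum_{j,k}A_j\elem[x,k]A_j\elem[y,k]\,\mathbb{1}[y_j\neq x_j]=\sum_{j:\,x_j\neq y_j}X_j\elem[x,y]=1=\tau\elem[y]$ by the SDP constraint, so $w_x$ is indeed a positive witness. A direct computation gives positive witness size $\|w_x\|^2=\sum_j X_j\elem[x,x]$ and negative witness size $\|V^*e_y\|^2=\sum_j X_j\elem[y,y]$, so the maximum witness size equals the SDP objective.

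For the direction span program $\to$ SDP, given a canonical $(\cP,w)$, I define, for each $z\in\cD$ and each $j\in[\vars]$, a vector $\psi_{j,z}\in\R^{\cI}$ that differs in shape for positive and negative inputs: if $z=x\in f^{-1}(1)$ set $\psi_{j,x}\elem[i]=w_x\elem[i]\cdot\mathbb{1}[i\in\cI_{j,x_j}]$, and if $z=y\in f^{-1}(0)$ set $\psi_{j,y}\elem[i]=v_i\elem[y]\cdot\mathbb{1}[i\in\cI_{j,1-y_j}]$. Set $X_j\elem[z,z']=\langle\psi_{j,z},\psi_{j,z'}\rangle$; then $X_j\succeq 0$ as a Gram matrix. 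When $x_j=y_j$ the supports of $\psi_{j,x}$ and $\psi_{j,y}$ lie in the disjoint pieces $\cI_{j,x_j}$ and $\cI_{j,1-x_j}$, so $X_j\elem[x,y]=0$. When $x_j\neq y_j$, $X_j\elem[x,y]=\sum_{i\in\cI_{j,x_j}}w_x\elem[i]v_i\elem[y]$, and summing over all $j$ with $x_j\neq y_j$ is the same as summing over all $j$, because for $j$ with $x_j=y_j$ the available vectors $i\in\cI_{j,x_j}=\cI_{j,y_j}$ satisfy $v_i\elem[y]=0$ by canonicity. This sum equals $\tau\elem[y]=1$ from the positive-witness identity $Vw_x=\tau$.

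The objective calculation then falls out: $X_j\elem[x,x]=\sum_{i\in\cI_{j,x_j}}w_x\elem[i]^2$, so $\sum_j X_j\elem[x,x]=\|w_x\|^2$, and similarly $\sum_j X_j\elem[y,y]=\sum_{i\notin\cI(y)}v_i\elem[y]^2=\|V^*e_y\|^2$; hence $\max_z\sum_j X_j\elem[z,z]=\wsize(\cP,w)$. The main obstacle, such as it is, lies in the second direction: one has to guess the right definition of $\psi_{j,z}$, which looks asymmetric between positive and negative inputs, and then see that the asymmetry is exactly what lets the canonical condition $v_i\elem[y]=0$ for $i\in\cI_{j,y_j}$ convert $Vw_x=\tau$ into the SDP equality $\sum_{j:\,x_j\neq y_j}X_j\elem[x,y]=1$.
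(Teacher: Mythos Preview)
Your proof is correct and follows essentially the same approach as the paper's. Your factorisation $X_j=A_jA_j^*$ with rows $A_j\elem[z,\cdot]$ is exactly the paper's choice of vectors $\psi_{j,z}$ with $X_j\elem[z,z']=\ip<\psi_{j,z},\psi_{j,z'}>$, and your input vectors $v_{j,k,a}$ and positive witnesses coincide with the paper's; for the reverse direction you spell out explicitly the inverse construction that the paper summarises as ``clearly, this operation is invertible,'' but the definitions of $\psi_{j,x}$ and $\psi_{j,y}$ and the use of canonicity to turn $Vw_x=\tau$ into the SDP constraint are the same.
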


\pfstart
Assume we are given a feasible solution $(X_j)$ for~\refeqn{advDual}.  
As $X_j$ are positive semi-definite, we can find an integer $d$ and vectors $\psi_{j,z}\in \R^d$, with $j\in[\vars]$ and $z\in\cD$, such that $X_j\elem[x,y] = \ip<\psi_{j,x},\psi_{j,y}>$ for all $x,y\in\cD$.

Let us construct the span program.  Its vector space, target vector and negative witnesses are already determined by the canonicity requirement.  It remains to define the input vectors and the positive witnesses.  Let us start with the input vectors.  Define $\cI = [\vars]\otimes \{0,1\}\otimes [d]$, i.e., for each $j\in[\vars]$ and $b\in\{0,1\}$, we define $d$ input vectors $(v_{j,b,i})$ by
\[
v_{j,b,i}\elem[y] =
\begin{cases}
0,& y_j = b;\\
\psi_{j,y}\elem[i],& y_j \ne b.
\end{cases}
\]
For a positive input $x=(x_j)\in\preimx$, we define its witness by $w_x = \bigoplus_{j,b} \delta_{b,x_j} \psi_{j,x}$, where $\delta$ stands for the Kronecker delta.  

\begin{figure}[tbh] 
\[
\myincludegraphics{10cm}{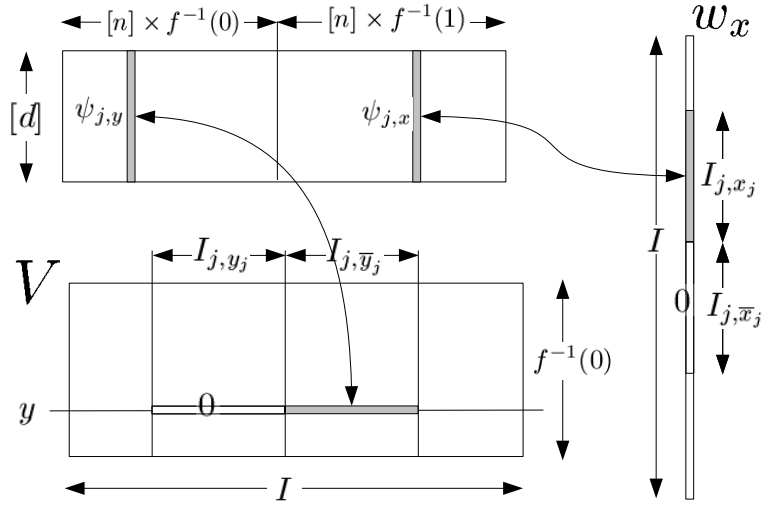}
\]
\caption{A correspondence between a canonical span program and a dual adversary SDP}
\label{fig:SpanPrgToSDP}
\end{figure}

In other words, we rearrange the entries of $\psi_{j,z}$ by placing the entries corresponding to the negative inputs ``horizontally'' into the matrix $V$, and the entries corresponding to the positive inputs ``vertically'' into positive witnesses, see \reffig{SpanPrgToSDP}.  Clearly, this operation is invertible.  Moreover, we have
\[
(Vw_x)\elem[y] = \sum_{j: x_j\ne y_j} \ip<\psi_{j,x}, \psi_{j,y}> = \sum_{j: x_j\ne y_j} X_j\elem[x,y].
\]
Thus, $Vw_x = \target$ for all $x\in\preimx$ if and only if~\refeqn{advDualCondition} holds.  Also, for all $x\in\preimx$ and $y\in\preimy$, their witness sizes are, respectively,
\[
\norm|w_x|^2 = \sum_{j\in[\vars]} \norm|\psi_{j,x}|^2,
\qquad\text{and}\qquad
\norm|V^*w'_y|^2 = \norm|V^*e_y|^2 = \sum_{j\in[\vars]} \norm|\psi_{j,y}|^2.
\]
Thus, the witness size of $(\cP, w)$ equals the objective value~\refeqn{advDualObjective} of the SDP.
\pfend

\begin{exm}
Let us construct a span program for the $\thres_{2,3}$ function.  In~\rf(eqn:maj3X1andX2) and~\rf(eqn:maj3X3), we constructed matrices satisfying the conditions of~\rf(eqn:advDual).  We use the construction from \rf(thm:spanCanonical), ignoring the inputs 000 and 111 as they only copy the inputs 001 and 110.  We get the target vector $\tau$ and 9 input vectors:
\[
\tau = 
\begin{pmatrix}
1 \\ 1 \\ 1
\end{pmatrix},\quad
\sfig{
\begin{pmatrix}
0 \\ \frac 1{\sqrt{2}} \\ \frac1{\sqrt{2}}
\end{pmatrix},
\begin{pmatrix}
0 \\ \frac 1{\sqrt{2}} \\ 0
\end{pmatrix},
\begin{pmatrix}
0 \\ 0 \\ \frac1{\sqrt{2}}
\end{pmatrix}},\quad
\sfig{
\begin{pmatrix}
\frac 1{\sqrt{2}} \\ 0 \\  \frac1{\sqrt{2}}
\end{pmatrix},
\begin{pmatrix}
\frac 1{\sqrt{2}} \\ 0 \\ 0
\end{pmatrix},
\begin{pmatrix}
0 \\ 0 \\  \frac1{\sqrt{2}}
\end{pmatrix}},\quad
\sfig{
\begin{pmatrix}
\frac 1{\sqrt{2}} \\ \frac1{\sqrt{2}} \\ 0
\end{pmatrix},
\begin{pmatrix}
\frac 1{\sqrt{2}} \\ 0 \\ 0
\end{pmatrix},
\begin{pmatrix}
0 \\ \frac1{\sqrt{2}} \\ 0
\end{pmatrix}
}.
\]
The first free input vectors are available for the value 1 of the variable $z_1$, the second three for the value 1 of $z_2$, and the last three for the value 1 of $z_3$.  The witness for the positive input $110$ is $w = \frac{1}{\sqrt{2}}(1,1,0,1,1,0,0,0,0)^*$.  The witness for the negative input $001$ is $w' = (0,0,1)^*$.
\end{exm}

\section{Algorithms}
\label{sec:advAlgorithms}
\mycommand{adjoint}{*}
We finish this chapter on lower bounds with a section on upper bounds.  The aim of this section is to show that the adversary bound is tight.  We do so by showing that the dual of the adversary bound from \rf(sec:advDual) can be transformed into a quantum query algorithm with at most a constant loss in the complexity.  Additionally, we show that a similar transformation can be performed to any span program from \rf(sec:spanPrograms).

We start by reviewing properties of a composition of two reflections from \rf(sec:szegedy).  Recall the notations from that section: $A$ and $B$ are matrices with the same number of rows and each having orthonormal columns, $\Pi_A = A A^\adjoint$, $\Pi_B = B B^*$, $R_A = 2\Pi_A-I$, $R_B = 2\Pi_B-I$, and $U = R_B R_A$.

\begin{lem}[Effective Spectral Gap Lemma] \label{lem:effective}
In the above notations, let $P_\delta$, where $\delta\ge0$, be the orthogonal projection on the span of the eigenvectors of $U$ with eigenvalues $\ee^{\ii\theta}$ satisfying $|\theta|\le \delta$.  Let $u$ be a vector in the kernel of $\Pi_A$. Then,
\[ \|P_\delta \Pi_B u \|\le \frac{\delta}{2}\|u\|. \]
\end{lem}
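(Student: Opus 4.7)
The plan is to use the spectral decomposition of $U$ provided by Lemma \ref{lem:szegedy}. First, I would decompose the ambient space orthogonally as $E_1\oplus E_{-1}\oplus \bigoplus_j S_j$, where $E_{\pm 1}$ are the $(\pm 1)$-eigenspaces of $U$ and each $S_j$ is a two-dimensional invariant subspace on which $U$ acts as a rotation by $2\theta_j$. Because $\im(A)$ is respected by this decomposition (its intersections with the three pieces are $\im(A)\cap\im(B)$, $\im(A)\cap\im(B)^\perp$, and a one-dimensional line $\R a_j\subset S_j$ respectively), the subspace $\ker\Pi_A=\im(A)^\perp$ decomposes accordingly. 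Writing $a_j^\perp$ for the unit vector in $S_j$ orthogonal to $a_j$, any $u\in\ker\Pi_A$ therefore has the form
\[
u=u_0+u_{-1}+\sum_j\alpha_j\, a_j^\perp,\qquad u_0\in\im(A)^\perp\cap\im(B)^\perp,\quad u_{-1}\in\im(A)^\perp\cap\im(B),
\]
with $\|u\|^2=\|u_0\|^2+\|u_{-1}\|^2+\sum_j|\alpha_j|^2$.

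Next I would compute $\Pi_B u$ term by term: $\Pi_B u_0=0$, $\Pi_B u_{-1}=u_{-1}$, and, letting $b_j$ be the unit vector spanning $S_j\cap\im(B)$, one has $\langle b_j,a_j^\perp\rangle=\sin\theta_j$, so $\Pi_B(\alpha_j a_j^\perp)=\alpha_j\sin\theta_j\, b_j$. Thus
\[
\Pi_B u=u_{-1}+\sum_j\alpha_j\sin\theta_j\, b_j.
\]
The key structural observation is that $b_j$, as a unit vector inside $S_j$, lies entirely in the span of the two $U$-eigenvectors of $S_j$, whose eigenphases are $\pm 2\theta_j$. Consequently, for the case $\delta<\pi$, applying $P_\delta$ kills $u_{-1}$ (whose eigenphase is $\pi$) and either preserves or annihilates each $b_j$ depending on whether $2\theta_j\le\delta$.

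Finally, I would assemble the bound. For $\delta<\pi$, orthogonality of the subspaces $S_j$ gives
\[
\|P_\delta\Pi_B u\|^2=\sum_{j:\,2\theta_j\le\delta}|\alpha_j|^2\sin^2\theta_j\le(\delta/2)^2\sum_j|\alpha_j|^2\le(\delta/2)^2\|u\|^2,
\]
using $\sin\theta_j\le\theta_j\le\delta/2$ on the summation range. The case $\delta\ge\pi$ is trivial since $\|P_\delta\Pi_B u\|\le\|u\|\le(\delta/2)\|u\|$. There is no real obstacle here beyond keeping bookkeeping straight; the only point that needs care is verifying that the spectral decomposition of Lemma \ref{lem:szegedy} really produces orthogonal summands compatible with $\ker\Pi_A$, so that $\|u\|^2$ splits as claimed and the contributions to $\|P_\delta\Pi_B u\|^2$ are mutually orthogonal.
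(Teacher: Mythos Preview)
Your proof is correct and follows essentially the same approach as the paper: both invoke the Spectral Lemma to decompose the space into the $(\pm 1)$-eigenspaces and the two-dimensional blocks $S_j$, observe that the component of $u$ in $S_j$ is orthogonal to $a_j$, compute $\|\Pi_B u_j\|=|\alpha_j|\sin\theta_j$, and sum over the blocks with $2\theta_j\le\delta$. The only cosmetic differences are that the paper writes $u_j=\Pi_j u$ rather than $\alpha_j a_j^\perp$, and handles the trivial range as $\delta\ge 2$ rather than your $\delta\ge\pi$ (both thresholds work, since $\pi>2$).
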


\begin{wrapfigure}{L}{0pt}
\xygraph{!{0;<2.4pc,-1.0pc>:<1.0pc,2.4pc>::}
(-[u(4)]{}[dr(.3)]*{b_j},
-[u(3)r(2)]{}[dr(.2)]*{a_j},
:[u(1.6)l(2.4)]{}
	([d(.3)]*{u_j},
	-@{.}[r(2.4)]{}="a"),
:"a"{}[u(.2)r(.5)]*{\Pi_B u_j}
)[u(.5)r(.2)]*-{\theta_j}
}
\caption{Effective spectral gap lemma in two dimensions.}
\label{fig:effective}
\end{wrapfigure}
\noindent \em Proof.\em\;\; This is an easy consequence of the Spectral \reflem{szegedy}.  
We use the notations from the lemma in the proof.
Let us assume $\delta<2$, otherwise the statement is trivial.  Consider the decomposition of the space into the eigenspaces of $U$.  On the $1$-eigenspace of $U$, the vectors from $\ker (\Pi_A) = \im (A)^\perp$ are vanished by $\Pi_B$.  On the intersection of the $(-1)$-eigenspace of $U$ and $\ker(\Pi_A)$, $\Pi_B$ is the identity and $P_\delta$ is zero.  

It remains to consider the behaviour of $P_\delta\Pi_B$ on the direct sum of $S_j$s.  We do it for each $S_j$ independently.  Let $\Pi_j$ be the orthogonal projector on $S_j$, and $u_j = \Pi_j u$.  Also, denote $a_j = \im(A)\cap S_j$ and $b_j = \im(B)\cap S_j$.  Restricted to $S_j$, the operator $\Pi_B$ projects onto $b_j$.

If $2\theta_j>\delta$, then $P_\delta$ is zero on $S_j$, and $P_\delta\Pi_B u_j = 0$.  Otherwise, $P_\delta$ is the identity on $S_j$.  Consider \reffig{effective}.  We have 
\[
\|\Pi_Bu_j\| = \|u_j\|\sin\theta_j\le \|u_j\|\theta_j.
\]
Hence,
$$
\|P_\delta\Pi_Bu\|^2 \le \sum_{j:\theta_j\le\delta/2} \|\Pi_Bu_j\|^2 \le \sum_{j:\theta_j\le\delta/2} \theta_j^2\|u_j\|^2 \le (\delta/2)^2\|u\|^2.\eqno\qedsymbol
$$

Due to technical reasons, we exclude the case of very small values of the adversary bound.

\begin{clm}
\label{clm:advNonConstant}
For any non-constant function $f$, $\Adv(f)\ge1/2$.
\end{clm}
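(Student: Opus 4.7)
The plan is to exhibit a simple explicit adversary matrix $\Gamma$ whose ratio $\|\Gamma\|/\max_j\|\Gamma\circ\Delta_j\|$ is already at least $1$, which of course implies the weaker bound $1/2$.

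Since $f$ is non-constant, I can pick two inputs $x,y\in\cD$ with $f(x)\neq f(y)$. Define $\Gamma$ to be the $\cD\times\cD$ matrix that is $1$ on the pair $(x,y)$ and $(y,x)$ and $0$ everywhere else. This is a non-zero real symmetric matrix, and the only non-zero entries sit on a pair with differing $f$-values, so $\Gamma$ is a valid adversary matrix in the sense of \rf(defn:advMatrix).

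Next I compute both norms. The matrix $\Gamma$ is (a permutation of) a direct sum of the $2\times 2$ block $\bigl(\begin{smallmatrix}0&1\\1&0\end{smallmatrix}\bigr)$ with a zero matrix, so its eigenvalues are $\pm 1$ and $0$, giving $\|\Gamma\|=1$. Because $f(x)\neq f(y)$ forces $x\neq y$, there exists at least one index $j\in[\vars]$ with $x_j\neq y_j$; for such a $j$ the Hadamard product $\Gamma\circ\Delta_j$ coincides with $\Gamma$, so $\|\Gamma\circ\Delta_j\|=1$ and thus $\max_{j}\|\Gamma\circ\Delta_j\|=1$. Plugging into \rf(defn:adv) gives $\Adv(f)\ge \|\Gamma\|/\max_j\|\Gamma\circ\Delta_j\|=1\ge 1/2$.

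There is essentially no obstacle: the argument is a one-line construction. The only minor point to verify is that the single nonzero pair $(x,y)$ of the adversary matrix differs in at least one coordinate, which is immediate from $x\neq y$, guaranteeing the denominator in the adversary ratio is non-zero.
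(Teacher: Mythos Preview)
Your proof is correct, and in fact proves the stronger bound $\Adv(f)\ge 1$. The paper takes a different route: it simply invokes \rf(lem:gamma), which says $\|\Gamma\circ\Delta_j\|\le 2\|\Gamma\|$ for \emph{every} matrix $\Gamma$, so any non-zero adversary matrix already witnesses a ratio of at least $1/2$. Your approach is more elementary in that it avoids the general norm inequality of \rf(lem:gamma) by exhibiting a concrete rank-one adversary matrix and computing both norms directly; the paper's approach is more uniform in that it shows the $1/2$ bound holds term-by-term for every choice of $\Gamma$, not just the optimal one. Either is perfectly adequate here, since the claim is only used later to rule out pathologically small witness sizes.
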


\pfstart
This follows immediately from~\rf(eqn:adv) and \rf(lem:gamma).
\pfend

\begin{thm}
\label{thm:spanAlgorithm}
For any function $f\colon \{0,1\}^\vars\supseteq \cD\to\{0,1\}$ and any span program $(\cP, w)$ evaluating $f$, there exists a quantum algorithm that calculates $f$ in $O(\wsize(\cP, w))$ queries.
\end{thm}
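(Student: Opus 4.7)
The plan is to build a quantum algorithm by running phase detection on the composition of two reflections derived from $(\cP, w)$, using the Effective Spectral Gap Lemma (Lemma \ref{lem:effective}) to control the behaviour on negative inputs.

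First I invoke Proposition \ref{prp:toCanonical} to replace $(\cP, w)$ by a canonical span program of the same witness size, and then use Remark \ref{rem:SpanGeometricalMean} to rescale $\tau$ so that the positive and negative witness sizes are both equal to $W := \wsize(\cP, w)$. I work in $H = \C e_0 \oplus \C^\cI$ and define the linear map $A\colon H \to \prostr$ by $A(\alpha, (\beta_i)) = \alpha\tau - \sum_i \beta_i v_i$. For each input $z$, let $\Lambda(z) = \spn(e_0, \{e_i : i \in \cI(z)\})$. The two reflections are $R_{\ker A} = 2\Pi_{\ker A} - I$, which is input-independent and implementable from the span-program description, and $R_{\Lambda(z)} = 2\Pi_{\Lambda(z)} - I$, which can be implemented with $O(1)$ queries per call by flipping the sign of each $e_i$ with $i=(j,b)\in \cI$ based on a single oracle query comparing $z_j$ to $b$. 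The step of the walk is $U(z) := R_{\Lambda(z)} R_{\ker A}$.

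The spectral picture splits according to the input type. For a positive input $x$, the positive witness yields $(1, w_x) \in \ker A$ with $w_x$ supported on $\cI(x)$, so $(1,w_x) \in \ker A \cap \Lambda(x)$ is a $+1$-eigenvector of $U(x)$, and the overlap of $e_0$ with the $+1$-eigenspace satisfies $\|\Pi_{+1} e_0\|^2 \ge 1/(1+\|w_x\|^2) \ge 1/(1+W)$. For a negative input $y$, the vector $u := A^{*} w'_y$ lies in $\im A^{*} = (\ker A)^\perp = \ker \Pi_{\ker A}$, and $\Pi_{\Lambda(y)} u = e_0$ because $\ip<\tau, w'_y> = 1$ while $\ip<v_i, w'_y> = 0$ for all $i \in \cI(y)$. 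Applying Lemma \ref{lem:effective} with $\Pi_A \leftrightarrow \Pi_{\ker A}$ and $\Pi_B \leftrightarrow \Pi_{\Lambda(y)}$ to $U(y) = R_{\Lambda(y)} R_{\ker A}$ then gives $\|P_\delta e_0\|^2 \le \delta^2(1+W)/4$ for every $\delta > 0$.

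The algorithm prepares the state $e_0$, runs phase detection (Theorem \ref{thm:detection}) on $U(z)$ with threshold $\delta = \Theta(1/W)$ and suitably small precision, and reads off the answer from the phase-detection outcome. Each run of phase detection uses $O(1/\delta)=O(W)$ controlled applications of $U(z)$, each of which costs $O(1)$ oracle queries, so the total is $O(W) = O(\wsize(\cP, w))$ queries. The main obstacle is that a single phase-detection run only produces an $\Omega(1/W)$ vs $O(1/W)$ gap in the probability of the zero-phase outcome; the way to convert this into a bounded-error decision without inflating the query count is to work at the amplitude level rather than the probability level, comparing the $\Omega(1/\sqrt{1+W})$ amplitude on the $+1$-eigenvector in the positive case against the $O(\delta\sqrt{1+W})$ bound from the Effective Spectral Gap in the negative case, and to use Theorem \ref{thm:noisyGrover} together with the error-accumulation control of Lemma \ref{lem:circuitRobust} so that the amplification overhead only multiplies the per-run cost by a constant, giving the claimed $O(\wsize(\cP, w))$ bound overall.
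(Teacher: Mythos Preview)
Your framework is the paper's: reflect about the kernel of the input-vector matrix and about the span of available-coordinate basis vectors, run phase detection on their product from $e_0$, and use Lemma~\ref{lem:effective} for the negative case. The gap is a single missing parameter: the scaling of the $e_0$ column.

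In your map $A$, the $e_0$-column is $\tau$ itself. The paper instead sets $v_0=\tau/\alpha$ with $\alpha=C_1\sqrt{W_1}$ (after your normalisation $W_0=W_1=W$, this is $\alpha=\Theta(\sqrt{W})$). That scaling is the entire point. With it, the positive-case $1$-eigenvector is $u=\alpha e_0-w_x$, whose overlap with $e_0$ is $\alpha/\sqrt{\alpha^2+\|w_x\|^2}\ge C_1/\sqrt{C_1^2+1}=\Omega(1)$; in the negative case one takes $u=\alpha\tV^*w'_y$, so $\Pi_y u=e_0$ and $\|u\|\le 1+C_1W$, whence Lemma~\ref{lem:effective} with $\delta=1/(C_2W)$ gives $\|P_\delta e_0\|\le(2+C_1)/(2C_2)$, again a tunable constant. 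A \emph{single} phase-detection run of cost $O(W)$ then separates the two cases with constant probability gap.

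Without the scaling, your positive-case overlap is only $1/\sqrt{1+W}$, and your proposed amplification cannot recover $O(W)$. Phase detection at $\delta=\Theta(1/W)$ costs $\Theta(W)$ applications of $U$ per call; to distinguish a ``marked'' probability $\ge 1/(1+W)$ from one $\le c/W$ you must invoke the checker $\Theta(\sqrt{W})$ times, and Theorem~\ref{thm:noisyGrover} gives exactly $O(T_S+T_C)/\sqrt{\eps}=O(W)\cdot\Theta(\sqrt{W})=O(W^{3/2})$. The assertion that ``the amplification overhead only multiplies the per-run cost by a constant'' is wrong: two amplitudes that are both $\Theta(1/\sqrt{W})$ but differ by a constant factor still take $\Theta(\sqrt{W})$ rounds of amplification or estimation to separate. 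The fix is not amplification after the fact; it is to put the $\sqrt{W}$ factor into the definition of $A$ (equivalently, into $v_0$) so that one run already suffices.
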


\pfstart
If the function is constant, the statement is trivial.  Otherwise, by \rf(prp:toCanonical), \rf(thm:spanCanonical) and~\rf(clm:advNonConstant), $\wsize(\cP, w)\ge 1/2$.

Assume, as usually, that the span program $\cP$ has the vector space $\R^d$, the input vectors $\{v_i\}_{i\in \cI}$, and the target vector $\target$.  Let the positive and the negative witness sizes be $W_1$ and $W_0$, respectively, so that $W=\wsize(\cP, w)$ satisfies $W=\sqrt{W_0W_1}$.  We add an additional ``input vector'' $v_0 = \target/\alpha$, where $\alpha=C_1\sqrt{W_1}$ for some constant $C_1>0$ to be specified later.  In the implementation, the vector $v_0$ is treated as always available.  Let $\cI_0 = \cI\cup\{0\}$ where $0$ accounts for $v_0$.

\begin{algorithm}
\caption{Span Program Evaluation}
\label{alg:spanProgram}
\algbegin
\state {{\bf function} SpanProgram({\bf  quprocedure} InputOracle) {\bf :}}
\tab
  \state {{\bf attach} $\cI_0$-qudit $\qreg X$}
	\state {{\bf return not} PhaseDetection(StepOfWalk, $1/(C_2 W)$, $\qreg{X}$) }
\state {\ }
\state {{\bf quprocedure} StepOfWalk {\bf:}}
\tab
	\state {ReflectionAbout $\Lambda$ \label{line:spanProgram:Lambda} }
	\state {$\phase \qgets$ IsFalseInputVector($\qreg X$) \label{line:spanProgram:Pi} }
\untab
\state {\ }
\state {{\bf qufuncion} IsFalseInputVector({\bf registers} $\qreg X$, $\qreg O$) {\bf:}}
\tab
  \state {{\bf attach} $\vars$-qudit $\qreg{j}$, qubits $\qreg{b}$, $\qreg{v}$}
  \state {{\bf conditioned on} $\qreg{X}\ne 0$ \bf:}
  \tab
	  \state {$(\qreg j, \qreg b)\qgets$ the variable index and the value of input vector with index $\qreg X$}
	  \state {$\qreg v\qgets$ InputOracle($\qreg j$) }
	  \state {{\bf conditioned on} $\qreg{v}\ne \qreg{b}$ {\bf:} $\qreg{O}\qgets 1$ }
	\untab
\untab
\untab
\algend
\end{algorithm}

\mycommand{tV}{\widetilde V}
The description of the algorithm is given in \rf(alg:spanProgram). 
Let $\tV$ be the $d\times \cI_0$ matrix containing the input vectors of $\cP$ and $v_0$ as columns.  The algorithm runs in the vector space $\R^{\cI_0}$ with the standard basis $\{e_i\}$.  Let $\Lambda$ be the projection onto the kernel of $\tV$. For any input $z\in\cD$, let $\Pi_z=\sum e_ie_i^*$ where $i$ ranges over the indices of the available input vectors for the input $z$ (this includes $e_0$).  Denote by $R_\Pi=2\Pi_z-I$ and $R_\Lambda=2\Lambda-I$ the reflections about the images of $\Pi_z$ and $\Lambda$, respectively.   Lines~\ref{line:spanProgram:Lambda} and~\ref{line:spanProgram:Pi} of the algorithm perform reflections $R_\Lambda$ and $R_\Pi$, respectively.  The first reflection uses no query, thus, we do not describe it.

The step of the quantum walk is $U=R_\Pi R_\Lambda$.  The algorithm executes the phase detection subroutine (\rf(thm:detection)) on $U$ with the initial state $e_0$ and $\delta=1/(C_2 W)$ and accepts iff the phase 0 is detected.  Here $C_2$ is another constant to be specified.  Phase detection requires $O(W)$ (controlled) applications of $U$.  In each of them, $R_\Lambda$ requires no access to the input oracle, whereas $R_\Pi$ can be implemented in two oracle queries.  This proves the complexity estimate of the algorithm.

Let us prove the correctness.  
First assume the input $x$ is positive: $f(x)=1$.  In this case, we show that there is an 1-eigenvector $u$ of $U$ having a large overlap (inner product) with $e_0$.  Recall that $w_x\in\R^{\cI}$ is the witness for $x$, and denote 
\(
u = \alpha e_0 - w_x.
\)
Firstly,
\[
\tV u = \alpha v_0 - Vw_x = \target-\target=0,
\]
where $V$ is as in \rf(sec:spanPrograms).
Hence, $R_\Lambda u = u$.  
Next, $R_\Pi u = u$, because $w_x$ only uses the available input vectors. 
Thus, $u$ is a 1-eigenvector of $U$.  By definition,
\(
\norm|w_x|^2 \le W_1 = \alpha^2/C_1,
\)
hence $\ip<u/\|u\|, e_0>$ is at least a constant that can be tuned by adjusting the value of $C_1$.

Now assume the input $y$ is negative: $f(y)=0$. Let $P_\delta$ be the projection onto the span of the eigenvectors of $U$ with eigenvalues $\ee^{\ii\theta}$ satisfying $|\theta|\le\delta$. We have to prove that $\|P_\delta e_0\|$ is small.  The idea is to apply \reflem{effective}.  In this case, $w'_y$ is the witness for $y$. Define 
\[
u=\alpha \tV^*w'_y = e_0 + \alpha V^* w'_y.
\]
Since $u\in \im(\tV^*)$, we have $\Lambda u=0$. Also, $w'_y$ is orthogonal to all the available input vectors, hence, $\Pi_yu=e_0$. By \reflem{effective},
\[ 
\|P_\delta e_0 \|\le \frac{\delta}{2}\|u\| \le \frac{\sqrt{1+\alpha^2 W_0}}{2C_2 W} \le \frac{1+C_1 \sqrt{W_1W_0}}{2C_2W}\le \frac{2+C_1}{2C_2}.
\]
(In the last step, we used that $W\ge 1/2$.)  By adjusting the value of $C_2$, one may tune the acceptance probability of the algorithm.
\pfend

One may convert a feasible solution to the dual adversary SDP for a Boolean function into a quantum query algorithm by first translating it into a span program using \rf(thm:spanCanonical).  This does not work for non-Boolean functions, but for them, there exists a similar procedure.  We start with a small lemma.

\begin{lem}
\label{lem:muAndNu}
Let $q>0$ be an integer.  Then, there exist vectors $\mu_i, \nu_i\in \R^{q}$, where $i\in[q]$, such that $\ip<\mu_i, \nu_j> = 1-\delta_{ij}$ and $\norm|\nu_i|, \norm|\mu_j|\le \sqrt{2}$ for all $i,j\in[q]$. (Here $\delta_{ij}$ stands for the Kronecker delta.)
\end{lem}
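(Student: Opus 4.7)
The plan is to construct the vectors explicitly using a simple ansatz. Let $\mathbf{1}_q \in \mathbb{R}^q$ denote the all-ones vector and $e_i$ the $i$th standard basis vector of $\mathbb{R}^q$. I will take
\[
\mu_i = \alpha\, \mathbf{1}_q + e_i, \qquad \nu_j = \beta\, \mathbf{1}_q - e_j,
\]
with two scalar parameters $\alpha,\beta$ to be chosen. The reason for this particular form is that the $-e_j$ in $\nu_j$ produces the desired ``diagonal suppression'' $-\delta_{ij}$ when paired against the $e_i$ in $\mu_i$, while the bulk terms proportional to $\mathbf{1}_q$ can be tuned to produce the constant $1$ in the off-diagonal entries.

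Expanding, one computes directly
\[
\langle \mu_i, \nu_j \rangle = q\alpha\beta - \alpha + \beta - \delta_{ij},
\]
so the required identity $\langle \mu_i,\nu_j\rangle = 1-\delta_{ij}$ reduces to the single scalar equation
\[
q\alpha\beta - \alpha + \beta = 1. \tag{$*$}
\]
The norms are
\[
\|\mu_i\|^2 = q\alpha^2 + 2\alpha + 1, \qquad \|\nu_j\|^2 = q\beta^2 - 2\beta + 1.
\]
I will choose $\alpha$ so that $\|\mu_i\|^2 = 2$ exactly, which gives $q\alpha^2 + 2\alpha - 1 = 0$, i.e.\ $\alpha = (\sqrt{q+1}-1)/q$; note $q\alpha + 1 = \sqrt{q+1}$. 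Then $(*)$ forces $\beta = (1+\alpha)/(q\alpha + 1) = (1+\alpha)/\sqrt{q+1}$.

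It remains to verify that $\|\nu_j\|^2 \le 2$, i.e.\ $q\beta^2 - 2\beta - 1 \le 0$, which is equivalent to $\beta \le (1+\sqrt{q+1})/q$ (since $\beta>0$). Substituting the explicit value of $\beta$, this reduces after clearing denominators to the trivial inequality $q - 1 \le q + 1$. Thus $(\mu_i), (\nu_j)$ satisfy all the required properties, and the main (indeed only) obstacle is simply keeping track of the elementary algebra with $\alpha$ and $\beta$; no deeper structural argument is needed.
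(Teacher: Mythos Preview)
Your proof is correct and takes essentially the same approach as the paper: both constructions use vectors of the form (scalar)$\cdot e_i$ + (scalar)$\cdot \mathbf{1}_q$, differing only in the specific choice of the two scalar parameters. Your derivation is in fact somewhat more transparent, since you explain how the parameters are found rather than simply presenting them and asserting that a straightforward calculation verifies the claim.
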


\pfstart
Define $\alpha = \sqrt{\frac12 - \frac{\sqrt{q-1}}q}$,
\[
\mu_i = -\alpha \sqrt{\frac{2(q-1)}{q}}\; e_i + \sqrt{\frac{2(1-\alpha^2)}{q}} \sum_{j\ne i} e_j,
\quad\mbox{and}\quad
\nu_i = -\sqrt{\frac{2(q-1)(1-\alpha^2)}{q}}\; e_i + \alpha \sqrt{\frac{2}{q}} \sum_{j\ne i} e_j,
\]
where $e_i$ is the standard basis of $\R^q$.  A straightforward calculation reveals that these vectors satisfy the requirements.
\pfend

Now, we are finally able to prove the second half of \rf(thm:adv) for functions with Boolean output.

\begin{thm}
\label{thm:advAlgorithm}
Let $f\colon [q]^\vars\supseteq \cD\to\{0,1\}$ be a function, and let $(X_j)$ be a feasible solution to the dual adversary SDP~\rf(eqn:advDual) with the value $W$ of the objective function~\rf(eqn:advDualObjective).  Then, there exists a quantum algorithm that calculates $f$ in $O(W)$ queries.
\end{thm}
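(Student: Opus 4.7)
The plan is to emulate the route from SDP to quantum algorithm that was used for Boolean functions --- \rf(thm:spanCanonical) followed by \rf(thm:spanAlgorithm) --- but to carry it out directly, without passing through the span-program formalism (which in \rf(sec:spanPrograms) is defined only for $\{0,1\}$-valued inputs). The feasible solution $(X_j)$ will be turned into two reflections whose composition $U=R_B R_A$ has a large overlap of a $1$-eigenvector with a distinguished target state on positive inputs, and a large phase gap around $0$ at that state on negative inputs; the algorithm will then be phase detection (\rf(thm:detection)) on $U$. The $q$-ary alphabet will be handled by the vectors $\mu_a,\nu_a\in\R^q$ from \rf(lem:muAndNu), whose inner product $\ip<\mu_a,\nu_b>=1-\delta_{a,b}$ is exactly the ``$a\neq b$'' indicator appearing in constraint \rf(eqn:advDualCondition).

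First I would factor each positive semi-definite $X_j$ as a Gram matrix $X_j\elem[x,y]=\ip<\psi_{j,x},\psi_{j,y}>$ with $\psi_{j,z}\in\R^{d_j}$. The working space will be the direct sum of $\R$ (a ``target'' coordinate) with $\bigoplus_{j\in[\vars],\,k\in[d_j]}\R^q$ (one $q$-dimensional block per variable/decomposition-index pair), and I would work with two matrices $A$, $B$ of orthonormal columns: the columns of $A$ encode the analogue of the input vectors of a canonical span program --- namely a normalised copy of the target plus, for every $(j,a,k)$, a vector whose $(j,k)$-th block carries $\psi_{j,\cdot}[k]\otimes\mu_a$; the columns of $B$ are the standard basis vectors of the blocks, but the ``available'' subspace for input $z$ is obtained by projecting each block onto the $\nu_{z_j}$-direction, so that ``availability'' at $(j,k)$ is aligned with the actual value $z_j$ of the queried variable.

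For a positive input $x$, the combination
\[\alpha\,e_0\;-\;\sum_{j,k}\psi_{j,x}[k]\,\bigl(e_{(j,k)}\otimes\nu_{x_j}\bigr)\]
will be a $1$-eigenvector of $U$: membership in the kernel of the $A$-side projection follows from the SDP constraint \rf(eqn:advDualCondition) combined with $\ip<\mu_a,\nu_b>=1-\delta_{a,b}$, and the non-target part has squared norm $O\bigl(\sum_j X_j\elem[x,x]\bigr)=O(W)$ thanks to $\|\nu_a\|\le\sqrt 2$. For a negative input $y$, I would exhibit a dual vector $\tilde u_y$ lying in $\im(A^*)$ whose ``target coordinate'' is $1$ and whose squared norm is $O(W)$, built from $\psi_{j,y}[k]\otimes\mu_{y_j}$; feeding $\tilde u_y$ into the effective spectral gap lemma \rf(lem:effective) bounds the projection of the initial state onto the eigenvectors of $U$ with phase $|\theta|\le 1/(CW)$ by a small constant. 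Phase detection on $U$ at precision $1/(CW)$ then distinguishes the two cases in $O(W)$ applications of $U$, and each application of $U$ costs $O(1)$ queries, because $B$ is diagonal in the computational basis and the availability check is a sign flip controlled on a single oracle readout of $z_j$.

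The main obstacle will be keeping the $q$-dependence outside the two witness bounds. A naive adaptation of the Boolean construction in \rf(thm:spanCanonical) would put ``$\psi_{j,y}[k]$ if $a\ne y_j$, else $0$'' into every input-vector coordinate; then summing the negative witness norm over the $q-1$ ``wrong'' values of $a$ would cost a multiplicative factor of $q-1$, and the resulting witness size would be $W\sqrt{q-1}$ rather than $W$. Replacing the indicator $[a\ne y_j]$ by $\ip<\mu_a,\nu_{y_j}>$, with the constant-norm bounds $\|\mu_a\|,\|\nu_a\|\le\sqrt 2$ from \rf(lem:muAndNu), spreads each contribution across $q$ orthogonal tensor directions of bounded total norm, and it is precisely this that collapses the $\sqrt q$-loss. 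Once both witness sizes are seen to be $O(W)$, the rest of the argument is a line-by-line rerun of the proof of \rf(thm:spanAlgorithm).
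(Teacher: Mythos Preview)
Your proposal is essentially the paper's proof: factor each $X_j$ as a Gram matrix, work in $\C\oplus(\C^\vars\otimes\C^d\otimes\C^q)$, use the $\mu_a,\nu_a$ vectors of \rf(lem:muAndNu) to encode the $q$-ary alphabet without a $\sqrt q$-loss, exhibit for positive $x$ the $1$-eigenvector $\alpha e_0-\sum_j e_j\otimes\psi_{j,x}\otimes\nu_{x_j}$, apply \rf(lem:effective) for negative $y$ to the vector $e_0+\alpha\sum_j e_j\otimes\psi_{j,y}\otimes\mu_{y_j}$, and run phase detection. Your witness vectors and the key $\mu/\nu$ insight coincide with the paper's exactly.

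The only difference is packaging: you try to name explicit orthonormal-column matrices $A,B$, and that description is muddled (what ``$\psi_{j,\cdot}[k]$'' denotes in an input-independent column is unclear, and your ``available'' subspace is the $\nu_{z_j}$-line rather than the paper's $\mu_{z_j}^\perp$, though both choices work for the two witness computations). The paper sidesteps this by defining the projections directly: $\Lambda$ onto the orthogonal complement of $\mathrm{span}\{v_y:y\in f^{-1}(0)\}$, and $\Pi_z$ onto the orthogonal complement of $\bigoplus_j e_j\otimes\C^d\otimes\mathrm{span}(\mu_{z_j})$, then reruns the argument of \rf(thm:spanAlgorithm). Once you drop the attempt to force an explicit $A,B$ factorisation and simply write down these two projections, your proof is line-for-line the paper's.
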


\pfstart
Again, we may assume that $W\ge 1/2$ due to \rf(clm:advNonConstant).  
Let, as in the proof of \rf(thm:spanCanonical), $\psi_{j,z}\in \R^d$ be such that $X_j\elem[x,y] = \ip<\psi_{j,x},\psi_{j,y}>$ for all $j\in [\vars]$ and $x,y\in\cD$.  The algorithm runs in the vector space $\C\oplus (\C^\vars\otimes \C^d\otimes\C^q)$.  Let $e_0$ be the normalised vector in the first $\C$, and $\{e_j\}_{j\in[\vars]}$ be the orthonormal basis of $\C^\vars$.

The algorithm is the same as in the proof of \rf(thm:spanAlgorithm) but with $\Lambda$ and $\Pi_z$ redefined.
Let $\alpha = C_1\sqrt{W}$, and define the projector $\Lambda$ as onto the orthogonal complement of the span of the vectors
\[
v_y = e_0 + \alpha \sum_{j\in[\vars]} e_j\otimes \psi_{j,z}\otimes \mu_{y_j}
\]
when $y$ ranges over $\preimy$.  The input-dependent projector $\Pi_z$ is onto the orthogonal complement of the image of 
\[
\sum_{j\in[\vars]} (e_je_j^*)\otimes I_d\otimes (\mu_{z_j}\mu_{z_j}^*),
\]
where $\mu$s are as in \rf(lem:muAndNu), and $I_d$ is the identity in $\C^d$.  The reflection about the image of $\Pi_z$ can be implemented in two oracle queries similarly to \rf(alg:spanProgram).  The complexity estimate of the algorithm is like in \rf(thm:spanAlgorithm).

Let us prove the correctness of the algorithm.  If $f(x)=1$, define a 1-eigenvector $u$ of $U$ by
\[
u = \alpha e_0 - \sum_{j\in[\vars]} e_j\otimes \psi_{j,x} \otimes \nu_{x_j}.
\]
Indeed, for each $y\in\preimy$, we have
\[
\ip<u, v_y> = \alpha - \alpha \sum_{j\in[\vars]} \ip<\psi_{j,x}, \psi_{j,y}>\ip<\mu_{x_j}, \nu_{y_j}> = \alpha - \alpha \sum_{j: x_j\ne y_j} X_j\elem[x,y] = 0.
\]
Hence, $R_\Lambda$ does not change $u$.  Also, $R_\Pi$ also does not change $u$ because $\nu_{x_j}$ is orthogonal to $\mu_{x_j}$ for all $j$.  Finally, $u/\|u\|$ has large overlap with $e_0$ that can be tuned using $C_1$.

In the negative case of $f(y)=0$, apply \rf(lem:effective) with $u = v_y$.  Clearly, $u\in\ker(\Lambda)$, and $\Pi_y u = e_0$.  Thus, we get
\[ 
\|P_\delta e_0 \|\le \frac{\Theta}{2}\|u\| \le \frac{\sqrt{1+4\alpha^2 W}}{2C_2 W} \le \frac{2+2C_1}{C_2}. \qedhere
\]
\pfend

\section{Chapter Notes}
The development of lower bounds on quantum algorithms started almost simultaneously with the development of quantum algorithms.  The first lower bound is for the OR function and it is due to Bennett, Bernstein, Brassard and Vazirani~\cite{bennett:strengths}.  Interestingly, this lower bound preceded Grover's discovery of his search algorithm.

The polynomial method was first developed for classical lower bounds.  Minsky and Papert used polynomial representation to gave a very precise characterisation of perceptrons, a special case of neural networks~\cite{minsky:perceptrons}.  The connection between polynomial degree and query complexity is first observed by Nisan and Szegedy~\cite{nisan:pol}.  For quantum query algorithm, this techniques were applied by Beals \etal~\cite{beals:pol}.  In particular, they prove \rf(lem:amplitudes) for Boolean functions (see also~\cite{fortnow:pol}).  The non-Boolean variant is due to Aaronson~\cite{aaronson:collisionLowerOriginal}.  \rf(lem:symmetrize) is due to Minsky and Papert~\cite{minsky:perceptrons}.  

Aaronson was the first to prove a non-trivial lower bound $\Omega(\vars^{1/5})$ on the quantum complexity of the collision problem~\cite{aaronson:collisionLowerOriginal} where $\vars$ is the number of elements.  This was soon improved to the optimal $\Omega(\vars^{1/3})$ by Shi~\cite{shi:collisionLowerOriginal}.  Ref.~\cite{shi:collisionLower} features a merged variant of both papers.  
The results by Aaronson and Shi had a small catch: In order for them to apply, the size $q$ of the alphabet should have been large enough.  For the collision problem, it was $q\ge 3n/2$.  The proof we give is due to Kutin~\cite{kutin:collisionLower} and it does not require this assumption.  Due to the nature of reduction from element distinctness to the collision problem (\rf(cor:distLower)), for the lower bound to apply, the size of the alphabet for element distinctness should be $\Omega(\vars^2)$.  Ambainis~\cite{ambainis:collisionLower} has a general argument showing that, because of the symmetry of the problem, even the case of $q=\vars$ has complexity $\Omega(\vars^{2/3})$.

As already noted, the adversary bound was first defined by Ambainis~\cite{ambainis:adv} in 2000.  Afterwards, a variety of variants of the bound were defined.  In 2004, \v Spalek and Szegedy~\cite{spalek:advEquivalent} mention 7 variants of the adversary bound, including the positive-weighted adversary and even one based on Kolmogorov complexity~\cite{laplante:advKolmogorov}, and prove they all are equivalent.  The spectral formulation of the bound that we use, \refdefn{advPositive}, is due to Barnum \etal~\cite{barnum:advSpectral}.  The original formulation by Ambainis~\cite{ambainis:polVsQCC} and Zhang~\cite{zhang:advPower} uses weight schemes as outlined in \refsec{advPositive}.  Our proof of \refprp{advBasic2Spectral} is based on~\cite{hoyer:lowerSurvey}.

However, due to the limitations we describe in \refsec{advLimitations}, it was known none of them is tight.  In 2006, H{\o}yer \etal~\cite{hoyer:advNegative} proved that the spectral formulation of the bound still yields a lower bound if one allows negative entries: \refthm{adv}.  \rf(prp:advMultipleRows) is from~\cite{spalek:kSumLower}.  Our proof of the first half of \rf(thm:adv) in \rf(sec:advProof) is a fusion of the proofs in~\cite{hoyer:lowerSurvey} and~\cite{hoyer:advNegative} and is significantly shorter than the latter.

The limitations of the positive-weighted adversary from \rf(sec:advLimitations) were known before that.  Our formulation in \rf(thm:advPositiveDual) is similar to the minimax definition in \cite{laplante:advKolmogorov}.  Two different variants of the certificate barrier, \rf(prp:advLimitations) (a) and (b), are due to Szegedy~\cite{szegedy:triangle} and Zhang~\cite{zhang:advPower}, respectively.  The property testing barrier is a folklore result, see, e.g., \cite{hoyer:advNegative}, but a proof has not appeared in print up to our knowledge.

The research reflected in the second part of the chapter started by the algorithms for the NAND tree evaluation.  The first algorithm by Farhi, Goldstone and Gutmann~\cite{farhi:nandTree} was based on continuous-time quantum walks, the discrete-time quantum walk algorithm is due to Ambainis \etal~\cite{ambainis:formulaeEvaluation}.  
While trying to generalise the latter algorithm to formulae with arbitrary gates, Reichardt and \v Spalek developed a quantum algorithm for span program evaluation~\cite{reichardt:formulae}.

Span programs are a linear-algebraic model of computation first introduced by Karchmer and Wigderson in~\cite{karchmer:spanPrograms}.  Initially, they were used over finite fields and applications included, in particular, a log-space analogue of the complexity class inclusion $\mathsf{NP}\subseteq \oplus \mathsf{P}$, and secret sharing schemes.  Recently, span program have been applied in the area of non-interactive zero-knowledge proofs~\cite{gentry:nizkFromSpan, lipmaa:nizkFromSpan}.

With a replaced definition of complexity, span programs can be evaluated on a quantum computer.  Later, Reichardt~\cite{reichardt:spanPrograms} noticed that the complexity of a span program can be expressed as an SDP that is dual to the adversary SDP, thus proving Theorems~\ref{thm:advDual} and~\ref{thm:spanCanonical}.

The precise evaluation of the adversary bound of the threshold function is due to Reichardt~\cite{reichardt:spanPrograms}.  His proof uses span programs and is completely different from ours in \rf(prp:advThresholdExact).  The composition \rf(thm:advCompose) has a long history as well.  First version of the bound (in a slightly weaker form) was obtained for the positive-weighted adversary.  Ambainis proved one direction~\cite{ambainis:polVsQCC}, and the opposite direction was shown by Laplante, Lee and Szegedy~\cite{laplante:classicalFormulaSize}.  When introducing the general adversary bound, H\o yer, Lee and \v Spalek generalised the result of Ambainis, and the picture was finished by Reichardt~\cite{reichardt:spanPrograms, reichardt:advTight}.

The function in \rf(exm:AmbainisFunction) was first constructed by Ambainis (in a slightly different form) in order to prove a separation between quantum query complexity of a function and its degree~\cite{ambainis:polVsQCC}.  The form that we use is due Laplante, Lee and Szegedy~\cite{laplante:classicalFormulaSize}.  The lower bound on the general adversary bound for this function is due to H\o yer, Lee and \v Spalek~\cite{hoyer:advNegative}.  The upper bound did not appear in print up to our knowledge.

There are many more developments concerning the adversary bound we do not cover.  The adversary bound has been generalised to the problems of quantum state preparation~\cite{ambainis:symmetryAssisted} and quantum state conversion~\cite{lee:stateConversion}, and it is tight in both cases.  One may see that the adversary bound does not work well if one is interested in very small success probability.  Based on a paper by Ambainis~\cite{ambainis:newLowerBoundMethod}, \v Spalek introduced the multiplicative adversary method that works well in this regime~\cite{spalek:multiplicative}.  Combining this with the ideas for function composition, Lee and Roland proved strong direct product theorem for quantum query complexity~\cite{lee:strongDirect}.  Magnin and Roland demonstrated a relation between the multiplicative adversary and the polynomial method~\cite{magnin:relationLowerBounds}.

As mentioned earlier, most of the results in \rf(sec:advAlgorithms) go back to Reichardt.  In particular, \rf(thm:advAlgorithm) is from~\cite{reichardt:advTight}.  However, we give a simpler proof based on the Effective Spectral Gap Lemma from~\cite{lee:stateConversion}.  This reference gives an alternative proof of the lemma, not based on \rf(lem:szegedy).  The proof of \rf(thm:spanAlgorithm) is from~\cite{belovs:learningClaws} and the proof of \rf(thm:advAlgorithm) is based on~\cite{lee:stateConversion}.

Let us make some concluding remarks.
Firstly, in our opinion, the name ``adversary bound'' is inappropriate for the lower bound technique we defined in \rf(sec:adv).  In classical settings, by an adversary, one usually understands an {\em active} entity that communicates with the computational device by simulating the input data.  Responses of the adversary depend on the behaviour of the computational device, and its goal is to give the computational device the worst possible data string.  This does not apply for the quantum adversary bound.  One of its main advantages is that it is {\em static}, i.e., it does not depend on the actions of the computational device.  This greatly simplifies the reasoning about this bound.

Another shocking question arises in this concern:  How is it possible that such a simple lower bound is actually tight?
After all, nothing like this is known for randomised query complexity.  There is no concise optimisation problem that gives even a polynomial approximation for the randomised complexity.  No result like \rf(thm:advCompose) is known.  
The only non-trivial iterated function whose randomised query complexity has been evaluated is the NAND function on 2 arguments.  It is known~\cite{snir:nand, saks:nand} that $R(\mathrm{NAND}^d) = \Theta\sB[{\sA[\frac{1+\sqrt{33}}4]^d}] \approx \Theta(1.686^d)$.  (As $D(\mathrm{NAND}^d) = 2^d$, this function is conjectured to provide the largest possible separation between the randomised and deterministic query complexities for total Boolean functions.)
But even the value of the randomised query complexity of $\thres_{2,3}^d$ is still under consideration.  It is only known that it lies between $\Omega(2.55^d)$~\cite{leonardos:maj3} and $O(2.649^d)$~\cite{magniez:maj3}.

The reason that this is possible, in our opinion, lies in reversibility of quantum computation.  Because of this, every quantum query algorithm can be rewritten in the form~\rf(eqn:sequence) with the measurement only at the end of the algorithm.  Even more, it is not hard to transform any algorithm of the form~\rf(eqn:sequence) into an algorithm satisfying $U_1=\cdots = U_T$.  In this case, it becomes more clear that a tight lower bound can be obtained by estimating how much progress the algorithm can obtain by a single query that is performed {\em without any information on the input string}.  A randomised query algorithm, on the other hand, obtains little progress at first, but, as it learns the values of some variables, it can make more deliberate queries and obtain faster progress.
Thus, estimating its progress per query fails to provide a good lower bound.

\part{Query Algorithms}

\chapter*{Overview of Part II}
As we saw in \rf(chp:lower), it is possible to characterise the quantum query complexity of a function by a relatively simple semi-definite program (SDP): the adversary bound given by~\rf(eqn:advPrimal) and~\rf(eqn:advDual).
Unfortunately, for many functions, even this SDP is too hard to solve.  Prior to this work, no explicit negative-weight adversary lower bound was known except for the composed functions described in \rf(sec:advExamples).  Upper bounds also did not go much beyond what we covered in \rf(chp:lower).

In this part of the thesis, we give a number of applications of the adversary bound SDP for explicit non-iterated functions.  As mentioned in the introduction, there are some advantages in constructing a feasible solution to the dual adversary SDP comparison to the development of an explicit quantum query algorithm.  We are not interested in sometimes cumbersome details of the internal organisation of the algorithm: we are only interested in the feasibility of the solution and its objective value.  The construction of a feasible solution to the dual adversary SDP requires other techniques, and may open a completely different perspective on the problem being solved.  Also, the dual adversary SDP is tight, so, in principle, we lose nothing with this transition.  Finally, a solution close to optimal may help in constructing a {\em primal} adversary SDP via semi-definite duality, thus, giving a lower bound.

Using the adversary SDP, we manage to solve (or improve on) a number of long-standing open problems.  In \rf(chp:cert), we reduce the quantum query complexity of the triangle detection from $O(n^{13/10})$ that we saw in \rf(thm:walkTriangle), and prove that the quantum query complexity of the $k$-sum problem is $\Omega(\vars^{k/(k+1)})$.  In \rf(chp:kdist), we reduce the quantum query complexity of the $k$-distinctness problem from $O(\vars^{k/(k+1)})$, that was obtained in \rf(cor:walkKDist), to $o(\vars^{3/4})$.

In \rf(chp:cert), we give a unified approach based on the new notion of certificate structures.  
In this settings, we only consider possible certificates of the function, and ignore everything else about the function.
We obtain results of a similar flavour as in \rf(chp:lower):  We formulate a primal and the corresponding dual optimisation problems for the quantum query complexity of a certificate structure.  We apply them for some problems, like $k$-sum and triangle-sum.
\rf(chp:kdist) features results that use similar techniques as in \rf(chp:cert) but do not fall into the framework of certificate structures.

\chapter{Certificate Structures}
\label{chp:cert}
\mycommand{cert}{{\cal C}}
\remycommand{marked}{M}

\mycommand{certM}{{A_\marked}}
\mycommand{vertices}{n}
\mycommand{prob}{p}

Determining the amount of computational resources required to solve a computational problem is one of the main problems in theoretical computer science.  At the current stage of knowledge, however, this task seems far out of reach for many problems.  In this case, it is possible to analyse the complexity of the problem under some simplifying assumptions.
We have already seen one example: the simplification made by the query model.
But, even the query complexity is too hard to evaluate for some functions.


In this chapter, we make a further simplifying assumption and consider the framework of {\em certificate structures}.  This notion is partly motivated by the quantum walk algorithms from \rf(chp:walk).  Recall that the amplitude amplification and the MNRS quantum walk frameworks include a black-box checking subroutine (\rf(defn:quantumCheck)) that, given the information gathered during the walk, signals if this information is enough to accept the input string.  In many cases, the precise content of the gathered information is not relevant, what matters are the possible locations of these pieces of information.  
We formalise this by the notion of a certificate structure that can be considered as a more detailed version of the certificate complexity from \rf(sec:query).

Based on this notion, we develop the computational model of a {\em learning graph}.  It only relies on the certificate structure of the function being evaluated.  
We characterise the complexity of the learning graph by an optimisation problem that is significantly simpler than the general adversary SDP but still captures many aspects of the function.  We are able to get a tight solution for the certificate structures corresponding to the $k$-sum and triangle detection problems.

We also show that learning graphs are tight:  A learning graph can be transformed into a quantum query algorithm, and, for any certificate structure, there exists a function that requires that many queries.

This chapter is based on the following papers:

\begin{itemize}
\item[\cite{belovs:learning}]
A.~Belovs.
\newblock Span programs for functions with constant-sized 1-certificates.
\newblock In {\em Proc. of 44th ACM STOC}, pages 77--84, 2012, 1105.4024.

\item[\cite{spalek:kSumLower}]
A.~Belovs and R.~{\v Spalek}.
\newblock Adversary lower bound for the {$k$-sum} problem.
\newblock In {\em Proc. of 4th ACM ITCS}, pages 323--328, 2012, 1206.6528.

\item[\cite{belovs:onThePower}]
A.~Belovs and A.~Rosmanis.
\newblock On the power of non-adaptive learning graphs.
\newblock In {\em Proc. of 28th IEEE Complexity}, pages 44--55, 2013, 1210.3279.
\end{itemize}

In \rf(sec:certDefinition), we define the notion of certificate structure, and give some examples based on the functions we saw in the previous chapters.  
In \rf(sec:mainResults), we formulate the main result of this chapter and give some consequences of it.
In \rf(sec:learning), we define the notion of a learning graph, and prove that it can be converted into a quantum query algorithm for any function having the specified certificate structure.  In \rf(sec:procedureDriven), we give a method for constructing learning graphs for functions with a lot of symmetry.  In \rf(sec:learningApplications), we describe applications for the triangle and the associativity testing problems.  
In \rf(sec:learningDuality), we derive the dual formulation of the learning graph complexity, and give examples for the certificate structures of the $k$-sum and the triangle problems.  
 Finally, in \rf(sec:lower), we show that, for any certificate structure, the dual formulation can be transformed into a lower bound on the quantum query complexity of some function having this certificate structure.

\section{Definition and Examples}
\label{sec:certDefinition}
Recall the definition of a 1-certificate from \rf(sec:queryRelated):  If $f\colon [q]^\vars\supseteq \cD\to \{0,1\}$ is a function, and $x\in\preimx$ is a positive input, then a subset $S\subseteq[\vars]$ is called a 1-certificate iff any $z\in\cD$, that agrees with $x$ on $S$, satisfies $f(z)=1$.
We define the following subset
\begin{equation}
\label{eqn:certOfFunction}
M(f,x) = \{ S\subseteq [\vars]\mid \text{$S$ is a 1-certificate for input $x$ of function $f$} \}.
\end{equation}
As we will see, in many cases, the subsets $M(f,x)$ turn out more important than the function $f$ itself.  Thus, we abstract away from the function by the following definition.

\begin{defn}[Certificate Structure]
A {\em certificate structure} $\cert$ on $\vars$ variables is a collection of non-empty subsets of $2^{[\vars]}$ with each subset closed under taking supersets.  We say that a function $f\colon [q]^\vars\supseteq \cD\to\{0,1\}$ {\em has} certificate structure $\cert$ if, for every $x\in f^{-1}(1)$, one can find $\marked\in\cert$ such that
\begin{equation}
\label{eqn:certStructureCondition}
\forall S\in\marked\; \forall z\in\cD\colon z_S = x_S \Longrightarrow f(z)=1.
\end{equation}
For a fixed $M\in\cert$, the elements of $M$ are usually called {\em marked}.
\end{defn}

\begin{exm}
\label{exm:certTrivial}
The {\em trivial certificate structure} on $\vars$ variables is defined as $\{\{[\vars]\}\}$, i.e., it consists of one subset of $2^{[\vars]}$ made out solely of the set $[\vars]$ itself.  We call it trivial because any function on $\vars$ variables has this certificate structure.
\end{exm}

We usually assume that the elements of a certificate structure $\cert$ form an antichain under the set-theoretical inclusion relation, i.e., there exist no $M, M'\in\cert$ such that $M\subset M'$ and $M\ne M'$.  
This is motivated by the following observation.  Assume $M\subset M'$ are two elements of $\cert$.  If $M'$ satisfies~\rf(eqn:certStructureCondition) for some $x$ and $f$, then $M$ also satisfies it.  Hence, $M'$ can be removed from $\cert$ without affecting the set of functions having $\cert$ as their certificate structure.  In a similar spirit, we say that a certificate structure $\cert$ is {\em more precise} than a certificate structure $\cert'$ if, for all $M'\in \cert'$, there exists $M\in\cert$ such that $M\subseteq M'$.  For example, the trivial certificate structure is the least precise one.

\begin{defn}[Certificate Structure of a Function]
\label{defn:certOfFunction}
Assume $f\colon [q]^\vars\supseteq \cD\to\{0,1\}$ is a function.  {\em The certificate structure of the function $f$} is defined as the set of inclusion-wise minimal elements of $\{M(f,x)\mid x\in\preimx\}$.
\end{defn}

It is not hard to see that the certificate structure of \rf(defn:certOfFunction) is the most precise certificate structure of $f$.

In this chapter, we are interested in quantum algorithms performing equally well for any function with a fixed certificate structure.  
More formally, consider the following definition:
\begin{defn}[Quantum Complexity]
\label{defn:certQuantumComplexity}
The {\em quantum query complexity} of a certificate structure $\cC$ is defined as the maximum quantum query complexity over all functions having $\cC$ as their certificate structure.
\end{defn}

Many existing quantum algorithms, implicitly or explicitly, work in these settings.  
The most celebrated examples are demonstrated by the Grover search algorithm (\rf(prp:Grover)), and the quantum walk on the Johnson graph.  For instance, \rf(thm:walkKDistinctness) can be reformulated as a quantum query algorithm evaluating any function with the following certificate structure:

\begin{exm}
\label{exm:certKSubset}
The {\em $k$-subset certificate structure} $\cert$ on $\vars$ elements with $k=O(1)$ is defined as follows.  It has ${\vars\choose k}$ elements, and, for each subset $S\subseteq[\vars]$ of size $k$, there exists unique $M\in\cert$ such that $T\in M$ if and only if $S\subseteq T\subseteq [\vars]$.

In particular, the 1-subset certificate structure corresponds to the OR function, and the 2-subset certificate structure---to the element distinctness problem.
We call the 1-subset certificate structure the {\em OR certificate structure}.
\end{exm}

The construction from the previous definition can be generalised in the following way:
\mycommand{cS}{{\cal S}}
\begin{defn}
Assume that $\cS$ is a family of subsets of $[\vars]$.  The certificate structure {\em generated} by $\cS$ consists of the elements $\{T\mid S\subseteq T\subseteq[\vars]\}$ where $S$ runs through $\cS$.
\end{defn}
Thus, the trivial certificate structure is generated by $\{[\vars]\}$, and the $k$-subset certificate structure is generated by the set of $k$-subsets of $[\vars]$.  Not all certificate structures can be constructed in this way, the following being an example:

\begin{exm}
\label{exm:certORMultiple}
The OR certificate structure from \rf(exm:certKSubset) can be generalised to the case when it is promised that each positive input contains at least $k$ ones.  Let $k$ be an integer between 1 and $\vars$.  The certificate structure $\cert$ has ${\vars\choose k}$ elements, and, for each subset $S\subseteq[\vars]$ of size $k$, there exists $M\in\cert$ such that $T\in M$ if and only if $S\cap T$ is non-empty.
\end{exm}

Inspired by \rf(thm:walkTriangle), we define the following certificate structure.

\begin{exm}
\label{exm:certTriangle}
The {\em triangle certificate structure} $\cert$ on $\vertices$ vertices is a certificate structure on $\vars={\vertices\choose 2}$ variables defined as follows.  Assume that the variables are labelled as $z_{ij}$ where $1\le i<j\le \vertices$.  Then, the certificate structure is generated by the set of triples
$\sfigA{ \{ab,bc,ac \} \mid 1\le a<b<c\le \vertices}$.
\end{exm}

The functions in \rf(defn:setEquality) give rise to the following certificate structures:

\begin{exm}
\label{exm:collision}
Each of the following certificate structures is defined on $\vars=2n$ input variables.  In the {\em collision certificate structure}, there is unique $M$ for each decomposition $[\vars]=\{a_1,b_1\}\sqcup\{a_2,b_2\}\sqcup\cdots\sqcup\{a_n,b_n\}$, and $S\in M$ if and only if $S\supseteq\{a_i,b_i\}$ for some $i\in[n]$.  The {\em set equality certificate structure} contains only those $M$ from the collision certificate structure that correspond to decompositions with $1\le a_i \le n$ and $n+1\le b_i\le \vars$ for all $i$.  

The {\em hidden shift certificate structure} $\cert$ has $n$ elements.  For each $d\in[n]$, there exists $M\in\cert$ such that $S\in M$ if and only if $S$ contains elements $i$ and $n+1+((i+d)\bmod n)$ for some $i\in[n]$.
\end{exm}

\rf(fig:certExamples) shows examples of certificate structures from Examples~\ref{exm:certKSubset}, \ref{exm:collision}, and~\ref{exm:certORMultiple}.

\def \hasse#1 {
\xygraph{!~*{\cir<6pt>{}} !{0;<1.8pc,0pc>:} !~:{@[red]@{-}}%
*{\emptyset}{}="0" [l(1.5)u]([]{}="1"(-"0") [r]{}="2"(-"0") [r]{}="3"(-"0") [r]{}="4"(-"0"))%
[ul]([r(.5)]{}="12" [r(.8)] {}="13" [r(.8)] {}="23" [r(.8)] {}="14" [r(.8)] {}="24" [r(.8)] {}="34")%
[ur]([]{}="123" [r] {}="124" [r] {}="134" [r] {}="234")%
[r(1.5)u] {}="1234"%
"12"(-"1"{},-"2"{})%
"13"(-"1"{},-"3"{})%
"14"(-"1"{},-"4"{})%
"23"(-"2"{},-"3"{})%
"24"(-"2"{},-"4"{})%
"34"(-"3"{},-"4"{})%
"123"(-"12"{},-"13"{},-"23"{},-"1234"{})%
"124"(-"12"{},-"14"{},-"24"{},-"1234"{})%
"134"(-"13"{},-"14"{},-"34"{},-"1234"{})%
"234"(-"23"{},-"24"{},-"34"{},-"1234"{})%
#1%
"1"*{1} "2"*{2} "3"*{3} "4"*{4}
}}

\begin{figure}[tbp]%
\release{
\[
\begin{array}{ccc}
\hasse{
"12" *++[o][F*:blue]{}
"123" *++[o][F*:blue]{}
"124" *++[o][F*:blue]{}
"1234" *++[o][F*:blue]{}
} &
\hasse{
"13" *++[o][F*:blue]{}
"123" *++[o][F*:blue]{}
"134" *++[o][F*:blue]{}
"1234" *++[o][F*:blue]{}
} &
\hasse{
"23" *++[o][F*:blue]{}
"123" *++[o][F*:blue]{}
"234" *++[o][F*:blue]{}
"1234" *++[o][F*:blue]{}
} \\
\strut \\
\hasse{
"14" *++[o][F*:blue]{}
"124" *++[o][F*:blue]{}
"134" *++[o][F*:blue]{}
"1234" *++[o][F*:blue]{}
} &
\hasse{
"24" *++[o][F*:blue]{}
"124" *++[o][F*:blue]{}
"234" *++[o][F*:blue]{}
"1234" *++[o][F*:blue]{}
} &
\hasse{
"34" *++[o][F*:blue]{}
"134" *++[o][F*:blue]{}
"234" *++[o][F*:blue]{}
"1234" *++[o][F*:blue]{}
} \\
\end{array}
\]
\[
\text{(a) 2-subset certificate structure}
\]
\vspace{.5cm}
\[
\begin{array}{ccc}
\hasse{
"12" *++[o][F*:blue]{}
"34" *++[o][F*:blue]{}
"123" *++[o][F*:blue]{}
"124" *++[o][F*:blue]{}
"134" *++[o][F*:blue]{}
"234" *++[o][F*:blue]{}
"1234" *++[o][F*:blue]{}
} &
\hasse{
"13" *++[o][F*:blue]{}
"24" *++[o][F*:blue]{}
"123" *++[o][F*:blue]{}
"124" *++[o][F*:blue]{}
"134" *++[o][F*:blue]{}
"234" *++[o][F*:blue]{}
"1234" *++[o][F*:blue]{}
} &
\hasse{
"14" *++[o][F*:blue]{}
"23" *++[o][F*:blue]{}
"123" *++[o][F*:blue]{}
"124" *++[o][F*:blue]{}
"134" *++[o][F*:blue]{}
"234" *++[o][F*:blue]{}
"1234" *++[o][F*:blue]{}
} \\
\end{array}
\]
\[
\text{(b) collision certificate structure}
\]
\vspace{.5cm}
\[
\begin{array}{ccc}
\hasse{
"1" *++[o][F*:blue]{}
"2" *++[o][F*:blue]{}
"12" *++[o][F*:blue]{}
"13" *++[o][F*:blue]{}
"14" *++[o][F*:blue]{}
"23" *++[o][F*:blue]{}
"24" *++[o][F*:blue]{}
"123" *++[o][F*:blue]{}
"124" *++[o][F*:blue]{}
"134" *++[o][F*:blue]{}
"234" *++[o][F*:blue]{}
"1234" *++[o][F*:blue]{}
} &
\hasse{
"1" *++[o][F*:blue]{}
"3" *++[o][F*:blue]{}
"12" *++[o][F*:blue]{}
"13" *++[o][F*:blue]{}
"14" *++[o][F*:blue]{}
"23" *++[o][F*:blue]{}
"34" *++[o][F*:blue]{}
"123" *++[o][F*:blue]{}
"124" *++[o][F*:blue]{}
"134" *++[o][F*:blue]{}
"234" *++[o][F*:blue]{}
"1234" *++[o][F*:blue]{}
} &
\hasse{
"1" *++[o][F*:blue]{}
"4" *++[o][F*:blue]{}
"12" *++[o][F*:blue]{}
"13" *++[o][F*:blue]{}
"14" *++[o][F*:blue]{}
"24" *++[o][F*:blue]{}
"34" *++[o][F*:blue]{}
"123" *++[o][F*:blue]{}
"124" *++[o][F*:blue]{}
"134" *++[o][F*:blue]{}
"234" *++[o][F*:blue]{}
"1234" *++[o][F*:blue]{}
} \\
\strut \\
\hasse{
"2" *++[o][F*:blue]{}
"3" *++[o][F*:blue]{}
"12" *++[o][F*:blue]{}
"13" *++[o][F*:blue]{}
"23" *++[o][F*:blue]{}
"24" *++[o][F*:blue]{}
"34" *++[o][F*:blue]{}
"123" *++[o][F*:blue]{}
"124" *++[o][F*:blue]{}
"134" *++[o][F*:blue]{}
"234" *++[o][F*:blue]{}
"1234" *++[o][F*:blue]{}
} &
\hasse{
"2" *++[o][F*:blue]{}
"4" *++[o][F*:blue]{}
"12" *++[o][F*:blue]{}
"14" *++[o][F*:blue]{}
"23" *++[o][F*:blue]{}
"24" *++[o][F*:blue]{}
"34" *++[o][F*:blue]{}
"123" *++[o][F*:blue]{}
"124" *++[o][F*:blue]{}
"134" *++[o][F*:blue]{}
"234" *++[o][F*:blue]{}
"1234" *++[o][F*:blue]{}
} &
\hasse{
"3" *++[o][F*:blue]{}
"4" *++[o][F*:blue]{}
"13" *++[o][F*:blue]{}
"14" *++[o][F*:blue]{}
"23" *++[o][F*:blue]{}
"24" *++[o][F*:blue]{}
"34" *++[o][F*:blue]{}
"123" *++[o][F*:blue]{}
"124" *++[o][F*:blue]{}
"134" *++[o][F*:blue]{}
"234" *++[o][F*:blue]{}
"1234" *++[o][F*:blue]{}
} \\
\end{array}
\]
\[
\text{(c) certificate structure from \rf(exm:certORMultiple) for $k=2$}
\]
}
\caption{Examples of certificate structures for $\vars=4$.  The elements of certificate structures are depicted on the Hasse diagram of $2^{[4]}$.}
\label{fig:certExamples}
\end{figure}

\section{Main Results}
\label{sec:mainResults}
Let $\cC$ be a certificate structure on $\vars$ elements.  In \rf(defn:certQuantumComplexity), we defined the quantum query complexity of $\cC$.  Later, in \rf(sec:learning), we define {\em learning graphs} that is a computational model depending on certificate structures by definition.  In particular, in \rf(defn:certLearningComplexity), we define the {\em learning graph complexity} of $\cC$ as the smallest possible complexity of a learning graph for $\cC$.

The main result of this chapter is as follows:

\begin{thm}
\label{thm:certificates}
For any certificate structure, its quantum query and learning graph complexities differ by at most a constant multiplicative factor.
\end{thm}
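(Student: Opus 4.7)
The statement decomposes into two inequalities: every learning graph for $\cert$ yields a quantum algorithm for every function with certificate structure $\cert$ (the upper bound), and there is at least one function with certificate structure $\cert$ whose quantum query complexity matches the learning graph complexity (the lower bound). I would prove the upper bound first, since it is essentially a translation between two linear-algebraic models, and then attack the harder lower bound, which requires constructing a hard function.

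For the upper bound, fix any function $f\colon [q]^\vars\supseteq\cD\to\{0,1\}$ having certificate structure $\cert$, and any learning graph for $\cert$ of complexity $w$. My plan is to convert the learning graph into a feasible solution of the dual adversary SDP~\rf(eqn:advDual) for $f$ of objective value $O(w)$, after which \rf(thm:advAlgorithm) produces a quantum algorithm with the desired query count. The translation should read the positive and negative ``flows'' of the learning graph and, for each pair $(x,y)\in f^{-1}(1)\times f^{-1}(0)$, produce vectors $\psi_{j,x}$ such that $X_j\elem[x,y]=\ip<\psi_{j,x},\psi_{j,y}>$ satisfies~\rf(eqn:advDualCondition). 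The marked set $M\in\cert$ associated to $x$ via~\rf(eqn:certStructureCondition) serves as the target where the positive flow of $x$ terminates. Since the construction never inspects $f$ beyond its certificate structure, it applies uniformly to \emph{every} $f$ with this structure, proving the upper bound.

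For the lower bound, I would exhibit a single ``generic'' function $f_\cert$ with certificate structure $\cert$ whose quantum query complexity matches the learning graph complexity, and then invoke the primal adversary bound, \rf(thm:adv). Following the philosophy later applied to the $k$-sum and triangle-sum problems in \rf(sec:learningDuality), the function $f_\cert$ would be defined over an alphabet large enough that input values behave like fresh, essentially independent labels, so that the only way to certify a positive input is to read out the entire content of some marked set $M\in\cert$. Starting from an optimal feasible solution of the \emph{dual} learning graph optimization problem, I would assemble an adversary matrix $\Gamma$ for $f_\cert$ whose row sums and column sums reflect the flow variables of that dual LP, while the orthogonality of the random labels of $f_\cert$ forces $\Gamma\circ\Delta_j$ to block-decompose into pieces whose norms are controlled by the per-variable load of the dual solution.

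The main technical obstacle is controlling $\|\Gamma\circ\Delta_j\|$ for an adversary matrix that will \emph{not} be entrywise non-negative: \rf(lem:mathias) handles the non-negative regime, but by \rf(prp:advLimitations) the positive-weighted adversary already fails to match the learning graph complexity for certificate structures such as the collision one, so negative entries must enter $\Gamma$. My strategy is to exploit the symmetry of $f_\cert$ under permutations of input values, which simultaneously block-diagonalizes $\Gamma$ and each $\Gamma\circ\Delta_j$; within each irreducible block, the spectral norm should reduce to a quantity directly comparable to the dual-LP objective. A secondary subtlety is that $\cert$ may contain incomparable marked sets of wildly different sizes, so the dual-LP-to-adversary translation must be scale-invariant in the flow variables, i.e.\ only the ratios between flows across each variable should enter the bound; this is the analogue in the general setting of the normalization used in \rf(rem:SpanGeometricalMean).
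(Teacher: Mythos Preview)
Your upper-bound sketch is correct and matches the paper's second proof in \rf(sec:learningProof): a learning graph for $\cert$ is converted into a feasible dual-adversary solution for any $f$ with that certificate structure, and \rf(thm:advAlgorithm) finishes the job.

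For the lower bound, you have the right skeleton (take an optimal solution $\alpha_S(M)$ of the dual learning-graph program~\rf(eqn:learningDual), build a hard function, and assemble an adversary matrix), but the proposal is missing the key technical device, and the heuristics you offer in its place point in the wrong direction. The adversary matrix is \emph{not} governed by row and column sums of any flow variables; rather, for each $M\in\cert$ one sets
\[
\tG_M \;=\; \sum_{S\subseteq[\vars]} \alpha_S(M)\, E_S,
\qquad E_S = \bigotimes_{j\in[\vars]} E_{[j\in S]},
\]
where $E_0$ is the projector onto the all-ones direction in $\C^q$ and $E_1=I-E_0$. The point of this basis is that the action of $\Delta_j$ becomes the simple substitution $E_0\mapsto E_0$, $E_1\mapsto -E_0$ in the $j$th factor, so the coefficients transform to $\beta_S(M)=\alpha_S(M)-\alpha_{S\cup\{j\}}(M)$, and one reads off
\[
\|\tGamma'\|^2 \;=\; \max_S \sum_{M} \beta_S(M)^2 \;\le\; 1
\]
\emph{directly} from constraint~\rf(eqn:alphaOne). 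This is an explicit spectral decomposition, not a block-diagonalization obtained from permutation symmetry of the alphabet; appealing to symmetry alone would not hand you the crucial $\alpha_S-\alpha_{S\cup\{j\}}$ identity that links $\|\Gamma\circ\Delta_j\|$ back to the dual LP.

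There is a second gap you do not address at all. The matrix $\tGamma$ just described has rows and columns indexed by \emph{all} of $[q]^\vars$, whereas an adversary matrix must be supported on $f^{-1}(1)\times f^{-1}(0)$. One therefore has to restrict $\tG_M$ to $X_M\times Y$ and argue that both $\|\Gamma\|$ stays large and $\|\Gamma\circ\Delta_j\|$ stays small after restriction. This is where the hard function is engineered: $X_M$ must satisfy an orthogonality property (each unmarked assignment is hit uniformly), achieved via orthogonal arrays in the boundedly-generated case and via Fourier-biased sets over $\Z_p^\ell$ in general. Without this step the restriction can destroy the spectral estimates; your proposal's ``large alphabet so labels are essentially independent'' is the right intuition but is not by itself a mechanism for controlling the restricted norms. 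Finally, the remark about scale-invariance and \rf(rem:SpanGeometricalMean) is a red herring; the normalization issues here are of a different nature (the rescaling factor $\sqrt{q^\vars/|X_M|}$ in~\rf(eqn:GammaElements) compensates for the density of $X_M$, not for flow ratios).
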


Thus, on one hand, given a function $f$ possessing a certificate structure $\cC$, one can obtain a quantum query algorithm for $f$ by constructing a learning graph for $\cC$.  In many cases, this gives a decent algorithm, but, of course, it need not be optimal.  But, on the other hand, for any certificate structure $\cC$, one can construct a function $f$ that has $\cC$ as its certificate structure and requires this number of queries.  Thus, if one wants to perform better, he must use other properties of the function besides the possible dispositions of its certificates.

The statement of \rf(thm:certificates) breaks into two halves.  The first half is proven in \rf(sec:learningProof), and the second half is proven in \rf(sec:lower).

Although \refthm{certificates} is a very general result, it is unsatisfactory in the sense that the function having the required quantum query complexity is rather artificial, and the size of the alphabet is astronomical.  However, for a special case of certificates structures we are about to define, it is possible to construct a relatively natural problem with a modestly-sized alphabet having high quantum query complexity.

\begin{defn}[Boundedly-Generated Certificate Structure]
\label{defn:boundedly}
We say that a certificate structure $\cert$ on $\vars$ variables is {\em boundedly-generated} if it is generated by a subset $\cS\subseteq 2^{[\vars]}$ satisfying $|S|=O(1)$ for all $S\in\cS$.
\end{defn}

For example, the $k$-subset and the triangle certificate structures from Examples~\ref{exm:certKSubset} and~\ref{exm:certTriangle} are boundedly-generated, while the certificate structures from Examples~\ref{exm:certORMultiple} and~\ref{exm:collision} are not.  The trivial certificate structure from \rf(exm:certTrivial) is {\em not} boundedly-generated as well, because $\vars\ne O(1)$.


\begin{defn}[Orthogonal Array]\label{defn:orthogonalArray}
Assume $T$ is a subset of $[q]^k$.  We say that $T$ is an \emph{orthogonal array} over alphabet $[q]$ iff, for every index $i \in [k]$ and for every sequence $x_1,\dots, x_{i-1}, x_{i+1},\dots, x_k$ of elements in $[q]$, there exist exactly $|T|/q^{k-1}$ choices of $x_i \in [q]$ such that $(x_1, \dots, x_k) \in T$.  We call $|T|$ the {\em size} of the array, and $k$---its {\em length}.
(Compared to a standard definition of orthogonal arrays (cf.~\cite{hedayat:orthogonal}), we always require that the so-called strength of the array equals $k-1$.)
\end{defn}

\begin{thm}
\label{thm:boundedly}
Assume a certificate structure $\cert$ is boundedly-generated, and let $\certM$ be like in \refdefn{boundedly}.  Assume the alphabet is $[q]$ for some $q\ge 2|\cert|$, and each $\certM$ is equipped with an orthogonal array $T_\marked$ over alphabet $[q]$ of length $|\certM|$ and size $q^{|\certM|-1}$.  Consider a function $f\colon [q]^\vars\to\{0,1\}$ defined by $f(x)=1$ iff there exists $\marked\in\cert$ such that $x_{\certM} \in T_\marked$.  Then, the quantum query complexity of $f$ is at least a constant times the learning graph complexity of $\cert$.
\end{thm}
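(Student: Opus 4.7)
The plan is to use the general adversary bound (Theorem \ref{thm:adv}) to show that $Q(f)$ is at least a constant times the learning graph complexity of $\cC$. Since the first half of Theorem \ref{thm:certificates} (proven in \rf(sec:learning)) will establish that learning graphs yield upper bounds, the content here is purely the matching lower bound. I will take an optimal dual solution to the learning graph SDP from \rf(sec:learningDuality) --- equivalently, a feasible solution to its primal, which is a maximization problem whose value equals the learning graph complexity of $\cC$ --- and lift it into an explicit adversary matrix $\Gamma$ for $f$ whose ratio $\|\Gamma\|/\max_j\|\Gamma\circ\Delta_j\|$ matches the learning graph complexity up to a constant.

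The construction of $\Gamma$ will proceed as follows. First, partition the positive inputs by the (essentially unique) satisfied certificate: for $x \in f^{-1}(1)$, let $M(x)\in\cC$ be the certificate with $x_{A_{M(x)}}\in T_{M(x)}$. The hypothesis $q \geq 2|\cC|$ is what guarantees near-uniqueness of $M(x)$ and that $|f^{-1}(0)|=\Theta(q^\vars)$, so that we can restrict attention to inputs with a single marked certificate at negligible cost. Next, given a feasible primal learning graph solution --- a collection of nonnegative weights on pairs of the form ``certificate $M$ plus variable $j \in A_M$'' --- I will define $\Gamma$ as a weighted sum, over $M\in\cC$, of rank-one-like blocks built from the orthogonal array $T_M$. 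Concretely, for fixed $M$, the $M$-block of $\Gamma$ will be indexed on the positive side by inputs with $M(x)=M$ and on the negative side by inputs obtained by ``breaking'' the certificate through one coordinate $j\in A_M$; the entry carries the weight coming from the learning graph solution times a tensor-product structure coming from $T_M$ and the free coordinates outside $A_M$.

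Computing $\|\Gamma\|$ amounts to summing the contributions of the $M$-blocks and extracting a dominant singular vector that mimics the primal flow from the learning graph; this is where the primal objective value of the learning graph SDP reappears as a lower bound on $\|\Gamma\|$. The analysis of $\|\Gamma\circ\Delta_j\|$ is the heart of the argument, and it is here that the orthogonal array hypothesis is essential: the defining property of $T_M$ --- that fixing $|A_M|-1$ coordinates of a certificate determines the last one uniquely with multiplicity $q^{|A_M|-1}$ --- will factor each $(\Gamma\circ\Delta_j)$-block into an essentially rank-one operator times a ``completion'' part. This rank-one/tensor structure makes the spectral norm computable and matches the $j$-contribution of the dual learning graph solution, so that $\max_j\|\Gamma\circ\Delta_j\|$ is bounded by the other factor appearing in the learning graph complexity (recall the geometric-mean convention analogous to \rf(rem:SpanGeometricalMean)).

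The main obstacle will be the norm bound on $\Gamma\circ\Delta_j$. The difficulty has two sources: (i) the blocks corresponding to different certificates $M, M'$ through the same variable $j$ can overlap on the negative side, creating cross-terms that must be controlled, and (ii) although orthogonal arrays give clean combinatorial equidistribution, translating this into a spectral-norm bound requires decomposing $\Gamma\circ\Delta_j$ as a Hadamard product $B_j\circ C_j$ in the spirit of \rf(lem:mathias), or equivalently constructing an explicit SVD using characters of the orthogonal array. I expect that treating each $M$ separately, invoking \rf(lem:gamma) or \rf(lem:mathias) blockwise, and then controlling the cross-terms via the $q\geq 2|\cC|$ hypothesis (which ensures the blocks are almost orthogonal in the relevant sense) will suffice; the orthogonality of characters on $[q]^{|A_M|}$ restricted to $T_M$ is what I expect to do the actual spectral work.
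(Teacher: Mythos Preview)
Your proposal has a genuine structural gap. First, a terminological confusion: the learning graph ``primal'' in \rf(eqn:learningPrimal) is a minimisation over weights and flows on edges of $2^{[\vars]}$, not ``nonnegative weights on pairs $(M,j)$ with $j\in A_M$''. What you actually need is the \emph{dual} \rf(eqn:learningDual): a family $\alpha_S(M)$ indexed by \emph{all} subsets $S\subseteq[\vars]$, subject to the edge constraint $\sum_M(\alpha_S(M)-\alpha_{S\cup\{j\}}(M))^2\le 1$. Your adversary matrix must carry this full $2^{[\vars]}$-indexed structure, not just information supported on $A_M$.

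Second, and more seriously, your plan to bound $\|\Gamma\circ\Delta_j\|$ by a Hadamard factorisation in the spirit of \rf(lem:mathias) cannot work: that route only certifies the positive-weighted adversary, and the bounds you are after (already for the $2$-subset structure) exceed the certificate-complexity barrier \rf(prp:advLimitations)(b). The matrix must have negative entries and the norm analysis must exploit cancellation, not row/column $\ell_2$-norms.

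The paper's construction is quite different from what you sketch. One defines, for each $M$, the $[q]^\vars\times[q]^\vars$ matrix $\tG_M=\sum_{S\subseteq[\vars]}\alpha_S(M)\,E_S$, where $E_S=\bigotimes_j E_{s_j}$ with $E_0=e_0e_0^*$ and $E_1=I-E_0$; the adversary matrix $\Gamma$ is obtained by restricting $\tG_M$ to $X_M\times Y$ (with a $\sqrt{q^\vars/|X_M|}$ rescaling) and stacking over $M$. The point of this basis is that $\Delta_j$ acts by $E_1\mapsto -E_0$ in position $j$, so $\tG_M\circ\Delta_j$ is replaced by $\sum_S\beta_S(M)E_S$ with $\beta_S(M)=\alpha_S(M)-\alpha_{S\cup\{j\}}(M)$: this is precisely the quantity bounded by the dual learning-graph constraint \rf(eqn:alphaOne), and the orthogonality of the $E_S$ turns that constraint directly into $\|\tGamma'\|\le 1$. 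The orthogonal-array hypothesis enters not as a ``rank-one completion'' but as the statement that restricting rows to $X_M$ preserves the orthogonality of the $E_S$ within each of $k=\max_M|A_M|=O(1)$ families $L_{M,i}$ (those $S$ avoiding the $i$th element of $A_M$); this is where bounded generation is used, and it is what lets the norm bound survive the restriction from $[q]^\vars$ to $X_M$. There are no cross-terms between different $M$ to control: the $M$-blocks have disjoint row supports, so $\|\hGamma\|^2=\|\sum_M\hG_M^*\hG_M\|$.
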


For example, for a boundedly-generated certificate structure $\cert$, one can define the corresponding {\em sum} problem: Given $z\in[q]^\vars$, detect whether there exists $M\in\cert$ such that $\sum_{j\in\certM} z_j\equiv 0\pmod{q}$.  If $q\ge 2|\cert|$, \refthm{boundedly} implies that the quantum query complexity of this problem is at least a constant times the learning graph complexity of $\cert$.

\rf(thm:boundedly) is proven in \rf(sec:lower).

We apply Theorems~\ref{thm:certificates} and~\ref{thm:boundedly} to a number of functions.  For example, in
Propositions~\ref{prp:learningKSubsetUpper} and~\ref{prp:learningKSubsetLower}, we prove that the the learning graph complexity of the $k$-subset certificate  structure on $\vars$ variables is $O(\vertices^{k/(k+1)})$.  By combining this with \rf(thm:boundedly), we obtain the following results:

\begin{cor}
\label{cor:certificateDistLower}
The quantum query complexity of the element distinctness problem from \rf(defn:kdist), provided that the size of the alphabet $q>\vars^2$, is $\Theta(\vars^{2/3})$.
\end{cor}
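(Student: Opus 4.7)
The plan is to sandwich the quantum query complexity between $O(\vars^{2/3})$ and $\Omega(\vars^{2/3})$, applying the two general tools the chapter has just set up (\rf(thm:certificates) for the upper bound and \rf(thm:boundedly) for the lower bound) to the $2$-subset certificate structure from \rf(exm:certKSubset).

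For the upper bound, I would first observe that the element distinctness function from \rf(defn:kdist) has the $2$-subset certificate structure: any positive input $x$ has a collision, i.e.\ a pair $\{a_1,a_2\}\subseteq[\vars]$ with $x_{a_1}=x_{a_2}$, and fixing these two coordinates forces $f=1$. By Propositions~\ref{prp:learningKSubsetUpper} and~\ref{prp:learningKSubsetLower} (cited just before the statement), the learning graph complexity of the $k$-subset certificate structure is $\Theta(\vars^{k/(k+1)})$, so for $k=2$ it equals $\Theta(\vars^{2/3})$. Plugging this into the first half of \rf(thm:certificates) (proved in \rf(sec:learningProof)) converts the learning graph into a quantum query algorithm of complexity $O(\vars^{2/3})$ for any function having this certificate structure, element distinctness included.

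For the lower bound, I would apply \rf(thm:boundedly) directly to the $2$-subset certificate structure $\cC$. The structure is boundedly-generated with $|A_M|=2$ for every $M\in\cC$, and $|\cC|=\binom{\vars}{2}$, so the hypothesis $q\ge 2|\cC|$ is implied by the assumption $q>\vars^2$. The only ingredient still to supply is an orthogonal array $T_M\subseteq[q]^2$ of length $2$ and size $q$ for each $M=\{a_1,a_2\}\in\cC$; I would take $T_M=\{(a,a)\mid a\in[q]\}$, the diagonal, which is manifestly an orthogonal array of strength $1$ (fixing either coordinate leaves exactly one compatible value of the other). With this choice, the function $f$ produced by \rf(thm:boundedly) satisfies $f(x)=1$ iff there exist $a_1\ne a_2$ with $(x_{a_1},x_{a_2})\in T_M$, i.e.\ $x_{a_1}=x_{a_2}$; this is precisely element distinctness. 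The theorem then yields $Q(f)=\Omega(\vars^{2/3})$.

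There is essentially no obstacle once the two theorems are in hand: the proof is a clean instantiation. The only mild point to keep in mind is verifying the hypothesis $q\ge 2|\cC|=\vars(\vars-1)$, which is what forces the mild alphabet restriction $q>\vars^2$ in the statement; all other constants are absorbed in the $\Theta$-notation. Combining the two bounds gives $Q=\Theta(\vars^{2/3})$ as claimed.
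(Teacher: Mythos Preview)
Your proposal is correct and follows exactly the approach the paper intends: the corollary is stated as an immediate instantiation of \rf(thm:boundedly) (together with Propositions~\ref{prp:learningKSubsetUpper} and~\ref{prp:learningKSubsetLower}) for the $2$-subset certificate structure, and the diagonal $T_M=\{(a,a)\mid a\in[q]\}$ is precisely the orthogonal array that makes the function of \rf(thm:boundedly) coincide with element distinctness. Your verification that $q>\vars^2$ ensures $q\ge 2|\cC|=\vars(\vars-1)$ is the only detail needed beyond quoting the theorems.
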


\begin{cor}
\label{cor:ksumLower}
The quantum query complexity of the $k$-sum problem from \rf(defn:kdist) is $\Theta(\vars^{k/(k+1)})$ provided that the size of the alphabet $q>\vars^k$.
\end{cor}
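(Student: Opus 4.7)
The plan is to derive Corollary \ref{cor:ksumLower} by assembling three ingredients already provided (or promised) by the chapter: the characterisation of quantum query complexity by learning graph complexity for boundedly-generated certificate structures (Theorem \ref{thm:certificates} and Theorem \ref{thm:boundedly}), the upper bound on the learning graph complexity of the $k$-subset certificate structure (Proposition \ref{prp:learningKSubsetUpper}), and the matching lower bound (Proposition \ref{prp:learningKSubsetLower}). The work is almost entirely in verifying that the $k$-sum problem fits the orthogonal-array template of Theorem \ref{thm:boundedly}; none of the steps should require serious computation.

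First, I would record the upper bound. The $k$-sum function has the $k$-subset certificate structure $\cC$ from Example \ref{exm:certKSubset}: any set of $k$ input positions whose values sum to $0 \pmod q$ serves as a $1$-certificate, and these are exactly the minimal certificates. By Proposition \ref{prp:learningKSubsetUpper} (to be proven later in the chapter), the learning graph complexity of $\cC$ is $O(\vars^{k/(k+1)})$, and Theorem \ref{thm:certificates} converts any learning graph for $\cC$ into a quantum query algorithm with the same complexity up to constants, giving the desired $O(\vars^{k/(k+1)})$ upper bound. (Alternatively, this also follows from Corollary \ref{cor:walkKDist}.)

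Second, I would verify the hypotheses of Theorem \ref{thm:boundedly} for $\cC$ and the $k$-sum function. The certificate structure $\cC$ is boundedly-generated by the family $\cS = \{S \subseteq [\vars] : |S|=k\}$, with $|S| = k = O(1)$. For each $M \in \cC$ corresponding to $S = \{a_1 < \dots < a_k\}$, set
\[
T_M \;=\; \Bigl\{(x_1,\dots,x_k) \in [q]^k \;\Big|\; x_1 + x_2 + \cdots + x_k \equiv 0 \pmod{q}\Bigr\}.
\]
This is an orthogonal array of length $k$ and size $q^{k-1}$: fixing any $k-1$ coordinates uniquely determines the remaining one modulo $q$, so each partial assignment extends in exactly $1 = q^{k-1}/q^{k-1}$ ways, as required in Definition \ref{defn:orthogonalArray}. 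With this choice, the function $f$ of Theorem \ref{thm:boundedly} is precisely the $k$-sum function, since $z_S \in T_M$ means exactly that $z_{a_1}+\cdots+z_{a_k} \equiv 0 \pmod q$. The alphabet-size hypothesis $q \ge 2|\cC| = 2\binom{\vars}{k}$ is implied (for all but finitely many $\vars$) by the assumption $q > \vars^k$.

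Third, I would invoke Theorem \ref{thm:boundedly} together with the lower bound on the learning graph complexity of $\cC$. Proposition \ref{prp:learningKSubsetLower} supplies $\Omega(\vars^{k/(k+1)})$ for this quantity, and Theorem \ref{thm:boundedly} converts it into the quantum query lower bound $Q(f) = \Omega(\vars^{k/(k+1)})$ for the $k$-sum problem under the stated alphabet assumption. Combined with the upper bound from step one, this yields $\Theta(\vars^{k/(k+1)})$. The only conceptual point that needs care is the orthogonal-array verification for $T_M$ above; everything else is a direct application of results proved elsewhere in the chapter, so I do not anticipate a genuine obstacle within this corollary itself — the real difficulty lies in the forthcoming proofs of Theorem \ref{thm:boundedly} and of Proposition \ref{prp:learningKSubsetLower}, which we are permitted to assume.
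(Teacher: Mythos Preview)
Your proposal is correct and follows exactly the approach the paper intends: combine the upper bound from Proposition~\ref{prp:learningKSubsetUpper} (via Theorem~\ref{thm:certificates} or Corollary~\ref{cor:walkKDist}) with the lower bound obtained by applying Theorem~\ref{thm:boundedly} to the $k$-subset certificate structure together with Proposition~\ref{prp:learningKSubsetLower}, after checking that the sum-modulo-$q$ condition defines an orthogonal array of size $q^{k-1}$ and that $q>\vars^k$ ensures $q\ge 2\binom{\vars}{k}$. The paper states this derivation in the paragraph preceding the corollary without spelling out the orthogonal-array verification, which you have supplied correctly.
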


The first of these corollaries reproves \rf(cor:distLower) using the adversary method.  Note that we prove this result directly, and not via the collision problem as it was done in \rf(sec:collisionLower).  It is still an open problem to reprove \rf(thm:collisionLower) using the adversary method.  The result in \rf(cor:ksumLower) is new, and it resolves the conjecture posed by Childs and Eisenberg~\cite{childs:subsetFinding} that the algorithm in \rf(thm:walkKDistinctness) is tight for the $k$-sum problem.  Also, by combining Propositions~\ref{prp:learningTriangleUpper} and~\ref{prp:learningKSubsetLower}, we get the following result:

\begin{cor}
Provided that the size of the alphabet $q>\vertices^3$, the quantum query complexity of the triangle-sum problem on $\vertices$ vertices is $\tilde \Theta(\vertices^{9/7})$.  
\end{cor}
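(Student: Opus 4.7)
The plan is to combine the two main theorems of this chapter—Theorem~\ref{thm:certificates} and Theorem~\ref{thm:boundedly}—with the learning graph bounds for the triangle certificate structure $\cert$ from Example~\ref{exm:certTriangle}. Let $f$ denote the triangle-sum function on $\vertices$ vertices over alphabet $[q]$ with $q>\vertices^3$. The first step is to verify that $f$ fits the framework of Theorem~\ref{thm:boundedly}: the triangle certificate structure is boundedly-generated, with each generator $\{ab,bc,ac\}$ of size $3=O(1)$; the hypothesis $q\ge 2|\cert|$ is implied by $q>\vertices^3 \ge 2\binom{\vertices}{3}$; and for each generator $M=\{ab,bc,ac\}$ the set $T_M=\{(x,y,z)\in[q]^3 \mid x+y+z\equiv 0\pmod q\}$ is an orthogonal array of length $3$ and size $q^2=q^{|M|-1}$, since fixing any two coordinates determines the third uniquely. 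Hence $f$ is exactly the function produced by Theorem~\ref{thm:boundedly} applied to $\cert$ with these orthogonal arrays.

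For the upper bound I would invoke Proposition~\ref{prp:learningTriangleUpper}, which furnishes a learning graph for $\cert$ of complexity $\tilde O(\vertices^{9/7})$. The upper-bound half of Theorem~\ref{thm:certificates}, proved in Section~\ref{sec:learning} by transforming a learning graph into a quantum query algorithm, then yields a quantum query algorithm of cost $\tilde O(\vertices^{9/7})$ for every function whose certificate structure is $\cert$, and in particular for $f$.

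For the matching lower bound I would apply Theorem~\ref{thm:boundedly} directly, which reduces the problem to lower-bounding the learning graph complexity of the triangle certificate structure itself by $\tilde\Omega(\vertices^{9/7})$. This is the role of Proposition~\ref{prp:learningKSubsetLower}: the same dual adversary construction used there for the $k$-subset structure specializes to the triangle structure and produces a feasible solution to the dual learning graph SDP (derived in Section~\ref{sec:learningDuality}) with objective value $\tilde\Omega(\vertices^{9/7})$. Combining the two bounds gives $Q(f)=\tilde\Theta(\vertices^{9/7})$.

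The main obstacle is not the assembly in the preceding paragraphs, which is essentially bookkeeping once the main theorems of the chapter are in hand, but the lower bound on the learning graph complexity of the triangle certificate structure. The triangle structure is strictly sparser than the $3$-subset structure on $\vars=\binom{\vertices}{2}$ variables—only triples of edges that actually close into a triangle are certificates—so the generic $3$-subset lower bound $\vars^{3/4}=\vertices^{3/2}$ is loose, and a dedicated dual witness exploiting the combinatorial structure of triangles is needed to produce the exponent $9/7$. Verifying feasibility of this dual solution and computing its objective value is the key technical step; everything else is a mechanical application of the general theorems and of the verification that the triangle-sum problem satisfies the orthogonal-array hypothesis of Theorem~\ref{thm:boundedly}.
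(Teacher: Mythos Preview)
Your overall assembly is correct and matches the paper's approach: upper bound via Proposition~\ref{prp:learningTriangleUpper} and the learning-graph-to-algorithm direction of Theorem~\ref{thm:certificates}; lower bound via Theorem~\ref{thm:boundedly} applied to the triangle certificate structure, once one has a lower bound on its learning graph complexity. Your verification of the hypotheses of Theorem~\ref{thm:boundedly} (bounded generation, the inequality $q>\vertices^3\ge 2\binom{\vertices}{3}=2|\cert|$, and the orthogonal-array property of the sum-zero constraint) is also fine.

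The gap is in your citation for the learning-graph lower bound. Proposition~\ref{prp:learningKSubsetLower} does \emph{not} specialise to the triangle structure: its dual witness $\alpha_S(M)$ is supported on all $\binom{N}{k}$ many $k$-subsets, and restricting to the $\binom{\vertices}{3}$ triangle certificates simply shrinks the objective value in~\eqref{eqn:alphaObjective}, destroying the bound. (The paper's own sentence before the corollary cites Proposition~\ref{prp:learningKSubsetLower}, but this is a typo.) The actual lower bound is Theorem~\ref{thm:learningTriangle}, which constructs a bespoke dual solution $\alpha_S(M)$ for the triangle structure using degree-based penalty terms $g_i(S,M)$ and a case analysis over $O(\log \vertices)$ degree ranges. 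This is where the $\tilde\Omega$ rather than $\Omega$ comes from: the construction only achieves $\Omega(\vertices^{9/7}/\sqrt{\log \vertices})$. Your final paragraph correctly identifies that a dedicated triangle-specific witness is the crux, but this witness is not a specialisation of the $k$-subset one; it is a substantially more intricate construction that has to track, for each edge set $S$, which vertices of the hidden triangle have already accumulated many incident loaded edges.
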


Here, the triangle sum problem is defined in the obvious way.
Finally, note that all these results supersede the certificate complexity barrier, \rf(prp:advLimitations)(b), hence, the constructed adversary matrices use negative weights.

\section{Learning Graphs}
\mycommand{CN}{{\cal N}}
\mycommand{CP}{{\cal P}}
\mycommand{CT}{{\cal T}}
\label{sec:learning}
In this section, we define the computational model of a learning graph.  It is based on the notion of certificate structure from the previous section, and can be converted into a quantum query algorithm for any function having this certificate structure.

\begin{defn}[Learning Graph]
\label{defn:learning}
A {\em learning graph} ${\cal G}$ on $\vars$ input variables is a directed acyclic connected graph with vertices labelled by subsets of $[\vars]$, the input indices. It only has arcs connecting vertices labelled by $S$ and $S\cup\{j\}$ where 
$S\subset [\vars]$ and $j\in[\vars]\setminus S$. The root of ${\cal G}$ is the vertex labelled by the empty set $\emptyset$. Each arc $e$ is assigned a positive real {\em weight} $w_e$.
\end{defn}

Note that it is allowed to have several (or none) vertices labelled by the same subset $S\subseteq[\vars]$. If there is a unique vertex of $\cG$ labelled by $S$, we usually use $S$ to denote it. Otherwise, we denote the vertex by $(S,a)$ where $a$ is some additional parameter used to distinguish vertices labelled by the same subset $S$.

A learning graph can be thought of as a way of modelling the development of one's knowledge about the input during a query algorithm. Initially, nothing is known, and this is represented by the root labelled by $\emptyset$.  At a vertex labelled by $S\subseteq [\vars]$, the values of the variables in $S$ have been learned. Following an arc $e$ connecting vertices labelled by $S$ and $S\cup\{j\}$ can be interpreted as querying the value of the input variable $z_j$. We say the arc {\em loads} element $j$. When talking about a vertex labelled by $S$, we call $S$ the set of {\em loaded elements}.

The graph ${\cal G}$ itself has a very loose connection to the function being calculated. The following notion is the essence of the construction.

\begin{defn}[Flow]
\label{defn:flow}
Let ${\cal G}$ be a learning graph and $\cert$ be a certificate structure, both on $\vars$ input variables.  For each $M\in\cert$, we define a {\em flow} on ${\cal G}$ as a real-valued function $p_e = p_e(M)$ where $e$ is an arc of ${\cal G}$.  It has to satisfy the following properties:
\itemstart
\item the vertex $\emptyset$ is the only source of the flow, and the flow has value 1. In other words, the sum of $p_e$ over all $e$ leaving $\emptyset$ is 1;
\item a vertex labelled by $S$ is a sink only if $S\in M$.  Thus, if $S\ne\emptyset$ and $S$ is not marked, then, for a vertex labelled by $S$, the sum of $p_e$ over all in-coming arcs equals the sum of $p_e$ over all out-going arcs.
\itemend
\end{defn}

We always assume a learning graph ${\cal G}$ is equipped with a certificate structure $\cert$ and a flow $p$ that satisfy the constraints of \refdefn{flow}.  In this case, we say the learning graph $\cG$ is {\em for} the certificate structure $\cert$.  Define the {\em negative complexity} of ${\cal G}$ and the {\em positive complexity of $\cG$ for $M\in\cert$} as
\begin{equation}
\label{eqn:learningComp1}
\CN({\cal G}) = \sum_{e\in E} w_e\qquad\mbox{and}\qquad \CP({\cal G}, M) = \sum_{e\in E} \frac{p_e(M)^2}{w_e},
\end{equation}
respectively, where $E$ is the set of arcs of ${\cal G}$.  The {\em positive complexity} and the {\em (total) complexity} of ${\cal G}$ are defined as 
\begin{equation}
\label{eqn:learningComp2}
\CP({\cal G}) = \max_{x\in f^{-1}(1)} \CP(\cG, x)\qquad\mbox{and}\qquad \CT({\cal G}) = \max\{\CN({\cal G}), \CP({\cal G})\},
\end{equation}
respectively.

\begin{rem}
\label{rem:learningEquivalentComplexity}
Similarly to \rf(rem:SpanGeometricalMean), one can define the total complexity of the learning graph as $\sqrt{\CN(\cG)\CP(\cG)}$, or as $\sqrt{\CN(\cG)}$ subject to $\CP(\cG)\le 1$.  This can be achieved by multiplying the weights of all arcs by the same factor $\alpha$.  This operation simultaneously increases $\CN(\cG)$ $\alpha$ times, and decreases $\CP(\cG)$ by the same factor.  By choosing $\alpha$ appropriately, it is possible to convert one definition into another.
\end{rem}

If a certificate structure $\cert'$ is less precise than $\cert$, then any learning graph for $\cert'$ can be also considered as a learning graph for $\cert$.  We also say that a learning graph is {\em for} a function $f$ if it is for its certificate structure as in \rf(defn:certOfFunction).  Then, we write $p_e(x)$ instead of $p_e(M(f,x))$ where $M(f,x)$ is as in \rf(eqn:certOfFunction).

\begin{defn}[Learning Graph Complexity]
\label{defn:certLearningComplexity}
The {\em learning graph complexity} of a certificate structure $\cert$ is defined as the smallest possible complexity of a learning graph for $\cert$.
\end{defn}

\subsection{Examples}

Now we give a number of examples of learning graphs that 
replicate the quantum algorithms we saw in \rf(chp:walk).

\begin{exm}
\label{exm:learningTrivial}
We start with the trivial certificate structure $\cert$ from \rf(exm:certTrivial).  The corresponding learning graph is shown in \rf(fig:trivial)(a).  In order to save space, we only show the variables loaded by the arcs.  Thus, the vertices of the learning graph from left to right are $\emptyset, \{1\}, \{1,2\}, \{1,2,3\},\dots, [\vars]$.

There are $\vars$ arcs, each of weight 1, hence, the negative complexity $\CN(\cG)=\vars$.  There is only one choice for $M$ in $\cert$, where only the subset $[\vars]$ is marked.  We define the flow from $\emptyset$ to $[\vars]$ by setting $p_e=1$ for all arcs $e$ in the learning graph.  Thus, the positive complexity $\CP(\cG)$ also equals $\vars$, and the total complexity is $\vars$.  This corresponds to the fact that any function can be evaluated in $\vars$ queries.  In figures, we often replace paths as in (a) by ``super arcs'' as in (b) that we call {\em transitions}.
\end{exm}

\begin{figure}[htb]
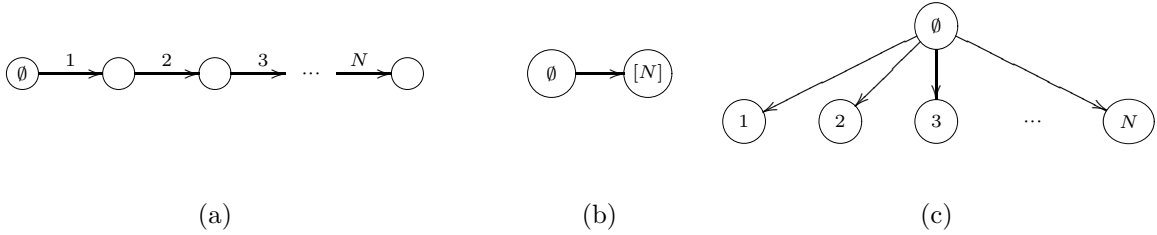
%
\[
\def\objectstyle{\scriptstyle}
\xygraph{ !{0;<3pc,0pc>:} !~-{@{->}}
*{\emptyset} *++[o][F]{} -^{1}[r] *++[o][F]{} -^{2}[r] *++[o][F]{} ([d(1.5)] *{\mbox{(a)}}) -^{3}[r] *++{\cdots} -^\vars[r] *++[o][F]{}
[r(1.5)]
*{\emptyset} *+++[o][F]{} -[r] *+++[o][F]{} ([]*{[\vars]})
( [d(1.5)l(.5)] *{\mbox{(b)}})
[u(.5)r(3)]
*++[o][F]{\emptyset}="0"
[l(2)d] *++[o][F]{1} & *++[o][F]{2} & *++[o][F]{3} ([d]*{\mbox{(c)}}) & \cdots  & *++[o][F]{\vars}
"0" (-"1",-"2",-"3",-"\vars")
}
\]
\caption{(a) A learning graph for the trivial certificate structure on $\vars$ variables.  (b) A shorthand for the path in (a).  (c) A learning graph for the 1-subset certificate structure on $\vars$ variables.  The weights of arcs in (a) and (c) are all equal to 1.}
\label{fig:trivial}
\end{figure}

\begin{exm}
As another example, consider the OR certificate structure from \rf(exm:certKSubset).  The learning graph $\cG$ can be found in \rf(fig:trivial)(c) where the weight of each arc is 1.  It corresponds to the Grover algorithm (\rf(prp:Grover)).  The negative complexity $\CN(\cG) = \vars$.  By definition, for each $M$ in the certificate structure, there exists $j\in[\vars]$ such that the singleton $\{j\}$ is marked.  Assign the flow 1 on the arc connecting $\emptyset$ and $\{j\}$, and set the zero flow on the remaining arcs.  Thus, the positive complexity $\CN(\cG)=1$.  By \rf(rem:learningEquivalentComplexity), the total complexity of the learning graph is $O(\sqrt{\vars})$.

This can be extended to the case of the certificate structure from \rf(exm:certORMultiple).  The learning graph remains the same.  For each $M$ in the certificate structure, there exists a $k$-subset $A$ of $[\vars]$ such that the subset $\{j\}$ is marked for all $j\in A$.  Define the flow as follows
\[
p_e(M) =
\begin{cases}
1/k,& \text{$e$ connects $\emptyset$ and $\{j\}\in M$;} \\
0,& \text{otherwise.}
\end{cases}
\]
The negative complexity is still $\vars$.  For the positive complexity, we have $\CP(\cG) = k(1/k)^2 = 1/k$.  Hence, the total complexity is $O(\sqrt{\vars/k})$ in accord with \rf(prp:Grover).
\end{exm}

The previous examples suggest that, in the learning graph, sequential loading of variables corresponds to a path, and amplitude amplification corresponds to branching.  In the following examples, we explore this intuition.  The first one corresponds to \rf(prp:collision).

\begin{prp}
\label{prp:learningCollisionUpper}
The learning graph complexity of the collision certificate structure is $O(\vars^{1/3})$.
\end{prp}

\begin{wrapfigure}{L}{0pt}
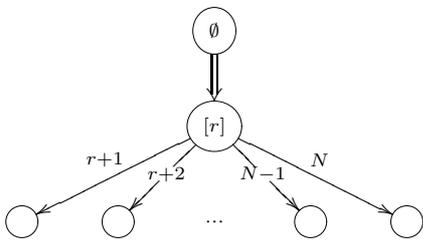

\def\objectstyle{\scriptstyle}
\xygraph{ !{0;<3pc,0pc>:} !~-{@{<-}} !~:{@{=>}}
*++[o][F]{\emptyset}
:[d]*++[o][F]{[r]}="0"
[l(2)d] *++[o][F]{}(-^{r+1}"0") & *++[o][F]{}(-|{r+2}"0") &  \cdots  
 & *++[o][F]{}(-|{\vars-1}"0")  & *++[o][F]{}(-_{\vars}"0")
}
\caption{Learning graph for the collision certificate structure.}
\label{fig:learningCollision}
\end{wrapfigure}
\noindent \em Proof.\em\;\;
The learning graph $\cG$ is shown in \rf(fig:learningCollision).  At first, there is transition from $\emptyset$ to the subset $[r]$ where $r=o(\vars)$ is some integer specified later.  Each arc in the transition has some weight $w$ we will also specify later.  Next, there are $\vars-r$ arcs that connect $[r]$ to all its $(r+1)$-supersets: $[r+1], \{1,2,\dots,r,r+2\},\dots,\{1,2,\dots,r,\vars\}$.  The weight of each of these arcs is 1.
The negative complexity of the learning graph is 
\(
\CN(\cG) = rw + \vars-r = O(\vars)
\) 
if we set $w=\vars/r$.

Now, let $M$ be an element of the collision certificate structure, and define the flow for $M$ as follows.  First, set flow 1 on the transition.  If $[r]\in M$, we are done.  Otherwise, there exist distinct elements $a_1,\dots,a_r$ such that $[r]\cup\{a_i\}\in M$ for all $i\in[r]$.  Define the flow $1/r$ on each arc loading an $a_i$, and zero elsewhere.  Thus, the positive complexity is 
\[
\CP(\cG) \le r/w + 1/r = r^2/\vars + 1/r = O(\vars^{-1/3})
\]
if we set $r = \vars^{1/3}$.  Thus, the total complexity of the learning graph is $\sqrt{\CN(\cG)\CP(\cG)} = O(\vars^{1/3})$. 
\hspace{\stretch{1}} \qedsymbol

\begin{exm}
Now, we consider \rf(prp:distinctnessOld).
Let $\cert$ be the 2-subset certificate structure and the learning graph $\cG$ be as in \rf(fig:learningOldDistinctness).  It consists of two layers.  In the first one, the root $\emptyset$ is connected to ${\vars\choose r}$ vertices: one for each $r$-subset of $[\vars]$.  Here we use transitions like in \rf(fig:trivial)(b).  In the second one, each $r$-subset is connected to all its $(r+1)$-supersets.  The arcs in the first layer have some weight $w$, and the arcs in the second one have weight 1.

\begin{figure}[htb]
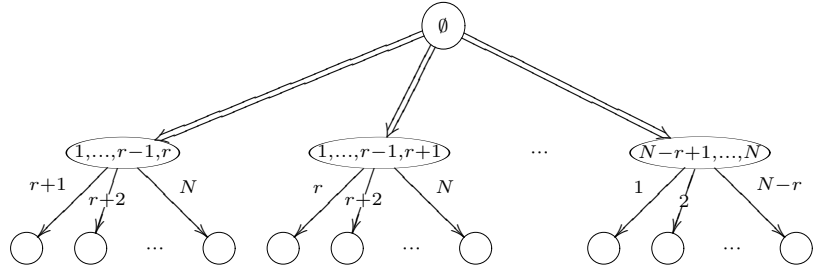
%
\[
\def\objectstyle{\scriptstyle}
\xygraph{ !{0;<2pc,0pc>:} !~-{@{->}} !~:{@{=>}}
*++[o][F]{\emptyset}="0" [ddlllll]
*+[o][F]{1,\dots,r-1,r}="1" [rrrr] *+[o][F]{1,\dots,r-1,r+1}="2" [r(2.5)] *{\cdots} [r(2.5)] *+[o][F]{\vars-r+1,\dots,\vars}="3"
"0"(:"1",:"2",:"3")
"1" 
( [ld(1.5)] *++[o][F]{}="11" [r] *++[o][F]{}="12" [r] *{\cdots} [r] *++[o][F]{}="13" ,
-"11"_{r+1}, -"12"|{r+2},  -"13"^{\vars})
"2" 
( [ld(1.5)] *++[o][F]{}="11" [r] *++[o][F]{}="12" [r] *{\cdots} [r] *++[o][F]{}="13" ,
-"11"_{r}, -"12"|{r+2},  -"13"^{\vars})
"3" 
( [ld(1.5)] *++[o][F]{}="11" [r] *++[o][F]{}="12" [r] *{\cdots} [r] *++[o][F]{}="13" ,
-"11"_{1}, -"12"|{2},  -"13"^{\vars-r})
 }
\]
\caption{A suboptimal learning graph for the 2-subset certificate structure.}
\label{fig:learningOldDistinctness}
\end{figure}

The negative complexity is $\CN(\cG) = {\vars\choose r}[rw+(\vars-r)] = O\s[\vars{\vars\choose r}]$ if we choose $w = \vars/r$.

Let us define the flow.  Assume $M\in\cert$ is defined by a 2-subset $\{a,b\}$, i.e., $S\in M$ iff $\{a,b\}\subseteq S$.  Let $V$ consist of all $r$-subsets of $[\vars]$ that include $a$, but not $b$.  Define the flow equal to ${\vars-2\choose r-1}^{-1}$ for all transitions that connect $\emptyset$ to an element $S\in V$ and for all arcs that connect an element $S\in V$ to $S\cup\{b\}$.  Everywhere else, the flow is zero.
The positive complexity is
\[
\CP(\cG) = {\vars-2\choose r-1} \cdot {\vars-2\choose r-1}^{-2} (r/w + 1) = (1+r^2/\vars){\vars-2\choose r-1}^{-1}.
\]
And the total complexity is
\[
\sqrt{\vars\s[1+\frac{r^2}\vars] {\vars\choose r} {\vars-2\choose r-1}^{-1} } =  
O\s[\sqrt{ \vars\s[1+\frac{r^2}\vars] \frac \vars r }] =
O(\vars^{3/4})
\]
if we set $r = \sqrt{\vars}$.
\end{exm}

The learning graphs in the previous examples directly follow the corresponding algorithms.  The analysis is similar with the exception of the weights.  In \rf(sec:procedureDriven), we will see a way how the weights can be calculated automatically.

\subsection{Proof of the First Half of \TeXBug{Theorem \ref*{thm:certificates}}}
\label{sec:learningProof}
In this section, we prove that for any function $f$ having certificate structure $\cC$ and for any learning graph $\cG$ for $\cG$, there exists a quantum algorithm evaluating $f$ that has query complexity $O(\CT(\cG))$.
We give two proofs of this result.  The first one is based on span programs and works only for Boolean functions.  The second proof is based on the dual of the adversary bound and is applicable for arbitrary functions.  Ideas from both of these proofs will be used later in the thesis: in Chapters~\ref{chp:claws} and~\ref{chp:kdist}, respectively.
But before that, we prove a result on the form of a learning graph.

\begin{prp}
\label{prp:learningNoDuplicates}
Assume $\cG$ is a learning graph for a certificate structure $\cert$.  Then, there exists a learning graph $\cG'$ for $\cert$ such that $\CT(\cG')\le \CT(\cG)$ and $\cG'$ has at most one vertex for any subset $S\subseteq[\vars]$.
\end{prp}

\pfstart
The learning graph $\cG'$ has a vertex $S\subseteq[\vars]$ iff $\cG$ has at least one vertex corresponding to this subset.  
For each pair $S,S'$ of vertices of $\cG'$ such that $S' = S\cup\{j\}$ for some $j\notin S$, perform the following transformation:
Let $e_1,\dots,e_k$ be all the arcs in $\cG$ connecting any vertex with label $S$ to a vertex with label $S'$.  Let $w_i$ be the weight of $e_i$.  Connect $S$ and $S'$ in $\cG'$ by an arc $e$ of weight $w_1+\cdots+w_k$.  Clearly, $\CN(\cG') = \CN(\cG)$.  Let $p$ be a flow in $\cG$ for some $M\in\cert$.  Set the flow $p_1+\cdots+p_k$ on $e$, where $p_i$ is the flow through $e_i$.  It is a valid flow for $M$, and $\CP(\cG',M) \le \CP(\cG, M)$, because
\[\frac{(p_1+p_2+\cdots+p_k)^2}{w_1+\cdots+w_k} \le \frac{p_1^2}{w_1}+\cdots+\frac{p_k^2}{w_k}.\]
(The last inequality follows from Jensen's inequality for the square function
\((\alpha_1 x_1+\cdots + \alpha_k x_k)^2 \le \alpha_1 x_1^2 + \cdots+ \alpha_k x_k^2,\)
with $\alpha_i = w_i/(w_1+\cdots+w_k)$ and $x_i = p_i/\alpha_i$.)
\pfend

Despite \rf(prp:learningNoDuplicates), we often use several vertices corresponding to the same subset because it makes the analysis simpler.  In the following proofs we assume the learning graph is transformed as in \rf(prp:learningNoDuplicates).

\pfstart[First Proof (\cite{belovs:learning})]
Assume a learning graph $\cG$ is for a function $f\colon \{0,1\}^\vars\supseteq \cD\to\{0,1\}$.  
The idea is to convert $\cG$ into a span program and apply \rf(thm:spanAlgorithm).  
Each vertex $S$ of the learning graph is represented by $2^{|S|}$ vectors $\{t_\alpha\}$ where $\alpha$ is an assignment in $\{0,1\}^S$.  We assume all these vectors are orthonormal.  
The vector space of the span program is spanned by all $t_\alpha$s.
The vector $t_\emptyset$ that corresponds to the vertex $\emptyset$ of $\cG$ is the target vector of the span program.

If $\alpha\colon S\to \{0,1\}$ is a 1-certificate for $f$, $t_\alpha$ is a free input vector.
Consider an arc $e$ of $\cG$ from $S$ to $S\cup\{j\}$ with weight $w_e$. For each vector $t_\alpha$ with $\alpha$ having domain $S$, we add two input vectors
\begin{equation}
\label{eqn:vektora}
\sqrt{w_e} \s[t_\alpha - t_{\alpha \cup \{j\mapsto b\}} ], \qquad b=0,1.
\end{equation}
Here $\alpha \cup\{j\mapsto b\}$ is the assignment with domain $S\cup\{j\}$ that maps $i$ to $\alpha(i)$ for $i\in S$ and maps $j$ to $b$. Each of these two vectors is labelled by the corresponding value $b$ of the variable $j$.

Let us describe a negative witness $w'$ of the span program on input $y\in \preimy$.  For each $t_\alpha$, we let $\langle w', t_\alpha\rangle = 1$ if $\alpha$ agrees with $y$, and $\langle w', t_\alpha\rangle = 0$ otherwise.
Consider a free input vector of the form $t_\alpha$. Since $f(y)=0$, and $\alpha$ is a 1-certificate, $\alpha$ does not agree with the input.  By construction, $t_\alpha$ is orthogonal to the witness $w'$.
Now consider an available input vector of the form \refeqn{vektora}. There are two cases:
\itemstart
	\item The inner product $\langle w', t_\alpha\rangle$ equals 0. In this case, $\alpha$ does not agree with the input, and, hence, none of $\alpha\cup\{j\mapsto 0\}$ and $\alpha\cup\{j\mapsto 1\}$ does.  Hence, both vectors of \refeqn{vektora} are orthogonal to the witness.
	\item The inner product $\langle w', t_\alpha\rangle$ equals 1.  In this case, only the vector corresponding to the value $y_j$ is available in~\rf(eqn:vektora), and the assignment $\alpha\cup\{j\mapsto y_j\}$ agrees with the input.  Hence, the available input vector is orthogonal to the witness.
\itemend

This proves that $w'$ is indeed a negative witness. Let us calculate the size of $w'$. Let $e$ be an arc of $\cG$ from $S$ to ${S\cup\{j\}}$.  We claim there is exactly one input vector that arises from $e$ and is not orthogonal $w'$.  Let $\alpha$ be an assignment with domain $S$.  By the first point above, if $\alpha$ does not agree with the input, both input vectors in~\refeqn{vektora} are orthogonal to $w'$. If $\alpha$ agrees with $y$, the inner product of the false input vector from \refeqn{vektora} and $w'$ is $\sqrt{w_e}$.  Summing up over all arcs, the size of $w'$ equals $\sum_e w_e = \CN(\cG)$.

Now, let us construct a positive witness for an input $x\in f^{-1}(1)$.  Let $p_e=p_e(x)$ be the corresponding flow.  We describe a linear combination of the available input vectors that equals $t_\emptyset$.
Let $e$ be an arc from $S$ to ${S\cup\{j\}}$ with weight $w_e$. Let $\alpha=x_S$ and take the available input vector from \refeqn{vektora} with the coefficient $p_e/\sqrt{w_e}$.  Multiplied by the coefficient, the vector equals $p_e(t_{x_S} - t_{x_{S\cup\{j\}}})$.
Suppose a vertex $S$ is a sink.  Then, $t_{x_S}$ is a free input vector.  Take it with the coefficient equal to the difference of the in-flow to $S$ and the out-flow of $S$. 
By the definition of the flow, the sum of all these vectors equals the target $t_\emptyset$. The witness size is $\sum_e p_e^2/w_e = \CP(\cG,x)$.
\pfend

\pfstart[Second Proof (Lee~\cite{lee:learningKdistPrior}).] 
This time, we reduce to \rf(thm:advAlgorithm).  For each arc $e$ from $S$ to $S\cup\{j\}$, we define a block-diagonal matrix $X^e_j = \sum_\alpha Y_\alpha$, where the sum is over all assignments $\alpha$ on $S$. Each $Y_\alpha$ is defined as $\psi\psi^*$ where, for each $z\in\cD$:
\[
\psi\elem[z] =
\begin{cases}
p_e(z)/\sqrt{w_e},& \mbox{$f(z)=1$, and $z$ satisfies $\alpha$;}\\
\sqrt{w_e},& \text{$f(z)=0$, and $z$ satisfies $\alpha$;}\\
0,&\mbox{otherwise.}
\end{cases}
\]
Finally, we define $X_j$ in~\refeqn{advDual} as $\sum_e X_j^e$ where the sum is over all arcs $e$ loading $j$.

The condition~\rf(eqn:advDualSemidefinite) is trivial.  
Let us check the condition~\rf(eqn:advDualCondition).
Fix any $x\in f^{-1}(1)$ and $y\in f^{-1}(0)$.  By construction, $X_j^e\elem[x,y]=p_e(x)$ if $x_S = y_S$ where $S$ is the origin of $e$, otherwise, $X_j^e\elem[x,y] = 0$.  Thus, only the arcs $e$ from $S$ to $S\cup\{j\}$ such that $x_S=y_S$ and $x_j\ne y_j$ contribute to the sum in~\rf(eqn:advDualCondition).  These arcs define a cut between the source $\emptyset$ and all the sinks of the flow $p_e=p_e(x)$.  
Hence, the sum of the values of the flow on these arcs equals the total value of the flow, 1.

Let us calculate the objective value~\rf(eqn:advDualObjective).  In $X_j^e$, the diagonal entry corresponding to an element $z\in \cD$ equals $p_e(z)^2/w_e$ or $w_e$, if $f(z)$ equals 1 or 0, respectively.  The sum $\sum_{j\in[\vars]} X_j$ equals $\sum_e X_j^e$ where the summation is over all arcs of the learning graph.  Hence, the objective value equals the maximum of $\CN(\cG)$ and $\CP(\cG)$.
\pfend

\subsection{Procedure-Driven Description}
\label{sec:procedureDriven}
In this section, we give an interpretation of a learning graph as a randomised procedure for loading variables.  This interpretation is useful for symmetric problems.  
 Let $\cert$ be a certificate structure.  For each $M\in\cert$, its own procedure is built.  The goal is to end up in an element of $M$, and this must be achieved with certainty.  The value of the complexity of the learning graph arises from the interplay between the procedures for different inputs. 


We illustrate this concept by an example of a learning graph for the $k$-subset certificate structure.  It corresponds to the algorithm in \rf(thm:walkKDistinctness).  Let $M\in\cert$ be given by $\{a_1,a_2\dots,a_k\}$, i.e., $S\in M$ iff $S$ contains all of $a_i$s.  
Our randomised procedure consists of $k+1$ stages and is given in \reftbl{old}.  Here $r=o(\vars)$ is some parameter to be specified later.  In this case, only stage I is probabilistic, and all other stages are deterministic.  Thus, the internal randomness of the procedure is concealed in the choice of the $r$ elements.  Each choice has probability $\prob={\vars-k\choose r}^{-1}$.

\begin{table}[htb]
\centering
\begin{tabular}{rp{11cm}}
\hline
I.& Load $r$ elements different from $a_1,\dots,a_k$ uniformly at random.\\
II.1.& Load $a_1$.\\
II.2.& Load $a_2$.\\
& \quad\vdots\\
II.$k$.& Load $a_k$.\\
\hline
\end{tabular}
\caption{Learning graph for the $k$-subset certificate structure.}
\label{tbl:old}
\end{table}

Let us describe how the graph $\cG$ and the flow $p$ is constructed from the procedure in \reftbl{old}. At first, we define the {\em key vertices} of $\cG$. If $d$ is the number of stages, the key vertices are in $V_0\cup\cdots\cup V_d$, where $V_0=\{\emptyset\}$ and $V_i$ consists of all possible sets of variables loaded after $i$ stages (over the choice of $M$ and the choice of the internal randomness).

For a fixed $M\in\cert$ and fixed internal randomness, the sets $S_{i-1}\in V_{i-1}$ and $S_i\in V_i$ of the variables loaded before and after stage $i$, respectively, are uniquely defined.  In this case, we connect $S_{i-1}$ and $S_i$ by a {\em transition} $e$, and say that the transition is {\em taken} for this choice of $M$ and the randomness.
The transition $e$ is the path
\[
S_{i-1}, (S_{i-1}\cup\{t_1\},e), (S_{i-1}\cup\{t_1,t_2\},e),\dots, (S_i\setminus\{t_{\ell}\},e), S_i
\]
in $\cG$, where $t_1,\dots,t_\ell$ are the elements of $S_i\setminus S_{i-1}$ in some arbitrary order (see also \rf(exm:learningTrivial)).  Additional labels $e$ in the internal vertices ensure that the paths corresponding to different transitions do not intersect, except at the ends.  We say that the transition $e$ and all arcs therein {\em belong} to stage $i$.  The number $\ell$ is the {\em length} of the transition.

We say a transition is {\em used} for $M\in\cert$, if it is taken for some choice of the internal randomness.
The set of transitions of $\cG$ is the union of all transitions used for all inputs in $M\in\cert$. 
For example, \rf(tbl:distinctnessParam) features the description of all transitions for each stage of the learning graph in \rf(tbl:old), and the condition when each of them is used for a particular $M\in\cert$.  \reffig{distinct} shows an example of the learning graph for one particular choice of the parameters.

\begin{figure}[tbh] 
\centering 
\[
\def\objectstyle{\scriptstyle}
\xygraph{ !{0;<2pc,0pc>:} !~-{@{->}} !~:{@{<-}}
*++[o][F]{\emptyset}="0" [d(1.5)l(4.5)]
(
[] *+[o][F]{1,2} [r] *+[o][F]{1,3} [r] *+[o][F]{1,4} [r] *+[o][F]{1,5} [r] *+[o][F]{2,3} [r] 
*+[o][F]{2,4} [r] *+[o][F]{2,5} [r] *+[o][F]{3,4} [r] *+[o][F]{3,5} [r] *+[o][F]{4,5}
)
[d(1.7)l(1.3)]
([]*+[o][F]{1,2,3} [r(1.3)] *+[o][F]{1,2,4} [r(1.3)]
*+[o][F]{1,2,5} [r(1.3)] *+[o][F]{1,3,4} [r(1.3)]
*+[o][F]{1,3,5} [r(1.3)] *+[o][F]{1,4,5} [r(1.3)]
*+[o][F]{2,3,4} [r(1.3)] *+[o][F]{2,3,5} [r(1.3)]
*+[o][F]{2,4,5} [r(1.3)] *+[o][F]{3,4,5})
[d(1.3)r(1.7)]
([]*+[o][F]{1,2,3,4} [rr] *+[o][F]{1,2,3,5} [rr] *+[o][F]{1,2,4,5} [rr] *+[o][F]{1,3,4,5} [rr] *+[o][F]{2,3,4,5} [rr] )
"0" (-"1,2", -"1,3", -"1,4", -"1,5", -"2,3", -"2,4", -"2,5", -"3,4", -"3,5", -"4,5")
"1,2" (-"1,2,3", -"1,2,4", -"1,2,5")
"1,3" (-"1,2,3", -"1,3,4", -"1,3,5")
"1,4" (-"1,2,4", -"1,3,4", -"1,4,5")
"1,5" (-"1,2,5", -"1,3,5", -"1,4,5")
"2,3" (-"1,2,3", -"2,3,4", -"2,3,5")
"2,4" (-"1,2,4", -"2,3,4", -"2,4,5")
"2,5" (-"1,2,5", -"2,3,5", -"2,4,5")
"3,4" (-"1,3,4", -"2,3,4", -"3,4,5")
"3,5" (-"1,3,5", -"2,3,5", -"3,4,5")
"4,5" (-"1,4,5", -"2,4,5", -"3,4,5")
"1,2,3,4" (:"1,2,3", :"1,2,4", :"1,3,4", :"2,3,4")
"1,2,3,5" (:"1,2,3", :"1,2,5", :"1,3,5", :"2,3,5")
"1,2,4,5" (:"1,2,4", :"1,2,5", :"1,4,5", :"2,4,5")
"1,3,4,5" (:"1,3,4", :"1,3,5", :"1,4,5", :"3,4,5")
"2,3,4,5" (:"2,3,4", :"2,3,5", :"2,4,5", :"3,4,5")
}\]
\caption{The learning graph for $k$-distinctness from \TeXBug{\reftbl{old}} in the case $k=2$, $\vars=5$ and $r=2$.}
\label{fig:distinct}
\end{figure}

The flow $p_e(M)$ is defined as the probability, over the internal randomness, that transition $e$ is taken for $M$.  All arcs forming the transition are assigned the same flow.  Thus, the transition $e$ is used by $M$ if and only if $p_e(M)>0$.  In the learning graph from \reftbl{old}, $p_e(M)$ attains two values only: 0 and $\prob$.  Also, for each transition $e$, we define its weight $w_e$, and all arcs in the transition also have this flow.

We define the {\em (total) complexity of stage $i$}, $\CT_i(\cG)$, similarly as $\CT(\cG)$ is defined in~\refeqn{learningComp1} and~\refeqn{learningComp2} with the summation over $E_i$, the set of all arcs on stage $i$, instead of $E$.  It is easy to see that $\CT(\cG)$ is at most $\sum_i \CT_i(\cG)$.

The description in \rf(tbl:old) said nothing about the weights of the transitions.  We define them using \refthm{symmetric} below.  But for that we need some additional notions.

\begin{defn}[Symmetric flow]
\label{defn:symmetric}
We say that the flow on stage $i$ is {\em symmetric} if all transitions on the stage can be divided into {\em classes} so that the following holds.  Firstly, all transitions in the same class has the same length.  Secondly, for each class, the flow $p_e(M)$ through a transition in the class takes two values only: 0 and some $\prob>0$.  The value of $\prob$ neither depends on the choice of $M\in\cert$, nor on the choice of $e$ in the class (but it may depend on the class).  And finally, the number of transitions in the class used by the flow does not depend on the choice of $M\in\cert$.
\end{defn}

The flow in the learning graph from \reftbl{old} is symmetric.  
The conditions of \rf(defn:symmetric) are satisfied if one puts all the transitions on each stage into one class.

We define the length of the class as the length of any transition in it.  The ratio of the total number of transitions in the class to the number of them used by the flow is called the {\em speciality} of the class.  The speciality $T_i$ of stage $i$ is the maximal speciality of all classes on stage $i$.  The length $L_i$ of stage $i$ is the average length of a class on stage $i$: $L_i = \sum_e p_e(M)\ell(e)$ where $\ell(e)$ is the length of the transition $e$ and the sum is over all transitions on stage $i$.  For a symmetric flow, both these quantities do not depend on the choice of $M$.

\begin{thm}
\label{thm:symmetric}
If the flow on stage $i$ is symmetric, the arcs on stage $i$ can be weighted so that the complexity of the stage becomes $L_i\sqrt{T_i}$.
\end{thm}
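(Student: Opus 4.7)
The plan is to assign a single common arc weight within each class of transitions, balance the negative and positive contributions of every class separately, and then replace the speciality of each class by $T_i$ in the final estimate.

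I would start by fixing stage $i$ and, for each class $c$ of transitions on that stage, collecting the data supplied by Definition \ref{defn:symmetric}: the total number $N_c$ of transitions in $c$, their common length $\ell_c$, the nonzero flow value $\prob_c$, and the speciality $T_c$. By the symmetry assumption, for every $M\in\cert$ exactly $N_c/T_c$ transitions of class $c$ carry nonzero flow, and this number does not depend on $M$. I would then declare that every arc of every transition in class $c$ receives the same weight $w_c>0$, to be fixed below. A routine summation gives
\[
\CN_i(\cG) = \sum_c N_c\, \ell_c\, w_c, \qquad \CP_i(\cG, M) = \sum_c \frac{N_c}{T_c}\, \ell_c\, \frac{\prob_c^2}{w_c},
\]
and the second expression is $M$-independent by symmetry.

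Next I would pick $w_c = \prob_c/\sqrt{T_c}$, which equalises the contribution of each class $c$ to both complexities at the common value $N_c \ell_c \prob_c /\sqrt{T_c}$. Hence
\[
\CT_i(\cG) = \sum_c \frac{N_c\, \ell_c\, \prob_c}{\sqrt{T_c}} = \sum_c \frac{N_c}{T_c}\, \prob_c\, \ell_c\, \sqrt{T_c} \le \sqrt{T_i}\sum_c \frac{N_c}{T_c}\, \prob_c\, \ell_c,
\]
where the inequality uses $T_c\le T_i$ by definition of the stage speciality. Finally, I would identify the last sum with $L_i = \sum_e p_e(M)\,\ell(e)$, which is also $M$-independent by symmetry, and conclude $\CT_i(\cG) \le L_i\sqrt{T_i}$.

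The only point requiring care is to check that, under the three bullets of Definition \ref{defn:symmetric}, both $\CP_i(\cG,M)$ and $L_i$ really are independent of $M\in\cert$, so that the maxima over $M$ in the definitions of $\CP_i(\cG)$ and of the stage length degenerate to single values. Once this is in hand, the argument is essentially a one-line AM--GM balancing followed by a crude bound $\sqrt{T_c}\le\sqrt{T_i}$, so I do not foresee any genuine obstacle.
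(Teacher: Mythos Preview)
Your proposal is correct and follows essentially the same argument as the paper: assign the weight $w_c = \prob_c/\sqrt{T_c}$ to every arc in class $c$, observe that this equalises the negative and positive contributions of each class, and then bound $\sqrt{T_c}\le\sqrt{T_i}$ to extract the factor $\sqrt{T_i}$ in front of $L_i$. (The paper's proof has a typo, writing $w_C=q_C/\sqrt{T_i}$ where $\sqrt{T_C}$ is intended; your version is the correct one.)
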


\pfstart
For each class $C$, let $|C|$, $\ell_C$, $q_C$, $T_C$ be, respectively, the number of transitions in $C$, the length of $C$, the non-zero value of the flow in $C$, and the speciality of $C$.  Let us assign the same weight $w_C$ to all the arcs in $C$.
Then, the negative complexity of the stage is $\sum_C |C|\ell_C w_C$, and the positive is $\sum_C \frac{|C|}{T_C} \ell_C \frac{q_C^2}{w_C}$.  If we define $w_C = q_C/\sqrt{T_i}$, both of these quantities become equal to
\[
\sum_{C} \frac{|C|}{\sqrt{T_C}} \ell_C q_C = \sum_e \sqrt{T_C} \ell(e) p_e(M)
\]
where the second sum is over all transitions on stage $i$.  Clearly, the last expression does not exceed $L_i\sqrt{T_i}$.
\pfend

\begin{table}[htb]
\[\begin{tabular}{r|p{5cm}c|cc}
Stage & Transitions & Used & Length & Speciality\\
\hline
I& From $\emptyset$ to $S$ of $r$ elements & $a_1\dots,a_k \notin S$ & $r$ & $O(1)$ \\
II.1& From $S$ to $S\cup\{j\}$ for $|S|=r$ and $j\notin S$ & $a_1\dots,a_k \notin S$, $j=a_1$ & 1 & $O(\vars)$ \\
II.2& From $S$ to $S\cup\{j\}$ for $|S|=r+1$ and $j\notin S$ & $a_1\in S, a_2,\dots,a_k \notin S$, $j=a_2$ & 1 & $O(\vars^2/r)$ \\
& $\vdots$ &\\
II.$k$& {From $S$ to $S\cup\{j\}$ \par for $|S|=r+k-1$ and $j\notin S$} & $a_1\dots,a_{k-1} \in S$, $j=a_k$ & 1 & $O(\vars^k/r^{k-1})$ \\
\hline
\end{tabular}\]
\caption{Description of the transitions for each stage of the learning graph in \TeXBug{\rf(tbl:old)}.  Additionally, it is described when the transition is used, and the length and speciality of each stage. }
\label{tbl:distinctnessParam}
\end{table}

Now we are able to calculate the complexity of the learning graph in \reftbl{old}.  The parameters of each stage are given in \rf(tbl:distinctnessParam).  
It is not hard to verify the table.  For example, a transition from $S$ to $S\cup\{j\}$ on stage II.$k$ is used by $M$ iff $a_1,\dots,a_{k-1}\in S$ and $j=a_k$. For a random choice of $S$ and $j\notin S$, the probability of $j=a_k$ is $1/\vars$, and the probability of $a_1,\dots,a_{k-1}\in S$, given $j=a_{k}$, is $\Omega(r^{k-1}/\vars^{k-1})$. Thus, the total probability is $\Omega(r^{k-1}/\vars^k)$ and the speciality is the inverse of that.
By \rf(thm:symmetric), the complexity of the learning graph is $O(r+\sqrt{\vars^k/r^{k-1}})$. It is optimised when $r = \vars^{k/(k+1)}$, and we have
\begin{prp}
\label{prp:learningKSubsetUpper}
The learning graph complexity of the $k$-subset certificate structure on $\vars$ variables is $O(\vars^{k/(k+1)})$.
\end{prp}

The main idea behind the learning graph in \reftbl{old} is to reduce the speciality of loading the certificate given by the element $a_1,\dots, a_k$.  In the learning graph from \reftbl{old}, it is achieved by loading $r$ non-marked elements before loading the certificate.   A transition of stage II.$k$ from a subset $S$ of size $r+k-1$ to its superset $S\cup\{j\}$ gets used for all $M\in\cert$ such that $a_1,\dots,a_{k-1}$ are in $S$.  This makes ${r+k-1\choose k-1}$ choices of $M$ for which the transition is used.  This reduces the speciality from $O(\vars^k)$ (if there were no stage I) to $O(\vars^k/r^{k-1})$.  In this case, we say that $a_1,\dots,a_{k-1}$ are {\em hidden} among the $r$ previously loaded elements.  This gives an intuitive way of calculating the specialities of stages of a learning graph.

We see that the larger the set we hide the elements $a_1,\dots,a_{k-1}$ into, the better.
Unfortunately, we can't make $r$ as large as we like, because loading the non-marked elements also counts towards the complexity.  At the equilibrium point $r=\vars^{k/(k+1)}$, we attain the optimal complexity of the learning graph.

\subsection{Applications}
\label{sec:learningApplications}
The paper~\cite{belovs:learning} contains a quantum $O(\vertices^{35/27})$-query algorithm for the triangle problem as a consequence of the above theory.  This was an improvement compared to the previously best known $O(\vertices^{13/10})$-query algorithm from \rf(thm:walkTriangle).  This result was generalised to arbitrary subgraphs in~\cite{zhu:learning, lee:learningSubgraphs}.  After that, all these results were improved by Lee, Magniez and Santha in~\cite{lee:learningTriangle}.  In particular, they gave an $O(\vertices^{9/7})$-query algorithm for the triangle detection, and $O(\vertices^{10/7})$-query algorithm for the associativity testing problem.  In this section, we describe both of these algorithms.  Also,~\cite{lee:learningTriangle} contains a general framework for subgraphs detection.  The complexity of the algorithms is expressed as the optimal value of a linear program.  Refer to the paper for more detail.  We start with the triangle certificate structure.

\begin{prp}
\label{prp:learningTriangleUpper}
The quantum query complexity of the triangle certificate structure on $\vertices$ vertices is $O(\vertices^{9/7}) = O(\vars^{9/14})$.
\end{prp}

\pfstart
Let $a,b$ and $c$ be the vertices of the graph that define an element $M$ of the certificate structure.  Consider the learning graph in \rf(tbl:learningTriangle) where $r_1, r_2$ and $\ell$ are some parameters that satisfy $r_1,r_2 = o(\vertices)$, $r_1,r_2 = \omega(1)$ and $\ell = o(r_2)$.
See also \rf(fig:triangleLearning).
The flow is defined similarly to \rf(sec:procedureDriven).  Every non-zero flow through a transition on stages I, II and III is $\prob = {\vertices-3\choose r_1, r_2}^{-1}$, where we use notation
\[
{n\choose r_1,\dots,r_k} = {n\choose r_1} {n-r_1\choose r_2}\cdots {n-r_1-\cdots-r_{k-1}\choose r_k}.
\]
On stages IV, V and VI, every non-zero flow equals $\prob{r_2\choose \ell}^{-1}$.

\begin{table}[htb]
\[\begin{tabular}{rp{11cm}}
\hline
I& Take disjoint subsets $A,B\subseteq [\vertices]\setminus\{a,b,c\}$ of sizes $r_1$ and $r_2$, respectively, uniformly at random, and load all the edges between $A$ and $B$\\
II& Add $a$ to $A$ and load all the edges between $a$ and $B$\\
III& Add $b$ to $B$ and load all the edges between $b$ and $A$ (including $ab$)\\
IV& Choose, uniformly at random, $\ell$ vertices in $B\setminus\{b\}$ and load all 
the edges connecting $c$ to these vertices\\
V& Load the edge $bc$\\
VI& Load the edge $ac$\\
\hline
\end{tabular}\]
\caption{Learning graph for the triangle certificate structure. }
\label{tbl:learningTriangle}
\end{table}

\begin{table}[htb]
\[\begin{tabular}{r|cccccc}
{Stage}& I & II & III & IV & V & VI\\
\hline
{Length} & $r_1 r_2$ & $r_2$ & $r_1$ & $\ell$ & 1 & 1\\
{Speciality} & 1 & $\vertices$ & $\vertices^2/r_1$ & $\vertices^3/(r_1r_2)$ & $\vertices^3/r_1$ & $\vertices^3/\ell$\\
\hline
\end{tabular}\]
\caption{Parameters of the stages of the learning graph in \TeXBug{\rf(tbl:learningTriangle)}.  All expressions are given up to constant multiplicative factors. }
\label{tbl:TriangleParameters}
\end{table}

The flow is symmetric if we put all transitions on one stage into one class.  The parameters of the learning graph are summarised in \rf(tbl:TriangleParameters).  The lengths of the stages are obvious.  Let us give some comments on the values of the specialities using the hiding argument.  It is also not hard to give a direct counting argument.
\itemstart
\item On stage III, $b$ is uniquely determined by the transition, $a$ is hidden among the $r_1+1$ element of $A$, and $c$ can be almost any vertex of the graph.
\item On stage IV, $c$ is uniquely determined, $a$ is hidden among the $r_1+1$ element of $A$, and $b$ is hidden among the $\Omega(r_2)$ elements of $B$ not being connected to $c$.
\item On stage V, $b$ and $c$ are uniquely determined, and $a$ is hidden among the $r_1+1$ element of $A$.
\item Finally, on stage VI, $a$ and $c$ are uniquely determined, and $b$ is hidden among the $\ell+1$ element of $B$ connected to $c$.
\itemend
By \rf(thm:symmetric), we get the complexity of the learning graph is
\[
r_1r_2 + r_2\sqrt{\vertices} + \vertices\sqrt{r_1} + \frac{\vertices^{3/2}\ell}{\sqrt{r_1r_2}} + \frac{\vertices^{3/2}}{\sqrt{r_1}} + \frac{\vertices^{3/2}}{\sqrt{\ell}}.
\]
It can be checked that the optimal complexity of $O(\vertices^{9/7})$ is achieved for the values of the parameters $r_1 = \vertices^{4/7}$, $r_2 = \vertices^{5/7}$ and $\ell = \vertices^{3/7}$.
\pfend

\begin{figure}[bth]
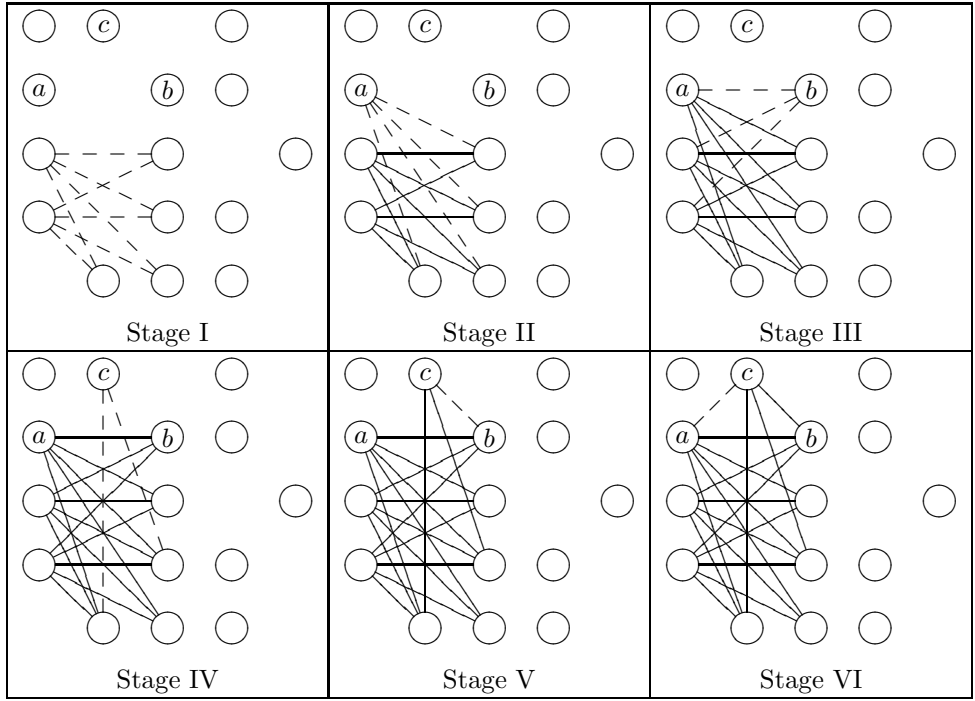

\[
\begin{tabular}{|c|c|c|}
\hline
{\xygraph{!~*{\cir<6pt>{}} !{0;<2pc,0pc>:} !~:{@{--}}
[l]{} & *{c}{}="c" & & {}\\
*{a}[]{}="a" & & *{b}{}="b" & {} \\
{}="a1" & & {}="b1" & & {} \\
{}="a2" & & {}="b2" & {}  \\
 & {}="b3" & {}="b4" &  *{\rule{0cm}{1cm}}{} 
"a1"(:"b1",:"b2",:"b3",:"b4")
"a2"(:"b1",:"b2",:"b3",:"b4")
}}
&
{\xygraph{!~*{\cir<6pt>{}} !{0;<2pc,0pc>:} !~:{@{--}}
[l]{} & *{c}{}="c" & & {}\\
*{a}[]{}="a" & & *{b}{}="b" & {} \\
{}="a1" & & {}="b1" & & {} \\
{}="a2" & & {}="b2" & {}  \\
 & {}="b3" & {}="b4" & {} 
"a1"(-"b1",-"b2",-"b3",-"b4")
"a2"(-"b1",-"b2",-"b3",-"b4")
"a"(:"b1",:"b2",:"b3",:"b4")
}}
&
{\xygraph{!~*{\cir<6pt>{}} !{0;<2pc,0pc>:} !~:{@{--}}
[l]{} & *{c}{}="c" & & {}\\
*{a}[]{}="a" & & *{b}{}="b" & {} \\
{}="a1" & & {}="b1" & & {} \\
{}="a2" & & {}="b2" & {}  \\
 & {}="b3" & {}="b4" & {} 
"a1"(-"b1",-"b2",-"b3",-"b4")
"a2"(-"b1",-"b2",-"b3",-"b4")
"a"(-"b1",-"b2",-"b3",-"b4")
"b"(:"a",:"a1",:"a2")
}}
\\
Stage I & Stage II & Stage III \\
\hline
{\xygraph{!~*{\cir<6pt>{}} !{0;<2pc,0pc>:} !~:{@{--}}
[l]{} & *{c}{}="c" & & {}\\
*{a}[]{}="a" & & *{b}{}="b" & {} \\
{}="a1" & & {}="b1" & & {} \\
{}="a2" & & {}="b2" & {}  \\
 & {}="b3" & {}="b4" & *{\rule{0cm}{1cm}}{} 
"a1"(-"b1",-"b2",-"b3",-"b4")
"a2"(-"b1",-"b2",-"b3",-"b4")
"a"(-"b1",-"b2",-"b3",-"b4")
"b"(-"a",-"a1",-"a2")
"c"(:"b2",:"b3")
}}
&
{\xygraph{!~*{\cir<6pt>{}} !{0;<2pc,0pc>:} !~:{@{--}}
[l]{} & *{c}{}="c" & & {}\\
*{a}[]{}="a" & & *{b}{}="b" & {} \\
{}="a1" & & {}="b1" & & {} \\
{}="a2" & & {}="b2" & {}  \\
 & {}="b3" & {}="b4" & {} 
"a1"(-"b1",-"b2",-"b3",-"b4")
"a2"(-"b1",-"b2",-"b3",-"b4")
"a"(-"b1",-"b2",-"b3",-"b4")
"b"(-"a",-"a1",-"a2")
"c"(:"b",-"b2",-"b3")
}}
&
{\xygraph{!~*{\cir<6pt>{}} !{0;<2pc,0pc>:} !~:{@{--}}
[l]{} & *{c}{}="c" & & {}\\
*{a}[]{}="a" & & *{b}{}="b" & {} \\
{}="a1" & & {}="b1" & & {} \\
{}="a2" & & {}="b2" & {}  \\
 & {}="b3" & {}="b4" & {} \\
"a1"(-"b1",-"b2",-"b3",-"b4")
"a2"(-"b1",-"b2",-"b3",-"b4")
"a"(-"b1",-"b2",-"b3",-"b4")
"b"(-"a",-"a1",-"a2")
"c"(:"a",-"b",-"b2",-"b3")
}}\\
Stage IV & Stage V & Stage VI \\
\hline
\end{tabular}
\]
\caption{Illustration of the learning graph in \TeXBug{\rf(tbl:learningTriangle)} for one particular choice of the number of vertices in the graph, vertices $a$, $b$ and $c$ defining the certificate $M$, and the internal randomness given by $A$ and $B$.  The edges loaded before the stage are shown as solid lines, and the edges being loaded during the stage are shown as hatched lines.}
\label{fig:triangleLearning}
\end{figure}

Now consider the associativity testing problem.  In this problem, we are given a binary algebraic operation $\circ$ defined on a set of $n$ elements.  The access to the operation is via an input oracle that, given elements $a,b\in[n]$, returns the value of $a\circ b$.  The problem is to check whether the operation is associative, i.e., whether $(b\circ c)\circ d = b\circ (c\circ d)$ for all $b,c,d\in[n]$.  This problem can be treated as a Boolean function on $\vars=n^2$ variables in $[n]$.
We treat the input to the associativity testing problem as a graph.  The set of vertices is $[n]$, and each edge $ab$ is labelled with the values of $a\circ b$ and $b\circ a$.  A query to the edge variables can be simulated by two queries to the input oracle.  

\begin{figure}[htb]
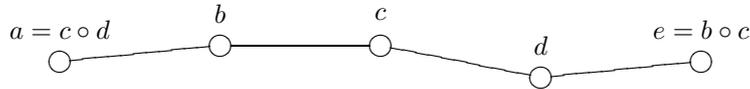

\vspace{-4pc}
\[
\xygraph{!~*{\cir<4pt>{}} !{0;<5pc,0pc>:} 
{} ([u(.2)]*{a = c\circ d})
-[u(.1)r] {} ([u(.2)]*{b})
-[r] {} ([u(.2)]*{c})
-[d(.2)r] {} ([u(.2)]*{d})
-[u(.1)r] {} ([u(.2)]*{e = b\circ c})
}
\]
\vspace{-4pc}
\caption{The associativity testing problem as having the 4-path certificate structure}
\label{fig:4path}
\end{figure}

In this representation, it suffices to construct a learning graph for the {\em 4-path certificate structure}.  In this certificate structure, the variables are given by pairs $ij$ with $1\le i<j\le n$, and each $M$ in the certificate structure is defined by five distinct elements $a,b,c,d,e\in[n]$ so that $S\in M$ if and only if $S\supseteq\{ab,bc,cd,de\}$.  In the last expression, we assume that $ji=ij$ for all $i<j$.
Indeed, we search for elements $b,c,d$ such that $(b\circ c)\circ d\ne b\circ (c\circ d)$.  Denote $a=c\circ d$ and $e = b\circ c$.  Then, the edges $ab,bc,cd$ and $de$ provide a certificate that the operation $\circ$ is not associative, see \rf(fig:4path).  There is a slight complication: It may happen that some of $a,b,c,d,e$ are equal.  The easiest way to overcome it is to copy each vertex of the graph 5 times.  In this case, even if some of the elements are equal, there is still a certificate formed by a 4-path.  From the next theorem, it follows that the quantum query complexity of the associativity testing problem is $O(n^{10/7}) = O(\vars^{5/7})$.  The best previously known quantum algorithm was based on simple Grover search and used $O(n^{3/2})$ queries.

\begin{prp}
\label{prp:associativity}
The learning graph complexity of the 4-path certificate structure is $O(n^{10/7})$.
\end{prp}

\pfstart
Let $a,b,c,d,e$ define an element $M$ of the certificate structure.  The learning graph is given in \rf(tbl:learning4path).  Again, the flow is treated as probability.  For instance, a non-zero flow through a transition on stage I equals
\[
{n-5\choose r_1,r_2,r_3,r_4}^{-1} s^{k_1+k_2} (1-s)^{r_1r_2 + r_3r_4 - k_1-k_2}
\]
where $k_1$ and $k_2$ are the number of edges the transition loads between $A$ and $B$, and $C$ and $D$, respectively.
The stages IV and VII are special.  No edges are loaded, instead of that, the flow before this stages is modified.  We scale the flow going to the vertices of the learning graph satisfying the condition so that it has value 1, and remove the flow from all other vertices.  By the Markov inequality, the probability the condition in stage IV is satisfied is at least $1/4$, so the positive complexity of the previous stages increases by at most a factor of 16.  Similarly for stage VII.
We calculate the complexity of the stages before the step with the old flow, and after the step---with the new one.  As the flow is scaled uniformly, it remains symmetric after the conditioning.

\begin{table}[htb]
\[\begin{tabular}{rp{14cm}}
\hline
I.& Take disjoint subsets $A,B,C,D\subseteq [\vertices]\setminus\{a,b,c,d,e\}$ of sizes $r_1 = n/10$, $r_2 = n^{4/7}$, $r_3 = n^{6/7}$, and $r_4 = n^{5/7}$, respectively, uniformly at random.  Load all edges between $B$ and $C$.  Load each edge between $A$ and $B$, and between $C$ and $D$, independently at random with probability $s = n^{-1/7}$. \\
II.& Add $a$ to $A$, and load each edge between $a$ and $B$ independently with probability $s$. \\
III.& Add $b$ to $B$, load all the edges between $b$ and $C$, and load each edge between $b$ and $A\setminus\{a\}$ independently with probability $s$.  \\
IV.& Condition on having at least $sr_1r_2/2$ and at most $r_1r_2/2$ edges between $A$ and $B$. \\
V.& Add $d$ to $D$, and load each edge between $d$ and $C$ independently with probability $s$.\\
VI.& Add $c$ to $C$, load all the edges between $c$ and $B$, and load each edge between $c$ and $D\setminus\{d\}$ independently with probability $s$.\\
VII.&  Condition on having at least $sr_3r_4/2$ and at most $r_3r_4/2$ edges between $A$ and $B$.\\
VIIII.& Load the edge $ab$.\\
IX.& Load the edge $cd$.\\
X.& Load the edge $de$.\\
\hline
\end{tabular}\]
\caption{Learning graph for the 4-path certificate structure. }
\label{tbl:learning4path}
\end{table}

The flow is symmetric if we define the class of the transition by the number of loaded edges between $A$ and $B$, and between $C$ and $D$ before and after the transition.  The lengths and specialities of the stages are summarised in \rf(tbl:4pathParameters).  It is straightforward to check that the complexity of each stage is $O(\vertices^{10/7})$.

\begin{table}[htb]
\[\begin{tabular}{r|cccccccc}
{Stage}& I & II & III &  V & VI & VIII & IX & X\\
\hline
{Length} & $s r_1r_2 + sr_3r_4 + r_2r_3$ & $sr_2$ & $sr_1+r_3$ & $sr_3$ & $sr_4 + r_2$ & 1 & 1 & 1\\
{Speciality} & 1 & $\vertices$ & $\vertices^2/r_1$ & $\frac{\vertices^3}{r_1r_2}$ & $\frac{\vertices^4}{r_1r_2r_4}$ & $\frac{\vertices^4}{r_3r_4}$ & $\frac{\vertices^4}{sr_1r_2}$ & $\frac{\vertices^5}{s^2 r_1r_2r_3}$\\
\hline
\end{tabular}\]
\caption{Parameters of the stages of the learning graph in \TeXBug{\rf(tbl:learning4path)}.  All expressions are given up to constant multiplicative factors. }
\label{tbl:4pathParameters}
\end{table}

The average lengths in \rf(tbl:4pathParameters) are easy to check.  Let us give some comments on specialities:
\itemstart
\item On stage III, $b$ is uniquely defined, and $a$ is hidden among the elements of $A$.  \item On stage V, $d$ is uniquely defined, and $ab$ is hidden among the non-loaded edges between $A$ and $B$.  This is the reason why we conditioned on having at most $r_1r_2/2$ edges loaded between $A$ and $B$.  Similarly on stages VI and VIII.
\item On stage IX, $c$ and $d$ are uniquely defined, and $ab$ is hidden among the loaded edges between $A$ and $B$.  This is why we conditioned on having at least $sr_1r_2/2$ edges loaded between $A$ and $B$.  Stage X is similar.
\itemend

For greater clarity, we also give a direct counting argument for stage X.  Let $k_1$ and $k_2$ be the number of edges between $A$ and $B$, and between $C$ and $D$ in a transition of a fixed class.  The total number of transitions in the class is 
\begin{equation}
\label{eqn:4path1}
{\vertices\choose r_1+1,r_2+1,r_3+1,r_4+1}{(r_1+1)(r_2+1)\choose k_1}
{(r_3+1)(r_4+1)\choose k_2}(r_4+1)(\vertices-r_1-r_2-r_3-r_4-4).
\end{equation}
For fixed $a,b,c,d$ and $e$, the number of transitions used by the flow is
\begin{equation}
\label{eqn:4path2}
{\vertices-5 \choose r_1,r_2,r_3,r_4}{(r_1+1)(r_2+1)-1\choose k_1-1}
{(r_3+1)(r_4+1)-1\choose k_2-1}.
\end{equation}
A simple calculation shows that the ratio of~\rf(eqn:4path1) to~\rf(eqn:4path2) is $O\s[\vertices^5/(s^2r_1r_2r_3)]$ if $k_1\ge sr_1r_2/2$ and $k_2\ge sr_3r_4/2$.  Similarly, one can check the specialities of all other stages.
\pfend

\section{Duality}
\label{sec:learningDuality}
\mycommand{arcs}{E}
\mycommand{source}{\mathrm{s}}
\remycommand{target}{\mathrm{t}}

This section is similar in spirit to \rf(sec:advDual).  We obtain a dual formulation of the learning graph complexity of a certificate structure that can serve as a lower bound.  After that, we use this dual formulation to show that the learning graphs we obtained in the previous sections are tight up to constant factors.  Since a learning graph seems a powerful tool for construction of quantum query algorithms, it is important to understand its limitations.  Even more, in the next section, we show how lower bounds on learning graph complexity can be transformed into lower bounds on quantum query complexity for some specific functions.

Recall that due to \rf(prp:learningNoDuplicates), we may assume a learning graph uses each subset of $[\vars]$ as a label of its vertex at most once.  Also, by introducing zero weights of arcs, we may, without loss of generality, assume the learning graph uses all subsets of $[\vars]$ exactly once, and uses all possible arcs between them.  Thus, let $\arcs$ by the set of pairs $(S,S')$ of subsets of $[\vars]$ such that $S'=S\cup\{j\}$ for some $j\notin S$.  For $e=(S,S\cup\{j\})\in\arcs$, let $\source(e)=S$ and $\target(e)=S\cup\{j\}$.

\begin{thm}
\label{thm:learningDuality}
The learning graph complexity of a certificate structure $\cert$ on $\vars$ variables is equal to the optimal value of the following two optimisation problems
\begin{subequations}
\label{eqn:learningPrimal}
\begin{alignat}{3}
 &{\mbox{\rm minimise }} &\quad& \sqrt{\sum\nolimits_{e\in\arcs} w_e} \label{eqn:flowObjective} \\ 
 &{\mbox{\rm subject to }} && \sum\nolimits_{e\in\arcs} \frac{p_e(\marked)^2}{w_e}\le 1 &\quad&\text{\rm for all $\marked\in\cert$;} \label{eqn:flowValue}\\
 &&& \sum_{e\in\arcs\colon \target(e)= S} p_e(\marked) = \sum_{e\in\arcs\colon \source(e)= S} p_e(\marked) &&\text{\rm for all $\marked\in\cert$ and $S\in 2^{[\vars]}\setminus(M\cup\{\emptyset\})$;} \label{eqn:flowCond1} \\
 &&& \sum\nolimits_{e\in\arcs\colon \source(e) = \emptyset} p_e(\marked) = 1 &&\text{\rm for all $\marked\in\cert$;}\label{eqn:flowCond2}\\
 &&& p_e(M)\in\R,\quad w_e\ge 0 && \text{\rm for all $e\in\arcs$ and $M\in\cert$;}
\end{alignat}
\end{subequations}
(here, $0/0$ in~\refeqn{flowValue} is defined to be 0), and
\begin{subequations}
\label{eqn:learningDual}
\begin{alignat}{3}
 &{\mbox{\rm maximise }} &\quad& \sqrt{\sum\nolimits_{\marked\in\cert} \alpha_\emptyset(\marked)^2} \label{eqn:alphaObjective} \\ 
 &{\mbox{\rm subject to }} && \sum\nolimits_{\marked\in\cert} \s[\alpha_{\source(e)}(\marked) - \alpha_{\target(e)}(\marked)\strut]^2\le 1 &\quad&\text{\rm for all $e\in\arcs$;} \label{eqn:alphaOne} \\
 &&& \alpha_S(\marked) = 0 && \text{\rm whenever $S\in \marked$;} \label{eqn:alphaZero}  \\
 &&& \alpha_S(\marked)\in\R && \text{\rm for all $S\subseteq[\vars]$ and $M\in\cert$.}
\end{alignat}
\end{subequations}
\end{thm}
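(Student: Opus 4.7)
The plan is to establish two equalities: first, that the learning graph complexity of $\cert$ coincides with the optimum of the primal program \rf(eqn:learningPrimal); and second, that \rf(eqn:learningPrimal) and \rf(eqn:learningDual) share this common value. The first equality is a matter of bookkeeping. By \rf(prp:learningNoDuplicates) we may assume the learning graph uses each subset of $[\vars]$ as a vertex label at most once, and by admitting arcs of weight zero we may further assume it contains every arc of the Boolean lattice, so the primal variables $(w_e,p_e(M))$ and the flow constraints \rf(eqn:flowCond1)--\rf(eqn:flowCond2) match Definitions \ref{defn:learning} and \ref{defn:flow} exactly. The rescaling trick of \rf(rem:learningEquivalentComplexity) then expresses $\CT(\cG)$ as $\sqrt{\CN(\cG)}=\sqrt{\sum_{e\in\arcs}w_e}$ under the normalization $\CP(\cG,M)\le1$ for every $M\in\cert$, reproducing \rf(eqn:learningPrimal) verbatim.

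Weak duality (dual $\le$ primal) follows from a single Cauchy--Schwarz estimate. For any primal-feasible $(w_e,p_e(M))$ and dual-feasible $\alpha_S(M)$, flow conservation and the boundary condition \rf(eqn:alphaZero) telescope to
\[
\alpha_\emptyset(M)\;=\;\sum_{e\in\arcs}p_e(M)\bigl(\alpha_{\source(e)}(M)-\alpha_{\target(e)}(M)\bigr)\qquad\text{for every }M\in\cert.
\]
Cauchy--Schwarz within each $M$ combined with \rf(eqn:flowValue) yields $\alpha_\emptyset(M)^2\le\sum_e w_e\bigl(\alpha_{\source(e)}(M)-\alpha_{\target(e)}(M)\bigr)^2$; summing over $M$, swapping the summation order, and invoking \rf(eqn:alphaOne) arc by arc gives $\sum_M\alpha_\emptyset(M)^2\le\sum_e w_e$, which is the squared form of \rf(eqn:alphaObjective) $\le$ \rf(eqn:flowObjective).

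For the reverse inequality, I plan to compute the Lagrangian dual of the equivalent program $\min\sum_e w_e$ using multipliers $\lambda_M\ge0$ for each \rf(eqn:flowValue) and $\alpha_S(M)\in\R$ for each equality in \rf(eqn:flowCond1)--\rf(eqn:flowCond2). Stationarity in the unconstrained $p_e(M)$ gives $p_e(M)=w_e\bigl(\alpha_{\source(e)}(M)-\alpha_{\target(e)}(M)\bigr)/(2\lambda_M)$, after which the $w_e$-dependence collapses to $\sum_e w_e\bigl[1-\sum_M(\alpha_{\source(e)}(M)-\alpha_{\target(e)}(M))^2/(4\lambda_M)\bigr]$ and minimization over $w_e\ge0$ enforces $\sum_M(\alpha_{\source(e)}(M)-\alpha_{\target(e)}(M))^2\le4\lambda_M$ at every arc. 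After the sign-flip $\alpha\mapsto-\alpha$ (which preserves the constraint), the Lagrangian simplifies to $\sum_M[\alpha_\emptyset(M)-\lambda_M]$, and the homogeneous rescaling $\alpha_S(M)=2\sqrt{\lambda_M}\,\beta_S(M)$ converts the arc constraint into the exact form of \rf(eqn:alphaOne) while turning the objective into $\sum_M[2\sqrt{\lambda_M}\,\beta_\emptyset(M)-\lambda_M]$, which maximized pointwise over $\lambda_M\ge0$ equals $\sum_M\beta_\emptyset(M)^2$, the square of \rf(eqn:alphaObjective).

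The main technical point will be to justify strong duality via Slater's condition. The primal is convex because $(p,w)\mapsto p^2/w$ is jointly convex on $w>0$, and strict feasibility is easy to arrange: take any feasible learning graph, give every remaining arc of the Boolean lattice an arbitrarily small positive weight, and then rescale all weights uniformly upward so that $\CP(\cG,M)<1$ for every $M\in\cert$. A minor ancillary subtlety is to handle the convention $0/0=0$ in \rf(eqn:flowValue) at primal optima where some $w_e$ vanish, but this is routine and does not affect the duality argument.
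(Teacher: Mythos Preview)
Your approach is correct and essentially the same as the paper's: both square the objectives, write the Lagrangian with multipliers $\lambda_M\ge0$ (the paper's $\mu_M$) for \rf(eqn:flowValue) and unconstrained multipliers for the flow-conservation equalities, minimize first over $p_e(M)$ and then over $w_e\ge0$, rescale the multipliers by $2\sqrt{\lambda_M}$ to obtain \rf(eqn:alphaOne), and finally maximize over $\lambda_M$; Slater's condition supplies strong duality exactly as you indicate. One slip to fix: the arc constraint after minimizing over $w_e$ should read $\sum_M(\alpha_{\source(e)}(M)-\alpha_{\target(e)}(M))^2/(4\lambda_M)\le1$, not ``$\le4\lambda_M$'' (which does not type-check, since $M$ is the summation index)---your subsequent rescaling step implicitly uses the correct form, so this is only a transcription error. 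Your explicit Cauchy--Schwarz weak-duality argument is a pleasant addition not present in the paper.
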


\pfstart
Eq.~\refeqn{learningPrimal} is a restatement of the definition of a learning graph from \rf(sec:learning) with an application of \rf(rem:learningEquivalentComplexity).  The second expression~\refeqn{learningDual} is a new one, and requires a proof.  
The equivalence of the two expressions is obtained by duality.  We use basic convex duality~\cite[Chapter 5]{boyd:convex}.
First of all, we consider both programs with their objective values~\refeqn{flowObjective} and~\refeqn{alphaObjective} squared.  With this change, Eq.~\refeqn{learningPrimal} becomes a convex program  (for the convexity of~\refeqn{flowValue}, see~\cite[Section 3.1.5]{boyd:convex}).  The program is strictly feasible.  Indeed, it is easy to see that~\refeqn{flowCond1} and~\refeqn{flowCond2} are feasible.  To assure strong feasibility in~\refeqn{flowValue}, it is enough to take $w_e$ large enough.  Hence, by Slater's condition, the optimal values of~\refeqn{learningPrimal} and its dual are equal.  Let us calculate the dual.  The Lagrangian of~\refeqn{learningPrimal} is as follows
\begin{multline}
\label{eqn:lagrangian}
\sum_{e\in\arcs}w_e + \sum_{M\in\cert}\mu_M\s[\sum_{e\in\arcs} \frac{p_e(\marked)^2}{w_e}-1] \\
+\sum_{\substack{M\in\cert,\; S\subseteq[\vars]\\S\ne\emptyset,\; S\notin M}} \nu_{M,S}
\sD[\sum_{\substack{e\in\arcs\\\target(e)=S}} p_e(M) - \sum_{\substack{e\in\arcs\\\source(e)=S}} p_e(M)] +
\sum_{M\in\cert} \nu_{M,\emptyset}\sD[1-\sum_{\substack{e\in\arcs\\\source(e)=\emptyset}} p_e(M)].
\end{multline}
Here $\mu_M\ge 0$, and $\nu_{M,S}$ are arbitrary.  Let us first minimise over $p_e(M)$.  Each $p_e(M)$ appears three times in~\refeqn{lagrangian} with the following coefficients:
\[
p_e(M)^2\frac{\mu_M}{w_e} + p_e(M)\sA[\nu_{M,\target(e)}-\nu_{M,\source(e)}],
\]
where we assume $\nu_{M,S}=0$ for all $S\in M$.  The minimum of this expression clearly is 
\[
-\frac{w_e}{4\mu_M} \sA[\nu_{M,\target(e)}-\nu_{M,\source(e)}]^2.
\]
Plugging this into~\refeqn{lagrangian} yields
\begin{equation}
\label{eqn:lagrangian2}
\sum_{M\in\cert} (\nu_{M,\emptyset}-\mu_M) + 
\sum_{e\in\arcs} w_e\s[1 - \sum_{M\in\cert} \frac{\sA[\nu_{M,\target(e)}-\nu_{M,\source(e)}]^2}{4\mu_M}].
\end{equation}
Define $\alpha_S(M)$ as $\nu_{M,S}/(2\sqrt{\mu_M})$.  Minimising~\refeqn{lagrangian2} over $w_e$, the second term disappears if condition~\refeqn{alphaOne} is satisfied.  The first term is
\[
\sum_{M\in\cert} (2\sqrt{\mu_M}\alpha_\emptyset(M) - \mu_M).
\]
We can also maximise over $\mu_M$, that gives the square of~\refeqn{alphaObjective}.
\pfend

\mycommand{alphaz}{\alpha_\emptyset}
Now, we construct feasible solutions to the dual learning graph complexity~\refeqn{learningDual} for some of the certificate structures from \refsec{learning}.  
At first, we get the following matching lower bound to \rf(prp:learningCollisionUpper):
\begin{prp}
\label{prp:learningCollisionLower}
The learning graph complexity of the hidden shift (and, hence, the set equality and the collision) certificate structure is $\Omega(\vars^{1/3})$.
\end{prp}

\pfstart
We show that either we load too many elements, or the speciality of loading the second element of a pair in $M$ is too high.
Let $\cert$ be the hidden shift certificate structure.  Define 
\[
\alpha_S(M) = 
\begin{cases}
\vars^{-1/2}\max\sfig{\vars^{1/3}-|S|,\;	0}, & S\notin M; \\
0,&\mbox{otherwise.}
\end{cases}
\]
It is easy to check that the objective value~\refeqn{alphaObjective} is $\Omega(\vars^{1/3})$.  The condition~\rf(eqn:alphaZero) is trivial.  We now prove that the condition~\rf(eqn:alphaOne) holds up to a constant factor.  
Fix any $S\subset[\vars]$ and $j\notin S$.  If $|S|\ge \vars^{1/3}$, then $\alpha_S(M) = \alpha_{S\cup\{j\}}(M) =  0$ for all $M$, and we are done.  So, assume $|S|< \vars^{1/3}$.

There are $m$ choices of $M$ in $\cert$.  If $S\cup\{j\}\notin M$, then the value of $\alpha_S(M)$ decreases by $1/\sqrt{\vars}$ as $|S|$ increases by 1.  If $M$ is such that $S\notin M$ and $S\cup\{j\}\in M$, then $\alpha_S(M)$ changes by at most $\vars^{-1/6}$.  But there are at most $\vars^{1/3}$ such choices of $M$.  Thus,
\[
\sum_{M\in\cert} (\alpha_S(M) - \alpha_{S\cup\{j\}}(M))^2 \le 
m\cdot \frac1n + \vars^{1/3}\cdot \vars^{-1/3} = O(1).
\]
For the set equality and collision certificate structures, just assign $\alpha_S(M)=0$ for all $M$ that do not belong to the hidden shift certificate structure.
\pfend

Next, we obtain a matching lower bound to \rf(prp:learningKSubsetUpper).

\begin{prp}
\label{prp:learningKSubsetLower}
The learning graph complexity of the $k$-subset certificate structure on $\vars$ variables is $\Omega(\vars^{k/(k+1)})$.
\end{prp}

\pfstart
Let $\cert$ be the $k$-subset certificate structure.  Define $\alpha_S(M)$ as
\[
{\vars\choose k}^{-1/2}\max\sfig{\vars^{k/(k+1)} - |S|,\; 0}
\]
if $S\notin M$, and as 0 otherwise.  
Let us prove that~\refeqn{alphaOne} holds up to a constant factor.  Take any $S\subset[\vars]$ and let $j$ be any element not in $S$.  Again, we may assume $|S|\le \vars^{k/(k+1)}$.  There are ${\vars\choose k}$ choices of $M$.  If $S\cup\{j\}\notin M$, then the value of $\alpha_S(M)$ changes by ${\vars\choose k}^{-1/2}$ as the size of $|S|$ increases by 1.  Also, there are at most ${|S|\choose k-1}\le \vars^{k(k-1)/(k+1)}$ choices of $M\in\cert$ such that $S\notin M$ and $S\cup\{j\}\in M$.  For each of them, the value of $\alpha_S(M)$ changes by at most ${\vars\choose k}^{-1/2}\vars^{k/(k+1)}$.  Thus,
\[
\sum_{M\in\cert} (\alpha_S(M) - \alpha_{S\cup\{j\}}(M))^2 \le 
{\vars\choose k}^{-1} 
\sk[{\vars\choose k}\cdot1 + \vars^{k(k-1)/(k+1)}\vars^{2k/(k+1)} ] 
= O(1).
\]

On the other hand, for the objective value~\refeqn{alphaObjective}, we have
\[
\sqrt{\sum_{M\in\cert} \alphaz(M)^2} = \vars^{k/(k+1)}.\qedhere
\]
\pfend

\subsection{Triangle Certificate Structure}
The point of this section is to show that the learning graph from \rf(prp:learningTriangleUpper) for the triangle certificate structure is essentially tight.  
One can see that the proofs of the lower bounds in Propositions~\ref{prp:learningCollisionLower} and~\ref{prp:learningKSubsetLower} essentially proceed by showing, in a formal way, that all possible strategies of constructing the upper bound fail.  The complete proofs are short because of the simplicity of the corresponding certificate structures.  But even for the triangle certificate structure, the proof becomes rather bulky, and we lose a logarithmic factor compared to the upper bound, \rf(prp:learningTriangleUpper).

For greater clarity, we prove the lower bound in two steps, by showing an $\Omega(\vertices^{5/4})$ lower bound at first.

\begin{prp}
\label{prp:learningTriangle}
The learning graph complexity of the triangle certificate structure on $\vertices$ vertices is $\Omega(\vertices^{5/4})$.
\end{prp}

\pfstart
The idea of the proof is straightforward: we show that either we load too many edges incident to the vertices of the triangle, or the speciality of loading the last edge of the triangle is too high.

Let $\cert$ be the triangle certificate structure.  
Fix some $M\in\cert$ given by a triangle $abc$, and let $S$ be the set of loaded variables (edges of the input graph).
Let $T$ denote the subset of $\{a,b,c\}$ that consists of vertices incident to at least one not-yet-loaded edge of the triangle.  (Thus, $T$ is either empty, or consists of 2 or 3 vertices.)  Denote
\[
d = \max\left\{ \min_{x\in T} \deg_S x-\sqrt{\vertices},\;\; 0 \right\},
\]
where $\deg_S x$ is the number of loaded edges incident to $x$.
Define
\[
\alpha_S(M) = \begin{cases}
0,& ab,ac,bc \in S; \\
\vertices^{-3/2}\max\sfig{\vertices^{5/4} - |S| - \vertices^{3/4}d ,\; 0},&\text{otherwise.}
\end{cases}
\]
The objective value and~\rf(eqn:alphaZero) are clear.  So let us check~\rf(eqn:alphaOne) in various cases.  Assume $S$ and $j\notin S$ are fixed.  We may also assume $|S|<\vertices^{5/4}$.
\begin{itemize}
\itemsep0pt
\item adding the edge $j$ increases $|S|$ by 1.  The change in $\alpha_S(M)$ is $\vertices^{-3/2}$, and $O(\vertices^3)$ choices of $M$ satisfy this condition.  This gives the total contribution of $O(1)$ to the left hand side of~\rf(eqn:alphaOne).  
\item $d$ increases by 1, but $T$ does not change.  We may assume that $d<\sqrt{\vertices}$.  There are two cases.
\begin{itemize}
\itemsep0pt
\item $T=\{a,b,c\}$.  Thus, the degrees (in $S$) of $a$, $b$ and $c$ are at least $\sqrt{\vertices}$.  Then, one of the vertices of the triangle is incident to the new edge, and the two other vertices of the triangle are among at most  $2m^{5/4}/\vertices^{1/2} = O(\vertices^{3/4})$ vertices of high degree.  This gives $O(\vertices^{3/2})$ choices of $M$ that are affected by this change.
\item $T$ omits one of $a$, $b$, $c$.  In this case, one vertex of the triangle is incident to the new edge.  Also, as $d$ changes, our assumption on $d<\sqrt{\vertices}$ shows that the degree of this vertex is at most $2\sqrt{\vertices}$.
 Thus, the vertex omitted from $T$ is among at most $2\sqrt{\vertices}$ of its neighbours, and the second vertex of $T$ is among $O(\vertices^{3/4})$ vertices of high degree.
\end{itemize}
As $\alpha_S(M)$ changes by $\vertices^{-3/4}$, the total contribution is $O(1)$.
\item $j$ is the second edge of the triangle, and $d$ is affected as $T$ changes; or $j$ is the third edge of the triangle.  Again, we may assume that $d$ (before the change of $T$) is less than $\sqrt{\vertices}$.
In this case, two vertices of the triangle are determined by the new edge, and the third one is among at most $2\sqrt{\vertices}$ neighbours of one of the two.  There are $O(\sqrt{\vertices})$ choices of $M$, the change in $\alpha_S(M)$ is at most $\vertices^{-1/4}$, and the total contribution is $O(1)$. \qedhere
\end{itemize}

\pfend

Curiously, \rf(prp:learningTriangle) does not match the upper bound from \rf(prp:learningTriangleUpper), because it {\em is} possible to load many edges incident to the vertices of the triangle.  Consider the learning graph in \rf(tbl:fakeTriangle) where $a,b$ and $c$, as usual, denote the vertices of the triangle.  Its analysis is performed in \rf(tbl:fakeTriangleParameters).  If we set $r_2=\vertices^{3/4}$ and $r_1=r_3 = \vertices^{1/2}$, then the complexity of each stage, except stage VII, is $O(\vertices^{5/4})$.  Also, before stage V, each vertex of the triangle is incident to at least $\sqrt{\vertices}$ loaded edges.  

\begin{table}[htb]
\[\begin{tabular}{rp{11cm}}
\hline
I& Take disjoint subsets $B,C\subseteq [\vertices]\setminus\{a,b,c\}$ of sizes $r_2$ and $r_3$, respectively, uniformly at random, and load all the edges between $B$ and $C$\\
II& Add $a$ to $B$ and load all the edges between $a$ and $C$\\
III& Add $b$ to $B$ and load all the edges between $b$ and $C$\\
IV& Choose, uniformly at random, a subset $A\subseteq B\setminus\{a,b\}$ of size $r_1$ and load all the edges between $c$ and $A$\\
V& Load the edge $ac$, and add $a$ to $A$\\
VI& Load the edge $bc$\\
VII& Load the edge $ab$\\
\hline
\end{tabular}\]
\caption{An illustrative learning graph for the triangle certificate structure. }
\label{tbl:fakeTriangle}
\end{table}

\begin{table}[htb]
\[\begin{tabular}{r|ccccccc}
{Stage}& I & II & III & IV & V & VI & VII\\
\hline
{Length} & $r_2 r_3$ & $r_3$ & $r_3$ & $r_1$ & 1 & 1 & 1\\
{Speciality} & 1 & $\vertices$ & $\vertices^2/r_2$ & $\vertices^3/r_2^2$ & $\vertices^3/r_2$ & $\vertices^3/r_1$ & ? \\
\hline
\end{tabular}\]
\caption{Parameters of the stages of the learning graph in \TeXBug{\rf(tbl:fakeTriangle)}.  All expressions are given up to constant multiplicative factors. }
\label{tbl:fakeTriangleParameters}
\end{table}

To get a small complexity of the last stage, we would like to say that $c$ is hidden in $C$, but it is not true, because $c$ is the only vertex in $C$ that is not connected to the vertices in $B\setminus A$.  If we were able to erase the edges between $C$ and $B\setminus (A\cup\{b\})$, the complexity of stage VII would also become equal to $O(\vertices^{5/4})$.  But, we are not able to do this.  The analysis in \rf(prp:learningTriangle) does not take into account that we could have extra edges we would like to get rid of, and it gives a lower bound of $\Omega(\vertices^{5/4})$ because of the learning graph in \rf(tbl:fakeTriangle).  In order to give a tight lower bound, we have to take into account the degrees of the vertices outside the triangle.  Thus, we will be able to catch that the degree of $c$ in \rf(tbl:fakeTriangle) is different from the degrees of the vertices in $C$.  While doing so, we lose a logarithmic factor in the estimate.

\begin{thm}
\label{thm:learningTriangle}
The learning graph complexity of the triangle certificate structure on $\vertices$ vertices is $\Omega(\vertices^{9/7}/\sqrt{\log \vertices})$.
\end{thm}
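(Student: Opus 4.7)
The plan is to construct a feasible solution $\alpha_S(M)$ to the dual program~\rf(eqn:learningDual) that refines the one used in \rf(prp:learningTriangle) by taking into account not only the degrees of the triangle vertices $a,b,c$ in $S$, but also the degrees (in $S$) of vertices \emph{outside} the triangle. The learning graph in \rf(tbl:fakeTriangle) shows that a lone degree penalty on $a,b,c$ is not enough: the vertex $c$ can be distinguished from the other vertices in $C$ precisely because the vertices in $C\setminus\{c\}$ have many additional edges to $B\setminus(A\cup\{b\})$. Thus the lower bound has to exploit the fact that, when many edges incident to some candidate set have been loaded, it becomes cheap to pinpoint a missing vertex, and so the algorithm has paid for this in the length of an earlier stage.

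Concretely, for a triangle $abc$ defining $M\in\cC$, I would let $\alpha_S(M)$ depend on a potential function of the form
\[
\Phi_S(M) = |S| + n^{3/7}\!\!\sum_{v\in\{a,b,c\}}\!\!\deg_S(v)\mathbf{1}[\deg_S(v)\le n^{4/7}] + (\text{penalty from popular non-triangle vertices}),
\]
and set $\alpha_S(M)= n^{-3/2}\max\{n^{9/7}-\Phi_S(M),0\}/\sqrt{\log n}$ when no edge of the triangle is yet loaded, and $0$ otherwise. The third summand is the new ingredient: it charges, via a dyadic decomposition over degree ranges, for each vertex $v$ outside the triangle whose $S$-degree into the loaded neighbourhood of a triangle vertex is unusually large. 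The dyadic decomposition is what forces the $\sqrt{\log n}$ overhead, because one must sum (4.11b) over $O(\log n)$ degree scales and the Cauchy--Schwarz-type argument loses a $\log n$ factor inside the square root.

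The main step is then to verify the constraint~\rf(eqn:alphaOne) for every arc $e$ loading an edge $j=uv$, by a case analysis on which of the following events can happen when $j$ is added: (i) the objective simply drops by $n^{-3/2}$, with $O(n^3)$ affected triangles; (ii) the degree of a triangle vertex increases, in which case at most $O(n^{3/2})$ triangles are affected because the triangle vertex must be among $O(n^{3/7})$ high-degree vertices of $S$; (iii) a non-triangle vertex enters or changes its degree bucket; or (iv) $j$ itself is an edge of the triangle, in which case one vertex is fixed by $j$ and the third is hidden among the vertices of bounded degree. Choosing the coefficients in $\Phi_S$ balanced to $n^{9/7}$ ensures that each case contributes $O(\log n)$ to the left-hand side of~\rf(eqn:alphaOne), and this $\log n$ is absorbed into the $1/\sqrt{\log n}$ normalisation.

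The hard part is case~(iii), which did not appear in \rf(prp:learningTriangle) at all: bounding the total change due to non-triangle vertices requires controlling simultaneously the number of vertices in each dyadic degree range and the number of triangles that ``see'' each such vertex. Here the dyadic decomposition is essential, and one checks that the worst case is tight at the upper bound parameters $r_1=n^{4/7}$, $r_2=n^{5/7}$, $\ell=n^{3/7}$ from \rf(prp:learningTriangleUpper), which is why the resulting bound matches \rf(prp:learningTriangleUpper) up to the $\sqrt{\log n}$ factor. Once constraint~\rf(eqn:alphaOne) is verified, the objective value~\rf(eqn:alphaObjective) is $\sqrt{\sum_{M\in\cC}\alpha_\emptyset(M)^2}\ge \sqrt{n^3}\cdot n^{-3/2}\cdot n^{9/7}/\sqrt{\log n}=\Omega(n^{9/7}/\sqrt{\log n})$, as desired.
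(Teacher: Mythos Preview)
Your proposal has the right large-scale architecture—a feasible solution to the dual~\rf(eqn:learningDual), a penalty beyond $|S|$ and the triangle-vertex degrees, and a dyadic degree decomposition accounting for the $\sqrt{\log n}$ loss—and this matches the paper. But the part you leave unspecified as ``penalty from popular non-triangle vertices'' is exactly where the real work lies, and your description of it does not line up with what is actually needed.

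The paper's construction is \emph{not} a per-vertex penalty for non-triangle vertices with high degree into some loaded neighbourhood. Fix the convention that $j=bc$ is the last triangle edge to be loaded, so the ``hidden'' vertex is $a$. For each dyadic range $[d,2d]$ with $d\ge n^{3/7}$, the paper defines
\[
\nu(S,M)=\sum_{v\in N_S(b)\cap N_S(c)}\tau(\deg_S v),
\]
where $\tau$ is a piecewise-linear bump supported on $[d/2,5d/2]$. This counts \emph{common neighbours of $b$ and $c$} whose degree falls in that range—i.e., vertices that ``look like $a$'' from the viewpoint of $S$. The give-up function for that range is then
\[
g_i(S,M)=n^{-3/14}\,\mu\!\Bigl(\min\!\Bigl\{\tfrac{2\deg_S a}{d},\ \tfrac{\nu(S,M)}{n^{3/7}}\Bigr\}-1\Bigr),
\]
so $g_i$ saturates precisely when $a$ has the right degree \emph{and} there are at least $n^{3/7}$ lookalikes. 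The dichotomy this sets up is the crux: when $bc$ is finally loaded, either $g_i$ has already saturated (jump zero), or $\nu<2n^{3/7}$ and hence there are only $O(n^{3/7})$ candidates for $a$ in that range, so the contribution to~\rf(eqn:alphaOne) is $O(1)$. A generic popularity penalty on non-triangle vertices does not deliver this dichotomy, and it is not clear how your sketch would handle the case of high-degree $a$ in your case~(iv). The separate treatment of the first interval $\deg a\le n^{3/7}$ (where instead $g_i$ saturates once $ab,ac\in S$) is also essential and absent from your outline.

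A few smaller issues. Your clause ``when no edge of the triangle is yet loaded, and $0$ otherwise'' would make $\alpha_S(M)$ drop to $0$ on loading the \emph{first} triangle edge, with $\Theta(n)$ affected triangles and a change of order $n^{-3/14}$ each—far too large; you need $\alpha_S(M)=0$ only once $S\in M$. The indicator $\mathbf{1}[\deg_S(v)\le n^{4/7}]$ sends the degree penalty to zero past the threshold, which is the wrong monotonicity; the paper uses truncation (via $\mu$), not cutoff. And your case~(ii) count is off: if $j$ is incident to $a$ then $a$ is one of two endpoints of $j$, while $b,c$ are unconstrained, so $O(n^2)$ triangles are affected, not $O(n^{3/2})$.
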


\pfstart
 Let $E=\{uv\mid 1\le u < v\le \vertices\}$ be the set of input variables (potential edges of the graph).
Let $\cert$ be the triangle certificate structure.  We will construct a feasible solution to~\refeqn{learningDual} (with $[\vars]$ replaced by $E$) in the form
\begin{equation}
\label{eqn:form}
\alpha_S(M) = 
\begin{cases}
\max\{\vertices^{-3/14} - \vertices^{-3/2}|S| - \sum_{i=1}^k g_i(S,M),\; 0\},& S\notin M;\\
0,&\text{otherwise;}
\end{cases}
\end{equation}
where $g_i(S,M)$ is a non-negative function such that $g_i(\emptyset,M)=0$ and $g_i(S,M)\le \vertices^{-3/14}$.  The value of~\refeqn{alphaObjective} is ${\sqrt{{\vertices\choose 3}}}\; \vertices^{-3/14} = \Omega(\vertices^{9/7})$.  The hard part is to show that~\refeqn{alphaOne} holds up to logarithmic factors.  It is easy to see that $\alpha_S(M) = 0$ if $|S|\ge \vertices^{9/7}$, hence, we will further assume $|S|\le \vertices^{9/7}$.

 For $S\subset E$ and $j\in E\setminus S$, let $F(S,j)$ denote the subset of $M\in\cert$ such that $S\notin M$, but $S\cup\{j\}\in M$.
 We decompose $F(S,j) = F_1(S,j)\sqcup \cdots\sqcup F_k(S,j)$ as follows.
 Each $M\in\cert$ is defined by three vertices $a,b,c$ forming the triangle: $S\in M$ if and only if $ab,ac,bc\in S$.  An input index $j\in E$ satisfies $S\notin M$ and $S\cup\{j\}\in M$ only if $j\in\{ab,ac,bc\}$.  We specify to which of $F_i(S,j)$ an element $M\in F(S,j)$ belongs by the following properties:
\itemstart
\item to which of the three possible edges, $ab$, $ac$ or $bc$, the new edge $j$ is equal, and
\item the range to which the degree in $S$ of the third vertex of the triangle belongs: $[0,\vertices^{3/7}]$, $[\vertices^{3/7}, 2\vertices^{3/7}]$, $[2\vertices^{3/7}, 4\vertices^{3/7}]$, $[4\vertices^{3/7}, 8\vertices^{3/7}]\dots$
\itemend 
Hence, $k\approx 12/7 \log_2 \vertices$.  For notational convenience, let $j=bc$.  Then, the second property is determined by $\deg a = \deg_S a$, the degree of $a$ in the graph with edge set $S$.
 

For $i\in[k]$, we will define $g_i(S,M)$ so that, for all $S\subset E$ of size at most $\vertices^{9/7}$ and $j\in E\setminus S$:
\begin{equation}
\label{eqn:partial1}
\sum_{M\in\cert\setminus F(S,j)} \sA[g_i(S,M)-g_i(S\cup\{j\}, M)]^2 = O(1)
\end{equation}
and
\begin{equation}
\label{eqn:partial2}
\sum_{M\in F_i(S,j)} \sA[\vertices^{-3/14} - g_i(S,M)]^2 = O(1).
\end{equation}
Let $g_0(S,M)=\vertices^{-3/2}|S|$, for which~\refeqn{partial1} holds.
Even more, we will show that the set $K=K(S,j)$ of $i\in[0,k]$ such that~\refeqn{partial1} is non-zero has size $O(1)$.
 Thus, for the left hand side of~\refeqn{alphaOne}, we will have 
\begin{equation*}
\begin{split}
\sum_{M\in\cert} (\alpha_S(M) - \alpha_{S\cup\{j\}}(M))^2 \le &\;
|K| \sum_{i\in K}\sum_{M\in\cert\setminus F(S,j)} \sA[g_i(S,M) - g_i(S\cup\{j\}, M)]^2 \\
& + \sum_{i=1}^k \sum_{M\in F_i(S,j)} \sA[ \vertices^{-3/14} - g_i(S,M) ]^2,
\end{split}
\end{equation*}
where the former term on the right hand size is $O(1)$ and the latter one is $O(\log \vertices)$.
By scaling all $\alpha_S(M)$ down by a factor of $O(\sqrt{\log \vertices})$, we obtain a feasible solution to~\refeqn{learningDual} with the objective value $\Omega(\vertices^{9/7}/\sqrt{\log \vertices})$.

It remains to construct the functions $g_i(S,M)$.  
In the following, let $\mu(x)$ be the median of $0$, $x$, and $1$, i.e., $\mu(x) = \max\{0,\min\{x,1\}\}$.
The first interval of $\deg a$ will be considered separately from the rest.


\paragraph{First interval}
  Assume the condition $\deg a \le \vertices^{3/7}$.  Define
\begin{equation}
\label{eqn:g1}
g_i(S, M) = \begin{cases}
\vertices^{-3/14}\,\mu(2 - \vertices^{-3/7}\deg a),&\text{$ab,ac\in S$;}\\
0,&\text{otherwise.}
\end{cases}
\end{equation}
Clearly, $g_i(\emptyset,M)=0$ and $g_i(S,M)\ge 0$.  There are two cases how $g_i(S,M)$ may be influenced.  We show that the total contribution to~\refeqn{partial1} is $O(1)$.
\itemstart
\item It may happen if $|\{ab,ac\}\cap S| = 1$ and $j\in \{ab,ac\}$, i.e., the transition from the second case of~\refeqn{g1} to the first one happens.  Moreover, $g_1(S,M)$ changes only if $\deg a\le 2\vertices^{3/7}$.  Then $j$ identifies two vertices of the triangle, and the third one is among the neighbours of an endpoint of $j$ having degree at most $2\vertices^{3/7}$.  Thus, the total number of $M$ satisfying this scenario is at most $4\vertices^{3/7}$.  The contribution to~\refeqn{partial1} is at most $O(\vertices^{3/7})(\vertices^{-3/14})^2 =O(1)$.
\item Another possibility is that $ab,ac\in S$ and $\deg a$ changes.  In this case, $a$ is determined as an endpoint of $j$, and $b$ and $c$ are among its at most $2\vertices^{3/7}$ neighbours.  The number of $M$ influenced is $O(\vertices^{6/7})$, and the contribution is $O(\vertices^{6/7})(\vertices^{-9/14})^2 = o(1)$.
\itemend
Finally, we have to show that~\refeqn{partial2} holds.  If $M$ satisfies the condition, then $ab,ac\in S$ and $\deg a\le \vertices^{3/7}$.  In this case, the left hand side of~\refeqn{partial2} is 0.

\paragraph{Other intervals}
  Now assume the condition $d < \deg a\le 2d$ with $d\ge \vertices^{3/7}$. Define a piece-wise linear function $\tau$ as follows
\[
\tau(x) =
\begin{cases}
0,& x<d/2;\\
(2x-d)/d,& d/2\le x<d;\\
1,& d\le x < 2d;\\
(5d-2x)/d,& 2d\le x\le 5d/2;\\
0,& x\ge 5d/2.
\end{cases}
\xygraph{!{0;<1pc,0pc>:}
[d(3)r(5)](
	[ld(.4)]*{0},
	:[u(6)][d(.2)r(1.1)]*{\tau(x)},
	:[r(15)][r(.2)d(.5)]*{x},
	[u(4)](
		[l(.4)]*{1},
		-@{.}[r(14)]
		),
	[d(.8)][r(2.5)]*{\smash{d/2}}[r(2.5)]*{\smash{d}}[r(5)]*{\smash{2d}}[r(2.5)]*{\smash{5d/2}},
	[r(2.5)]-[r(2.5)u(4)](-@{.}[d(4)])-[r(5)](-@{.}[d(4)])-[r(2.5)d(4)]
	)
}
\]
It can be interpreted as a continuous version of the indicator function that a vertex has a right degree.  Define
\[
\nu(S,M) = \sum_{v\in N(b)\cap N(c)} \tau(\deg v),
\]
where the sum is over the common neighbours of $b$ and $c$.  Let
\[
g_i(S,M) = \vertices^{-3/14}\,\mu\sB[\min\sfig{\frac{2\deg a}{d}, \frac{\nu(S,M)}{\vertices^{3/7}} }-1].
\]
Let us consider how $g_i(S, M)$ may change and how this contributes to~\refeqn{alphaOne}.
Now there are three cases how $g_i(S,M)$ may be influenced.  We again show that the total contribution to~\refeqn{partial1} is $O(1)$.

\itemstart
\item It may happen that $j$ is incident to a common neighbour of $b$ and $c$, and thus $\nu(S)$ may change.  This means $b$ and $c$ are among the neighbours of an endpoint of $j$ of degree at most $5d/2$.  Hence, this affects $O(\vertices d^2)$ different $M$.  The contribution is $O(nd^2)(\vertices^{-9/14}/d)^2 = o(1)$.

\item The set $N(b)\cap N(c)$ may increase.  This causes a change in $g_i(S, M)$ only under the following circumstances.  The new edge $j$ is incident to $b$ or $c$.  The second vertex in $\{b,c\}$ is among $\Theta(d)$ neighbours of the second end-point of $j$.  Finally, $\deg a\ge d/2$, that together with $|S|\le \vertices^{9/7}$ implies that there are $O(\vertices^{9/7}/d)$ choices for $a$.  Altogether, the number of $M$ affected by this is $O(\vertices^{9/7})$, and the change in $g_i(S, M)$ does not exceed $\vertices^{-9/14}$.  The contribution is $O(1)$.

\item The degree of $a$ may change.  Let us calculate the number $P$ of possible pairs $b$ and $c$ affected by this.  There is a change in $g_i(S, M)$ only if $b$ and $c$ are connected to at least $\vertices^{3/7}$ vertices of degrees between $d/2$ and $5d/2$.  Denote the set of these vertices by $A$.  Since $|S|\le \vertices^{9/7}$, we have $|A|=O(\vertices^{9/7}/d)$.  

Let us calculate the number of paths of length 2 in $S$ having the middle vertex in $A$.   On one hand,  this number is at least $P\vertices^{3/7}$.  On the other hand, it is at most $O(d^2 |A|) = O(d\vertices^{9/7})$.  Thus, $P = O(d\vertices^{6/7})$.  Since $a$ is determined as an end-point of $j$, the contribution is $O(d\vertices^{6/7})(\vertices^{-3/14}/d)^2 = O(1)$, as $d\ge \vertices^{3/7}$.
\itemend

Finally, $j$ may be the last edge of the triangle.  We know that $\deg a> d$, hence, either $\vertices^{-3/14} - g_i(S,M) = 0$, or $\nu(S,M)\le 2\vertices^{3/7}$, in which case, there are $O(\vertices^{3/7})$ choices of $a$ satisfying the condition.  Hence, the left hand side of~\refeqn{partial2} is $O(\vertices^{3/7}) (\vertices^{-3/14})^2 = O(1)$.

If $g_i(S,M) - g_i(S\cup\{j\}, M)\ne 0$, then, in the first three cases, the value of $d$, up to a small ambiguity, may be determined from the degree of one of the end-points of $j$.  Hence, the set $K=K(S,j)$, as stated previously in the proof, exists.
\pfend


\section{Lower Bound}
\label{sec:lower}
This section is devoted to finishing the proof of \rf(thm:certificates) and proving \rf(thm:boundedly).  The results are strongly connected: In the second one we prove a stronger statement from stronger premisses.  As a consequence, the proofs also have many common elements.
In the proofs, we define a number of matrices and argue about their spectral properties.  For convenience, we describe the main parameters of the matrices, such as the labelling of their rows and columns, as well as their mutual relationships in one place, \refsec{outline}.  In \refsec{common}, we state the intermediate results important to both Theorems~\ref{thm:certificates} and~\ref{thm:boundedly}.  In~\refsec{boundedly}, we finish the proof of \refthm{boundedly}.  In \refsec{fourier}, we recall the definition and main properties of the Fourier basis, and define the important notion of the Fourier bias.  Finally, in \refsec{general}, we prove \refthm{certificates}.

\subsection{Outline}
\label{sec:outline}
Let us briefly outline how Theorems~\ref{thm:certificates} and~\ref{thm:boundedly} are proven.  Let $\cert$ denote the certificate structure.  Let $\alpha_S(M)$ satisfy~\refeqn{learningDual}, and be such that~\refeqn{alphaObjective} equals the learning graph complexity of $\cert$.  We define an explicit function $f\colon\cD\to \{0,1\}$ with $\cD\subseteq [q]^\vars$ having the objective value~\refeqn{alphaObjective} of program~\refeqn{learningDual} as a lower bound on its quantum query complexity.  The latter is proven using the adversary bound, \refthm{adv}.  For that, we define a number of matrices, as illustrated in \reffig{matrices}.

\mycommand{tGamma}{\widetilde{\Gamma}}
\mycommand{tG}{\widetilde{G}}
\mycommand{hG}{\widehat{G}'}
\mycommand{hGamma}{\widehat{\Gamma}'}

\begin{figure}[tbh]
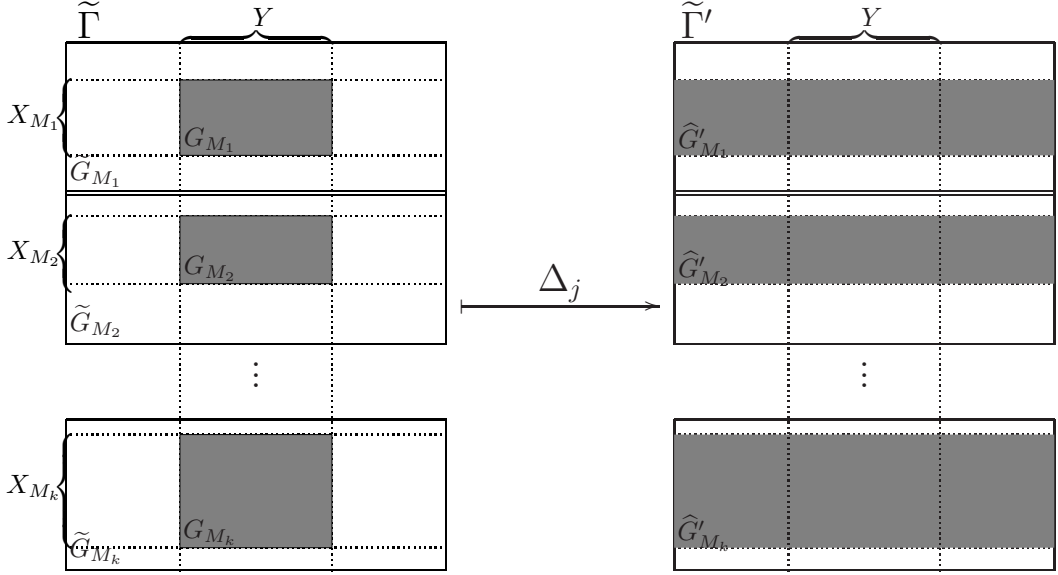

\release{
\[
\xy 
="st"
 @={+(0,0), -(0,19.7),-(0,0.6), -(0,19.7), -(0,10), -(0,20)},
 @@{; p +(50,0) **@{-}},
 "st" 
 @={+(0,0), +(50,0)},
 @@{; p -(0,40) **@{-}, -(0,10); p-(0,20) **@{-}},
 "st" 
 @={+(15,0)="d", +(20,0)="d2"},
 @@{; p -(0,70) **@{.}},
 "st"+(25,0.4)."d" ; "d2" **\frm{^\}}; +(1,3) *{Y},
 "st" 
 @={-(0,5)="a", -(0,10)="a2", "st"-(0,23)="b", -(0,9)="b2", "st"-(0,52)="c", -(0,15)="c2"},
 @@{; p +(50,0) **@{.}},
 "st"+(3,3) *\txt{\Large$\tGamma$},
 "a"-(0.2,5)."a" ; "a2" **\frm{\{} ; -(4,0) *{X_{M_1}},
 "b"-(0.2,5)."b" ; "b2" **\frm{\{} ; -(4,0) *{X_{M_2}},
 "c"-(0.2,7)."c" ; "c2" **\frm{\{} ; -(4,0) *{X_{M_k}},
 "a"+(15,0); "a2"+(35,0).p *[F*:gray]\frm{-},
 "b"+(15,0); "b2"+(35,0).p *[F*:gray]\frm{-},
 "c"+(15,0); "c2"+(35,0).p *[F*:gray]\frm{-},
 "st"+(4,-17) *{\tG_{M_1}}, -(0,20) *{\tG_{M_2}}, -(0,30) *{\tG_{M_k}},
 "a2"+(19,2) *{G_{M_1}}, "b2"+(19,2) *{G_{M_2}}, "c2"+(19,2) *{G_{M_k}},
 "st"+(25,-43) *\txt{\Large$\vdots$},
"st"+(80,0)="st",
 @={+(0,0), -(0,19.7),-(0,0.6), -(0,19.7), -(0,10), -(0,20)},
 @@{; p +(50,0) **@{-}},
 "st" 
 @={+(0,0), +(50,0)},
 @@{; p -(0,40) **@{-}, -(0,10); p-(0,20) **@{-}},
 "st" 
 @={-(0,5)="a", -(0,10)="a2", "st"-(0,23)="b", -(0,9)="b2", "st"-(0,52)="c", -(0,15)="c2"},
 @@{; p +(50,0) **@{.}},
 "st"+(3,3) *\txt{\Large$\tGamma'$},
 "a2"+(50,0)."a" *[F*:gray]\frm{},
 "b2"+(50,0)."b" *[F*:gray]\frm{},
 "c2"+(50,0)."c" *[F*:gray]\frm{},
 "st" 
 @={+(15,0)="d", +(20,0)="d2"},
 @@{; p -(0,70) **@{.}},
 "st"+(25,0.4)."d" ; "d2" **\frm{^\}}; +(1,3) *{Y},
 "a2"+(4,2) *{\hG_{M_1}}, "b2"+(4,2) *{\hG_{M_2}}, "c2"+(4,2) *{\hG_{M_k}},
 "st"+(25,-43) *\txt{\Large$\vdots$},
"st"-(28,35); p+(13,3)*\txt{\Large$\Delta_j$} ;
\ar @{|->} +(26,0)
\endxy
\]

\caption{The relationships between matrices used in \refsec{lower}.  The parts marked in grey form the matrix $\Gamma$ on the left, and $\hGamma$ on the right.  Note that they are {\em not} submatrices of $\tGamma$ and $\tGamma'$, respectively: 
They have additional multiplicative factor as specified in~\refeqn{GammaElements} and~\refeqn{hGM}.
}}
\label{fig:matrices}
\end{figure}

\paragraph{Matrix $\tGamma$} 
At first, we construct a matrix $\tGamma$ satisfying the following properties.  Firstly, it has rows labelled by the elements of $[q]^\vars\times\cert$, and columns labelled by the elements of $[q]^\vars$.  Thus, if we denote $\cert = \{\marked_1,\dots,\marked_k\}$, the matrix $\tGamma$ has the following form
\begin{equation}
\label{eqn:tGamma}
\tGamma = 
\begin{pmatrix}
\tG_{\marked_1}\\ \tG_{\marked_2}\\\vdots\\ \tG_{\marked_{k}}
\end{pmatrix},
\end{equation}
where each $\tG_\marked$ is an $[q]^\vars\times[q]^\vars$-matrix.  Next, $\|\tGamma\|$ is at least the objective value~\refeqn{alphaObjective}.  And finally, for each $j\in[\vars]$, there exists $\tGamma'$ such that $\tGamma\dar\tGamma'$ and $\|\tGamma'\|\le 1$.  
The matrix $\tGamma'$ has a decomposition into blocks $\tG'_M$ similar to~\refeqn{tGamma}.

Thus, $\tGamma$ has a good value of~\refeqn{adv}.  But, we cannot use it, because it is not an adversary matrix: It uses all possible inputs as labels of both rows and columns.  However, due to the specific way $\tGamma$ is constructed, we will be able to transform $\tGamma$ into a true adversary matrix $\Gamma$ such that the value of~\refeqn{adv} is still good.  Before we describe how we do it, let us outline the definition of the function $f$.

\mycommand{sz}{\ell}
\mycommand{fii}{f^{-1}}
\paragraph{Defining the function}
Let $M$ be an element of the certificate structure $\cert$.  Let $A_{M}^{(1)},\dots,A_{M}^{(\sz(M))}$ be all the inclusion-wise minimal elements of $M$.  (In a boundedly-generated certificate structure, $M$ has only one inclusion-wise minimal element $\certM$.)  For each $A_{M}^{(i)}$, we choose an orthogonal array $T_{M}^{(i)}$ of length $|A_{M}^{(i)}|$ over the alphabet $[q]$, and define
\begin{equation}
\label{eqn:Xm}
X_M = \sfig{x\in[q]^\vars\mid \mbox{$x_{A_{M}^{(i)}}\in T_{M}^{(i)}$ for all $i\in[\sz(M)]$} }.
\end{equation}
The orthogonal arrays are chosen so that $X_M$ is non-empty and satisfies the following {\em orthogonality property}:
\begin{equation}
\label{eqn:orthogonality}
\forall S\in 2^{[\vars]}\setminus M\;\; \forall z\in [q]^S\;:\; \abs|\strut\{x\in X_M \mid x_S = z \}| = |X_M|/q^{|S|}.
\end{equation}
For boundedly-generated certificate structures, this property is satisfied automatically.

The set of positive inputs is defined by $\fii(1) = \bigcup_{M\in\cert} X_M$.  The set of negative inputs is defined by
\begin{equation}
\label{eqn:fii}
\fii(0) = \sfig{y\in[q]^\vars\mid \mbox{$y_{A_{M}^{(i)}}\notin T_{M}^{(i)}$ for all $M\in\cert$ and $i\in[\sz(M)]$} }.
\end{equation}
It is easy to see that $f$ has $\cert$ as its certificate structure.  The parameters will be chosen so that $|\fii(0)|=\Omega(q^\vars)$.  

\paragraph{Remaining matrices}
Let us define $X = \{(x,M)\in[q]^\vars\times\cert \mid x\in X_M\}$ and $Y = \fii(0)$.  The matrix $\Gamma$ is an $X\times Y$ matrix defined by
\begin{equation}
\label{eqn:GammaElements}
\Gamma\elem[(x,M), y] = \sqrt{\frac{q^\vars}{|X_M|}}\; \tGamma\elem[(x,M),y].
\end{equation}
Thus, $\Gamma$ consists of blocks $G_M$, like in~\refeqn{tGamma}, where $G_M = \sqrt{q^\vars/|X_M|}\; \tG_M\elem[X_M,Y]$.  (The latter notation stands for the submatrix formed by the specified rows and columns).  We also show that $\|\Gamma\|$ is not much smaller than $\|\tGamma\|$.

The matrix $\Gamma'$ is obtained similarly from $\tGamma'$.  It is clear that $\tGamma\dar\tGamma'$ implies $\Gamma\dar\Gamma'$.  We show that the norm of $\Gamma'$ is small by showing that $\|\hGamma\| = O(\|\tGamma'\|)$ where $\hGamma$ is an $X\times [q]^\vars$-matrix with 
\[
\hGamma\elem[(x,M),y] = \sqrt{\frac{q^\vars}{|X_M|}}\; \tGamma'\elem[(x,M),y].
\]
As $\Gamma'$ is a submatrix of $\hGamma$ and $\|\tGamma'\|\le 1$, we obtain that $\|\Gamma'\| = O(1)$ as required.  We denote the blocks of $\hGamma$ by $\hG_M$.  That is, 
\begin{equation}
\label{eqn:hGM}
\hG_M = \sqrt{\frac{q^\vars}{|X_M|}}\; \tG_M'\elem[X_M,{[q]^\vars}].
\end{equation}

\subsection{Common Parts of the Proofs}
\label{sec:common}

Let $e_0,\dots,e_{q-1}$ be an orthonormal basis of $\C^q$ such that $e_0=1/\sqrt{q}(1,\dots,1)$.  Denote $E_0 = e_0e_0^*$ and $E_1 = \sum_{i>0} e_ie_i^*$.  These are $q\times q$ matrices. All entries of $E_0$ are equal to $1/q$, and the entries of $E_1$ are given by
\begin{equation}
\label{eqn:E1entries}
E_1\elem[x,y] = \begin{cases}
1-1/q,& x=y;\\
-1/q,& x\ne y.
\end{cases}
\end{equation}
For a subset $S\subseteq[\vars]$, let $E_S$ denote $\bigotimes_{j\in[\vars]} E_{s_j}$ where $s_j=1$ if $j\in S$, and $s_j=0$ otherwise.  These matrices are orthogonal projectors:
\begin{equation}
\label{eqn:ESOrthogonal}
E_SE_{S'} =
\begin{cases}
E_S,& S=S'\\
0,&\text{otherwise.}
\end{cases}
\end{equation}

We define the matrices $\tG_M$ from~\refeqn{tGamma} by
\begin{equation}
\label{eqn:tG}
\tG_\marked = \sum_{S\subseteq[\vars]} \alpha_S(\marked) E_S,
\end{equation}
where $\alpha_S(M)$ are as in~\refeqn{learningDual}. 

\begin{lem}
\label{lem:GammaNorm}
If $\tGamma$ and $\Gamma$ are defined as in \refsec{outline}, all $X_M$ satisfy the orthogonality property~\refeqn{orthogonality} and $|Y|=\Omega(q^\vars)$, then 
\begin{equation}
\label{eqn:GammaNorm}
\|\Gamma\| = \Omega\sD[\sqrt{\sum_{M\in\cert} \alphaz(M)^2}].
\end{equation}
\end{lem}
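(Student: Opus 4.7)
The plan is to exhibit explicit unit test vectors $\delta' \in \C^X$ and $\eta' \in \C^Y$ whose bilinear form with $\Gamma$ is bounded below by the right-hand side of~\rf(eqn:GammaNorm). Guided by the observation that $E_\emptyset = q^{-\vars} J$ (the all-ones matrix scaled) picks out the ``uniform mode'', I will take
\[
\delta'\elem[(x,M)] \;=\; \frac{\alpha_\emptyset(M)}{\sqrt{|X_M|}\;\sqrt{\sum_{M'\in\cert}\alpha_\emptyset(M')^2}}, \qquad \eta'_y \;=\; \frac{1}{\sqrt{|Y|}},
\]
both of which are easily seen to be unit vectors. Expanding and using~\rf(eqn:GammaElements), the bilinear form becomes
\[
\delta'^*\Gamma\eta' \;=\; \frac{1}{\sqrt{\sum_{M'}\alpha_\emptyset(M')^2}}\,\frac{1}{\sqrt{|Y|}}\sum_{M\in\cert} \sqrt{\frac{q^\vars}{|X_M|}}\cdot\frac{\alpha_\emptyset(M)}{\sqrt{|X_M|}}\,\mathbf{1}_{X_M}^*\tG_M\mathbf{1}_Y,
\]
so everything reduces to evaluating $\mathbf{1}_{X_M}^*\tG_M\mathbf{1}_Y = \sum_{S\subseteq[\vars]} \alpha_S(M)\,\mathbf{1}_{X_M}^* E_S \mathbf{1}_Y$.

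The key step is to show that only the $S=\emptyset$ term survives. Two cases: for $S\in M$ the vanishing is free of charge by~\rf(eqn:alphaZero). For $S\notin M$ with $S\neq\emptyset$, I will prove $E_S\mathbf{1}_{X_M}=0$ using the orthogonality property~\rf(eqn:orthogonality). The argument is a direct Fourier-style computation: a generic vector $v \in \im(E_S) = V_S$ has the form $v(x)=q^{-|\bar S|/2}\prod_{j\in S} g_j(x_j)$ with each $g_j$ orthogonal to $e_0$ (hence of mean zero), so that
\[
\ip<v, \mathbf{1}_{X_M}> \;=\; q^{-|\bar S|/2}\sum_{z\in[q]^S} \bigl|\{x\in X_M\mid x_S=z\}\bigr|\,\prod_{j\in S} g_j(z_j) \;=\; \frac{|X_M|}{q^{|S|+|\bar S|/2}} \prod_{j\in S} \sum_{a\in[q]} g_j(a) \;=\; 0,
\]
where the crucial reduction to a constant count uses precisely~\rf(eqn:orthogonality). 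This pins $\mathbf{1}_{X_M}^*E_S\mathbf{1}_Y=0$ for all nonempty $S\notin M$.

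With only $S=\emptyset$ remaining, $\mathbf{1}_{X_M}^* E_\emptyset \mathbf{1}_Y = q^{-\vars}|X_M||Y|$, so
\[
\mathbf{1}_{X_M}^*\tG_M\mathbf{1}_Y \;=\; \alpha_\emptyset(M)\, q^{-\vars}|X_M||Y|.
\]
Plugging back, the $|X_M|$ factors collapse and the $q^\vars$ factors cancel, yielding
\[
\delta'^*\Gamma\eta' \;=\; \sqrt{\sum_{M\in\cert}\alpha_\emptyset(M)^2}\;\cdot\;\sqrt{|Y|/q^\vars},
\]
which, by the assumption $|Y|=\Omega(q^\vars)$, gives the required bound $\|\Gamma\|\ge |\delta'^*\Gamma\eta'| = \Omega\bigl(\sqrt{\sum_M\alpha_\emptyset(M)^2}\bigr)$. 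The only genuinely delicate point is the Fourier-orthogonality calculation above; everything else is bookkeeping with the scaling factors introduced in~\rf(eqn:GammaElements).
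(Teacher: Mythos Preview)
Your proof is correct and follows essentially the same approach as the paper: choose the same test vectors (uniform on $Y$, and $\alpha_\emptyset(M)/\sqrt{|X_M|}$ on $X$), reduce to evaluating $\mathbf{1}_{X_M}^*\tG_M\mathbf{1}_Y$, and use~\rf(eqn:alphaZero) together with the orthogonality property~\rf(eqn:orthogonality) to kill all $S\ne\emptyset$ terms. The only cosmetic difference is that you prove the slightly stronger statement $E_S\mathbf{1}_{X_M}=0$ for nonempty $S\notin M$, whereas the paper directly computes the scalar $\summa(E_S\elem[X_M,Y])=0$; both arguments amount to the same Fourier calculation.
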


\mycommand{summa}{\mathrm{s}}
\mycommand{gm}{G_\marked}
\mycommand{tgm}{\tG_\marked}
\pfstart
Recall that $\gm = \sqrt{q^\vars/|X_M|}\tgm\elem[X_M,Y]$, hence, by~\refeqn{tG}:
\[
\gm = \sqrt{\frac{q^\vars}{|X_M|}}\; \alphaz(M) E_0^{\otimes \vars}\elem[X_M,Y] + \sqrt{\frac{q^\vars}{|X_M|}} 
\sum_{S\ne\emptyset} \alpha_S(\marked) E_S\elem[X_M,Y].
\]
Let us calculate the sum $\summa(G_M)$ of the entries of $G_M$.
In the first term, each entry of $E_0^{\otimes \vars}$ equals $q^{-\vars}$.  There are $|X_M|$ rows and $|Y|$ columns in the matrix, hence, the sum of the entries of the first term is $\sqrt{|X_M|/q^\vars}\;|Y|\alphaz(M)$.  

We claim that, in the second term, $\summa\s[\strut \alpha_S(M){E_S\elem[X_M,Y]}]=0$ for all $S\ne \emptyset$.  Indeed, if $S\in M$, then $\alpha_S(M)=0$ by~\refeqn{alphaZero}.  Otherwise,
\[
\summa(E_S\elem[X_M,Y]) = \sum_{y\in Y}\sum_{x\in X_M} E_S\elem[x, y] = 
q^{|S|-\vars} \sum_{y\in Y}\sum_{x\in X_M} E_1^{\otimes|S|} \elem[x_S, y_S] = 
\frac{|X_M|}{q^{\vars}} \sum_{y\in Y} \sum_{z\in [q]^S} E_1^{\otimes|S|} \elem[z, y_S] = 0.
\]
(On the third step, the orthogonality condition~\refeqn{orthogonality} is used.  On the last step, we use that the sum of the entries of every column of $E_1^{\otimes k}$ is zero if $k>0$.)  Summing up,
\[
\summa(\gm) = \sqrt{\frac{|X_M|}{q^\vars}}\;|Y|\alphaz(M).
\]

We are now ready to estimate $\norm|\Gamma|$.  Define two unit vectors $u\in\R^X$ and $v\in\R^Y$ by 
\[
u\elem[(x,M)] = \frac{\alphaz(M)}{\sqrt{|X_M|\sum_{M\in\cert} \alphaz(M)^2}}\qquad\text{and}\qquad
v\elem[y] = \frac1{\sqrt{|Y|}}
\]
for all $(x,M)\in X$ and $y\in Y$.  Then,
\[
\norm|\Gamma| \ge u^*\Gamma v = \frac{\sum_{M\in\cert} \alphaz(M)\summa(G_M)}{\sqrt{|X_M|\;|Y|\sum_{M\in\cert} \alphaz(M)^2}} = \sqrt{\frac{|Y|}{q^\vars}\sum_{M\in\cert} \alphaz(M)^2} = \Omega\sD[\sqrt{\sum_{M\in\cert} \alphaz(M)^2}] .\qedhere
\]
\pfend

In the remaining part of this section, we define the transformation $\tGamma\dar\tGamma'$ and state some of the properties of $\tGamma'$ that will be used in the subsequent sections.
Using~\refeqn{E1entries}, we can define the action of $\Delta$ on $E_0$ and $E_1$ by 
\[
E_0\stackrel{\Delta}{\longmapsto} E_0\qquad\text{and}\qquad E_1\stackrel{\Delta}{\longmapsto} -E_0.
\]
We define $\tGamma'$ by applying this transformation to $E_0$ and $E_1$ in the $j$th position in the tensor product of~\refeqn{tG}.  The result is again a matrix of the form~\refeqn{tGamma}, but with each $\tgm$ replaced by
\begin{equation}
\label{eqn:gmPrim}
\tgm' = \sum_{S\subseteq[\vars]} \beta_S(M)E_S,
\end{equation}
where $\beta_S(M) = \alpha_S(M) - \alpha_{S\cup\{j\}}(M)$.  In particular, $\beta_S(M)=0$ if $j\in S$ or $S\in M$.  Thus,
\begin{equation}
\label{eqn:GammaPrim}
(\tGamma')^*\tGamma' = \sum_{M\in\cert} (\tgm')^*\tgm' = \sum_{S\in 2^{[\vars]}} \sB[\sum_{M\in\cert} \beta_S(M)^2] E_S.
\end{equation}
In particular, we obtain from~\refeqn{alphaOne} that $\|\tGamma'\|\le 1$.

\subsection{Boundedly-Generated Certificate Structures}
\label{sec:boundedly}
In this section, we finish the proof of \refthm{boundedly}.  In the settings of the theorem, the orthogonal arrays $T_M^{(i)}$ in~\refeqn{Xm} are already specified.  Since each $M\in\cert$ has only one inclusion-wise minimal element $\certM$, we drop all upper indices $(i)$ in this section.  

From the statement of the theorem, we have $|X_M| = q^{\vars-1}$, in particular, they are non-empty.  Also, $X_M$ satisfy the orthogonality property~\refeqn{orthogonality}, and, by~\refeqn{fii}, we have
\begin{equation}
\label{eqn:Ysize}
|Y| = \absC|[q]^\vars\setminus\bigcup_{M\in\cert} X_M| \ge q^\vars - \sum_{M\in\cert} |X_M| = q^\vars - |\cert|q^{\vars-1} \ge \frac{q^\vars}2.
\end{equation}
Thus, the conditions of \reflem{GammaNorm} are satisfied, and~\refeqn{GammaNorm} holds.

Recall from \refsec{outline} that in order to estimate $\|\Gamma'\|$ we consider the matrix $\hGamma$.  The matrix $\Gamma'$ is a submatrix of $\hGamma$, hence, it suffices to estimate $\|\hGamma\|$.  Let $k=\max_{M\in\cert} |\certM|$.  By \refdefn{boundedly}, $k=O(1)$.  

Fix an arbitrary order of the elements in each $\certM = \{a_{M,1},\dots,a_{M,|\certM|}\}$, and let $L_{M,i}$, where $M\in\cert$ and $i\in[k]$, be subsets of $2^{[\vars]}$ satisfying the following properties:
\itemstart
\item for each $M$, the set $2^{[\vars]}\setminus M$ is the disjoint union $L_{M,1}\sqcup\cdots\sqcup L_{M,k}$;
\item for each $M$ and each $i\le |\certM|$, all elements of $L_{M,i}$ omit $a_{M,i}$;
\item for each $M$ and each $i$ such that $|\certM|<i\le k$, the set $L_{M,i}$ is empty.
\itemend
Recall that, if $S\subseteq[\vars]$ and $(s_j)$ is the corresponding characteristic vector, $E_S = \bigotimes_{j\in[\vars]} E_{s_j}$.  The main idea behind defining $L_{M,i}$s is as follows.

\begin{clm}
\label{clm:LMi}
If $S,S'\in L_{M,i}$, then
\[
(E_S\elem[X_M,{[q]^\vars}])^* (E_{S'}\elem[X_M,{[q]^\vars}]) = 
\begin{cases}
E_S/q,& S=S';\\
0,&\text{\rm otherwise.}
\end{cases}
\]
\end{clm}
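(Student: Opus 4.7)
\pfstart[Proof plan for Claim \ref{clm:LMi}]
The plan is to compute the $(y,y')$ entry of the product directly and then invoke the orthogonality relation~\refeqn{ESOrthogonal}. Expanding entrywise,
\[
\bigl((E_S\elem[X_M,{[q]^\vars}])^*(E_{S'}\elem[X_M,{[q]^\vars}])\bigr)\elem[y,y']
\;=\;\sum_{x\in X_M} E_S\elem[x,y]\,E_{S'}\elem[x,y'].
\]
The first step is to split the coordinate set $[\vars]$ into three regions: $A_M\setminus\{a_{M,i}\}$, the singleton $\{a_{M,i}\}$, and $[\vars]\setminus A_M$. Write $\alpha,\beta,\gamma$ for the corresponding parts of $x$. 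Since $S,S'\in L_{M,i}$, neither contains $a_{M,i}$, so the factor of $E_0$ sits in the $a_{M,i}$ slot of both $E_S$ and $E_{S'}$; it contributes $E_0\elem[\beta,y_{a_{M,i}}]E_0\elem[\beta,y'_{a_{M,i}}]=1/q^2$ independently of $\beta$, and hence factors out of the sum.

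The second step uses the orthogonal array property of $T_M$. Because $T_M$ has length $|A_M|$, alphabet $q$, size $q^{|A_M|-1}$ and strength $|A_M|-1$, for every $\alpha\in[q]^{A_M\setminus\{a_{M,i}\}}$ there is exactly one $\beta\in[q]$ with $(\alpha,\beta)\in T_M$. Consequently the sum over $x\in X_M$ decouples into independent unrestricted sums of $\alpha$ over $[q]^{A_M\setminus\{a_{M,i}\}}$ and of $\gamma$ over $[q]^{[\vars]\setminus A_M}$. Set $S^A=S\cap A_M$, $S^B=S\setminus A_M$ and likewise for $S'$. Then $E_S$ factors (up to reindexing) as $E_{S^A}\otimes E_0\otimes E_{S^B}$, and similarly for $E_{S'}$, so the two decoupled sums produce
\[
(E_{S^A})^*E_{(S')^A}\quad\text{and}\quad(E_{S^B})^*E_{(S')^B}
\]
evaluated at the appropriate restrictions of $y,y'$.

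The third step invokes~\refeqn{ESOrthogonal}: these products vanish unless $S^A=(S')^A$ and $S^B=(S')^B$, i.e.\ unless $S=S'$ (recall $a_{M,i}\notin S\cup S'$). In the surviving case $S=S'$, combining the tensor factors $E_{S^A}\otimes(1/q)\otimes E_{S^B}$ with the overall $1/q^2$ prefactor yields $(1/q)E_S\elem[y,y']$, which is precisely $E_S/q$.

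The only place one has to be careful is bookkeeping the factor of $q$: the two $E_0$ entries at coordinate $a_{M,i}$ contribute $1/q^2$, while $E_S$ itself carries one factor of $1/q$ at that coordinate, leaving a single $1/q$ after cancellation. There is no real obstacle here beyond keeping the index ranges straight; the crux is simply recognising that the orthogonal array of maximal strength $|A_M|-1$ is exactly what makes the sum over $X_M$ behave like an unconstrained product sum on the complement of $\{a_{M,i}\}$.
\pfend
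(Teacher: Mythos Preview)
Your argument is correct and follows the same core idea as the paper: exploit that $a_{M,i}\notin S,S'$ forces an $E_0$ in that slot, and that the orthogonal-array parameters make the projection of $X_M$ onto the remaining $\vars-1$ coordinates a bijection to $[q]^{\vars-1}$, so the sum over $X_M$ becomes a full product sum and~\refeqn{ESOrthogonal} applies. The paper packages this slightly more cleanly by observing directly that $E_S\elem[X_M,{[q]^\vars}] = \bigl(\bigotimes_{j\neq a_{M,i}} E_{s_j}\bigr)\otimes (e_0^*/\sqrt{q})$ as a matrix identity, avoiding the entrywise computation and making the three-way split $A_M\setminus\{a_{M,i}\},\ \{a_{M,i}\},\ [\vars]\setminus A_M$ unnecessary (only $\{a_{M,i}\}$ versus its complement is needed); but this is cosmetic, not substantive.
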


\pfstart
If we strike out the $a_{M,i}$th element in all elements of $X_M$, we obtain $[q]^{\vars-1}$ by the definition of an orthogonal array.  All elements of $L_{M,i}$ omit $a_{M,i}$, hence, $E_S$ has $E_0$ in the $a_{M,i}$th position for all $S\in L_{M,i}$.  Thus, the $a_{M,i}$th entries of $x$ and $y$ has no impact on the value of $E_S\elem[x,y]$.  

Let $(s_j)$ and $(s'_j)$ be the characteristic vectors of $S$ and $S'$.  Then,
\[
E_S\elem[X_M,{[q]^\vars}] = \sC[\bigotimes_{j\in[\vars]\setminus\{a_{M,i}\}} E_{s_j}]\otimes \frac{e_0^*}{\sqrt{q}}.
\]
(Here $e_0^*$ is on the $a_{M,i}$th element of $[q]^\vars$.)  Similarly for $S'$, and the claim follows from~\refeqn{ESOrthogonal}.
\pfend

For each $M$, decompose $\tgm'$ from~\refeqn{gmPrim} into $\sum_{i\in[k]} \tG'_{M,i}$, where
\[
\tG'_{M,i} = \sum_{S\in L_{M,i}} \beta_S(M) E_S.
\]
Define similarly to \refsec{outline}, 
\[
\hG_{M,i} = \sqrt{\frac{q^\vars}{|X_M|}}\; \tG'_{M,i}\elem[X_M,{[q]^\vars}] = \sqrt{q}
\sum_{S\in L_{M,i}} \beta_S(M) E_S\elem[X_M,{[q]^\vars}],
\]
and let $\hGamma_i$ be the matrix consisting of $\hG_{M,i}$, for all $M\in\cert$, stacked one on another like in~\refeqn{tGamma}.  Then, $\hGamma = \sum_{i\in [k]} \hGamma_i$.
We have
\[
(\hGamma_i)^*\hGamma_i = \sum_{M\in \cert} (\hG_{M,i})^* \hG_{M,i} = \sum_{M\in\cert} \sum_{S\in L_{M,i}} \beta_S(M)^2E_S,
\]
by \refclm{LMi}.  Similarly to~\refeqn{GammaPrim}, we get $\|\hGamma_i\|\le 1$.  By the triangle inequality, $\normS|\hGamma|\le k$, hence, $\norm|\Gamma'| \le k=O(1)$.  Combining this with~\refeqn{GammaNorm}, and using \refthm{adv}, we obtain the necessary lower bound.  This finishes the proof of \refthm{boundedly}.

\subsection{Fourier Basis and Bias}
\label{sec:fourier}
\mycommand{group}{Z}
\mycommand{fb}{\chi}
\def \fnorm|#1|{\|#1\|_{\mathrm{u}}}
In \refsec{common}, we defined $e_i$ as an arbitrary orthonormal basis satisfying the requirement that $e_0$ has all its entries equal to $1/\sqrt{q}$.  In the next section, we will specify a concrete choice for $e_i$.  Its construction is based on the Fourier basis we briefly review in this section.

Let $p$ be a positive integer, and $\Z_p$ be the cyclic group of order $p$, formed by the integers modulo $p$. Consider the complex vector space $\C^{\Z_p}$.  The vectors $(\fb_a)_{a\in \Z_p}$, defined by $\fb_a\elem[b] = \ee^{2\pi\ii ab/p}/\sqrt{p}$, form its orthonormal basis.  Note that the value of $\fb_a\elem[b]$ is well-defined because $\ee^{2\pi\ii}=1$.

If $U\subseteq\Z_p$, then the {\em Fourier bias}~\cite{tao:additive} of $U$ is defined by
\begin{equation}
\label{eqn:fnorm}
\fnorm|U|  
= \frac{1}{p}\; \absC| \max_{a\in\Z_p\setminus\{0\}} \sum_{u\in U} \ee^{2\pi\ii au/p} |.
\end{equation}
It is a real number between 0 and $|U|/p$.  In the next section, we will need the following result stating the existence of sets with small Fourier bias and arbitrary density.

\begin{thm}
\label{thm:FourierBias}
For any real $0<\delta<1$, it is possible to construct $U\subseteq \Z_q$ such that $|U|\sim\delta q$, $\fnorm|U| = O(\polylog(q)/\sqrt{q})$ and $q$ is arbitrary large.  In particular, $\fnorm|U| = o(1)$.
\end{thm}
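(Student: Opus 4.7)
The plan is to use a straightforward probabilistic construction. For a given $q$ and density $\delta$, I will build $U \subseteq \Z_q$ by including each element $u \in \Z_q$ independently with probability $\delta$, and then argue via concentration inequalities and a union bound that with positive (in fact overwhelming) probability the resulting set $U$ satisfies both $|U| \sim \delta q$ and $\fnorm|U| = O(\polylog(q)/\sqrt{q})$. Since $q$ ranges over all positive integers, this gives arbitrarily large $q$ as required.

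First I will control the size of $U$. The random variable $|U|$ is $\mathrm{Binomial}(q,\delta)$, so by a standard Chernoff bound, $\Pr[\,| |U| - \delta q | > q^{2/3}\,] \le 2\ee^{-\Omega(q^{1/3})}$, which is $o(1)$ as $q \to \infty$. Hence with probability $1 - o(1)$ we have $|U| = \delta q + o(q)$, i.e.\ $|U| \sim \delta q$.

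Next I will bound the Fourier bias. Fix a non-trivial $a \in \Z_q \setminus \{0\}$ and define the complex random variable $Z_a = \sum_{u \in \Z_q} \xi_u \ee^{2\pi\ii a u/q}$, where $\xi_u \in \{0,1\}$ is the indicator that $u \in U$. The $\xi_u$ are i.i.d.\ Bernoulli$(\delta)$, and since $\sum_{u \in \Z_q} \ee^{2\pi\ii a u/q} = 0$ for $a \ne 0$, we have $\E[Z_a] = 0$. Splitting into real and imaginary parts and applying Hoeffding's inequality to each (the summands lie in $[-1,1]$), I obtain
\[
\Pr\bigl[\,|Z_a| > C\sqrt{q \log q}\,\bigr] \le 4\exp(-C^2 \log q / 2)
\]
for any constant $C>0$. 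Choosing $C$ large enough (say $C = 3$), the right-hand side is $O(q^{-3})$. A union bound over the $q-1$ non-trivial characters $a$ then yields
\[
\Pr\bigl[\,\exists a \ne 0: |Z_a| > C\sqrt{q \log q}\,\bigr] = O(1/q^2) = o(1).
\]
On the complementary event, $\max_{a\ne 0} |Z_a| = O(\sqrt{q \log q})$, which after dividing by $q$ as in~\refeqn{fnorm} gives $\fnorm|U| = O(\sqrt{\log q / q})$, well within the required $O(\polylog(q)/\sqrt{q})$.

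Combining both events by a union bound, with probability $1 - o(1)$ the random $U$ satisfies both conclusions simultaneously. In particular, such a $U$ exists for every sufficiently large $q$, and hence for arbitrarily large $q$, which is all that is claimed. The proof is clean and presents no real obstacle; the only subtle point is to apply Hoeffding to the complex sum by splitting into real and imaginary parts (each a sum of bounded independent real random variables), and the constants are chosen so that the union bound over the $q-1$ nontrivial characters still leaves the total failure probability $o(1)$.
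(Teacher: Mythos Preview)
Your proposal is correct and is precisely the approach the paper points to: the paper does not give a self-contained proof but simply remarks that a random subset works (citing \cite[Lemma~4.16]{tao:additive}) and that explicit constructions also exist. Your probabilistic argument via Hoeffding on the real and imaginary parts followed by a union bound over the $q-1$ nontrivial characters is exactly the standard proof behind that citation, so there is nothing to add.
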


For instance, one may prove a random subset satisfies these properties with high probability~\cite[Lemma~4.16]{tao:additive}.  There also exist explicit constructions~\cite{gillespie:fourierBias}.

\subsection{General Certificate Structures}
\label{sec:general}
In this section, we finish the proof of \refthm{certificates}.  There are two main reasons why it is not possible to prove a general result like~\refthm{boundedly} for arbitrary certificate structures.

One counterexample is provided by \rf(prp:learningCollisionLower) and the discussion after \rf(defn:setEquality):  The learning graph complexity of the hidden shift certificate structure is $\Theta(\vars^{1/3})$, but the quantum query complexity of the hidden shift problem is $O(\log \vars)$.
The proof of \refsec{boundedly} cannot be applied here, because $k$ in the decomposition of $\tG'_M$ into $\sum_{i\in[k]} \tG'_{M,i}$ would not be bounded by a constant.  We solve this by considering much ``thicker'' orthogonal arrays $T_M^{(i)}$.

Next, the orthogonality property~\refeqn{orthogonality} is not satisfied automatically for general certificate structures.  For instance, assume $A_{M}^{(1)} = \{1,2\}$, $A_{M}^{(2)}=\{2,3\}$, and the orthogonal arrays are given by the conditions $x_1=x_2$ and $x_2=x_3$, respectively.  Then, for any input $x$ satisfying both conditions, we have $x_1 = x_3$, and the orthogonality condition fails for $S = \{1,3\}$.

The problem in the last example is that the orthogonal arrays are not independent because $A_{M}^{(1)}$ and $A^{(2)}_{M}$ intersect.  We cannot avoid that $A_{M}^{(i)}$s intersect, but we still can have $T_M^{(i)}$s independent by defining them on independent parts of the input alphabet.  

More formally, let $\sz = \max_{M\in\cert} \sz(M)$, where $\sz(M)$ is defined in \refsec{outline} as the number of inclusion-wise minimal elements of $M$.  We define the input alphabet as $\group = \Z_p^\sz$ for some $p$ to be defined later.  Hence, the size of the alphabet is $q = p^\sz$.  

Let $Q_M^{(i)}$ be an orthogonal array of length $|A_M^{(i)}|$ over the alphabet $\Z_p$.  We will specify a concrete choice in a moment.  From $Q_M^{(i)}$, we define $T_M^{(i)}$ in~\refeqn{Xm} by requiring that the $i$th components of the elements in the sequence satisfy $Q_M^{(i)}$.  The sets $X_M$ are defined as in~\refeqn{Xm}.  We additionally define
\[
X_{M}^{(i)} = \sfig{x\in \Z_p^\vars \mid x_{A_{M}^{(i)}}\in Q_{M}^{(i)} },
\]
for $i\le\sz(M)$, and $X_{M}^{(i)} = \Z_p^\vars$ otherwise.  Note that $X_M = \prod_{i=1}^\sz X_{M}^{(i)}$ in the sense that, for each sequence $x^{(i)}\in X_{M}^{(i)}$ with $i=1,\dots,\sz$, there is a corresponding element $x\in X_M$ with $x_j = (x_j^{(1)}, \dots, x_j^{(\sz)})$.

Now we make our choice for $Q_M^{(i)}$.  Let $U\subseteq \Z_p$ be a set with small Fourier bias and some $\delta = |U|/p$ that exists due to \refthm{FourierBias}.  We define $Q_M^{(i)}$ as consisting of all $x\in \Z_p^{A_{M}^{(i)}}$ such that the sum of the elements of $x$ belongs to $U$.  With this definition,
\begin{equation}
\label{eqn:density}
|X_M^{(i)}| = \delta p^\vars.
\end{equation}
Hence, there are exactly $\delta q^\vars$ elements $x\in\group^\vars$ such that $x_{A_M^{(i)}}\in T_M^{(i)}$.  If we let $\delta = 1/(2\sz|\cert|)$, a calculation similar to~\refeqn{Ysize} shows that $|Y|\ge q^\vars/2$.  Also, by considering each $i\in[\sz]$ independently, it is easy to see that all $X_M$ satisfy the orthogonality condition.  Thus, \reflem{GammaNorm} applies, and~\refeqn{GammaNorm} holds.

Now it remains to estimate $\|\Gamma'\|$, and it is done by considering matrix $\hGamma$ as described in \refsec{outline}, and performed once in \refsec{boundedly}.  If $\tGamma'=0$, then also $\Gamma'=0$, and we are done.  Thus, we further assume $\tGamma'\ne 0$.
\mycommand{tB}{\widetilde{B}}
\mycommand{hB}{\widehat{B}}
Recall that $(\fb_a)_{a\in\Z_p}$ denotes the Fourier basis of $\Z_p$.  The basis $e$ is defined as the Fourier basis of $\C^\group$.  It consists of the elements of the form $e_a = \bigotimes_{i=1}^\sz \fb_{a^{(i)}}$ where $a=(a^{(i)})\in\group$.  Note that $e_0$ has the required value, where $0$ is interpreted as the neutral element of $\group$.  

If $v = (v_j) = (v_j^{(i)})\in \group^\vars$, we define $e_v = \bigotimes_{j=1}^\vars e_{v_j}$, and 
$v^{(i)}\in \Z_p^\vars$ as $(v_1^{(i)},\dots,v_n^{(i)})$.  Also, for $w = (w_j)\in \Z_p^\vars$, we define
$\fb_w = \bigotimes_{j=1}^\vars \fb_{w_j}$.

Fix an arbitrary $M\in\cert$.  Let $\tB_M = (\tG'_M)^*\tG'_M$ and $\hB_M = (\hG_M)^*\hG_M$.  We aim to show that
\begin{equation}
\label{eqn:difference}
\|\tB_M - \hB_M\| \to 0\quad \mbox{as}\quad p\to\infty,
\end{equation}
because this implies
\[
 \|(\tGamma')^*\tGamma' - (\hGamma)^*\hGamma \| = \normB|\sum_{M\in\cert} (\tB_M - \hB_M)  | 
\le \sum_{M\in\cert} \| \tB_M - \hB_M \| \to 0
\]
as $p\to\infty$.  As $\|\tGamma'\|>0$, this implies that $\|\Gamma'\| \le 2 \|\tGamma'\|$ for $p$ large enough, and together with~\refeqn{GammaNorm} and \refthm{adv}, this implies \refthm{certificates}.

From~\refeqn{gmPrim}, we conclude that the eigenbasis of $\tB_M$ consists of the vectors $e_v$, with $v\in\group^\vars$, defined above.  In order to understand $\hB_M$ better, we have to understand how $e_v\elem[X_M]$ behave.  We have
\begin{equation}
\label{eqn:evInner}
(e_v\elem[X_M])^*(e_{v'}\elem[X_M]) = \prod_{i=1}^\sz (\fb_{v^{(i)}}\elem[X_{M}^{(i)}])^*(\fb_{v'^{(i)}}\elem[X_{M}^{(i)}]).
\end{equation}
Hence, it suffices to understand the behaviour of $\fb_w\elem[X_M^{(i)}]$.  For $w\in \Z_p^\vars$, $A\subseteq [\vars]$ and $c\in\Z_p$, we write $w+cA$ for the sequence $w'\in \Z_p^\vars$ defined by
\[
w'_j = \begin{cases}
w_j + c,& j\in A;\\
w_j,&\mbox{otherwise.}
\end{cases}
\]
In this case, we say that $w$ and $w'$ are obtained from each other by a {\em shift on $A$}.

\begin{clm}
\label{clm:zeta}
Assume $w,w'\in\Z_p^\vars$, and let $\xi = (\fb_w\elem[X_M^{(i)}])^*(\fb_{w'}\elem[X_M^{(i)}])$.  If $w=w'$, then $\xi = \delta$.  If $w\ne w'$, but $w$ can be obtained from $w'$ by a shift on $A_M^{(i)}$, then $|\xi|\le \fnorm|U|$.  Finally, if $w$ cannot be obtained from $w'$ by a shift on $A_M^{(i)}$, then $\xi = 0$.
\end{clm}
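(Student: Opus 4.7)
The plan is to expand $\xi$ directly from the definition of the Fourier basis, separate the coordinates inside and outside $A = A_M^{(i)}$, and recognize the remaining expression as a Fourier coefficient of the indicator function of $U$. Writing $u = w' - w \in \Z_p^\vars$, we have
\[
\xi \;=\; \frac{1}{p^\vars}\sum_{x \in X_M^{(i)}} \ee^{2\pi\ii \langle u, x\rangle/p}
\;=\; \frac{1}{p^\vars}\sum_{x_A\in \Z_p^A: \sum_{j\in A} x_j\in U}\;\sum_{x_{\bar A}\in \Z_p^{\bar A}} \ee^{2\pi\ii(\langle u_A, x_A\rangle + \langle u_{\bar A}, x_{\bar A}\rangle)/p},
\]
where $\bar A = [\vars]\setminus A$ and $x_A, x_{\bar A}, u_A, u_{\bar A}$ are the restrictions of $x$ and $u$ to the respective coordinate sets.

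First I would handle the sum over $x_{\bar A}$. By orthogonality of characters on $\Z_p^{\bar A}$, this inner sum equals $p^{|\bar A|}$ if $u_{\bar A}=0$ and $0$ otherwise. So if $w$ and $w'$ differ in any coordinate outside $A$ then $\xi=0$, which already forces any nonzero $\xi$ to come from $w'$ differing from $w$ only on $A$. Assuming $u_{\bar A}=0$, the expression reduces to
\[
\xi \;=\; \frac{1}{p^{|A|}}\sum_{x_A: \sum_{j\in A}x_j \in U} \ee^{2\pi\ii\langle u_A, x_A\rangle/p}.
\]

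Next I would argue that this vanishes unless $u_A$ is constant on $A$, i.e.\ $u_A = c\mathbf{1}_A$ for some $c\in\Z_p$, which is precisely the condition that $w$ is obtained from $w'$ by a shift on $A$. The function $f(x_A) = \mathbf 1[\sum_{j\in A} x_j \in U]$ is invariant under translations $x_A\mapsto x_A+v$ with $\sum_{j\in A}v_j=0$, so its nonzero Fourier coefficients occur only at frequencies $u_A$ orthogonal (mod $p$) to the subgroup $\{v:\sum v_j=0\}$, which are exactly the constant vectors on $A$. When $u_A = c\mathbf 1_A$, writing $s = \sum_{j\in A} x_j$ and using that there are exactly $p^{|A|-1}$ choices of $x_A$ for each $s\in\Z_p$, the sum collapses to
\[
\xi \;=\; \frac{1}{p}\sum_{s\in U} \ee^{2\pi\ii cs/p}.
\]
Setting $c=0$ (the case $w=w'$) yields $\xi = |U|/p = \delta$, and for $c\ne 0$ (the case of a nontrivial shift) the definition~\refeqn{fnorm} of the Fourier bias immediately gives $|\xi|\le \fnorm|U|$.

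The three cases of the claim follow directly. The argument is essentially a routine character computation; the only mildly delicate point is identifying the Fourier support of the indicator of $\{x_A: \sum_j x_j\in U\}$ as the constants on $A$, which is the reason the shift structure of the claim matches exactly with the sum-based definition of $Q_M^{(i)}$.
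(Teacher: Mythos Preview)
Your proof is correct. Both your argument and the paper's reduce $\xi$ to the character sum $\frac{1}{p}\sum_{s\in U}\ee^{2\pi\ii cs/p}$, but the routes differ in organisation. The paper fixes a distinguished coordinate $a\in A$, decomposes $X_M^{(i)}$ into the level sets $X_k=\{x:\sum_{j\in A}x_j=u_k\}$, and observes that after shifting $w\mapsto w-w_aA$ one can strike out the $a$th coordinate and apply orthogonality of characters on $\Z_p^{\vars-1}$ to each slice; summing over $k$ then produces the Fourier coefficient of $U$. You instead separate the coordinates inside and outside $A$ and invoke the annihilator/translation-invariance fact that the Fourier support of $x_A\mapsto \mathbf 1[\sum_j x_j\in U]$ lies in the constants on $A$. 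Your approach is a bit more conceptual and avoids singling out a coordinate; the paper's is more hands-on and parallels the earlier \refclm{LMi} from the boundedly-generated case. Either way the three cases of the claim drop out identically.
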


\pfstart
Arbitrary enumerate the elements of $U = \{u_1,\dots,u_m\}$ where $m=\delta p$.  Denote, for the sake of brevity, $A = A_M^{(i)}$.  Consider the decomposition $X_{M}^{(i)} = \bigsqcup_{k=1}^m X_k$, where
\[
X_k = \sfig{w\in \Z_p^\vars \mid \sum\nolimits_{j\in A} w_j = u_k} .
\]
\mycommand{ww}{\bar{w}}
Fix an arbitrary element $a\in A$ and denote $\ww = w-w_aA$ and $\ww' = w'-w'_aA$.  In both of them, $\ww_a = \ww'_a=0$, and by an argument similar to \refclm{LMi}, we get that
\begin{equation}
\label{eqn:chivInner}
(\fb_{\ww}\elem[X_k])^* (\fb_{\ww'}\elem[X_k])=
\begin{cases}
1/p,& \ww = \ww';\\
0,& \text{otherwise.}
\end{cases}
\end{equation}

If $x\in X_k$, then
\begin{multline*}
\fb_w\elem[x] = \prod_{j=1}^\vars \fb_{w_j}\elem[x_j] = 
\frac1{\sqrt{p^\vars}}\exp\skC[\frac{2\pi\ii}{p} \sum_{j=1}^\vars w_jx_j]
\\ = \frac1{\sqrt{p^\vars}}\exp\skC[\frac{2\pi\ii}{p}{\sB[ \sum_{j=1}^\vars \ww_j x_j + w_a\sum_{j\in A} x_j]}] = 
\exp\sB[\frac{2\pi\ii\; w_au_k}{p}]
\fb_{\ww}\elem[x] .
\end{multline*}
Hence,
\begin{equation}
\label{eqn:chivInner2}
(\fb_{w}\elem[X_M^{(i)}])^* (\fb_{w'}\elem[X_M^{(i)}])
= \sum_{k=1}^m (\fb_{w}\elem[X_k])^* (\fb_{w'}\elem[X_k])
= \sum_{k=1}^m 
\ee^{2\pi\ii (w'_a-w_a)u_k/p} 
(\fb_{\ww}\elem[X_k])^* (\fb_{\ww'}\elem[X_k]).
\end{equation}
If $w'$ cannot be obtained from $w$ by a shift on $A$, then $\ww\ne\ww'$ and~\refeqn{chivInner2} equals zero by~\refeqn{chivInner}.  If $w=w'$, then~\refeqn{chivInner2} equals $m/p = \delta$.  Finally, if $w'$ can be obtained from $w$ by a shift on $A$ but $w\ne w'$, then $\ww=\ww'$ and $w_a\ne w'_a$.  By~\refeqn{chivInner} and~\refeqn{fnorm}, we get that~\refeqn{chivInner2} does not exceed $\fnorm|U|$ in absolute value.
\pfend

Let $v\in\group^\vars$, and $S=\{j\in[\vars]\mid v_j\ne 0\}$.  Let $v'\in\group^\vars$, and define $S'$ similarly.  By~\refeqn{hGM}, \refeqn{gmPrim}, \refeqn{density} and~\refeqn{evInner}, we have
\begin{equation}
\label{eqn:hBmEntry}
e_v^* \hB_M e_{v'} = \frac{q^\vars\beta_S(M)\beta_{S'}(M)}{|X_M|}(e_v\elem[X_M])^*(e_{v'}\elem[X_M]) = \frac{\beta_S(M)\beta_{S'}(M)}{\delta^{\sz}} \prod_{i=1}^\sz (\fb_{v^{(i)}}\elem[X_{M}^{(i)}])^*(\fb_{v'^{(i)}}\elem[X_{M}^{(i)}]).
\end{equation}
By this and \refclm{zeta}, we have that
\begin{equation}
\label{eqn:hBmDiagonal}
e_v^* \hB_M e_{v} = \beta_S(M)^2 = e_v^* \tB_M e_{v}.
\end{equation}
Call $v$ and $v'$ {\em equivalent}, if $\beta_S(M)$ and $\beta_{S'}(M)$ are both non-zero and, for each $i\in[\sz]$, $v^{(i)}$ can be obtained from $v'^{(i)}$ by a shift on $A_M^{(i)}$.  By~\refeqn{hBmEntry} and \refclm{zeta}, we have that $e_v^* \hB_M e_{v'}$ is non-zero only if $v$ and $v'$ are equivalent.

For each $i\in[\sz]$, there are at most $|A_M^{(i)}|\le \vars$ shifts of $v^{(i)}$ on $A_M^{(i)}$ that have an element with an index in $A_M^{(i)}$ equal to 0.  By~\refeqn{alphaZero}, the latter is a necessary condition for $\beta_S(M)$ being non-zero.  Hence, for each $v\in \group^\vars$, there are at most $\vars^\sz$ elements of $\group^\vars$ equivalent to it.

Thus, in the basis of $e_v$s, the matrix $\hB_M$ has the following properties.  By~\refeqn{hBmDiagonal}, its diagonal entries equal the diagonal entries of $\tB_M$, and the latter matrix is diagonal.  Next, $\hB_M$ is block-diagonal with the blocks of size at most $\vars^\sz$.  By~\refeqn{hBmEntry} and \refclm{zeta}, the off-diagonal elements satisfy 
\[
|e_v^*\hB_M e_{v'}| \le \frac{\fnorm|U|}{\delta} \beta_S(M)\beta_{S'}(M),
\]
because $\fnorm|U|\le\delta$. 
Since the values of $\beta_S(M)$ do not depend on $p$, and by \refthm{FourierBias}, the off-diagonal elements of $\hB_M$ tend to zero as $p$ tends to infinity.  Since the sizes of the blocks also do not depend on $p$, the norm of $\tB_M-\hB_M$ also tends to 0, as required in~\refeqn{difference}.  This finishes the proof of \refthm{certificates}.

\section{Summary}
In this chapter, we introduced the notion of the certificate structure of a function, and analysed the complexity of quantum query algorithms that are based only on the certificate structure of the problem.  We developed the computational model of learning graphs that tightly characterise this complexity:  For any function with a fixed certificate structure, the corresponding learning graph can be converted into a quantum query algorithm, and some of the functions require this number of queries.

For symmetric functions, we developed an intuitive approach for the construction of learning graphs.  The task of loading the certificate is divided into a number of stages, and for each stage we calculate two parameters: its length and speciality.  The total complexity of the learning graph is a simple function of these quantities.
The analysis of the constructed learning graph requires just simple combinatorial tools.
The main intuition is the following ``hiding technique'': the variables of the certificate are hidden among the previously loaded dummy variables that mimic the structure of the certificate.  The more dummy variables are loaded, the smaller is the complexity of loading the certificate, but the loading of dummy variables also requires resources.  At the equilibrium point, the optimum is attained.

With the help of this approach, we constructed quantum query algorithms for the triangle and the associativity testing problems.  Additionally, we proved tight lower bounds on the quantum query complexity of the $k$-sum and triangle-sum problems.  The analysis is also purely combinatorial, although, more involved than the analysis of the corresponding upper bounds.

Of course, there are more possible applications out there.
We will only mention the problem of characterising the learning graph complexity of the subgraph detection problem.  So far, we have succeeded with the case of the triangle, and Ref.~\cite{lee:learningTriangle} mentions some upper bounds.

The main limitation of the results in this chapter stems from the same source as their handiness: They are bounded to certificate structures.  It is a smaller problem for the upper bounds, and, indeed, we will consider some algorithms beyond the certificate structure framework in the next chapter.  Because of this, we postpone further discussion of quantum query algorithms based on the dual adversary SDP till \rf(sec:kdistSummary).

For the lower bounds, this is a more important issue.
The technique of switching to $[q]^\vars\times [q]^\vars$ matrices relies heavily on the fact that the set of negative inputs is close to $[q]^\vars$, and that the set of positive inputs can be obtained by small alternations.  Thus, we require the assumptions of Theorems~\ref{thm:certificates} and~\ref{thm:boundedly}.  In particular, this approach fails immediately for the collision problem.  

Another issue is the size of the alphabet.  Some lower bound on the size of the alphabet is required, as can be seen from the element distinctness problem with $q<\vars$, but we expect that our requirements can be lowered.  For instance, we require $q>\vars^2$ for the element distinctness problem, but we know~\cite{ambainis:collisionLower} that the true bound is $q\ge\vars$.

Solving these problems would require different and more complicated techniques.  We are especially interested in constructing adversary lower bounds for the collision, the set equality and the $k$-distinctness problems.

\chapter{Further Applications of Learning Graphs}
\label{chp:kdist}
\mycommand{stage}{s}

In the previous chapter, we introduced the model of a learning graph.  A learning graph can be converted into a dual adversary SDP that can be further transformed into a quantum query algorithm.  The learning graph approach is nice because, on the one hand, it ignores the internal organisation of the algorithm: Once the matrices $X_j$ satisfy constraints \rf(eqn:advDual), \rf(thm:advAlgorithm) will do all the remaining work for us.  On the other hand, the graph structure of the learning graph still appeals to the intuition of solving a query problem.

So far, we only saw the applications within the framework of certificate structures.  But, for many functions, their quantum query complexity is smaller than the complexity of their certificate structures.  
In this chapter, we show how the ideas from the previous chapter can help in constructing quantum query algorithms for such functions.  
We still use the term ``learning graph'' to describe algorithms in this chapter, although they do not satisfy the definition in \rf(sec:learning).

This chapter is based on the following paper:
\begin{itemize}
\item[\cite{belovs:learningKDist}]
A.~Belovs.
\newblock Learning-graph-based quantum algorithm for $k$-distinctness.
\newblock In {\em Proc. of 53rd IEEE FOCS}, pages 207--216, 2012, 1205.1534.
\end{itemize}

The main result of this chapter is a new quantum query algorithm for the $k$-distinctness problem.  We describe it in Sections~\ref{sec:first} and~\ref{sec:final}.  Before that, we give two warm-up examples of algorithms for the promise threshold function in \rf(sec:threshold), and the graph collision problem in~\rf(sec:collision).

\section{Threshold Problem}
\label{sec:threshold}
In this section, we construct a learning-graph-based quantum query algorithm for the {\em promise threshold function}.  This is a partial function $f\colon \{0,1\}^\vars\supseteq \cD\to\{0,1\}$ defined by
\[
f(z) = \begin{cases}
0, & |z|\le k;\\
1, & |z|\ge k+d;\\
\end{cases}
\]
where $|z|$ stands for the Hamming weight of $z$ (the number of ones), and $k$ and $d$ are some positive integers less than $\vars$.  
One can also construction a dual adversary for this problem by generalising the construction in \rf(prp:advThresholdExact).
For simplicity, we assume that $d=O(k)$, although it is not crucial for the algorithm.

\begin{prp}
\label{prp:counting}
The quantum query complexity of the promise threshold function is $O(\sqrt{\vars k}/d)$ if $d=O(k)$.
\end{prp}

For the case of $d=1$, we obtain the same complexity as in \rf(cor:findingAllOnes).  The case of larger $d$ is also well-known: this estimate can be obtained using quantum counting~\cite{brassard:counting}.

The function has the $(k+1)$-subset certificate structure.  Thus, a learning graph, as defined in \rf(sec:learning), cannot get the complexity claimed in \rf(prp:counting).  Thus, we have to change the definition of the learning graph $\cG$.
We assume that the weight $w_e$ of an arc $e$ may depend on the input.  More precisely, if $e$ goes from $S$ to $S\cup\{j\}$, and $z$ is the input string, then the weight of the arc may depend on the values of the variables in $S$: $w_e(z) = w_e(z_S)$.  The flow is defined in the same way as in \rf(defn:flow).  The negative and the positive complexities of $\cG$ on a particular input are defined by
\[
\CN(\cG, y) = \sum_{e\in E} w_e(y)
\qquad\mbox{and}\qquad
\CP(\cG, x) = \sum_{e\in E} \frac{p_e(x)^2}{w_e(x)}
\]
where $y\in\preimy$ and $x\in\preimx$.  The negative, the positive and the total complexities are
\[
\CN(\cG) = \max_{y\in\preimy} \CN(\cG,y),\qquad
\CP(\cG) = \max_{x\in\preimx} \CP(\cG,x),\qquad\mbox{and}\qquad
\CT(\cG) = \max\{\CN(\cG), \CP(\cG)\}.
\]
It is not hard to check that both proofs in \rf(sec:learningProof) can be adapted to include this definition of the learning graph.  We leave out the details.  This model of learning graphs is called {\em adaptive learning graph} in~\cite{belovs:learning}.

\pfstart[Proof of \rf(prp:counting)]
The vertices of the learning graph $\cG$ are formed by the subsets of $[\vars]$ of sizes at most $k+1$, and we have all possible arcs between them.  Let $e$ be an arc from $S$ to $S\cup\{j\}$ where $|S|\le k$ and $j\notin S$.  Define the weight $w_e(z)$ as follows.  If $z_j=0$ for at least one $j\in S$, define $w_e(z)=0$.  Otherwise, $w_e(z) = w_{|S|}$ only depends on the size of $S$.  We say that the arc $e$ is on the $|S|$th {\em step}.
For a positive input $x\in\preimx$, if $S$ is such that $|S|=k+1$ and $x_j=1$ for all $j\in S$, then $S$ is marked.

The negative complexity of the learning graph is maximised when $|y|=k$, and 
\begin{equation}
\label{eqn:counting1}
\CN(\cG) = \sum_{i=0}^k {k\choose i}(\vars-i)w_i \le \vars \sum_{i=0}^k w_i {k\choose i}.
\end{equation}
The positive complexity is maximised when $|x|=k+d$.  For each vertex $S$ of the learning graph, we distribute the flow uniformly to all arcs loading an element equal to 1.  Thus, on the $i$th step, there are ${k+d\choose i}(k+d-i)$ arcs used by the flow, and the flow is equal among all these arcs.  Thus, the positive complexity is
\begin{equation}
\label{eqn:counting2}
\CP(\cG) = \sum_{i=0}^k \sk[{k+d\choose i}(k+d-i) w_i]^{-1}.
\end{equation}
The optimal choice of $w_i$ is
\[
w_i = \sk[{k\choose i} {k+d\choose i}(k+d-i)]^{-1/2},
\]
so that the $i$th term in the sums from~\rf(eqn:counting1) and~\rf(eqn:counting2) is
\begin{align*}
\sqrt{{k\choose i} \sk[ {k+d\choose i}(k+d-i) ]^{-1}} &=\sqrt{\frac{k!}{(k-i)!i!}\frac{(k+d-i)!i!}{(k+d)!(k+d-i)}}\\&= \sqrt{\frac{(k-i+1)\cdots(k-i+d-1)}{(k+1)\cdots(k+d)}}\le \frac1{\sqrt{k+1}} \s[\frac{k+d-i-1}{k+d}]^{\frac{d-1}{2}}.
\end{align*}
Thus, the complexity of the learning graph is at most
\[
\sqrt{\frac{\vars}{k+1}} \sum_{i=0}^{k} \s[ \frac{k+d-i-1}{k+d} ]^{\frac{d-1}{2}}\le \frac{\sqrt{\vars}(k+d)}{\sqrt{k+1}}\int_0^1x^{\frac{d-1}{2}}\, dx = \frac{2\sqrt{\vars}(k+d)}{\sqrt{k+1}(d+1)} = O\s[\frac{\sqrt{\vars k}}{d}].\qedhere
\]
\pfend

\section{Graph Collision}
\label{sec:collision}
In this section, we describe a learning-graph-based algorithm for the graph collision problem with an additional promise.  It is a learning graph version of the algorithm by Andris Ambainis (personal communication).

Recall the graph collision problem from \rf(defn:graphCollision).  The problem is parametrised by a simple graph $G$ on $\vars$ vertices.  The input string consists of $\vars$ Boolean variables: one for each vertex of the graph.  The function evaluates to 1 if there exists an edge of $G$ with both endpoints marked by value 1, and to 0 otherwise.

The $O(\vars^{2/3})$ query algorithm we saw in \rf(cor:walkKDist) is the best known quantum algorithm for a general graph $G$.  For specific classes of graphs, however, one can do better. For instance, if $G$ is the complete graph, graph collision is equivalent to the 2-threshold problem that can be solved in $O(\sqrt{\vars})$ queries by \rf(prp:counting).  The algorithm in this section may be interpreted as an interpolation between this trivial special case and the general case.

Recall that the independence number $\alpha(G)$ of a simple graph $G$ is the maximal cardinality of a subset of vertices of $G$ such that no two of them are connected by an edge.

\begin{thm}
\label{thm:collision}
Graph collision on an $\vars$-vertex graph $G$ can be solved in $O(\sqrt{\vars}\alpha^{1/6})$ quantum queries, where $\alpha=\alpha(G)$ is the independence number of $G$.
\end{thm}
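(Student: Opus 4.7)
The plan is to build a learning graph whose analysis exploits the combinatorial bound
\[
|E(G)| \;\ge\; \binom{\vars}{2} - \s[1-\tfrac1{\alpha}]\tfrac{\vars^2}{2} \;=\; \Omega(\vars^2/\alpha),
\]
which follows from Turán's theorem applied to the complement $\bar G$ (whose clique number equals $\alpha(G)$). This density bound is the source of the $\alpha$-dependence in the final complexity: a random pair of vertices is an edge of $G$ with probability $\Omega(1/\alpha)$, so loading two random pools creates an induced bipartite subgraph with $\Omega(r_1 r_2/\alpha)$ edges between them.

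Following the procedure-driven template of \rf(sec:procedureDriven), I will let $M$ be a certificate defined by an edge $\{a,b\}$ with $z_a=z_b=1$, and describe a randomised loading procedure in several stages: load uniformly random vertex pools $A,B\subseteq[\vars]\setminus\{a,b\}$ of sizes $r_1,r_2$ respectively; interleave conditioning stages (as for stages IV and VII of \rf(tbl:learning4path)) that force the number of induced edges between $A$ and $B$ into a prescribed range of probability $\Omega(1)$; and finally add the endpoints $a$ and $b$ to the loaded set. The key point in the analysis is that when we hide both endpoints of the witness edge among the loaded pools, the number of transitions used by the flow is proportional not to $r_1 r_2$ but to the number of edges of $G$ between the pools, which by the density bound is $\Omega(r_1 r_2/\alpha)$. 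Consequently the speciality of the final stages drops by a factor of $\alpha$ compared with the naive hiding argument that yields only $O(\vars^{2/3})$.

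Applying \rf(thm:symmetric) stage by stage, the complexity decomposes as stage-length times the square root of stage-speciality. After substituting the improved speciality bounds coming from the edge-density fact and balancing the pool sizes $r_1,r_2$ against the speciality saving, the total cost becomes $O(\sqrt{\vars}\,\alpha^{1/6})$. For $\alpha=\Theta(\vars)$ this recovers the quantum walk bound $O(\vars^{2/3})$ of \rf(cor:walkKDist), while for $\alpha=O(1)$ it matches the Grover lower bound $\Omega(\sqrt{\vars})$.

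The main obstacle will be verifying that the flow remains symmetric in the sense of \rf(defn:symmetric) after the conditioning is imposed, so that \rf(thm:symmetric) is applicable. This requires defining the classes of transitions carefully so that (i) transitions within a class carry the same non-zero flow value and (ii) the number of transitions per class used by the flow does not depend on the choice of $M$. The device of conditioning on the edge count between the pools falling into a constant-probability interval, modelled on stages IV and VII of \rf(tbl:learning4path), is what rescales the per-class flow into a symmetric shape, and this is the step where the combinatorial bound $|E(G)|=\Omega(\vars^2/\alpha)$ enters in a controlled manner.
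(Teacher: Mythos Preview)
Your approach has a fundamental gap: it stays within the non-adaptive learning-graph framework of \rf(chp:cert), and that framework cannot reach the claimed bound. The conditioning you propose is on the number of edges of the \emph{fixed, known} graph $G$ between the pools $A$ and $B$; this is structural information about $G$, not about the input string $z\in\{0,1\}^\vars$, so the resulting construction depends only on the certificate structure of the problem. But for $G=K_\vars$ (so $\alpha=1$) that certificate structure is exactly the $2$-subset certificate structure, whose learning-graph complexity is $\Theta(\vars^{2/3})$ by \rf(prp:learningKSubsetUpper) and \rf(prp:learningKSubsetLower), whereas the theorem asserts $O(\sqrt\vars)$. Your Tur\'an bound $|E(G)|=\Omega(\vars^2/\alpha)$ is correct but beside the point: it counts how many certificates $M$ exist, while speciality concerns how many transitions a \emph{single} fixed $M=\{a,b\}$ uses. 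On the stage loading $b$ that count is still $\Theta(\vars^2/r_1)$ regardless of the density of $G$, and there is no subsequent stage in which ``both endpoints are hidden among the edges between the pools''---once $\{a,b\}\subseteq S$ the vertex is already marked and the flow terminates.

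The paper's proof is of a different nature and crucially uses the \emph{values} of the input. It first runs the promise-threshold test of \rf(prp:counting) in $O(\sqrt\vars)$ queries to decide whether the Hamming weight of $z$ exceeds $\alpha$; if so it accepts, since any set of more than $\alpha$ vertices labelled $1$ cannot be independent and hence contains an edge of $G$ with both endpoints labelled $1$. Otherwise $|z|\le 2\alpha$, and one runs the $2$-subset learning graph of \rf(tbl:old) with \emph{value-dependent} arc weights on stage~I (the block structure of~\rf(eqn:blockCollision)): arcs loading a $1$ carry weight $w_1$ and arcs loading a $0$ carry weight $w_0$. Because at most $2\alpha$ of the $r$ loaded variables equal $1$, stage~I costs $O(r\sqrt{\alpha/\vars})$ rather than $O(r)$; balancing against the $O(\vars/\sqrt r)$ cost of stage~II.$2$ at $r=\vars\alpha^{-1/3}$ gives $O(\sqrt\vars\,\alpha^{1/6})$. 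The input-dependent Hamming-weight bound is exactly the ingredient your certificate-structure approach cannot access.
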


Note that if $G$ is a complete graph, $\alpha(G)=1$, and we get the previously mentioned $O(\sqrt{\vars})$-algorithm for this trivial case. In the general case, $\alpha(G)=O(\vars)$, and the complexity of the algorithm is $O(\vars^{2/3})$ that coincides with the complexity of the algorithm in \rf(cor:walkKDist).

Jeffery {\em et al.}~\cite{jeffery:matrixMultiplication} have built a quantum algorithm solving graph collision on $G$ in $O(\sqrt{\vars}+\sqrt{m})$ queries if $G$ misses $m$ edges to be a complete graph. This algorithm is incomparable to the one in \refthm{collision}: For some graphs the algorithm from \refthm{collision} performs better, for some graphs, vice versa.

\pfstart[Proof of \refthm{collision}] 
Let $f$ be the graph collision function specified by the graph $G$.  At first, we distinguish the case when the number of ones in the input is at most $\alpha$, and when it is at least $2\alpha$.  By \rf(prp:counting), the complexity of this step is less than $O(\sqrt{\vars})$.  
Inputs of Hamming weight between $\alpha$ and $2\alpha$ may fall in both categories.

If we know that the number of ones is greater than $\alpha$, we may claim that a graph collision exists.  Otherwise, we may assume the number of ones is at most $2\alpha$.  In this case, we execute the following learning graph $\cG$.

The learning graph is essentially the one for the 2-subset certificate structure from \reftbl{old}.  The certificate for a positive input $x$ is a pair $M = \{a,b\}$ of vertices such that $ab$ is an edge of $G$ and $x_a=x_b=1$.  (For notational convenience, we switched notation from $a_1,a_2$ to $a,b$.)

We would like to use the fact that the Hamming weight of the input string is small.  As the certificate is given by the elements with value 1, one possibility is to use an adaptive learning graph as it is done in the proof of \rf(prp:counting).
This works if we know the exact number of ones in the input.  Unfortunately, the analysis of the proof of \rf(prp:counting) reveals that the number of arcs of the learning graph (and, hence, the negative complexity) depends heavily on the Hamming weight of the input.  This excludes the possibility of an universal adaptive learning graph that would work for all possible Hamming weights of the input.  In principle, it is possible to estimate the number of ones using the quantum counting~\cite{brassard:counting} or \rf(prp:counting) prior executing the learning graph.  But this feels like an artificial solution, and is not readily applicable for the $k$-distinctness problem that we aim for.

Instead of that, we use all possible subsets of $[\vars]$ as vertices of the learning graph, regardless the content, but use an analogue of \rf(cor:findingAllOnes) to reduce the complexity.  For that, we utilise an idea due to Robin Kothari (personal communication).  We make the weight of an arc dependent (although, in a restricted form) on the value of the variable being loaded.

We do not reduce to any result like Theorems~\ref{thm:symmetric} or~\ref{thm:certificates}, but construct the matrices $X_j$ from~\rf(eqn:advDual) directly.  However, our construction is similar to the second proof of \rf(thm:symmetric).
We also use the randomised procedure language from \rf(sec:procedureDriven).



Let $x$ be a positive input, and $M=\{a,b\}$ be a 1-certificate.
The key vertices of the learning graph are $V_1\cup V_2$, where $V_1$ and $V_2$ consist of all subsets of $[\vars]$ of sizes $r$ and $r+1$, respectively, where $r=o(\vars)$ is some parameter specified later.%

A vertex in $V_1$ completely specifies the internal randomness of the loading procedure. For each $R\in V_1$, we fix an arbitrary order of its elements: $R=\{t_1,\dots, t_r\}$. We say that the choice of randomness $R\in V_1$ is {\em consistent} with $x$ if $\{a,b\}\cap R=\emptyset$. For each $x\in f^{-1}(1)$, there are exactly ${\vars-2\choose r}$ choices of $R\in V_1$ consistent with it.  We take each of them with probability $\prob={\vars-2\choose r}^{-1}$. 

For a fixed input $x$ and fixed randomness $R=\{t_1,\dots,t_r\}\in V_1$ consistent with $x$, the elements are loaded 
in the following order: 
\begin{equation}
\label{eqn:collisionT}
t_1,\dots,t_r,t_{r+1}=a,t_{r+2} = b.
\end{equation}

The non-key vertices of $\cG$ are of the form $v=(\{t_1,\dots,t_\ell\}, R)$, where $0\le\ell<r$, $R\in V_1$, and $t_i$ are as in~\refeqn{collisionT}. Recall that the first element of the pair is the set of loaded elements, and the second one is an additional label used to distinguish vertices with the same set of loaded elements.

An {\em arc} of the learning graph is a process of loading one variable. We denote an arc by $A^v_j$. Here, $j$ is the variable being loaded, and $v$ is the vertex of $\cG$ the arc originates in.  The arcs are as follows.  The arcs of the stage I have $v=(\{t_1,\dots,t_\ell\}, R)$ and $j=t_{\ell+1}$ with $0\le \ell<r$.
 The arcs of stages II.1 and II.2 have $v=S$, with $S\in V_1$ and $S\in V_2$, respectively, and $j\notin S$.

For a fixed $x\in f^{-1}(1)$, and fixed internal randomness $R\in V_1$ consistent with $x$, the following arcs are {\em taken}:
\begin{equation}
\label{eqn:collisionTaken}
A^{(\{t_1,\dots,t_\ell\},R)}_{t_{\ell+1}}\; \mbox{for $0\le \ell<r$,}\qquad A^R_a\qquad\text{and}\qquad A^{R\cup\{a\}}_b.
\end{equation}
We say that $x$ {\em satisfies} an arc if the arc is taken for some $R\in V_1$ consistent with $x$.  Note also, that, for a fixed positive input, no arc is taken for two different choices of the randomness.

Like in the second proof in \rf(sec:learningProof), for each arc $A^v_j$, we assign a semi-definite matrix $X^v_j\succeq 0$.  Then, $X_j$ in \refeqn{advDual} are given by $X_j = \sum_v X^v_j$ with the sum over all vertices.
Fix $A^v_j$, and let $S$ be the set of loaded elements in $v$.  
Define $X^v_j = \sum_\alpha Y_\alpha$, where the sum is over all assignments $\alpha$ on $S$. The matrix $Y_\alpha$ is defined as $\prob(\psi\psi^*+\phi\phi^*)$, where, for each $z\in\{0,1\}^\vars$,
\[
\psi\elem[z]=
\begin{cases}
1/\sqrt{w_1},& \parbox{4.3cm}{$f(z)=1$, $z_j=1$,\\$z$ satisfies $\alpha$ and the arc $A^v_j$;}\\[\bigskipamount]
\sqrt{w_1},& \parbox{4cm}{$f(z)=0$, $z_j=0$,\\and $z$ satisfies $\alpha$;}\\[\medskipamount]
0,&\mbox{otherwise;}
\end{cases}
\mbox{and}\qquad
\phi\elem[z]=
\begin{cases}
1/\sqrt{w_0},& \parbox{4.3cm}{$f(z)=1$, $z_j=0$,\\$z$ satisfies $\alpha$ and the arc $A^v_j$;}\\[\bigskipamount]
\sqrt{w_0},& \parbox{4cm}{$f(z)=0$, $z_j=1$,\\ and $z$ satisfies $\alpha$;}\\[\medskipamount]
0,&\mbox{otherwise.}
\end{cases}
\]
Here $w_0$ and $w_1$ are parameters (the weights of the arcs) specified later.  They depend only on the stage the arc belongs to.  In other words, $X^v_j$ consists of the blocks of the following form:
\begin{equation}
\label{eqn:blockCollision}
\begin{array}{r|cc|cc|}
& x_j=1 & x_j=0 & y_j=1 & y_j=0 \\
\hline
x_j=1 & \prob/w_1 & 0 & 0 & \prob \\
x_j=0 & 0 & \prob/w_0 & \prob & 0\\
\hline
y_j=1 & 0 & \prob & \prob w_0 & 0  \\
y_j=0 & \prob & 0 & 0 & \prob w_1\\
\hline
\end{array}
\end{equation}
Here each of the 16 elements corresponds to a block in $Y_\alpha$ with all entries equal to this element. The first and the second columns represent the elements from $f^{-1}(1)$ that satisfy $\alpha$ and $A^v_j$, and such that their $j$th element equals $1$ and $0$, respectively. Similarly, the third and the fourth columns represent elements from $f^{-1}(0)$ that satisfy $\alpha$ and such that their $j$th element equals $1$ and $0$, respectively.

\paragraph{Feasibility} 
Assume $x\in\preimx$ and $y\in\preimy$ are some fixed inputs.  Let $R\in V_1$ be a choice of the internal randomness consistent with $x$. Let $Z_j$ be the matrix corresponding to the arc loading $j$ that is taken for this choice of $R$. That is,
$Z_j$ is the matrix in~\refeqn{collisionTaken} with sub-index $j$ if $j\in R\cup\{a,b\}$, or $Z_j=0$, otherwise.  We are going to prove that
\begin{equation}
\label{eqn:collisionSum}
\sum_{j\colon x_j\ne y_j} Z_j\elem[x,y] = \prob.
\end{equation}
Since there are ${\vars-2\choose r} = 1/p$ choices of $R$ consistent with $x$, and no arc is taken for two different choices of the randomness, this proves the feasibility condition~\refeqn{advDualCondition}.

Consider the order~\refeqn{collisionT} in which the elements are loaded for this particular choice of $x$ and $R$.
Before any element is loaded, both inputs agree (they satisfy the same assignment $\alpha\colon\emptyset\to\{0,1\}$).  After all the elements are loaded, $x$ and $y$ disagree, because it is not possible that $y_a=x_a$ and $y_b=x_b$.  With each element loaded, the assignments become more and more specific.  This means that there exists an element $j=t_i$ such that $x$ and $y$ agree before loading $j$, but disagree afterwards.  In particular, $x_j\ne y_j$.  By construction, this $j$ contributes $\prob$ to the sum in~\refeqn{collisionSum}.  All other $j$ contribute 0 to the sum. Indeed, if $j'=t_{i'}$ with $i'<i$ then $x_{j'}=y_{j'}$, hence, $j'$ contributes 0.  For $j'=t_{i'}$ with $i'>i$, $x$ and $y$ disagree on $\{t_1,\dots,t_{i'-1}\}$, hence, $Z_{j'}\elem[x,y]=0$ by construction.

\paragraph{Complexity} 
Similarly to \rf(sec:procedureDriven), let us define the complexity of stage $i$ on input $z\in\{0,1\}^\vars$ as $\sum_{j\in [\vars]} X'_j\elem[z,z]$, where $X'_j = \sum_v X^v_j$ with the sum over $v$ such that $A^v_j$ belongs to stage $i$.  Also, define the complexity of stage $i$ as the maximum complexity over all inputs $z\in\{0,1\}^\vars$.  Clearly, the objective value~\rf(eqn:advDualObjective) of the whole program is at most the sum of the complexities of all the stages.

Let us start with stages II.1 and II.2.%
  \footnote{For stages II.1 and II.2, the hiding intuition from \rf(sec:procedureDriven) works. For stage II.1, the length is 1, and the speciality is $O(\vars)$. For stage II.2, the length is 1, and the speciality is $O(\vars^2/r)$. Hence, the complexities are $O(\sqrt{\vars})$ and $O(\vars/\sqrt{r})$, respectively.}
  For any $x\in f^{-1}(1)$, on each of stages II.1 and II.2, there are ${\vars-2\choose r}$ arcs satisfying $x$.  These are the arcs $A^R_a$ and $A^{R\cup\{a\}}_b$, respectively, for all choices of $R\in V_1$ consistent with $x$.  By~\refeqn{blockCollision}, each of them contributes $\prob/w_1$ to the complexity of $x$ on the corresponding stage.  Thus, the complexity of $x$ on each of the stages is ${\vars-2\choose r}\prob/w_1=1/w_1$.
Since we are guaranteed that $x_j=1$ in notations from~\refeqn{blockCollision}, we may set $w_0=0$. 

The total number of arcs on stages II.1 and II.2 are $(\vars-r){\vars\choose r}$ and $(\vars-r-1){\vars\choose r+1}$, respectively.  Each of them contributes at most $\prob w_1$ to the complexity of any $y\in\preimy$.
Hence, the complexities of stages II.1 and II.2 on $y$ are at most $(\vars-r){\vars\choose r}\prob w_1 = O(\vars w_1)$ and $(\vars-r-1){\vars\choose r+1}\prob w_1 = O(\vars^2w_1/r)$, respectively.  If we set $w_1=1/\sqrt{\vars}$ on stage II.1 and $w_1 = \sqrt{r}/\vars$ on stage II.2, the complexities of these stages become $O(\sqrt{\vars})$ and $O(\vars/\sqrt{r})$, respectively.

Consider stage I now. Let $k\le 2\alpha$ be the number of variables with value 1 in the input ($x$ or $y$). The total number of arcs on this stage is $r{\vars\choose r}$. Out of them, exactly $k{\vars-1\choose r-1}$ load a variable with value 1. Thus, for $y$, the complexity of stage I is at most
\[
qr{\vars\choose r}w_0 + qk{\vars-1\choose r-1}w_1 = O\s[rw_0 + \frac{\alpha r}{\vars}w_1]. 
\]
Similarly, for $x\in f^{-1}(1)$, the complexity of stage I is $O(r/w_1+\alpha r/(\vars w_0))$. If we set $w_0=\sqrt{\alpha/\vars}$ and $w_1 =\sqrt{\vars/\alpha}$, then the complexity of stage I becomes $O(r\sqrt{\alpha/\vars})$.  The total complexity of the learning graph is
\[O\s[r\sqrt{\frac{\alpha}{\vars}} + \frac{\vars}{\sqrt{r}}] = O\s[\sqrt{\vars}\alpha^{1/6}],\]
if $r=\vars\alpha^{-1/3}$. 
\pfend

\section{\texorpdfstring{$k$-Distinctness}{k-Distinctness}: First Attempt}
\label{sec:first}
In this section, we develop a quantum query algorithm for the $k$-distinctness problem from \rf(defn:kdist).  As usual, we assume $k=O(1)$, and consider the complexity as $\vars\to\infty$.  
In particular, the factors behind the Big-Oh notation are functions of $k$.
The best previously known algorithm, described in \rf(cor:walkKDist), uses $O(\vars^{k/(k+1)})$ quantum queries.
As element distinctness reduces to $k$-distinctness by repeating each element $k-1$ times, the lower bound of $\Omega(\vars^{2/3})$ from \rf(cor:distLower) carries over to $k$-distinctness (this argument is attributed 
to Aaronson in~\cite{ambainis:distinctness}). This simple lower bound is the best known so far.
In the remaining part of this chapter, we prove the following

\begin{thm}
\label{thm:kdist}
For arbitrary but fixed integer $k\ge2$, the $k$-distinctness problem can be solved by a quantum algorithm in $O\s[\vars^{1-2^{k-2}/(2^k-1)}]$ queries.
\end{thm}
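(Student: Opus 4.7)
The plan is to build a learning graph with $k-1$ ``preprocessing'' stages $\mathrm{I}_1,\ldots,\mathrm{I}_{k-1}$ followed by one ``search'' stage $\mathrm{II}$, generalizing the two-stage construction of Theorem 7.2.1. In stage $\mathrm{I}_j$ we load $r_j$ random indices uniformly, and in stage $\mathrm{II}$ we load one final index that completes the $k$-collision. The internal randomness of the loading procedure is the choice of which $r_1+\cdots+r_{k-1}$ non-certificate indices to include, as in Section 6.4. The crucial ingredient beyond the Johnson-walk algorithm of Theorem 6.4.8, which would only give $n^{k/(k+1)}$, is the use of \emph{input-dependent arc weights} in the style of Proposition 7.1.1 and Theorem 7.2.1. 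For each arc $A^v_j$, I would partition the possible values $z_j$ according to whether the current multiplicity of $z_j$ among already-loaded indices is $0,1,\ldots,k-1$, and assign distinct weights $w_0,w_1,\ldots,w_{k-1}$. This exploits the fundamental asymmetry that in a positive input the target value appears $k$ times while in a negative input every value appears at most $k-1$ times.

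The construction and feasibility analysis mirror the proof of Theorem 7.2.1: I would directly write down the matrices $X_j$ for the dual adversary SDP of Theorem 3.4.2, letting each arc contribute a block-diagonal rank-$O(k)$ positive semidefinite piece whose blocks are indexed by the pair $(x_j,y_j)$ together with the multiplicity profile of $z_S$ at the source vertex. Feasibility $\sum_{j:\, x_j\neq y_j}X_j[x,y]=1$ should follow by the same telescoping argument as in Section 7.2: for each internal randomness consistent with $x$, traversing the ordered load sequence there is a unique first arc at which $x_S$ and $y_S$ diverge, and it contributes exactly the designed probability; all other arcs contribute zero because the partial assignments either already agree or already disagree. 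The per-stage diagonal contributions, on both positive and negative inputs, split into $O(k)$ terms of the form (number of arcs of multiplicity-type $\ell$) $\times\,w_\ell$ or its reciprocal, where the ``number of arcs of type $\ell$'' is governed by the binomial counts familiar from the procedure-driven framework of Section 6.4.

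Optimizing each $w_\ell$ within a stage, then balancing stage against stage, yields a recursion on the block sizes whose solution I expect to encode the telescoping identity $1-\tau_k=(1-\tau_{k-1})(2^k-2)/(2^k-1)$ with base case $\tau_2=2/3$. Inductively telescoping gives the closed form $\tau_k=1-2^{k-2}/(2^k-1)$, matching the statement. I would cross-check $k=3$ (predicting $n^{5/7}$ with an $r_j$ sequence like $n^{1/7},n^{3/7}$) and $k=4$ (predicting $n^{11/15}$) by hand before formalizing the induction, since these are the first nontrivial cases where the multi-level hiding actually matters.

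The main obstacle, and almost certainly the reason this section is labelled the ``first attempt,'' is handling negative inputs whose multiplicity profile mimics that of a positive input --- most seriously, inputs in which some value already appears $k-1$ times, for which the bookkeeping ``few arcs extend a long partial collision'' fails. As in Theorem 7.2.1, where one first uses Proposition 7.1.1 to separate inputs by Hamming weight, the natural fix is to run a promise-threshold subroutine at each multiplicity level $1,2,\ldots,k-1$ to verify that the number of $\ell$-collisions is not anomalously large, and then run the learning graph conditioned on the resulting profile. Threading this conditioning through all $k-1$ preprocessing levels without a polylogarithmic or polynomial blow-up is where I expect the bulk of the technical work, and is presumably where Section 7.4 sharpens what Section 7.3 introduces.
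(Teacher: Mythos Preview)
Your proposal misses the central structural idea and consequently misidentifies both the flaw and its fix. The paper does \emph{not} weight arcs by the global multiplicity of $z_j$ among loaded indices; instead, it builds a layered ``chain-matching'' structure: an element loaded into $S_i$ (for $i>1$) is \emph{uncovered} only if its value matches an already-uncovered element of $S_{i-1}$, and is otherwise replaced by a placeholder $\star$. This filtering is what makes the number of uncovered elements in $S_s$ only $O(r_1\cdots r_s/\vars^{s-1})$ rather than $\Theta(r_s)$, and drives the complexity formula $r_1 + r_2\sqrt{r_1/\vars} + \cdots + r_{k-1}\sqrt{r_1\cdots r_{k-2}/\vars^{k-2}} + \sqrt{\vars^k/(r_1\cdots r_{k-1})}$. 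The optimal sizes are $r_1 = \vars^{5/7}$, $r_2 = \vars^{6/7}$ for $k=3$ (not $\vars^{1/7},\vars^{3/7}$; with those values the last term alone exceeds $\vars$). Also, there are $k$ search stages II.$1,\ldots,$II.$k$, not one: each $a_i$ is loaded separately and added to $S_i$, so that $a_{i-1}$ becomes one of the uncovered elements of $S_{i-1}$ before $a_i$ is loaded.

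The actual flaw in the first attempt has nothing to do with negative inputs having many $(k{-}1)$-collisions---the count of $s$-chains is $O(\vars)$ regardless, so your proposed threshold subroutine is unnecessary and does not address the real problem. The issue is a \emph{feasibility} failure: when $a_1$ is added to $S_1$ on stage II.$1$, a previously covered element $b\in S_2$ with $y_b=x_{a_1}$ (a ``fault'') can suddenly become uncovered, so $x$ and $y$ disagree on $R[a_1]$ without any arc having contributed to $\sum_{j:x_j\ne y_j} X_j[x,y]$. The fix in Section~7.4 is combinatorial, not algorithmic: each $S_i$ is split into $O(1)$ pieces indexed by sequences $(d_1,\ldots,d_{i-1},D)$, and on stage II.$s$ one loads $a_s$ via \emph{many} parallel arcs indexed by sequences $(D_1,\ldots,D_{s-1})$ of nonempty subsets, with a sign $(-1)^{s+|D_1|+\cdots+|D_{s-1}|}$. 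An inclusion--exclusion argument then shows that the signed contributions sum to exactly $p$ regardless of where the (at most $k-1$) faults land. Your telescoping intuition for feasibility is on the right track for fault-free randomness, but the repair is this signed splitting, not a preprocessing threshold test.
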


Note that $O\s[\vars^{1-2^{k-2}/(2^k-1)}] = o(\vars^{3/4})$.  Thus, our algorithm solves the $k$-distinctness problem in asymptotically fewer queries than the best previously known algorithm spends on 3-distinctness.  Let throughout Sections~\ref{sec:first} and~\ref{sec:final}, $f\colon [q]^\vars\to\{0,1\}$ be the $k$-distinctness function.  

Similarly to the analysis in~\cite{ambainis:distinctness}, we may assume that there is a unique $k$-tuple of equal elements in any positive input.
One of the simplest reductions to this special case is to take a sequence $T_i$ of uniformly random subsets of $[\vars]$ of sizes $(2k/(2k+1))^i \vars$, and to run the algorithm, for each $i$, with the input variables outside $T_i$ removed. One can prove that if there are $k$ equal elements in the input, then there exists $i$ such that, with probability at least $1/2$, $T_i$ will contain unique $k$-tuple of equal elements. The complexities of the executions of the algorithm for various $i$ form a geometric series, and their sum is equal to the complexity of the algorithm for $i=0$ up to a constant factor.  See~\cite{ambainis:distinctness} for more detail and alternative reductions.  

Let $x$ be a positive input, and $M = \{a_1,\dots,a_k\}$ be the $k$-tuple of equal elements.
At a very high level, our learning graph is similar to the one in \rf(tbl:old).  We hide elements $a_1,\dots,a_{k-1}$ in $S$ during the loading of $a_k$.  And, as in \rf(sec:collision), we use a bias between the values of variables to load more of them for the same cost.

The bias comes from the following observation.
Let us divide the set $S$ into $k$ subsets: $S=S_1\sqcup \cdots\sqcup S_{k-1}$, where $\sqcup$ denotes disjoint union.  Set $S_i$ has size $r_i=o(\vars)$. We use $S_i$ to hide $a_i$ when loading $a_k$. 
Consider the situation before loading $a_k$. If an element $j\in S_2$ is such that $x_j\ne x_t$ for all $t\in S_1$, then this element cannot be a part of the certificate (i.e., it cannot be $a_2$), and its precise value is irrelevant. 
In this case, we say that $j$ has no {\em match} in $S_1$, and represent $j$ by a special symbol $\star$.  Otherwise, we {\em uncover} the element, i.e., load its precise value.  Similarly, when loading $S_i$ with $i>2$, we only uncover those elements that have a match among the uncovered elements of $S_{i-1}$.  All of this is summarised in \rf(tbl:new).

\begin{table}
\begin{tabular}{rp{13cm}}
\hline
I.1& Load a set $S_1$ of $r_1$ elements not from $M$.\\
I.2& Load a set $S_2$ of $r_2$ elements not from $M$, uncovering only those elements that have a match in $S_1$.\\
I.3& Load a set $S_3$ of $r_3$ elements not from $M$, uncovering only those elements that have a match among the uncovered elements of $S_2$.\\
& \quad\vdots\\
I.($k-1$)& Load a set $S_{k-1}$ of $r_{k-1}$ elements not from $M$, uncovering only those elements that have a match among uncovered elements of $S_{k-2}$.\\
II.1& Load $a_1$ and add it to $S_1$.\\
& \quad\vdots\\
II.$(k-1)$& Load $a_{k-1}$ and add it to $S_{k-1}$.\\
II.$k$& Load $a_k$.\\
\hline\end{tabular}
\caption{Learning graph for the $k$-distinctness problem.}
\label{tbl:new}
\end{table}

In this section, we start the description of the learning graph, but it has a flaw that we describe in \rf(sec:firstFeasible).  We fix the flaw in \rf(sec:final).

\subsection{Construction}
\label{sec:firstConstruction}
Again, we construct the matrices $X_j$ in~\rf(eqn:advDual) directly.  The construction deviates from the graph representation: a bit in \refsec{first}, and quite strongly in \refsec{final}.  Nevertheless, we keep the term ``vertex'' for an entity describing some knowledge of the values of the input variables, and the term ``arc'' for a process of loading a value of a variable (possibly, only partially).  Each arc originates in a vertex, but we do not specify where it goes.  Inspired by \rf(sec:procedureDriven), the vertices are divided into {\em key} vertices denoted by the set of loaded variables $S$ with additional structure.  The non-key vertices are denoted by $(S,R)$ where $S$ is the set of loaded variables, and $R$ is an additional label used to distinguish vertices with the same $S$.  Also, we use the ``internal randomness'' term.
At first, we describe the learning graph in the terms of vertices and arcs, and then explain how they are converted into the matrices $X_j$.

The key vertices of the learning graph are $V_1\cup\cdots \cup V_k$, where $V_\stage$, for $\stage\in[k]$, consists of $(k-1)$-tuples $S=(S_1,\dots,S_{k-1})$ of pairwise disjoint subsets of $[\vars]$.  For $V_\stage$, we require that $|S_i|=r_i+1$ for $i<\stage$, and $|S_i|=r_i$ for $i\ge \stage$.  Here, $r_1,\dots, r_{k-1}$ are some constants specified later.  Denote also $r = \sum_i r_i$.

A vertex $R=(R_1,\dots,R_{k-1})\in V_1$ completely specifies the internal randomness.  We assume that, for any $R\in V_1$, an arbitrary order $t_1,\dots, t_r$ of the elements in $\bigcup R = R_1\cup \cdots \cup R_{k-1}$ is fixed so that all the elements of $R_i$ precede all the elements of $R_{i+1}$ for all $i\le k-2$. We say that $R\in V_1$ is {\em consistent} with the input $x$ iff $\{a_1,\dots,a_k\}\cap (\bigcup R)=\emptyset$. 

For each $x\in f^{-1}(1)$, there are exactly ${\vars-k\choose r_1,\dots,r_{k-1}}$ choices of $R\in V_1$ consistent with $x$.  Recall the notation
\[
{n\choose b_1,\dots,b_i} = {n\choose b_1}{n-b_1\choose b_2}\cdots{n-b_1-\cdots-b_{i-1}\choose b_i}.
\]
We take each of them with probability $\prob={\vars-k\choose r_1,\dots,r_{k-1}}^{-1}$.  For a fixed input $x$ and fixed randomness $R\in V_1$ consistent with $x$, the elements are loaded in the following order: 
\begin{equation}
\label{eqn:t}
t_1,t_2,\dots,t_r,t_{r+1}=a_1,t_{r+2} = a_2,\dots,t_{r+k}=a_k.
\end{equation}

We use a convention to name the vertices and the arcs of the learning graph similar to the proof of \refthm{collision}. The non-key vertices of $\cG$ are of the form $v=(R\cap \{t_1,\dots,t_\ell\}, R)$, where $R\in V_1$, $0\le \ell<r$, and $\{t_i\}$ are as in~\refeqn{t}. Here we use notation $R\cap T = (R_1\cap T,\dots,R_{k-1}\cap T)$.

Let us describe the arcs $A^v_j$ of $\cG$, where $j$ is the variable being loaded and $v$ is the vertex it originates in.  Arcs of stages I.$\stage$ have $v=(R\cap \{t_1,\dots,t_\ell\}, R)$ and $j=t_{\ell+1}$ with $0\le \ell<r$.  The arc belongs to stage I.$\stage$ iff $t_{\ell+1}\in R_\stage$.  The arcs of stage II.$\stage$ have $v=S$, with $S\in V_\stage$, and $j\notin \bigcup S$.

For a fixed $x\in f^{-1}(1)$ and fixed internal randomness $R\in V_1$ consistent with $x$, the following arcs are {\em taken}: 
\begin{equation}
\label{eqn:taken}
A^{(R\cap \{t_1,\dots,t_\ell\}, R)}_{t_{\ell+1}}\; \mbox{for $0\le \ell<r$}\qquad\text{and}\qquad A^{R[a_1,\dots,a_\ell]}_{a_{\ell+1}}\; \text{with $0\le\ell < k$}.
\end{equation}
Here
\[R[a_1,a_2,\dots,a_\ell] = (R_1\cup\{a_1\},R_2\cup\{a_2\}\dots,R_\ell\cup\{a_\ell\},R_{\ell+1},\dots,R_{k-1}).\] 
We say that $x$ {\em satisfies} all these arcs. Note that, for a fixed $x$, no arc is taken for two different choices of $R$.

Again, for each arc $A^v_j$, we assign a matrix $X^v_j\succeq 0$, so that $X_j$ in \refeqn{advDual} are given by $X_j = \sum_v X^v_j$. Assume $A_j^v$ is fixed. Let $S=(S_1\dots,S_{k-1})$ be the corresponding set of loaded elements. Define an assignment on $S$ as a function $\alpha\colon \bigcup S \to [q]\cup\{\star\}$, where $\star$ represents the covered elements of stages I.$\stage$.  We have that $\star\notin \alpha(S_1)$ and $\alpha(S_{i+1}) \subseteq \alpha(S_i)\cup\{\star\}$ for $1\le i\le k-2$.  An input $z\in [q]^\vars$ {\em satisfies} the assignment $\alpha$ iff, for each $t\in \bigcup S$,
\[\alpha(t) = \begin{cases}
z_t,& \mbox{$t\in S_1$; or $t\in S_i$ for $i>1$ and $z_t\in \alpha(S_{i-1})$;}\\
\star,& \mbox{otherwise}.
\end{cases}\]
Each input $z$ satisfies a unique assignment on $S$.  For a fixed input $z$, we say that an element $j\in \bigcup S$ is {\em covered} in $S$ if $\alpha(j)=\star$, where $\alpha$ is the unique assignment $z$ satisfies on $S$.  (Sometimes we say that $z_j$ is covered to indicate the input.)  Otherwise, the element is {\em uncovered}.
We also say that inputs $x$ and $y$ {\em agree} on $S$, if they satisfy the same assignment on $S$.

We define $X^v_j$ as $\sum_\alpha Y_\alpha$ where the sum is over all assignments $\alpha$ on $S$. The definition of $Y_\alpha$ depends on whether $A^v_j$ is on stage I.$\stage$ with $\stage>1$, or not. If $A^v_j$ is not on one of these stages, then $Y_\alpha = \prob \psi \psi^*$ where, for each $z\in [q]^\vars$,
\[
\psi\elem[z] = \begin{cases}
1/{\sqrt{w}},& \text{$f(z)=1$, and $z$ satisfies $\alpha$ and the arc $A_j^v$;}\\
\sqrt{w},& \text{$f(z)=0$, and $z$ satisfies $\alpha$;}\\
0,&\text{otherwise.}
\end{cases}
\]
Here $w$ is a positive real number: the weight of the arc. It only depends on the stage of the arc, and will be specified later. Thus, $X_j^v$ consists of the blocks of the following form:
\begin{equation}
\label{eqn:tablica1}
\begin{array}{r|cc|}
& x & y \\
\hline
x & \prob/w & \prob\\
y & \prob & \prob w\\
\hline
\end{array}
\end{equation}
Here $x\in \preimx$ and $y\in\preimy$ represent inputs satisfying some assignment $\alpha$. The inputs represented by $x$ have to satisfy the arc $A_j^v$ as well. 

If $A_j^v$ is on stage I.$\stage$ with $\stage>1$, the elements having a match in $S_{\stage-1}$ and the ones that don't must be treated differently. In this case, $Y_\alpha=\prob(\psi\psi^*+\phi\phi^*)$, where
\[
\psi\elem[z]=
\begin{cases}
1/\sqrt{w_1},& \parbox{4.5cm}{$f(z)=1$, $z_j\in \alpha(S_{\stage-1})$,\\and $z$ satisfies $\alpha$ and $A_j^v$;}\\[\medskipamount]
\sqrt{w_1},& \mbox{$f(z)=0$, and $z$ satisfies $\alpha$};\\
0,&\mbox{otherwise};
\end{cases}
\qquad
\phi\elem[z]=\begin{cases}
1/\sqrt{w_0},& \parbox{5cm}{$f(z)=1$, $z_j\notin \alpha(S_{\stage-1})$,\\and $z$ satisfies $\alpha$ and $A_j^v$;}\\[\bigskipamount]
\sqrt{w_0},& \parbox{5cm}{$f(z)=0$, $z_j\in \alpha(S_{\stage-1})$,\\ and $z$ satisfies $\alpha$;}\\[\medskipamount]
0,&\mbox{otherwise}.
\end{cases}
\]
Here $w_0$ and $w_1$ are again parameters to be specified later. In other words, $X_j^v$ consists of the blocks of the following form:
\begin{equation}
\label{eqn:tablica}
\begin{array}{r|cc|cc|}
& x_j\in \alpha(S_{\stage-1}) & x_j\notin \alpha(S_{\stage-1}) & y_j\in \alpha(S_{\stage-1}) & y_j\notin\alpha(S_{\stage-1}) \\
\hline
x_j\in \alpha(S_{\stage-1}) & \prob/w_1 & 0 & \prob & \prob \\
x_j\notin \alpha(S_{\stage-1}) & 0 & \prob/w_0 & \prob & 0\\
\hline
y_j\in \alpha(S_{\stage-1}) & \prob & \prob & \prob(w_0+w_1) & \prob w_1  \\
y_j\notin \alpha(S_{\stage-1}) & \prob & 0 & \prob w_1 & \prob w_1\\
\hline
\end{array}
\end{equation}
Here $x$ and $y$ are like in~\refeqn{tablica1}. This is a generalisation of the construction from \refthm{collision}. Note that if $x_j$ and $y_j$ are both represented by $\star$ in the assignments on $(S_1,\dots,S_{s-1},S_s\cup\{j\},S_{s+1},\dots,S_{k-1})$, then they satisfy $X^v_j\elem[x,y]=0$.

\subsection{Complexity}
\label{sec:complexity}
Let us estimate the complexity of the learning graph.  We use the notion of the complexity of a stage from the proof of \rf(thm:collision).

Let us start with stage I.1. We set $w=1$ for all arcs on this stage. There are $r_1{\vars\choose r_1,\dots,r_{k-1}}$ arcs on this stage, and, by~\refeqn{tablica1}, each of them contributes at most $\prob$ to the complexity of each $z\in\{0,1\}^\vars$. Hence, the complexity of stage I.1 is $O\s[\prob r_1{\vars\choose r_1,\dots,r_{k-1}}] = O(r_1)$.

Now consider stage II.$\stage$ for $\stage\in[k]$.%
\footnote{\label{foot:symmetric} The complexities of stages I.1 and II.$\stage$ can be explained by a similar argument like in \rf(sec:procedureDriven). For stage I.1, the length is $r_1$, and the speciality is $O(1)$. For stage II.$\stage$, the length is 1, but the speciality is $O(\vars^\stage/(r_1\cdots r_{\stage-1}))$, because there are $\stage$ marked elements involved, but $a_i$, for $i<\stage$, is hidden in $S_i$ of size $r_i$. }
   The total number of arcs on the stage is $(\vars-r-\stage+1){\vars\choose r_1+1,\dots,r_{\stage-1}+1,r_\stage,\dots,r_{k-1}}$.  By~\refeqn{tablica1}, each of them contributes $\prob w$ to the complexity of each $y\in f^{-1}(0)$.  Out of these arcs, for any $x\in f^{-1}(1)$, exactly ${\vars-k\choose r_1,\dots,r_{k-1}}$ satisfy $x$.  And each of them contributes $\prob/w$ to the complexity of $x$.  Thus, the complexities of stage II.$\stage$ for any input in $f^{-1}(0)$ and $f^{-1}(1)$ are
\[(\vars-r-\stage+1){\vars\choose r_1+1,\dots,r_{\stage-1}+1,r_\stage,\dots,r_{k-1}}\prob w = O\s[\frac{\vars^\stage w}{r_1\cdots r_{\stage-1}}], \]
and
\[{\vars-k\choose r_1,\dots,r_{k-1}}\frac{\prob}{w} = \frac{1}{w}, \]
respectively.  By setting $w = \s[\vars^s/(r_1\cdots r_{\stage-1})]^{-1/2}$, we get complexity $O\s[\sqrt{\vars^s/(r_1\cdots r_{\stage-1})}]$ of stage II.$\stage$. The maximal complexity is attained for stage II.$k$.

Now let us calculate the complexity of stage I.$\stage$ for $\stage>1$.  The total number of arcs on this stage is $r_\stage{\vars\choose r_1,\dots,r_{k-1}}$.  Consider an input $z\in [q]^\vars$, and a choice of the internal randomness $R=(R_1,\dots,R_{k-1})\in V_1$.  An element $j$ is uncovered on stage I.$\stage$ for this choice of $R$ if and only if there is an $\stage$-tuple $(b_1,\dots,b_\stage)$ of elements such that $j=b_\stage$, $b_i\in R_i$ and $z_{b_i}=z_{b_j}$ for all $i, j\in[\stage]$.  By our assumption on the uniqueness of a $k$-tuple of equal elements in a positive input, the total number of such $\stage$-tuples is $O(\vars)$.  And, for each of them, there are ${\vars-\stage \choose r_1-1,\dots,r_\stage-1,r_{\stage+1},\dots,r_{k-1}}$ choices of $R\in V_1$ such that $b_i\in R_i$ for all $i\in[s]$.  By~\refeqn{tablica}, the complexities of this stage for an input in $f^{-1}(0)$ and in $f^{-1}(1)$ are, respectively, at most
\[\prob\left[O(\vars) {\vars-\stage\choose r_1-1,\dots,r_\stage-1,r_{\stage+1},\dots,r_{k-1}}w_0+r_\stage{\vars\choose r_1,\dots,r_{k-1}}w_1 \right]=
O\s[\frac{r_1\cdots r_\stage}{\vars^{\stage-1}}w_0+r_\stage w_1] \]
and
\[\prob\left[O(\vars) {\vars-\stage\choose r_1-1,\dots,r_\stage-1,r_{\stage+1},\dots,r_{k-1}}\frac{1}{w_1}+r_\stage{\vars\choose r_1,\dots,r_{k-1}}\frac1{w_0} \right]=
O\s[\frac{r_1\cdots r_\stage}{\vars^{\stage-1}w_1}+\frac{r_\stage}{w_0}]. \]
By assigning $w_0=\sqrt{\vars^{\stage-1}/(r_1\cdots r_{\stage-1})}$ and $w_1 = \sqrt{r_1\cdots r_{\stage-1}/\vars^{\stage-1}}$, both these quantities become $O\s[r_\stage\sqrt{r_1\cdots r_{\stage-1}/\vars^{\stage-1}}]$.  

With this choice of the weights, the value of the objective function in~\refeqn{advDualObjective} is
\begin{equation}
\label{eqn:complexity}
O\s[r_1+ r_2\sqrt{\frac{r_1}{\vars}}+\cdots+r_{k-1}\sqrt{\frac{r_1\cdots r_{k-2}}{\vars^{k-2}}} + \sqrt{\frac{\vars^k}{r_1\cdots r_{k-1}}}].
\end{equation}
Assuming all terms in~\refeqn{complexity}, except the last one, are equal, and denoting $\rho_i = \log_n r_i$, we get that
\[\rho_i +\frac12(\rho_1+\cdots+\rho_{i-1}) - \frac{i-1}{2} = \rho_{i+1} + \frac12(\rho_1+\cdots+\rho_{i}) - \frac{i}{2},\qquad\mbox{for $i=1,\dots,k-2$;} \]
or, equivalently,
\[\rho_{i+1} = \frac{1 + \rho_{i}}2,\qquad\mbox{for $i=1,\dots,k-2$.} \]
Assuming the first term, $r_1$, equals the last one, $\sqrt{\vars\frac{\vars}{r_1}\cdots\frac{\vars}{r_{k-1}}}$, we get
\[ \rho_1 = \frac{1+(1-\rho_1)+\cdots+(1-\rho_{k-1})}2 = \frac12 + \s[\frac12 + \cdots + \frac{1}{2^{k-1}}](1-\rho_1) =  \frac12 + \s[1-\frac{1}{2^{k-1}}](1-\rho_1).\]
From here, it is straightforward that $\rho_1 = 1 - 2^{k-2}/(2^k-1)$, hence, the complexity of the algorithm is $O\s[\vars^{1-2^{k-2}/(2^k-1)}]$.

\subsection{(In)feasibility}
\label{sec:firstFeasible}
 Assume $x$ and $y$ are inputs such that $f(x)=1$ and $f(y)=0$. Let $R=(R_1,\dots,R_{k-1})\in V_1$ be a choice of the internal randomness consistent with $x$. Similarly to the proof of \refthm{collision}, let $Z_j$ be the matrix corresponding to the arc loading $j$ that is taken for input $x$ and randomness $R$ (i.e., the one from~\refeqn{taken} with sub-index $j$, or the zero matrix, if there are none). 

Again, we would like to prove that~\refeqn{collisionSum} holds. Unfortunately, it does not always hold. Assume $x$, $y$ and $R\in V_1$ are such that $x$ and $y$ agree on $R$. Thus, the contribution to the sum in~\refeqn{collisionSum} is 0 from all arcs of stages I.$\stage$. Now assume that $x_{a_1}=y_{a_1}$ and there exists $b\in R_2$ such that $y_b = x_{a_1}$. This doesn't contradict that $x$ and $y$ agree on $R$, because $y_b$ is represented by $\star$ in the assignment it satisfies on $R$. 

But $x$ and $y$ disagree on $R[a_1]$, because $y_b$ gets uncovered there. Thus, the contribution to~\refeqn{collisionSum} is 0 from all arcs of stages II.$\stage$ as well. Thus, equation~\refeqn{collisionSum} does not hold. We deal with this problem in the next section.

\section{\texorpdfstring{$k$-Distinctness}{k-Distinctness}: Final Version}
\label{sec:final}
In \refsec{firstFeasible}, we saw that the learning graph from \reftbl{new} is incorrect.  This is due to {\em faults}. A fault is an element $b$ of $R_i$ with $i>1$ such that $y_b = x_{a_1}$.  This is the only element that can suddenly become uncovered after adding $a_{i-1}$ to $R_{i-1}$ on stage II.$(i-1)$.  Indeed, we assumed $x$ contains a unique $k$-tuple of equal elements, hence, if $R\in V_1$ is consistent with $x$, no $b$ in $\bigcup R$ satisfies $x_b=x_{a_1}$.
 
Since $y$ is a negative input, there are at most $k-1=O(1)$ faults for every choice of $x$.  Thus, all we need is to develop a fault-tolerant version of the learning graph from~\reftbl{new} that is capable of dealing with this number of faults.

As an introductory example, consider the case of $k=3$.  Here, a fault may only occur in $R_2$, and a fault may come in action only if $y_{a_1}=x_{a_1}$, hence, we may assume there is at most one fault.  Split $R_2$ into two subsets: $R_2 = R_2(1)\sqcup R_2(2)$.  We know that at least one of them is not faulty.  Hence, we could could try both cases: adding $a_2$ to $R_2(1)$, and adding it $R_2(2)$.  At least one of them will work.  But it is not enough: If they both work, the contribution is $2p$, and we want it to be exactly $p$ in all cases.

To solve this complication, we split $R_1$ into three subsets: $R_1 = R_1(1)\sqcup R_1(2)\sqcup R_1(1,2)$.  We uncover an element in $R_2(i)$ iff it has a match in $R_1(i)\cup R_1(1,2)$.
Consider three cases:
\itemstart
\item $a_1$ goes to $R_1(1)$, and $a_2$ goes to $R_2(1)$;
\item $a_1$ goes to $R_1(2)$, and $a_2$ goes to $R_2(2)$; and
\item $a_1$ goes to $R_1(1,2)$, and $a_2$ goes to $R_2(1)$.
\itemend
Also, we set the third case to give contribution $-p$, whereas the first two give contribution $p$ as before.  Again, at least one of the first two cases will work.  Moreover, the third case will work if and only if both of the first two cases work.  Thus, in dependence on the case, the contribution is $p+0+0$, $0+p+0$, or $p+p-p$, that equals $p$.
The construction in the general case is a direct generalisation of this idea.

\subsection{Construction}
\label{sec:construction}
The variables loaded in vertices of the learning graph are split into collections of pairwise disjoint subsets: $S = \sA[S_i(d_1,d_2,\dots, d_{i-1}, D) ]$, where $i\in [k-1]$, $d_j\in [k-j]$, and $\emptyset\subset D\subseteq [k-i]$.
If $S$ is as above, let $S_i = \bigcup_{d_1,\dots,d_{i-1},D} S_i(d_1,\dots,d_{i-1},D)$, and $\bigcup S = \bigcup_i S_i$.

For a non-empty subset $D\subset \N$, let $\mu(D)$ denote the minimal element of $D$ (or any other fixed element of $D$).  For each sequence $(D_1,\dots,D_{\stage-1})$, where $D_i$ is a non-empty subset of $[k-i]$, let $V_\stage(D_1,\dots,D_{\stage-1})$ consist of all collections $\sA[S_i(d_1,d_2,\dots, d_{i-1}, D) ]$ such that 
\[
|S_i(d_1,\dots,d_{i-1}, D)| = \begin{cases}
r_i+1,& \mbox{$i<\stage$, $d_1=\mu(D_1),\dots, d_{i-1}=\mu(D_{i-1})$, and $D=D_i$;}\\
r_i,&\mbox{otherwise.}
\end{cases}
\]

The key vertices of the learning graph are $V_1\cup\cdots \cup V_{k}$, where $V_\stage$ is the union of $V_\stage(D_1,\dots,D_{\stage-1})$ over all choices of $(D_1,\dots,D_{\stage-1})$.

Again, a vertex $R = \sA[R_i(d_1,d_2,\dots, d_{i-1}, D) ]\in V_1$ completely specifies the internal randomness.  For each of them, we fix an arbitrary order $t_1,\dots, t_r$ of the elements in $\bigcup R$.
The order is such that all the elements of $R_i$ precede all the elements of $R_{i+1}$ for any $i\le k-2$.  We say that $R$ is {\em consistent} with $x$ if $\{a_1,\dots,a_k\}$ is disjoint from $\bigcup R$.  Let $\prob$ be the inverse of the number of $R\in V_1$ consistent with $x$. (Clearly, this number is the same for all choices of $x$.)

The elements still are loaded in the order from~\refeqn{t}. We use a similar convention to name the arcs of the learning graph as in \refsec{first}. Arcs of stages I.$\stage$ are of the form $A^{(R\cap \{t_1,\dots, t_\ell \}, R)}_{t_{\ell+1}}$ for $R\in V_1$ and $0\le \ell < r$. Here, $R\cap T = \sA[S_i(d_1,d_2,\dots, d_{i-1}, D) ]$ is defined by $S_i(d_1,d_2,\dots, d_{i-1}, D) = R_i(d_1,d_2,\dots, d_{i-1}, D)\cap T$.  Arcs of stage II.$\stage$ are of the form $A^R_j$ with $R\in V_\stage$ and $j\notin\bigcup R$. 

Fix an arc $A^v_j$, and let $S=\sA[S_i(d_1,d_2,\dots, d_{i-1}, D) ]$ be the set of loaded elements. This time, we define an assignment on $S$ as a function $\alpha\colon \bigcup S\to [q]\cup\{\star\}$ such that $\star\notin \alpha(S_1)$, and, for all $i>1$ and all possible choices of $d_1,\dots,d_{i-1}$ and $D$:
\[\alpha(S_i(d_1,d_2,\dots,d_{i-1},D)) \subseteq \{\star\}\cup \bigcup_{K\ni d_{i-1}} \alpha(S_{i-1}(d_1,\dots,d_{i-2}, K)). \]

An input $z\in [q]^\vars$ satisfies the assignment $\alpha$ iff, for each $t\in \bigcup S$,
\begin{equation}
\label{eqn:satisfiesAlpha}
\alpha(t) = \begin{cases}
z_t,& \mbox{$t\in S_1(D)$ for some $D$;}\\
z_t,& \mbox{$t\in S_i(d_1,\dots,d_{i-1},D)$ and $z_t\in \bigcup_{K\ni d_{i-1}} \alpha(S_{i-1}(d_1,\dots,d_{i-2},K))$;}\\
\star,& \mbox{otherwise}.
\end{cases}
\end{equation}
The covered, uncovered elements and the agreement relation are defined as before.

For any $x\in f^{-1}(1)$ and $R\in V_1$ consistent with $x$, the following arcs are taken.  On stage I.$\stage$, for $\stage\in [k-1]$, these are arcs $A^{(R\cap \{t_1,\dots, t_\ell\}, R)}_{t_{\ell+1}}$, where $t_{\ell+1}$ belongs to one of $R_\stage$.  On stage II.$\stage$, for any fixed $R$ consistent with $x$ and $\stage\in[k]$, we have many arcs loading $a_\stage$.  For each choice of $(D_i)_{i\in[\stage-1]}$, where $D_i$ is a non-empty subset of $[k-i]$, the arc 
$A_{a_\stage}^{R[D_1\leftarrow a_1,\dots,D_{\stage-1}\leftarrow a_{\stage-1}]}$ is taken.  Here, $R[D_1\leftarrow a_1,\dots,D_{\stage-1}\leftarrow a_{\stage-1}] = \sA[S_i(d_1,d_2,\dots, d_{i-1}, D) ]$ is defined by
\[
S_i(d_1,\dots,d_{i-1}, D) = 
\begin{cases}
R_i(d_1,\dots,d_{i-1}, D)\cup\{a_i\},& \mbox{$i<\stage$, $d_1=\mu(D_1),\dots, d_{i-1}=\mu(D_{i-1})$, and $D=D_i$;}\\
R_i(d_1,\dots,d_{i-1}, D),&\mbox{otherwise.}
\end{cases}
\]

The main property of this vertex is as follows:
\begin{clm}
\label{clm:Rextended}
The vertex $S = R[D_1\leftarrow a_1,\dots,D_{\stage-1}\leftarrow a_{\stage-1}]$ belongs to $V_\stage(D_1,\dots, D_{\stage-1})$.  Moreover, all the elements $a_1,\dots,a_{\stage-1}$ are uncovered in this vertex.
\end{clm}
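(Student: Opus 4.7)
The plan is to verify the two assertions separately, handling the structural membership claim first and then the uncoveredness claim by a short induction on $i$.

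For the first part, I will simply unfold the definition of the operation $R[D_1\leftarrow a_1,\dots,D_{\stage-1}\leftarrow a_{\stage-1}]$ and compare it block-by-block to the definition of $V_\stage(D_1,\dots,D_{\stage-1})$. Recall that $R\in V_1$, so every $R_i(d_1,\dots,d_{i-1},D)$ has size exactly $r_i$. The operation adds one new element $a_i$ to precisely those blocks with $i<\stage$, $d_1=\mu(D_1),\dots,d_{i-1}=\mu(D_{i-1})$, and $D=D_i$, and leaves all other blocks untouched. The size pattern of the resulting collection therefore matches the requirement in the definition of $V_\stage(D_1,\dots,D_{\stage-1})$ line by line.

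For the second part, I will prove by induction on $i\in[\stage-1]$ that $a_i$ is uncovered in $S$. The base case $i=1$ is immediate: $a_1$ lies in $S_1(D_1)$, and by the first case of~\refeqn{satisfiesAlpha} every element of $S_1$ is uncovered (its value is recorded in $\alpha$). For the inductive step, suppose $a_{i-1}$ is uncovered, so that $\alpha(a_{i-1})=x_{a_{i-1}}$. By construction $a_{i-1}$ sits in $S_{i-1}(\mu(D_1),\dots,\mu(D_{i-2}),D_{i-1})$, hence $x_{a_{i-1}}\in \alpha(S_{i-1}(\mu(D_1),\dots,\mu(D_{i-2}),D_{i-1}))$. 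Now $a_i$ sits in $S_i(\mu(D_1),\dots,\mu(D_{i-1}),D_i)$; since $x_{a_i}=x_{a_{i-1}}$ (they belong to the same $k$-tuple of equal elements), and since $\mu(D_{i-1})\in D_{i-1}$, the value $x_{a_i}$ lies in $\bigcup_{K\ni \mu(D_{i-1})}\alpha(S_{i-1}(\mu(D_1),\dots,\mu(D_{i-2}),K))$. By the second case of~\refeqn{satisfiesAlpha}, $\alpha(a_i)=x_{a_i}\ne\star$, so $a_i$ is uncovered.

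I do not foresee a real obstacle here; the only thing to be careful about is to pick the right block indices at each level (the $\mu(D_j)$'s) so that the chain of ``matches'' witnessing uncoveredness propagates from $a_1$ up to $a_{\stage-1}$. The specific role of $\mu(D_{i-1})$ as the ``canonical'' index where $a_{i-1}$ is added, together with the condition $K\ni d_{i-1}$ in the definition of the assignment, is exactly what makes the induction go through.
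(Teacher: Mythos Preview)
Your proposal is correct and follows essentially the same approach as the paper: the first part is a direct unfolding of definitions (the paper just calls it ``obvious''), and the second part is the same induction using the fact that $\mu(D_{i-1})\in D_{i-1}$ to propagate uncoveredness from $a_{i-1}$ to $a_i$ via the second case of~\refeqn{satisfiesAlpha}.
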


\pfstart
The first statement is obvious.  Let us prove the second one.  The element $a_1$ is uncovered because it belongs to $S_1$ (the first case of~\rf(eqn:satisfiesAlpha)).  We proceed further by induction.  Assume $a_{i}\in S_i(\mu(D_1),\dots,\mu(D_{i-1}), D_i)$ is uncovered.  As $D_i\ni \mu(D_i)$, we get that $a_{i+1}\in S_{i+1}(\mu(D_1),\dots,\mu(D_{i}), D_{i+1})$ is uncovered by the second case of~\rf(eqn:satisfiesAlpha).
\pfend

\begin{figure}[tb]
\release{\centering \includegraphics[width=9cm]{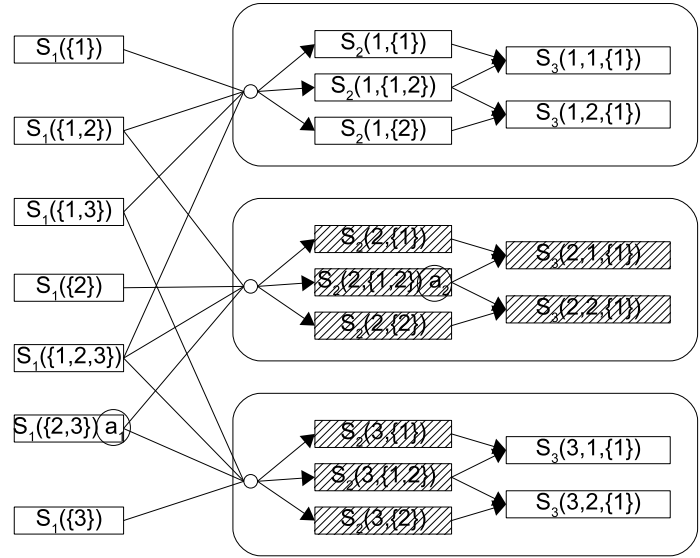} }
\caption{A structure of a vertex of a learning graph for 4-distinctness. The vertex belongs to $V_2(\{2,3\},\{1,2\})$. If there is an arrow between two subsets, a match in the first one is enough to uncover an element in the second one. After $a_1$ is added to $S_1(\{2,3\})$ and $a_2$ is added to $S_2(2,\{1,2\})$, $x$ and $y$ disagree if there is a fault in one of the hatched subsets.}
\label{fig:figa}
\end{figure}

Again, for each arc $A^v_j$, we define a positive semi-definite matrix $X^v_j$ so that $X_j$ in~\refeqn{advDual} are given by $\sum_v X^v_j$. 
The matrix $X^v_j$ is defined as $\sum_\alpha Y_\alpha$ where the sum is over all assignments $\alpha$ on $S$.  For the arcs on stage I.1, $Y_\alpha$ are defined as in~\refeqn{tablica1}, and the arcs on stage I.$\stage$, for $\stage>1$, are defined as in~\refeqn{tablica} with $\alpha(S_{\stage-1})$ replaced by $\bigcup_{K\ni d_{\stage-1}} \alpha(S_{\stage-1}(d_1,\dots,d_{\stage-2}, K))$. 

Now consider stage II.$\stage$. Let $A_j^S$ be an arc with $S\in V_\stage(D_1,\dots,D_{\stage-1})$. In this case, $Y_\alpha = \prob \psi\psi^*$ where
\[
\psi\elem[z] = \begin{cases}
1/{\sqrt{w}},& \text{$f(z)=1$, and $z$ satisfies $\alpha$ and the arc $A_j^S$;}\\
\sqrt{w},& \text{$f(z)=0$, $z$ satisfies $\alpha$, and $\stage+|D_1|+\cdots+|D_{\stage-1}|$ is odd;}\\
-\sqrt{w},& \text{$f(z)=0$, $z$ satisfies $\alpha$, and $\stage+|D_1|+\cdots+|D_{\stage-1}|$ is even;}\\
0,&\text{otherwise.}
\end{cases}
\]
Thus, depending on the parity of $\stage+|D_1|+\cdots+|D_{\stage-1}|$, $X_j^S$ consists of the blocks of one of the following two types:
\begin{equation}
\label{eqn:tablicaFinal}
\begin{array}{r|cc|}
& x & y \\
\hline
x & \prob/w & \prob\\
y & \prob & \prob w\\
\hline
\end{array}\qquad\mbox{or}\qquad
\begin{array}{r|cc|}
& x & y \\
\hline
x & \prob/w & -\prob\\
y & -\prob & \prob w\\
\hline
\end{array}
\end{equation}

\paragraph{Complexity} The complexity analysis follows the same lines as in \refsec{complexity}. The complexity of stages I.$s$ is proved similarly, by taking $R_i = \bigcup_{d_1,\dots,d_{i-1}, D} R_i(d_1,\dots,d_{i-1},D)$, and noting that $|R_i| = O(k!)r_i = O(r_i)$.  Now, having a match in $R_{i-1}$ is not sufficient for an element in $R_i$ to be uncovered, but this only reduces the complexity. The analysis of stage II.$\stage$ is also similar, but this time instead of one arc loading an element $a_\stage$ for a fixed choice of $x$ and $R\in V_1$, there are $2^{O(k^2)}=O(1)$ of them.

\subsection{Feasibility} 
Fix inputs $x\in f^{-1}(1)$ and $y\in f^{-1}(0)$, and let $R\in V_1$ be a choice of the internal randomness consistent with $x$.  Compared to the learning graph in \refsec{first}, many arcs of the form $A^v_j$ are taken for a fixed $j\in [\vars]$.  Let $\cZ$ be the set of arcs taken for this choice of $x$ and $R$. The complete list is in \refsec{construction}. We prove that
\begin{equation}
\label{eqn:sum} 
\sum_{A^v_j\in \cZ\colon x_j\ne y_j} X^v_j\elem[x,y] = \prob.
\end{equation}
Since, again, no arc is taken for two different choices of $R\in V_1$, this proves feasibility~\rf(eqn:advDualCondition). 

If $x$ and $y$ disagree on $R$ then~\refeqn{sum} holds. The reason is similar to the proof of \refthm{collision}:  It is not hard to check that there exists $i\in[r]$ such that $x$ and $y$ disagree on $R\cap \{t_1,\dots,t_{i'}\}$ if and only if $i'\ge i$. Let $j=t_i$, $T=\{t_1,\dots,t_{i-1}\}$, $S = R\cap T$ and $S' = R\cap (T\cup\{j\})$. We claim that $X^{(S,R)}_j\elem[x,y]=\prob$ and $x_j\ne y_j$. 

Indeed, let $\alpha$ be the assignment $x$ and $y$ both satisfy on $S$, and let $\alpha_x$ and $\alpha_y$ be the assignments $x$ and $y$, respectively, satisfy on $S'$.  As, for any $i$, the elements of $R_{i+1}$ are loaded only after all the elements of $R_i$ have been loaded, we get that $\alpha(t)=\alpha_x(t)=\alpha_y(t)$ for all $t\in T$.  Since $x$ and $y$ disagree on $S'$, it must hold that $\alpha_x(j)\ne \alpha_y(j)$.  Hence, $x_j\ne y_j$, and at least one of them is uncovered on $S'$. Thus, $X^{(S,R)}_j\elem[x,y]=\prob$ by~\refeqn{tablica1} or~\refeqn{tablica}, in dependence on whether $A^{(S,R)}_j$ belongs to stage I.1 or not.

We claim the contribution to the sum in~\refeqn{sum} from the arcs in $\cZ$ loading $t_{i'}$ for $i'\in [r+k]\setminus\{i\}$ is zero. For $i'>i$, this follows from that $x$ and $y$ disagree before loading $t_{i'}$.  Now consider $i'<i$.  Inputs $x$ and $y$ agree on $S = R\cap\{t_1,\dots,t_{i'}\}$.  Let $j'=t_{i'}$ and $\alpha$ be the assignment $x$ and $y$ both satisfy on $S$.  We have that either $x_{j'}=y_{j'}$, or they both are represented by $\star$ in $\alpha$.  In both cases, the contribution is zero (in the second case, by~\refeqn{tablica}).

Now assume $x$ and $y$ agree on $R$. The contribution to~\refeqn{sum} from the arcs of stages I.$\stage$ is 0 by the same argument as in the previous paragraph. Let $\stage$ be the first element such that $x_{a_\stage}\ne y_{a_\stage}$. We claim that if $\stage'\ne\stage$, the contribution to~\refeqn{sum} from the arcs $A^S_{a_{\stage'}}\in \cZ$ with $S\in V_{\stage'}$ is 0.

Indeed, if $s'<s$ then $x_{a_{\stage'}}=y_{a_{\stage'}}$. If $\stage'>\stage$, then for each choice of $(D_i)_{i\in[\stage'-1]}$, $x$ and $y$ disagree on $R[D_1\gets a_1,\dots, D_{\stage'-1}\gets a_{\stage'-1}]$, because, by~\rf(clm:Rextended), all $a_i$ with $i<\stage'$ are uncovered in the assignment of $x$.

The total contribution from the arcs $A^S_{a_\stage}\in \cZ$ with $S\in V_\stage$ is $\prob$. This is a special case of \reflem{1} below. Before stating the lemma we have to introduce additional notations.  For a vertex $S = R[D_1\leftarrow a_1,\dots,D_\ell\leftarrow a_\ell]$ with $\ell<\stage$, define the {\em block} on this vertex as the set of vertices
\[\cB(S) = \left\{R[D_1\gets a_1,\dots, D_{\stage-1}\leftarrow a_{\stage-1}] \mid \mbox{$\emptyset\subset D_i\subseteq [k-i]$ for $i=\ell+1,\dots,\stage-1$}\right\}. \]
Also, define the {\em contribution} of the block as
$\cC(S) = \sum_{S'\in \cB(S)} X^{S'}_{a_\stage}\elem[x,y]$. We prove the following lemma by induction on $\stage-\ell$:
\begin{lem}
\label{lem:1}
Let $R$ and $s$ be as above.  If $x$ and $y$ agree on $S = R[D_1\leftarrow a_1,\dots,D_\ell\leftarrow a_\ell]$ then the contribution from the block on $S$ is $(-1)^{\ell+|D_1|+\cdots+|D_{\ell}|} \prob$. Otherwise, it is 0.
\end{lem}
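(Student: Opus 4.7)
The plan is strong induction on $s-\ell$. For the base case $\ell = s-1$ the block $\cB(S)$ contains only the vertex $S$ itself, so $\cC(S) = X^S_{a_s}\elem[x,y]$; the claim then follows by reading off the non-zero entries of $X^S_{a_s}$ from~\rf(eqn:tablicaFinal): they vanish whenever $x$ and $y$ satisfy different assignments on $S$, and otherwise equal $(-1)^{s+|D_1|+\cdots+|D_{s-1}|+1}\prob$, which matches the stated sign $(-1)^{\ell+|D_1|+\cdots+|D_\ell|}\prob$.

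For the inductive step I would partition
\[
\cB(S) \;=\; \bigsqcup_{\emptyset\ne D_{\ell+1}\subseteq[k-\ell-1]} \cB\bigl(S[D_{\ell+1}\gets a_{\ell+1}]\bigr)
\]
and apply the inductive hypothesis to each summand. After factoring out the common sign, the claim reduces to the purely combinatorial identity
\[
\sum_{\emptyset\ne D_{\ell+1}\subseteq[k-\ell-1]} (-1)^{|D_{\ell+1}|}\chi(D_{\ell+1}) \;=\; -\chi_0,
\]
where $\chi(D_{\ell+1})$ and $\chi_0$ are the indicators that $x$ and $y$ agree on $S[D_{\ell+1}\gets a_{\ell+1}]$ and on $S$, respectively. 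To establish this identity, I would, in the case $\chi_0 = 1$, introduce a set $F\subseteq[k-\ell-1]$ of ``fault coordinates'' as follows. By \rf(clm:Rextended) and the definition of agreement we get $y_{a_i} = x_{a_i} = c$ (with $c = x_{a_1}$) for all $i\le\ell$; since the inductive step lives in the regime $\ell+1<s$, the choice of $s$ as the first index with $y_{a_s}\ne x_{a_s}$ forces also $y_{a_{\ell+1}} = c$, so agreement at the newly loaded element $a_{\ell+1}$ itself is automatic. Consequently the only way agreement can break is via a \emph{fault}: an element $b\in\bigcup_{i\ge\ell+2}R_i$ with $y_b = c$ whose label chain $(\mu(D_1),\dots,\mu(D_\ell),d_{\ell+1},\ldots)$ places $d_{\ell+1}$ inside $D_{\ell+1}$. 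Letting $F$ collect these coordinates and tracking the uncover/cover status level by level from $\ell+2$ up to $s-1$, I would argue that $\chi(D_{\ell+1}) = [D_{\ell+1}\cap F = \emptyset]$. The alternating sum then collapses to
\[
\sum_{\emptyset\ne D_{\ell+1}\subseteq[k-\ell-1]\setminus F} (-1)^{|D_{\ell+1}|} \;=\; -1 + [F = [k-\ell-1]],
\]
and the bracket term vanishes because $y$ is a negative input: it contains at most $k-1$ positions with value $c$, of which $a_1,\dots,a_{\ell+1}$ already occupy $\ell+1$ positions \emph{outside} $\bigcup R$, so $|F|\le k-\ell-2<k-\ell-1$. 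The complementary case $\chi_0 = 0$ is handled by the same propagation analysis: a pre-existing mismatch on $S$ cannot be repaired by enlarging the level-$\ell+1$ set, since adding more uncovered parents can only expose new uncovered children, never merge two distinct uncovered values into equality, so $\chi(D_{\ell+1}) = 0$ for every $D_{\ell+1}$ and both sides of the identity vanish.

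The main obstacle is the level-by-level fault-propagation argument that pins down $F$. The delicate point is to show that the cascade of newly-uncovered $y$-values never accidentally uncovers a new value for $x$; this uses crucially that $x_b\ne c$ for every $b\in\bigcup R$ (by consistency of $R$ with $x$ and uniqueness of the positive certificate), so the only values the cascade can introduce into $y$'s uncovered set at any intermediate level that were not already present for $x$ are $c$ and the $y$-values of faults traced back through the label chain. Verifying that this cascade is fully indexed by the single coordinate $d_{\ell+1}$—so that agreement on $S[D_{\ell+1}\gets a_{\ell+1}]$ really is controlled only by whether $D_{\ell+1}$ meets $F$—is what makes the alternating-sign sum telescope cleanly, and is precisely why the fault-tolerant $D$-labelling of Section~\ref{sec:construction} was introduced in the first place.
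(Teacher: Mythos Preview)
Your overall strategy matches the paper's: induction on $s-\ell$, partition of the block over the choice of the next $D$, application of the inductive hypothesis to each sub-block, and an inclusion--exclusion collapse of the resulting alternating sum. The base case and the disagree case are fine.

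The gap is in your definition of the fault set $F$. You let $F$ collect the $d_{\ell+1}$-coordinate of \emph{every} fault $b\in\bigcup_{i\ge\ell+2}R_i$ with label prefix $(\mu(D_1),\dots,\mu(D_\ell),d_{\ell+1},\ldots)$, and then assert $\chi(D_{\ell+1})=[D_{\ell+1}\cap F=\emptyset]$. This equivalence is false in the stated generality: a fault at level $\ell+3$ or deeper whose $d_{\ell+1}$-label lies in $D_{\ell+1}$ does \emph{not} break agreement on $S[D_{\ell+1}\gets a_{\ell+1}]$ unless a level-$(\ell+2)$ fault is also present. Adding $a_{\ell+1}$ to level $\ell+1$ injects the value $c=x_{a_1}$ only into $\alpha\bigl(S_{\ell+1}(\mu(D_1),\dots,\mu(D_\ell),D_{\ell+1})\bigr)$. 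This can newly uncover a fault at level $\ell+2$ (yielding immediate disagreement), but if there is no such fault then $c$ never enters any $\alpha(S_{\ell+2}(\ldots))$, nothing further changes, and $x,y$ continue to agree---regardless of faults at levels $\ge\ell+3$. So the ``cascade'' you identify as the main obstacle does not actually occur: only level-$(\ell+2)$ faults matter. The correct definition, which is what the paper uses, is
\[
F \;=\; \bigl\{\, d\in[k-\ell-1] : \text{some }D\text{ has a fault in }R_{\ell+2}(\mu(D_1),\dots,\mu(D_\ell),d,D)\,\bigr\}.
\]
With this $F$ your inclusion--exclusion computation and the bound $|F|<k-\ell-1$ go through verbatim. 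With your over-broad $F$, the characterisation fails (take $D_{\ell+1}=\{d\}$ for some $d$ coming only from a deep fault: then $\chi(D_{\ell+1})=1$ but your indicator says $0$), and the alternating sum no longer collapses to $-1$.
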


Note that if $\ell=0$, the lemma states that the contribution of the block on $R$ is $\prob$. But this contribution is exactly from all the arcs of the form $A^S_{a_\stage}$ from $\cZ$.  This proves~\refeqn{sum}.

\pfstart[Proof of \reflem{1}]
If $x$ and $y$ disagree on $S$, they disagree on any vertex from the block, hence, the contribution is 0. 

So, assume $x$ and $y$ agree on $S$. If $\ell=\stage-1$, there is only $S$ in the block. Hence, the contribution is $(-1)^{\ell+|D_1|+\cdots+|D_{\ell}|} \prob$ by~\refeqn{tablicaFinal}, because $x$ and $y$ agree on $S$ and $x_{a_\stage}\ne y_{a_\stage}$.
Now assume the lemma holds for $0<\ell<\stage$, and let us prove it for $\ell-1$.  
Fix $S = R[D_1\leftarrow a_1,\dots,D_{\ell-1}\gets a_{\ell-1}]$.  
The block $\cB(S)$ can be expressed as the following disjoint union:
\[\cB(S) = \bigsqcup_{\emptyset\subset D_{\ell}\subseteq [k-\ell]} \cB(R[D_1\leftarrow a_1,\dots,D_{\ell-1}\leftarrow a_{\ell-1},D_{\ell}\leftarrow a_{\ell}]).\]
Let $I$ be the set of $d_{\ell}\in [k-\ell]$ such that $\bigcup_{D} R_{\ell+1}(\mu(D_1),\dots,\mu(D_{\ell-1}),d_{\ell}, D)$ does not contain a fault.

\begin{clm}
\label{clm:xyagree}
The inputs $x$ and $y$ agree on $R[D_1\leftarrow a_1,\dots,D_{\ell}\leftarrow a_{\ell}]$ if and only if $D_{\ell}\subseteq I$.
\end{clm}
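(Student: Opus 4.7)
The plan is to track how the assignments $\alpha_x$ and $\alpha_y$ change when we augment $S_{\text{prev}} = R[D_1\leftarrow a_1,\dots,D_{\ell-1}\leftarrow a_{\ell-1}]$ by adding $a_\ell$ to the single subset $S_\ell(\mu(D_1),\dots,\mu(D_{\ell-1}),D_\ell)$. The hypothesis of Lemma \ref{lem:1} lets me assume $x$ and $y$ agree on $S_{\text{prev}}$, and since the claim is invoked only for $\ell<\stage$, I may also use that $y_{a_\ell}=x_{a_\ell}=x_{a_1}$.

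First I would check agreement at the newly loaded element $a_\ell$. The induction in the proof of Claim \ref{clm:Rextended} shows $\alpha_x(a_\ell)=x_{a_1}$, since $a_{\ell-1}$ sits inside $S_{\ell-1}(\mu(D_1),\dots,\mu(D_{\ell-2}),D_{\ell-1})$ with $D_{\ell-1}\ni\mu(D_{\ell-1})$. Because $\alpha_x=\alpha_y$ on $S_{\text{prev}}$, the value $x_{a_1}$ also lies in $\alpha_y(S_{\ell-1}(\mu(D_1),\dots,\mu(D_{\ell-2}),D_{\ell-1}))$, and combined with $y_{a_\ell}=x_{a_1}$ this forces $\alpha_y(a_\ell)=x_{a_1}=\alpha_x(a_\ell)$. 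Assignments on layers $R_1,\dots,R_{\ell-1}$ and on $R_\ell\setminus\{a_\ell\}$ depend only on strictly earlier data and are therefore unchanged, hence still equal.

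The decisive observation is that the only value newly available in the coverage source $\bigcup_{K\ni d_\ell'}\alpha_z(S_\ell(\mu(D_1),\dots,\mu(D_{\ell-1}),K))$ is $x_{a_1}$, and this addition only occurs for routes starting with $(\mu(D_1),\dots,\mu(D_{\ell-1}))$ and with $d_\ell'\in D_\ell$. For such $b\in R_{\ell+1}(\mu(D_1),\dots,\mu(D_{\ell-1}),d_\ell',D)$, the status of $b$ in $\alpha_z$ flips from $\star$ to $z_b$ precisely when $z_b=x_{a_1}$. For $z=x$, uniqueness of the $k$-tuple of equal elements together with $a_1,\dots,a_k\notin\bigcup R$ yields $x_t\neq x_{a_1}$ for every $t\in\bigcup R$, so $\alpha_x$ is untouched on $R_{\ell+1}$ and, by the layer-by-layer definition of an assignment, untouched on every deeper $R_{\ell+j}$. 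For $z=y$, the flip happens exactly at faults $b$ lying in the affected subsets; moreover any such fault was necessarily covered in $\alpha_y$ before the update, since an uncovered $b$ with $y_b=x_{a_1}$ combined with agreement on $S_{\text{prev}}$ would force $\alpha_x(b)=x_{a_1}$ and hence $x_b=x_{a_1}$, contradicting uniqueness.

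Combining these facts yields both directions. If $D_\ell\subseteq I$ no fault sits in the affected subsets, so $\alpha_y$ is also untouched on $R_{\ell+1}$ and, by the same downward induction, on every deeper layer; together with agreement at $a_\ell$ and on the shallower layers, this gives $\alpha_x=\alpha_y$ on $R[D_1\leftarrow a_1,\dots,D_\ell\leftarrow a_\ell]$. Conversely, if some $d_\ell'\in D_\ell\setminus I$ exists, picking a fault $b\in R_{\ell+1}(\mu(D_1),\dots,\mu(D_{\ell-1}),d_\ell',D)$ gives $\alpha_y(b)=x_{a_1}$ after the update while $\alpha_x(b)$ is either $\star$ or equals $x_b\neq x_{a_1}$, producing an explicit disagreement. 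The main bookkeeping obstacle is the potential cascade of coverage changes into $R_{\ell+2},R_{\ell+3},\dots$, but it dissolves once we notice that the layer-by-layer definition of $\alpha_z$ means a deeper layer cannot change status when its immediate predecessor is unchanged.
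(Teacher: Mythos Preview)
Your proof is correct and follows essentially the same approach as the paper's: both arguments identify that adding $a_\ell$ to $R_\ell(\mu(D_1),\dots,\mu(D_{\ell-1}),D_\ell)$ can only affect the coverage status of elements in the subsets $R_{\ell+1}(\mu(D_1),\dots,\mu(D_{\ell-1}),d_\ell',D)$ with $d_\ell'\in D_\ell$, and that the only elements whose status can change under $\alpha_y$ but not $\alpha_x$ are the faults there. The paper's proof is three terse sentences; you have usefully unpacked several points it leaves implicit, in particular the verification that $\alpha_x$ and $\alpha_y$ agree at $a_\ell$ itself, the uniqueness argument showing $\alpha_x$ is unchanged on $R_{\ell+1}$, and the observation that no cascade propagates to deeper layers.
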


\pfstart
The inputs $x$ and $y$ disagree on this vertex if and only if there is a fault in one of $R_{\ell+1}(d_1,\dots,d_{\ell}, D)$ that became uncovered after the addition of $a_{\ell}$.  As $a_{\ell}$ is added to $R_{\ell} (\mu(D_1),\dots,\mu(D_{\ell-1}), D_{\ell})$, equation~\rf(eqn:satisfiesAlpha) indicates that we are only interested in the faults with $d_i = \mu(D_i)$ for all $i\in[\ell-1]$ and $D_{\ell}\ni d_{\ell}$.  Hence, $x$ and $y$ disagree if and only if $D_\ell \not\subseteq I$.
\pfend

Since $y_{a_1}=\cdots=y_{a_{\stage-1}}=x_{a_1}$ and there are at most $k-1$ elements in $y$ equal to $x_{a_1}$, there are at most $k-1-(\stage-1) < k-\ell$ faults. Hence, $I$ is non-empty. Using the inductive assumption and~\rf(clm:xyagree),
\begin{align*}
\cC(S) &= \sum_{\emptyset\subset D_{\ell}\subseteq [k-\ell]} \cC(R[D_1\leftarrow a_1, D_{\ell-1}\leftarrow a_{\ell-1}, \dots,D_\ell\leftarrow a_\ell]) \\
&= \sum_{\emptyset\subset D_{\ell}\subseteq I} (-1)^{\ell+|D_1|+\cdots+|D_{\ell}|}\prob = (-1)^{\ell-1+|D_1|+\cdots+|D_{\ell-1}|}\prob,
\end{align*}
by inclusion-exclusion.
\pfend

\section{Summary}
\label{sec:kdistSummary}
In this chapter, we considered a number of applications of learning graphs beyond the framework of certificate structures.  Two main applications are the special case of the graph collision problem and the $k$-distinctness problem.  We preserved the ``hiding'' intuition from \rf(sec:procedureDriven), but added a number of new ingredients: the weights of arcs that depend on the values of the loaded variables, generalised assignments in vertices of the learning graph, partial dependence on the value of the variable being loaded, inclusion-exclusion-based techniques.

There are more problems that can be approached with the dual adversary SDP.  We find the graph collision problem (in its general form) most interesting.  We saw in \rf(chp:cert) that the largest possible quantum query complexity of a function with 1-certificate complexity 2 is $\Theta(\vars^{2/3})$.  However, the size of the alphabet increases with $\vars$.  The graph collision problem is equivalent to a {\em Boolean} function with 1-certificate complexity 2.  Similarly, it is also possible to analyse Boolean function with 1-certificate complexity $k=O(1)$.

Another open problem is whether the $k$-distinctness algorithm from \rf(thm:kdist) is optimal.

In general, the learning graph approach seems more flexible than the approach based on the quantum walk on the Johnson graph.  We were able to analyse more complicated underlying graphs, and we did it without any spectral analysis.  Also, we did not have to bother about the internal organisation of the algorithm.  One aspect of this is that algorithms based on the dual adversary SDP have built-in amortisation.  In the analysis of stage I of the algorithms in Theorems~\ref{thm:collision} and~\ref{thm:kdist} (and, to smaller extent, \rf(prp:associativity)), it was sufficient to calculate the {\em average} complexity of the stage.  Ordinary quantum algorithms usually have to wait until all computations in the superposition are finished, hence, they spend the {\em maximal} complexity on the subroutine.  Converting it to the average is non-trivial, see, e.g.,~\cite{ambainis:searchVariableTimes, ambainis:amplificationVariableTimes}.

One drawback of the approach based on the dual adversary SDP is that it only gives query-efficient algorithms, and says nothing about their time complexity.  We solve this problem, to some extent, in Part III of the thesis.  In particular, we obtain a time-efficient implementation for the 3-distinctness problem.  However, the resulting algorithm is rather different from the one presented in this chapter.

\part{Time-Efficient Implementations}
\chapter*{Overview of Part III}
\markboth{Overview of Part III}{Overview of Part III}
In Part II, we used the algorithm from \rf(thm:advAlgorithm) to come up with a number of quantum query algorithms.  
Their time complexity was left out of our consideration.
In this part of the thesis, we analyse whether similar tools can be used in development of time-efficient algorithms.

One of the drawbacks of the algorithm from \rf(thm:advAlgorithm) is that it is notoriously hard to implement time-efficiently.  Thus, we require different techniques.  We already know one of them from \rf(sec:spanPrograms):  These are span programs.  Recall that in \rf(sec:advAlgorithms) we introduced two algorithms: the aforementioned algorithm based on the dual adversary SDP (\rf(thm:advAlgorithm)), but also an algorithm based on span programs (\rf(thm:spanAlgorithm)).  Taking \rf(thm:spanCanonical) into account, we can represent the relation between these two algorithms as in \rf(fig:relationSpanAndAdv).

{
\renewcommand{\thefigure}{\arabic{outsidefigures}}
\begin{figure}[h]


\[
\xygraph{!{0;<6.5pc,0pc>:<0pc,4pc>::}
(:[u(3.3)] [l(.4)]*\txt{Generality},
:[r(4)] [d(.4)]*{\txt{Size of the \\input alphabet}},
[r]([u(.1)]-[d(.2)] [d(.2)]*\txt{Binary})
[rr]([u(.1)]-[d(.2)] [d(.2)]*\txt{Arbitrary})
)
[ur] *\txt{Canonical \\Span Programs\\ = \\ Dual Adversary for\\ Boolean function}
([u(2)] *\txt{Span Programs},
[r(2)] *\txt{Dual Adversary},
[u(.8)] *\xycircle(.8, 1.7){},
[r(.8)] *\xycircle(1.7, .9){},
[ur(2)] *{\bullet}
)}
\]
\vspace{-1.5cm}
\refstepcounter{outsidefigures}
\caption{Relation between span programs and the dual adversary SDP.}%
\label{fig:relationSpanAndAdv}%
\end{figure}
}

In \rf(chp:claws), we only consider Boolean functions.  For them, we use the technique of span programs.  We obtain improvements both in the query and the time complexities.  Span programs also provide an intuitive way of formulating and analysing the algorithm.  The corresponding adversary SDP, obtained using \rf(thm:spanCanonical), would be much less intuitive.

By a careful examination of the proofs of Theorems~\ref{thm:spanAlgorithm}	 and~\ref{thm:advAlgorithm}, one can formulate a computational model combining both flexibility of span programs and ability to work with non-Boolean alphabet.  This would correspond to the place in \rf(fig:relationSpanAndAdv) marked by the black dot.  However, this is not the way we proceed.  We find it more convenient to base algorithms on direct application of the Effective Spectral Gap \rf(lem:effective).  

In \rf(chp:electric), we obtain a new variant of a Szegedy-type quantum walk.  Using it, we are able to implement learning graphs from \rf(chp:cert) directly as a quantum walk, bypassing the dual adversary SDP and span programs.  The walk is very similar to the original quantum walk by Ambainis (\rf(prp:distinctness)).
It turns out to be very intuitive, so we managed to obtain a quantum algorithm for the 3-distinctness problem having time complexity equal to the query complexity of the algorithm from \rf(thm:kdist), up to logarithmic factors.

\chapter{Graph Properties}
\label{chp:claws}
\newcommand{\tcH}{{\cal \tilde H}}
\newcommand{\comment}[1]{}

\def\adjoint{*}
\newcommand{\binomial}[2]{\ensuremath{\left(\begin{smallmatrix}#1 \\ #2 \end{smallmatrix}\right)}}

\renewcommand{\target}{\tau}

\renewcommand{\im}{\mathop{\mathrm{im}}}

\renewcommand{\bra}[1]{{\langle#1|}}
\newcommand{\braket}[2]{{\langle#1|#2\rangle}}
\newcommand{\ketbra}[2]{{\ket{#1}\!\bra{#2}}}
\newcommand{\lbra}[1]{{\bra{\overline{#1}}}}
\newcommand{\lket}[1]{{\ket{\overline{#1}}}}
\newcommand{\bigabs}[1]{{\big\lvert #1\big\rvert}}
\newcommand{\Bigabs}[1]{{\Big\lvert #1\Big\rvert}}

\newcommand{\bignorm}[1]{{\big\| #1 \big\|}}
\newcommand{\Bignorm}[1]{{\Big\| #1 \Big\|}}
\newcommand{\Biggnorm}[1]{{\Bigg\| #1 \Bigg\|}}
\newcommand{\trnorm}[1]{{\| #1 \|_{\mathrm{tr}}}}
\newcommand{\bigtrnorm}[1]{{\bigl\| #1 \bigr\|_{\mathrm{tr}}}}	
\newcommand{\Bigtrnorm}[1]{{\Bigl\| #1 \Bigr\|_{\mathrm{tr}}}}
\newcommand{\Biggtrnorm}[1]{{\Biggl\| #1 \Biggr\|_{\mathrm{tr}}}}

In this chapter, we use the computational model of a span program from \rf(sec:spanPrograms) to come up with efficient quantum algorithms for some problems on graphs.  First, we present a new quantum algorithm for the $st$-connectivity problem, that uses exponentially less space and, in many cases, runs faster compared to the previous known algorithm.  Second, we give an optimal quantum algorithm for detecting presence of paths and claws of arbitrary fixed size in a graph given by its adjacency matrix.
Although \rf(thm:spanAlgorithm) only claims the existence of a query-efficient quantum algorithm corresponding to a span program, we are able to implement our span programs time-efficiently.

This chapter is based on the following paper:
\begin{itemize}
\item[\cite{belovs:learningClaws}]
A.~Belovs and B.~W. Reichardt.
\newblock Span programs and quantum algorithms for $st$-connectivity and claw
  detection.
\newblock In {\em Proc. of 20th ESA}, volume 7501 of {\em LNCS}, pages
  193--204, 2012, 1203.2603.
\end{itemize}

\section{Preliminaries} 
\label{sec:graphPreliminaries}

All graphs in this chapter are considered simple, i.e, undirected and without parallel edges.  
Let $K_n$ be the complete graph on~$n$ vertices, and 
let $K_{m,n}$ be the complete bipartite graph with the parts of sizes $m$ and~$n$.
A {\em star} is a complete bipartite graph of the form $K_{1,m}$, and the {\em claw} is the star~$K_{1,3}$.  

A graph $T$ is said to be a {\em subgraph} of a graph $G$, if $T$ can be obtained from $G$ by repeatedly deleting edges and isolated vertices.  The subgraph of $G$, {\em induced} by a subset $V$ of the vertex set of $G$, has $V$ as its vertex set, all edges of $G$ having their both endpoints in $V$, and only them.

A graph $T$ is a {\em minor} of $G$, if it can be obtained from $G$ by deleting and contracting edges, and deleting isolated vertices.  To {\em contract} an edge $uv$ is to replace $u$ and $v$ by a new vertex that is adjacent to the union of the neighbours of $u$ and $v$.
There is an alternative way of describing the minor relation.  Let $H$ be a graph, and $\{V_x\}$, where~$x$ runs through all the vertices of a graph $T$, be a collection of pairwise disjoint subsets of vertices of~$H$ such that the subgraph of $H$, induced by $V_x$, is connected for each $x$.  We write $H = M T$ if the following holds: there is an edge $uv$ in $T$ if and only if there is an edge between a vertex of $V_x$ and a vertex of $V_y$.  If this holds, the sets $V_x$ are called the {\em branch sets} of $MT$.  A graph~$T$ is contained in $G$ as a minor if and only if some $MT$ is contained in $G$ as a subgraph.

A graph $H$ is called a {\em subdivision} of a graph $T$ if it can be obtained by repeatedly {\em subdividing edges} of $T$.  The subdivision of an edge $uv$ of a graph $G=(V,E)$ results in the graph $(V\cup\{w\}, E\cup\{uw, wv\}\setminus\{uv\})$.  Informally, it places a new vertex $w$ of degree 2 in the middle of the edge $uv$.

\paragraph{$st$-connectivity}
In the $st$-connectivity problem, we are given a simple $n$-vertex graph $G$ with two selected vertices $s$ and~$t$.  As usual in our thesis, the graph $G$ is given by its adjacency matrix, i.e., the $n\times n$ matrix $(z_{uv})$, with~$z_{uv}=1$ if the edge $uv$ is present and $z_{uv} = 0$ otherwise.  As $z_{uv} = z_{vu}$, the problem is given by $\vars = {n\choose 2}$ input variables.
The task is to determine whether there is a path from $s$ to $t$ in $G$.  This problem is also known as {USTCON} or {UPATH}.  Classically, it can be solved in quadratic time (in the number of vertices) by a variety of algorithms, and it is not hard to see that this is optimal.  With more (but still polynomial) time, it can be solved in logarithmic space~\cite{aleliunas:connectivity}, even by a deterministic algorithm~\cite{reingold:connectivity}.

D{\"u}rr {\em et al.}\ gave a quantum algorithm for this problem that makes $O(n^{3/2})$ queries~\cite{durr:quantumGraph}.  In fact, with an approach based on Bor\r{u}vka's algorithm~\cite{boruvka:graph}, they solve a more general problem of finding a minimum spanning forest in~$G$, i.e., a cycle-free edge set of maximal cardinality that has minimum total weight.  In particular, the algorithm outputs the list of the connected components of the graph.  The algorithm's time complexity is also $O(n^{3/2})$ up to logarithmic factors.  The algorithm works by executing a quantum subroutine that uses $O(\log n)$ qubits and requires a QRACM array of size $O(n)$.  The content of the array is updated classically between the runs of the subroutine.

Our algorithm has the same time complexity as that of D\"urr {\em et al.}\ in the worst case, it has logarithmic space complexity, and does not use any QRACM arrays.  Moreover, the time complexity reduces to $\tilde O(n\sqrt{d})$, if it is known that the shortest path between $s$ and $t$, if one exists, has length at most $d$.
This promise often appears in applications as we will see later in the chapter. 
Finally, we note that our algorithm only detects the presence of an $st$-path, and does not output any.

\paragraph{Graph properties}
Another class of functions we consider in this chapter is also related to graphs.  A {\em graph property} is a total Boolean function of the adjacency matrix $(z_{ij})$ that is invariant under permuting the vertices of~$G$, i.e., it is a function of the graph, and not of its representation.  We say that a graph {\em possesses} the property if the value of the function on the graph equals 1.  The property is {\em trivial} if either all graphs or none possess it.  We say the property is {\em monotone} if it is either increasing or decreasing with respect to subgraph relation.  In the first case, adding an edge to a graph possessing the property results in a graph also having the property.  In the second case, removing an edge cannot make the graph lose the property.  
For instance, $st$-connectivity is {\em not} a graph property, because it features two fixed vertices $s$ and $t$.  On the other hand, connectivity (\rf(prp:graphConnectivity)) {\em is} a monotone non-trivial graph property.  

Query complexity of graph properties is a broad topic both classically and quantumly.  
The Aanderaa-Karp-Rosenberg conjecture~\cite{rosenberg:conjecture} states that any non-trivial monotone graph property has deterministic query complexity exactly ${n \choose 2}$ where $n$ is the number of vertices in the graph.  That is, in the worst case, any deterministic algorithm computing the property must query all edges of the graph, the property also known as {\em evasiveness}.  
The conjecture remains unsolved, however, it is known that the complexity is $\Omega(n^2)$ as by result due to Rivest and Vuillemin~\cite{rivest:RosenbergConjecture}.  
The quantum {\em exact} complexity is also known~\cite{buhrman:boundsForSmallError} to be $\Omega(n^2)$.
The randomised complexity is also believed~\cite{saks:nand} to be $\Omega(n^2)$.  But the (bounded-error) quantum query complexity of many monotone graph properties is $o(n^2)$.  Actually, by using a threshold function (\rf(cor:kThresholdUpper)) on the number of edges, one can obtain a monotone graph property with any intermediate polynomial between $n$ and $n^2$ as its quantum query complexity.  The best known lower bound on a general monotone graph property is $\Omega(n^{2/3}\log^{1/6} n)$ as observed by Yao and mentioned in~\cite{magniez:triangle}.

In fact, we have already seen some monotone graph properties in the thesis.  The triangle property from \rf(thm:walkTriangle) is an example, and we also know that its complexity, as per \rf(prp:learningTriangleUpper), is $O(n^{9/7}) = o(n^2)$.  This can be generalised to {\em forbidden subgraph properties} (FSP): Given a finite list $H_1,\dots,H_m$ of graphs, a graph $G$ possesses the property iff it does not contain any of the $H_j$s as a subgraph.  Clearly, it is a monotone graph property.

Similarly, one can define a monotone graph property that evaluates to 1 iff the input graph $G$ does not contain any of the $H_j$s as a {\em minor}.  It is easy to see that this property is {\em minor-closed}: If a graph $G$ does not contain any $H_j$ as a minor, then any minor of $G$ also does not.  Minor-closed properties include properties like being a forest (the forbidden minor is $K_3$), or being embeddable into a fixed two-dimensional manifold.  
Robertson and Seymour have famously shown~\cite{robertson:wagnerConjecture} that any minor-closed property can be described by a finite list of forbidden minors.  They also have developed a cubic-time deterministic algorithm for evaluating any minor-closed graph property~\cite{robertson:minorAlgorithm}.

Childs and Kothari have studied quantum query complexity of general minor-closed graph properties~\cite{childs:graphProperties}.  Mader's theorem~\cite{mader:minors} implies that any minor-closed graph property is {\em sparse}, i.e., for any graph possessing the property, the number of edges is at most linear in the number of vertices.  By \rf(cor:findingAllOnes) this gives a trivial quantum $O(n^{3/2})$ query algorithm.  Childs and Kothari showed that this is optimal for any minor-closed property that is not simultaneously a forbidden subgraph property.  This generalises, for instance, the $\Omega(n^{3/2})$ lower bound on quantum query complexity of planarity by Ambainis \etal~\cite{ambainis:smallOnSets}.  The proof technique is similar to \rf(prp:graphConnectivity).

For minor-closed FSPs, on the other hand, it is always possible to do better, i.e., there exists a $o(n^{3/2})$-query quantum algorithm.  In particular, Childs and Kothari gave a quantum query algorithm for deciding if a graph contains a path of length~$k$ that uses $\tilde O(n)$ queries if $k \leq 4$, $\tilde O(n^{3/2 - 1/(\lceil k/2\rceil-1)})$ queries if $k \geq 9$, and certain intermediate polynomials for $5 \leq k \leq 8$.  For subdivided claws, the quantum query complexity was also given by a polynomial whose exponent approaches $3/2$ as the size of the forbidden subgraph increases.

We make further progress on characterising the quantum query complexity of minor-closed FSPs by giving an optimal quantum $O(n)$ query query algorithm for any minor-closed FSP that is characterised by a {\em single} forbidden subgraph.  The graph is then necessarily a collection of disjoint paths and subdivided claws.  This is optimal.  While the algorithm by Childs and Kothari uses complicated quantum walks that utilise the sparsity of the input graph, our algorithm is built on span programs and can be considered as a generalisation of the $st$-connectivity algorithm.  Moreover, we show that our algorithm can be implemented in $\tilde O(n)$ quantum time.

\paragraph{Organisation}
In \refsec{st}, we present the algorithm for $st$-connectivity, and analyze its query complexity.  In \refsec{subgraph}, we define the subgraph/not-a-minor promise problem, and solve it for the cases when the subgraph is a subdivided star or the triangle.  In \refsec{K5}, we show that our technique does not work for arbitrary subgraphs.  In \refsec{efficient}, we present a framework for span program evaluation, and prove that the above algorithms can be implemented time efficiently.  

\section{Span Program for \texorpdfstring{$st$-Connectivity}{st-Connectivity}} \label{sec:st}

\begin{thm}
\label{thm:st}
Consider the $st$-connectivity problem on a $n$-vertex graph $G$ given by its adjacency matrix.  Assume there is a promise that if $s$ and $t$ are connected by a path, then they are connected by a path of length at most~$d$.  Then, the problem can be solved in $O(n\sqrt{d}) = O(\sqrt{\vars d})$ quantum queries.  
\end{thm}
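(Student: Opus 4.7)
The plan is to exhibit a span program for the promised $st$-connectivity problem whose positive witness size is at most $d$ and whose negative witness size is $O(n^2)$, and then invoke \rf(thm:spanAlgorithm). The vector space will be $\mathbb{R}^V$ spanned by the vertices of the input graph, with target vector $\tau = e_s - e_t$. For every unordered pair $\{u,v\}$ of distinct vertices I include a single input vector $e_u - e_v$ labelled by the variable $z_{uv}$ with value $1$, so that the vector is available on input $z$ exactly when the edge $uv$ is present; no input vectors carry the label $(z_{uv},0)$. This is the standard Kirchhoff-style span program: positive witnesses will correspond to unit flows from $s$ to $t$, and negative witnesses will correspond to voltage functions constant on each connected component.

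For a positive input $x$, the promise gives an $s$--$t$ path of length at most $d$; writing the path as $s = v_0, v_1, \ldots, v_\ell = t$ with $\ell \leq d$, the positive witness takes the available input vector on each edge of the path with coefficient $\pm 1$ (depending on orientation) and $0$ elsewhere. The sum of these vectors telescopes to $\tau = e_s - e_t$, and the squared $\ell_2$-norm of the coefficient vector is $\ell \leq d$, so $W_1 \leq d$. For a negative input $y$, let $C_s$ be the connected component of $s$ in $y$, and set $w'_y = \sum_{v \in C_s} e_v$. Then $\ip<w'_y,\tau> = 1$, and every available input vector $e_u - e_v$ is orthogonal to $w'_y$ because its endpoints lie in the same connected component. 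The only non-zero inner products $\ip<w'_y, e_u - e_v>$ come from false input vectors whose underlying non-edge crosses the cut between $C_s$ and its complement, and each such inner product is $\pm 1$, so the negative witness size is bounded by the number of crossing non-edges, which is at most $|C_s|(n-|C_s|) \leq n^2/4$. Hence $W_0 = O(n^2)$.

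Combining these estimates via \rf(rem:SpanGeometricalMean), the witness size of the span program is $\sqrt{W_0 W_1} = O(\sqrt{n^2 \cdot d}) = O(n\sqrt{d}) = O(\sqrt{\vars d})$, and \rf(thm:spanAlgorithm) converts this into a quantum query algorithm solving $st$-connectivity under the promise in $O(n\sqrt{d})$ queries. The query-complexity part of the argument is essentially classical and should not present any real obstacles; the delicate point is that \rf(thm:spanAlgorithm) only guarantees query efficiency, whereas the target of the chapter is a time-efficient implementation with polylogarithmic overhead. That will be the main technical hurdle, and I will treat it separately in \rf(sec:efficient), exploiting the fact that the span program here is highly uniform---there is exactly one input vector per pair of vertices and each has the same simple shape---so that the reflections $R_\Lambda$ and $R_\Pi$ appearing in \rf(alg:spanProgram) can be implemented via easily preparable superpositions over vertices and a single oracle query.
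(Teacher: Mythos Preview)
Your proof is correct and follows essentially the same approach as the paper: the same Kirchhoff-style span program on $\mathbb{R}^V$ with input vectors $e_u-e_v$, path witnesses in the positive case, and a characteristic-function-of-a-component witness in the negative case. Your negative witness bound $|C_s|(n-|C_s|)\le n^2/4$ is in fact slightly sharper than the paper's cruder $O(n^2)$ count of all false input vectors, but the conclusion is identical.
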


It is easy to see that the quantum query complexity of the problem is $1$ if $d = 1$, and $\Theta(\sqrt{n})$ if $d = 2$.  If~$d \geq 3$, and~$d = O(1)$, then the algorithm of \refthm{st} is optimal, that can be seen by reduction from the unordered search problem.  The algorithm is also optimal if $d = \Theta(n)$ by an argument similar to \rf(prp:graphConnectivity).

\pfstart[Proof of \rf(thm:st).]
Define a span program $\cP$ using the vector space~$\R^n$, with the vertex set of $G$ as an orthonormal basis.  As usually, for a vertex $u$ of $G$, $e_u$ denotes the element of the basis corresponding to $u$.  The target vector is $\target = e_t - e_s$.  For each pair of distinct vertices $\{u, v\}$, order the vertices arbitrarily and add the input vector $e_u - e_v$ labelled by the presence of the edge~$uv$ in the input graph, i.e., $e_u - e_v$ is available when the entry $(u,v)$ of the adjacency matrix is~$1$. 
The edge orientation is not important since $e_v - e_u = -(e_u - e_v)$.  

Assume that $s$ is connected to~$t$ in~$G$, and let $t = u_0, u_1, \dots, u_m = s$ be a path between them of length $m \le d$.  All vectors $e_{u_i} - e_{u_{i+1}}$ are available, and their sum is $e_t - e_s$.  Thus, the span program evaluates to~$1$.  The positive witness size is at most $d$.  

Now assume that $t$ and $s$ are in different connected components of $G$.  Define the negative witness $w'$ by $\ip<w',e_u> = 1$ if~$u$ is in the connected component of $t$, and 0 otherwise.  Then $\ip<w', \tau> = 1$ and $w'$ is orthogonal to all available input vectors.  Thus, the span program evaluates to~$0$.  Since there are $O(n^2)$ false input vectors, and the inner product of each of them with $w'$ is at most 1, the negative witness size is $O(n^2)$.  

Thus, by \rf(rem:SpanGeometricalMean), the witness size of $\cP$ is $O(n\sqrt{d})$.  By \rf(thm:spanAlgorithm), the quantum query complexity of the problem is~$O(n \sqrt{d})$.
\pfend

One can observe the similarity between this span program and the one from the second proof in \rf(sec:learningProof).  In \refsec{efficient}, we will prove that the $st$-connectivity algorithm can be implemented in $\tilde O(n\sqrt{d})$ time and $O(\log n)$ space.

As a warm-up before the algorithm in \rf(sec:subgraph), let us briefly describe an application of this algorithm for detecting $k$-paths in the input graph, where $k=O(1)$.  At first, we describe a classical algorithm for $k$-path detection from~\cite{alon:colorCoding}.  It is based on the {\em colour-coding} technique.  Let $G$ be the input graph.  Colour each vertex of $G$ uniformly at random with an integer in $\{0,1,\dots,k\}$.  Using the dynamic programming, it is not hard to detect whether $G$ contains a {\em correctly coloured} $k$-path, i.e., one coloured with consecutive integers $0,\dots,k$ from one end to the other.  Indeed, let $V_0$ be the set of all vertices of $G$ coloured with 0.  Let $V_1$ consist of all vertices of $G$ coloured in colour 1 and connected to a vertex in $V_0$.  And so on: Let $V_{i+1}$ consist of all vertices of $G$ coloured in colour $i+1$ and having a neighbour in $V_i$.  There is a correctly coloured path in $G$ if and only if $V_k$ is non-empty.  If the graph is given by its adjacency matrix, this procedure takes time $O(n^2)$.  If $G$ contains a $k$-path, the probability it is coloured correctly is $2k^{-k} = \Omega(1)$.  Thus, we can get sufficiently high probability of success by testing a constant number of different colourings.

We do not know how to perform general dynamic programming quantumly, and because of that we replace it by the $st$-connectivity algorithm.  We perform the same colouring, and construct an ancillary graph $H$ from $G$ as follows.  Add two vertices $s$ and $t$.  Connect $s$ to all vertices of colour 0, and $t$ to all vertices of colour $k$.  Remove all edges of $G$ that do not connect vertices of consecutive colours.  If $G$ contains a correctly coloured $k$-path, then there is a path from $s$ to $t$ in $H$ of length $k+2$.  If $G$ does not contain a $k$-path, $s$ and $t$ are disconnected for any possible colouring.  By \rf(thm:st), this algorithm works in $O(n)$ queries.

Note that, contrary to the classical case, an $st$-path in $H$ does not imply a correctly coloured path in $G$.  Indeed, the path from $s$ to $t$ may zigzag back and forth between consecutive layers of $H$.  This means our quantum ``dynamical programming'' can have false positives.  The classical algorithm can be easily generalised to detect arbitrary fixed trees in $G$.  Quantumly, we are able to generalise it to subdivisions of stars in \rf(thm:star), but the construction becomes more complicated.

\section{Subgraph/Not-a-Minor Promise Problem} 
\label{sec:subgraph}
In this section, we study minor-closed forbidden subgraph properties.  A natural strategy for testing a minor-closed FSP is to take the list of forbidden subgraphs and test the input graph $G$ for the presence of each subgraph one by one.  Let $T$ be a forbidden subgraph from the list.  To simplify the problem of detecting~$T$, we can add the promise that $G$ either contains $T$ as a subgraph or does not contain~$T$ \emph{as a minor}.  We call this problem the {\em subgraph/not-a-minor} promise problem for~$T$.

We develop an approach to the subgraph/not-a-minor problem using span programs.  We first show that the approach achieves the optimal $O(n)$ query complexity in the case when~$T$ is a subdivided star.  In \refsec{claws:triangle}, we extend the approach to the case when~$T$ is a triangle.  In \refsec{K5}, we show that the approach fails for the case $T = K_5$.

Before describing the algorithm, we state a lower bound that proves the optimality of all these algorithms: 

\begin{prp} \label{prp:optimal}
If the graph $T$ has at least one edge, then the quantum query complexity of the subgraph/not-a-minor problem for $T$ is $\Omega(n)$, and the randomised query complexity is $\Omega(n^2)$.
\end{prp}

\pfstart
This is a standard argument by reduction from the unordered search problem; see, e.g.,~\cite{buhrman:distinctness}.  Let $H$ be the smallest connected component of $T$ of size at least 2. Let $H'$ be $H$ with a vertex removed.  Let $G$ be constructed as $T \setminus H$ together with $n$ disjoint copies of $H'$ and $n$ isolated vertices.  The graph~$G$ has $O(n)$ vertices and does not contain a $T$-minor.  

Let $z_{i,j}$, for $i, j \in [n]$, be boolean variables.  Define $G(x)$ as $G$ with the $j$th isolated vertex connected to all vertices of the $i$th copy of $H'$ for all $i, j$ such that $z_{i,j} = 1$.  The graph $G(x)$ contains~$T$ as a subgraph if and only if at least one $z_{i,j}$ is~$1$.  This gives the reduction.  Unordered search on $n^2$ inputs requires $\Omega(n)$ quantum queries (\rf(prp:polThresholdLower)) and, $\Omega(n^2)$ randomised queries (\rf(thm:randomORlower)).  
\pfend

\subsection{Subdivision of a Star} \label{sec:star}
\newcommand{\eee}[2]{e^{(#2)}_{#1}}
\newcommand{\hhh}[2]{h^{(#2)}_{#1}}

In this section, we give an optimal quantum query algorithm for the subgraph/not-a-minor promise problem for a subdivided star.  As a special case, this implies an optimal quantum query algorithm for deciding minor-closed forbidden subgraph properties that are determined by a single forbidden subgraph.  

\begin{thm} \label{thm:star}
Let $T$ be a subdivision of a star.  Then, there exists a quantum algorithm that, given query access to the adjacency matrix of a simple graph $G$ with $n$ vertices, makes $O(n) = O(\sqrt{\vars})$ queries, and, with probability at least~$2/3$, accepts if $G$ contains~$T$ as a subgraph and rejects if $G$ does not contain $T$ as a minor.  
\end{thm}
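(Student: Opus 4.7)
The plan is to combine the $st$-connectivity algorithm of \rf(thm:st) with colour coding, generalising the $k$-path construction sketched after \rf(thm:st). Let $t=|V(T)|=O(1)$ and colour each vertex of $G$ independently and uniformly at random with an element of $V(T)$; call an embedding of $T$ into $G$ \emph{properly coloured} if it respects the colouring. Any fixed embedding is properly coloured with probability $t^{-t}=\Omega(1)$, so it suffices to produce, for each colouring $\kappa$, an $O(n)$-query quantum routine distinguishing ``$G$ has a properly coloured $T$-subgraph'' from ``$G$ has no $T$-minor'', and then to repeat this $O(1)$ times over independent $\kappa$ to boost the success probability to $2/3$.

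Given $\kappa$, I would build an auxiliary graph $H=H(G,\kappa)$ on $O(n)$ vertices with two distinguished vertices $s,t$. Each edge of $H$ is either a free gadget edge independent of $G$ or is determined by $O(1)$ entries of the adjacency matrix of $G$. The design ensures two properties: \emph{completeness}, that if $G$ contains a properly coloured $T$-subgraph then $s$ and $t$ are joined by an $st$-path of length $O(|V(T)|)=O(1)$ in $H$; and \emph{soundness}, that if $G$ does not contain $T$ as a minor then $s$ and $t$ lie in different connected components of $H$. Under these two properties, applying \rf(thm:st) to $H$ with diameter bound $d=O(1)$ yields an $O(n\sqrt d)=O(n)$ query algorithm on $G$, since each query to $H$ reduces to $O(1)$ queries on $G$.

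For a subdivided path ($m\le 2$ arms) the standard layered colour-coding construction works: $H$ has $t+1$ layers, consecutive layers joined by the $G$-edges whose colour pair matches the corresponding $T$-edge, with $s$ and $t$ attached respectively to the two leaf-colour classes. For a subdivided star with $m\ge 3$ arms a single common centre must serve all arms, so a naive layered graph is inadequate: an $st$-walk could zigzag between distinct centre-coloured vertices on different arms and thus fail to correspond to a subdivided-star subgraph of $G$. I would fix this by attaching to each centre-coloured vertex $v$ a small auxiliary centre gadget on $O(1)$ vertices, and arranging the arm layers so that every $st$-walk is forced to pivot through the gadget of a single $v$ for all $m$ arms. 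Each such walk then yields an explicit $T$-minor in $G$, with the $m$ arm-layer sub-walks furnishing the leaf branch sets and the pivoting centre furnishing the central branch set; soundness is then exactly the no-$T$-minor promise of the theorem statement.

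The main obstacle I anticipate is verifying soundness in full generality: one has to show that every $st$-walk in $H$, even one that revisits vertices, alternates direction, or reuses an arm layer, can be pruned to pairwise-disjoint paths realising a $T$-minor in $G$. For a subdivided star this comes down to a branch-set argument on a tree, but getting the centre gadget right for arbitrary arm lengths $\ell_1,\ldots,\ell_m$ will take some combinatorial care, especially to keep the edge dependencies of $H$ on $G$ at $O(1)$ each and the diameter at $O(1)$. Optimality of the resulting $O(n)$ bound is already granted by \rf(prp:optimal), and time efficiency will be postponed to \refsec{efficient}.
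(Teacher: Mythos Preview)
Your high-level plan---colour coding, then a connectivity test that enforces a common centre---matches the paper's, and you have correctly identified centre consistency as the crux. The gap is in the mechanism you propose for enforcing it: applying \rf(thm:st) as a black box to an ordinary auxiliary graph $H$ with a constant-size gadget at each centre-coloured vertex. In any such $H$ on $O(n)$ vertices the arm layers must be \emph{shared} across all centres, and once a walk is out in a shared layer nothing in an ordinary graph records which centre it came from. Concretely, take $T$ to be the subdivided claw with arm lengths $2,1,2$ and let $G$ be the path $b_1\!-\!a_1\!-\!u_1\!-\!a_2\!-\!u_2\!-\!a_3\!-\!b_3$, coloured with $u_1,u_2$ as centres, $a_j$ as $t_{j,1}$, and $b_1,b_3$ as $t_{1,2},t_{3,2}$. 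Then $G$ is a path, hence has no claw minor, yet any layered $H$ with shared arm layers admits an $st$-walk that does arm~1 from $u_1$'s gadget, enters arm~2 at $u_1$, exits it at $u_2$ via the common neighbour $a_2$, and finishes arm~3 from $u_2$. A gadget cannot block this without threading the centre's identity through the arm layers, which blows $|V(H)|$ up to $\Theta(n^2)$ and destroys the $O(n)$ bound. The paper makes exactly this point: \rf(fig:star)(c) exhibits a $T$-minor-free graph in which $s$ and $t$ are nonetheless connected in the naive $H$.

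The paper therefore does \emph{not} reduce to \rf(thm:st). It builds a span program directly, and the key device is a single \emph{four-term} input vector $(e_v^{(1)}-h_u^{(j-1)})+(h_u^{(j)}-e_v^{(2)})$ for each edge $uv$ from a centre $u$ to a first-arm-layer vertex $v$. Because this is one input vector, any positive witness must use the ``outgoing edge'' $h_u^{(j-1)}e_v^{(1)}$ and the ``returning edge'' $e_v^{(2)}h_u^{(j)}$ with the same coefficient---precisely the constraint ``leave $u$ along arm $j$ only if you also return to the same $u$''. Such a coupling is inexpressible by ordinary graph edges, so the span program is strictly more general than an $st$-connectivity span program. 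The negative case is then handled by constructing a modified graph $H'$ (deleting arm-layer vertices that reach both sides and adding shortcut edges at the centre), showing $s,t$ are disconnected in $H'$ whenever $G$ has no $T$-minor, and reading off a $\{0,1\}$-valued negative witness; orthogonality to the four-term vectors holds because the two paired edges contribute with opposite signs. So your outline is right up to the pinning step; the missing idea is to abandon the black-box use of \rf(thm:st) and admit non-edge input vectors in the span program.
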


In \refsec{efficient}, we prove that the algorithm from \refthm{star} can be implemented efficiently, in~$\tilde O(n)$ time and $O(\log n)$ space.  

\pfstart[Proof of \refthm{star}]
We use the colour-coding technique~\cite{alon:colorCoding}.  
Let $T$ be a star with~$d$ legs of lengths $\ell_1, \ldots, \ell_d > 0$.  
Denote the root vertex by~$r$ and the vertex at depth~$i$ along the $j$th leg by~$t_{j,i}$.  
The vertex set of~$T$ is $V_T = \{ r, t_{1,1}, \ldots, t_{1, \ell_1}, \ldots, t_{d, 1}, \ldots, t_{d, \ell_d} \}$.
Refer to \rf(fig:star)(a) for an example.
Colour every vertex~$u$ of $G$ with an element $c(u) \in V_T$ chosen independently and uniformly at random.  For $v \in V_T$, let $c^{-1}(v)$ be its preimage in the set of vertices of~$G$.  We design a span program that 
\itemstart
\item accepts if there is a correctly coloured $T$-subgraph in $G$, i.e., an injection $\iota$ from $V_T$ to the vertices of~$G$ such that the composition $c \circ \iota$ is the identity, and $uv$ being an edge of~$T$ implies that $\iota(u)\iota(v)$ is an edge of~$G$; 
\item rejects if $G$ does not contain $T$ as a minor, no matter the colouring~$c$.
\itemend
If $G$ contains a $T$-subgraph, then the probability it is coloured correctly is at least $|V_T|^{-|V_T|} = \Omega(1)$.
Evaluating the span program for a constant number of independent colourings suffices to detect the presence of~$T$ with probability at least~$2/3$.  

\begin{figure}[p]
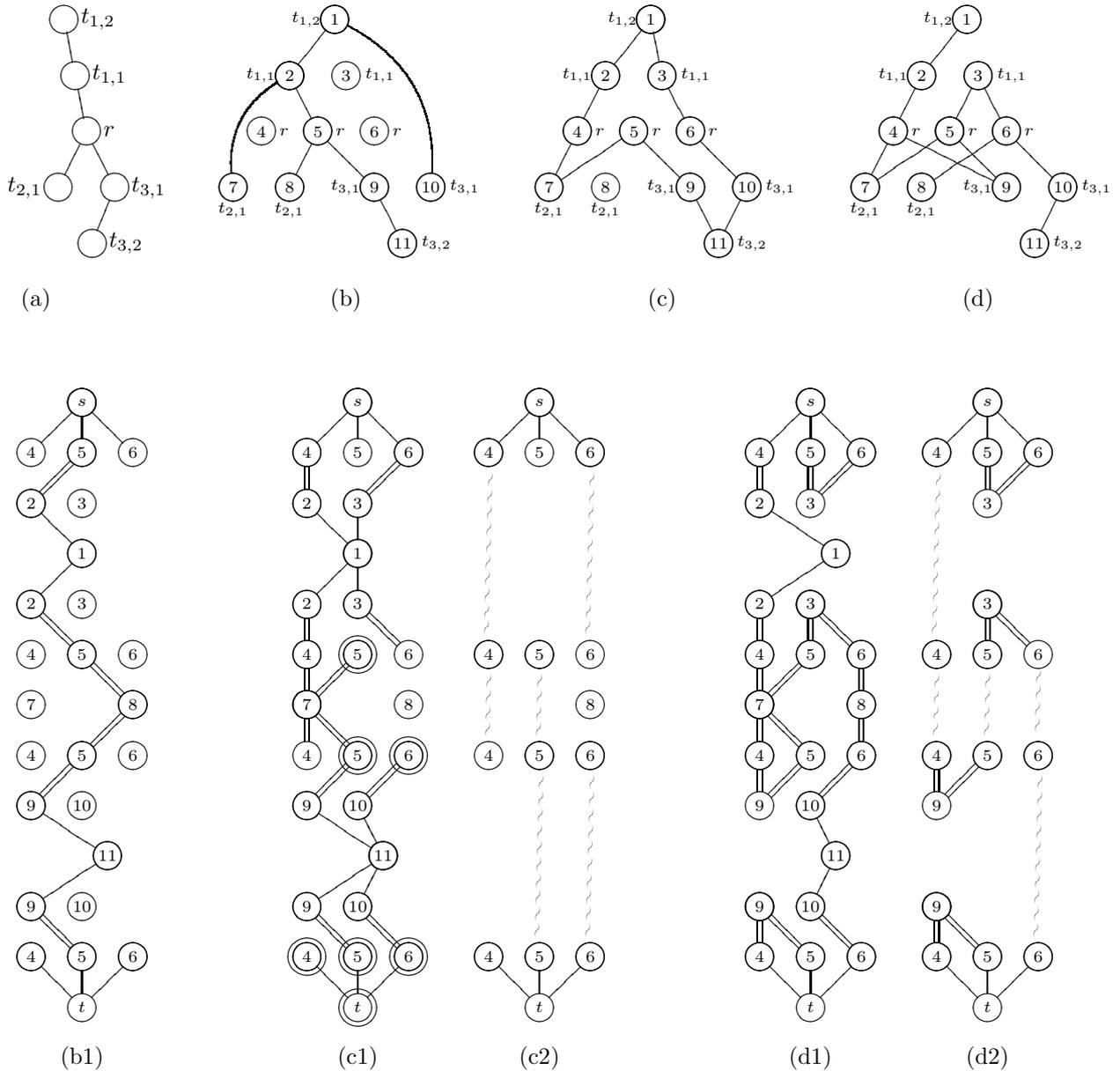

\vspace{-.3in}
\[
\def\objectstyle{\scriptstyle}
\xygraph{!~*{\cir<6pt>{}} !{0;<2pc,0pc>:} \\
{}([r(.6)]*{\textstyle t_{1,2}}) -[r(.2)d]{}([r(.6)]*{ \textstyle t_{1,1}}) -[r(.2)d]{}([r(.4)]*{\textstyle r},
-[l(.5)d]{}([l(.6)]*{\textstyle t_{2,1}}),
-[r(.5)d]{}([r(.6)]*{\textstyle t_{3,1}})-[l(.4)d]{}([r(.6)]*{\textstyle t_{3,2}}) [ld]*{\mbox{(a)}})\\
} \qquad
\xygraph{!~*{\cir<6pt>{}} !{0;<2pc,0pc>:} \\
[r(1.8)]{}([l(.5)]*{t_{1,2}}="")[]*{1}\\
&{}([l(.5)]*{t_{1,1}}="")[]*{2} & {}([r(.6)]*{t_{1,1}}="")[]*{3}\\
[r(.5)] {}([r(.4)]*{r}="")[]*{4} & {}([r(.4)]*{r}="")[]*{5} & {}([r(.4)]*{r}="")[]*{6} \\
{}([d(.4)]*{t_{2,1}}="")[]*{7} & {}([d(.4)]*{t_{2,1}}="")[]*{8} & [r(.5)]{}([l(.5)]*{t_{3,1}}="")[]*{9} & {}([r(.6)]*{t_{3,1}}="")[]*{10}\\
([drr]*{\mbox{(b)}})[rr] & {}([r(.6)]*{t_{3,2}}="")[]*{11}
"1"{}-"2"{}-"5"{}(-"8"{}, -"9"{}-"11"{})
"2"{}-@/_10pt/"7"{}
"1"{}-@/^15pt/"10"{}
}\qquad 
\xygraph{!~*{\cir<6pt>{}} !{0;<2pc,0pc>:} \\
[r(1.8)]{}([l(.5)]*{t_{1,2}}="")[]*{1}\\
&{}([l(.5)]*{t_{1,1}}="")[]*{2} & {}([r(.6)]*{t_{1,1}}="")[]*{3}\\
[r(.5)] {}([r(.4)]*{r}="")[]*{4} & {}([r(.4)]*{r}="")[]*{5} & {}([r(.4)]*{r}="")[]*{6} \\
{}([d(.4)]*{t_{2,1}}="")[]*{7} & {}([d(.4)]*{t_{2,1}}="")[]*{8} & [r(.5)]{}([l(.5)]*{t_{3,1}}="")[]*{9} & {}([r(.6)]*{t_{3,1}}="")[]*{10}\\
([drr]*{\mbox{(c)}})[rr] & {}([r(.6)]*{t_{3,2}}="")[]*{11}
"1"{}-"2"{}-"4"{}-"7"{}-"5"{}-"9"{}-"11"{}-"10"{}-"6"{}-"3"{}-"1"{}
}\qquad
\xygraph{!~*{\cir<6pt>{}} !{0;<2pc,0pc>:} \\
[r(1.8)]{}([l(.5)]*{t_{1,2}}="")[]*{1}\\
&{}([l(.5)]*{t_{1,1}}="")[]*{2} & {}([r(.6)]*{t_{1,1}}="")[]*{3}\\
[r(.5)] {}([r(.4)]*{r}="")[]*{4} & {}([r(.4)]*{r}="")[]*{5} & {}([r(.4)]*{r}="")[]*{6} \\
{}([d(.4)]*{t_{2,1}}="")[]*{7} & {}([d(.4)]*{t_{2,1}}="")[]*{8} & [r(.5)]{}([l(.5)]*{t_{3,1}}="")[]*{9} & {}([r(.6)]*{t_{3,1}}="")[]*{10}\\
([drr]*{\mbox{(d)}})[rr] & {}([r(.6)]*{t_{3,2}}="")[]*{11}
"1"{}-"2"{}-"4"{}(-"9"{}-"5"{})-"7"{}-"5"{}-"3"{}-"6"{}(-"8"{})-"10"{}-"11"{}
} \]
\strut
\[
\def\objectstyle{\scriptstyle}
\xygraph{ !~*{\cir<6pt>{}} !~:{@{}} !{0;<1.8pc,0pc>:} 
*{s}\\
{}(-"s"{})[]*{4}& {}(-"s"{})[]*{5} & {}(-"s"{})[]*{6} \\
{}(-@{=}"5"{})[]*{2} & {}[]*{3}\\
 & {}(-"2"{})[]*{1} \\
{}(-"1"{})[]*{2} & {}[]*{3}\\ 
{}[]*{4}& {}(-@{=}"2"{})[]*{5} & {}[]*{6} \\ 
{}[]*{7} && {}(-@{=}"5"{})[]*{8}\\
{}[]*{4}& {}(-@{=}"8"{})[]*{5} & {}[]*{6} \\ 
{}(-@{=}"5"{})[]*{9} & {}[]*{10}\\
[r(.5)]  & {}(-"9"{})[]*{11} \\
{}(-"11"{})[]*{9} & {}[]*{10}\\
{}()[]*{4}& {}(-@{=}"9"{})[]*{5} & {}[]*{6} \\
&{}(-"4"{},-"5"{},-"6"{})[]*{t}
[d]*{\mbox{(b1)}}
}\qquad\qquad\qquad
\xygraph{ !~*{\cir<6pt>{}} !~:{@{=}} !{0;<1.8pc,0pc>:} 
*{s}\\
{}(-"s"{})[]*{4}& {}(-"s"{})[]*{5} & {}(-"s"{})[]*{6} \\
{}(:"4"{})[]*{2} & {}(:"6"{})[]*{3}\\
& {}(-"2"{},-"3"{})[]*{1} \\
{}(-"1"{})[]*{2} & {}(-"1"{})[]*{3}\\ 
{}(:"2"{})[]*{4}& {}()[]*{5}[]*{\cir<8pt>{}} & {}(:"3"{})[]*{6} \\ 
{}(:"4"{},:"5"{})[]*{7} && {}()[]*{8}\\
{}(:"7"{})[]*{4}& {}(:"7"{})[]*{5}[]*{\cir<8pt>{}} & {}()[]*{6}[]*{\cir<8pt>{}} \\ 
{}(:"5"{})[]*{9} & {}(:"6"{})[]*{10}\\
[r(.5)]  & {}(-"9"{},-"10"{})[]*{11} \\
{}(-"11"{})[]*{9} & {}(,-"11"{})[]*{10}\\
{}()[]*{4}[]*{\cir<8pt>{}}& {}(:"9"{})[]*{5}[]*{\cir<8pt>{}} & {}(:"10"{})[]*{6}[]*{\cir<8pt>{}} \\
&{}(-"4"{},-"5"{},-"6"{})[]*{t}[]*{\cir<8pt>{}}
[d]*{\mbox{(c1)}}
}\qquad
\xygraph{ !~*{\cir<6pt>{}} !~:{@{=}} !{0;<1.8pc,0pc>:} 
*{s}\\
{}(-"s"{})[]*{4}& {}(-"s"{})[]*{5} & {}(-"s"{})[]*{6} \\
\\
\\
\\ 
{}(-@{~}"4"{})[]*{4}& {}()[]*{5} & {}(-@{~}"6"{})[]*{6} \\ 
 && {}()[]*{8}\\
{}(-@{~}"4"{})[]*{4}& {}(-@{~}"5"{})[]*{5} & {}()[]*{6} \\ 
\\
\\
\\
{}()[]*{4}& {}(-@{~}"5"{})[]*{5} & {}(-@{~}"6"{})[]*{6} \\
&{}(-"4"{},-"5"{},-"6"{})[]*{t}
[d]*{\mbox{(c2)}}
}\qquad\qquad\qquad
\xygraph{ !~*{\cir<6pt>{}} !~:{@{=}} !{0;<1.8pc,0pc>:} 
*{s}\\
{}(-"s"{})[]*{4}& {}(-"s"{})[]*{5} & {}(-"s"{})[]*{6} \\
{}(:"4"{})[]*{2} & {}(:"5"{},:"6"{})[]*{3}\\
[r(1.5)] {}(-"2"{})[]*{1} \\
{}(-"1"{})[]*{2} & {}[]*{3}\\ 
{}(:"2"{})[]*{4}& {}(:"3"{})[]*{5} & {}(:"3"{})[]*{6} \\ 
{}(:"4"{},:"5"{})[]*{7} && {}(:"6"{})[]*{8}\\
{}(:"7"{})[]*{4}& {}(:"7"{})[]*{5} & {}(:"8"{})[]*{6} \\ 
{}(:"4"{},:"5"{})[]*{9} & {}(:"6"{})[]*{10}\\
[r(1.5)] {}(-"10"{})[]*{11}  \\
{}[]*{9} & {}(-"11"{})[]*{10}\\
{}(:"9"{})[]*{4}& {}(:"9"{})[]*{5} & {}(:"10"{})[]*{6} \\
&{}(-"4"{},-"5"{},-"6"{})[]*{t}
[d]*{\mbox{(d1)}}
}\qquad
\xygraph{ !~*{\cir<6pt>{}} !~:{@{=}} !{0;<1.8pc,0pc>:} 
*{s}\\
{}(-"s"{})[]*{4}& {}(-"s"{})[]*{5} & {}(-"s"{})[]*{6} \\
 & {}(:"5"{},:"6"{})[]*{3}\\
 \\
 & {}[]*{3}\\ 
{}(-@{~}"4"{})[]*{4}& {}(:"3"{})[]*{5} & {}(:"3"{})[]*{6} \\ 
\\
{}(-@{~}"4"{})[]*{4}& {}(-@{~}"5"{})[]*{5} & {}(-@{~}"6")[]*{6} \\ 
{}(:"4"{},:"5"{})[]*{9} \\
 \\
{}[]*{9} \\
{}(:"9"{})[]*{4}& {}(:"9"{})[]*{5} & {}(-@{~}"6"{})[]*{6} \\
&{}(-"4"{},-"5"{},-"6"{})[]*{t}
[d]*{\mbox{(d2)}} [d]*{\strut}
}
\]
\vspace{-.2in}
\caption{An example of the span program construction from \TeXBug{\refthm{star}}.
(a) Coloured graph $T$.  
(b), (c) and (d) contain graphs on 11 vertices coloured with the vertices of $T$.  (b1), (c1) and (d1) contain the corresponding graphs $H$.  Here, paired edges are given by double lines.  
In (b), there is a correctly coloured $T$-subgraph, and the target vector $t-s$ can be obtained as a sum of available input vectors.
(c) does not contain $T$ as a minor.  There is a path from $s$ to $t$ in $H$, but in $H'$, that is given in (c2), $s$ and $t$ are disconnected.  The wavy lines show the added ancillary edges.  The double circles in (c1) show vertices of $H$ having inner product 1 with the negative witness $w'$.
(d) shows a graph $G$ that does not contain a correctly coloured $T$-subgraph.  (d1) shows the corresponding $H$ graph, and (d2) shows $H'$.  There is a path from $s$ to $t$ in $H'$, hence, there is a $T$-minor in $G$.  The branch sets are as follows.  $r:\{4,9,5,3,6\}$, $t_{1,1}: \{2\}$, $t_{1,2}:\{1\}$, $t_{2,1}: \{7\}$, $t_{3,1}:\{10\}$, $t_{3,2}:\{11\}$.
} \label{fig:star}
\end{figure}

\paragraph{Span program }
The vector space of the span program has the following orthonormal basis 
\begin{equation} \label{eqn:starspanprogrambasis}
\{ s,  t \} \cup 
\sfigB{ \hhh ub \mid c(u) = r,\; b\in \{0,\dots,d\} } \cup
\sfigB{ \eee ub \mid c(u) \ne r,\; b\in [\deg c(u)] }
\end{equation}
where $\deg$ stands for the degree of a vertex in the graph $T$.  That is, there is only vector $\eee u1$ if $c(u) = t_{j,\ell_j}$ is a dangling vertex in $T$; otherwise, there are two vectors $\eee u1$ and $\eee u2$.  However, for notational convenience, we use notation $e_u = \eee u2 = \eee u1$ even if $\deg c(u)=1$.
The target vector is $\tau = t - s$.  For each $u \in c^{-1}(r)$, there are free input vectors $\hhh u0 - s$ and $t - \hhh ud$.  
For each $j \in [d]$, there are the following input vectors: 
\itemstart
\item 
For $i \in [\ell_j - 1]$, $u \in c^{-1}(t_{j,i})$ and $v \in c^{-1}(t_{j,i+1})$, the input vectors $\eee v1 - \eee u1$ and $\eee u2 - \eee v2$ are available when the edge $uv$ is present in~$G$;
\item 
For $u \in c^{-1}(r)$ and $v \in c^{-1}(t_{j,1})$, the input vector $(\eee v1 - \hhh u{j-1}) + (\hhh uj - \eee v2)$ is available when the edge $uv$ is present in~$G$.
\itemend

For visualising and arguing about this span program, it is convenient to define a graph~$H$ whose vertices are the basis vectors in~\refeqn{starspanprogrambasis}.  
Edges of~$H$ correspond to the available input vectors of the span program.  For an input vector with two terms, $\beta - \alpha$, add an edge $\alpha\beta$, and for the four-term input vectors $(\eee v1 - \hhh u{j-1}) + (\hhh uj - \eee v2)$ add two ``paired" edges, $\hhh u{j-1} \eee v1 $ and $\eee v2 \hhh uj$.  For an example, refer to \rf(fig:star).

\paragraph{Positive case.}
Assume that there is a correctly coloured $T$-subgraph in $G$, given by a map $\iota$ from $V_T$ to the vertices of~$G$.  Then, the target vector $t - s$ is achieved as the sum of the input vectors spanned by $s$, $t$ and the basis vectors of the form $\eee ui$ and $\hhh ui$ with $u\in \iota(V_T)$ and $i$ arbitrary.  All these vectors are available.  
This sum has a term $\beta - \alpha$ for each pair of consecutive vertices $\alpha, \beta$ in the following path from $s$ to $t$ in $H$:
\begin{multline*}
s, \hhh{\iota(r)}{0}, \eee{\iota(t_{1,1})}{1}, \eee{\iota(t_{1,2})}{1}, \ldots, 
\eee{\iota(t_{1,\ell_1-1})}{1},
e_{\iota(t_{1,\ell_1})}^{\phantom{(1)}},
\eee{\iota(t_{1,\ell_1-1})}{2} , \ldots, \eee{\iota(t_{1,2})}{2}, \eee{\iota(t_{1,1})}{2}, \hhh{\iota(r)}{1}, \\ \eee{\iota(t_{2,1})}{1}, \ldots, \eee{\iota(t_{2,1})}{2}, \hhh{\iota(r)}{2}, \ldots \ldots, \hhh{\iota(r)}{d}, t.
\end{multline*}
 Pulled back to~$T$, the path goes from~$r$ out and back along each leg, in order.  The positive witness size is $O(1)$, since there are $O(1)$ input vectors along the path.  An example can be found in \rf(fig:star)(b).

It is not enough just to traverse~$T$ in this manner, though, because the path might each time use different vertices of colour~$r$ like in a graph in \rf(fig:star)(c).  The purpose of the four-term input vectors $(\eee v1 - \hhh u{j-1}) + (\hhh uj - \eee v2)$ is to enforce that if the path goes out along an edge $\hhh u{j-1}\eee v1$, then it must return using the paired edge $\eee v2\hhh uj$.

\paragraph{Negative case.}
Assume that $G$ does not contain $T$ as a minor.  It may still happen that $s$ is connected to~$t$ in~$H$.  We construct an ancillary graph $H'$ from $H$ by removing some vertices and adding some extra edges, so that $s$ is disconnected from~$t$ in~$H'$.  Then, we use this graph to construct the negative witness.

The graph $H'$ is defined starting with~$H$.  
Let $T_j = \{ t_{j,1}, \ldots, t_{j, \ell_j} \}$, $H_j = \sfigA{\eee ub \mid c(u) \in T_j,\; b \in \{0,1\}}$ and $R_j = \{ \hhh uj \mid c(u) = r \}$.
Perform the following transformations:
\itemstart
\item For $j \in [d]$ and $u \in c^{-1}(r)$, add the edge $\hhh{u}{j-1}\hhh uj$ to $H'$ if $\hhh {u}{j-1}$ is connected to $R_j$ in~$H$ via a path with all internal vertices (vertices besides the two endpoints) in $H_j$;
\item For $j\in [d]$, remove all vertices in $H_j$ that are connected to both $R_{j-1}$ and $R_j$ in~$H$ via paths with all internal vertices in~$H_j$.
\itemend

Note that in the second case, for each $u \in c^{-1}(T_j)$, either both $\eee u1$ and $\eee u2$ are removed, or neither~is.  Indeed, if there is a path from $\eee u1$ to $R_j$, then it necessarily must pass through a vertex $e_v$ with $\deg c(v) = 1$.  Then backtracking along the path before this vertex, except with the upper index switched $1 \leftrightarrow 2$, gives a path from $\eee u1$ to $\eee u2$.  Similarly, $\eee u1$ is connected to $\eee u2$ if there is a path from $\eee u2$ to $R_{j-1}$.

Define the negative witness $w'$ by $\ip<\alpha, w'> = 1$ if $t$ is connected to $\alpha$ in~$H'$, and $\ip<\alpha, w'> = 0$ otherwise (this includes the case when $\alpha$ is removed from $H'$).  The vector $w'$ is orthogonal to all available input vectors.  In particular, it is orthogonal to any available four-term input vector $(\eee v1 - \hhh u{j-1}) + (\hhh uj - \eee v2)$, corresponding to two paired edges in~$H$, because either the same edges are present in~$H'$, or $\eee v1$ and $\eee v2$ are removed and a new edge $\hhh{u}{j-1}\hhh uj$ is added.  For an example, refer to \rf(fig:star)(c).

In order to verify that $w'$ is a negative witness, it remains to prove that $s$ is disconnected from $t$ in~$H'$, for then $\ip<w',t-s>=1$.  Assume that $s$ is connected to~$t$ in~$H'$, via a simple path~$p$.  Based on the path~$p$, we will construct a $T$-minor in~$G$, giving a contradiction.  

By the structure of the graph~$H'$, $p$ must pass in order through some vertices $\hhh {u_0}{0}$, $\hhh{u_1}{1}, \ldots, \hhh{u_{d}}{d}$, where $c(u_j) = r$ for all $j$.
Consider the segment of the path from $\hhh{u_{j-1}}{j-1}$ to $\hhh{u_j}{j}$.  Due to the construction of $H'$, this segment must cross a new edge added to~$H'$, $\hhh{v_j}{j-1}\hhh{v_j}{j}$, for some $v_j\in c^{-1}(r)$.  Thus, the path~$p$ has the form 
\begin{equation*}
s, \ldots, \hhh{v_1}{0}, \hhh{v_1}{1}, \ldots, \hhh{v_2}{1}, \hhh{v_2}{2}, \ldots \ldots, \hhh{v_d}{d-1}, \hhh{v_d}{d}, \ldots, t \enspace .  
\end{equation*}
Based on this path, we can construct the $T$-minor as follows.  The branch set of the root~$r$ consists of all the vertices in~$G$ that correspond to the vertices along~$p$ (i.e., the corresponding subindices).   
Furthermore, for each edge $\hhh{v_j}{j-1}\hhh{v_j}{j}$, there is a path in~$H$ from $\hhh{v_j}{j-1}$ to $R_j$ with every internal vertex in $H_j$.  The first $\ell_j$ vertices along the path give a minor for the $j$th leg of~$T$.  It is vertex-disjoint from the minors for the other legs because the colours are different.  It is also vertex-disjoint from the branch set of~$r$ because no vertices along the path are present in~$H'$.  (Here, we again use that $\eee u1$ is present in $H'$ if and only if $\eee u2$ is present.)  For an example, refer to \rf(fig:star)(d).

Since each coefficient of $w'$ is zero or one, the inner product of $w'$ with any false input vector is at most two in magnitude.  Since there are $O(n^2)$ input vectors, the negative witness size is $O(n^2)$.  
Thus, the total witness size of the learning graph is $O(n)$.
\pfend


It can be checked that if $T$ is a path or a subdivision of a claw then a graph $G$ contains $T$ as a minor if and only if it contains $T$ as a subgraph.  Moreover, disjoint collections of paths and subdivided claws are the only graphs~$T$ with this property.  This implies the following corollary: 

\begin{cor} \label{cor:starforest}
Let $T$ be a collection of vertex-disjoint subdivided stars.  Then there exists a quantum algorithm that, given query access to the adjacency matrix of a simple graph $G$ with $n$ vertices, makes $O(n)$ queries, and, with probability at least~$2/3$, accepts if $G$ contains~$T$ as a subgraph and rejects if~$G$ does not contain $T$ as a minor.  
\end{cor}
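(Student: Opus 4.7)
\pfstart[Proof proposal for \rf(cor:starforest).]
The plan is to reduce to \rf(thm:star) via colour-coding together with an AND-combination of span programs. Write $T = T_1 \sqcup T_2 \sqcup \cdots \sqcup T_k$ where each $T_i$ is a subdivided star and $k = O(1)$. Let $V_T = V_{T_1} \sqcup \cdots \sqcup V_{T_k}$ be the disjoint union of vertex sets. Colour each vertex $u$ of $G$ with an element $c(u) \in V_T$ chosen independently and uniformly at random. The key point is that each colour class $c^{-1}(V_{T_i})$ is automatically disjoint from $c^{-1}(V_{T_j})$ for $i \neq j$, so any configuration detected using colours from $V_{T_i}$ uses vertices disjoint from those used for $V_{T_j}$.

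For each component $T_i$, apply the construction of \rf(thm:star) to obtain a span program $\cP_i$ (operating on basis vectors indexed by vertices of $G$ coloured with $V_{T_i}$) with target $\tau_i = t_i - s_i$. Combine them into a single span program $\cP$ whose vector space is the orthogonal direct sum of the $\cP_i$'s, whose target is $\tau = \tau_1 \oplus \cdots \oplus \tau_k$, and whose input vectors are the input vectors of the individual $\cP_i$'s (each living in its own summand). Then $\cP$ evaluates to 1 exactly when every $\cP_i$ evaluates to 1. A positive witness for $\cP$ is a concatenation of positive witnesses for each $\cP_i$, and a negative witness for $\cP$ is any $w'_i$ for an index $i$ where $\cP_i$ rejects, extended by zero on the other summands. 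Hence the positive witness size is the sum of the individual positive witness sizes, namely $O(k) = O(1)$, and the negative witness size is bounded by the maximum individual negative witness size, $O(n^2)$. By \rf(rem:SpanGeometricalMean), the total witness size is $O(n)$, so \rf(thm:spanAlgorithm) yields an $O(n)$-query algorithm.

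It remains to verify correctness. In the positive case, if $G$ contains $T$ as a subgraph, then with probability at least $|V_T|^{-|V_T|} = \Omega(1)$ the embedding $\iota\colon V_T \to V(G)$ is correctly coloured, at which point each $\cP_i$ accepts (by the positive-case analysis of \rf(thm:star)) and hence so does $\cP$. In the negative case, suppose $G$ does not contain $T$ as a minor, yet $\cP$ accepts for some colouring $c$. Then every $\cP_i$ accepts, and the proof of \rf(thm:star) furnishes, for each $i$, a $T_i$-minor in $G$ whose branch sets lie entirely inside $c^{-1}(V_{T_i})$. Because these colour classes are pairwise disjoint, the resulting branch sets across different $i$ are vertex-disjoint, yielding a $T$-minor in $G$ and contradicting the hypothesis. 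A constant number of independent colourings boosts the success probability above $2/3$.

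The main obstacle I would anticipate is exactly this last step: naively, an AND of span programs only certifies each $T_i$ individually and says nothing about disjointness of the embeddings in $G$. The whole point of routing disjointness through colour-coding with $V_T$ partitioned into $V_{T_i}$ is to obtain disjointness of branch sets for free, without any extra constraints in the span program itself. The remainder of the verification is routine given \rf(thm:star).
\pfend
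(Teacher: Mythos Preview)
Your proposal is correct and takes essentially the same approach as the paper: colour by the full vertex set $V_T$ partitioned into the $V_{T_i}$, run the span program of \rf(thm:star) for each component on the vertices coloured by that component, and use disjointness of the colour classes to ensure the resulting $T_i$-minors are vertex-disjoint. The paper phrases this as ``run the span program once for every component'' rather than explicitly forming the direct-sum AND span program, but the content is identical, and your witness-size bookkeeping for the AND combination is correct.
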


\pfstart
It is not enough to apply \refthm{star} once for each component of~$T$, because some components might be subgraphs of other components.  Instead, proceed as in the proof of \refthm{star}, but for each fixed colouring of~$G$ by the vertices of~$T$ run the span program once for every component on the graph $G$ restricted to vertices coloured by that component.  This ensures that in the negative case, if the span programs for all components accept, then there are vertex-disjoint minors for every component, which together form a minor for~$T$.  
\pfend


\begin{figure}[bt]
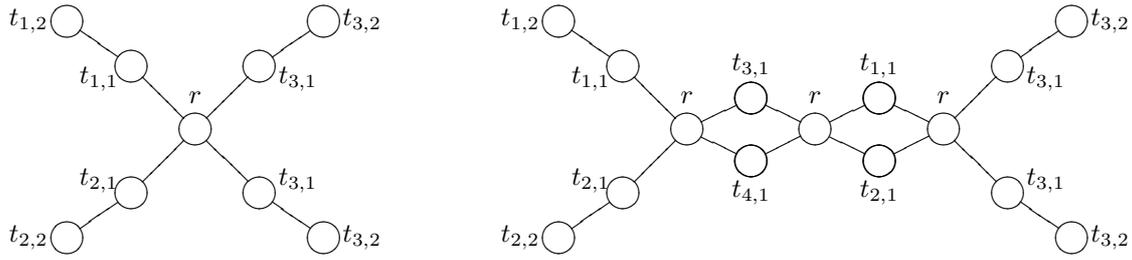

\vspace{-.3in}
\[
\xygraph{!~*{\cir<6pt>{}} !{0;<2pc,0pc>:} \\
{}([u(.5)]*{\textstyle r},
-[lu]{}([d(.2)l(.5)]*{\textstyle t_{1,1}}) - [u(.7)l]{}([l(.6)]*{\textstyle t_{1,2}}),
-[ld]{}([u(.2)l(.5)]*{\textstyle t_{2,1}}) - [d(.7)l]{}([l(.6)]*{\textstyle t_{2,2}}),
-[ru]{}([d(.2)r(.6)]*{\textstyle t_{3,1}}) - [u(.7)r]{}([r(.6)]*{\textstyle t_{3,2}}),
-[rd]{}([u(.2)r(.6)]*{\textstyle t_{3,1}}) - [d(.7)r]{}([r(.6)]*{\textstyle t_{3,2}})
)} \qquad\qquad
\xygraph{!~*{\cir<6pt>{}} !{0;<2pc,0pc>:} \\
{}([u(.5)]*{\textstyle r},
-[lu]{}([d(.2)l(.5)]*{\textstyle t_{1,1}}) - [u(.7)l]{}([l(.6)]*{\textstyle t_{1,2}}),
-[ld]{}([u(.2)l(.5)]*{\textstyle t_{2,1}}) - [d(.7)l]{}([l(.6)]*{\textstyle t_{2,2}}),
-[u(.5)r]{} ([u(.5)]*{\textstyle t_{3,1}}) []="1",
-[d(.5)r]{} ([d(.5)]*{\textstyle t_{4,1}}) []="2")
[rr]{} ([u(.5)]*{r}, -"1"{}, -"2"{},
-[u(.5)r]{} ([u(.5)]*{\textstyle t_{1,1}}) []="1",
-[d(.5)r]{} ([d(.5)]*{\textstyle t_{2,1}}) []="2")
[rr]{} ([u(.5)]*{r}, -"1"{}, -"2"{},
-[ru]{}([d(.2)r(.6)]*{\textstyle t_{3,1}}) - [u(.7)r]{}([r(.6)]*{\textstyle t_{3,2}}),
-[rd]{}([u(.2)r(.6)]*{\textstyle t_{3,1}}) - [d(.7)r]{}([r(.6)]*{\textstyle t_{3,2}})
)}
\]
\caption{A graph $T$ on the left, and a coloured graph $G$ on the right.  It is not hard to check that the learning graph from \TeXBug{\rf(thm:star)} accepts on $G$, although that does not contain a $T$-subraph.}
\label{fig:4star}
\end{figure}

Finally, we note that it is not possible to replace the subgraph/not-a-minor promise problem in the formulation of \rf(thm:star) by the ordinary subgraph containment problem.  The span program can accept even if $G$ does not contain $T$ as a subgraph, see \rf(fig:4star).

\subsection{Triangle}
\label{sec:claws:triangle}
The technique used in the proof of \refthm{star} extends to other problems.  As an example, we consider the case of~$T$ being a triangle.
As per \rf(prp:learningTriangleUpper), a $K_3$-subgraph in an $n$-vertex graph $G$ can be detected in $O(n^{9/7})$ quantum queries.  We show that if $G$ is promised not to contain a $K_3$-minor in the negative case (i.e., it is a forest), this problem can be solved in linear number of queries.  As forests are sparse, it is also apt to mention an $O(n^{7/6})$-query quantum algorithm for finding triangles in sparse graphs~\cite[Theorem~4.4]{childs:graphProperties}.

\begin{thm} \label{thm:claws:triangle}
There exists a quantum algorithm that, given query access to the adjacency matrix of a simple graph $G$ with $n$ vertices, makes $O(n) = O(\sqrt{\vars})$ queries and distinguishes the cases when $G$ contains a triangle and when $G$ is a forest, except with error probability at most~$1/3$.  
\end{thm}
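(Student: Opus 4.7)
The plan is to combine colour coding with a span program construction analogous to the one for subdivided stars in \rf(thm:star), adapted to the $3$-cycle structure of $K_3$. Colour each vertex of $G$ independently and uniformly at random with a colour in $\{r, t_1, t_2\}$, so that any triangle in $G$ has its three vertices receive all three distinct colours (call it ``correctly coloured'') with probability $\Omega(1)$, while a forest stays a forest under every colouring. Running $O(1)$ independent colourings boosts success probability to at least $2/3$. For each colouring I would build a span program whose basis is $\{s, t\} \cup \{\hhh{u}{0}, \hhh{u}{1}, \hhh{u}{2} : c(u) = r\} \cup \{\eee{v}{1}, \eee{v}{2} : c(v) = t_1\} \cup \{f_w^{(1)}, f_w^{(2)} : c(w) = t_2\}$, with target $\tau = t - s$ and free input vectors $\hhh{u}{0} - s$ and $t - \hhh{u}{2}$. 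For each edge of $G$ between distinctly coloured endpoints, add a four-term paired input vector, analogous to those in \rf(thm:star): edge $uv$ with $c(u) = r$, $c(v) = t_1$ contributes $(\eee{v}{1} - \hhh{u}{0}) + (\hhh{u}{1} - \eee{v}{2})$; edge $uw$ with $c(u) = r$, $c(w) = t_2$ contributes $(f_w^{(1)} - \hhh{u}{1}) + (\hhh{u}{2} - f_w^{(2)})$; and edge $vw$ with $c(v) = t_1$, $c(w) = t_2$ contributes $(\eee{v}{2} - \eee{v}{1}) + (f_w^{(2)} - f_w^{(1)})$. The paired vectors, as in \rf(thm:star), force any accepting linear combination to use a single consistent colour-$r$ root.

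For the positive case, given a correctly coloured triangle $abc$ with $c(a) = r$, I take the three paired vectors for $ab$, $bc$, $ac$ together with the two free vectors at $a$, each with coefficient $1$. A direct calculation shows that all contributions at $\hhh{a}{0}, \hhh{a}{1}, \eee{b}{1}, \eee{b}{2}, f_c^{(1)}, f_c^{(2)}, \hhh{a}{2}$ cancel and the sum equals $\tau$, so the positive witness size is $O(1)$. For the negative case I set $\ip<w', s> = 0$, $\ip<w', t> = 1$, $\ip<w', \hhh{u}{0}> = 0$, $\ip<w', \hhh{u}{2}> = 1$, and introduce the coordinates $x_u = \ip<w', \hhh{u}{1}>$, $\alpha_v = \ip<w', \eee{v}{2}> - \ip<w', \eee{v}{1}>$, $\beta_w = \ip<w', f_w^{(2)}> - \ip<w', f_w^{(1)}>$; the paired-vector orthogonality conditions reduce to the affine system $\alpha_v = x_u$, $\beta_w = 1 - x_u$, $\alpha_v + \beta_w = 0$ indexed by the three edge types of $G$. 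Summing these three equations around a correctly coloured triangle forces $0 = 1$, so no $w'$ exists when such a triangle is present; conversely, when $G$ is a forest the system is acyclic and admits a solution, giving a negative witness.

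The main technical obstacle is bounding the negative witness size $\sum_{\text{false}} |\ip<w', v_i>|^2 = O(n^2)$, which requires $w'$'s coordinates to be uniformly $O(1)$. The subtlety is that along a colour-alternating path $r \to t_1 \to t_2 \to r$ the recurrences give $x_{u'} = x_u + 1$, so naive rooting of a tree component can blow coordinates up to the component's diameter and only yield a weaker $O(n^4)$ bound on the negative witness size. The fix is to choose the root in each tree component carefully (for example, a tree centre, or splitting long colour-alternating paths by introducing additional judiciously chosen free input vectors) so that every coordinate stays $O(1)$. Once this uniform bound is established, the positive witness size $O(1)$ and the negative witness size $O(n^2)$ combine via \rf(rem:SpanGeometricalMean) into a span-program witness size $O(n) = O(\sqrt{\vars})$, and \rf(thm:spanAlgorithm) yields the claimed $O(n)$-query quantum algorithm, matching the $\Omega(n)$ lower bound of \rf(prp:optimal).
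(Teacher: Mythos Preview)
Your span program construction and the positive-case analysis are sound, and you correctly identify the obstacle: along a colour-alternating path $r \to t_1 \to t_2 \to r$ the orthogonality constraints force $x_{u'} = x_u + 1$, so coordinates of any negative witness can grow linearly with the length of such paths. However, your proposed fixes do not work. The recurrence $x_{u'} = x_u + 1$ is a \emph{constraint} imposed by the span program on every negative witness; it holds regardless of where you ``root'' the tree, so choosing a tree centre cannot help --- on a path of length $3m$ with a perfectly alternating colouring, the values $x_{u_0}, \ldots, x_{u_m}$ must form an arithmetic progression of length $m+1$, giving coordinates of size $\Theta(n)$ in the worst case. Your alternative suggestion of ``additional judiciously chosen free input vectors'' is also problematic: free input vectors are always available, so adding them can only enlarge the set of accepted inputs, potentially breaking correctness; and on the negative side they impose \emph{additional} orthogonality constraints on $w'$, making it harder rather than easier to find a witness with bounded coordinates.

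The idea you are missing is that one should not attempt to bound the negative witness size for every colouring. The paper instead exploits the randomness of the colouring: since $G$ is a forest, when one deletes the edges whose endpoints receive the same colour, each edge is deleted independently with probability $1/3$. Hence the depth of any vertex in the resulting coloured forest has a geometrically decaying tail, and the expected square of each coordinate of $w'$ is $O(1)$. By linearity of expectation the expected negative witness size is $O(n^2)$, and Markov's inequality then gives that it is at most $Cn^2$ with probability close to $1$ for a suitable constant $C$; the remaining failure probability is absorbed into the algorithm's overall error. (Incidentally, the paper's span program is also simpler than yours: it uses ordinary two-term edge vectors $e_v^{(j+1)} - e_u^{(j)}$ and places the four-term pairing in the free vectors $t - s + e_u^{(0)} - e_u^{(3)}$, which already suffices to force a consistent colour-$0$ vertex.)
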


\pfstart
The algorithm is similar to the one in \refthm{star}.  Let $c$ be a uniformly random map from the vertex set~$V_G$ of~$G$ to $\{ 0, 1, 2 \}$.  Define a span program on the vector space with orthonormal basis 
\begin{equation} \label{eqn:trianglespanprogrambasis}
\{ s, t \} \cup \{ \eee {u}{c(u)} : u \in V_G \} \cup \{ \eee u3 \mid u \in c^{-1}(0) \}.
\end{equation}
The target vector is again $\tau = t - s$.  The free input vectors are $t - s + \eee u0 - \eee u3$ for $u \in c^{-1}(0)$.  For $j \in \{0,1,2\}$ and $(u,v) \in c^{-1}(j) \times c^{-1}(j+1 \bmod 3)$, add an input vector $\eee{v}{j+1} - \eee{u}{j}$ that is available iff the edge $uv$ is present in~$G$.  

The intuition behind this construction is similar to \refthm{star}.  By using a four-term input vector $t - s + \eee u0 - \eee u3$ for $u \in c^{-1}(0)$, instead of two separate input vectors $\eee u0 - s$ and $t - \eee{u}{3}$, we prevent the span program from accepting paths $u_0, u_1, u_2, v_0$ with $c(v_0) = 0$ but $v_0 \neq u_0$.  

Let us make this intuition precise.  The positive case is straightforward:  If $G$ contains a triangle, then the triangle is coloured correctly with probability $2/9$.  (By a correct colouring, we mean a colouring that assigns distinct colours to different vertices of the triangle.)
Assume the triangle is $\{u_0, u_1, u_2\}$, with $c(u_j) = j$.  Since the sum of the input vectors $t - s + \eee{u_0}{0} - \eee{u_0}{3}$, $\eee{u_1}{1} - \eee{u_0}{0}$, $\eee{u_2}{2} - \eee{u_1}{1}$ and $\eee{u_0}{3} - \eee{u_2}{2}$ equals $t - s$, the span program accepts.  The witness size is~$3$.  


For the negative case, assume that $G$ is acyclic.  We argue that the span program rejects by constructing a negative witness~$w'$.  Unlike \refthm{star}, the coefficients of $w'$ will not be only~$0$ or~$1$, and the worst-case negative witness size is $\Theta(n^4)$.  We will, however, prove that the expected (with respect to the colouring) negative witness size is $O(n^2)$.

Fix arbitrarily a root for every tree component of~$G$, and measure depths in every component of $G$ from these root vertices.  Let~$H$ be the same graph as~$G$, except with edges connecting vertices of the same colour removed.  For every tree component in~$H$, set the root to be the unique vertex in that component with the least depth in~$G$.  For a vertex~$u$, let $d(u)$ be its depth in~$H$. 
Observe that because $G$ is acyclic, every edge is removed independently with probability $1/3$ when going from $G$ to~$H$.  

Let $H'$ be the same as $H$ but with each vertex $u \in c^{-1}(0)$ split into two vertices: $\eee u0$ and $\eee u3$, so that $\eee u0$ is connected to $u$'s neighbours of colour~$1$, and $\eee u3$ is connected to $u$'s neighbours of colour~$2$.  Also, add an edge from $\eee u0$ to $\eee u3$.  Additionally, rename each vertex $u\in c^{-1}(\{1,2\})$ into $\eee u{c(u)}$.  Thus, we get the graph corresponding to our span program with vertices $s$ and $t$ removed.  Clearly, $H'$ is also acyclic.


Using the graph $H'$, we can specify the negative witness $w'$.  Let $\ip<s, w'> = 0$ and $\ip<t, w'>=1$.  The vertices of $H'$ are in one-to-one correspondence with the remaining basis vectors of~\refeqn{trianglespanprogrambasis}, so it is enough to specify the coefficients for each vertex of~$H'$.  Note that for any $u \in c^{-1}(0)$, the condition that $w'$ is orthogonal to the free input vector $t - s + \eee u0 - \eee u3$ is equivalent to $\ip <\eee u0, w'> = \ip< \eee u3, w'> - 1$.  Up to an additive factor, this condition determines the coefficients of $w'$ for each connected component of~$H'$.  Let $r$ be the root of the component.  For a vertex $u$ in the component, define the level $\ell(u)$ as the number of $\eee u0\eee u3$ edges minus the number of $\eee u3\eee u0$ edges traversed along the simple path from $r$ to~$u$.  Let $\ip <u, w'> = \ell(u)$.  Note that $\ell(u) \leq d(u)+1$ because no two new edges are adjacent.  

Unfortunately, the coefficients of~$w'$ may grow as large as $\Omega(n)$, resulting in a negative witness size of order $n^4$.  However, the probability of this event is negligible.  Indeed, the negative witness size is bounded by
\begin{equation*}
\sum_{u, v \in H'} \ip<u-v, w'>^2 \le \sum_{u, v \in H'} 2(\ip<u,w'>^2 + \ip<v,w'>^2) \le 8 n \sum_{u \in H'} (d(u)+1)^2 \enspace ,
\end{equation*}
because $H'$ has at most $2n$ vertices.  For a fixed $u$, the expectation of $(d(u)+1)^2$ is bounded by the series 
\[
\frac13 \sum_{i=0}^{+\infty} (i+1)^2(2/3)^i = O(1).
\]
By the linearity of expectation, the expected size of the negative witness is~$O(n^2)$.  By Markov inequality, for any $\eps > 0$ one may choose $C$ so that the probability the negative witness size exceeds $C n^2$ is less than~$\eps$.  This adds at most~$\eps$ to the algorithm's error probability.  If the negative witness size is at most $C n^2$, then the total witness size is~$O(n)$.  
\pfend

\subsection{A Counterexample for \texorpdfstring{$K_5$}{K5}} \label{sec:K5}

The algorithms in Sections~\ref{sec:star} and~\ref{sec:claws:triangle} suggest a general approach for solving the subgraph/not-a-minor problem for a graph $T$: randomly colour $G$ by the vertices of~$T$, and construct a span program for a traversal of~$H$, using the paired-edge trick to assure that the same vertex of $G$ is chosen for all appearances of a vertex of $T$ in the traversal.  
In this section, we show that this approach fails for some graphs~$T$.  

\begin{figure}
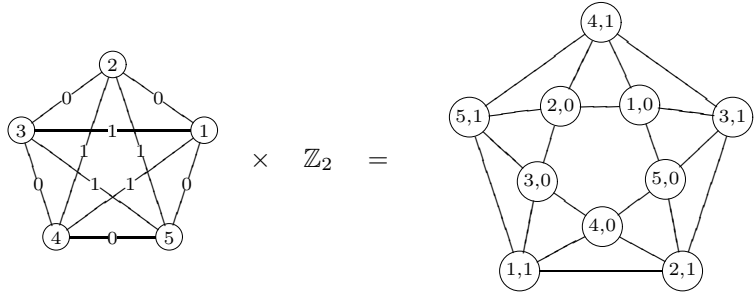

\vspace{-.4cm}
\[
\def\objectstyle{\scriptstyle}
\xygraph{!~-{@{-}|1}
!P5"A"{ ~><{@{-}|0}
[o]=<10pt>{\xypolynode}*\frm{o}}
"A1" - "A3" - "A5" - "A2" - "A4" - "A1"}
\quad \times \quad \Z_2 \quad = \qquad 
\xygraph{
!P5"A"{~={55}~:{(0.7,0):}
[o]=<15pt>{\xypolynode,0}*\frm{o}}
!P5"B"{~={234}~:{(1.44,0):}
[o]=<15pt>{\xypolynode,1}*\frm{o}}
"A1" - "B3" - "A5" - "B2" - "A4" - "B1"
- "A3" - "B5" - "A2" - "B4" - "A1"
}
\]
\caption{A skew product of $K_5$ and $\Z/2\Z$ gives a planar graph that does not contain~$K_5$ as a minor.  This example is due to Jim Geelen.}
\label{fig:skew}
\end{figure}

Consider the following operation that is a special case of the skew product of a graph and a group~\cite{kumjian:Calgebras}.  Let $T$ be a graph with each edge $e$ marked by an element $s_e$ of the cyclic group $\Z_2$ of size 2.  The skew product of $T$ and $\Z_2$ is the graph $T_2$ with vertices $(v, i)$, where $v$ is a vertex of~$T$ and $i \in Z_2$.  The graph $T_2$ has two edges for each edge $(u, v)$ of~$T$: $(u,i)(v,i+s_{(u,v)})$ for $i \in Z_2$.  

The span program built along the lines of the algorithms from Theorems~\ref{thm:star} and~\ref{thm:claws:triangle} accepts on $T_2$ if it is coloured correctly, i.e., if both vertices $(v,0)$ and $(v,1)$ of $T_2$ are coloured by~$v$.  
Indeed, the positive witness for $G = T_2$ can use all available input vectors with uniform coefficients~$1/2$.  

In general, however, $T_2$ does not contain $T$ as a minor.  For instance, \rf(fig:skew) shows an example of a skew product of $K_5$ and $\Z_2$ that does not contain a $K_5$-minor.  It is easy to check, however, that if $T$ is a tree or a triangle, $T_2$ does contain $T$ as a minor---and even as a subgraph, in the case of a tree.  

This shows that our algorithm does not work for all subgraph/not-a-minor promise problems.  Similarly, one can define a (total) minor-closed forbidden subgraph property for which our algorithm fails.  The property of having as a minor neither $K_5$ nor the eleven-vertex path~$P_{10}$ is a forbidden subgraph property.

\section{Time-Efficient Implementations} 
\label{sec:efficient}
As described in \rf(thm:spanAlgorithm), a span program~$\cP$ can be evaluated by a quantum algorithm that makes $O(\wsize(\cP))$ queries to the input string.  
The algorithm alternates a fixed input-independent reflection $R_\Lambda$ with a simple input-dependent reflection $R_\Pi$.  
The reflection $R_\Pi$ can be implemented efficiently in most cases, but implementing $R_\Lambda$, in general, is difficult.  Since many functions have much larger time complexity than query complexity, this should be expected.  

In this section, we show how to implement $R_\Lambda$ time-efficiently for the span programs from Theorems~\ref{thm:st}, \ref{thm:star} and~\ref{thm:claws:triangle}.
The idea is similar to the MNRS quantum walk from \rf(sec:mnrs).  
The graph's constant spectral gap allows for implementing this reflection to within inverse polynomial precision using only logarithmically many steps of the walk.  The graph's uniform structure allows for implementing each step efficiently.  

\begin{thm} 
\label{thm:efficient}
The algorithm from \refthm{st} can be implemented in $\tilde O(n \sqrt d)$ quantum time, and the algorithms from Theorems~\ref{thm:star} and~\ref{thm:claws:triangle} can be implemented in $\tilde O(n)$ quantum time.  In these implementations, the algorithms from Theorems~\ref{thm:st} and~\ref{thm:star} use $O(\log n)$ qubits of space.
\end{thm}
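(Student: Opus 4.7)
The plan is to reduce the implementation of the input-independent reflection $R_\Lambda$ in Algorithm \ref{alg:spanProgram} to phase detection on a quantum walk whose single step costs only polylogarithmic time. Once $R_\Lambda$ is implemented to precision $\gamma = 1/\mathrm{poly}(n)$, the cost of the whole algorithm is the number of calls to $R_\Lambda$ and $R_\Pi$ --- namely $O(\wsize(\cP))$ by Theorem \ref{thm:spanAlgorithm} --- multiplied by the per-call gate cost, giving exactly the bounds claimed. The input-dependent reflection $R_\Pi$ is already efficient: one reads the variable index and value associated with a basis element of $\R^{\cI_0}$, queries the oracle, and flips a sign, all in $O(\log n)$ gates and one query. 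So the task reduces entirely to implementing $R_\Lambda$.

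For each of the three span programs I would follow the Szegedy template of Section \ref{sec:mnrs}: represent $\Lambda$ as the projector onto the $1$-eigenspace of a walk operator $U = R_B R_A$ built from an explicit bipartite graph underlying the span program, and then apply the phase detection subroutine of Theorem \ref{thm:detection}. The Effective Spectral Gap Lemma \ref{lem:effective} already absorbs the imprecision of phase detection into the correctness proof of Theorem \ref{thm:spanAlgorithm}, so it suffices to verify two properties: (i) $R_A$ and $R_B$ can each be implemented in $O(\log n)$ gates, and (ii) the underlying discriminant matrix has smallest nonzero singular value $\Omega(1)$, so phase detection runs in $O(\log n)$ steps.

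For the $st$-connectivity program of Theorem \ref{thm:st}, the input vectors $(e_u - e_v)/\sqrt{2}$ are (up to weights) the edge vectors of the weighted complete graph $K_n$ augmented with a target edge. The natural bipartite double cover gives a walk operator whose discriminant is essentially the normalised adjacency matrix of $K_n$, with constant spectral gap by elementary linear algebra. Each walk step amounts to a uniform Grover-like diffusion over the neighbours of a vertex (which is the whole vertex set) followed by a swap, both implementable in $O(\log n)$ gates on $O(\log n)$ qubits. For Theorems \ref{thm:star} and \ref{thm:claws:triangle}, the random colouring $c$ can be drawn from a $k$-wise independent hash family with seed of size $O(\log n)$, so that $c(u)$ can be computed coherently in $O(\log n)$ time; standard concentration suffices because the original correctness proof uses only low-order moments of $c$. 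The auxiliary graph $H$ (resp.\ $H'$) has a very uniform layered structure: each non-root vertex has its neighbours concentrated in a single colour class of $V_G$, and each root-type vertex has $O(1)$ such classes. Uniform diffusion at a vertex of $H$ can thus again be implemented in $O(\log n)$ time, and the four-term paired input vectors contribute only an extra one-qubit controlled sign per edge.

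The main obstacle will be verifying the constant phase gap of $U$ for the star and triangle constructions, because the underlying graphs are layered products of bipartite graphs between consecutive colour classes with extra gadgets for the paired edges and, in the triangle case, an extra $\Z_3$ cyclic structure. I would decompose the spectral analysis by layers using the bipartite structure of $H$: between any two consecutive colour classes the induced walk is (up to the free-vector edges) the walk on the complete bipartite graph with a constant fraction of edges available, whose spectral gap is $\Theta(1)$; gluing the layers via the four-term gadgets amounts to a bounded-size local perturbation that leaves the gap $\Omega(1)$. A subsidiary concern is the unbounded worst-case negative-witness size in the triangle case; but this does not affect the implementation, since the $O(n^2)$ expected witness size already entered the success probability analysis in the proof of Theorem \ref{thm:claws:triangle} and not the gate count.
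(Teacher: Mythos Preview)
Your overall framework is correct and matches the paper: implement $R_\Lambda$ via phase detection on a walk $R_BR_A$, verify that each walk step costs $O(\polylog n)$, and that the discriminant has an $\Omega(1)$ spectral gap. Two points, however, deserve attention.

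\textbf{Spectral gap.} Your proposed layer-by-layer argument (``each bipartite piece has constant gap; gluing via the four-term gadgets is a bounded-size local perturbation'') is too loose to carry the proof. Gluing together graphs with constant spectral gaps does not in general yield a graph with constant gap, and the colour classes $c^{-1}(v)$ have input-dependent, non-uniform sizes, so there is no single ``complete bipartite graph'' between consecutive layers to begin with. The paper avoids this difficulty by first \emph{padding} the span program: add dummy vertices so every layer has size exactly $n$, and add never-available edges so that every vertex (except $s,t$) has the same degree $2n$. After this, the matrix $\Delta = V'(V')^*$ is a constant-size block matrix in which every block is a linear combination of $I_n$ and $\frac{1}{n}J_n$; that is, $\Delta = A\otimes I_n + B\otimes \frac{1}{n}J_n$ for fixed constant-size symmetric $A,B$. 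Since $I_n$ and $\frac{1}{n}J_n$ commute and have spectrum $\{0,1\}$ independent of $n$, the spectrum of $\Delta$ is exactly the union of the spectra of $A$ and $A+B$, hence independent of $n$ and bounded away from zero by a constant. This padding-plus-tensor-structure trick is the key technical device you are missing.

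\textbf{Hash family for the triangle case.} You claim a $k$-wise independent colouring suffices for both Theorems~\ref{thm:star} and~\ref{thm:claws:triangle} because ``the original correctness proof uses only low-order moments of $c$.'' That is true for the star case (correctness depends on $|V_T|$ vertices being coloured correctly), but it fails for the triangle case: the negative witness size is controlled via $\mathbb{E}[(d(u)+1)^2]$, where $d(u)$ is the depth of $u$ after deleting monochromatic edges along an unbounded-length path in $G$. This is not a low-order moment of $c$, and the paper explicitly notes that the hash-family argument does not apply here. Accordingly, the theorem statement claims $O(\log n)$ space only for Theorems~\ref{thm:st} and~\ref{thm:star}, not for the triangle algorithm, which instead stores a fully random colouring (e.g., in a QRACM array).
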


\paragraph{Preliminaries}
Before we start with the proof of \refthm{efficient}, we state some ancillary facts.  We start with {\em $k$-wise independent hash functions}; see, e.g.~\cite{luby:kWiseHash}.  This is a collection of functions $h_m\colon [n]\to[\ell]$ such that, for any~$k$ distinct elements $a_1,\dots,a_k$, the probability over the choice of~$m$ that $(h_m(a_1),\dots,h_m(a_k))$ takes a particular value in $[\ell]^k$, is $\ell^{-k}$.  The simplest construction, that suffices for our purposes, is to assume that $\ell \le n$ are powers of two, and define~$h_m$ as the $\log_2\ell$ lowest bits of the value of a random polynomial over $GF(n)$ of degree $k-1$.  ($GF(n)$ stands for the finite field with $n$ elements.)  Then, $O(k \log n)$ bits suffice to specify~$h_m$, from which $h_m(a)$ can be calculated in $O(k\log^2 n)$ time.  

We will also need the following simple result from linear algebra.  In the following, $I_n$ is the $n \times n$ identity matrix, and $J_n$ is the $n \times n$ all-1 matrix.  

\begin{lem} \label{lem:blockmatrixspectrum}
Fix $\ell \times \ell$ symmetric matrices $A$ and~$B$.  For $n \in \N$, let $M_n = A \otimes I_n + \frac{1}{n} B \otimes J_n$.  Then the spectrum of~$M_n$, i.e., the set of eigenvalues without 
multiplicities, is independent of~$n$.
\end{lem}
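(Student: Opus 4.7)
\medskip

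\noindent\textbf{Proof plan for \TeXBug{\reflem{blockmatrixspectrum}}.} The plan is to exploit the rank-one nature of $J_n$ in order to block-diagonalize $M_n$ into two pieces whose spectra are manifestly independent of $n$. First I would write $\frac{1}{n} J_n = u u^*$, where $u = \frac{1}{\sqrt{n}}(1,1,\ldots,1)^* \in \R^n$ is the normalized uniform vector, so that $\frac{1}{n}J_n$ is the orthogonal projector $P$ onto $\spn(u)$. This lets us rewrite
\[
M_n = A \otimes I_n + B \otimes P.
\]

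Next, I would decompose $\R^n = \spn(u) \oplus u^\perp$ and correspondingly $\R^\ell \otimes \R^n = (\R^\ell \otimes \spn(u)) \oplus (\R^\ell \otimes u^\perp)$. Both $I_n$ and $P$ preserve this decomposition, and hence so does $M_n$. On the first summand $\R^\ell \otimes \spn(u)$, which is $\ell$-dimensional, $P$ acts as the identity, so $M_n$ acts as $A + B$. On the second summand $\R^\ell \otimes u^\perp$, which is $\ell(n-1)$-dimensional, $P$ acts as zero, so $M_n$ acts as $A \otimes I_{n-1}$, whose spectrum (as a set) is simply that of~$A$.

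Combining the two pieces, the set of eigenvalues of $M_n$ equals $\spec(A) \cup \spec(A+B)$, which obviously does not depend on~$n$. No step here is a real obstacle; the only care needed is in the bookkeeping to observe that one invariant subspace is $n$-independent in dimension while the other contributes only multiplicity (not new eigenvalues) as $n$ grows.
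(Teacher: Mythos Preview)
Your proposal is correct and takes essentially the same approach as the paper: both arguments diagonalize the second tensor factor via the eigendecomposition of $\frac{1}{n}J_n$ (the paper picks an orthonormal eigenbasis $\{u_i\}$ with eigenvalues in $\{0,1\}$, you phrase it as the orthogonal decomposition $\spn(u)\oplus u^\perp$), observe that on each eigenspace $M_n$ restricts to $A+\lambda B$ with $\lambda\in\{0,1\}$, and conclude that the spectrum is $\mathrm{spec}(A)\cup\mathrm{spec}(A+B)$. The only cosmetic difference is that you state the resulting spectrum explicitly, which the paper leaves implicit.
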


\begin{proof}
Let $\{ u_i\}_{i\in [n]}$ be an orthonormal eigensystem for $J_n/n$, with the corresponding eigenvalues $\lambda_i \in \{0, 1\}$.  For $i \in [n]$, let $M(i) = A + \lambda_i B$.  If $v$ is a $\lambda$-eigenvector of $M(i)$, then $v \otimes u_i$ is a $\lambda$-eigenvector of $M_n$.  These derived eigenvectors span the whole $(\ell n)$-dimensional space, and hence the set of eigenvalues of $M_n$ does not depend on~$n$.  
\end{proof}

Essentially, the above argument works because $I_n$ and~$J_n/n$ commute and have spectra independent of~$n$.  

\paragraph{General approach}
Now, we describe a general approach to span program implementation.  After that, we apply it to the span programs in the proofs of Theorems~\ref{thm:st}, \ref{thm:star} and~\ref{thm:claws:triangle}.  
\mycommand{tV}{\tilde V}

In a time-efficient implementation, we do not allow free input vectors.  As described in \rf(sec:spanPrograms), the implementation of free input vectors accounts for a projection, and we do not want to change the way our vectors are represented in the vector space.  Instead of that, we allow {\em always available} input vectors.  They have the same functionality as free input vectors, but they are not free.  One possible interpretation is that all input strings are extended with a new variable $z_{0}$ with value 0.  Then, always available input vectors may be considered as labelled by the value 0 of $z_{0}$.  More interestingly, we also find use of {\em never available} input vectors.  They may be interpreted as labelled by the value 1 of $z_{0}$.

We require some notations from the proof of \rf(thm:spanAlgorithm).  
Recall that $\{v_i\}_{i\in\cI}$ are the input vectors of the span program, $\cI_0 = \cI\cup\{0\}$, and $v_0 = \tau/\alpha$ is an additional input vector, where $\tau$ is the target vector and $\alpha\in\R$.  The algorithm runs in the vector space $\R^{\cI_0}$ and uses two reflections $R_\Lambda$ and $R_\Pi$.  The first one, $R_\Lambda = 2\Lambda-I$, where $\Lambda$ is the projector onto the kernel of the $d \times \cI_0$ matrix $\tV$ having $\{v_i\}_{i\in\cI_0}$ as its columns.  The second reflection, $R_\Pi$, reflects about the span of the elements of the computational basis corresponding to the available input vectors.

At first, we are free to replace the matrix $\tV$ by a matrix $V'$ obtained by a rescaling of the rows (indeed, this does not affect the kernel).  Next, inspired by the MNRS quantum walk from \rf(sec:mnrs), we ``factor'' $V'$ into two sets of unit vectors as follows.  Let unit vectors $a_k \in \hilbert X$, one for each row $k \in [d]$, and $b_i \in \hilbert Y$, one for each column $i\in\cI_0$, be such that $a_k\elem[i] b_i\elem[k] = V'\elem[k,i]$ for all $k$ and $i$.

\rf(alg:spanProgram) uses an $\cI_0$-qudit $\reg X$.  We add a $d$-qudit $\reg Y$.
Denote by $\{e_i\}$ and $\{h_k\}$ the standard bases of $\R^{\cI_0}$ and $\R^d$, respectively, and let 
\[
A = \sum_{k\in[d]} (a_k \otimes h_k) h_k^*\qquad \mbox{and}\qquad B = \sum_{i\in \cI_0} (e_i \otimes b_i) e_i^*.
\]
Thus, the linear operators $A$ and $B$, map $\R^d$ and $\R^{\cI_0}$, respectively, into $H = \R^{d}\otimes \R^{\cI_0}$.
The quantum algorithm runs in $H$ and we identify $\R^{\cI_0}$ from the algorithm of \rf(thm:spanAlgorithm) with its image under the isometry $B$.

Then, $R_\Pi$ can be implemented in $\im(B)$ by exactly the same procedure as in \rf(alg:spanProgram) applied to the register $\reg X$.
As in \rf(sec:szegedy), we denote $R_A = 2AA^*-I_H$ and $R_B = 2BB^*-I_H$.
The reflection $R_\Lambda$ can be implemented on $\im(B)$ as the reflection about the $(-1)$-eigenspace of~$R_B R_A$.  
Indeed, the corresponding discriminant matrix $D = A^*B = V'$, and by \reflem{szegedy}, the $(-1)$-eigenspace equals $B (\ker V')$ plus a part that is orthogonal to $\im(B)$ and, therefore, irrelevant.

The reflection about the $(-1)$-eigenspace of $R_B R_A$ can be implemented using the phase detection subroutine, \rf(thm:detection), applied to $-R_BR_A$.  The efficiency depends on two factors: 
\begin{enumerate}
\item The implementation costs of $R_A$ and $R_B$.  They can be easier to implement than $R_\Lambda$ directly, because they decompose into local reflections.  The reflection $R_A$ about the span of the columns of~$A$ equals the reflection about $a_i$ controlled by the index of the column~$i$, and similarly for~$R_B$.  

\item The spectral gap around the $(-1)$-eigenvalue of $R_B R_A$ necessary to implement the reflection about the eigenspace.  By \reflem{szegedy}, this gap is determined by the spectral gap of $D = V'$ around the singular value zero.
\end{enumerate}

\paragraph{Proof of \refthm{efficient}}
So far the arguments have been general.  Let us now specialise to the span programs in Theorems~\ref{thm:st}, \ref{thm:star} and~\ref{thm:claws:triangle}.  These span programs are sufficiently uniform that neither of the above two factors is a problem.  Both reflections can be implemented efficiently, in poly-logarithmic time.  Similarly, we can show that $D = A^*B$ has an $\Omega(1)$ spectral gap around singular value zero.  Therefore, approximating to within an inverse polynomial the reflection about the $(-1)$-eigenspace of $R_B R_A$ takes only poly-logarithmic~time.  
We give the proof for the algorithms from Theorems~\ref{thm:star} and~\ref{thm:claws:triangle}.  The argument for $st$-connectivity, \refthm{st}, is similar and actually easier.  

Both algorithms look similar.  In each case, the span program is based on the graph~$H$, whose vertices form an orthonormal basis for vector space of the span program.  The vertices of~$H$ can be divided into a sequence of layers that are monochromatic according to the colouring~$c$ induced from~$G$.
The edges only go between consecutive layers.  Precisely, place the vertices $s$ and~$t$ each on their own separate layer at the beginning and end, respectively, and set the layer of a vertex~$v$ to be the distance from~$s$ to $c(v)$ in the graph~$H$ for the case that $G = T$.  For example, in the span program for detecting a subdivided star with branches of lengths $\{ \ell_1, \ldots, \ell_d \}$, there are $\ell = 3 + 2 \sum_{j \in [d]} \ell_j$ layers, because the $s$-$t$ path is meant to traverse each branch of the star out and back.  There are $\ell = 6$ layers of vertices for the triangle-detection span program.  

In order to facilitate finding factorisations~$\{ a_k \}$ and~$\{ b_i \}$ such that~$R_A$ and~$R_B$ are easily implementable, we make two modifications to the span programs.  

First, the span programs, as presented, depend on the random colouring $c$ of~$G$.  This dependence makes it difficult to specify a general factorisation of~$V$.  To fix this, we add dummy vertices so that every layer becomes of size~$n$.
More specifically, each vertex $k$ of the graph $H$ is represented by a tuple $(j,\sigma)$, where $j\in[\ell]$ denotes the layer of the vertex, and $\sigma\in[n]$ denotes the index of the vertex of $G$ that corresponds to $k$.  Thus, a vertex $(j,\sigma)$ is not dummy if $c(\sigma)$ equals the colour of the layer $j$.
Special vertices $s$ and $t$ are represented by $(1,1)$ and $(\ell,1)$, respectively.  All other vertices in the layers 1 and $\ell$ are dummy.

We fill in the graph with never-available edges between adjacent layers, including between the layers of~$s$ and~$t$, so that every vertex has degree exactly~$2 n$.  If the edges in two layers are paired, then pair the corresponding newly added edges; each edge pair corresponds to one never-available, four-term input vector.
%

Second, we scale the input vectors corresponding to paired edges down by a factor of $\sqrt 2$.  This is performed to ensure that $\|b_i\|=1$, independently of whether $b_i$ corresponds to a two-term or to a four-term input vector.
We connect~$s$ and~$t$ by \emph{two} edges.  The first one corresponds to the scaled target vector $v_0 = \tilde \target = (h_t - h_s)/\alpha$.  The second one corresponds to a never-available input vector $\sqrt{1 - 1/\alpha^2} (h_t - h_s)$.  We may assume that $\alpha = C_1 \sqrt{W_1} \geq 1$.  (Thus, $s$ and $t$ have degree $2n+1$.)

It is easy to verify that the span program after this transformation still computes the same function, and the positive and the negative witness sizes remain $O(1)$ and $O(n^2)$, respectively.  After the modifications, the graph~$H$ has a simple uniform structure that allows for facile factorisation.  There is a complete bipartite graph between any two adjacent layers.  

We specify the vector $a_k$ for each vertex~$k$ of the graph $H$.  For $k \notin \{s, t\}$, let~$a_k$ be the vector with uniform $1/\sqrt{2n}$ coefficients for all incident edges.  For $k \in \{s, t\}$, let $a_k$ have coefficients $1/(\alpha\sqrt{2n})$ and $\sqrt{(1-1/\alpha^2)/(2n)}$ for the two edges between~$s$ and~$t$, and coefficients $1/\sqrt{2n}$ for the other $2n-1$ edges.  
For each input vector $i$, we specify the vector~$b_i$.  
If $i$ corresponds to an ordinary edge $k_1k_2$, then $b_i = (h_{k_2}-h_{k_1})/\sqrt{2}$.
If $i$ corresponds to a pair of edges $k_1k_2$ and $k_3k_4$, then $b_i = (h_{k_4} - h_{k_3} + h_{k_2} - h_{k_1})/2$.
That is, for any input vector $v_i$, except those connecting $s$ and $t$, $b_i = v_i/\sqrt{2}$.
Thus we get a factorisation of $V' = \frac{1}{2 \sqrt n} \tV$, i.e., $a_k\elem[i] b_i\elem[k] = \frac{1}{2 \sqrt n} \tV\elem[k,i] = \frac1{2\sqrt{n}} v_i\elem[k]$.  

Let us analyse the spectral gap around zero of $D(A, B) = A^* B = V'$.  The non-zero singular values of~$V'$ are the square roots of the non-zero eigenvalues of $\Delta =  V' (V')^*$.  The matrix $\Delta$ has its rows and columns labelled by the vertices of $H$, and
\[
\Delta\elem[k,k'] = \frac{1}{4n} \sum_{i\in\cI_0} v_i\elem[k] v_i\elem[k'].
\]
We compute~$\Delta$.  
Let $\Delta(j, j')$ be the $n \times n$ submatrix of~$\Delta$ between vertices at layers~$j$ and~$j'$.  To calculate $\Delta(j, j')$, we consider separately the contributions from all of the different layers of input vectors.  

\begin{itemize}
\item
Ordinary edges between adjacent layers~$j$ and~$j+1$ contribute $\frac{1}{4} I_n$ to $\Delta(j,j)$ and $\Delta(j+1,j+1)$, and $-\frac{1}{4n} J_n$ to $\Delta(j,j+1)$ and $\Delta(j+1,j)$.  Indeed, for the contribution to $\Delta(j,j)$, observe that any vertex $k=(j, \sigma)$ has~$n$ incident ordinary edges to the layer~$j+1$, and each incident edge $i$ contributes $\frac{1}{4n} {v_i}\elem[k]^2 = \frac{1}{4n}$ to $\Delta(j,j)\elem[\sigma,\sigma]$.  There is no edge involving vertices $(j,\sigma)$ and $(j,\sigma')$ with $\sigma \neq \sigma'$, but for any $\sigma, \sigma' \in [n]$, there is exactly one ordinary edge from $(j, \sigma)$ to $(j+1, \sigma')$, and it contributes $-\frac{1}{4n}$ to $\Delta(j, j+1)\elem[\sigma, \sigma']$.

Even though~$s$ and~$t$ are connected by two edges, the same calculations hold for the edges between their layers.

\item
Consider a set of paired edges, that go out from layer~$j$ to~$j+1$, and then return from layer~$j'$ to~$j'+1$.  
Each input vector~$v_j$ is of the form $\frac{1}{\sqrt 2} \big(- (j, \sigma) + (j+1, \sigma') - (j', \sigma') + (j'+1, \sigma)\big)$.  
The contributions of these paired edges to the sixteen blocks $\Delta(j_\alpha, j_\beta)$ are given by the $4 \times 4$ block matrix 
\begin{equation*}
\begin{pmatrix}
\;\;\;\;\; \frac{1}{8} I_n 	& \!\!\! -\frac{1}{8n} J_n 		& \!\!\! \;\;\, \frac{1}{8n} J_n 	& \!\!\! \;\, -\frac{1}{8} I_n \\[1pt]
-\frac{1}{8n} J_n 		& \!\!\! \;\;\;\;\; \frac{1}{8} I_n 	& \!\!\! \;\, -\frac{1}{8} I_n 		& \!\!\! \;\;\, \frac{1}{8n} J_n \\[1pt]
\;\;\, \frac{1}{8n} J_n 	& \!\!\! \;\, -\frac{1}{8} I_n 		& \!\!\! \;\;\;\;\; \frac{1}{8} I_n 	& \!\!\! -\frac{1}{8n} J_n \\[1pt]
\;\, -\frac{1}{8} I_n 		& \!\!\! \;\;\, \frac{1}{8n} J_n 	& \!\!\! -\frac{1}{8n} J_n 		& \!\!\! \;\;\;\;\; \frac{1}{8} I_n
\end{pmatrix}
 \enspace .
\end{equation*}
That can be checked similarly to the previous point.
\end{itemize}
Observe that~$\Delta$ is a constant-sized block matrix, where each block is the sum of a constant multiple of~$I_n$ and a constant multiple of~$J_n/n$.  By \ref{lem:blockmatrixspectrum}, the set of eigenvalues of $\Delta$ does not depend on~$n$.  In particular, it has an $\Omega(1)$ spectral gap from zero, as desired.  

\begin{algorithm}[ub]
\caption{Efficient implementation of path and claw detection}
\label{alg:claw}
\algbegin
\state {{\bf quprocedure} ReflectionAbout $\im(A)$ {\bf:}}
\tab
	\state {{\bf attach} register $\qreg T$, qubit $\qreg Z$}
	\state {Apply $(-1)$-phase gate}
	\state {{\bf conditioned on $\qreg V=0$ :} $\qreg Z \qgets 1$}
	\state {{\bf conditioned on $\qreg Z = 0$ :}}
	\tab
		\state {{\bf for} $j$ such that layers $j$ and $j+1$ are connected by ordinary edges {\bf:}}
		\tab
			\state{ PrepareLayer($j$,\;1,\;2)}
			\state{ perform reflection about the orthogonal complement of $\ket T|1>-\ket T|2>$ }
			\state{ PrepareLayer$^{-1}$($j$,\;1,\;2)}
		\untab
		\state {{\bf for} $j,j'$ such that layers $j, j+1, j'$ and $j'+1$ are connected by paired edges {\bf:}}
		\tab
			\state{ PrepareLayer($j$,\;1,\;2),\quad PrepareLayer($j$,\;4,\;3)}
			\state{ perform reflection about the orthogonal complement of $\ket T|1>-\ket T|2> + \ket T|3> - \ket T|4>$ }
			\state{ PrepareLayer$^{-1}$($j$,\;4,\;3) \quad PrepareLayer$^{-1}$($j$,\;1,\;2)}
		\untab
	\untab
	\state {{\bf conditioned on $\qreg Z = 1$ :}}
	\tab
		\state {reflect about the orthogonal complement of $\ket L|1>\ket U|1>\ket V|0>\ket D|0> - \ket L|\ell>\ket U|1>\ket V|0>\ket D|1>$}
	\untab
	\state {{\bf conditioned on $\qreg V=0$ :} $\qreg Z \qungets 1$}
	\state {{\bf detach} $\qreg T$, $\qreg Z$}
\untab
\state{\ }
\state {{\bf quprocedure} PrepareLayer({\bf integers} $j$, $a$, $b$) {\bf:}}
\tab
	\state {{\bf conditioned on} $\qreg L = j$ and $\qreg D = 1${\bf :} $\qreg T \qgets a$ }
	\state {{\bf conditioned on} $\qreg L = j+1$ and $\qreg D = 0$ {\bf :} $\qreg T \qgets b$ }
	\state {{\bf conditioned on} $\qreg T = a$ {\bf :} $\qreg L \qungets j$}
	\state {{\bf conditioned on} $\qreg T = b$ {\bf :}}
	\tab
		\state {Swap($\qreg U$,\; $\qreg V$),\quad $\qreg D\qungets 1$,\quad $\qreg L \qungets (j+1)$}
	\untab
\untab
\state{\ }
\state {{\bf quprocedure} ReflectionAbout $\im(B)$ {\bf:}}
\tab
	\state {PrepareVertex$^{-1}$}
	\state {{\bf conditioned on not} $(\qreg U = \qreg D = 0)$ {\bf :} apply $(-1)$-phase gate}
	\state {PrepareVertex}
\untab
\state{\ }
\state {{\bf quprocedure} PrepareVertex {\bf:}}
\tab
	\state {UniformSuperposition($\qreg D$)}
	\state {UniformSuperposition($\qreg V$)}
	\state {{\bf conditioned on} ($\qreg U=1$ and $\qreg D=0$) or ($\qreg U=\ell$ and $\qreg D=1$) {\bf :}}
	\tab
	\state {transform $\ket V |1>\mapsto \sqrt{1-1/\alpha^2} \ket V |0> + \alpha \ket V|1>$}
\untab
\algend
\end{algorithm}

We now show that both $R_A$ and $R_B$ can be implemented efficiently.  
One possible implementation is given in \rf(alg:claw).
As described earlier, the algorithm works in the space spanned by vectors $e_i \otimes h_k$, where $i$ varies over input vectors and the target vector, and $k$ varies over vertices of the graph $H$.
Moreover, only those $e_i\otimes h_k$ are used, where the edge $i$ is incident to the vertex $k$.
We represent such pairs $(i,k)$ using four registers: $\reg{LUVD}$.  
\itemstart
\item The register $\reg L$ stores the layer $k$ belongs to, it is an integer in $[\ell]$.
\item $\reg U$ stores the index of the vertex in $G$ that corresponds to $k$: an element of $[n]$.
\item $\reg V$ stores the index of the vertex in $G$ that corresponds to the second end-point of $i$.  It is again an elements of $[n]$, but we also use 0 for the second edge between $s$ and $t$.
\item $\reg D$ contains 0 or 1.  The value 0 indicates that $i$ goes to the previous layer, and $1$ indicates that $i$ goes forward.
\itemend
Thus, each edge of $H$ is represented by two elements of the computational basis: 
$\ket L|j> \ket U|\sigma > \ket V|\sigma'> \ket D|1>$ and 
$\ket L|j+1> \ket U|\sigma' > \ket V|\sigma> \ket D|0>$.
The second edge between $s$ and $t$ is represented by
$\ket L|1> \ket U|1 > \ket V|0> \ket D|0>$ and
$\ket L|\ell> \ket U|1 > \ket V|0> \ket D|1>$.
Additionally, we use a temporary register $\reg T$ that stores number $0,\dots,4$.

We start with the description of $R_A$.  For each $k = (j, \sigma) \in [\ell] \times [n]$, except $s = (1, 1)$ and $t = (\ell, 1)$, the vector $\ket X|a_k>\ket Y|k>$ from the general approach corresponds to the uniform superposition of the states
$\ket L|j> \ket U|\sigma > \ket V|\sigma'> \ket D|d>$
where $\sigma'$ ranges over $[n]$, and $d$ ranges over $\{0,1\}$.
So, the reflection is a Grover diffusion operation.
For $(j, \sigma) = (1, 1) = s$, we perform a similar operation.  

Consider the implementation of $R_B$ now.  For layers $j$ and $j+1$ with only ordinary edges between them, it suffices to apply the reflection to all pairs $\ket|j,\sigma,\sigma',1>$, and $ \ket|j+1,\sigma',\sigma,0>$. For paired layers, the reflection is performed in a four-dimensional subspace.

Finally, for the implementation of $R_\Pi$ we need to clarify the use of the random colouring.  One solution is to generate random numbers classically, and provide them in the form of an oracle mapping $\sigma \in [n]$ to the colour of vertex~$\sigma$.  This requires a QRACM array of size $\Theta(n)$.  For \refthm{star}, however, one can reduce the space complexity to $O(\log n)$, by using a $C$-uniform hash function family from~$[n]$ to~$[C]$, where $C$ is the total number of colours.  If necessary, we may assume that $n$ and~$C$ are powers of two.  $C$-wise independence is enough for the proof.  For \refthm{claws:triangle}, this does not work, though, because we need to ensure that the negative witness size is small with high probability.

Consider layer $j$ that corresponds to colour~$c$.  A vertex $(j, \sigma)$ corresponds to vertex~$\sigma$ of $G$ if and only if it has colour~$c$.  Otherwise, it is a dummy vertex.  To check whether the edge is available, the algorithm first checks whether its both end-points have correct colours.  If they do, the algorithm queries the input oracle for the availability of the edge in $G$.
If the edge is available, it does nothing.  In all other cases, it negates the phase of the state.  
\hfill\qedsymbol

\section{Summary}
In this chapter, we applied the technique of span programs to some graph problems.  Span programs provide greater flexibility compared to the dual adversary SDP.  This makes it easier to grasp the construction of the algorithm, and the algorithm itself is also easier to implement time-efficiently.

We mostly focused on minor-closed forbidden subgraph properties.  We implemented a tight quantum algorithm for subgraph detection, where the subgraph is either a path of a fixed length, or a fixed subdivision of a claw.  The algorithm is tight from all points of view: query, time and space.  Actually, its asymptotic complexity is the same as for Grover's algorithm.  The case of general minor-closed FSP is still open, however, it is likely that similar techniques can be used to solve it.

Informally, our algorithm is a variant of ``quantum dynamic programming'', as it is based on a classical algorithm that heavily uses dynamic programming.  However, our dynamic programming is more limited than its classical analogue, and requires additional tricks as the paired-edge technique, for instance.  It would be interesting to understand whether it can be used to solve other problems approachable with dynamic programming.

The method of time-efficient span program implementation in \rf(sec:efficient) is rather general and can be applied to other problems.  However, we feel that the direct application of the effective spectral gap lemma, as in the next chapter, has greater potential.

\chapter{Electric Networks and Quantum Walks}
\label{chp:electric}
In \rf(chp:walk), we mentioned two main paradigms of quantum walks: the Szegedy-type quantum walk, \rf(thm:walkSzegedy); and the MNRS quantum walk, \rf(thm:mnrs).  Both of these paradigms assume that the walk is started in the stationary distribution.  Moreover, the MNRS type quantum walks require spectral analysis of the underlying graph.  This poses some problems for potential applications.
Firstly, preparing the stationary distribution can be a limitation if the graph is complex or not given in advance.
Also, spectral analysis usually requires non-trivial tools from linear algebra.  We would like to apply more combinatorial techniques that are closer to the nature of the problem.

The main result of this chapter is the generalisation of Szegedy's algorithm to arbitrary initial distribution.  
The analysis of the resulting algorithm does not require any spectral analysis.
In order to achieve this, we add two new ingredients to the analysis of Szegedy-type quantum walks:
\descrstart
\item[Electric Networks.]  A point of view on a graph as an electric network has turned out very fruitful in the analysis of classical random walks~\cite{doyle:walksElectric, bollobas:modernGraph}.  
But it seems to be completely ignored in the analysis of quantum walks.  The analysis in \rf(chp:walk) relies solely on the spectral properties of the graph.

\item[Effective Spectral Gap Lemma. ] 
We saw the power of the effective spectral gap lemma, \rf(lem:effective), in the proofs of Theorems~\ref{thm:spanAlgorithm} and~\ref{thm:advAlgorithm}.  We show that the lemma can be also applied for general quantum walks.
\descrend

We give two examples of application of this quantum walk.  In \refsec{learning}, we show how a general learning graph from \rf(chp:cert) can be implemented as a quantum walk.  In \refsec{kdist}, we use these ideas in a time-efficient quantum algorithm for $3$-distinctness.  The last example is interesting as a quantum walk on a graph not given in advance.  This is at the very heart of classical random walks: Since only local information is required to implement a random walk, they are often used to traverse graphs whose global structure is unknown (see, e.g.,~\cite{aleliunas:connectivity, schoning:ksat}).  Quantum walks require more global information than the classical ones, and they are usually used for graphs known in advance like for the Johnson graph in \rf(sec:mnrsApplications).

This chapter is based on the pre-print
\begin{itemize}
\item[\cite{belovs:electicityQuantumWalks}]
A.~Belovs.
\newblock Quantum walks and electric networks.
\newblock 2013,  {\ttfamily arXiv:1302.3143}.
\end{itemize}
The preprint has been merged with~\cite{childs:walk3Dist} when accepted to the conference, resulting in the publication
\begin{itemize}
\item[\cite{belovs:mergedWalk3Dist}]
A.~Belovs, A.~M. Childs, S.~Jeffery, R.~Kothari, and F.~Magniez.
\newblock Time-efficient quantum walks for 3-distinctness.
\newblock In {\em Proc. of 40th ICALP, Part I}, volume 7965 of {\em LNCS},
  pages 105--122. Springer, 2013.
\end{itemize}

The chapter is organised as follows.  In \rf(sec:electric), we recall the relations between classical hitting time and electric resistance of a graph.  In \rf(sec:walk), we prove the main result, and in \rf(sec:3rdproof), we give an application to learning graphs.  In \rf(sec:kdist), we apply the new quantum walk algorithm for the 3-distinctness problem.

\section{Random Walks and Electric Networks}
\label{sec:electric}
\mycommand{st}{\sigma}
Let us recall some definitions from \rf(sec:walkClassical).
Let $G=(V,E)$ be a simple undirected graph with each edge assigned a weight $w_e\ge 0$.  
The weight of a vertex $u$ is $w_u=\sum_{uv\in E} w_{uv}$, and the total weight is $W = \sum_{e\in E} w_e = \frac12 \sum_{u\in V} w_u$.  Consider the following random walk:  If the walk is at a vertex $u\in V$, proceed to a vertex $v$ with probability $w_{uv}/w_u$.  The random walk has the {\em stationary probability distribution} $\pi=(\pi_u)$ given by $\pi_u = w_u/(2W)$.  One step of the random walk leaves $\pi$ unchanged.

Let $\st=(\st_u)$ be some {\em initial probability distribution} on the vertices of the graph, and let $M\subseteq V$ be some set of {\em marked vertices}.  We are interested in the {\em hitting time} $H_{\st, M}$ of the random walk: the expected number of steps of the random walk required to reach a vertex in $M$ when the initial vertex is sampled from $\st$.  If $\st$ is concentrated in a vertex $s\in V$, or $M$ consists of a single element $t\in V$, we often replace $\st$ by $s$ or $M$ by $t$.  
For instance, we have $H_{\st, M} = \sum_{u\in V} \st_u H_{u, M}$.
The {\em commute time} between vertices $s$ and $t$ is defined as $H_{s,t}+H_{t,s}$.
We usually assume that $G$ and $\st$ are known, and the task is to determine whether $M$ is non-empty by performing the random walk.

Assume $M$ is non-empty, and define a {\em flow} on $G$ from $\st$ to $M$ as a real-valued function $p_e$ on the (oriented) edges of the graph satisfying the following conditions.  Firstly, $p_{uv} = -p_{vu}$.  Secondly, for each non-marked vertex $u$, the flow satisfies
\begin{equation}
\label{eqn:flowCondition}
\st_u = \sum_{uv\in E} p_{uv}.
\end{equation}
That is, $\st_u$ units of the flow are injected into $u$, it traverses through the graph, and is removed in a marked vertex.  Define the {\em energy} of the flow as 
\begin{equation}
\label{eqn:flowEnergy}
\sum_{e\in E} \frac{p_e^2}{w_e}.
\end{equation}
Clearly, the value of~\rf(eqn:flowEnergy) does not depend on the orientation of each $e$.
The {\em effective resistance} $R_{\st, M}$ is the minimal possible energy of
a flow from $\st$ to $M$.   For $R$, as for $H$, we also replace $\st$ and $M$ by the corresponding singletons.

There is a nice physical description for effective resistance.  We give it for illustrative purposes, and it is not necessary for understanding the remaining part of the chapter.  In the description, we require some basic notions from the theory of electric networks.  We assume they are familiar to the reader.

Treat the graph $G$ as an electrical network where each edge $e$ is replaced by a resistor of conductance $w_e$.  Assume that, for each vertex $u$, $\st_u$ units of current are injected into it, and that the current is collected in $M$.  Then, the effective resistance $R_{\st, M}$ equals the energy dissipated by the current.
This description can be used to prove the following result:

\begin{thm}[\cite{chandra:electrical}]
\label{thm:classical}
If $G$, $w$, $W$ are as above, $s$, $t$ are two vertices of $G$, $M\subseteq V$, and $\pi$ is the stationary distribution on $G$, then
\itemstart
\item[(a)] the commute time between $s$ and $t$ equals $2WR_{s,t}$;
\item[(b)] the hitting time $H_{\pi, M}$ equals $2WR_{\pi, M}$.
\itemend
\end{thm}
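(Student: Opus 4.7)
The plan is to treat the graph as an electric network in the standard way (each edge $e$ becomes a resistor of conductance $w_e$) and reduce both parts to the classical dictionary between voltages and hitting times. The single identity driving everything is this: for any voltage function $\phi\colon V\to\mathbb{R}$, if one defines the induced injection $i_u=\sum_{v\sim u}w_{uv}(\phi(u)-\phi(v))$ and the associated harmonic flow $p_{uv}=w_{uv}(\phi(u)-\phi(v))$, then a direct symmetrisation of a double sum yields the dissipation identity $\sum_{u\in V} i_u\phi(u)=\sum_{e\in E}p_e^2/w_e$; moreover, by Thomson's principle, this harmonic flow minimises \rf(eqn:flowEnergy) among all flows producing the prescribed injections.

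For part (b), the key step is to verify that the rescaled hitting-time function $\phi(u):=H_{u,M}/(2W)$ is precisely the voltage profile induced by the injections $i_u=\pi_u$ for $u\notin M$, with boundary condition $\phi|_M=0$. This is a straightforward rewriting: the hitting-time recurrence $w_u=\sum_{v\sim u}w_{uv}(H_{u,M}-H_{v,M})$ for $u\notin M$ (obtained from $H_{u,M}=1+\sum_{v\sim u}(w_{uv}/w_u)H_{v,M}$ after clearing denominators), when divided by $2W$, is exactly Kirchhoff's current law with injection $\pi_u=w_u/(2W)$. Plugging $\phi$ into the dissipation identity gives energy $\sum_{u\notin M}\pi_u\cdot H_{u,M}/(2W)=H_{\pi,M}/(2W)$, and Thomson's principle identifies this with $R_{\pi,M}$, whence $H_{\pi,M}=2WR_{\pi,M}$.

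For part (a), I would superpose two such harmonic voltages. Let $\phi_1(u):=H_{u,t}$, arising (via the same correspondence, unrescaled) from injecting $w_u$ at every $u\neq t$ and absorbing $2W-w_t$ units at $t$; let $\phi_2(u):=H_{u,s}$ arise from the analogous profile with the roles of $s$ and $t$ swapped. The difference $\phi_1-\phi_2$ has zero net injection at every vertex outside $\{s,t\}$, with exactly $+2W$ injected at $s$ and $-2W$ extracted at $t$, so by linearity it equals $2W$ times the unit-current $s$-to-$t$ voltage; hence the voltage drop $(\phi_1-\phi_2)(s)-(\phi_1-\phi_2)(t)$ is $2WR_{s,t}$. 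Using $\phi_1(t)=H_{t,t}=0$ and $\phi_2(s)=H_{s,s}=0$, the left-hand side unfolds as $(H_{s,t}-0)-(0-H_{t,s})=H_{s,t}+H_{t,s}$, which is the commute time.

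The main obstacle is the dictionary step in part (b): one must take care that the boundary data on $M$ is handled correctly (implicitly one is identifying all of $M$ with a single sink, which does not change either the hitting time or the minimum-energy flow, but this should be checked) and that Thomson's principle is invoked with the correct class of admissible flows, so that the harmonic $\phi$ we produce genuinely realises the infimum defining $R_{\pi,M}$ rather than merely giving an upper bound. Once the dictionary is in place, the superposition argument for (a) reduces to bookkeeping about injections and voltage drops.
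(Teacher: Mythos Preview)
Your proposal is correct and follows essentially the same route as the paper: for (b) you identify (a rescaling of) $H_{u,M}$ with the harmonic potential for the injection profile $\pi$ and read off the energy, and for (a) you superpose the two hitting-time potentials to produce the unit $s$-to-$t$ current. The only cosmetic difference is that the paper keeps the unscaled injections $w_u$ and appeals to ``energy equals voltage times current'' where you normalise by $2W$ and name the dissipation identity and Thomson's principle explicitly.
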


\pfstart
The proof is essentially taken from~\cite{chandra:electrical}.  But since this result is not explicitly stated there, we briefly reproduce the proof here.

We start with proving (b).  Assume that $w_u$ units of current are injected into each vertex $u$, and then collected in $M$.  Let $\phi_u$ denote the potential of a vertex $u$ in this scenario.  Then, $\phi_u=0$ for $u\in M$.  Otherwise, by Kirchoff's and Ohm's laws, we have
\begin{equation}
\label{eqn:chandra1}
w_u = \sum_{uv\in E} I_{uv} = \sum_{uv\in E} w_{uv} (\phi_u-\phi_v) = w_u \phi_u - \sum_{uv\in E} w_{uv}\phi_v,
\end{equation}
where $I_{uv}$ is the current through $uv$.
Now consider the hitting time $H_{u,M}$.  Again, $H_{u,M}=0$ for $u\in M$, and, if $u\notin M$, then
\begin{equation}
\label{eqn:chandra2}
H_{u,M} = 1 + \frac{1}{w_u} \sum_{uv\in E} w_{uv}H_{v,M}.
\end{equation}
We can see that the conditions on $\phi_u$ and $H_{u,M}$,~\rf(eqn:chandra1) and~\rf(eqn:chandra2), are identical.  Moreover,~\rf(eqn:chandra2) uniquely determines $H_{u,M}$.  Thus, $H_{u,M}=\phi_u$ for all $u\in V$.
Since energy equals voltage times current, we get
\begin{equation}
\label{eqn:chandra3}
2W H_{\pi, M} = \sum_{u\in V} w_u H_{u,M} = \sum_{u\in V} w_u \phi_u = (2W)^2 R_{\pi, M}.
\end{equation}

Now let us prove (a).  
Let $\phi'_u$ and $I'_{uv}$ be the potentials in the above network for the case $M=\{t\}$, and let $\phi''_u$ and $I''_{uv}$ be defined similarly for $M=\{s\}$.  Then, by the superposition principle, $\phi_u = \phi'_u - \phi''_u$ and $I_{uv} = I'_{uv}-I''_{uv}$ give a valid electric network for the case when $2W$ units of current are injected into $s$ and removed in $t$.  The difference in potentials between $s$ and $t$ in this network is $H_{s,t}+H_{t,s}$.  Using similar calculations as in~\rf(eqn:chandra3), we obtain (a).
\pfend



\section{Quantum Walk}
\label{sec:walk}
In this section, we construct a quantum counterpart of the random walk in \refsec{electric}.  
We obtain a quadratic improvement:  If $G$ and $\st$ are known in advance and the superposition $\sum_{u\in V} \sqrt{\st_u}\;\ket|u>$ is given, the presence of a marked vertex in $G$ can be determined in $O(\sqrt{WR_{\sigma, M}})$ steps of the quantum walk.
By combining this result with the second statement of \refthm{classical}, we obtain \rf(thm:walkSzegedy).

The quantum walk differs slightly from the quantum walk by Szegedy.  The framework of the algorithm goes back to~\cite{ambainis:formulaeEvaluation}, and \reflem{effective} is used to analyse its complexity.  We assume the notations of \refsec{electric} throughout the section.

It is customary to consider quantum walks on bipartite graphs, so we assume the graph $G=(V,E)$ is bipartite with parts $A$ and $B$.  Also, we assume the support of $\st$ is contained in $A$, i.e., $\st_u=0$ for all $u\in B$.  These are not very restrictive assumptions:  If either of them fails, consider the bipartite graph $G'$ with the vertex set $V' = V\times\{0,1\}$, the edge set $E' = \{(u,0)(v,1), (u,1)(v,0) \mid uv\in E\}$, edge weights $w'_{(u,0)(v,1)} =  w'_{(u,1)(v,0)} = w_{uv}$, the initial distribution $\st'_{(u,0)} = \st_u$, and the set of marked vertices $M' = M \times\{0,1\}$.  Then, for the new graph, $W' = 2W$, and $R'_{\st', M'} \le R_{\st, M}$.

\mycommand{qst}{\varsigma}

We assume the quantum walk starts in the state $\qst = \sum_{u\in V} \sqrt{\st_u}\;\ket|u>$ that is known in advance.  Also, we assume there is an upper bound $R$ known on the effective resistance from $\st$ to $M$ for all possible sets $M$ of marked states that might appear.

Now we define the vector space of the quantum walk.  
Let $S$ be the {\em support} of $\st$, i.e., the set of vertices $u$ such that $\st_u\ne 0$.  
The vectors $\{\ket|u> \mid u\in S\}\cup\{\ket|e>\mid e\in E\}$ form the computational basis of the vector space of the quantum walk.  Let $\cH_u$ denote the {\em local space} of $u$, i.e., the space spanned by $\ket|uv>$ for $uv\in E$ and, additionally, $\ket|u>$ if $u$ happens to be in $S$.  We have that $\bigoplus_{u\in A} \cH_u$ equals the whole space of the quantum walk, and $\bigoplus_{u\in B} \cH_u$ equals the subspace spanned by the vectors $\ket|e>$ for $e\in E$.

Let $I_S$ be the identity operator on $S$.
The {\em step of the quantum walk} is defined as $R_BR_A$, where $R_A = \bigoplus_{u\in A} D_u$ and $R_B = I_S\oplus\bigoplus_{u\in B} D_u$ are the direct sums of the {\em diffusion} operations.  Each $D_u$ is a reflection operation in $\cH_u$.  Hence, all $D_u$ in $R_A$ (or $R_B$) are performed in orthogonal subspaces, which makes them easy to implement in parallel.  They are as follows:
\itemstart
\item If a vertex $u$ is marked, then $D_u$ is the identity, i.e., the reflection about $\cH_u$;
\item If $u$ is not marked, then $D_u$ is the reflection about the orthogonal complement of $\psi_u$ in $\cH_u$, where
\begin{equation}
\label{eqn:psi}
\psi_u = \sqrt{\frac{\st_u}{C_1R}}\; \ket|u> + \sum_{uv\in E} \sqrt{w_{uv}}\;\ket|uv>
\end{equation}
for some large constant $C_1>0$ we choose later.  This definition also holds for $u\notin S$: For them, the first term in~\rf(eqn:psi) disappears.
\itemend

Similarly as in \rf(chp:walk), we have the set-up procedure that prepares the state $\qst$, and the check procedure that, given a vertex $u$ of the graph, returns 1 iff $u$ is marked.  Similarly to the proof of \rf(thm:mnrs), we ignore the data register and the update operation in our analysis, although, an actual implementation would require them.

\begin{algorithm}
\caption{The quantum walk algorithm.  Here, $C$ is some constant to be specified later.}
\label{alg:walk}
\algbegin
\state{{\bf function} QuantumWalk({\bf quprocedures} Setup, Check, {\bf reals} R, W, {\bf register} $\qreg X$) {\bf with} error $1/3$ \bf :}
\tab
\state{{\bf attach} qubit $\qreg b$} \label{line:walk:b}
\state{Setup($\qreg X$)}
\state{$\qreg b \qgets$ Check($\qreg X$)}
\state{{\bf measure} $\qreg b$}
\state{{\bf if} $\reg b = 1$ {\bf :} {\bf return } 1} \label{line:walk:endb}
\state{{\bf return} NOT(PhaseDetection($R_BR_A$, $1/(C\sqrt{RW})$, $\qreg X$)) {\bf with} precision $1/6$} \label{line:walk:phaseDetection}
\algend
\end{algorithm}

\begin{thm}
\label{thm:walk}
\refalg{walk} detects the presence of a marked vertex with probability at least $2/3$.  The algorithm uses $O(\sqrt{RW})$ steps of the quantum walk.
\end{thm}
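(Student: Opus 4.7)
The plan is to apply the Effective Spectral Gap Lemma (\rf(lem:effective)) to $U = R_RR_L$. First set up the Szegedy framework: let $A$ be the isometry whose columns are the unit vectors $\hat\psi_u$ for unmarked $u \in V_L$ and $B$ the analogous isometry for unmarked $v \in V_R$. Then $R_L = I - 2\Pi_A$ and $R_R = I - 2\Pi_B$; expanding both forms shows that $R_R R_L$ equals $(2\Pi_B-I)(2\Pi_A-I)$, so Lemmas~\ref{lem:szegedy} and~\ref{lem:effective} apply verbatim.

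For the positive case ($M\neq\emptyset$), I will exhibit a $1$-eigenvector of $U$ that has large overlap with $\qst$. Take any flow $p$ from $\sigma$ to $M$ of energy at most $R$ (which exists since $R\ge R_{\sigma,M}$), orient edges from $V_L$ to $V_R$, and set
\[
\Psi \;=\; \sqrt{C_1 R}\,\qst \;-\; \sum_{e\in E} \frac{p_e}{\sqrt{w_e}}\,\ket{e}.
\]
Flow conservation $\sigma_u = \sum_{uv}p_{uv}$ at each unmarked vertex, combined with the formula~\rf(eqn:psi), yields $\langle \Psi,\psi_w\rangle = 0$ for every unmarked $w\in V$; hence $\Psi\in\ker(\Pi_A)\cap\ker(\Pi_B)$ lies in the $1$-eigenspace of $U$. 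From $\langle \qst,\Psi\rangle = \sqrt{C_1 R}$ and $\|\Psi\|^2 = C_1 R + \sum_e p_e^2/w_e \le (C_1+1)R$ the normalized overlap is at least $\sqrt{C_1/(C_1+1)}$, made as close to $1$ as desired by enlarging $C_1$. Hence phase detection in \rf(line:walk:phaseDetection) returns $0$ with high probability and the algorithm outputs $1$.

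For the negative case ($M=\emptyset$) I will bound $\|P_\delta \qst\|$. The key observation is the identity
\[
\qst \;=\; \sqrt{C_1 R}\sum_{u\in V_L}\psi_u \;-\; \sqrt{C_1 R}\sum_{v\in V_R}\psi_v,
\]
which is checked coordinatewise using \rf(eqn:psi) (the vertex components give $\sqrt{\sigma_u}$ while the edge components cancel). When $M=\emptyset$, this expresses $\qst = A\vec a + B\vec b$ with $\|\vec a\|^2+\|\vec b\|^2 = O(C_1 RW)$ via $\sum_u w_u = W$. Splitting $P_\delta \qst = P_\delta A\vec a + P_\delta B\vec b$ and decomposing $A\vec a = \Pi_B A\vec a + (I-\Pi_B)A\vec a$, with the second summand in $\ker(\Pi_B)$, lets me apply the $A\leftrightarrow B$ symmetric form of \rf(lem:effective) to each kernel-side piece (and symmetrically for $B\vec b$). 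Iterating, each such bound contributes $O(\delta\sqrt{C_1 RW})$, giving $\|P_\delta\qst\| = O(\delta\sqrt{C_1 RW})$; for $\delta=1/(C\sqrt{RW})$ this is $O(\sqrt{C_1}/C)$, which is below any prescribed constant once $C$ is large enough. Phase detection then returns $1$ with probability at least $2/3$ and the algorithm outputs $0$.

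The query bound follows from \rf(thm:detection): phase detection at threshold $\delta=1/(C\sqrt{RW})$ and constant precision uses $O(\sqrt{RW})$ controlled applications of $U$, each of which is one step of the walk. The initial classical check (Lines \ref{line:walk:b}--\ref{line:walk:endb}) handles the case when $\sigma$ concentrates on marked vertices (immediate accept); conditioned on its failing, the residual state is essentially $\qst$ with marked components removed, and the two analyses above apply with only a constant loss. The main obstacle is the negative case: \rf(lem:effective) bounds $\|P_\delta\Pi_B u\|$ rather than $\|P_\delta u\|$, and because $\qst \in \ker(\Pi_B)$ a direct application yields nothing. The electric-network identity $\qst=A\vec a+B\vec b$ is what breaks this impasse, reducing the question to bounding $P_\delta$ on vectors in $\mathrm{im}(A)$ and $\mathrm{im}(B)$ — exactly the setting the lemma handles after splitting along $\ker(\Pi_A)$ and $\ker(\Pi_B)$.
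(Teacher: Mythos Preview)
Your positive case is correct and matches the paper's. The gap is in the negative case. Your identity $\qst = A\vec a + B\vec b$ with $A\vec a = \sqrt{C_1R}\sum_{u}\psi_u$ is correct and is the heart of the argument, but the ``$A\leftrightarrow B$ symmetric form plus iteration'' does not bound $\|P_\delta\qst\|$. The symmetric form applied to $(I-\Pi_B)A\vec a\in\ker\Pi_B$ bounds only $\|P_\delta\Pi_A(I-\Pi_B)A\vec a\|$, leaving the piece $(I-\Pi_A)(I-\Pi_B)A\vec a$ untouched; iterating produces alternating projections whose partial sums contribute $O(k\delta)$ after $k$ steps while the remainder $[(I-\Pi_B)(I-\Pi_A)]^k\qst$ converges (at a rate governed by the very spectral gap you are trying to establish) to the component of $\qst$ in $\ker\Pi_A\cap\ker\Pi_B$---part of the $1$-eigenspace of $R_BR_A$ by \rf(lem:szegedy)---on which $P_\delta$ is the identity. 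No finite bound follows.

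The fix is a single application, not an iteration. Since $(2\Pi_B-I)(2\Pi_A-I)=(2(I-\Pi_B)-I)(2(I-\Pi_A)-I)$, \rf(lem:effective) holds verbatim with $\Pi_A,\Pi_B$ replaced by their complements: if $u\in\im\Pi_A$ then $\|P_\delta(I-\Pi_B)u\|\le\tfrac{\delta}{2}\|u\|$. From $\langle\psi_v,A\vec a\rangle=\sqrt{C_1R}\,w_v=\sqrt{C_1R}\,\|\psi_v\|^2$ (using $\sigma_v=0$ for $v\in V_R$) one gets $\Pi_B A\vec a=\sqrt{C_1R}\sum_v\psi_v=-B\vec b$, hence $(I-\Pi_B)A\vec a=A\vec a+B\vec b=\qst$ exactly. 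With $A\vec a\in\im\Pi_A$ and $\|A\vec a\|^2=1+C_1RW$, one application yields $\|P_\delta\qst\|\le\tfrac{\delta}{2}\sqrt{1+C_1RW}$. This is precisely the paper's argument: the paper takes $\Pi_A,\Pi_B$ to project onto the \emph{invariant} subspaces of $R_A,R_B$ (your $I-\Pi_A,I-\Pi_B$), so that your $A\vec a$ is its witness $w'\in\ker\Pi_A$ with $\Pi_Bw'=\qst$.
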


\pfstart
The second statement follows immediately from \rf(thm:detection).  Let us prove the correctness.
The only purpose of Lines~\ref{line:walk:b}---\ref{line:walk:endb} is to ensure that $S$ is disjoint from $M$, that will be needed in our analysis. 
If a vertex in the initial distribution is marked with probability at least $2/3$, then this is detected at \rf(line:walk:endb) with the same probability, and we are done.  So, assume this probability is less than $2/3$, and the value of $\reg b$ in \rf(line:walk:endb) is 0.  Then, the state of the algorithm collapses to a state $\qst'$ with the support disjoint from $M$, and $R_{\qst', M}\le 9R$.  Thus, we may further assume that $S$ is disjoint from $M$.  

Let us consider \rf(line:walk:phaseDetection) of the algorithm.  We start with the case when $M$ is non-empty.  Let $p_e$ be a flow from $\st$ to $M$ with energy at most $R$. At first, using the Cauchy-Schwarz inequality and that $S$ is disjoint from $M$, we get
\begin{equation}
\label{eqn:RW}
RW \ge \sC[\sum_{e\in E} \frac{p_e^2}{w_e}]\sC[\sum_{e\in E} w_e] \ge \sum_{e\in E} |p_e| \ge 1.
\end{equation}
Now, we construct a 1-eigenvector
\[
\phi = \sqrt{C_1R} \sum_{u\in S} \sqrt{\st_u} \ket|u> - \sum_{e\in E} \frac{p_e}{\sqrt{w_e}}\ket|e>
\]
of $R_BR_A$ having large overlap with $\qst$  (assume the orientation of each edge $e$ is from $A$ to $B$.)
Indeed, by~\refeqn{flowCondition}, $\phi$ is orthogonal to all $\psi_u$, hence, is invariant under the action of both $R_A$ and $R_B$.  Moreover, $\|\phi\|^2 = C_1R + \sum_{e\in E} p_e^2/w_e$, and $\ip<\phi,\qst> = \sqrt{C_1R}$.  Since we assumed $R\ge \sum_{e\in E} p_e^2/w_e$, we get that the normalised vector satisfies 
\begin{equation}
\label{eqn:positive}
\ipB<\frac\phi{\|\phi\|},\;\qst>\ge \sqrt{\frac{C_1}{1+C_1}}.
\end{equation}
Now consider the case $M=\emptyset$.  Let $w'$ be defined by
\[
w' = \sqrt{C_1R}\sC[\sum_{u\in S} \sqrt{\frac{\st_u}{C_1R}}\; \ket|u> + \sum_{e\in E} \sqrt{w_e}\;\ket|e>].
\]
Let $\Pi_A$ and $\Pi_B$ be the projectors onto the invariant subspaces of $R_A$ and $R_B$, respectively.  Since $S\subseteq A$, we get that $\Pi_Aw'=0$ and $\Pi_Bw' = \qst$.  Also,
\[
\norm|w'|^2 = \sum_{u\in S} \st_u + C_1 R\sum_{e\in E} w_e = 1+C_1RW,
\]
hence, by \reflem{effective}, we have that, if
\[
\delta = \frac{1}{C_2\sqrt{1+C_1 RW}}
\]
for some constant $C_2>0$, then the overlap of $\qst$ with the eigenvectors of $R_BR_A$ with phase less than $\delta$ is at most $1/(2C_2)$.  Comparing this with~\refeqn{positive}, we get that it is enough to execute phase estimation with precision $\delta$ if $C_1$ and $C_2$ are large enough.  Also, assuming $C_1\ge 1$, we get $\delta = \Omega(1/\sqrt{RW})$ by~\refeqn{RW}.
\pfend

\section{Application: Learning Graphs}
\mycommand{btr}{\bigtriangleup}
\label{sec:3rdproof}
As an example, we give a third proof of the first half of \rf(thm:certificates).  Recall that the second proof in \rf(sec:learningProof) gives a reduction from a learning graph to a dual adversary SDP.  The latter then can be implemented using the algorithm in \rf(thm:advAlgorithm).  However, this double reduction results in a complicated quantum algorithm that is hard (if not impossible) to implement time-efficiently.  In this section, we give a direct reduction from a learning graph to a quantum walk.  The walk is quite similar to the one in \rf(prp:distinctness).


Let us remind main definitions from \rf(chp:cert).  A learning graph computes a function $f\colon[q]^\vars \supseteq \cD\to\{0,1\}$.  Vertices of the graph are subsets of $[n]$, and the edges are between vertices $S$ and $S\cup\{j\}$ for some $S\subset [n]$ and $j\in[n]\setminus S$.  
The initial distribution $\st$ is concentrated on the vertex $\emptyset$.  For each positive input $x\in f^{-1}(1)$, a vertex $S$ is marked if and only if it contains a 1-certificate for $x$, i.e., $f(z) = 1$ for all $z\in\cD$ such that $z_S = x_S$.  The {\em complexity} of the learning graph is defined as $\sqrt{WR}$ in the notations of \refsec{electric}.

The learning graph is a bipartite graph: the part $A$ contains all vertices of even cardinality, and the part $B$ contains all vertices of odd cardinality.  Also, the support of $\st$ is concentrated in $A$.  Hence, the algorithm from \refsec{walk} can be applied, and the presence of a marked vertex can be detected in $O(\sqrt{WR})$ steps of the quantum walk.  It suffices to show that one step of the quantum walk can be implemented in $O(1)$ quantum queries.

This can be done using standard techniques.  Let $\str$ be the input string.  The quantum walk has two registers: the {\em data register} $\reg D$, and the {\em coin register} $\reg C$.  The states of the first register are of the form $\ket D|S>$ for some $S\subseteq[n]$.  The register contains the description of the subset $S$, the values of $\str_j$ for $j\in S$, and some ancillary information.  Because of the interference, it is important that $\ket D|S>$ is always represented in exactly the same way that only depends on $S$ and the input string $\str$.  (Most classical data structures do not satisfy this condition.)  The second register stores an element $j\in [n]$.  A state $\ket D|S>\ket C|j>$ represents the edge of the learning graph connecting subsets $S$ and $S\btr\{j\}$, where $\btr$ stands for the symmetric difference.  Additionally, there is the state $\ket |\emptyset>$ orthogonal to all $\ket D|S>\ket C|j>$.  It corresponds to the initial vertex $\emptyset$ of the quantum walk.

The step of the quantum walk is performed as follows.  Start with a superposition of $\ket|\emptyset>$ and the states of the form $\ket D|S>\ket C|j>$ with $S$ in $A$.  At first, perform the reflection $R_A$ as described in \refsec{walk}.  It is possible to detect whether $S$ is marked by considering the values $\str_j$ stored in $\ket D|S>$, and $\psi_S$ from~\rf(eqn:psi) does not depend on the input.  Hence, this operation does not require any oracle queries.  Next, apply the {\em update operation} that maps $\ket D|S>\ket C|j>$ into $\ket D|S\btr \{j\}>\ket C|j>$.  This represents the same edge, but with the content of the data register in $B$.  The update operation requires one oracle query in order to compute or uncompute $\str_j$.  After that, perform $R_B$ similarly to $R_A$, and apply the update operation once more.  Hence, one step of the quantum walk requires $O(1)$ oracle queries, and $f(\str)$ can be computed in $O(\sqrt{WR})$ quantum queries.

\section{Application: 3-Distinctness}
\mycommand{tO}{\tilde O}
\label{sec:kdist}
In \rf(sec:3rdproof), we demonstrated that the quantum walk algorithm from \refsec{walk} can be used to implement learning graphs.  In this section, we show an example slightly beyond the scope of learning graphs.  

Recall the $k$-distinctness problem from \rf(defn:kdist).  So far, we have seen two quantum algorithms for this problem.  The first one is from \rf(cor:walkKDist) and uses $O(\vars^{k/(k+1)})$ queries.  This algorithm can be implemented time-efficiently.  In \rf(thm:kdist), we saw a quantum algorithm that requires $O(\vars^{1-2^{k-2}/(2^k-1)})$ queries.  For $k=3$, this gives a quantum $O(\vars^{5/7})$ query algorithm.  However, it seems unlikely that this algorithm can be implemented time-efficiently.

In this section, we describe a quantum algorithm for 3-distinctness having the same time complexity up to polylogarithmic factors.  This is a different algorithm, and it is based on ideas from~\cite{lee:learningKdistPrior}.  Formally, we prove the following result.

\begin{thm}
\label{thm:3dist}
The 3-distinctness problem can be solved by a quantum algorithm in time $\tO(\vars^{5/7})$ using quantum random access quantum memory (QRAQM) of size $\tO(\vars^{5/7})$.
\end{thm}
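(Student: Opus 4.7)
The plan is to realize the learning graph for $3$-distinctness from Section~\ref{sec:final} as a concrete Szegedy-style walk on an explicit state graph, and then to invoke Theorem~\ref{thm:walk} for the query bound together with a careful data-structure implementation for the time bound. The key gain over the Ambainis-style walk underlying Corollary~\ref{cor:walkKDist} (which only delivers $N^{3/4}$ for element distinctness, and $N^{3/4}$ for $3$-distinctness via the $2$-subset approach) is that Theorem~\ref{thm:walk} lets us start from \emph{any} initial distribution whose effective resistance to the marked set is small, rather than forcing us to start from the stationary distribution of the walk on the Johnson graph.

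First, I will set up the underlying bipartite graph. Vertices will be triples $(S_1, S_2, T)$ where $S_1 \subseteq [N]$ has size $r_1 \approx N^{4/7}$, $S_2 \subseteq [N] \setminus S_1$ has size $r_2 \approx N^{5/7}$, and $T \subseteq S_2$ is exactly the set of elements in $S_2$ whose value matches some value in $S_1$ (the "uncovered" set of Section~\ref{sec:construction}). Edges correspond to swapping one element of $S_1$ or of $S_2$ for a new element outside $S_1 \cup S_2$, with $T$ updated accordingly. The initial distribution $\sigma$ is uniform over all such valid $(S_1,S_2,T)$; it can be prepared by a set-up subroutine that samples $S_1,S_2$ uniformly, queries their values, and classically computes $T$, in $\tilde O(r_1+r_2) = \tilde O(N^{5/7})$ queries and time. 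A vertex is marked when $S_1 \cup S_2$ contains a $3$-collision.

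Next, I will bound $W$ and $R$. The total edge weight $W$ is $\tilde O(1)$ (after normalization by the number of vertices), while the effective resistance $R_{\sigma,M}$ is bounded by exhibiting an explicit unit flow from $\sigma$ to $M$ modelled on the learning-graph flow of Section~\ref{sec:construction}: for each positive input, one routes $1/|\sigma|$ units through the sequence of swaps that inserts $a_1,a_2,a_3$ into the appropriate slots using the fault-tolerant inclusion-exclusion construction. The same counting that gave the $N^{5/7}$ learning-graph bound in Section~\ref{sec:complexity} gives $\sqrt{WR} = \tilde O(N^{5/7})$. Plugging into Theorem~\ref{thm:walk} yields the query complexity.

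The main obstacle, and the core technical content of the proof, is implementing one step of the walk in $\mathrm{polylog}(N)$ time. For quantum interference to work, the data register attached to a vertex $(S_1,S_2,T)$ must be stored in a canonical form depending only on the vertex, not on the history of swaps that led to it. I will use the now-standard QRAQM-based hash-table / skip-list representation (as in the time-efficient element distinctness algorithm of~\cite{ambainis:distinctness}) extended to maintain two sorted tables with cross-pointers: one for $S_1$ keyed by value (so that, upon inserting or deleting an element of $S_2$, one can test membership of its value in $S_1$ in polylog time, giving the update to $T$), and one for $S_2$ keyed by value (for the symmetric update when an element of $S_1$ changes). Each swap then triggers $O(1)$ insertions/deletions in these structures and $O(1)$ value-match tests, each costing $\mathrm{polylog}(N)$ time using QRAQM of total size $\tilde O(r_1+r_2) = \tilde O(N^{5/7})$. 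The check operation reduces to scanning $T$ (and the small number of elements of $S_1$ matched by $T$) for a third repeated value, which is again polylog in a suitably augmented structure. Combining the $\tilde O(N^{5/7})$ walk steps with the $\mathrm{polylog}(N)$ cost per step gives the claimed $\tilde O(N^{5/7})$ time bound.
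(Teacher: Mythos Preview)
Your walk does not achieve $\sqrt{WR} = \tilde O(N^{5/7})$. With swap edges on pairs $(S_1,S_2)$ of fixed sizes, the graph is regular, so your uniform $\sigma$ \emph{is} the stationary distribution $\pi$; by Theorem~\ref{thm:classical}(b), $2WR_{\sigma,M}=H_{\pi,M}$ exactly, and no choice of flow can beat this. With $|S_1\cup S_2|\approx N^{5/7}$, the marked fraction is $\varepsilon\approx N^{-6/7}$ and the spectral gap is $\delta\approx N^{-5/7}$, giving $\sqrt{H}\approx N^{11/14}$---worse even than the $N^{3/4}$ of Corollary~\ref{cor:walkKDist}. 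The Chapter~\ref{chp:kdist} learning graph achieves $N^{5/7}$ only because it uses input-dependent arc weights so that the $\sim N^{6/7}$ \emph{covered} elements of $S_2$ are loaded essentially for free; but to compute $T$ in your state you must query every element of $S_2$, which destroys exactly this saving. The fault-tolerant inclusion-exclusion of Section~\ref{sec:final} repairs a feasibility condition in the adversary SDP and does not translate into a small-energy flow on a swap graph.

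The paper takes an entirely different route. Vertices are single subsets $S$ of a prescribed \emph{collision type} $(r_1,r_2,0)=(N^{5/7},N^{4/7},0)$, meaning $S$ already contains $r_2=N^{4/7}$ internal $2$-collisions. Adding $a_1,a_2$ creates one more pair, which hides among the existing $r_2$ pairs, and the layered walk of Lemma~\ref{lem:kdist} then costs $N/\sqrt{\min r_i}=N^{5/7}$. The genuine obstacle, which your proposal sidesteps, is preparing the uniform superposition over this $V_0$: a random $N^{5/7}$-subset has only $\sim N^{3/7}$ pairs, far short of $N^{4/7}$, so one cannot simply ``sample uniformly and query.'' Section~\ref{sec:setup} boosts the pair count via repeated Grover searches at cost $r_2\sqrt{N/r_1}=N^{5/7}$, followed by a measurement-and-resample trick to restore uniformity. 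This set-up phase---not the data structures, which are standard---is the technical heart of the proof, and is the reason the argument does not currently extend to $k>3$.
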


Recall that the algorithm from \rf(cor:walkKDist) consists of two phases: the set-up phase that prepares the uniform superposition, and the quantum walk itself.  Our algorithm also consists of these two phases.  Interestingly, in our case, the analysis of the quantum walk is quite simple, and can be easily generalised to any $k$.  It is the set-up phase that is hard to generalise.  The case of $k=3$ has a relatively simple {\em ad hoc} solution that we describe in \rf(sec:setup).

We note that there exists an alternative time-efficient quantum algorithm for the 3-distinctness problem by Childs, Jeffery, Kothari and Magniez~\cite{childs:walk3Dist}.  It is based on an MNRS-type quantum walk.  The algorithm has a similar set-up phase as ours, but a more complicated quantum walk phase, that is hard to generalise to arbitrary $k$.

\subsection{Technicalities}
\label{sec:3disttech}
We start the section with some notations and algorithmic primitives we need for our algorithm.  For more detail on the implementation of these primitives, refer to the paper by Ambainis~\cite{ambainis:distinctness}.  Although this paper does not exactly give the primitives we need, it is straightforward to apply the necessary modifications, so we don't go into detail.

Recall the settings in the $k$-distinctness problem.  We are given a string $\str\in [q]^\vars$.  A subset $J\subseteq[\vars]$ of size $\ell$ is called an {\em $\ell$-collision} iff $\str_i=\str_j$ for all $i,j\in J$.  In the $k$-distinctness problem, the task is to determine whether the given input string contains a $k$-collision.  Inputs with a $k$-collision are called {\em positives}, the remaining ones are called {\em negative}.

Again, we may assume that any positive input contains exactly one $k$-collision.  Also, we may assume there are $\Omega(\vars)$ $(k-1)$-collisions in any input.
For a subset $S\subseteq[\vars]$ and $i\in[k]$, let $S_i$ denote the set of $j\in S$ such that $|\{j'\in S \mid \str_{j'} = \str_j\}| = i$.  Denote $r_i = |S_i|/i$, and call $\tau = (r_1,\dots,r_k)$ the {\em type} of $S$.

Our main technical tool is a dynamical quantum data structure that maintains a subset $S\subseteq[\vars]$ and the values $\str_j$ for $j\in S$.  We use notation $\ket D|S>$ to denote a register containing the data structure for a particular choice of $S\subseteq [\vars]$.

The data structure is capable of performing a number of operations in polylogarithmic time.  The initial state of the data structure is $\ket D|\emptyset>$.  The update operation adds or removes an element:  $\ket D|S>\ket |j>\ket |\str_j>\mapsto \ket D|S\btr \{j\}> \ket |j>\ket |0>$.  Recall that $\btr$ stands for the symmetric difference.
There is a number of query operations to the data structure.  It is able to give the type $\tau$ of $S$.  For integers $i\in[k]$ and $\ell\in [|S_i|]$, it returns the $\ell$th element of $S_i$ according to some internal ordering.  Given an element $j\in[\vars]$, it detects whether it is in $S$, and if it is, returns the tuple $(i,\ell)$ such that $j$ is the $\ell$th element of $S_i$.  Given $a\in[q]$, it returns $i\in[k]$ such that $a$ equals to a value in $S_i$ or says there is no such $i$.

The data structure is coherence-friendly, i.e., a subset $S$ will have the same representation $\ket D|S>$ independently of the sequence of update operations that results in this subset.  Next, it has an exponentially small error probability of failing that can be ignored.  Finally, the implementation of this data structure requires QRAQM.

\subsection{Quantum Walk}
\label{sec:3distwalk}
In this section, we describe the quantum walk part of the algorithm.  Formally, it is as follows.

\begin{lem}
\label{lem:kdist}
Let $r_1,\dots,r_{k-1} = o(\vars)$ be positive integers, $\str\in[q]^\vars$ be an input for the $k$-distinctness problem, and $V_0$ be the set of $S\subseteq[\vars]$ having type $(r_1,\dots,r_{k-1},0)$.
Given the uniform superposition $\qst = \frac{1}{\sqrt{|V_0|}} \sum_{S\in V_0} \ket|S>$, it is possible to solve the $k$-distinctness problem in $\tO(\vars/\sqrt{\min\{r_1,\dots,r_{k-1}\}})$ quantum time.
\end{lem}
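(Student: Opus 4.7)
The plan is to apply the quantum walk framework of \rf(thm:walk) to a bipartite graph built on top of $V_0$. Define $G=(A\cup B, E)$ with $A=V_0$, $B=\{S\cup\{j\}\mid S\in V_0,\; j\in[\vars]\setminus S\}$, and edges $\{S, S\cup\{j\}\}$ of unit weight. A vertex in $B$ is marked iff it contains a $k$-collision. Since $\st$ (uniform on $A$) has support disjoint from $M\subseteq B$ and $G$ is bipartite, \rf(thm:walk) can be started from the given superposition $\qst$ and detects whether $M\ne\emptyset$ in $O(\sqrt{WR})$ steps of the walk.

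The total weight $W$ is straightforward: every $S\in A$ has $\vars-r$ neighbours in $B$, where $r=\sum_i i\cdot r_i$, so $W=|A|(\vars-r)=O(|A|\vars)$. The effective resistance $R$ in the positive case is the main obstacle, and I would handle it by building an explicit flow from $\qst$ to $M$. Call $S\in A$ \emph{ready} if there is a $(k-1)$-subset of the unique $k$-collision $\{a_1,\dots,a_k\}$ appearing as one of the $r_{k-1}$ groups in $S_{k-1}$; for ready $S$, the edge to $S\cup\{a_\ell\}$ (with $a_\ell$ the missing element) lands directly in a marked vertex, and all $1/|A|$ units from $S$ are pushed along it. From non-ready $S$, the flow is routed through a short sequence of swaps inside $A$ (each implemented as an $A\to B\to A$ move) that rearranges the layer $S_{k-1}$ until $S$ becomes ready.

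Exploiting the permutation symmetries of $V_0$ that fix the input string reduces the flow analysis to a combinatorial average over symmetry classes. The fraction of ready $S$ is $\Theta(r_{k-1}/\vars)$, while mixing inside $V_0$ under swaps is controlled layer by layer by the sizes $r_i$; the bottleneck is the layer with the fewest slots, forcing average swap-path length to scale with $\vars/\min_i r_i$. Pushing this through the definition of flow energy should give $R=O(\vars/(|A|\cdot\min_i r_i))$, hence $WR=O(\vars^2/\min_i r_i)$ and $\sqrt{WR}=O(\vars/\sqrt{\min_i r_i})$, which is the number of walk steps promised by the lemma. The layer-by-layer energy accounting is where I expect most of the technical work.

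To make each walk step cost polylogarithmic time, encode $\ket|S>$ using the coherence-friendly data structure of \rf(sec:3disttech); it stores $\str_j$ for $j\in S$ and supports type and block queries in polylog time. The diffusion operators $R_A,R_B$ of \rf(thm:walk) depend only on whether the current vertex is marked (decidable from the stored type), on the uniform edge structure of $G$, and on the initial-weight term at vertices of $V_0$, so each is a standard controlled Grover diffusion over the coin register, implementable in polylog time. Adding or removing a single element costs one oracle query plus polylog data-structure time. Multiplying the $O(\vars/\sqrt{\min_i r_i})$ walk steps by this polylog per-step cost yields the claimed $\tO(\vars/\sqrt{\min_i r_i})$ total time.
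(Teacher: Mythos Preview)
Your graph is not the one the paper walks on, and the gap in your argument is precisely the swap-routing you flag as ``where most of the technical work'' lies. The paper sidesteps that work entirely by enlarging the graph rather than staying near $V_0$. It defines types $\tau_i=(r_1,\dots,r_{i-1},r_i+1,r_{i+1},\dots,r_k)$ for $i=0,\dots,k$, lets $V_i$ be the subsets of type $\tau_i$, pads every non-terminal vertex with dead-end neighbours so that each $S\in V_i$ (for $i<k$) has degree exactly $\vars$, and declares $V_k$ marked. In the positive case the flow is then trivial: from each $S\in V_0'$ (those disjoint from the collision) send $1/|V_0'|$ along the length-$k$ path $S\to S\cup\{a_1\}\to\cdots\to S\cup\{a_1,\dots,a_k\}$, which lands in $V_k$. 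The energy is $O(1/|V_0|)$ with no routing analysis whatsoever. The $\min_i r_i$ does not come from the flow at all; it comes from the \emph{size} of the graph in the negative-case estimate: one shows $|V_i|=O(\vars|V_0|/r_i)$ by a double-counting argument, so the edge count is $O(\vars^2|V_0|/\min_i r_i)$, and the effective-spectral-gap vector $w'$ has $\|w'\|^2=O(\vars^2/\min_i r_i)$.

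In your bipartite graph the trade-off is inverted: $W=O(|V_0|\vars)$ is small, so you need $R=O\bigl(\vars/(|V_0|\min_i r_i)\bigr)$. Your sketch does not establish this, and it is not obvious. To make $S$ ready you must insert $a_1,\dots,a_{k-1}$ one by one; after adding $a_1$ you are at type $(r_1{+}1,\dots)$ and must remove a singleton to return to $V_0$, after adding $a_2$ you are at $(r_1{-}1,r_2{+}1,\dots)$ and must break a pair, and so on. Controlling the congestion on these edges, and showing the bottleneck is exactly $\min_i r_i$ rather than, say, $r_{k-1}$ alone, is a genuine calculation you have not done. There is also an implementation wrinkle you elide: a vertex $T\in B$ has input-dependent degree (only those $j\in T$ whose removal restores type $\tau_0$ are neighbours), so $D_T$ is not a uniform Grover diffusion over the coin register. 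The paper's dead-end trick makes every diffusion a clean Grover over $[\vars]$; this is what the multi-layer construction buys.
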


\pfstart
As mentioned in \rf(sec:3disttech), we may assume that any input contains at most one $k$-collision and $\Omega(\vars)$ $(k-1)$-collisions.  Define $r_k=0$, and the type $\tau_i$ as $(r_1,\dots,r_{i-1}, r_i+1, r_{i+1},\dots,r_k)$ for $i\in\{0,1,\dots,k\}$.  Let $V_i$ be the set of all $S\subseteq[\vars]$ having type $\tau_i$.  It is consistent with our previous notation for $V_0$.  Denote $V=\bigcup_i V_i$.  Also, for $i\in[k]$, define the set $Z_i$ of {\em dead-ends}  consisting of vertices of the form $(S,j)$ for $S\in V_{i-1}$ and $j\in [\vars]$ such that $S\btr \{j\}\notin V$.  Again, $Z=\bigcup_i Z_i$.

The vertex set of $G$ is $V\cup Z$.  Each $S\in V\setminus V_k$ is connected to $\vars$ vertices: one for each $j\in [\vars]$.  If $S\btr\{j\}\in V$, it is the vertex $S\btr\{j\}$, otherwise, it is $(S,j)\in Z$.  A vertex $S\in V_k$ is connected to $k$ vertices in $V_{k-1}$ differing from $S$ in one element.   Each $(S,j)\in Z$ is only connected to $S$.  
The weight of each edge is 1.
A vertex is marked if and only if it is contained in $V_k$.

\refalg{walk} is not directly applicable here because we do not know the graph in advance (it depends on the input), nor do we know the amplitudes in the initial distribution $\qst$.  However, we know the graph locally, and our ignorance in the amplitudes of $\qst$ conveniently cancels out with our ignorance in the size of $G$.

Let us briefly describe the implementation of the quantum walk on $G$ following \refsec{walk}.  Let $G=(V\cup Z,E)$ be the graph described above. It is bipartite: The part $A$ contains all $V_i$ and $Z_i$ for $i$ even, and $B$ contains all $V_i$ and $Z_i$ for $i$ odd.  The support of $\qst$ is contained in $A$.  The reflections $R_A$ and $R_B$ are the direct sums of local reflections $D_u$ over all $u$ in $A$ and $B$, respectively.  They are as follows:
\itemstart
\item If $u\in V_k$, then $D_u$ is the identity in $\cH_u$.
\item If $u\in Z_i$, then $D_u$ negates the amplitude of the only edge incident to $u$.
\item If $u\in V_i$ for $i<k$, then $D_u$ is the reflection about the orthogonal complement of $\psi_u$ in $\cH_u$.  If $u\in V_0$, or $u\in V_i$ with $i>0$, then $\psi_u$ is defined as
\[
\psi_u = \frac{1}{\sqrt{C_1}} \ket |u> + \sum_{uv\in E} \ket|uv>,\qquad\text{or}\qquad
\psi_u = \sum_{uv\in E} \ket|uv>,
\]
respectively.  Here, $C_1$ is a constant.
\itemend

The space of the algorithm consists of three registers: $\reg D$, $\reg C$ and $\reg Z$.  The data register $\reg D$ contains the data structure for $S\subseteq[\vars]$.  The coin register $\reg C$ contains an integer in $\{0,1,\dots,\vars\}$, and the qubit $\reg Z$ indicates whether the vertex is an element of $Z$.  A combination $\ket D|S>\ket C|0>\ket Z|0>$ with $S\in V_0$ indicates a vertex in $V_0$ that is used in $\qst$.  A combination $\ket D|S> \ket C|j>\ket Z|0>$ with $j>0$ indicates the edge between $S$ and $S\btr \{j\}$ or $(S,j)\in Z$.  Finally, a combination $\ket D|S>\ket C|j>\ket Z|1>$ indicates the edge between $(S,j)\in Z$ and $S\in V$.

Similarly to \rf(sec:3rdproof), the reflections $R_A$ and $R_B$ are broken down into the diffuse and update operations.  The diffuse operations perform the local reflections in the list above. 
For the first one, do nothing conditioned on $\ket D|S>$ being marked.  For the second one, negate the phase conditioned on $\reg Z$ containing 1.  The third reflection is the standard Grover diffusion with one special element if $S\in V_0$.  Similarly to \rf(alg:walk), the orientation of the edges may be ignored because the graph is bipartite.

The update operation can be performed using the primitives from \rf(sec:3disttech).  Given $\ket D|S> \ket C|j>\ket Z|b>$, calculate whether $S\btr\{j\}\in V$ in a fresh qubit $\reg Y$.  Conditioned on $\reg Y$, query the value of $\str_j$ and perform the update operation for the data structure.  Conditioned on $\reg Y$ not being set, flip the value of $\reg Z$.  Finally, uncompute the value in $\reg Y$.  On the last step, we use that $\ket D|S>\ket C|j>$ represents an edge between vertices in $V$ if and only if $\ket D|S\btr\{j\}> \ket C|j>$ does the same.

After we showed how to implement the step of the quantum walk efficiently, let us estimate the required number of steps.  The argument is very similar to the one in \rf(thm:walk).  Let us start with the positive case.  Assume $\{a_1,\dots,a_k\}$ is the unique $k$-collision.
Let $V_0'$ denote the set of $S\in V_0$ that are disjoint from $\{a_1,\dots,a_k\}$, and $\st'$ be the uniform probability distribution on $V_0'$.  
Define the flow $p$ from $\st'$ to $V_k$ as follows.  For each $S\in V_i$ such that $i<k$ and $S\cap M = \{a_1,\dots,a_i\}$, define flow $p_e = 1/|V_0'|$ on the edge $e$ from $S$ to $S\cup\{a_{i+1}\}\in V_{i+1}$.  Define $p_e = 0$ for all other edges $e$.
Let
\[
\phi = \sqrt{C_1} \sum_{S\in V_0'}\frac{1}{|V_0'|} \ket|S> - \sum_{e\in E} p_e\ket|e>.
\]
This vector is orthogonal to all $\psi_u$, hence, is invariant under the action of $R_BR_A$.
Also, $\|\phi\|^2 = (k+C_1)/|V_0'|$, and $\ip<\phi, \qst> = \sqrt{C_1/|V_0|}$.  Hence,
\[
\ipB<\frac\phi{\|\phi\|},\;\qst> = \sqrt{\frac{C_1|V_0'|}{(k+C_1)|V_0|}} \sim \sqrt{\frac{C_1}{k+C_1}}
\]
where $\sim$ stands for the asymptotic equivalence as $\vars\to\infty$.

In the negative case, define
\[
w' = \sqrt{\frac{C_1}{|V_0|}} 
\sC[\sum_{S\in V_0} \frac{1}{\sqrt{C_1}} \ket|S> + \sum_{e\in E} \ket|e>].
\]
Similarly to the proof of \refthm{walk}, we have that $\Pi_Aw'=0$ and $\Pi_Bw' = \qst$.

Let us estimate $\|w'\|$.  The number of edges in $E$ is at most $\vars$ times the number of vertices in $V_0\cup\cdots\cup V_{k-1}$.  Thus, we have to estimate $|V_i|$ for $i\in[k-1]$.  Consider the relation between $V_0$ and $V_i$ where $S\in V_0$ and $S'\in V_i$ are in the relation iff $S'\setminus S$ consists of $i$ equal elements. Each element of $V_0$ has at most $\vars{k-1\choose i} = O(\vars)$ images in $V_i$ because there are at most $\vars$ maximal collisions in the input, and for each of them, there are at most ${k-1\choose i}$ variants to extend $S$ with.  On the other hand, each element in $V_i$ has exactly $r_i+1$ preimages in $V_0$.  Thus, $|V_i| =O( \vars |V_0|/r_i )$.  Thus,
\[
\|w'\| = O\s[ \sqrt{1+\vars/r_1+\vars/r_2+\cdots+\vars/r_{k-1}} ] =
O\s[ {\vars}/{\sqrt{\min\{r_1,\dots, r_{k-1}\}}} ].
\]
By \reflem{effective}, we have that if
$\delta = \Omega(1/\|w'\|)$,
then the overlap of $\qst$ with the eigenvectors of $R_BR_A$ with phase less than $\delta$ can be made at most $1/C_2$ for any constant $C_2>0$.  Thus, it is enough to execute the phase estimation with precision $\delta$ if $C_1$ and $C_2$ are large enough.  By \rf(thm:detection), this requires $O( {\vars}/{\sqrt{\min\{r_1,\dots, r_{k-1}\}}} )$ iterations of the quantum walk.
\pfend

\subsection{Preparation of the Initial State}
\label{sec:setup}
Now we describe how to generate the uniform superposition $\qst$ over all elements in $V_0$ from the formulation of \rf(lem:kdist) efficiently in the special case of $k=3$.  Let us denote $r_1 = \vars^{5/7}$ and $r_2 = \vars^{4/7}$.  We start under the assumption the input is negative.

Prepare the state ${\vars\choose r_1}^{-1/2} \sum_{S: |S| = r_1} \ket D|S>$ in time $\tO(r_1)$.  This is very similar to the algorithm by Ambainis, and we omit the details.  
  Measure the type of $S$.  
  The state of the algorithm collapses to the uniform superposition of the subsets of some type $\tau = (t_1,t_2)$.  Unfortunately, with high probability, $t_2$ will be of order $r_1^2/\vars$ that is much smaller than the required size $r_2$.

We enlarge the size of $S_2$ by using the Grover search repeatedly.  For each $S$ in the superposition, apply the Grover search over $[\vars]$.  An element $j\in[\vars]$ is marked iff $j\notin S$ and $\str_j$ is equal to an element in $S_1$.  This can be tested using the primitives from \rf(sec:3disttech). If the Grover search fails, repeat it from the current state.  If the search succeeds, the state is a superpositon of states of the form $\ket D|S> \ket |j>$.  Query the value of $\str_j$, and update the data structure.  This gives a superposition over $\ket D|S\cup\{j\}> \ket |j>$.  Let $S' = S\cup\{j\}$.  Apply the primitive that transforms $j$ into its number in $S'_2$.  This gives a superposition over $\ket D|S'>\ket |i>$ where $i\in [|S'_2|]$.  We show in a moment that, for a fixed $S'$, all states $\ket D|S'>\ket |i>$ have the same amplitude, hence, the second register can be detached in the sense of \rf(sec:RAM).

A typical subset has $\Omega(r_1)$ elements in $S_1$ that can be extended to a 2-collision, hence, the Grover search requires $O(\sqrt{\vars/r_1})$ iterations.  As we load $O(r_2)$ additional elements, the time spent during the Grover search is $\tO(r_2\sqrt{\vars/r_1})$.

Now assume each $S$ contains $r_2$ 2-collisions.  Unfortunately, the state is not the uniform superposition we require for the quantum walk in \rf(lem:kdist).  But due to symmetry, at any place in the algorithm, the amplitude of a subset $S$ only depends on the number of elements in $S_1$ that can be extended to a 2-collision.  This shows that, indeed, the second register can be detached after the Grover search.  Moreover, this gives us a way to generate the uniform superposition we require.

We measure the content of $S_1$.  Let $B$ be the outcome.  The state collapses to the uniform superposition over subsets $S_2$ consisting of $r_2$ 2-collisions not using the values in $B$.  Then, we repeat the first step, i.e., for each $S$, we construct the uniform superposition over subsets of size $r_1$ consisting of elements outside $S_2$ and having values different from the ones in $B$.  After that, we measure the type of the subset.  This results in the uniform superposition over states in $V_0$ of type $(r_2', r_1')$ with $r_2'>r_2$ and $r_1'=\Theta(r_1)$ and avoiding elements with values in $B$.

In the positive case, due to a similar argument, the state can be written as $\alpha\qst' + \sqrt{1-\alpha^2}\qst''$ where $\qst'$ is the uniform superposition over $V_0'$ as defined in the proof of \rf(lem:kdist), and $\qst''$ is some superposition over $\ket D|S>$ where $S$ intersects $\{a_1,a_2,a_3\}$.  One can show that $\alpha$ is close to 1, hence, the initial state has large overlap with 1-eigenspace of $R_BR_A$.%
\footnote{One can modify the algorithm so that it does not require this observation.  With probability $1/2$, continue with the old algorithm, and with probability $1/2$, measure the content of $S$ and search for a 2-distinctness outside $S$ having a value equal to a value in $S$.  This can be done using the standard algorithm for 2-distinctness with minor modifications.}

Then, we can apply the algorithm from \reflem{kdist} with additional modification that a vertex $(S,j)$ is declared a dead-end also if $\str_j$ has a value in $B$.  This finds a 3-collision in time $\tO(\vars/\sqrt{r_2})$ if its value is different from a value in $B$.  For the values in $B$, we search for a 2-collision outside $B$ but having a value equal to a value in $B$.  This can be implemented in time $\tO(\vars^{2/3})$ using the standard algorithm for 2-distinctness with minor modifications.

Thus, up to polylogarithmic factors, the time complexity of the algorithm is
\(
r_1 + r_2\sqrt{\vars/r_1} + \vars/\sqrt{r_2}.
\)
This attains optimal value of $\tO(\vars^{5/7})$ for $r_1 = \vars^{5/7}$ and $r_2 = \vars^{4/7}$.  This finishes the proof of \rf(thm:3dist).

\section{Summary}
In this chapter, we constructed a new variant of Szegedy-type quantum walks.  The main innovations are that the walk need not start in the stationary distribution, and the expected number of steps of the walk is expressed in terms of electric resistance, not the classical hitting time.  This provides greater flexibility, since one is not interested in preparing the stationary distribution state.  Also, the effective resistance is easier to estimate than the spectral gap of a classical random walk.

We used this type of quantum walks in an alternative realisation of learning graphs, and we also applied it to the $k$-distinctness problem.  The application to the $k$-distinctness problem is different from the learning-graph-based algorithm from \rf(chp:kdist).  We split the algorithm into two parts: the set-up and the walk phases, akin to the original algorithm by Ambainis, \rf(prp:distinctness).  The walk phase can be implemented for any $k$, but we managed to implement the set-up phase only for $k=3$.  It is an open problem to implement the set-up phase, or to come up with a different time-efficient implementation of $k$-distinctness, for $k>3$.

We are interested in further applications of \rf(thm:walk).  It could be based on some classical algorithm, since many of them start in a state far from the stationary distribution.  
However, the new quantum walk still does not match the flexibility of classical random walks: the initial state has to be hard-wired into the algorithm, as well as the estimate $R$ on the effective resistance.  This makes it hard to apply this result for graphs of unknown structure, and limits the scope of possible applications.

\chapter*{Conclusion}
\addcontentsline{toc}{chapter}{Conclusion}
In the thesis, we developed a number of tools for the development of query- and time-efficient quantum algorithms, as well as for proving lower bounds.  We improved quantum query complexity of such problems as triangle detection, $k$-distinctness, associativity testing.  We proved lower bounds on the quantum query complexity of the $k$-sum and the triangle sum problems.  We developed time-efficient algorithms for path- and claw-detection, and for the 3-distinctness problems.

In the thesis, we have already mentioned some interesting open problems that can be attacked using these techniques.  This includes determining the quantum query complexity of the graph collision problem, proving adversary lower bounds for the collision, the set equality and the $k$-distinctness problems, developing new time-efficient algorithms using new variants of the quantum walk.

However, there are more open problems.
All the algorithms in our thesis used that the computed function has small certificates.  Also, they all were for decision problems with Boolean output.  However, we know that the adversary bound is tight also for functions beyond this class.  It would be interesting to construct dual adversary solutions for such functions.  

Also, all speed-ups we attain in the thesis are polynomial (actually, at most quadratic).
It is an interesting open problem to develop quantum walks that obtain superpolynomial separations either in the query complexity or in the time complexity.  Some examples are known for quite a while~\cite{kempe:walkHitFaster, childs:walkExponentialSeparation}, and we hope that some of the new ideas may help in more practical problems, like the Hidden Subgroup Problem or the $\mathsf{BQP}$ versus $\mathsf{PH}$ problem.

\clearpage
\addcontentsline{toc}{chapter}{Bibliography}

\pagestyle{plain}
\bibliographystyle{../habbrvN}
\bibliography{../bib}

\appendix
\chapter{Technical Results}
In this chapter, we briefly describe some technical results we use throughout the thesis.

\mycommand{semidef}{\mathbf{S}}

\section{Linear Algebra}
\label{sec:linearAlgebra}
Most of the results mentioned in the appendix can be found in the book by Horn and Johnson~\cite{horn:matrix}.  If the following, we implicitly assume that all involved sums and products of matrices/vectors are defined.

We work with finite-dimensional complex or real vectors spaces.  We use $\C^A$ or $\R^A$ to denote vector space over the field of complex or real number, respectively, having elements of $A$ as its orthonormal basis.  The element of the basis corresponding to an element $a\in A$ will be usually denoted by $e_a$.  The most common case is $\C^n$ or $\R^n$, in which case the elements of the basis are labelled by integers from 1 to $n$.
We represent vectors by column-matrices.  If $v\in \C^A$, we use $v\elem[a]$ to denote the $a$th component of the vector.  Thus, $v = \sum_{a\in A} v\elem[a]e_a$.

Also, we work with complex and real matrices.  
An $X\times Y$ matrix is a matrix with its rows labelled by the elements of $X$ and its columns labelled by the elements of $Y$.
By $A\elem[i,j]$ we denote the $(i,j)$th entry of the matrix $A$.  For two matrices $A$ and $B$, we define their ordinary product $AB$, and Hadamard (entry-wise) product $A\circ B$.  
If $A$ is an $X\times Y$ matrix, and $B$ is an $X'\times Y'$ matrix, then their tensor (Kronecker) product $A\otimes B$ is the $(X\times X')\times (Y\times Y')$ matrix defined by $(A\otimes B)\elem[(x,x'), (y,y')] = A\elem[x,y]B\elem[x',y']$.
It satisfies $(A\otimes B)(C\otimes D) = (AC)\otimes (BD)$.
Also, we use direct sums of vectors, as well as of matrices that we denote by $\oplus$.  We denote the trace (the sum of the diagonal elements) by $\tr A$.

The inner product $\ip<u,v>$ of two vectors in $\C^n$ is defined by $\sum_{i=1}^n (u\elem[i])^* v\elem[i]$, where ${}^*$ stands for the complex conjugate.  By $A^*$, where $A$ is a matrix, we denote the conjugate transpose of $A$.  It is the only matrix that satisfies $\ip<u,Av> = \ip<A^*u,v>$ for all vectors $u$ and $v$ of the appropriate sizes.  

The norm of the vector $u$ is defined as $\|u\| = \sqrt{\ip<u,u>}$.  The unit vector is a vector of norm 1.  The norm satisfies the triangle inequality $\|u+v\| \le \|u\| + \|v\|$ and the Cauchy-Schwarz inequality $|\ip<u,v>|\le \|u\|\|v\|$.

Similarly, we define the inner product of two $m\times n$ matrices $A$ and $B$ as $\ip<A,B> = \tr (A^*B)$.  It is equivalent to the inner product on vectors, if the matrices are treated as vectors in $mn$-dimensional vector space.  The corresponding norm is the Frobenius norm $\normFrob|A| = \sqrt{\ip<A,A>}$.  It also satisfies the triangle and the Cauchy-Schwarz inequalities.

The spectral norm of an $m\times n$ matrix $A$ is defined as $\|A\| = \max_{u,v} {u^*Av}$ where $u$ and $v$ range over unit vectors in $\C^n$ and $C^m$, respectively.  Thus, $\|Av\|\le \|A\|\|v||$ and $\normFrob|AV|, \normFrob|VA|\le \|A\|\normFrob|V|$.  The spectral norm satisfies the triangle inequality, and also $\|AB\|\le \|A\|\|B\|$.

Assume $A$ is $n\times n$ matrix.  A number $\lambda\in\C$ and vector $v\in\C^n$ are {\em eigenvalue} and {\em eigenvector} of $A$, respectively, if $Av = \lambda v$.  
An $\lambda$-eigenvector is an eigenvector with eigenvalue $\lambda$.  The $\lambda$-eigenspace is the linear subspace of all $\lambda$-eigenvectors.
The largest absolute value of an eigenvalue of $A$ is called the spectral radius of $A$ and is denoted by $\rho(A)$.  For any two matrices $A$ and $B$, the matrices $AB$ and $BA$ have the same non-zero eigenvalues, counted with multiplicity.

A square matrix $U$ is called unitary if $U^*U = I$.  It is equivalent to $\ip<Uu,Uv> = \ip<u,v>$ for all $u$ and $v$.  
A square matrix $A$ can be transformed into some canonical form $\Lambda$ using a unitary matrix $U$: $A = U^* \Lambda U$.  The form of $\Lambda$ depends on the conditions the matrix $A$ satisfies, and is summarised in \rf(tbl:schur).  The result of the first row of the table is known as Schur's theorem.

\begin{table}[htb]
\centering
\begin{tabular}{|lcl|}
\multicolumn{1}{c}{Name} &
\multicolumn{1}{c}{Condition} &
\multicolumn{1}{c}{Form of $\Lambda$} \\
\hline
General matrix & & Upper triangular, eigenvalues on the diagonal\\
Normal & $A^*A = AA^*$ & Diagonal \\
Hermitian & $A^* = A$ & Diagonal with real entries \\
Positive semi-definite & $\ip<Av,v>\ge 0$ for all $v$ & Diagonal with non-negative entries\\
Positive definite & $\ip<Av,v> > 0$ for all $v\ne 0$ & Diagonal with positive entries\\
\hline
\end{tabular}
\caption{Main types of matrices}
\label{tbl:schur}
\end{table}

We use notation $A\succeq 0$ and $A\succ 0$ to denote that $A$ is positive semi-definite or positive definite, respectively.  The set of $n\times n$ real positive semi-definite matrices is a topologically closed convex cone: if $A, B\succeq 0$ and $c,d\ge 0$, then $cA+dB\succeq 0$.  
This induces a partial ordering on the set of all real symmetric matrices.  We say that $A\succeq B$ if $A-B\succeq 0$.  The cone of real positive semi-definite matrices is self-dual:  $\ip<A,B> \ge 0$ for all $A, B\succeq 0$.  Also, if $\ip<A,V>\ge 0$ for all $A\succeq 0$, then $V$ is positive semi-definite.
Finally, $A\otimes B\succeq 0$, and $A\circ B\succeq 0$ for all positive semi-definite $A$ and $B$.

From the second row of \rf(tbl:schur), it follows that each normal $n\times n$ matrix $A$ can be decomposed as $A = \sum_{i=1}^r \lambda_i v_i^* v_i$, where $r$ is the rank of $A$, $\lambda_i\in \C$, and $v_i$ are pairwise orthogonal unit vectors in $\C^n$.  For each $i$, $\lambda_i$ and $v_i$ are an eigenvalue and the corresponding eigenvector of $A$.  This decomposition is called the eigenvalue decomposition of $A$.  If $A$ is real and Hermitian (also called symmetric), then $\lambda_i$ and all the entries of $v_i$ can be taken real.

An arbitrary $m\times n$ matrix $A$ admits a singular value decomposition: $A = \sum_{i=1}^r \sigma_i u_i^* v_i$.  Here, $r$ is the rank of $A$, and $\sigma_i$ are positive real numbers.  The vectors $u_i$ are pairwise orthogonal unit vector of $\C^n$, and $v_i$ are pairwise orthogonal unit vectors in $\C^m$.  The number $\sigma_i$ is called a singular value of $A$, and $u_i$ and $v_i$ are the corresponding left and right singular vectors of $A$.  The set of $\sigma_i$ coincides with the set of square roots of the non-zero eigenvalues of $A^*A$ counted with multiplicity (alternatively, one may take $AA^*$).  The vectors $v_i$ and $u_i$ are the corresponding eigenvectors of $A^*A$ and $AA^*$, respectively.  The spectral norm of $A$ equals the largest of $\sigma_i$.

A square matrix $A$ is called non-negative (resp., positive), if all its entries are non-negative (resp., positive).  A non-negative matrix $A$ is called primitive if $A^k$ is positive for some positive integer $k$.  The Perron-Frobenius Theorem asserts that if $A$ is a primitive non-negative matrix, then the spectral radius $\rho(A)$ is an eigenvalue of $A$ with multiplicity 1, the corresponding eigenvector has only positive entries, and all the remaining eigenvalues of $A$ are strictly less than $\rho(A)$ in absolute value.

\section{Convex Optimisation}
\label{sec:convexOptimisation}
In this section, we briefly describe convex optimisation with emphasis on inequalities involving real semi-definite matrices.  This type of optimisation is known as semi-definite optimisation.  We explicitly write out the Lagrangian, as we find it a more intuitive approach compared to usage of black-box standard forms.  For more detail, the reader may refer to the book by Boyd and Vandenberghe~\cite{boyd:convex}.

Let $V$ be a vector space.  A subset $\cD\subseteq V$ is called convex if $x,y\in \cD$ implies that $\theta x + (1-\theta) y \in \cD$ for all real $\theta$ between 0 and 1.  A function $f\colon \cD\to\R$, defined on a convex subset, is called {\em convex} if $f(\theta x+(1-\theta)y)\le \theta f(x) + (1-\theta) f(y)$ for all $x,y\in \cD$ and all real $\theta$ between 0 and 1.  For example, all linear and affine functions are convex.

Similarly, let $F\colon \cD\to \R_{d,d}$ be a function to the linear space of all real symmetric $d\times d$ matrices.  We say that the function $F$ is convex (with respect to the cone of semi-definite matrices) if $F(\theta x + (1-\theta) y) \preceq \theta F(x) + (1-\theta) F(y)$ for all $x,y\in\cD$ and $0\le\theta\le 1$.

A convex optimisation problem is a problem of the form
\begin{subequations}
\label{eqn:primal}
\begin{alignat}{3}
 &{\mbox{\rm minimise }} &\qquad& f_0(x) \label{eqn:primalObjective} \\
 &{\mbox{\rm subject to }} && f_i(x)\le 0 &\qquad& \text{for all $i$;}\label{eqn:primalCondition1}\\
 &&& G_j(x)\preceq 0 && \text{for all $j$;} \label{eqn:primalCondition2}\\
 &&& h_k(x) = 0 && \text{for all $k$.} \label{eqn:primalCondition3}
\end{alignat}
\end{subequations}
where $f_i$ and $G_j$ are convex functions with the values in $\R$ and the space of square matrices, respectively; and $h_k\colon V\to \R$ are linear functions.  The function $f_0$ is known as the {\em objective function}, and equations~~\rf(eqn:primalCondition1), \rf(eqn:primalCondition2) and ~\rf(eqn:primalCondition3) are known as the {\em constraints}.

Let $\cD$ denote the intersection of the domains of all the functions in~\rf(eqn:primal).  It is a convex set.
A point $x\in\cD$ is called {\em feasible} if it satisfies the constraints~\rf(eqn:primalCondition1), \rf(eqn:primalCondition2), and~\rf(eqn:primalCondition3).  
A feasible point $x^*$ is called {\em optimal} if the value of $f_0(x^*)$ is minimal possible among all feasible points.

We write out the {\em Lagrangian} of this optimisation problem, that is a function of the form
\[
L(x, \lambda, M, \nu) = f_0(x) + \sum_i \lambda_i f_i(x) + \sum_j \tr (M_j G_j(x)) + \sum_k \nu_k h_k(x)
\]
where $\lambda = (\lambda_i)$, $M = (M_j)$ and $\nu = (\nu_k)$ satisfy $\lambda_i\ge 0$, $M_j\succeq 0$, and $h_k\in\R$ for all $i,j$ and $k$.  They are known as the {\em Lagrange multipliers}.  Let us denote
\[
g(\lambda, M, \nu) = \inf_{x\in\cD} L(x, \lambda, M, \nu).
\]
Let $p^*$ be the optimal value of \rf(eqn:primal) that is attained for the value $x^*$ of the variable.  Then, for any choice of $\lambda_i$, $M_j$ and $\nu_k$, we get that
\[
g(\lambda, M, \nu) \le L(x^*,\lambda, M, \nu) = f_0(x^*) + \sum_i \lambda_i f_i(x^*) + \sum_j \tr(M_j G(x^*)) + \sum_k \nu_k h_k(x^*) \le f_0(x^*) = p^*,
\]
since $x^*$ is a feasible solution.  (Here, we used that $\tr(AB)\ge 0$ for all $A,B\succeq 0$.) Thus, $g(\lambda, M, \nu)$ provides a lower bound on the value of $p^*$.  If we search for the best lower bound obtained in this way, we arrive to the following {\em dual} optimisation problem:
\begin{subequations}
\label{eqn:dual}
\begin{alignat}{3}
 &{\mbox{\rm maximise}} &\qquad& g(\lambda, M, \nu) \label{eqn:dualObjective} \\
 &{\mbox{\rm subject to }} && \lambda_i \ge 0 &\qquad& \text{for all $i$;}\\
 &&& M_j \succeq 0 && \text{for all $j$.}
\end{alignat}
\end{subequations}

Thus, the optimal value $d^*$ of~\rf(eqn:dual) never exceeds the optimal value of~\rf(eqn:primal).  This fact is known as {\em weak duality}.  In actual applications, we not always use exactly the same form of the optimisation problem, but rather follow a closely related transformation.

In many cases, the equality $p^* = d^*$ holds.  This is known as {\em strong duality}.  The optimisation problem~\rf(eqn:primal) is called {\em strictly feasible}, if there exists $x$ such that $f_i(x)<0$ and $G_j(x)\prec 0$ for all $i$ and $j$.  {\em Slater's condition} implies strong duality if~\rf(eqn:primal) is convex and strictly feasible.

\end{document}